\DeclareMathOperator*{\gimv}{\times}
\title{\ttitle} % Defines the thesis title - don't touch this
\begin{document}

\frontmatter % Use roman page numbering style (i, ii, iii, iv...) for the pre-content pages

\setstretch{1.3} % Line spacing of 1.3

% Define the page headers using the FancyHdr package and set up for one-sided printing
\fancyhead{} % Clears all page headers and footers
\rhead{\thepage} % Sets the right side header to show the page number
\lhead{} % Clears the left side page header

\pagestyle{fancy} % Finally, use the "fancy" page style to implement the FancyHdr headers

\newcommand{\HRule}{\rule{\linewidth}{0.5mm}} % New command to make the lines in the title page

% PDF meta-data
\hypersetup{pdftitle={\ttitle}}
\hypersetup{pdfsubject=\subjectname}
\hypersetup{pdfauthor=\authornames}
\hypersetup{pdfkeywords=\keywordnames}

%----------------------------------------------------------------------------------------
%	TITLE PAGE
%----------------------------------------------------------------------------------------

\begin{titlepage}
\begin{center}

\textsc{\LARGE \univname}\\[1.5cm] % University name
\textsc{\Large Doctoral Thesis}\\[0.5cm] % Thesis type

\HRule \\[0.4cm] % Horizontal line
{\huge \bfseries \ttitle}\\[0.4cm] % Thesis title
\HRule \\[1.5cm] % Horizontal line
 
\begin{minipage}{0.4\textwidth}
\begin{flushleft} \large
\emph{Author:}\\
\href{http://www.math.cmu.edu/~ctsourak/}{\authornames} % Author name - remove the \href bracket to remove the link
\end{flushleft}
\end{minipage}
\begin{minipage}{0.4\textwidth}
\begin{flushright} \large
\emph{Supervisor:} \\
\href{http://www.math.cmu.edu/~af1p}{\supname} % Supervisor name - remove the \href bracket to remove the link  
\end{flushright}
\end{minipage}\\[3cm]
 
\examname

\large \textit{A thesis submitted in fulfilment of the requirements\\ for the degree of \degreename}\\[0.3cm] % University requirement text
\textit{in the}\\[0.4cm]
\groupname\\\deptname\\[2cm] % Research group name and department name
 
%{\large April 10, 2013}\\[4cm] % Date
%\includegraphics{Logo} % University/department logo - uncomment to place it
 
\vfill
\end{center}

\end{titlepage}

%----------------------------------------------------------------------------------------
%	QUOTATION PAGE
%----------------------------------------------------------------------------------------

\pagestyle{empty} % No headers or footers for the following pages

\null\vfill % Add some space to move the quote down the page a bit

\textit{``I don't keep checks and balances, I don't seek to adjust myself. I follow the deep throbbing of my heart."}
%\begin{greektext} test \end{greektext} 

\begin{flushright}
Nikos Kazantzakis 
\end{flushright}

\vfill\vfill\vfill\vfill\vfill\vfill\null % Add some space at the bottom to position the quote just right

\clearpage % Start a new page

\null \vfill
\keywordnames

\clearpage % Start a new page

%----------------------------------------------------------------------------------------
%	ABSTRACT PAGE
%----------------------------------------------------------------------------------------

\addtotoc{Abstract} % Add the "Abstract" page entry to the Contents

\abstract{\addtocontents{toc}{\vspace{-4em}} % Add a gap in the Contents, for aesthetics

This dissertation contributes to mathematical and algorithmic problems that 
arise in the analysis of network and biological data. 
The driving force behind this dissertation is the importance of 
studying network and biological data. 
Studies of such data can provide us with insights to 
important emerging and long-standing problems, such as {\it why do some mutations cause cancer whereas others do not?
What is the structure of the Web graph? 
How do networks form and how does their structure affect the the spread of ideas and diseases?}
This thesis consists of two parts. The first part is devoted to {\it graphs and networks} and the second part 
to {\it computational cancer biology}. Our contributions to {\it graphs and networks} 
revolve around the following two axes.

$\bullet$ Empirical studies: In order to develop a good model, 
one has to study structural properties of real-world networks. 
Given the size of today's networks, such empirical studies and 
graph-structured computations become challenging tasks.  
Therefore, in the course of empirically studying  real-world networks, 
both novel and existing well-studied problems but in new computational models need to be solved. 
We provide both efficient algorithms with strong theoretical guarantees for various graph-structured
computations and graph processing systems that allow us to handle big graph data. 
%Our findings can be used to define outliers  and validate the goodness of random graph generator models. 

$\bullet$ Models: The quality of a given model is judged by how well 
it matches reality. We analyze fundamental graph theoretic properties of random 
Apollonian networks, a random graph model which mimicks real-world networks. 
Furthermore, we use random graphs to perform an average case analysis for 
rainbow connectivity, an intriguing connectivity concept. We provide
simple randomized procedures which succeed to solve the problem of interest 
with high probability on random binomial and regular graphs. 

Our contributions to {\it computational cancer biology} include 
new models,  theoretical insights into existing models, novel algorithmic techniques 
and detailed experimental analysis of various datasets. Specifically, 
we contribute the following. 

$\bullet$ Novel algorithmic techniques for denoising array comparative 
genomic hybridization data. Our algorithmic results are of independent 
interest and provide  approximation techniques for speeding up
dynamic programming. 

$\bullet$ Based on empirical findings which strongly indicate an inherent
geometric structure in cancer genomic data, we introduce a geometric model 
for finding subtypes of cancer which overcomes difficulties of 
existing methods such as principal/independent component analysis
and separation methods for Gaussians. We provide a computational method 
which solves the optimization problem efficiently and is robust to outliers. 

$\bullet$ We find the necessary and sufficient conditions to reconstruct uniquely an 
oncogenetic tree, a popular tumor phylogenetic method. 
}

\clearpage % Start a new page

%----------------------------------------------------------------------------------------
%	ACKNOWLEDGEMENTS
%----------------------------------------------------------------------------------------

\setstretch{1.3} % Reset the line-spacing to 1.3 for body text (if it has changed)

\acknowledgements{\addtocontents{toc}{\vspace{1em}} % Add a gap in the Contents, for aesthetics

First, I wish to express my deep gratitude to my advisor Professor Alan M. Frieze. 
I feel privileged to have worked under his supervision. His insights, advice and support 
shall remain invaluable to me. His dedication to noble research and open mind
are a source of inspiration. 

Also, I thank Professor  Russell Schwartz and Professor Gary L. Miller. 
Russell introduced me to computational biology and provided me with 
insights on how to perform research on this human-centric and interdisciplinary domain. 
The meetings I have had with Gary have been the best teaching sessions on algorithm design I have had. 
My interaction with both of them has been a great pleasure. 

A key figure during my Ph.D. years was Professor Mihail N. Kolountzakis. 
I want to thank him for his beneficial impact on my research, our collaboration and 
for hosting me during the summer of 2010 in the University of Crete. 
Furthermore, I would like to thank Professor Aristides Gionis for a fruitful collaboration
that began in Yahoo! Research and will continue in Aalto University for the next academic
semester. I also thank Professor Christos Faloutsos for advising me 
during the first two years of my Ph.D. in the School of Computer Science. 

I would like to thank the members of my thesis committee for their valuable 
feedback: Professors  Tom Bohman, Alan Frieze, Russell Schwartz and R. Ravi. 

I would like to thank my collaborators throughout my Ph.D. years in the School of Computer 
Science and the Department of Mathematical Sciences: Petros Drineas, Aristides Gionis, Christos Faloutsos,  
Alan Frieze, U Kang, Mihail N. Kolountzakis, Ioannis Koutis, Jure Leskovec, Eirinaios Michelakis, Gary L. Miller, 
Rasmus Pagh, Richard Peng, Aditya Prakash, Russell Schwartz, Stanley Shackney, Ayshwarya Subramanian, 
Jimeng Sun,  David Tolliver, Maria Tsiarli, Hanghang Tong.
 
I want to thank all my friends for their support and the good times we have had over
the last years.  I would like to especially thank  
Felipe Trevizan,  Jamie Flanick,  Ioannis Mallios, Evangelos Vlachos, Ioannis Koutis, 
Don Sheehy, Shay Cohen, Eirinaios Michelakis, 
Nikos Papadakis, Dimitris Tsaparas and Dimitris Tsagkarakis. 

I thank my parents Eftichios and Aliki and my sister Maria for their love and support.  %  $\Alpha\lambda\giota\kappa\eta$ and $\Epsilon\upsilon\tau\upsilon\chi\giota o\varsigma$, 
%their principles have been my guide in life. 
%I thank both of them and my sister for their support. %$\Mu\alpha\rho\giota\alpha$ for their love. 
Finally, I want to thank my fianc\'ee Maria Tsiarli. %$\Mu\alpha\rho\giota\alpha T\sigma\giota\rho\lambda\eta$. 
}
\clearpage % Start a new page

%----------------------------------------------------------------------------------------
%	LIST OF CONTENTS/FIGURES/TABLES PAGES
%----------------------------------------------------------------------------------------

\pagestyle{fancy} % The page style headers have been "empty" all this time, now use the "fancy" headers as defined before to bring them back

\lhead{\emph{Contents}} % Set the left side page header to "Contents"
\tableofcontents % Write out the Table of Contents

\lhead{\emph{List of Figures}} % Set the left side page header to "List of Figures"
\listoffigures % Write out the List of Figures

\lhead{\emph{List of Tables}} % Set the left side page header to "List of Tables"
\listoftables % Write out the List of Tables

%----------------------------------------------------------------------------------------
%	ABBREVIATIONS
%----------------------------------------------------------------------------------------

\clearpage % Start a new page

\setstretch{1.5} % Set the line spacing to 1.5, this makes the following tables easier to read

%----------------------------------------------------------------------------------------
%	DEDICATION
%----------------------------------------------------------------------------------------

\setstretch{1.3} % Return the line spacing back to 1.3

\pagestyle{empty} % Page style needs to be empty for this page

\dedicatory{Dedicated to: \\ 
my fianc\'ee Maria Tsiarli  \\
my parents Aliki and Eftichios and my sister Maria, \\ 
Mr. Andreas Varverakis %the memory of grandmother Maria and grandfarther Lampis. 
%\\    and my country Greece which suffers because of (in and out) corruption. 
} % Dedication text

\addtocontents{toc}{\vspace{2em}} % Add a gap in the Contents, for aesthetics

%----------------------------------------------------------------------------------------
%	THESIS CONTENT - CHAPTERS
%----------------------------------------------------------------------------------------

\mainmatter % Begin numeric (1,2,3...) page numbering

\pagestyle{fancy} % Return the page headers back to the "fancy" style

% Include the chapters of the thesis as separate files from the Chapters folder
% Uncomment the lines as you write the chapters

\newcommand{\fennel}{{\sc Fennel}\xspace}
\newcommand{\metis}{\textsc{Metis}\xspace}

\newcommand{\gmvm}{Generalized Iterative Matrix-Vector multiplication}
\newcommand{\IGMV}{{\tt GIM-V}\xspace}
\newcommand{\pegasus}{\textsc{PeGaSus}\xspace}
\newcommand{\hadi}{\textsc{Hadi}\xspace}
\newcommand{\hcc}{\textsc{Hcc}\xspace}

\newcommand{\join}{\texttt{combine2}}
\newcommand{\aggrnp}{\texttt{combineAll}}
\newcommand{\aggr}{\texttt{combineAll$_i$}}
\newcommand{\aggrsid}{\texttt{combineAll$_{E.sid}$}}
\newcommand{\assign}{\texttt{assign}}

%\usepackage[show]{notes}

%% hadi

\newcommand{\mapreduce}{\textsc{MapReduce}\xspace}
\newcommand{\hadoop}{\textsc{Hadoop}\xspace}
\newcommand{\anf}{\emph{Centralized Method}}
\newcommand{\MFB}{{FM-bitstring}}
\newcommand{\mfb}{{b}}
\newcommand{\MFV}{{FM-vector}}
\newcommand{\mfv}{{\bf v}}
\newcommand{\mfbhi}{{ b(h,i) }}
\newcommand{\Nhi}{N(h,i)}   % number of neighbors of $i$, within h hops or less
\newcommand{\NNhi}{ {\cal N}(h,i)} % SET of neighbors ....
\newcommand{\NNhhi}{ {\cal N}(h+1,i)} % SET of neighbors ....
\newcommand{\NNhj}{ {\cal N}(h,j)} % SET of neighbors ....
\newcommand{\hatmfbhi}{ $\hat{b}(h,i)$} %partial bitmask

\newcommand{\PassA}{{\tt Stage1}\xspace}
\newcommand{\PassAMap}{{\tt Stage1-Map}\xspace}
\newcommand{\PassARed}{{\tt Stage1-Reduce}\xspace}
\newcommand{\PassB}{{\tt Stage2}\xspace}
\newcommand{\PassBMap}{{\tt Stage2-Map}\xspace}
\newcommand{\PassBRed}{{\tt Stage2-Reduce}\xspace}
\newcommand{\PassC}{{\tt Stage3}\xspace}
\newcommand{\PassCMap}{{\tt Stage3-Map}\xspace}
\newcommand{\PassCRed}{{\tt Stage3-Reduce}\xspace}

\newcommand{\diameter}{{\tt diameter}}
\newcommand{\npairs}{{\tt npairs}}
\newcommand{\gcc}{{\tt gcc}}
\newcommand{\entropy}{{\tt entropy}}
\newcommand{\sv}{{\mbox{$\lambda_1$}}}
\newcommand{\avgd}{{\tt avgd}}

% math symbol definitions
\newcommand{\mat}[1]{{\bf{#1}}}
\newcommand{\nnodes}{N}     % number of nodes
\newcommand{\nedges}{E}     % number of edges
\newcommand{\deff}{{d_{eff}}}   % effective diameter
\newcommand{\Er}{{E_{r}}}   % number of retained edges
\newcommand{\Nr}{{N_{r}}}   % number of retained nodes
\newcommand{\Nc}{{N_c}}   % number of nodes at critical point
\newcommand{\Ec}{{E_c}}   % number of nodes at critical point
\newcommand{\Ngcc}{{N_{gcc}}} % number of nodes in gcc, at critical poing
\newcommand{\NPairc}{{N_{NPAIRS}}} % number of reachable pairs, at critical poing
\newcommand{\dc}{{d_{c}}} % diameter at critical point
\newcommand{\Lambdac}{{\lambda_{c}}} % First eigenvalue at critical point

\newcommand{\ben}{\begin{enumerate*}}
\newcommand{\een}{\end{enumerate*}}
\newcommand{\bit}{\begin{itemize*}}
\newcommand{\eit}{\end{itemize*}}

% shorthands
\newcommand{\citationsds}{{CITATIONS}}
\newcommand{\epinionsds}{{EPINIONS}}
\newcommand{\patentsds}{{PATENTS}}
\newcommand{\net}[1]{{\textsc{#1}}}

%% Quasi-cliques and densest subgraphs

\newcommand{\OES}{{\textsf{optimal $(g,h,\alpha)$-edge-surplus}}}
\newtheorem{problem}{Problem}
\newtheorem{observation}{Observation}
\newcommand{\Polytime}{{\ensuremath{\mathbf{P}}}}

\newcommand{\trimmer}{\textsc{CGHtrimmer}}
\newcommand{\dnacopy}{\textsc{CBS}}
\newcommand{\cghseg}{\textsc{CGHseg}}

\newcommand{\beql}[1]{\begin{equation}\label{#1}}
\newcommand{\eeq}{\end{equation}}
\newcommand{\beq}[1]{\begin{equation}\label{#1}}
\def\hT{\widehat{T}}
\def\hD{\widehat{D}}
\def\hP{\widehat{P}}
\def\cC{\mathcal{C}}
\def\cA{{\cal A}}
\def\cR{\mathcal{R}}

\def\hC{\widehat{Q}}
\def\depth{\text{depth}}
\def\Bin{\text{Bin}}

\newcommand{\rdup}[1]{{\left\lceil #1 \right\rceil }}
\newcommand{\rdown}[1]{{\lfloor #1 \rfloor}}
\newcommand{\proofstart}{{\bf Proof\hspace{2em}}}
\newcommand{\proofend}{\hspace*{\fill}\mbox{$\Box$}}

\newcommand{\brac}[1]{\left(#1\right)}
\newcommand{\bfrac}[2]{\left(\frac{#1}{#2}\right)}
\newcommand{\whp}{\textit{whp}}

\def\red{}
\def\rred{}
\def\blue{\color{blue}}

\newtheorem{claim}{Claim}
\newcommand{\diam}{\frac{ \log{n}}{\log{\log{n}}}}
\newcommand{\set}[1]{\left\{#1\right\}}

% Greek letters
\def\a{\alpha} \def\b{\beta} \def\d{\delta} \def\D{\Delta}
\def\e{\epsilon} \def\f{\phi} \def\F{{\Phi}} \def\vp{\varphi} \def\g{\gamma}
\def\G{\Gamma} \def\i{\iota} \def\k{\kappa}\def\K{\Kappa}
\def\z{\zeta} \def\th{\theta} \def\TH{\Theta} \def\Th{\Theta}  \def\l{\lambda}
\def\La{\Lambda} \def\m{\mu} \def\n{\nu} \def\p{\pi}
\def\r{\rho} \def\R{\Rho} \def\s{\sigma} \def\Si{\Sigma}
\def\t{\tau} \def\om{\omega} \def\OM{\Omega} \def\Om{\Omega}

\def\ve{\varepsilon}
\def\hY{\hat{Y}}
\def\hx{\hat{x}}
\def\hv{\hat{v}}
\def\bY{{\bf Y}}
\def\cY{{\cal Y}}
\def\cB{{\cal B}}
\def\cT{{\cal T}}
\def\leqO{\stackrel{O}{\leq}}
\def\geqO{\stackrel{\Omega}{\geq}}
\def\bhY{\hat{\bY}}
\def\hR{\widehat{R}}
\def\hr{\widehat{\r}}

\newcommand{\Ds}{\displaystyle}
\newcommand{\Fpart}[1]{{\left\{{#1}\right\}}}
\newcommand{\Abs}[1]{{\left|{#1}\right|}}
\newcommand{\Lone}[1]{{\left\|{#1}\right\|_{L^1}}}
\newcommand{\Linf}[1]{{\left\|{#1}\right\|_\infty}}
\newcommand{\Norm}[1]{{\left\|{#1}\right\|}}
\newcommand{\Mean}[1]{{\mathbb E}\left[{#1}\right]}
\newcommand{\Var}[1]{{\mathbb Var}\left[{#1}\right]}
\newcommand{\Floor}[1]{{\left\lfloor{#1}\right\rfloor}}
\newcommand{\Ceil}[1]{{\left\lceil{#1}\right\rceil}}

 \renewcommand{\algorithmicrequire}{\textbf{Input:}}
\renewcommand{\algorithmicensure}{\textbf{Output:}}

\newcommand{\reminder}[1]{{\textsf{\textcolor{red}{[#1]}}}}
\newcommand{\Prob}[1]{\ensuremath{{\bf{Pr}}\left[{#1}\right]}}
\newcommand{\NP}{\ensuremath{\mathbf{NP}}}
\newcommand{\NPhard}{{\ensuremath{\mathbf{NP}}-hard}\xspace}
\newcommand{\NPcomplete}{{\ensuremath{\mathbf{NP}}-complete}}
\newcommand{\sgn}{{\ensuremath{\mathrm{sgn}}}}

\newcommand{\IPOPT}{{\ensuremath{\mathrm{IP}^{*}}}}
\newcommand{\SDPOPT}{{\ensuremath{\mathrm{SDP}^{*}}}}
\newcommand{\TMAX}{{\ensuremath{T_{\max}}}}

\newcommand{\topk}{{top-\ensuremath{k}}}
\newcommand{\ER}{{Erd\H{o}s-R\'{e}nyi}}
\newcommand{\bigO}{{\ensuremath{\cal O}}}

%% Quasi-cliques and densest subgraphs
\newcommand{\QC}{{quasi-clique}}
\newcommand{\aQC}{{\ensuremath{\alpha}-quasi-clique}\xspace}
\newcommand{\OQC}{{\textsf{optimal quasi-clique}}\xspace}
\newcommand{\OQCs}{{\textsf{optimal quasi-cliques}}\xspace}
\newcommand{\OQCP}{{\sc OQC-Problem}\xspace}
\newcommand{\COQCP}{{\sc Constrained-OQC-Problem}\xspace}

\newcommand{\DS}{{\textsf{densest subgraph}}}
\newcommand{\DSs}{{\textsf{densest subgraphs}}}
\newcommand{\DSP}{{\sc{DS-Problem}}}

%% Algorithms
\newcommand{\SDPalgo}{{{SDP-OQC}}}
\newcommand{\LSalgo}{{\sc{Local\-SearchOQC}}}
\newcommand{\Galgo}{{\sc{GreedyOQC}}}

\iffalse
\newcommand{\Fpart}[1]{{\left\{{#1}\right\}}}
\newcommand{\Abs}[1]{{\left|{#1}\right|}}
\newcommand{\Lone}[1]{{\left\|{#1}\right\|_{L^1}}}
\newcommand{\Linf}[1]{{\left\|{#1}\right\|_\infty}}
\newcommand{\Norm}[1]{{\left\|{#1}\right\|}}
\newcommand{\Var}[1]{{\mathbb Var}\left[{#1}\right]}
\newcommand{\Floor}[1]{{\left\lfloor{#1}\right\rfloor}}
\newcommand{\Ceil}[1]{{\left\lceil{#1}\right\rceil}}
\newcommand{\N}{{\bf N}}
\newcommand{\Set}[1]{{\left\{{#1}\right\}}}
\newcommand{\ToAppear}[1]{\raisebox{15mm}[10pt][0mm]{\makebox[0mm]%
   {\makebox[\textwidth][r]{\small #1}}}}
\newcommand{\PP}{\ensuremath{\mathbb P}}
\newcommand{\RR}{\ensuremath{\mathbb R}}
\newcommand{\CC}{\ensuremath{\mathbb C}}
\newcommand{\ZZ}{\ensuremath{\mathbb Z}}
\newcommand{\NN}{\ensuremath{\mathbb N}}
\newcommand{\TT}{\ensuremath{\mathbb T}}
\newcommand{\QQ}{\ensuremath{\mathbb Q}}
\newcommand{\One}[1]{\ensuremath{{\mathbf 1}\left(#1\right)}}
\newcommand{\one}{\ensuremath{\mathbf 1}}
\newcommand{\BAR}[1]{\overline{#1}}
\newcommand{\inner}[2]{{\langle #1, #2 \rangle}}
\newcommand{\Inner}[2]{{\left\langle #1, #2 \right\rangle}}
\newcommand{\dens}{{\rm dens\,}}
\newcommand{\supp}{{\rm supp}}
\newcommand{\tr}{{\rm Tr\,}}
\newcommand{\Span}{{\rm span}}
\newcommand{\vol}{{\rm vol\,}}
\newcommand{\ft}[1]{\widehat{#1}}
\newcommand{\FT}[1]{\left(#1\right)^\wedge}
\newcommand{\nozero}[1]{{#1\setminus\Set{0}}}
\newcommand{\pp}[1]{{\left({#1}\right)}}
\newcommand{\added}[1]{\textcolor{red}{#1}}
\newcommand{\green}[1]{\textcolor{magenta}{#1}}
\fi

\newcommand{\tr}{{\rm Tr\,}}
\newcommand{\Cov}[1]{ {\mathbb Cov}\left[{#1}\right]}
\newcommand{\Hastad}{H{$\aa$}stad}

\newcommand{\Set}[1]{{\left\{{#1}\right\}}}
\newcommand{\One}[1]{{\mathbf 1}\left(#1\right)}

\newcommand{\added}[1]{\textcolor{red}{#1}}
\newcommand{\hide}[1]{}
\newcommand{\vectornorm}[1]{\left|\left|#1\right|\right|}
\newcommand{\field}[1]{\mathbb{#1}} % requires amsfonts
\renewcommand{\vec}[1]{{\mbox{\boldmath$#1$}}}

%                   Para
\newcommand{\spara}[1]{\smallskip\noindent{\bf #1}}
\newcommand{\mpara}[1]{\medskip\noindent{\bf #1}}
\newcommand{\para}[1]{\noindent{\bf #1}}

%% tight spacing in item lists
\newcommand{\squishlist}{
 \begin{list}{$\bullet$}
  {  \setlength{\itemsep}{0pt}
     \setlength{\parsep}{3pt}
     \setlength{\topsep}{3pt}
     \setlength{\partopsep}{0pt}
     \setlength{\leftmargin}{2em}
     \setlength{\labelwidth}{1.5em}
     \setlength{\labelsep}{0.5em}
} }
\newcommand{\squishlisttight}{
 \begin{list}{$\bullet$}
  { \setlength{\itemsep}{0pt}
    \setlength{\parsep}{0pt}
    \setlength{\topsep}{0pt}
    \setlength{\partopsep}{0pt}
    \setlength{\leftmargin}{2em}
    \setlength{\labelwidth}{1.5em}
    \setlength{\labelsep}{0.5em}
} }

\newcommand{\squishdesc}{
 \begin{list}{}
  {  \setlength{\itemsep}{0pt}
     \setlength{\parsep}{3pt}
     \setlength{\topsep}{3pt}
     \setlength{\partopsep}{0pt}
     \setlength{\leftmargin}{1em}
     \setlength{\labelwidth}{1.5em}
     \setlength{\labelsep}{0.5em}
} }

\newcommand{\squishend}{
  \end{list}
}

%\section{Chapter 1}

\newpage
\vspace*{\fill}
\begingroup
%\centering

\endgroup
\vspace*{\fill}

\newpage

\clearpage

%% Chapter 1
\chapter{Introduction} 
\label{introchapter}
\lhead{\emph{Introduction}}
The motivating force behind this dissertation is the importance 
of studying network and biological data. Such studies can provide 
us with  insights or even answers to various significant questions: 
{\it How do people establish connections among each other and how does 
the underlying social graph affect the spread of ideas and diseases? 
What will the structure of the Web be in some years from now? 
How can we design better marketing strategies? 
Why do some mutations cause cancer whereas others do not? 
Can we use cancer data to improve diagnostics and therapeutics? }

This dissertation contributes to mathematical and algorithmic
problems which arise in the course of studying {\it network} 
and {\it cancer} data. This Chapter is organized as follows:
in Sections~\ref{subsec:introrealworldnetworks} and~\ref{subsec:introcancerdata} 
we introduce the reader to some basic network and biology concepts respectively. 
In Section~\ref{subsec:connectingdotsoverview} we motivate our work and 
present the contributions of this dissertation.

\section{Graphs and Networks} 
\label{subsec:introrealworldnetworks} 

Networks appear throughout in nature and society \cite{albert2002statistical}. 
Furthermore, many applications which use other types of data,
such as text or image data, create graphs by data-processing. 
{\em Network Science}   has emerged over the last years as an interdisciplinary
area spanning traditional domains including mathematics, computer science, 
sociology, biology and economics. 
Since complexity in social, biological and economical systems, and more generally
in complex systems, arises through pairwise interactions there exists a surging
interest in understanding networks \cite{easley2010networks}.   
Graph theory \cite{bollobas} plays a key role in modeling and studying networks. 
Given the increasing importance of networks in our lives, Professor Daniel Spielman  
believes that Graph Theory is the new Calculus \cite{danspielmannotes}. 
Some important networks and the corresponding graph models follow.

\squishlist
\item  {\it Human brain}. Our brain consists of neurons which form a network. 
The number of neurons in the region of human cortex is estimated 
to be roughly $10^{10}$. Neurons are connected through synapses. 
The strength of synapses in general varies.  The average number of synapses of each neuron is 
in the range of 24\,000-80\,000 for humans \cite{valiant2005memorization}. 
The brain can be modeled as a graph, whose vertex set corresponds 
to neurons and edge set to synapses.

\item  {\it World Wide Web (WWW)}. The World Wide Web is a particularly
important network. Turing Award winner Jim Gray in his 1998 Turing award address  speech
mentioned ``The emergence of 'cyberspace' and the World Wide Web is like the discovery of a new continent''. 
The vertices of the WWW are 
Web pages and the edges are hyperlinks (URLs) 
that point from one page to another. 
In 2011, Google reported that WWW has more than a trillion of edges.

\item {\it Internet}. There exist two different types of Internet graphs.
In the first type, vertices  are routers and edges correspond to physical connections. 
The second level is the autonomous systems level, where 
a single vertex represents a domain, namely multiple routers and computers.
An edge is drawn if there is at least one route
that connects them. 
It is worth mentioning that 
the first type of topology can be studied with the 
{\it traceroute} tool, whereas the second with {\it BGP tables}.

\item  {\it Social and online social networks}. 
According to Aristotle man is by nature a social animal. Social networks precede 
the individual. These networks are modeled by using one vertex per human. 
Each edge corresponds to some sort of interaction, for instance friendship. 

Since the emergence of the World Wide Web, we live in the information age. 
Nowadays, online social networks and social media are 
a part of our daily lives which can affect society immensely. 
Again, such networks are modeled as graphs where 
each vertex corresponds to an account.
Each edge corresponds to a connection. 

%Edges can be undirected or directed. For instance, an undirected edge if two Facebook accounts are 
%connected or a directed edge if account $i$ follows account $j$ in Twitter. 
%Facebook has more than 1 billion users for instance. 

\item {\it Collaboration networks}. There exist many types of collaboration networks. 
Vertices can represent for instance mathematicians or actors. 
An edge between two vertices is drawn when the corresponding mathematicians 
have co-authored a paper or when the corresponding actors have 
played in the same movie respectively.
Two famous collaboration networks are 
the Erd\"{o}s and the Kevin Bacon collaboration networks.

\item {\it Protein interaction networks}. 
Protein - protein interactions occur when two or more proteins bind together, often to carry out their biological function.
These interactions define protein interaction networks. 
The corresponding graph has a vertex for each protein. 
Proteins $i$ and $j$ are connected if they interact. 

%\item {\it Phone call networks}. Here, vertices are phone numbers and each directed edge $(i,j)$ corresponds to a completed call, directed from the caller to the receiver. 

\item {\it Wireless sensor networks}.  Vertices are autonomous sensors. These sensors  monitor physical 
or environmental conditions, such as temperature, brightness and humidity 
A directed edge $(i,j)$ suggests that information can be passed from sensor $i$ to sensor $j$.  

\item {\it Power grid}. 
An electrical grid is a network which delivers electricity 
from suppliers to consumers. The vertices of the power grid graph 
correspond  to generators, transformers and substations.
Edges correspond to high-voltage transmission lines. 

\item {\it Financial networks}. One of the globalization effects  
is the interdependence between financial instutitions and markets. 
Financial network models are represented by graphs 
whose vertex set corresponds to  financial institutions such as 
banks and countries. Edges correspond to different types of 
interactions, for instance borrower-lender relations.

\item {\it Blog networks}.  A blog is a web log. A blog may contain
a blogroll, namely a list of blogs that interest the blogger, or 
may reference other blogs through posts.
This network is represented by a graph whose vertex set corresponds
to blogs. An edge $(i,j)$ exists if the $i$-th blog
has blog $j$ in its blogroll or a URL pointing to $j$. 

\squishend

A remarkable fact which underpins network science 
is that many real-world networks, despite their different origins, 
share several common structural characteristics. 
In the next Section we present some established patterns, shared
by numerous real-world networks \cite{easley2010networks,frieze2013algorithmic}.

\subsection{Structural properties of real-world networks}

Real-world networks possess a variety of remarkable properties. 
For instance, it is known that 
they are robust to
random  but vulnerable to malicious attacks \cite{albert2002statistical,bollobas2004robustness}.
Here, we review some important empirical properties of real-world networks which are related to our work.

\subsubsection{Power laws}

Real-world networks typically have skewed degree distributions. 
These heavy tailed empirical distributions are frequently modeled as power laws. 

\begin{definition}
The degree sequence of a graph follows a power law distribution if the number 
of vertices $N_d$  with degree $d$ is given by $N_d \propto d^{-\alpha}$
where $\alpha>1$ is the power law degree exponent or slope. 
\end{definition}

One of the early papers that popularized power laws 
as the modeling choice for empirical degree distributions 
was the Faloutsos, Faloutsos and Faloutsos paper \cite{faloutsos1999power}.
The Faloutsos brothers found that the Internet at the autonomous systems level
follows a power law degree distribution with $\alpha \approx 2.4$. 
In general, there exist three main problems with the initial studies of power laws in networks. 
First, Internet graphs generated with traceroute sampling \cite{faloutsos1999power}
may produce power-law distributions due to the bias of the process, 
even if the true underlying graph is regular \cite{lakhina2003sampling}.
Secondly, there exist  methodological flaws in determining the exponent/slope of the 
power law distribution. Clauset et al. provide a proper methodology
for finding the slope of the distribution \cite{clauset2009power}. 
Finally, other distributions could potentially fit the data better but were 
not considered as candidates. Such distribution is the lognormal \cite{danspielmannotes}.
A nice review of power law and lognormal distributions appears in \cite{mitzenmacher2004brief}.

\subsubsection{Small-world}

Six degrees of separation is the theory that anyone in the world is no more than six relationships away from any other person. 
In the early 20th century  Nobel Peace Prize winner Guglielmo Marconi, the father of modern radio, 
suggested that it would take only six relay stations to cover and connect the earth by radio \cite{marconi}. 
It is likely that this idea was the seed for the six degrees of separation theory,
which was further supported by Frigyes Karinthy in a short story called Chains.
Since then many scientists, including Michael Gurevich, Ithiel De Sola Pool have worked on this theory. 
In a famous experiment, Stanley Milgram asked people to route a postcard to a fixed recipient
by passing them to direct acquintances  \cite{milgram}. 
Milgram observed that depending on the sample of people chosen 
the average number of intermediaries  was between 
4.4 and 5.7.  Milgram's experiment besides its existential aspect
has a strong algorithmic aspect as well, which was first studied
by Kleinberg \cite{kleinberg2000small}.

Nowadays, World Wide Web and online social networks provide us with data that reach the planetary scale. 
Recently, Backstrom, Boldi, Rosa, Ugander and Vigna  showed that the world 
is even smaller than what the six degrees of separation theory predicts \cite{DBLP:conf/websci/BackstromBRUV12}. 
Specifically, they perform the first  world-scale social-network graph-distance computation,
using the entire Facebook network of active users (at that time 721 million users, 69 billion friendship links)
and observe an average distance of 4.74.  
In Chapter~\ref{hadichapter} we shall see formal graph theoretical concepts which
quantify the small world phenomenon.

\subsubsection{Clustering coefficients}

Watts and Strogatz  in their 
influential paper \cite{j1998collective} proposed a simple graph model 
which reproduces two features of social networks: 
abundance of triangles and the existence of short paths among any pair of nodes. 
Their model combines the idea of homophily which leads to the wealth of triangles
in the network and the idea of weak ties which create short paths. In order to quantify the homophily, 
they introduce the definitions of
the clustering coefficient. 
The definition of the transitivity $T(G)$ of a graph $G$, 
introduced by Newman et al. \cite{citeulike:691419}, 
is closely related to the clustering coefficient and measures the 
probability that two neighbors of a vertex are connected.

\begin{definition}[Clustering Coefficient]
A vertex $v \in V(G)$ with degree $d(v)$ which participates into $t(v)$ triangles 
 has clustering coefficient $C(v)$  equal to the fraction of edges among 
its neighbors to the maximum number of triangles it could participate:

\begin{equation}
\label{eq:stogatz1}
C(v) = \frac{t(v)}{{d(v) \choose 2}}
\end{equation}

The clustering coefficient $C(G)$ of graph $G$ is the average of $C(v)$ over all $v \in V(G)$.
\end{definition}

\begin{definition}[Transitivity] 
The transitivity of a graph measures the probability that two neighbors of a vertex are connected:

\begin{equation}
\label{eq:transitivity} 
T(G) = \frac{ 3 \times t }{\text{number of connected triples}}
\end{equation}

\end{definition}

\subsubsection{Communities} 

Intuitively, communities are  sets of vertices which are densely
intra-connected and sparsely inter-connected \cite{girvan2002community,newman}. 
A large amount of research in network science has 
focused on finding communities.
The goal of community detection methods is to partition the graph vertices into communities so that there
are many edges among vertices in the same community and few edges among vertices in different communities.
A landmark study of communities by Leskovec et al. \cite{jure08ncp} placed various folklores revolving around 
community existence in question. 
Specifically, Leskovec et al. observed in the majority of the networks 
they studied that communities exist at small size scales. 
Specifically, as the size increases up to a value which empirically 
is close to 100 the quality of the community tends to increase. 
However, after the critical value of 100, the quality tends to decrease.
This indicates that communities blend in with the rest of the network
and lose their community-like profile. 

There exist various formalizations of the community notion \cite{jure08ncp}. 
However, what underpins all these formalizations is the attempt to understand 
the cut structure of the graph. A popular measure for the quality of a community 
is the conductance. 

\begin{definition}
Given a graph $G(V,E,w)$ and a set $S \subseteq V$ of vertices, the conductance 
of $S$ is defined as 

$$ \phi(S) = \frac{ \sum\limits_{(i,j)\in E, i \in S,j \in \bar{S}} w_{ij} }{\min (vol(S),vol(\bar{S}))},$$

\noindent where $\bar{S}=V\backslash S$ and the volume of a given set $A \subseteq V$ of vertices 
is defined as $vol(A) = \sum_{i \in A} d(i)$. 
The conductance of the graph $G$ is defined as 
$$ \phi = \min_{S \subset V } \phi(S).$$ 
\end{definition} 

It is not a coincidence that random walks are frequently 
used to find communities \cite{pons2005computing} as their use 
is common in the   general setting of graph partitioning \cite{Orecchia:EECS-2011-56}. 
A lot of interest exists into finding dense sets of vertices around a given seed. 
A popular method for finding such sets was first introduced by 
Lov\'{a}sz and Simonovits \cite{lovasz1990mixing} who show
that random walks of length $O(\frac{1}{\phi})$ can be used 
to compute a cut with sparsity at most $\tilde{O}(\sqrt \phi)$ 
if the sparsest cut has conductance $\phi$. 
Later, Spielman and Teng \cite{spielman2004nearly,spielman2008local} 
provided a local graph partitioning 
algorithm which implements efficiently the Lov\'{a}sz-Simonovits idea. 
Furthermore, their algorithm has a bounded work/volume ratio.
Another closely related approach 
which does not explicitly compute sequences of random walk distributions 
but computes a personalized Pagerank vector was introduced 
by  Andersen, Chung, and Lang \cite{andersen2006local}. 
A few other representative approaches for the problem of community detection include methods on
minimum cut~\cite{flake2000efficient},
modularity maximization~\cite{girvan2002community}, and
spectral methods~\cite{metis,DBLP:conf/nips/NgJW01}.
In fact the literature on the topic is so extensive that we do not attempt to make a proper review here;
a comprehensive survey has been conducted by Fortunato~\cite{fortunato2010community}.

In the typical setting of finding communities, 
the vertex set of the graph is partitioned. 
A relaxation of the latter requirement, allowing overlaps
between sets of vertices, yields the notion of overlapping communities 
\cite{arora2012finding,towardstsourakakis,zhu2013local}.

\subsubsection{Densification and Shrinking Diameters}

Leskovec, Kleinberg and Faloutsos \cite{1217301} studied how numerous real-world networks 
from a variety of domains evolve over time. In their work two important observations
were made. First, networks become denser over time 
and the densification follows a power law pattern. 
Secondly, effective diameters shrink over time. 
The second pattern is particularly surpising and creates a modeling challenge as well, 
since the vast majority of real-world networks have a diameter that grows as the network grows.

\subsubsection{Web graph}

We focus on the bow {\it bow-tie structure} of the Web graph. 
Other important properties of the Web graph include the 
abundance of bipartite cliques \cite{kleinberg1999web,kumar2000stochastic} and 
compressibility \cite{boldi2009permuting,chierichetti2009compressing,chierichetti2013models}.
 In 1999 Andrei Broder et al. \cite{broder2000graph} performed an influential study of the 
Web graph using strongly connected components (SCCs) as their building blocks. Specifically, 
they proposed the bow-tie model for the structure of the Web graph based on 
their findings on the index of pages and links of the AltaVista search engine. 
According to the bow-tie structure of the Web, there exists a single giant SCC. 
Broder et al. \cite{broder2000graph} positioned the remaining SCCs with respect to the giant SCC 
as follows: 

\begin{figure*}
\centering
\includegraphics[width=0.85\textwidth]{./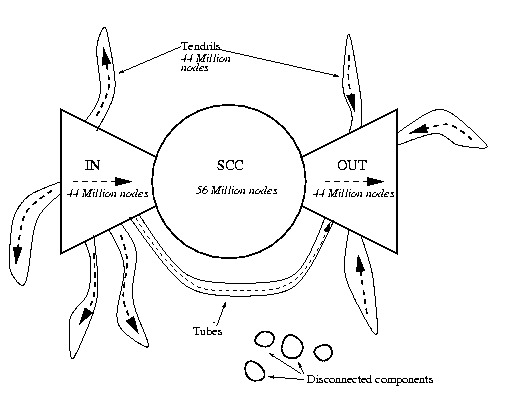}
\caption{\label{fig:bowtie} Bow-tie structure of the Web graph (Image  source: \cite{broder2000graph})}
\end{figure*}

$\bullet$ IN: vertices that can reach the giant SCC but cannot be reached from it. \\ 
$\bullet$ OUT: vertices that can be reached from the giant SCC but cannot be reach it. \\  
$\bullet$ Tendrils: These are vertices that either are reachable from IN but cannot reach
the giant SCC or the vertices that can reach OUT but cannot be reached from the giant SCC. \\  
$\bullet$ Disconnected: vertices that belong to none of the above categories. 
These are the vertices that even if we ignore the direction of the edges, 
have no path connecting them to the giant SCC. \\

A schematic picture of the bow-tie structure of the Web is shown in Figure~\ref{fig:bowtie}. 
This structure has been verified in other studies as 
well \cite{Donato:2007:WGF:1189740.1189744,Bharat:2001:LWM:645496.757722}.
%and is consistent with our findings. 

\subsection{Network Models} 
\label{sec:chapter1networkmodels}
There exist two main research lines which model network formation. 
Random graph and strategic models. 
We focus on random graph models and specifically the models used
in this dissertation. The interested 
reader may consult the cited papers and references therein for more
random graph models and \cite{jackson2010social}  for a rich account of 
results on strategic models.

\subsubsection{Erd\"{o}s-R\'enyi random graphs} 

Let $\mathcal{G}$ be the family of all labeled 
graphs with vertex set $V=[n]=\{1,\ldots,n\}$. 
Notice $|\mathcal{G}|=2^{{n \choose 2}}$. 
Random graph models assign to each graph $G \in \mathcal{G}$ a probability. 
The random binomial graph model $G(n,p)$ 
has two parameters, $n$ the number of vertices 
and a probability parameter $0 \leq p \leq 1$. 
The $G(n,p)$ model assigns to a graph $G \in \mathcal{G}$ the following probability

$$ \Prob{G} = p^{|E(G)|} (1-p)^{{n \choose 2}-|E(G)|}.$$ 

We will refer to random binomial graphs as Erd\"{o}s-R\'{e}nyi graphs interchangeably. 
Historically, Gilbert \cite{gilbert1959random} introduced originally the $G(n,p)$ model but
Erd\"{o}s and R\'{e}nyi founded the field of random graphs \cite{erdos1960evolution,erdosrenyi}. 
They introduced a closely related model known as $G(n,m)$. 
This model has two parameters, the number of vertices 
$n$ and the number of edges $m$, where $0 \leq m \leq \binom{n}{2}$. 
This model assigns equal 
probability to all labelled graphs on the vertex set $[n]$ with exactly $m$ edges. In other words,

\[
 \Prob{G} =
  \begin{cases}
   \frac{1}{ { {n \choose 2} \choose m } } & \text{if } |E(G)|=m \\
   0       & \text{if } |E(G)| \neq m
  \end{cases}
\]

We shall be interested in understanding various graph theoretic properties.

\begin{definition}
Define a graph property $\mathcal{P}$ as a subset of all possible labelled graphs. 
Namely $\mathcal{P} \subseteq 2^{[n] \choose 2}$. 
\end{definition} 

For instance $\mathcal{P}$ can be the set of planar graphs or the set of graphs
that contain a Hamiltonian cycle. 
We will call a property $\mathcal{P}$ as monotone increasing if $G \in \mathcal{P}$ 
implies $G+e \in \mathcal{P}$. For instance the Hamiltonian property is monotone 
increasing. 
Similarly, we will call a property $\mathcal{P}$ as monotone decreasing if $G \in \mathcal{P}$ 
implies $G-e \in \mathcal{P}$.
For instance the planarity property is monotone decreasing. Since there is an underlying probability distribution, we shall be interested
in how likely a property is. 
Our claims relating to random graphs will be probabilistic. 
We will say that an event $A_n$ holds with high probability (\whp)
if $\lim_{n \rightarrow +\infty} \Prob{A_n}=1$.
The following are background definitions which can be found in any random graph theory textbook  
\cite{bollobas2001random,janson2000random}. 

\noindent Notice that in the $G(n,p)$ model we toss a coin independently for each possible edge 
and with probability $p$ we add it to the graph. 
In expectation there will be $p{n \choose 2}$ edges.  When $p=\frac{m}{{n \choose 2}}$,
then a random binomial graph has in expectation  $m$ edges and intuitively $G(n,p)$ and $G(n,m)$ 
should behave similarly. The following theorem 
quantifies this intuition.

\begin{theorem}
Let $0 \leq p_0 \leq 1, s(n) =n \sqrt{p(1-p)} \rightarrow +\infty$, and $\omega(n) \rightarrow +\infty$ 
as $n \rightarrow +\infty$.  Then, 

(a) if $\mathcal{P}$ is any graph property and for all $m \in \field{N}$ such that 
$|m- {n \choose 2}p| <\omega(n)s(n)$, the probability $\Prob{G(n,m) \in \mathcal{P} } \rightarrow p_0$,
then $\Prob{G(n,p) \in \mathcal{P} } \rightarrow p_0$ as $n \rightarrow +\infty$.

(b) if $\mathcal{P}$ is a monotone graph property and $p_{-} = p_0 - \frac{\omega{n}s(n)}{n^3}$, 
$p_+ = p_0 +\frac{\omega{n}s(n)}{n^3}$ then from the facts that 
$\Prob{ G(n,p_{-}) \in \mathcal{P} } \rightarrow p_0, \Prob{ G(n,p_{+}) \in \mathcal{P} } \rightarrow p_0$,
it follows that $\Prob{ G(n,p{n \choose 2}) \in \mathcal{P} } \rightarrow p_0$ as $n \rightarrow +\infty$.
\end{theorem}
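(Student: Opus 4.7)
The plan is to couple the two models via the standard conditional representation: if $X$ denotes the edge count of $G(n,p)$, then $X \sim \Bin(\binom{n}{2}, p)$, and $G(n,p)$ conditioned on $\{X = m\}$ is distributed exactly as $G(n,m)$. Consequently
$$ \Prob{G(n,p) \in \mathcal{P}} = \sum_m \Prob{X = m}\, \Prob{G(n,m) \in \mathcal{P}}. $$
Since $\mathrm{Var}(X) = \binom{n}{2} p(1-p) \sim s(n)^2/2$, Chebyshev's inequality will give $\Prob{|X - \binom{n}{2}p| \geq \omega(n) s(n)} = O(1/\omega(n)^2) = o(1)$, which is the workhorse concentration estimate for both parts.

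For part (a), I would split the identity above at the threshold $|m - \binom{n}{2}p| = \omega(n) s(n)$. The atypical piece is bounded by $\Prob{|X - \binom{n}{2}p| \geq \omega(n) s(n)} = o(1)$, regardless of the values of $\Prob{G(n,m) \in \mathcal{P}}$ there. On the typical piece, the hypothesis gives $\Prob{G(n,m) \in \mathcal{P}} = p_0 + o(1)$ uniformly in $m$; factoring this out and using $\sum_m \Prob{X=m} \leq 1$ yields $\Prob{G(n,p) \in \mathcal{P}} \to p_0$.

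For part (b), I would exploit monotonicity. Assume WLOG that $\mathcal{P}$ is monotone increasing, so the standard nested coupling $G(n, m-1) \subseteq G(n, m)$ makes $m \mapsto \Prob{G(n,m) \in \mathcal{P}}$ non-decreasing. Set $m_0 = \lfloor \binom{n}{2} p_0 \rfloor$ and $X_\pm \sim \Bin(\binom{n}{2}, p_\pm)$. Splitting $\Prob{G(n, p_-) \in \mathcal{P}} = \sum_m \Prob{X_- = m} \Prob{G(n,m) \in \mathcal{P}}$ around $m_0$ and applying monotonicity on each half gives
$$ \Prob{G(n, p_-) \in \mathcal{P}} \leq \Prob{G(n, m_0) \in \mathcal{P}} + \Prob{X_- > m_0}. $$
The prescribed shift $p_0 - p_-$ is chosen so that $\Mean{X_-}$ lies below $m_0$ by $\omega(n)$ standard deviations of $X_-$, whence $\Prob{X_- > m_0} = o(1)$ by Chebyshev, giving $\liminf_n \Prob{G(n, m_0) \in \mathcal{P}} \geq p_0$. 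A symmetric lower bound $\Prob{G(n, p_+) \in \mathcal{P}} \geq \Prob{G(n, m_0) \in \mathcal{P}}\, \Prob{X_+ \geq m_0}$ with $\Prob{X_+ \geq m_0} \to 1$ yields $\limsup_n \Prob{G(n, m_0) \in \mathcal{P}} \leq p_0$, completing the sandwich.

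The only delicate step is verifying that the prescribed scaling of $p_\pm$ really pushes $\Mean{X_\pm}$ from $m_0$ by $\omega(n)$ binomial standard deviations; this is a short arithmetic check using the definition of $s(n)$. Nothing beyond Chebyshev's inequality and the monotone coupling is required, so I do not anticipate any substantive obstacle beyond this bookkeeping.
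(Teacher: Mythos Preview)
The paper states this theorem as background in the preliminaries chapter without proof, citing the standard random-graph textbooks (Bollob\'as; Janson, \L uczak, Ruci\'nski). Your proposal is exactly the standard textbook argument: represent $G(n,p)$ as a mixture of $G(n,m)$ over the binomial edge count, use Chebyshev to localize $m$ for part (a), and use the monotone coupling $G(n,m-1)\subseteq G(n,m)$ plus a Chebyshev-based sandwich for part (b). This is correct and is precisely the proof the cited references give, so there is nothing to compare against beyond noting that your sketch matches the canonical argument.

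One cosmetic remark: the theorem as transcribed in the paper carries some typos (the overload of $p_0$ in part (b), the $n^3$ in the denominator, and the target $G(n,p\binom{n}{2})$ which should be the $G(n,m)$ model). You have tacitly interpreted these in the intended way, and your flagged ``short arithmetic check'' that the shift in $p_\pm$ moves the binomial mean by $\omega(n)$ standard deviations is indeed where one must be careful that the constants in the statement line up; with the correct scaling (shift of order $\omega(n)s(n)/n^2$ rather than $n^3$) this goes through as you describe.
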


\subsubsection{Configuration model and Random Regular Graphs} 

The configuration model can construct a multigraph in general with
a given degree sequence $d=(d_1,\ldots,d_n)$. 
We describe the configuration for random regular graphs \cite{wormald1999models},
as we use it in Chapter~\ref{rainbowchapter}. 
We follow the configuration model of Bollob\'as \cite{bollobas2001random} 
in our proofs, see \cite{janson2000random}  for further details.
Let $W=[2m=rn]$ be our set of {\em configuration points} and let $W_i=[(i-1)r+1,ir]$,
$i\in [n]$, partition $W$. The function $\f:W\to[n]$ is defined by
$w\in W_{\f(w)}$. Given a pairing $F$ (i.e. a partition of $W$ into $m$ pairs) we obtain a
(multi-)graph $G_F$ with vertex set $[n]$ and an edge $(\f(u),\f(v))$ for each
$\{u,v\}\in F$. Choosing a pairing $F$ uniformly at random from
among all possible pairings $\Omega_W$ of the points of $W$ produces a random
(multi-)graph $G_F$. 
Each $r$-regular simple graph $G$ on vertex set $[n]$ is equally likely to be generated as $G_F$.
Here, simple means without loops of multiple edges. 
Furthermore, if $r=O(1)$ then $G_F$ is simple with a probability bounded below by a positive value independent of $n$.
Therefore, any event that occurs \whp\ in $G_F$ will also occur \whp\ in $G(n,r)$.

\subsubsection{Preferential attachment}

The configuration model can be used to generate graphs with power law degree distributions.
Assuming that $d=(d_1,\ldots,d_n)$ is a graphical sequence following a power law distribution, 
each vertex $i$ obtains $d_i$ configuration points. Then a pairing $F$ results in a multigraph
with degree sequence $d$. However, this mechanism does not provide any insights on how a dynamic
network that evolves over time exhibits a power law degree distribution. 

Albert-L\'{a}szl\'{o} Barab\'{a}si and R\'eka Albert in a highly influential paper \cite{albert}
provide a dynamic mechanism that explains how power law degree sequences emerge in 
real-world networks. We present a generalized version of their model with 
two parameters $m$ and $\delta \geq -1$
as presented in  \S 8.1 in  \cite{van2009random}. The original version \cite{albert} is a subcase 
of this general model for $\delta=0$. 
The model generates a sequence of graphs which we denote by $\{ PA_t(m,\delta)\}_{t=1}^{+\infty}$
which for every $t$ yields a graph with $t$ vertices and $mt$ edges. We define the model for $m=1$. 
The model  $PA_t(m,\delta)$ where $m\geq 2$ is reduced to the case $m=1$ by running the model 
$ PA_{mt}(1,\delta)$ and collapsing sequences of $m$ consecutive vertices. 
Let $\{v_1,\ldots,v_t \}$ be the set of vertices of $PA_t(m,\delta)$
and let $d_i(t)$ be the degree of vertex $v_i$ at time $t$. 
Initially at time 1 the graph $PA_1(1,\delta)$ consists of a single vertex with a loop. 
The growth follows the following preferential rule. 
To obtain $PA_{t+1}(1,\delta)$ from $PA_t(1,\delta)$ a new vertex with 
a single edge is added to the graph. This edge chooses its second endpoint 
according to the following probability distribution. 
With probability $\frac{d_i(t)+\delta}{t(2+\delta)+(1+\delta)}$ 
vertex $v_i$ is chosen, where $i \in [t]$, and with the remaining 
probability $\frac{1+\delta}{t(2+\delta)+(1+\delta)}$ a self loop is created.

Define $p_k =\Big(2+\frac{\delta}{m} \Big)   \frac{ \Gamma(k+\delta) \Gamma(m+2+\delta+\tfrac{\delta}{m}) }{\Gamma(m+\delta) \Gamma(k+3+\delta+\tfrac{\delta}{m}) }$
for $k\geq m$ where $\Gamma(t)= \int_{0}^{+\infty} x^{t-1}e^{-x}dx$ is the $\Gamma$ function 
and $P_k(t) = \frac{1}{t} \sum_{i=1}^t  \mathbf{1}(d_i(t)=k)$. Also, let $N_k(t)=tP_k(t)$. 
It turns out that $\Mean{N_k(t)} \approx p_k t$ and that the degree sequence 
is strongly concentrated around its expectation. Specifically, for any $C > m\sqrt{8} $ as $t\rightarrow +\infty$ the following
concentration inequality holds. 

$$ \Prob{ \max_k | N_k(t) - \Mean{N_k(t)}| \geq C\sqrt{t\log t} }=o(1).$$    

\noindent For the special case $\delta=0$,

$$ p_k = \frac{2m(m+1)}{k(k+1)(k+2)},$$ 

namely the probability distribution follows a power law with slope 3, as $p_k\sim k^{-3}$.
It is worth mentioning that the model of preferential attachment was introduced conceptually
by  Barab\'{a}si and  Albert but it was Bollob{\'a}s and Riordan with their collaborators 
who formally defined and studied the model \cite{bollobasriordan,bollobasdegrees}. 
Power law degree sequences can also emerge with growth models based on optimization \cite{fabrikant}. 

%Two facts worth mentioning are the existence different approach to explaining power laws is based on optimization. 

\subsubsection{Kronecker graphs} 

Kronecker graphs \cite{Leskovec05Realistic,kronecker} are 
inspired by fractal theory \cite{mandelbrot1977fractals}. There exist two versions 
of Kronecker graphs, a deterministic and a randomized one. To define each one, we remind 
the definition of the Kronecker product. 

\begin{definition} 
Given two matrices $A^{m \times n}=(a_{ij}),B^{m' \times n'}=(b_{ij})$, the Kronecker product 
matrix $C^{mm'\times nn'}$ is given by 

\[ C = A \otimes B = \left( \begin{array}{ccc}
a_{11}B & \ldots & a_{1n}B \\
\vdots &  \ddots &  \vdots \\
a_{m1}B & \ldots& a_{mn}B\end{array} \right).\] 
\end{definition}

Deterministic Kronecker graphs are defined by a small initiator adjacency matrix 
$K_1$ and the order $k$. A deterministic Kronecker graph of order $k$ is defined by 

$$ K_1^{(k)} = \underbrace{ K_1 \otimes \ldots \otimes K_1}_{\text{k~times}}.$$

The stochastic Kronecker graphs use an initiator matrix with probabilities. The final 
adjacency matrix is the outcome of a randomized rounding of the $k$-th order 
Kronecker product of the initiator matrix. 
Kronecker graphs match several empirical properties 
such as   heavy-tailed degree distributions and triangles,  low diameters, and also 
obeys the densification power law. Most properties are analyzed
in the deterministic case \cite{leskovec2010kronecker,Tsourakakis:2008:FCT:1510528.1511415}. 
Mahdian and Xu in an elegant paper studied stochastic Kronecker graphs. 
They show a phase transition for the emergence of the giant component and 
for connectivity, and prove that such graphs have constant diameters beyond 
the connectivity threshold  \cite{mahdian2007stochastic}.
Two additional appealing features of Kronecker graphs is the existence
of methods to fit the parameters of the $2 \times 2$ initiator matrix 
to a given graph and their generation is embarassingly parallel.

Other popular models are the copying model \cite{kleinberg1999web,kumar2000stochastic}, 
the Cooper-Frieze model \cite{cooperfrieze}, 
the Aiello-Chung-Lu model \cite{aiello,aiello2000random}, protean graphs \cite{protean}
the Fabrikant-Koutsoupias-Papadimitriou model \cite{fabrikant},
and the forest fire model \cite{1217301}.

%\subsubsection{Copying model} 
%\subsubsection{Chung-Lu random graphs}

\subsection{Triangles} 
\label{sec:trianglestriangles}
Subgraphs play a central role in graph theory. 
It is not an exaggeration to claim that studying 
subgraphs has been an active thread of research 
since the early days of graph theory:
Euler paths and cycles, Hamilton paths and cycles, 
matchings, cliques, neighborhoods of vertices typically 
sketched via the degree sequence are special types of subgraphs.

Among various subgraphs, triangles play a 
major role in network analysis. A triangle is a clique of order 3. 
The number of triangles in a graph is a computationally expensive, crucial graph statistic in complex network 
analysis, in random graph models and in various important applications. Despite the fact 
that real-world networks tend to be sparse in edges, they are dense in triangles. 
This observation implies that when two vertices share a common neighbor, then it is 
more likely that they are/become connected. For instance, it has been observed in the 
MSN Messenger social network that if two people have a common contact it is 18\,000 times 
more likely that they are connected \cite{achyou}. The transitivity of adjacency is striking in social networks
and in other types of networks too.
There exist two processes that 
generate triangles in a social network: {\it homophily} and {\it transitivity}.
According to the former, people tend to choose friends with similar characteristics to themselves
(e.g., race, education) \cite{citeulike:8417412,wasserman_faust94} and according to the latter friends 
of friends tend to become friends themselves \cite{wasserman_faust94}. 
We survey a wide range of applications which rely on the number of triangles in a given graph.

\spara{Clustering Coefficients and Transitivity of a Graph:} 
Despite the fact that Erd\"{o}s-R\'enyi graphs have a short diameter 
they do not model social networks well. Social networks 
have many triangles. This was the main motivation of 
Watts and Strogatz \cite{j1998collective} in their 
influential paper to introduce clustering coefficients and 
the notion of transitivity which we defined in a previous section.

\spara{Uncovering Hidden Thematic Structures:} Eckmann and Moses \cite{eckmann2002curvature} propose the use of the clustering coefficient 
for detecting subsets of web pages with a common topic. 
The key idea is that reciprocal links between pages indicate a mutual recognition/respect 
and then triangles due to their transitivity properties can be used
to extend  ``seeds'' to larger subsets of vertices with similar thematic structure in the Web graph. 
In other words, regions of the World Wide Web with high curvature 
indicate a common topic, allowing the authors to extract useful meta-information. 
This idea has found more applications, in natural language processing
\cite{dorow} and in bioinformatics \cite{DBLP:journals/bmcbi/RougemontH03,Kalna:2007:CCW:1365534.1365536}.

\spara{Exponential Random Graph Model:} Frank and Strauss \cite{citeulike:1665611} proved under the assumption that two edges are dependent
only if they share a common vertex that the sufficient statistics for Markov graphs 
are the counts of triangles and stars. 
Wasserman and Pattison \cite{citeulike:1692025} proposed the exponential random graph (ERG) model  
which generalized the Markov graphs \cite{robins07introduction}. 
Triangles are frequently used as one of the sufficient statistics of the ERG model 
and counting them is necessary for parameter estimation, e.g., using Markov chain Monte Carlo (MCMC) procedures \cite{bhamidi}.

\spara{Spam Detection:} Becchetti et al. \cite{1401898} show that the distribution of triangles among
spam hosts and non-spam hosts can be used as a feature for classifying 
a given host as spam or non-spam. The same result holds also 
for web pages, i.e., the spam and non-spam triangle distributions differ at a detectable level
using standard statistical tests from each other.

\spara{Content Quality and Role Behavior Identification:} Nowadays, there exist many online 
forums where acknowledged scientists participate, e.g., 
MathOverflow, CStheory Stackexchange and discuss problems of their fields.
This yields significant information for researchers.  
Several interesting questions arise such as which participants comment on each other. 
This question including several others were studied in \cite{wesler}.
The number of triangles that a user participates was shown to play a critical role in answering 
these questions. For further applications in assesing the role behavior of users see \cite{1401898}.

\spara{Structural Balance and Status Theory:} Balance theory appeared first in Heider's seminal work \cite{heider} 
and is based on the concept ``the friend of my friend is my friend'', 
``the enemy of my friend is my enemy'' etc. \cite{wasserman_faust94}. 
To quantify this concept edges become signed, i.e., there is a function $c: E(G) \rightarrow \{+,-\}$. 
If all triangles are positive, i.e., the product of the signs of the edges is $+$, then the graph is balanced. 
Status theory is based on interpreting a positive edge $(u,v)$ 
as $u$ having lower status than $v$, while the negative edge $(u,v)$
means that $u$ regards $v$ as having a lower status than himself/herself.
Recently, Leskovec et al.\cite{Leskovec:2010:PPN:1772690.1772756} have performed experiments
to quantify which of the two theories better apply 
to online social networks and predict signs of incoming links. 
Their algorithms require counts of signed triangles in the graph.

\spara{Microscopic Evolution of networks:} Leskovec et al. \cite{Leskovec:2008:MES:1401890.1401948}
present an extensive experimental study of network evolution
using detailed temportal information. One of their findings is that as edges arrive in the network, 
they tend to close triangles, i.e., connect people with common friends.

\spara{Community Detection: }  Counting triangles is important as subroutine
in community detection algorithms. Berry et al. use triangle counting to deduce the edge
support measure in their community detection algorithm \cite{berry2011tolerating}. 
Gleich and Seshadhri \cite{Gleich-2012-neighborhoods} 
show that heavy-tailed degree distributions and abundance in triangles 
imply that there exist vertices which together with their neighbors
form a low-conductance set, i.e., community. 

\spara{Motif Detection:} Triangles are abudant not only in social networks but in biological networks  \cite{citeulike:307440,Yook04functional}.
This fact can be used e.g., to correlate the topological and functional properties of protein interaction networks \cite{Yook04functional}.

\spara{Triangular Connectivity \cite{batagelj2007short}:} Two vertices $u,v$ are triangularly connected
if there is a sequence of triangles $(\Delta_1,\ldots,\Delta_s)$ 
such that $u$ is a vertex in the $\Delta_1$,
$v$ in $\Delta_s$ and $\Delta_i$ shares at least one vertex with $\Delta_{i-1}$.

\spara{$k$-truss:} The $k$-truss of a graph $G$ \cite{cohen2009graph} is the maximum subgraph
of $G$ where every edge appears in at least $k-2$ triangles. 

\spara{Link recommendation:} Triangle listing is used in link recommendation
\cite{tsourakakis2009spectral,tsourakakis2011spectral}.

\spara{CAD applications:} Fudos and Hoffman \cite{fudos} introduced a graph-constructive approach to solving systems of geometric constraints,
a problem which arises frequently in Computer-Aided design (CAD) applications. 
One of the steps of their algorithm computes the number of triangles in an appropriately defined graph.

Given the large number of applications, there exists a lot of interest in developing 
efficient triangle listing and counting algorithms. 

\subsubsection{Triangle counting methods}
\label{sec:subgraphrelated}

There exist {\it exact} and {\it approximate} triangle counting algorithms.
It is worth noting that for most of the applications described in Section~\ref{sec:subgraphintro} 
the exact number of triangles is not crucial.
Hence, approximate counting algorithms which are fast and output a high quality estimate are desirable for the 
applications in which we are interested.

\spara{Exact Counting:} Naive triangle counting by checking all triples of vertices takes $O(n^3)$ units of time. 
The state of the art algorithm is due to Alon, Yuster and Zwick \cite{739463} 
and runs in $O(m^{\frac{2\omega}{\omega+1}})$,
where currently the fast matrix multiplication exponent $\omega$ is 2.3727 
\cite{williams2011breaking}. Thus, the Alon, Yuster, Zwick (AYZ) algorithm currently runs in $O(m^{1.407})$ time.
It is worth mentioning that from a practical point of view algorithms based
on matrix multiplication are not used due to the prohibitive memory requirements.
Even for medium sized networks, i.e., networks with hundreds of thousands of edges, 
matrix-multiplication based algorithms are not applicable. 
Itai and Rodeh in 1978 showed an algorithm which finds a triangle in any graph in $O(m^\frac{3}{2})$ \cite{itairoder}. 
This algorithm can be extended to list the set of triangles in the graph with the same time complexity.
Chiba and Nishizeki showed that triangles can be found in time $O(m\alpha(G))$ where $\alpha(G)$ 
is the {\it arboricity} of the graph. Since $\alpha(G)$ is at most $O(\sqrt{m})$ their algorithm
runs in $O(m^{3/2})$ in the worst case \cite{chiba}. 
For special types of graphs more efficient triangle counting algorithms exist. 
For instance in planar graphs, triangles can be found in $O(n)$ time \cite{chiba,itairoder,papadimitriou1981clique}.

Even if listing algorithms solve a more general problem than the counting one, they
are preferred in practice for large graphs, due to the smaller memory requirements
compared to the matrix multiplication based algorithms. 
Simple representative algorithms are the node- and the edge-iterator algorithms.
The former counts for each vertex $v$ the number of triangles $t_v$ it is involved in,
i.e., the number of edges among its neighbors, whereas the latter
algorithm counts for each edge $(i,j)$ the common neighbors  of vertices $i,j$.
Both of these algorithms have the same asymptotic complexity $O(mn)$, 
which in dense graphs results in $O(n^3)$ time, the complexity of the naive counting algorithm. 
Practical improvements over this family of algorithms have been achieved using various techniques, such as 
hashing and sorting by the degree \cite{latapy,schank2005finding}.

\spara{Approximate Counting:} On the approximate counting side, most of the triangle counting algorithms have been developed in the 
streaming setting. In this scenario, the graph is represented as a stream. 
Two main representations of a graph as a stream are the edge stream and the incidence stream. In the former, edges arrive
one at a time. In the latter scenario all edges incident to the same vertex appear successively in the stream. The ordering of the vertices 
is assumed to be arbitrary. A streaming algorithm produces a $(1+\epsilon)$ approximation 
of the number of triangles \whp\ by making only a constant number of passes over the stream. 
However, sampling algorithms developed in the streaming literature can be applied in the setting where the graph fits in the memory as well. 
Monte Carlo sampling techniques have been proposed to give a fast estimate of the number of triangles.
According to such an approach, a.k.a. naive sampling \cite{paper:schank:2004}, 
we choose three nodes at random repetitively and check if they form a triangle or not. 
If one makes $$ r = \log({\frac{1}{\delta}})\frac{1}{\epsilon^2}(1+\frac{T_0+T_1+T_2}{T_3})$$
independent trials where $T_i$ is the number of triples with $i$ edges 
and outputs as the estimate of triangles the random variable $T_3'$ equaling to the 
fractions of triples picked that form triangles times the total number of
triples ${n \choose 3}$, then
$$(1-\epsilon)T_3 < T_3' < (1+\epsilon)T_3 $$
with probability at least $1- \delta$.
This is suitable only when  $T_3=o(n^2)$.

In \cite{yosseff} the authors reduce the problem of triangle counting efficiently to estimating
moments for a stream of node triples. Then, they use the Alon-Matias-Szegedy (AMS) algorithms \cite{amsalgos}   to proceed. 
The key is that the triangle computation reduces to estimating the zero-th, first and second frequency moments, which can be done efficiently. 
Furthermore, as the authors suggest their algorithm is efficient only on graphs with $\Omega(n^2/\log{\log{n}})$
triangles, i.e., triangle dense graphs as in the naive sampling. 
The AMS algorithms are also used by \cite{jowhary}, where simple sampling techniques are used, such
as choosing an edge from the stream at random and checking how many common neighbors its two endpoints share
considering the subsequent edges in the stream. 
Along the same lines, Buriol et al.  \cite{buriol} proposed two space-bounded sampling algorithms to estimate the number of triangles. 
Again, the underlying sampling procedures are simple. For instance, in the case of the edge stream representation, they sample randomly
an edge and a node in the stream and check if they form a triangle. The three-pass algorithm presented therein, counts  in the 
first pass the number of edges, in the second pass it samples uniformly at random an edge $(i,j)$ and a node $k \in V-\{i,j\}$ and in the third pass it tests
whether the edges $(i,k),(k,j)$ are present in the stream. The number of samples $r$ needed to obtain an $(1\pm \epsilon)$-approximation
with probability $1-\delta$ is 

$$ r = O\Bigg( \log{ \big(\frac{1}{\delta}\big) } \frac{T_1+2T_2+3T_3}{T_3\epsilon^2} \Bigg)  = O\Bigg( \log{ \big(\frac{1}{\delta}\big) } \frac{mn}{t\epsilon^2} \Bigg) .$$

Even if the term $T_0$ in the nominator is missing\footnote{Notice that $m(n-2)=T_1+2T_2+3T_3$ and $t=T_3$.} compared to the naive sampling, the graph has still to be fairly dense with respect
to the number of triangles in order to get  a $(1+\pm \epsilon)$-approximation \whp. 
Buriol et al. \cite{buriol} show how to turn the three-pass algorithm into a single pass algorithm 
for the edge stream representation and similarly they provide a three- and one-pass algorithm for the incidence stream representation. 
Kane et al. show how to count other subgraphs in the streaming model \cite{kane2012counting}. 
In \cite{1401898} the semi-streaming model for counting triangles is introduced, which allows $\log{n}$ passes over the edges. 
The key observation is that since counting triangles reduces to computing the intersection of two sets, namely the induced neighborhoods
of two adjacent nodes, ideas from  locality sensitivity hashing \cite{broder2008minwise} are applicable to the problem.

Another line of work is based on linear algebraic arguments. Specifically, in the case of ``power-law'' networks it was 
shown in \cite{tsourakakis1} that the spectral counting of triangles 
can be efficient due to their special spectral properties \cite{chung}.
This idea was further extended in \cite{TsourakakisKAIS} using the randomized Singular Value Decomposition (SVD) 
approximation algorithm by \cite{drineas:frieze}.
More recently, Avron proposed a new approximate triangle counting method based on a randomized algorithm
for trace estimation \cite{Haim10}.

\spara{Graph Sparsifiers:} A sparsifier of a graph $G(V,E,w)$ is a sparse graph $H$ that is similar
to $G$ in some useful notion.  We discuss in the following the Bencz{\'u}r-Karger cut 
sparsifier \cite{benczurstoc,DBLP:journals/corr/cs-DS-0207078} and 
the Spielman-Srivastava spectral sparsifier \cite{DBLP:conf/stoc/SpielmanS08}.

\textit{Bencz{\'u}r-Karger Sparsifier:}  Bencz{\'u}r and Karger introduced in \cite{benczurstoc} the notion of cut sparsification to accelerate cut algorithms whose running time
depends on the number of edges. Using a non-uniform sampling scheme they show that given a graph $G(V,E,w)$ with $|V|=n, |E|=m$ 
and a parameter $\epsilon$ there exists a graph $H(V,E',w')$ with $O(n\log{(n)}/\epsilon^2)$ edges such that the weight
of every cut in $H$ is within a factor of $(1\pm \epsilon)$ of its weight in $G$. Furthermore, they provide a nearly-linear 
time algorithm which constructs such a sparsifier. 
The key quantity used in the sampling scheme of Bencz{\'u}r and Karger is the strong connectivity $c_{(u,v)}$ of an edge $(u,v) \in E$
\cite{benczurstoc,DBLP:journals/corr/cs-DS-0207078}. The latter quantity is defined to be the maximum value $k$
such that there is an induced subgraph $G_0$ of $G$ containing both $u$ and $v$, and every cut in $G_0$ has weight at least $k$.

\textit{Spielman-Srivastava Sparsifier:} In \cite{DBLP:conf/stoc/SpielmanS08} Spielman and Teng introduced the notion 
of a spectral sparsifier in order to strengthen the notion of a cut sparsifier.  
A quantity that plays a key role in spectral sparsifiers is the {\em effective resistance}. 
The term effective resistance comes from electrical network analysis, see Chapter IX in \cite{bollobas}. 
In a nutshell, let $G(V,E,w)$ be a weighted graph with vertex set $V$, edge set $E$ and weight function $w$. 
We call the weight $w(e)$ {\it resistance} of the edge $e$. We define the {\it conductance} $r(e)$ of $e$ 
to be the inverse of the resistance $w(e)$. Let $\mathcal{G}$ be the resistor network constructed from 
$G(V,E,w)$  by replacing each edge $e$ with an electrical resistor whose electrical resistance is $w(e)$. 
Typically, in $\mathcal{G}$  vertices are called {\it terminals}, a convention that emphasizes the 
electrical network perspective of a graph $G$.
The {\em effective resistance} $R(i,j)$ between two vertices $i,j$ is the electrical resistance measured 
across vertices $i$ and $j$ in $\mathcal{G}$. Equivalently, the effective resistance is the potential difference 
that appears across terminals $i$ and $j$ when we apply a unit current source between them. 
Finally, the {\em effective conductance}  $C(i,j)$ between two vertices $i,j$ is defined  as $C(i,j) = R^{-1}(i,j)$. 

Spielman and Srivastava in their seminal work \cite{DBLP:conf/stoc/SpielmanS08} proposed to include 
each edge of $G$ in the sparsifier $H$ with probability proportional to its effective resistance. 
They provide  a nearly-linear time algorithm that produces spectral sparsifiers with $O(n\log{n})$ edges.

\subsection{Dense Subgraphs}
\label{sec:denserelatedch1}

Finding dense subgraphs is a key problem for many applications 
and the key primitive for community detection. Here we review some
important concepts of dense subgraphs. 

\spara{Cliques:}
A clique is a subset of vertices all connected to each other.
The problem of finding whether there exists a clique of a given size in a graph is \NP-complete.
A \emph{maximum} clique of a graph is a clique of maximum possible size
and its size is called the graph's clique number.
%% Finding the clique number is {\NPhard}. %~\cite{karp}.
H$\mathring{a}$stad~\cite{hastad} shows that, unless $\mathbf{P} = \mathbf{NP}$,
there cannot be any polynomial time algorithm that approximates the
maximum clique within a factor better than $\bigO(n^{1-\epsilon})$, for any $\epsilon > 0$.
%When the max clique problem is parameterized by the order of the clique it is $W[1]$-hard~\cite{downey}.
Feige~\cite{feige} proposes a polynomial-time algorithm
that finds a clique of size $\bigO( (\frac{\log{n}}{\log\log{n}})^2 \big)$
whenever the graph has a clique of size $\bigO(\frac{n}{\log^b{n}})$ for any constant $b$.
Based on this, an algorithm that approximates the maximum clique problem within a factor of $\bigO( n \frac{(\log\log{n})^2}{\log{n}^3}\big)$ is also defined.
A \emph{maximal} clique is a clique that is not a subset of any other clique.
%%A maximum clique is therefore always maximal, but the converse does not hold.
The Bron-Kerbosch algorithm~\cite{BronKerbosch} finds all maximal cliques in a graph in exponential time.
A near optimal time algorithm for sparse graphs was introduced in~\cite{Eppstein}.

\spara{Densest Subgraph:}
Let $G(V,E)$ be a graph, $|V|=n$, $|E|=m$. The average degree of a vertex set $S \subseteq V$ is defined as
$\frac{2e[S]}{|S|}$, where $e[S]$ is the number of edges in the induced graph $G[S]$.
The densest subgraph problem is to find a set $S$ that maximizes the average degree.
The densest subgraph can be identified in polynomial time by solving a maximum-flow problem
\cite{GGT89,Goldberg84}.
Charikar~\cite{Char00} shows that the greedy algorithm proposed by Asashiro et al.
\cite{AITT00} produces a $\frac{1}{2}$-approximation of the densest subgraph in linear time.
Both algorithms are efficient in terms of running times and scale to large networks.
In the case of directed graphs, the densest subgraph problem is solved in polynomial
time as well. Charikar~\cite{Char00} provided a linear programming approach
which requires the computation of $n^2$ linear programs
and a $\frac{1}{2}$-approximation algorithm which runs in $\bigO(n^3+n^2m)$ time.
Khuller and Saha~\cite{Khuller} improved significantly the state-of-the art by providing
an exact combinatorial algorithm and a fast $\frac{1}{2}$-approximation algorithm
which runs in $\bigO(n+m)$ time.
Kannan and Vinay~\cite{Kannan} gave a spectral $\bigO(\log{n})$ approximation algorithm for a related
notion of density.

In the classic definition of densest subgraph there is no size restriction of the output.
When restrictions on the size $|S|$ are imposed, the problem becomes {\NPhard}.
Specifically, the {\it DkS} problem of finding the densest subgraph of $k$ vertices is known to be {\NPhard}~\cite{AHI02}.
For general $k$, Feige et al.~\cite{FPK01} provide an approximation guarantee
of $\bigO(n^{\alpha})$, where $\alpha < \frac{1}{3}$.
The greedy algorithm by Asahiro et al.~\cite{AITT00} gives instead an approximation factor of $\bigO(\frac{n}{k})$.
%Therefore, when $k=\Theta(n)$ Asashiro et al. results in a constant factor approximation algorithm~\cite{AITT00}.
Better approximation factors for specific values of $k$ are provided by algorithms based on semidefinite programming~\cite{FL01}.
From the perspective of (in)approximability, Khot~\cite{Khot06} shows that there cannot exist any PTAS for the {\it DkS} problem under a reasonable complexity assumption.
Arora et al.~\cite{AKK95} propose a PTAS for the special case $k=\Omega(n)$ and $m=\Omega(n^{2})$.
Finally, two variants of the {\it DkS} problem are introduced by Andersen and Chellapilla~\cite{AndersenChellapilla}.
The two problems ask for the set $S$ that maximizes the average degree subject to $|S| \leq k$  (DamkS) and  $|S| \geq k$ (DalkS), respectively.
The authors provide constant factor approximation algorithms for both DamkS and DalkS.

\spara{Quasi-cliques:}
A set of vertices $S$ is an $\alpha$-quasi-clique (or pseudo-clique) if  $e[S] \geq \alpha {|S| \choose 2}$, i.e.,
if the edge density of the induced subgraph $G[S]$ exceeds a threshold parameter $\alpha \in (0,1)$.
Similarly to cliques, \emph{maximum} quasi-cliques~\cite{MaximumQuasiCliques}
and \emph{maximal} quasi-cliques~\cite{MaximalQuasiCliques} are quasi-cliques
of maximum size and quasi-cliques not contained into any other quasi-clique, respectively.
Abello et al.~\cite{abello} propose an algorithm for finding a single maximal
$\alpha$-quasi-clique, while Uno~\cite{uno} introduces an algorithm to enumerate
all $\alpha$-quasi-cliques.
%In Chapter~\ref{densestchapter} we introduce a novel problem on quasi-cliques, the \emph{optimal} quasi-clique problem. 
%, i.e., the quasi-clique that is the
%best one according to the scoring function $f_\alpha(S) = e[S] - \alpha {|S| \choose 2}$.

\spara{$k$-core, $k$-clubs, $kd$-cliques:}  
A $k$-core  is a maximal connected subgraph in which all vertices have degree at least $k$.
There exists a linear time algorithm for finding $k$-cores by repeatedly removing the vertex having the smallest degree~\cite{batagelj}.
A $k$-club is a subgraph whose diameter is at most $k$~\cite{mokken}. $kd$-cliques differ from $k$-clubs as
 the shortest paths used to compute the diameter of a $kd$-clique are allowed to use vertices not belonging to that $kd$-clique.
All these clique variants are clearly conceptually different from the \OQCs\ we study in this paper.

\subsection{Graph Partitioning}

Graph partitioning is a fundamental computer science problem. 
As we discussed above, the problem of finding communities 
is reduced to understanding the cut structure of the graph. 
In distributed computing applications, 
the following version of the graph partitioning problem 
plays a key role. 
The interested reader may consult the cited work and the references therein
for more information on the balanced graph partitioning problem.

{\bf Balanced graph partitioning:}  The {\em balanced graph partitioning} 
problem is a classic \NPhard problem 
of fundamental importance to parallel and distributed computing  \cite{Garey:1974:SNP:800119.803884}.
The input of this problem is an undirected graph $G(V,E)$ and an integer $k \in \field{Z}^+$, the 
output is a partition of the vertex set in $k$ balanced parts such that the number of edges across the clusters
is minimized.
Formally, the balance constraint is defined by the imbalance parameter $\nu$. 
Specifically, the $(k,\nu)$-balanced graph partitioning 
asks to divide the vertices of a graph in $k$ clusters each of size 
at most $\nu \frac{n}{k}$, where $n$ is the number of vertices in $G$. 
The case $k=2,\nu=1$ is equivalent to the \NPhard   minimum bisection problem.
Several approximation algorithms, e.g., \cite{Feige:2002:PAM:586842.586910},
and heuristics, e.g., \cite{Fiduccia:1982:LHI:800263.809204,kl} exist for this problem. 
When $\nu=1+\epsilon$ for any desired but fixed $\epsilon$ there exists a
$O(\epsilon^{-2} \log^{1.5}{n})$ approximation algorithm \cite{Andreev:2004:BGP:1007912.1007931}. 
When $\nu = 2$ there exists an $O(\sqrt{\log{k}\log{n}})$ approximation algorithm
based on semidefinite programming (SDP) \cite{krauthgamer}.  
Due to the practical importance of $k$-partitioning there exist several heuristics, 
among which \metis \cite{schloegel} and its parallel version~\cite{schloegel_parmetis}
stand out for their good performance. 
\metis is widely used in many existing systems~\cite{karagiannis}.
There are also heuristics that improve efficiency and partition quality of \metis 
in a distributed system~\cite{satuluri}. 

{\bf Streaming balanced graph partitioning:}  

Despite the large amount of work on the balanced graph partitioning problem, 
neither state-of-the-art approximation algorithms nor heuristics 
such as \metis are well tailored to the computational restrictions that the size of today's graphs impose. 
Motivated by this fact, Stanton and Kliot introduced the streaming balanced graph partitioning problem, 
where the graph arrives in the stream and decisions about the partition need to be taken with  on the fly quickly~\cite{stanton}. 
Specifically, the vertices of the graph arrive in a stream with the set of edges incident to them. 
When a vertex arrives, a partitioner decides  where to place the vertex.
A vertex is never moved after it has been assigned to one of the $k$ machines. 
A realistic assumption that can be used in real-world streaming graph partitioners is the existence
of a small-sized buffer. Stanton and Kliot evaluate partitioners with or without buffers. 
The work of \cite{stanton} can be   adapted to edge streams.
Stanton showed that streaming graph partitioning algorithms with a single pass 
under an adversarial stream order cannot approximate the optimal cut size within
$o(n)$. The same bound holds also for random stream orders \cite{stantonstreaming}.
Finally,  Stanton \cite{stantonstreaming} analyzes two variants of well performing algorithms 
from~\cite{stanton} on random graphs. Specifically, she proves that if the graph 
$G$ is sampled according to  the planted partition model, then the two algorithms
despite their similarity can perform differently and that 
one of the two can recover the true partition \whp, assuming  that 
inter-, intra-cluster edge probabilities are constant, and their gap is a large constant.

We conclude our brief exposition by outlining the differences between community
detection methods and the balanced partitioning problem. 
One main difference is the lack of restriction on the number of vertices per subset
in the community detection problem. 
A second difference is that in realistic applications the number
of clusters in the balanced partitioning problem is part of the input,
as it represents the number of machines/clusters available to distribute the graph. 
In community detection the number of clusters is not known a priori, 
or even worse, their existence is not clear.
It is worth mentioning at this point that in Chapters~\ref{densestchapter} and \ref{fennelchapter} we introduce measures
conceptually close to the modularity measure  \cite{girvan2002community,newman2004finding,newman}. 
Despite the popularity of modularity, few rigorous results exist. 
Specifically,  Brandes et al. proved that maximizing modularity is \NPhard \cite{brandes2007finding}. 
Approximation algorithms without theoretical guarantees whose performance 
is evaluated in practice  also exist \cite{kempe}.

\subsection{Big Graph Data Analytics}

Except for the algorithmic `dasein' of computer science, there is an engineering one too.
An important engineering law is Moore's law. 
Gordon Moore based on observations from 1958 until 1965 extrapolated that the number 
of components in integrated circuits would keep doubling for at least until 1975 \cite{moore1998cramming}.
It is remarkable that Moore's prediction remains (more or less) valid since then. 
However, as we are approaching the end of its validity, it is becoming clear that in order 
to perform demanding computational tasks, we need more than one machine. 
At the same time, input size increases. Currently, the growth rate
is unprecedented. Eron Kelly, the general manager of product marketing for Microsoft SQL Server, 
predicts that as humankind we will generate    more data as humankind 
than we generated in the previous 5,000 years \cite{techmsr}. 
The term {\em big data}  describes collections of large and complex datasets 
which are difficult to manipulate and process using traditional tools. 
Mainly, for these two reasons, i.e., hardware reaching its limits and big data, 
parallel and distributed computing are the {\it de facto} solutions 
for processing large scale data.  For this reason, there exists a lot of interest
in developing efficient graph processing systems. Popular graph processing platforms are 
Pregel \cite{malewicz} and its open-source version Apache Giraph that build on MapReduce , 
and GraphLab \cite{DBLP:conf/uai/LowGKBGH10}. 
It is worth mentioning that for dynamic graphs 
there exist other platforms which are suitable for stream/micro-batch processing, 
such as Twitter's Storm \cite{stormtwitter}. 

In the following we discuss the details of \mapreduce \cite{dean}, 
which we use in this dissertation as the underlying distributed system to develop
efficient large-scale graph processing algorithms and systems.

\subsubsection{\mapreduce Basics}
\label{sec:mapreducebasics} 

While the PRAM model \cite{jaja} and the bulk-synchronous parallel model (BSP) \cite{valiant1990bridging}
are powerful models, \mapreduce has largely ``taken over'' both industry and academia \cite{hadoopusers}. In few words, 
this success is due to two reasons: first, \mapreduce is a simple and powerful programming model which makes the programmer's 
life easy.  Secondly, \mapreduce is publicly available via its open source version \hadoop. \mapreduce was introduced in 
\cite{dean} by Google, one of the largest users of multiple processor computing in the world, for facilitating the 
development of scalable and fault tolerant applications. In the \mapreduce paradigm, a parallel computation is 
defined on a set of values and consists of a series of map, shuffle and reduce steps. Let $(x_1,\ldots, x_n)$ be the set 
of values, $m$ denote the mapping function which takes a value $x$ and returns a pair of a key $k$ and a value $u$ and $r$ 
the reduce function.

\begin{enumerate} 
 \item In the map step a mapping function $m$ is applied to a value $x_i$ and a pair $(k_i,u_i)$ of a 
 key $k_i$ and a value $u_i$ is generated.
 \item The shuffle step starts upon having mapped all values $x_i$ for $i=1$ to $n$ to pairs. 
  In this step, a set of lists is produced using the key-value pairs generated from the map 
  step with an important feature. Each list is characterized by the key $k$ and has the form $L_k = \{ k : u_1,..,u_{j(k)} \}$ 
  if and only if there exists a pair $(k,u_i)$ for $i=1$ to $j$. 
 \item Finally in the reduce step, the reduce function $r$ is applied to the lists 
   generated from the shuffle step to produce the set of values $(w_1,w_2,\ldots)$. 
\end{enumerate}

To illustrate the aforementioned abstract concepts consider the problem of counting how 
many times each word in a given document appears. 
The set of values is the ``bag-of-words'' appearing in the document. 
For example, if the document is the sentence ``The dog runs in the forest'', 
then $\{x_1,x_2,x_3,x_4,x_5,x_6\}= \{$ the, dog, runs, in, the, forest$\}$. 
One convenient choice for the \mapreduce functions is the following and results in the following steps:
The map function m will map a value x to a pair of a key and a value. A convenient 
choice for $m$ is something close to the identity map. Specifically, we choose 
$m(x) =(x,\$)$, where we assume that the dollar sign $\$$ an especially reserved symbol.
The shuffle step for our small example will produce the following set of lists: 
(the:$,$), (dog:$\$$), (runs:$\$$), (in:$\$$), (runs:$\$$), (forest:$\$$) 
The reduce function $r$ will process each list defined by each different word appearing 
in the document by counting the number of dollar signs $\$$.  
This number will also be the count of times that specific word appears in the text. 

\hadoop implements \mapreduce and was originally created by Doug Cutting. 
Even if \hadoop is well known for \mapreduce it is actually a collection of subprojects that are 
closely related to distributed computing. For example HDFS (\hadoop filesystem) is a distributed 
filesystem that provides high  throughput access to application data and HBase is a scalable, 
distributed database that supports structured data storage for large tables (column-oriented database). 
Another subproject is Pig, which is a high-level data-flow language and execution framework for parallel 
computation \cite{gates2009building}.  Pig runs on HDFS and \mapreduce. For more details and other subprojects, 
the interested reader can visit the website that hosts the \hadoop project \cite{hadoopusers}.

\section{Computational Cancer Biology} 
\label{subsec:introcancerdata}

Human cancer is caused by the accumulation of genetic alternations in cells 
\cite{michor,weinberg}.
It is a complex phenomenon often characterized by the 
successive acquisition of combinations of genetic aberrations that
result in malfunction or disregulation of genes.  
Finding driver genetic mutations, i.e., mutations which confer growth 
advantage on the cells carrying them 
and have been positively selected during the evolution of the cancer
and uncovering their temporal sequence have been central goals of 
cancer research the last decades \cite{nature}.
In this Section we review three problems arising in computational 
cancer biology. In Section~\ref{subsec:preprocessacgh} 
we present background on a data denoising problem.
In Section~\ref{subsec:oncotreeshere} we review 
oncogenetic trees, a popular model for oncogenesis.
Finally, in Section~\ref{subsec:unmixinghere} we 
discuss the problem of discovering cancer subtypes.

\subsection{Denoising array-based Comparative Genomic Hybridization (aCGH) data}
\label{subsec:preprocessacgh}

There are many forms of chromosome aberration that can contribute to cancer development,
including polyploidy, aneuploidy, interstitial deletion, 
reciprocal translocation, non-reciprocal translocation, as well as amplification,
again with several different types of the latter 
(e.g., double minutes, HSR and distributed insertions \cite{albertson}).  Identifying the specific
recurring aberrations, or sequences of aberrations, 
that characterize particular cancers provides important clues about the genetic basis of
tumor development and possible targets for diagnostics or therapeutics.  
Many other genetic diseases are also characterized by
gain or loss of genetic regions, such as Down Syndrome 
(trisomy 21)~\cite{downsyndrome}, Cri du Chat (5p deletion)~\cite{criduchat}, and
Prader-Willi syndrome (deletion of 15q11-13)~\cite{praderwilli} 
and recent evidence has begun to suggest that inherited copy number
variations are far more common and more important to human 
health than had been suspected just a few years ago~\cite{CNVs}.  These facts have
created a need for methods for assessing DNA copy 
number variations in individual organisms or tissues.

In Chapter~\ref{acghchapter}, we focus specifically on array-based comparative genomic 
hybridization (aCGH) \cite{bignell,pollack,Kallioniemi:1992,Genomic04high}, a method
for copy number assessment using DNA microarrays that remains, 
for the moment, the leading approach for high-throughput typing of copy number
abnormalities.  The technique of aCGH is schematically 
represented in Figure~\ref{fig:acghfig1}.  A test and a 
reference DNA sample are
differentially labeled and hybridized to a microarray 
and the ratios of their fluorescence intensities is measured for each spot. A typical
output of this process is shown in Figure~\ref{fig:acghfig1} (3), where the genomic profile of the cell line GM05296 \cite{coriell} is
shown for each chromosome. The $x$-axis corresponds to 
the genomic position and the $y$-axis corresponds to a noisy measurement of the ratio
$\log_2{ \frac{T}{R}}$ for each genomic position, typically referred to as ``probe'' by biologists.  For healthy diploid organisms, $R$=2 and $T$ is the DNA copy number we want to infer from
the noisy measurements. For more details on the 
use of aCGH to detect different types of  chromosomal aberrations, see \cite{albertson}.

\begin{figure} 
\centering
\includegraphics[width=.7\textwidth]{./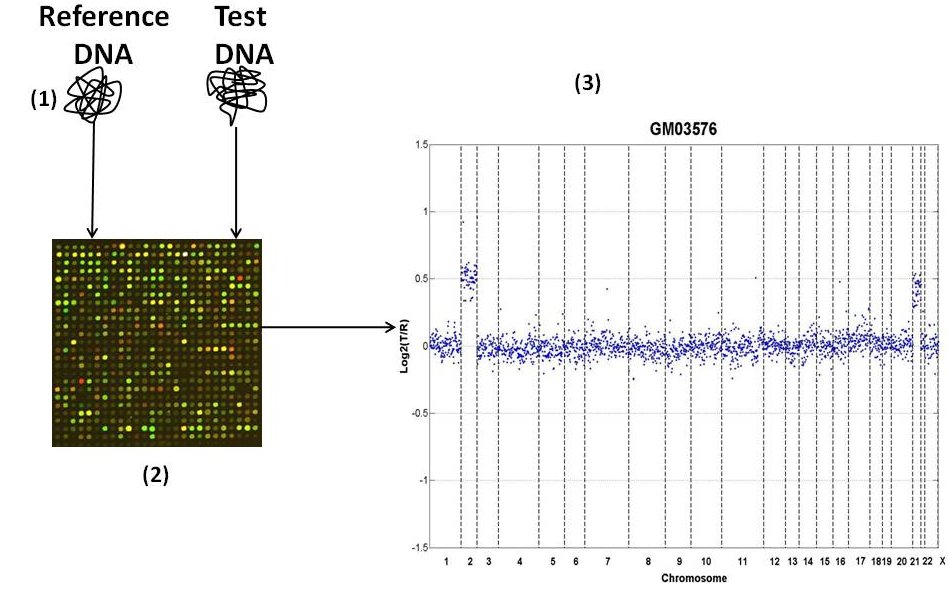}
\caption{Schematic representation of array CGH.  Genomic DNA from two cell populations (1)
is differentially labeled and hybridized in a microarray (2). Typically the reference 
DNA comes from a normal subject. For humans  this means that the reference DNA comes
from a normal diploid genome. The ratios on each spot 
are measured and normalised so that the median $\log_2$ ratio is zero. 
The final result is an ordered tuple  containing values 
of the fluorescent ratios  in each genomic position per each chromosome. This is shown in (3)
where we see the genomic profile of the cell line GM05296 \cite{coriell}. 
The problem of denoising array CGH  data is to infer the true DNA copy number $T$ 
per genomic position from a set of noisy measurements of the quantity $\log_2{ \frac{T}{R}}$,
where $R$=2 for normal diploid humans. }
\label{fig:acghfig1}
\end{figure}

Converting raw aCGH log fluorescence ratios into discrete 
DNA copy numbers is an important but non-trivial problem.  
Finding DNA regions that consistently exhibit chromosomal losses or gains in cancers provides a crucial
means for locating the specific genes involved in development of different cancer types.  It is therefore important to distinguish,
when a probe shows unusually high or low fluorescence, whether that aberrant signal reflects experimental noise or a probe that is truly
found in a segment of DNA that is gained or lost.  
Furthermore, successful discretization of array CGH data is crucial for
understanding the process of cancer evolution, since discrete inputs are required for a large family of successful evolution algorithms,
e.g., \cite{DBLP:journals/jcb/DesperJKMPS99,DBLP:journals/jcb/DesperJKMPS00}.
It is worth noting that manual annotation of such regions, 
even if possible \cite{coriell}, is tedious and prone to mistakes due to
several sources of noise (impurity of test sample, noise from array CGH method, etc.).
A well-established observation that we use in Chapter~\ref{acghchapter} is 
that near-by probes tend to have the similar DNA copy number. 

Many algorithms and objective functions have  been proposed for
the problem of discretizing and segmenting aCGH data.  Many methods,
starting with~\cite{1016476}, treat aCGH segmentation
as a hidden Markov model (HMM) inference problem.  The HMM approach
has since been extended in various ways, e.g., through the use of
Bayesian HMMs \cite{guha}, incorporation of prior knowledge of
locations of DNA copy number polymorphisms~\cite{citeulike:789789}, and
the use of Kalman filters~\cite{DBLP:conf/recomb/ShiGWX07}.  Other
approaches include wavelet decompositions~\cite{hsu}, quantile
regression~\cite{citeulike:774308}, expectation-maximization in
combination with edge-filtering~\cite{citeulike:774287}, genetic
algorithms~\cite{DBLP:journals/bioinformatics/JongMMVY04},
clustering-based methods~\cite{xing,citeulike:773210}, variants on
Lasso regression~\cite{citeulike:2744846,huang}, and various
problem-specific Bayesian~\cite{barry}, likelihood~\cite{1093217}, and
other statisical models~\cite{DBLP:conf/recomb/LipsonABLY05}.  A
dynamic programming approach, in combination with expectation
maximimization, has been previously used by Picard et
al.~\cite{picard2}. In~\cite{1181383} and~\cite{citeulike:387317} an 
extensive experimental analysis of available methods has been conducted.
Two methods stand out as the leading approaches in practice.
One of these top methods is \cghseg\ \cite{picard}, which assumes that a given CGH
profile is a Gaussian process whose distribution parameters are
affected by abrupt changes at unknown coordinates/breakpoints.  The
other method which stands out for its performance is Circular Binary Segmentation~\cite{olshen} (CBS), a
modification of binary segmentation, originally proposed by Sen and
Srivastava~\cite{sensrivastava}, which uses a statistical comparison
of mean expressions of adjacent windows of nearby probes to identify
possible breakpoints between segments combined with a greedy algorithm
to locally optimize breakpoint positions.

\subsection{Cancer subtypes}
\label{subsec:unmixinghere}

Genomic studies have dramatically improved our understanding of the biology of tumor formation and
treatment. In part this has been accomplished by harnessing tools that profile the genes and proteins in tumor cells, 
revealing previously indistinguishable tumor subtypes that are likely to exhibit distinct sensitivities to treatment 
methods \cite{Perou00, Golub99}. As these tumor subtypes are uncovered, 
it becomes possible to develop novel therapeutics more specifically targeted to the particular 
genetic defects that cause each cancer \cite{Pegram00,Atkins02,Bild09}. 
While recent advances have had a profound impact on our understanding of tumor biology, 
the limits of our understanding of the molecular nature of cancer obstruct the burgeoning efforts in 
``targeted therapeutics'' development. These limitations are apparent in the high failure rate of the discovery 
pipeline for novel cancer therapeutics \cite{Kamb07} as well as in the continuing difficulty of predicting which 
patients will respond to a given therapeutic.  A striking example is the fact that trastuzumab, the targeted therapeutic 
developed to treat HER2-amplified breast cancers, is ineffective in many patients who have HER2-overexpressing tumors and 
yet effective in some who do not \cite{Paik08}. Furthermore, subtypes typically remain poorly defined 
--- {\em e.g.}, the ``basal-like'' breast cancer subtype, for which different studies have 
inferred very distinct genetic signatures \cite{Perou00,Sotiriou03} --- and yet many 
patients do not fall into any known subtype. Our belief, then, is that clinical treatment of cancer 
will reap considerable benefit from the identification of new cancer subtypes and genetic signatures.

One promising approach for better elucidating the common mutational patterns by which tumors develop 
is to recognize that tumor development is an evolutionary process and apply phylogenetic methods to tumor 
data to reveal these evolutionary relationships.
Much of the work on tumor evolution models flows from the seminal efforts of \cite{DesperJKMPS99} on 
inferring {\it oncogenetic trees} from comparative genomic hybridization (aCGH) profiles of tumor cells. 
A strength in this model stems from the extraction of ancestral structure from many probe sites per tumor, 
potentially utilizing measurements of the expression or copy number changes across the entire genome. 
However, this comes at the cost of overlooking the diversity of cell populations within tumors, which can 
provide important clues to tumor progression but are conflated with one another in tissue-wide assays like aCGH. 
The cell-by-cell approaches, such as \cite{PenningtonSSR07,Shackney04}, use this heterogeneity information 
but at the cost of allowing only a small number of probes per cell. Schwartz and 
Shackney~\cite{SchwartzS10} proposed bridging the gap between these two methodologies by computationally 
inferring cell populations from tissue-wide gene expression samples.  This inference was accomplished through 
``geometric unmixing,'' a mathematical formalism of the problem of separating components of mixed samples in which
each observation is presumed to be an unknown convex combination\footnote{A point $p$ is a convex combination 
combination of basis points $v_0,...,v_k$ if and only if the constraints 
$p=\sum_{i=0}^k \alpha_i v_i$, $\sum_i \alpha_i=1$ and $\forall i~:~\alpha_i\ge 0$ obtain. 
The fractions $\alpha_i$ determine a mixture over the basis points $\{v_i\}$ 
that produce the location $p$.} of several hidden fundamental components.
Other approaches to inferring common pathways include mixture models of oncogenetic 
trees \cite{BeerenwinkelRKHSL05}, PCA-based methods \cite{hoglund1}, conjunctive Bayesian
networks \cite{GerstungBMB09} and clustering \cite{DBLP:journals/bioinformatics/LiuMCRKB06}.

Unmixing falls into the class of methods that seek to recover a set of pure sources 
from a set of mixed observations. Analogous problems have been coined ``the cocktail problem,'' 
``blind source separation,'' and ``component analysis'' and various communities have formalized 
a collection of models with distinct statistical assumptions. In a broad sense, the classical approach 
of principal component analysis (PCA) \cite{Pearson01} seeks to factor the data under the constraint that, 
collectively, the fundamental components form an orthonormal system. Independent component analysis (ICA) 
\cite{comon94} seeks a set of statistically independent fundamental components. 
These methods, and their ilk, have been extended to represent non-linear data distributions through 
the use of kernel methods (see \cite{ScholkopfSM98,ScholkopfS02} for details), which often confound 
modeling with black-box data transformations.  Both PCA and ICA break down as pure source separators 
when the sources exhibit a modest degree of correlation.  Collectively, these methods place  strong 
independence constraints on the fundamental components that are unlikely to hold for tumor samples, 
where we expect components to correspond to closely related cell states.

Extracting multiple correlated fundamental components, has motivated the development of new methods for unmixing genetic data. 
Similar unmixing methods were first developed for tumor samples by Billheimer and colleagues 
\cite{Etzioni05} to improve the power of statistical tests on tumor samples in the presence of 
contaminating stromal cells. Similarly, a hidden Markov model approach to unmixing was developed by 
Lamy {\em et al.} \cite{Lamy07} to correct for stromal contamination in DNA copy number data. 
These recent advances demonstrate the feasibility of unmixing-based approaches for separating 
cell sub-populations in tumor data. Outside the bioinformatics community, geometric unmixing has 
been successfully applied in the geo-sciences \cite{EhrlichF87} and in hyper-spectral image analysis \cite{ChanCHM09}. 

The recent work by \cite{SchwartzS10} applied the hard geometric unmixing model 
to gene expression data with the goal of recovering 
expression signatures of tumor cell subtypes, with the specific goal of facilitating 
phylogenetic analysis of tumors. The results showed promise in identifying meaningful 
sub-populations and improving phylogenetic inferences. 

\subsection{Oncogenetic trees}
\label{subsec:oncotreeshere}

Among the triumphs of cancer research stands the breakthrough work of Vogelstein and his collaborators \cite{fearon,vogelstein} 
which provides significant insight  into the evolution of colorectal cancer. 
Specifically, the so-called ``Vogelgram'' models colorectal tumorigenesis as a linear accumulation of certain genetic events. 
Few years later, Desper et al. \cite{desper} considered more general evolutionary models compared to 
the ``Vogelgram'' and presented one of the first theoretical approaches to the problem \cite{michor},
the so-called {\it oncogenetic trees}. 
Before we provide a description of oncogenetic trees which are the focus of our work,
we would like to emphasize that since then a lot of research work has followed from several groups of researchers,
influenced by the seminal work of Desper et al. \cite{desper}. 
Currently there exists a wealth of methods that infer evolutionary models from 
microarray-based data such as gene expression and array Comparative Genome Hybridization (aCGH) data: 
distance based oncogenetic trees \cite{desper2}, maximum likelihood oncogenetic trees \cite{heydebreck},
hidden variable oncogenetic trees \cite{tofigh}, conjunctive Bayesian networks \cite{beer4}
and their extensions \cite{beer6,beer3}, mixture of trees \cite{beer5}.
The interested reader is urged to read the surveys of Attolini et al. \cite{michor} and Hainke et al. \cite{hainke}
and the references therein on established progression modeling methods. 
Furthermore, oncogenetic trees have successfully shed light into many types of cancer such as renal cancer \cite{desper}, 
hepatic cancer \cite{hepatic} and head and neck squamous cell carcinomas \cite{huang}.

An oncogenetic tree is a rooted directed tree\footnote{Typically, the term 
{\it tree} is reserved for the undirected case and the term {\it branching} for the directed
case. In the context of oncogenetic trees,  we consistently 
use the term {\it tree} for a directed tree as in \cite{desper}.}. The root  represents the 
healthy state of tissue with no mutations. 
Any other vertex $v \in V$ represents a mutation. Each edge represents a ``cause-and-effect'' relationships.  
Specifically, for a mutation represented by vertex $v$ to occur, all the mutations corresponding to vertices that lie on the directed path 
from the root to $v$ must be present in the tumor.
In other words, if two mutations $u,v$ are connected by an edge $(u,v)$ then $v$ cannot occur if $u$ has not occured. 
The edges are labeled with probabilities.
Each tumor corresponds to a rooted subtree of the oncogenetic tree and the probability of occurence is determined 
as described by \cite{desper}. Desper et al. provide an algorithm that finds a likely oncogenetic tree that fits 
the observed data.

\section{Thesis Overview}
\label{subsec:connectingdotsoverview}

In this Section we motivate our work and provide an overview of this dissertation. 

%\section{Thesis Outline and Contributions}
%The main contribution of this thesis is a goal of this project is 
%to design and test mathematically well-founded algorithmic and 
%statistical techniques for analyzing large scale, heterogeneous and noisy data. 

%\paragraph{\bf {Part I: Network Science} 

{\bf Rainbow Connectivity (Chapter~\ref{rainbowchapter}) } 
\label{subsec:rainbowoverview}

Connectivity is a fundamental graph theoretic property \cite{bondy1976graph}.
The most well-studied connectivity concept asks for the minimum 
number of vertices or edges which need to be removed in order 
to disconnect the graph. 
However, there exist other graph theoretic concepts that strengthen 
the connectivity concept: imposing bounds on the diameter, 
existence of edge disjoint spanning trees etc. 
In 2006 Chartrand et al. \cite{chartrand2008rainbow} defined the  concept of 
{\it rainbow connectivity}, also referred as {\it rainbow connection}. 
We prefer to provide two motivating examples rather than the exact 
definition which is found in Chapter~\nameref{rainbowchapter}. 

Suppose we wish to route messages in a cellular network $G$, between any two
vertices in a pipeline, and require that each link on the route between the vertices
(namely, each edge on the path) is assigned a distinct channel (e.g., a distinct
frequency). The minimum number of distinct channels we need to use 
is  the rainbow connectivity of $G$. 

Another motivating example is related to securing communication between government agencies \cite{li2012rainbow}. 
The Department of Homeland Security of USA was created in 2003 in response to the weaknesses 
discovered in the secure transfer of classified information after the September 11, 2001 
terrorist attacks. Ericksen \cite{ericksen2007matter}  
observed that because of the unexpected aftermath 
law enforcement and intelligence agencies could not communicate. 
Given that this situation could not have been easily predicted, 
the technologies utilized were separate entities and prohibited shared access, meaning
that there was no way for officers and agents to cross check information between various organizations.
While the information needs to be protected since it relates to national security,
there must also be procedures that permit access between appropriate parties. This
twofold issue can be addressed by assigning information transfer paths between
agencies which may have other agencies as intermediaries while requiring a large
enough number of passwords and firewalls that is prohibitive to intruders, yet small
enough to manage. Equivalently, this number has to be large enough so that one or 
more paths between every pair of agencies have no password repeated. 
Rainbow connectivity arises as the natural answer to the following question: 
what is the minimum number of passwords or firewalls needed that allows 
at least one path between every two agencies so that the passwords along each path are distinct?

{\bf Contributions:} 

In \cite{FriezeTsourakakisRainbow,DBLP:conf/approx/FriezeT12} we prove the following results
on the rainbow connectivity of sparse random graphs. 

\squishlist
 \item For an Erd\"{o}s-R\'{e}nyi random graph  $G = G(n,p)$ at the connectivity threshold, i.e., 
 $p=\frac{\log{n}+{\om}}{n}$, $\om\to\infty,{\om}=o(\log{n})$, we prove Theorem~\ref{thrm:Rainbowmainthrm} 
 which characterizes optimally the rainbow connectivity \whp.
 Our proof is constructive in the following sense: a random coloring is \whp\ a valid rainbow coloring.
 \item For random regular graphs \cite{wormald1999models} we prove Theorem~\ref{thrm:Rainbowregular}. 
 The proof of Theorem~\ref{thrm:Rainbowregular} is still constructive, 
 but requires an unexpected use of a Markov Chain Monte Carlo  algorithm.  
\squishend

{\bf Random Apollonian Graphs (Chapter~\ref{ranchapter})}
\label{subsec:ranoverview} 

 In Chapter~\ref{ranchapter} we analyze Random Apollonian Networks (RANs) \cite{maximal}, 
 a popular random graph model for real-world networks. 
 Compared to other models, RANs generate planar graphs. 
 This makes RANs special for at least two reasons. 
 Firstly, planar graphs form an important family of graphs for various reasons. 
 They model several significant types of spatial real-world networks 
such as power grids, water distribution networks and  road networks. 
For instance, a street network has edges corresponding
to roads and vertices represent roads' intersections
and endpoints. Since edges intersect only at vertices,
street networks are planar. 
It is worth mentioning that planarity of street networks
is almost always violated in practice because of bridges.
However planarity is a good approximation \cite{lammer2006scaling}.
It has been observed through various experimental studies 
that real-world planar graphs have distinct features  
\cite{christensen1999cities,cardillo2006structural,clark1951urban,jiang2007topological,lammer2006scaling} from 
random planar graphs \cite{mcdiarmid2005random}. One such feature is that the degrees are skewed,
obeying a power law degree distribution \cite{lammer2006scaling}.
A recent paper which surveys properties and models of real-world planar graphs is \cite{barthelemy2011spatial}. 
Despite the outstanding amount of work on modeling real-world networks with random graph models, e.g.,
\cite{aiello2000random,albert,borgs2010hitchhiker,borgs2007first,lattanzi2009affiliation,
leskovec2005realistic,mahdian2007stochastic,flaxman2006geometric,flaxman2007geometric,durrett2007random,fabrikant}, 
real-world planar graphs have been overlooked.
Secondly, real-world networks tend to have small vertex separators.
By the planar separator theorem \cite{lipton1979separator} and the planarity of RANs, 
this property is satisfied. This should be seen 
in constrast to the popular preferential attachment model \cite{albert} 
where the generated graph is an expander.

{\bf Contributions:} 

In \cite{DBLP:conf/waw/FriezeT12,rans} we perform the first rigorous analysis of RANs. 

\squishlist 
 \item We prove in Theorem~\ref{thrm:RANdegreesequence} tight   results on the degree sequence of RANs. 
  Previous results were weaker or even erroneous, see \cite{zhang,comment}.
 \item We prove in Theorem~\ref{thrm:RANthrm1} tight asymptotic expressions for the top-$k$ largest degrees, where $k$ is constant. 
 \item We provide in Theorem~\ref{thrm:RANthrm2} tight asymptotic expressions for the top-$k$ largest eigenvalues, where $k$ is constant. 
 \item We provide a simple first moment  argument that upper-bounds the asymptotic diameter growth. 
  By observing a bijection between RANs and random ternary trees, we are able to prove Theorem~\ref{thrm:RANternary},  a refined upper bound 
  on the diameter. 
\squishend

{\bf Triangle Counting (Chapter~\ref{trianglecountingchapter}) } 
%\label{subsec:triangleoverview} 

We motivated the importance of triangles in real-world networks 
in Section~\ref{sec:trianglestriangles}.

{\bf Contributions}

In Chapter~\ref{trianglecountingchapter} we present results from our work 
\cite{tsourakakis2,DBLP:journals/im/KolountzakisMPT12,tsourakakis4,tsourakakis3}. 

\squishlist 
\item

In Section~\ref{sec:trianglesparsifiers} we present a randomized 
algorithm for approximately counting the number 
of triangles in a graph $G$. The algorithm proceeds as follows: 
keep each edge independently with probability $p$, enumerate
the triangles in the sparsified graph $G'$ and return the number of triangles found in $G'$ 
multiplied by  $p^{-3}$. We prove that under mild assumptions 
on $G$ and $p$ our algorithm returns a good 
approximation for the number of triangles with high probability. 
We illustrate the efficiency of our algorithm on various large real-world datasets
where we achieve significant speedups. 
Furthermore, we investigate the performance of existing 
sparsification procedures
namely the Spielman-Srivastava spectral sparsifier \cite{DBLP:conf/stoc/SpielmanS08} and 
the the Bencz{\'u}r-Karger cut sparsifier \cite{benczurstoc,DBLP:journals/corr/cs-DS-0207078}
and show that they are not optimal/suitable with respect to triangle counting. 

\item 

In Section~\ref{sec:degreepartitioning} we extend the results 
from Section~\ref{sec:trianglesparsifiers} by introducing a powerful idea 
of Alon, Yuster and Zwick \cite{739463}. 
As a result, we propose a Monte Carlo algorithm which 
approximates the true number of triangles 
within accuracy (1+$\epsilon$) and runs in $O \left( m + \frac{m^{3/2} \log{n} }{t \epsilon^2} \right)$ time, 
where $n,m,t,\epsilon>0$ are the number of vertices, edges, triangles and a small constant respectively. 
We extend our method  to the semi-streaming model \cite{feigenbaum2005graph} using three passes and a
memory overhead of $O\left(m^{1/2}\log{n} + \frac{m^{3/2} \log{n} }{t \epsilon^2} \right)$.
We propose a random projection based method for triangle counting and provide a sufficient 
condition to obtain an estimate with low variance. 

\item 
In Section~\ref{sec:IPL} we present a new sampling approach to approximating 
the number of triangles in a graph  $G(V,E)$, 
that significantly improves existing sampling approaches. Furthermore, 
it is easily implemented in parallel. 
The key idea of our algorithm is to correlate the sampling of edges 
such that if two edges of a triangle are sampled, 
the third edge is always sampled. Compared to Section~\ref{sec:trianglesparsifiers}, 
this sampling decreases
the degree of the multivariate polynomial that expresses the number 
of sampled triangles. 
As a result, we are able to obtain more ``aggressive'' sampling techniques 
compared to Section~\ref{sec:trianglesparsifiers},
while strong concentration results remain valid.

\squishend

{\bf Densest Subgraphs (Chapter~\ref{densestchapter}) } 
%\label{subsec:densestoverview} 

Extracting dense subgraphs from large graphs is a key primitive in a variety of application domains~\cite{aggarwal}.
In the Web graph, dense subgraphs may correspond to thematic groups or even spam link farms, as observed by Gibson et al.~\cite{gibson}.
In biology, finding dense subgraphs can be used for discovering regulatory motifs in genomic DNA~\cite{fratkin}, and
finding correlated genes~\cite{langston}, and detecting transcriptional modules~\cite{Everett}.
In the financial domain, extracting dense subgraphs has been applied to
finding price value motifs~\cite{Du}. Other applications include graph compression~\cite{Buehrer}, graph visualization~\cite{hamelin},
reachability and distance query indexing~\cite{JinXRF09}, and finding stories and events in
micro-blogging streams~\cite{Angel}.

{\bf Contributions}
In Chapter~\ref{densestchapter} we present results
most of which are included in \cite{barcelona}. Our contributions are summarized as follows.

\squishlist

 \item  We introduce a general framework for finding dense subgraphs,
 which subsumes popular density functions. We provide theoretical insights into our framework by
 showing that a large family of objectives are efficiently solvable
 but there also exist subcases which are \NPhard.

  \item Our  framework provides a principled way to derive novel algorithms/heuristics
  geared towards the requirements of the application of interest. 
 As a special instance of our general framework we introduce the problem of extracting \OQC,
 which in general is \NPhard.

\item
We design two efficient algorithms for extracting optimal quasi-cliques. 
The first one is a greedy algorithm where the smallest-degree vertex is repeatedly removed from the graph,
and achieves an additive approximation error. 
The second algorithm is a heuristic based on the local-search paradigm.

\item  For a shifted-version of our objective, we show that the problem can be
approximated within a constant factor of 0.796 using a semidefinite-programming algorithm.

\item We evaluate our efficient algorithms on numerous datasets, 
both synthetic and real, showing that it produces high quality dense subgraphs. 
In particular, in the synthetic data experiments, we plant a clique in  Erd\"{o}s-R\'{e}nyi and 
in random power-law graphs, and measure precision and recall of the methods in 
``recovering'' the planted clique: our method clearly outperforms the \DS\ in this task. 
We also develop applications of our method in data mining and bioinformatic tasks, 
such as forming a successful team of domain experts and finding highly correlated genes from a microarray dataset.

\item Finally, motivated by real-world scenarios, we define and evaluate interesting variants of our original problem definition, such as
($i$) finding the \topk\ \OQCs, and ($ii$) finding {\OQCs} that contain a given set of vertices. 
\squishend

{\bf Structure of the Web Graph  (Chapter~\ref{hadichapter}) } 
%\label{subsec:weboverview}

The Web graph describes the directed links between pages of the World Wide Web (WWW) \cite{broder2000graph,bonato2005survey}.  
It is a graph which occupies a special position among real-world networks. 
The World Wide Web grew in a decentralized way, under the influence and decision of multiple participants. 
Understanding the structure of WWW and developing realistic models of evolution are two major research problems
that have attracted a lot of interest \cite{abonato}.

{\bf Contributions}

In Chapter~\ref{hadichapter} we present results from our work \cite{Kang:2011:HMR:1921632.1921634,DBLP:conf/sdm/KangTAFL10}. 
Our contributions can be summarized as follows:

The key contributions of this Chapter are the following:
\squishlist
\item We propose \hadi, a  scalable algorithm to compute the radii and diameter of network.
\item We analyze one of the largest public Web graphs, 
   with several {\em billions} of nodes and edges. We validate the small-world phenomenon and find several new structural patterns.
\squishend

{\bf FENNEL: Streaming Graph Partitioning for Massive Scale Graphs (Chapter~\ref{fennelchapter}) }
%\label{subsec:fenneloverview}

A key computational problem underlying all existing 
large-scale graph-processing platforms is {\it balanced graph partitioning}. 
When the graph is partitioned, the sizes of the partitions have to be balanced to exploit the 
speedup of parallel computing over different partitions. 
Furthermore, it is critical that the number of edges between distinct partitions is small in order to minimize 
the communication cost incurred due to messages exchanged between different partitions. 
Pregel \cite{malewicz}, Apache Giraph \cite{giraph}, PEGASUS \cite{DBLP:conf/icdm/KangTF09}
and  GraphLab \cite{DBLP:conf/uai/LowGKBGH10}  use as a default partitioner hash partitioning on vertices, 
which essentially corresponds to assigning each vertex to one of the $k$ partitions uniformly at random. 
This heuristic would balance the number of vertices per partition, but as it is entirely oblivious to 
the graph structure, may well result in grossly suboptimal fraction of edges cut. 
Balanced graph partitioning becomes even harder in the case of dynamic graphs: 
whenever a new edge or a new vertex with its neighbors arrives, it has to be assigned 
to one of partition parts.

{\bf Summary of our Contributions}

In Chapter~\ref{fennelchapter} we present results from our work \cite{fennel}. 
Our contributions can be summarized in the following points:

\squishlist 
\item We introduce a general framework for graph partitioning 
that relaxes the hard cardinality constraints on the number of vertices 
in a cluster~\cite{DBLP:conf/stoc/AroraRV04,krauthgamer}.
Our formulation provides a unifying framework that subsumes two of the most
popular heuristics used for streaming balanced graph partitioning:
the folklore heuristic of \cite{Prabhakaran:2012:MLG:2342821.2342825} 
which places a vertex to the cluster with the fewest non-neighbors, 
and the degree-based heuristic of \cite{stanton}, which serves as the current 
state-of-the-art method with respect to performance. 

\item Our framework allows us to define formally the notion of interpolation
between between the non-neighbors heuristic  \cite{Prabhakaran:2012:MLG:2342821.2342825} and the neighbors heuristic 
\cite{stanton}. This provides improved performance for the balanced partitioning problem
in the streaming setting.

\item We evaluate our proposed streaming graph partitioning method, 
\textsf{Fennel}, on a wide range of graph datasets, both real-world and 
synthetic graphs, showing that it produces high quality graph partitions. 
Table~\ref{tab:Fenneltab1} shows the performance of \textsf{Fennel} versus the 
best previously-known heuristic, which is the linear weighted degrees 
\cite{stanton}, and the baseline \textsf{Hash Partition} of vertices. 
We observe that \textsf{Fennel} achieves simultaneously significantly 
smaller fraction of edges cut and balanced cluster sizes. 

\item We also demonstrate the performance gains with respect to communication cost and run time while running iterative computations over partitioned input graph 
data in a distributed cluster of machines. Specifically, we evaluated  \textsf{Fennel} 
and other partitioning methods by computing PageRank in the graph processing platform Apache Giraph. 
We observe significant gains with respect to the byte count among different clusters and run time in 
comparison with the baseline \textsf{Hash Partition} of vertices. 

\item Furthermore, modularity--a popular measure for community detection \cite{girvan2002community,newman2004finding,newman}--
is also a special instance of our framework. We establish an approximation algorithm for a shifted objective, 
achieving a guarantee of  $O(\log(k)/k)$ for partitioning into $k$ clusters. 

\squishend

{\bf PEGASUS: A System for Large-Scale Graph Processing (Chapter~\ref{pegasuschapter})}
%\label{subsec:pegasusoverview}

As we discussed previously, 
an appealing solution to improve upon scalability is to partition massive graphs into smaller partitions 
and then use a large distributed system to process them.
Designing and implementing efficient graph processing platforms is a major  problem.

{\bf Contributions}

In Chapter~\ref{pegasuschapter} we present results from our work \cite{DBLP:conf/icdm/KangTF09,DBLP:journals/kais/KangTF11}. 
Our main contributions are the following:

\squishlist
\item We introduce a generic framework which allows us to perform 
various important graph mining tasks efficiently by generalizing 
the standard matrix-vector multiplication    (\IGMV).
\item We implement \pegasus, an optimized graph mining library. 
The source code is available online at \url{http://www.cs.cmu.edu/~pegasus/}.
\item We analyze the performance of our system showing that it 
scales well to large-scale graphs. 
\item We apply \pegasus on several large, real-world networks 
and we obtain insights into their structure. 
\squishend

{\bf Approximation Algorithms for Speeding up Dynamic Programming and Denoising aCGH data (Chapter~\ref{acghchapter})}
%\label{subsec:acghoverview}

As we previously discussed in Section~\ref{subsec:preprocessacgh}, 
a major computational problem in cancer biology is assessing DNA copy 
number variations in individual organisms or tissues. 

{\bf Contributions}

In Chapter~\ref{acghchapter} we present results from our work 
\cite{DBLP:conf/soda/MillerPST11,DBLP:journals/jea/TsourakakisPTMS11}. 
Our contributions can be summarized as follows:

\squishlist
\item We propose a new formulation of the array Comparative Genomic Hybridization (aCGH) denoising problem.
Specifically, based on the well-established observation that near-by probes tend to have the same DNA copy
number, we formulate the problem of denoising aCGH data as the problem of approximating 
a signal $P$ with another signal $F$ consisting of a few piecewise constant segments.
Specifically, let  $P=(P_1, P_2, \ldots, P_n) \in \field{R}^n$ be the input
signal -in our setting the sequence of the noisy aCGH measurements- 
and let  $C$ be a constant. Our goal is to find a function $F:[n]\rightarrow \field{R}$ which optimizes the following objective function:

\begin{equation}
\min_{F} \sum_{i=1}^n (P_i-F_i)^2 + C\times (|\{i:F_i \neq F_{i+1} \} |+1).
\label{eq:optimizationobjective}
\end{equation}

\item We solve the problem using a dynamic programming algorithm in $O(n^2)$ time. 
\item We provide a technique which approximates the optimal value of our 
objective function within additive $\epsilon$ error and runs in 
$\tilde{O}(n^{\tfrac{4}{3}+\delta} \log{ (\frac{U}{\epsilon}) )}$ time, where $\delta$ 
is an arbitrarily small positive constant and $U = \max \{ \sqrt{C},(|P_i|)_{i=1,\ldots,n} \}$.
\item We provide a technique for approximate dynamic programming
which solves the corresponding recurrence within a multiplicative factor 
of (1+$\epsilon$) and runs in $O(n \log{n} / \epsilon )$.
\item We validate our proposed model on both synthetic and real data. Specifically, 
our segmentations result in superior precision and recall compared to leading competitors on benchmarks of synthetic
data and real data from the Coriell cell lines.  In addition, we are able to find
several novel markers not recorded in the benchmarks but supported in the oncology literature.
\squishend

{\bf Robust Unmixing of Tumor States in Array Comparative Genomic Hybridization Data (Chapter~\ref{unmixingchapter})}
%\label{subsec:unmixingoverview}

We discussed in Section~\ref{subsec:introcancerdata} the phenomenon 
of {\it inter-tumor heterogeneity}. We propose a geometric approach 
robust to noise to the problem of detecting cancer subtypes. 

{\bf Contributions}

In Chapter~\ref{unmixingchapter} we present results from our 
work \cite{DBLP:journals/bioinformatics/TolliverTSSS10}. 
Our contributions can be summarized as follows: 

\squishlist
 \item We introduce a novel method for finding cancer subtypes 
 using tissue-wide DNA copy number data as assessed by array comparative 
 genomic hybridization (aCGH) data.
 \item We develop efficient computational tools to solve our optimization 
 problem which is robust to noise. 
 \item We apply our method to an aCGH data set taken from \cite{Navin09} and show that 
 the method identifies state sets 
corresponding to known subtypes consistent with much of the analysis performed by the authors.
\squishend 

{ \bf Perfect Reconstruction of Oncogenetic Trees (Chapter~\ref{oncotreeschapter}) }
%\label{subsec:oncotreesoverview}

Human cancer is caused by the accumulation of genetic alternations in cells \cite{michor,weinberg}.
Finding driver genetic mutations, i.e., mutations which confer growth advantage on the cells carrying them 
and have been positively selected during the evolution of the cancer
and uncovering their temporal sequence have been central goals of cancer research the last decades \cite{nature}.
Among the triumphs of cancer research stands the breakthrough work of Vogelstein and his collaborators \cite{fearon,vogelstein} 
which provides significant insight  into the evolution of colorectal cancer. 
Specifically, the so-called ``Vogelgram'' models colorectal tumorigenesis as a linear accumulation of certain genetic events. 
Few years later, Desper et al. \cite{desper} considered more general evolutionary models compared to 
the ``Vogelgram'' and presented one of the first theoretical approaches to the problem \cite{michor},
the so-called {\it oncogenetic trees}. 

Oncogenetic trees have been a successful tumorigenesis model for various cancer types. 
For this reason understanding its properties is an important problem.

{\bf Contributions}

In Chapter~\ref{oncotreeschapter} we present results from our 
work \cite{DBLP:journals/corr/abs-1201-4618}. Our main contribution is the following:

\squishlist
 \item We provide necessary and sufficient conditions for the unique 
 reconstruction of an oncogenetic tree \cite{desper}. 
\squishend

It is worth outlining that these conditions may be used 
to understand better the phenomenon of {\it intra-tumor heterogeneity} 
\cite{tsourakakis2013modeling}.

{\bf  Conclusion and Future Directions (Chapter~\ref{conclchapter}) } 

Our work leaves numerous interesting problems open. 
In Chapter~\ref{conclchapter} we conclude and provide several new research directions.

%% Chapter 2
\chapter{Theoretical Preliminaries}
\label{theoryprelim}
\lhead{\emph{Theoretical Preliminaries}} 
%\chapter{Theoretical Preliminaries} 
%\label{AppendixA} 
%\lhead{\emph{Theoretical Preliminaries}} % This is for the header on each page - perhaps a shortened title

In this Chapter we review theoretical preliminaries that are used in later Chapters. 

\section{Concentration of Measure Inequalities} 

\noindent The use of Chebyshev's inequality is known as {\it the second moment method}. 

\begin{lemma}[Chebyshev's Inequality \cite{alon}]
Let $X$ be a random variable, $\mu = \Mean{X}, \sigma = \sqrt{\Var{X}}$. For any positive $\lambda >0$

$$ \Prob{ |X-\mu| \geq \lambda \sigma } \leq \frac{1}{\lambda^2}. $$
\label{lem:chebyshevinequality}
\end{lemma}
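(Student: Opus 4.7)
The plan is to derive Chebyshev's inequality as a direct consequence of Markov's inequality applied to the nonnegative random variable $(X-\mu)^2$. Recall that Markov's inequality states that for any nonnegative random variable $Y$ and any $t>0$, one has $\Prob{Y \geq t} \leq \Mean{Y}/t$. The proof of Markov's inequality is itself a one-line observation: $t \cdot \mathbf{1}(Y \geq t) \leq Y$ pointwise, so taking expectations yields the bound. I would either invoke Markov as a lemma or prove it in one line to keep the argument self-contained.

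The key step is then to recognize the equivalence of events. Since $\lambda > 0$ and $\sigma \geq 0$, the event $\{|X-\mu| \geq \lambda \sigma\}$ coincides with the event $\{(X-\mu)^2 \geq \lambda^2 \sigma^2\}$ because squaring is monotone on $[0,\infty)$. Setting $Y = (X-\mu)^2$ and $t = \lambda^2 \sigma^2$, Markov's inequality gives
\[
\Prob{|X-\mu| \geq \lambda \sigma} = \Prob{(X-\mu)^2 \geq \lambda^2 \sigma^2} \leq \frac{\Mean{(X-\mu)^2}}{\lambda^2 \sigma^2}.
\]
Now I would substitute the definition of variance, $\Mean{(X-\mu)^2} = \Var{X} = \sigma^2$, which simplifies the numerator and denominator to yield the claimed bound $1/\lambda^2$.

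A small caveat to handle up front is the degenerate case $\sigma = 0$. If $\sigma = 0$ then $X = \mu$ almost surely, so for any $\lambda > 0$ the event $\{|X - \mu| \geq \lambda \sigma\} = \{|X-\mu| \geq 0\}$ is the whole sample space and the bound $1/\lambda^2$ is trivially satisfied only if $\lambda \geq 1$; more cleanly, one interprets the statement as vacuous or excludes this case, which is the standard convention. The only real obstacle is bookkeeping around this degenerate case and citing Markov cleanly; the core inequality is a two-line derivation.
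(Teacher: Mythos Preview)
Your proof is correct and is the standard derivation of Chebyshev's inequality via Markov's inequality. The paper, however, does not supply its own proof of this lemma: it is stated in the preliminaries chapter as a known result with a citation to \cite{alon}, and no argument is given. So there is nothing in the paper to compare against beyond noting that your approach matches the textbook proof one would find in the cited reference.

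One minor remark on your handling of the degenerate case $\sigma = 0$: your observation is correct that the inequality as literally stated can fail there (the left side is $1$ while the right side is $1/\lambda^2 < 1$ for $\lambda > 1$). The cleanest fix is simply to assume $\sigma > 0$, which is the implicit convention in the cited source and throughout the paper's applications; you need not agonize over interpreting the statement in the degenerate case.
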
 

Chernoff bounds allow us to obtain strong concentration results, when applicable. 
We use the following version in later Chapters. 

\begin{lemma}[Chernoff Inequality \cite{alon}]
Let $X_1, X_2, \ldots, X_k$ be independently distributed $\{0,1\}$ variables with $E[X_i]=p$. Then for any $\epsilon > 0$, we have
$$\Prob{ |\frac{1}{k} \sum_{i=1}^k X_i - p| > \epsilon p } \leq 2 e^{-\epsilon^2pk/2}$$
\label{lem:chernoff}
\end{lemma}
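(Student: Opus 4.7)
The plan is to prove the inequality by the standard Chernoff--Cram\'er exponential-moment technique, bounding the upper and lower tails separately and combining them with a union bound (which supplies the factor of~$2$). Write $S = \sum_{i=1}^k X_i$ and $\mu = \Mean{S} = pk$, so that
$$\Bigl\{\bigl|\tfrac{1}{k}S - p\bigr| > \epsilon p\Bigr\} = \bigl\{S > (1+\epsilon)\mu\bigr\} \cup \bigl\{S < (1-\epsilon)\mu\bigr\}.$$

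For the upper tail I would fix $t > 0$ and apply Markov's inequality to the non-negative variable $e^{tS}$, obtaining $\Prob{S > (1+\epsilon)\mu} \leq e^{-t(1+\epsilon)\mu}\,\Mean{e^{tS}}$. Independence factorizes the moment generating function as $\Mean{e^{tS}} = \prod_{i=1}^k \Mean{e^{tX_i}} = (1 - p + p e^t)^k$, and the elementary inequality $1 + x \leq e^x$, applied with $x = p(e^t - 1)$, dominates this by $\exp(\mu(e^t - 1))$. Choosing $t = \ln(1+\epsilon)$ optimizes the resulting exponent and produces the classical closed form
$$\Prob{S > (1+\epsilon)\mu} \leq \left(\frac{e^{\epsilon}}{(1+\epsilon)^{1+\epsilon}}\right)^{\mu}.$$
A completely symmetric argument with $t < 0$ and the choice $t = \ln(1-\epsilon)$ yields the mirror bound $\Prob{S < (1-\epsilon)\mu} \leq \bigl(e^{-\epsilon}/(1-\epsilon)^{1-\epsilon}\bigr)^{\mu}$.

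The main obstacle will be the purely analytic step of reducing these closed forms to the clean exponential $e^{-\epsilon^2 pk / 2}$ stated in the lemma. For this I would invoke the Taylor-type inequalities $(1+\epsilon)\ln(1+\epsilon) - \epsilon \geq \epsilon^2/2 - \epsilon^3/6$ and $(1-\epsilon)\ln(1-\epsilon) + \epsilon \geq \epsilon^2/2$, each verified by differentiating twice and matching the base case at $\epsilon = 0$. For $\epsilon \in (0,1]$ these imply that both tails are at most $e^{-\mu \epsilon^2 / 2}$ (the upper-tail analysis is the binding one, and if one wishes to drop the restriction $\epsilon \leq 1$ the constant must be relaxed to the familiar $1/3$). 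Adding the two tail bounds and substituting $\mu = pk$ completes the proof.
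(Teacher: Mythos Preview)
The paper does not supply its own proof of this lemma: it is stated in the preliminaries with a citation to \cite{alon} and then used as a black box, so there is nothing in the paper to compare your argument against.

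That said, your outline is the standard Chernoff--Cram\'er derivation and is essentially correct in structure, but the final analytic step has a gap. You correctly establish $(1+\epsilon)\ln(1+\epsilon) - \epsilon \geq \epsilon^2/2 - \epsilon^3/6$, but this does \emph{not} imply the upper tail is bounded by $e^{-\mu\epsilon^2/2}$: in fact $(1+\epsilon)\ln(1+\epsilon) - \epsilon \leq \epsilon^2/2$ for all $\epsilon > 0$ (differentiate $g(\epsilon) = (1+\epsilon)\ln(1+\epsilon) - \epsilon - \epsilon^2/2$ to see $g'' \leq 0$, hence $g \leq 0$). The best one gets from the closed form is $e^{-\mu\epsilon^2/(2+\epsilon)}$, which for $\epsilon \leq 1$ gives the familiar $e^{-\mu\epsilon^2/3}$ you allude to. So the constant $1/2$ in the exponent of the stated lemma is slightly optimistic for the upper tail; your proof sketch cannot close that gap because the gap is in the statement itself as written, not in your method. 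For the purposes the paper makes of the lemma (qualitative concentration with polylogarithmic slack), the discrepancy between $1/2$ and $1/3$ is immaterial.
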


The theory of discrete time martingales \cite{alon} will be 
the key to establish concentration inequalities in our proofs for degree sequences.

\begin{lemma}[Azuma-Hoeffding inequality]
Let $\lambda >0$. Also, let $(X_t)_{t=0}^n$ be a martingale sequence with $|X_{t+1}-X_t| \leq c$ for 
$t=0,\ldots,n-1$.Then:
$$ \Prob{ |X_n - X_0| \geq \lambda } \leq \exp{ \Big(-\frac{\lambda^2}{2c^2n}\Big) }.$$
\label{lem:RANazuma}
\end{lemma}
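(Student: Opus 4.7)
The plan is to prove this via the exponential moment (Chernoff-style) method, which is the canonical route for Azuma-Hoeffding. First I would define the martingale differences $Y_t = X_t - X_{t-1}$ for $t = 1, \ldots, n$, so that $X_n - X_0 = \sum_{t=1}^n Y_t$, each $|Y_t| \le c$, and $\mathbb{E}[Y_t \mid \mathcal{F}_{t-1}] = 0$ where $\mathcal{F}_{t-1}$ is the natural filtration up to time $t-1$. For any $\alpha > 0$, Markov's inequality applied to $e^{\alpha(X_n - X_0)}$ gives
$$\Pr[X_n - X_0 \ge \lambda] \le e^{-\alpha \lambda}\, \mathbb{E}\bigl[e^{\alpha(X_n - X_0)}\bigr].$$

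The crux is to bound the moment generating function of a single conditional increment, i.e.\ to show that for any $Y$ with $\mathbb{E}[Y \mid \mathcal{F}] = 0$ and $|Y| \le c$ almost surely,
$$\mathbb{E}\bigl[e^{\alpha Y} \,\big|\, \mathcal{F}\bigr] \le e^{\alpha^2 c^2 / 2}.$$
This is Hoeffding's lemma, and I would prove it by convexity: write $y \in [-c, c]$ as the convex combination $y = \tfrac{c+y}{2c}\cdot c + \tfrac{c-y}{2c}\cdot(-c)$, apply convexity of $x \mapsto e^{\alpha x}$ to upper bound $e^{\alpha y}$ by the corresponding convex combination of $e^{\alpha c}$ and $e^{-\alpha c}$, take conditional expectation (using $\mathbb{E}[Y \mid \mathcal{F}] = 0$) to get $\cosh(\alpha c)$, and finish with the elementary inequality $\cosh(x) \le e^{x^2/2}$ (proved by comparing Taylor series term-by-term).

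Next I would peel off the increments one at a time via the tower property. Conditioning on $\mathcal{F}_{n-1}$,
$$\mathbb{E}\bigl[e^{\alpha \sum_{t=1}^n Y_t}\bigr] = \mathbb{E}\Bigl[e^{\alpha \sum_{t=1}^{n-1} Y_t}\, \mathbb{E}[e^{\alpha Y_n} \mid \mathcal{F}_{n-1}]\Bigr] \le e^{\alpha^2 c^2 / 2}\, \mathbb{E}\bigl[e^{\alpha \sum_{t=1}^{n-1} Y_t}\bigr],$$
and iterating $n$ times yields $\mathbb{E}[e^{\alpha(X_n - X_0)}] \le e^{n \alpha^2 c^2 / 2}$. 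Substituting back gives
$$\Pr[X_n - X_0 \ge \lambda] \le \exp\!\Bigl(-\alpha \lambda + \tfrac{n \alpha^2 c^2}{2}\Bigr),$$
and optimizing over $\alpha > 0$ with the choice $\alpha = \lambda / (n c^2)$ produces the bound $\exp\!\bigl(-\lambda^2/(2 c^2 n)\bigr)$. Finally, repeating the argument for the martingale $(-X_t)_{t=0}^n$, which has the same increment bound, controls $\Pr[X_0 - X_n \ge \lambda]$, and combining the two-sided tail yields the statement for $|X_n - X_0|$.

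The only genuinely non-routine step is Hoeffding's lemma itself; everything else is bookkeeping with the tower property and a one-variable optimization. I do not anticipate any obstacle beyond making sure the convexity bound is applied under the right conditional expectation and that the filtration is the one generated by the martingale so that the increment bound holds pathwise.
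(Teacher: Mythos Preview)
Your proof is the standard Chernoff--Hoeffding argument and is correct. The paper does not actually prove this lemma; it is stated in the Theoretical Preliminaries chapter as a known result cited from Alon--Spencer, so there is no paper proof to compare against. Your approach is exactly the textbook proof one would find in that reference.

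One small point: your final step combines the two one-sided tails by union bound, which yields $2\exp(-\lambda^2/(2c^2n))$ for the two-sided event $|X_n - X_0| \ge \lambda$. The paper's statement omits the leading factor of $2$; this is a common harmless sloppiness in how Azuma--Hoeffding is quoted, and for every application in the paper the constant is absorbed anyway, but strictly speaking your argument delivers the version with the $2$.
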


The Kim-Vu theorem is an important concentration result since it allows us to obtain strong concentration
when the polynomial of interest is not {\it smooth}. Specifically, for the purposes of our work, 
let $Y=Y(t_1,\ldots,t_m)$ be a positive polynomial of $m$ Boolean variables $[t_i]_{i=1..m}$ which are independent.
A common task in combinatorics is to show that $Y$ is concentrated around its expected value. 
In the following we state the necessary definitions and the main concentration result which we will use in our method. $Y$ is totally positive if all of its coefficients are non-negative variables.
$Y$ is homogeneous if all of its monomials have the same degree and we call this value the degree of the polynomial.
Given any multi-index $\alpha=(\alpha_1,\ldots,\alpha_m) \in \field{Z}^m_{+}$, define the partial derivative
${\partial^{\alpha}Y}= ( \frac{\partial}{\partial t_1} )^{\alpha_1} \ldots ( \frac{\partial}{\partial t_m} )^{\alpha_m} Y(t_1,\ldots,t_m)$
and denote by $\Abs{\alpha} = {\alpha_1}+\cdots{\alpha_m}$ the order of $\alpha$.
For any order $d \geq 0$, define ${\mathbb E}_d(Y)=\max_{\alpha:|\alpha|=d}{\mathbb E}(\partial^{\alpha}Y)$
and ${\mathbb E}_{\ge d}(Y) =\max_{d' \ge d}{\mathbb E}_{d'}(Y)$.

Now, we refer to the main theorem of Kim and Vu of \cite[$\S$1.2]{kim-vu} as phrased in 
Theorem 1.1 of \cite{vu} or as Theorem 1.36 of \cite{tao-vu}.

\begin{theorem}
\label{thrm:kim-vu}
There is a constant $c_k$ depending on $k$ such that the following holds.
Let $Y(t_1, \ldots, t_m)$ be a totally positive polynomial of degree $k$, where $t_i$ can have arbitrary
distribution on the interval $[0, 1]$. Assume that:
\beql{cond1}
\Mean{Y} \ge {\mathbb E}_{\ge 1}(Y) 
\eeq
Then for any $\lambda \ge $ 1:
\beql{res1}
\Prob{\Abs{Y-\Mean{Y}} \ge c_k \lambda^k (\Mean{Y} {\mathbb E}_{\ge 1}(Y))^{1/2}} \le e^{-\lambda + (k-1)\log m}.
\eeq
\end{theorem}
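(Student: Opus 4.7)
The plan is to prove the inequality by induction on the degree $k$, with the inductive step combining a Doob-martingale decomposition with a self-referential application of the theorem to the first-order partial derivatives. The base case $k=1$ is essentially Hoeffding--Bernstein: since $Y=\sum_{i=1}^m a_i t_i$ with $a_i\ge 0$ and each $t_i$ bounded in $[0,1]$, we have $\Mean{Y}=\sum_i a_i\Mean{t_i}$ and ${\mathbb E}_{\ge 1}(Y)=\max_i a_i$, so a standard tail bound gives a sub-Gaussian deviation at scale $\sqrt{\Mean{Y}\,{\mathbb E}_{\ge 1}(Y)}$, which is stronger than~\eqref{res1}.

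For the inductive step, assume the theorem for all totally positive polynomials of degree at most $k-1$. I would expose $t_1,\ldots,t_m$ one variable at a time and define the Doob martingale $X_j=\Mean{Y\mid t_1,\ldots,t_j}$, so $X_0=\Mean{Y}$ and $X_m=Y$. Because $t_j\in[0,1]$ and $Y$ is affine in each $t_j$ separately, the increment $|X_j-X_{j-1}|$ is dominated by the conditional expectation of $\partial_{t_j}Y$ given $t_1,\ldots,t_{j-1}$. Each such first-order partial derivative is itself a totally positive polynomial of degree $k-1$, with expectation at most ${\mathbb E}_{\ge 1}(Y)$ and with all of its own higher partial derivatives bounded by ${\mathbb E}_{\ge 2}(Y)\le{\mathbb E}_{\ge 1}(Y)$, so the hypothesis~\eqref{cond1} is inherited by every derivative.

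The main obstacle is that the martingale increments are not deterministically bounded by ${\mathbb E}_{\ge 1}(Y)$; they are random quantities that only concentrate around their means. I would resolve this by applying the inductive hypothesis to each $\partial^{\alpha} Y$ with $1\le|\alpha|\le k-1$, obtaining a ``good'' event $G$ on which every such derivative lies within $O\bigl(\lambda^{k-1}\sqrt{\Mean{\partial^{\alpha}Y}\,{\mathbb E}_{\ge 1}(\partial^{\alpha}Y)}\bigr)$ of its mean. A union bound over the at most $O(m^{k-1})$ relevant multi-indices costs the $e^{(k-1)\log m}$ factor in~\eqref{res1}. Conditional on $G$, the increments are bounded by a constant times ${\mathbb E}_{\ge 1}(Y)$, so Azuma--Hoeffding (Lemma~\ref{lem:RANazuma}) applied to the truncated martingale produces the desired $(\Mean{Y}\,{\mathbb E}_{\ge 1}(Y))^{1/2}$ deviation scale; the prefactor $\lambda^k$ arises from multiplying the single $\lambda$ used in Azuma with the $\lambda^{k-1}$ from the inductive hypothesis, which is then absorbed in the constant $c_k$. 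The most delicate bookkeeping is propagating the condition $\Mean{Y}\ge{\mathbb E}_{\ge 1}(Y)$ down one level of the recursion without losing a factor of $m$: this is what keeps the deviation scale proportional to $\sqrt{\Mean{Y}}$ at every depth rather than blowing up geometrically, and it is what forces the dependence of $c_k$ on $k$ to be super-exponential.
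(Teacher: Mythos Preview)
The paper does not prove this theorem: it is quoted as a black box from Kim and Vu's original paper (and the restatements in Vu's survey and in Tao--Vu), with the explicit attribution ``we refer to the main theorem of Kim and Vu of \cite[$\S$1.2]{kim-vu} as phrased in Theorem 1.1 of \cite{vu} or as Theorem 1.36 of \cite{tao-vu}.'' There is nothing in the paper for your proposal to be compared against.

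That said, your sketch is broadly faithful to how Kim and Vu actually argue: the proof in \cite{kim-vu} is by induction on the degree $k$, controlling the lower-degree partial derivatives via the inductive hypothesis and paying a union-bound factor of $m^{k-1}$ which shows up as the $(k-1)\log m$ term. One point worth flagging is that the genuine proof does not run a plain Azuma on a truncated martingale after conditioning on a good event; rather it uses a more refined ``divide-and-conquer'' on the polynomial itself, splitting $Y$ according to which variables appear and bounding each piece separately, which is what makes the constant $c_k$ come out with the specific (roughly $k^{O(k)}$) shape. Your Azuma-after-truncation step would need extra care to avoid circularity, since the good event $G$ is defined in terms of the very variables you are exposing, and the truncated process is not obviously still a martingale with the right increment bounds. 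This is fixable, but it is exactly the technical heart of the Kim--Vu argument, and your proposal glosses over it.
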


Typically, when a polynomial $Y$ is smooth, it is strongly concentrated. 
By smoothness one usually means a small Lipschitz coefficient or in other words, when one changes the value of one 
variable $t_j$, the value $Y$ changes no more than a constant. However, as stated in \cite{vu} this is 
restrictive in many cases. Thus one can demand ``average smoothness'' as defined in \cite{vu}
which is quantified via the expectation of partial derivatives of any order.

\section{Random Projections} 
\label{subsec:randomproj}

A random projection $x \to Rx$ from $\field{R}^d \to \field{R}^k$ approximately preserves all Euclidean distances.
One version of the Johnson-Lindenstrauss lemma \cite{lindenstrauss} is the following:

\begin{lemma}[Johnson Lindenstrauss]
Suppose $x_1,\ldots,x_n \in \field{R}^d$ and $\epsilon>0$ and take $k=C \epsilon^{-2} \log n$. Define the random matrix $R \in \field{R}^{k\times d}$ by taking all $R_{i,j} \sim N(0,1)$ (standard gaussian) and independent. Then, with probability bounded below by a constant the points $y_j = R x_j \in \field{R}^k$ satisfy
$$
(1-\epsilon) \Abs{x_i-x_j} \le \Abs{y_i - y_j} \le (1+\epsilon) \Abs{x_i - x_j}
$$
for $i, j=1,2,\ldots,n$ where $|\cdot|$ represents the Euclidean norm.
\label{thrm:jllemma}
\end{lemma}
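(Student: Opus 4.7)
The plan is to reduce the all-pairs statement to a single-vector concentration inequality, derive the latter from a $\chi^2$ tail bound, and then pay a union-bound cost of $\log \binom{n}{2}$, which is exactly what the choice $k = C\epsilon^{-2} \log n$ is designed to absorb. First, I would observe that by linearity $y_i - y_j = R(x_i - x_j)$, so setting $v = x_i - x_j$ it suffices to prove that for every fixed nonzero $v \in \mathbb{R}^d$,
\[
\Prob{\,\bigl|\,\Norm{Rv}^2 - k\Norm{v}^2\bigr| > \epsilon\, k\Norm{v}^2\,} \;\le\; 2\exp(-c\epsilon^2 k)
\]
for an absolute constant $c>0$; once this is established (with the understanding that the conclusion is the standard JL statement after the customary $1/\sqrt{k}$ rescaling of $R$), a union bound over the $\binom{n}{2}$ pairs finishes the proof.

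Next I would exploit rotational invariance of the standard Gaussian: because each row of $R$ consists of i.i.d.\ $N(0,1)$ entries, the random variable $\langle R_{i,\cdot}, v\rangle$ is $N(0,\Norm{v}^2)$, and different rows are independent. Therefore the coordinates of $Rv$ are i.i.d.\ $N(0,\Norm{v}^2)$, and
\[
\frac{\Norm{Rv}^2}{\Norm{v}^2} \;=\; \sum_{i=1}^k Z_i^2, \qquad Z_i \stackrel{\mathrm{iid}}{\sim} N(0,1),
\]
which has the $\chi^2_k$ distribution with mean $k$. The problem is thus reduced to a one-dimensional concentration statement about a sum of $k$ i.i.d.\ squared standard Gaussians.

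The third step is the $\chi^2$ tail bound itself, which I would obtain by a standard Chernoff / moment-generating-function argument. Using $\Mean{e^{\lambda Z^2}} = (1-2\lambda)^{-1/2}$ for $\lambda<1/2$ and Markov's inequality applied to $e^{\lambda \sum_i Z_i^2}$, one gets, after optimizing in $\lambda$ and a Taylor expansion of $-\tfrac12\log(1-2\lambda) - \lambda$ around $\lambda=0$, the two-sided bound
\[
\Prob{\,\Bigl|\tfrac{1}{k}\textstyle\sum_{i=1}^k Z_i^2 - 1\Bigr| > \epsilon\,} \;\le\; 2\exp\!\bigl(-c\epsilon^2 k\bigr)
\]
for $\epsilon \in (0,1/2)$ and some absolute $c>0$. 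Combining with Step~2 and taking a union bound over the $\binom{n}{2}$ pairs yields an overall failure probability at most $n^2 \exp(-c\epsilon^2 k)$, which is at most any prescribed constant once $k \ge C\epsilon^{-2}\log n$ with $C$ sufficiently large; the squared inequalities $(1-\epsilon)^2 \le \Norm{Rv}^2/(k\Norm{v}^2) \le (1+\epsilon)^2$ then imply the linear form in the statement (adjusting $\epsilon$ by a constant factor).

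The step I expect to require the most care is the $\chi^2$ concentration bound with the correct scaling in $\epsilon$: it is essential that the exponent comes out as $\epsilon^2 k$ rather than, say, $\epsilon k$, because this is what makes the dimension $k = \Theta(\epsilon^{-2} \log n)$ suffice after a union bound over $\binom{n}{2}$ pairs. The reduction via linearity, the rotational-invariance observation that collapses $Rv$ into i.i.d.\ Gaussians, and the final union bound are all essentially routine.
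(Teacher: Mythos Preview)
Your proof is correct and follows the standard argument for the Johnson--Lindenstrauss lemma: reduce to a single vector by linearity, use rotational invariance to recognize $\Norm{Rv}^2/\Norm{v}^2$ as $\chi^2_k$, apply a Chernoff-type tail bound to get $\exp(-c\epsilon^2 k)$ concentration, and union-bound over $\binom{n}{2}$ pairs. You also correctly flag the scaling issue (the statement as written omits the $1/\sqrt{k}$ normalization that makes $\Mean{\Norm{Rv}^2} = \Norm{v}^2$).

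However, the paper does not actually prove this lemma. It is stated in the theoretical preliminaries as a known result with a citation to the original Johnson--Lindenstrauss paper, and is then used as a black box later on. So there is no paper proof to compare against; your argument is simply the textbook one, and it is sound.
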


\section{Extremal Graph Theory}
\label{subsec:extremal} 

Hajnal and Szemer\'{e}di \cite{HajnalSzemeredi} proved in 1970 the following conjecture of Paul Erd\"{o}s:

\begin{theorem}[Hajnal-Szemer\'{e}di Theorem]
\label{lem:hajnal}
Every graph with $n$ vertices and maximum vertex degree at most $k$ is $k+1$ colorable with all 
color classes of size  $\lfloor \tfrac{n}{k+1} \rfloor$ or  $\lceil \tfrac{n}{k+1} \rceil$.
\end{theorem}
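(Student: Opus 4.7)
The plan is to combine a greedy starting coloring with an augmenting-path argument on the color classes, inducting on $n$. The easy half is existence of \emph{some} proper $(k+1)$-coloring: since $\Delta(G)\leq k$, a greedy algorithm uses at most $k+1$ colors. The whole difficulty is making the color classes \emph{balanced}. Set $q=\lfloor n/(k+1)\rfloor$ and $r=n-q(k+1)$, so an equitable coloring has $r$ classes of size $q+1$ and $k+1-r$ of size $q$. Call a proper $(k+1)$-coloring \emph{nearly equitable} if $k-1$ of its classes already have the correct size, one class $V_\ell$ has size exactly one too large, and one class $V_s$ has size exactly one too small. An easy inductive reduction (delete a vertex $v$ from a class of maximum size, apply the theorem for $n-1$, reinsert $v$ into a class of minimum size) reduces the problem to the following Key Lemma: \emph{every nearly equitable $(k+1)$-coloring can be transformed into an equitable one.}

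The key lemma is attacked via color-swaps along an auxiliary digraph. Build $H$ on the vertex set $\{V_1,\dots,V_{k+1}\}$, placing an arc $V_i\to V_j$ whenever some vertex $v\in V_i$ has no neighbor in $V_j$ (so $v$ can legally be moved from $V_i$ to $V_j$ keeping the coloring proper). A directed path $V_\ell=V_{i_0}\to V_{i_1}\to\cdots\to V_{i_t}=V_s$ in $H$ yields an equitable coloring by cascading moves: shift one vertex from $V_{i_{j-1}}$ to $V_{i_j}$ for $j=1,\dots,t$. This decreases $|V_\ell|$ by one, increases $|V_s|$ by one, and preserves every intermediate class size as well as properness, so we are done.

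The hard part -- and the genuine content of Hajnal--Szemer\'edi -- is ruling out the case that no such $V_\ell\rightsquigarrow V_s$ path exists in $H$. Let $A\subseteq\{V_1,\dots,V_{k+1}\}$ be the set of classes from which $V_s$ is reachable in $H$; assume for contradiction $V_\ell\notin A$. By definition of $A$, for every $V_j\notin A$ and every $V_i\in A\setminus\{V_s\}$ there is no arc $V_j\to V_i$, meaning \emph{every} vertex of $V_j$ has a neighbor in $V_i$. Double-counting the edges between $\bigcup_{V_j\notin A}V_j$ and $\bigcup_{V_i\in A\setminus\{V_s\}}V_i$ then gives a lower bound
\[
e(B,A\setminus\{V_s\})\ \geq\ (|A|-1)\sum_{V_j\notin A}|V_j|,
\]
where $B=\{V_1,\dots,V_{k+1}\}\setminus A$, while the maximum-degree constraint $\Delta(G)\leq k$ gives
\[
e(B,A\setminus\{V_s\})\ \leq\ k\sum_{V_i\in A\setminus\{V_s\}}|V_i|.
\]
Using the sizes $|V_i|\in\{q,q+1\}$ and the fact that $V_\ell\in B$ contributes a surplus while $V_s\in A$ has a deficit, this comparison is supposed to yield a numerical contradiction. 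I expect this counting step to be the main obstacle: a naive application of the two inequalities is \emph{not} strong enough for general $q$, and one is forced to iterate the construction -- enlarge $A$ by absorbing reachable classes, identify a ``Hall-type'' obstruction inside $B$, and re-run the count on a carefully chosen sub-family of classes to sharpen the slack.

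If I were writing this up today I would likely skip the direct Hajnal--Szemer\'edi counting in favor of the cleaner Kierstead--Kostochka discharging/recursive proof, which replaces the delicate edge count above with a more transparent induction on the number of classes of surplus size. Either route, however, reduces to the same conceptual skeleton: \emph{greedy coloring $\Rightarrow$ nearly equitable reduction $\Rightarrow$ augmenting path in $H$}, with the technical heart being the contradiction that no augmenting path can fail to exist under the degree bound $\Delta\leq k$.
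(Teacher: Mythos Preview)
The paper does not prove this theorem. It is stated in the ``Theoretical Preliminaries'' chapter as a classical result, attributed to Hajnal and Szemer\'{e}di (1970) via a citation, and then invoked later as a black box (notably in the triangle-counting chapter to partition triangles into groups of mutually independent indicator variables). There is no proof in the paper to compare your proposal against.

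Your sketch is a faithful outline of the standard approach to Hajnal--Szemer\'{e}di: the reduction to a nearly equitable coloring and the search for an augmenting path in the auxiliary digraph of legal color-moves are exactly the right skeleton, and you correctly flag that the na\"{i}ve edge count between the reachable and unreachable classes is not by itself enough---one needs the iterated refinement (or the Kierstead--Kostochka recursion) to close the argument. But none of this is needed for the thesis, which only \emph{uses} the theorem; if you are writing up this section you should simply cite the result as the paper does.
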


Ahlswede and Katona consider the following problem: which graph with a given number of vertices $n$ 
and a given number of edges $m$ maximizes the number of edges in its line graph $L(G)$? 
The problem is equivalent to maximizing the sum of squares of the degrees of the vertices 
under the constraint that their sum equals twice the number of the edges. 
The following theorem was given in \cite{ahlswede} and answers this question. 

\begin{lemma}[Ahlswede-Katona theorem]
The maximum value of the sum of the squares of all vertex degrees $\sum_{v \in V(G)} d(v)^2$ over 
the set of all graphs with $n$ vertices and $m$ edges 
occurs at one or both of two special types of graphs, the quasi-star graph or the quasi-complete graph.
\end{lemma}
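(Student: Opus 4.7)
The plan is to reduce the extremal problem to a canonical family of graphs via a shifting argument, and then optimize within that family. I will work with the identity
\[
\sum_{v\in V(G)} d(v)^2 \;=\; \sum_{(u,v)\in E(G)} \bigl( d(u)+d(v) \bigr),
\]
which tells us that to maximize the degree-squared sum we want edges to lie between vertices of large degree, suggesting a ``nested'' neighborhood structure.

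First I would prove the following shifting lemma. Suppose $G$ has vertices $u,v$ with $d(u)\ge d(v)$ and there exists a vertex $w\notin\{u,v\}$ such that $vw\in E(G)$ but $uw\notin E(G)$. Form $G'$ by deleting $vw$ and inserting $uw$; then $G'$ has the same vertex and edge counts, and a direct computation gives
\[
\sum_{x\in V(G')}d(x)^2 - \sum_{x\in V(G)}d(x)^2 \;=\; 2\bigl(d(u)-d(v)\bigr) + 2 \;\ge\; 2,
\]
so the operation strictly increases the objective. Iterating this move (which must terminate, since the objective is bounded and strictly monotone), any extremal graph $G^*$ has the property that whenever $d(u)\ge d(v)$, the closed-neighborhood-minus-$\{u,v\}$ of $v$ is contained in that of $u$. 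Equivalently, the neighborhoods of $G^*$ are totally ordered by inclusion, i.e.\ $G^*$ is a \emph{threshold graph}.

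Next I would use the standard encoding of a threshold graph on $n$ vertices as a binary creation sequence $s_2 s_3 \cdots s_n\in\{0,1\}^{n-1}$, where $s_i=1$ means vertex $i$ is inserted as a dominating vertex (adjacent to all earlier vertices) and $s_i=0$ means it is inserted as an isolated vertex. Each threshold graph on $n$ vertices with exactly $m$ edges corresponds to a sequence with a prescribed edge count, and the degree sequence—hence $\sum d(v)^2$—is an explicit function of the positions of the $1$'s. I would then optimize this function over the set of admissible sequences. The key observation is that swapping a $1$ that lies early in the sequence with a $0$ that lies later (or vice versa) can be analyzed in closed form, and the sum of squares turns out to be a convex-like function of the ``shape'' of the sequence, so it is maximized at one of the two extreme configurations: pushing all the $1$'s to the right (the quasi-star, a star-like graph with a few dominating vertices plus possibly one partial vertex) or pushing all the $1$'s to the left (the quasi-complete graph, a $K_k$ with one extra vertex carrying the surplus edges). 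Comparing the two closed-form expressions shows they coincide on a curve in the $(n,m)$ plane and elsewhere one strictly dominates the other; the claim follows.

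The main obstacle is the last step: the combinatorial optimization over binary creation sequences with a prescribed edge budget is not entirely routine, because the number of edges contributed by a $1$ in position $i$ depends on $i-1$ (so the edge constraint is not symmetric in the positions), while the contribution to $\sum d(v)^2$ depends on the joint pattern of $1$'s and $0$'s. The cleanest way I see to handle this is to prove the bimodal extremality by an exchange argument: show that if a sequence is neither left-packed nor right-packed, some local rearrangement (shifting a single $1$ toward the nearer extreme while compensating with a $0$ elsewhere to preserve $m$) weakly increases $\sum d(v)^2$; iterating collapses every optimizer to one of the two canonical shapes. Combined with the shifting lemma from the first step, this yields the theorem.
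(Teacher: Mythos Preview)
The paper does not prove this statement. It appears in the ``Theoretical Preliminaries'' chapter (Section on Extremal Graph Theory) as a cited result from Ahlswede and Katona's original paper, with a pointer to \'Abrego, Fern\'andez-Merchant, Neubauer and Watkins for further refinements. No proof or even sketch is given in the thesis itself; the lemma is invoked later only to justify that $\sum_v d(v)^2$ is maximized when edges are concentrated on few vertices (in the analysis of triple sampling for triangle counting).

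That said, your proposal is a reasonable outline of an actual proof. The shifting step is correct: your computation of the gain $2(d(u)-d(v))+2$ is right, the process terminates because the objective is integer-valued and bounded, and the fixed points of the shift are exactly the threshold graphs. The second step---optimizing over threshold graphs via the creation-sequence encoding---is where the genuine difficulty lies, as you acknowledge. The exchange argument you describe (pushing the $1$'s toward one end or the other while preserving the edge count) is essentially the right idea and is in the spirit of the original Ahlswede--Katona argument, but making it rigorous requires care: a single swap of a $1$ and a $0$ does not in general preserve $m$, so you need a compensated move, and verifying monotonicity of $\sum d(v)^2$ under such compensated moves is a genuine case analysis rather than an immediate convexity observation. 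Your sketch is honest about this gap; filling it in is the real content of the theorem.
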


For further progress on other questions related to the above optimization problem 
such as when does the optimum occur at both graphs, see the work of Abrego, Fern\'{a}ndez-Merchant, Neubauer and Watkins 
\cite{abrego}.

\section{Two useful lemmas}
\label{subsec:usefullemmas} 

Two more useful lemmas we will use in later chapters follow.

\begin{lemma}[Lemma 3.1, \cite{chunglu}]
Suppose that a sequence $\{a_t\}$ satisfies the recurrence 

$$ a_{t+1} = (1 - \frac{b_t}{t+t_1}) a_t + c_t$$

\noindent for $t \geq t_0$.  Furthermore suppose $\displaystyle\lim_{t \rightarrow +\infty} b_t = b >0$
and $\displaystyle\lim_{t \rightarrow +\infty} c_t = c$. Then $\displaystyle\lim_{t \rightarrow +\infty} \frac{a_t}{t}$ exists
and 

$$ \displaystyle\lim_{t \rightarrow +\infty} \frac{a_t}{t} = \frac{c}{1+b}. $$ 

\label{lem:RANchunglubook} 
\end{lemma}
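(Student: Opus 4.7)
\proofstart
The plan is to substitute the conjectured leading-order behaviour into the recurrence and then bound the remainder. Set $\alpha = c/(1+b)$ and write $a_t = \alpha t + r_t$. Plugging this into the recurrence and using $c = \alpha(1+b)$, a routine rearrangement gives
\[
r_{t+1} = \left(1 - \frac{b_t}{t+t_1}\right) r_t + \varepsilon_t,
\qquad
\varepsilon_t := (c_t - c) + \alpha\!\left(b - \frac{b_t\, t}{t+t_1}\right),
\]
and the hypotheses $b_t \to b$, $c_t \to c$ yield $\varepsilon_t \to 0$. It therefore suffices to prove that any sequence satisfying such a recurrence with vanishing driving term obeys $r_t = o(t)$.

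Next I would unfold this linear recurrence explicitly. Writing
\[
f(t) = \prod_{s=t_0}^{t-1}\!\left(1 - \frac{b_s}{s+t_1}\right),
\qquad
g(s,t) = \prod_{r=s+1}^{t-1}\!\left(1 - \frac{b_r}{r+t_1}\right),
\]
one has
\[
r_t = f(t)\, r_{t_0} + \sum_{s=t_0}^{t-1} \varepsilon_s\, g(s,t).
\]
Since $b_s \to b > 0$, for every $\eta>0$ we have $b_s \in [b-\eta, b+\eta]$ for all sufficiently large $s$. Applying the elementary bound $\log(1-x) = -x + O(x^2)$ termwise and summing, together with $\sum_{r=s+1}^{t-1}\tfrac{1}{r+t_1} = \log(t/s) + O(1/s)$, yields two-sided estimates of the form
\[
C_1 (s/t)^{b+\eta} \leq g(s,t) \leq C_2 (s/t)^{b-\eta}
\]
uniformly in $s_0 \leq s < t$, for constants $C_1, C_2$ depending on $\eta$ but not on $s,t$. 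In particular $f(t) = O(t^{-b+\eta})$, so the initial-condition term contributes $O(t^{-b+\eta}) = o(t)$.

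For the main sum, I would fix an arbitrary $\delta > 0$ and choose $T_\delta$ so large that $|\varepsilon_s| < \delta$ for all $s \geq T_\delta$. Split $\sum_{s=t_0}^{t-1} \varepsilon_s g(s,t)$ at $T_\delta$. The initial block is bounded by $M \cdot T_\delta \cdot C_2 (T_\delta/t)^{b-\eta}$ with $M = \max_{s < T_\delta} |\varepsilon_s|$, which is $O(t^{-b+\eta}) = o(t)$. The tail block is bounded by
\[
\delta \sum_{s=T_\delta}^{t-1} C_2 (s/t)^{b-\eta}
\leq C_2 \delta \cdot t^{-(b-\eta)} \sum_{s=1}^{t} s^{b-\eta}
\sim \frac{C_2\, \delta}{b-\eta+1}\, t.
\]
Dividing by $t$, we obtain $\limsup_{t\to\infty} |r_t|/t \leq C_2\delta/(b-\eta+1)$. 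Since $\delta$ and $\eta$ were arbitrary, $r_t/t \to 0$, hence $a_t/t \to \alpha = c/(1+b)$.

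The main technical obstacle is the uniform estimate $g(s,t) \asymp (s/t)^b$; one has to be careful that the $O(1/s)$ and $O(x^2)$ error terms in the logarithm-expansion do not accumulate, which is why introducing the slack parameter $\eta$ (and sending it to $0$ at the end) is convenient. Everything else is a matter of splitting the sum at an appropriate threshold and using that $\varepsilon_s \to 0$.
\proofend
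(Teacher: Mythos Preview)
The paper does not prove this lemma at all; it is stated in Section~\ref{subsec:usefullemmas} as a quoted result from Chung and Lu's book and used as a black box in the analysis of the RAN degree sequence. So there is no ``paper's own proof'' to compare against.

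Your argument is correct and is the standard way to handle such linear nonhomogeneous recurrences: subtract off the conjectured leading term, unfold the resulting recurrence via the discrete variation-of-constants formula, and control the product $g(s,t)$ by a power of $s/t$. Two small points worth tightening when you write it up. First, you implicitly need $1 - b_s/(s+t_1) \in (0,1)$ for the log-expansion step; this holds only for $s$ large enough, so you should enlarge $t_0$ at the outset (which is harmless since the limit is insensitive to finitely many initial terms). Second, the constant $C_2$ in your bound $g(s,t) \leq C_2(s/t)^{b-\eta}$ depends on $\eta$, so the order of limits at the end should be: fix any $\eta \in (0,b)$, obtain $\limsup |r_t|/t \leq C_2(\eta)\,\delta/(b-\eta+1)$, and then let $\delta \to 0$. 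You never actually need to send $\eta \to 0$.
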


%\noindent In Section~\ref{sec:RANdegreeseq} we also use the Azuma-Hoeffding inequality \cite{azuma,hoeffding}.

Graphs can be viewed as electrical networks. 
Given two vertices $s,t \in V(G)$, we can ensure an electrical current from $s$ to $t$ 
of value 1. The potential/voltage difference between $s$ and $t$ is defined to be the 
{\it effective resistance} $R(s,t)$. 
For further details the interested reader should read \cite{bollobas}.
The following theorem is due to Foster \cite{foster49}. 

\begin{theorem}[Foster's theorem \cite{foster49}] 
\label{thrm:fosterstheorem}
Let $G$ be a connected graph of order $n$. Then
$$ \sum_{(u,v) \in E(G)} R(u,v) = n-1.$$
\end{theorem}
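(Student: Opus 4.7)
The plan is to prove Foster's identity via the Moore--Penrose pseudoinverse of the graph Laplacian, reducing the sum to a single trace. Let $L = D - A$ denote the Laplacian of $G$, where $D$ is the diagonal degree matrix and $A$ the adjacency matrix; since $G$ is connected, $L$ has rank $n-1$ with kernel spanned by the all-ones vector $\mathbf{1}$, and write $L^{+}$ for its Moore--Penrose pseudoinverse. Writing $e_{u}$ for the standard basis vector at vertex $u$, the first step is to establish the closed-form identity
\[
R(u,v) \;=\; (e_u - e_v)^{T} L^{+} (e_u - e_v).
\]
This comes from combining Kirchhoff's current law and Ohm's law: injecting one unit of current at $u$ and extracting it at $v$ produces a potential vector $\phi$ satisfying $L\phi = e_u - e_v$, and $R(u,v)$ is by definition the potential difference $(e_u - e_v)^{T}\phi$. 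Since $e_u - e_v$ lies in $\mathbf{1}^{\perp}$, the canonical solution is $\phi = L^{+}(e_u - e_v)$, yielding the formula.

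Next I would apply the trace identity $x^{T}Mx = \mathrm{tr}(Mxx^{T})$ to each summand and interchange the sum with the trace:
\[
\sum_{(u,v)\in E(G)} R(u,v) \;=\; \mathrm{tr}\!\left(L^{+} \sum_{(u,v)\in E(G)} (e_u-e_v)(e_u-e_v)^{T}\right).
\]
The pivotal observation is that the inner sum equals the Laplacian itself,
\[
\sum_{(u,v)\in E(G)} (e_u-e_v)(e_u-e_v)^{T} \;=\; L,
\]
since each edge $(u,v)$ contributes $+1$ to the diagonal entries at positions $(u,u)$ and $(v,v)$ and $-1$ to the off-diagonal entries at $(u,v)$ and $(v,u)$, exactly reproducing $L = D - A$.

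Finally I would compute $\mathrm{tr}(L^{+}L)$. Because $G$ is connected, $L^{+}L$ is the orthogonal projector onto the range of $L$, which is $\mathbf{1}^{\perp} \subset \mathbb{R}^n$; this subspace has dimension $n-1$, so $\mathrm{tr}(L^{+}L) = n-1$, completing the argument. The only real obstacle I foresee is the careful justification of the pseudoinverse representation of $R(u,v)$: since $L$ is singular, one must argue that the ambiguity in $\phi$ (addition of a multiple of $\mathbf{1}$) does not affect the potential difference $(e_u - e_v)^{T}\phi$, which holds precisely because $(e_u - e_v) \perp \mathbf{1}$. An alternative route of comparable length uses the uniform spanning tree identity $\Pr[e \in T] = R(e)$ for $e \in E(G)$, so that $\sum_{e \in E(G)} R(e) = \mathbb{E}|T| = n-1$; this sidesteps linear algebra entirely but requires the transfer-current theorem as a black box.
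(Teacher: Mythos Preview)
Your proof is correct and is in fact the standard modern argument for Foster's identity via the Laplacian pseudoinverse; the alternative spanning-tree route you mention is also valid. However, the paper does not actually prove this theorem: it is stated in the preliminaries chapter as a cited background result from Foster (1949), with no proof given. So there is no ``paper's own proof'' to compare against here; you have simply supplied a complete proof where the thesis chose to quote the result.
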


\section{Semidefinite Bounds}
\label{sec:semidefinitebounds}

Semidefinite programs are generalizations of linear 
programs which can be solved in polynomial time using 
interior point methods. Semidefinite programming 
uses symmetric, positive semidefinite matrices. 

\begin{definition}
A matrix $A \in \field{R}^{n \times n}$ is positive semidefinite 
if and only if for all $x \in \field{R}^n$, $x^T A x \geq 0$. 
\end{definition} 

We define the scalar product $\langle A,B \rangle$ of matrices $A,B\field{R}^{n \times n}$
as $\langle A,B \rangle = Trace(B^TA)$ which is an inner product 
on the vector space of $n\times n$ matrices.
A semidefinite program is defined by the symmetric $n\times n$ matrices $C,A_1,\ldots,A_m$ 
and vector $b \in \field{R}^m$ as follows:

\begin{equation}
\label{equation:SDPgw}
\framebox{
\begin{minipage}{0.85\linewidth}
%% \smallskip\par
\begin{align}
\nonumber	
\mbox{\bf max}  \qquad & \langle C,X \rangle  \\
\nonumber
\nonumber
\mbox{{subject to }}\, & \langle A_i,X \rangle, \,\mbox{for all}\,  i \in \{1,..,m\} \\
\nonumber
\mbox{{and }}\, & X \succeq 0,\, X \,\mbox{symmetric.} \\
\nonumber
\end{align}
\end{minipage}}
%%\smallskip\par}
\end{equation}

Goemans and Williamson  significantly advanced the field of approximation
algorithms by introducing a randomized rounding  for the 
MAX-CUT problem\cite{goemans}.
The MAX-CUT problem takes as input a graph $G(V,E,w), w:E \rightarrow \field{R}^+$ 
and asks for a non-empty set $S$ such that weight
$\sum_{(i,j) \in E(G)} w_{ij}$   of the cut $(S,V\backslash S)$ 
is maximized. This problem is \NPhard. 
The first step of the Goemans-Williamson algorithm is a semidefinite
relaxation of the following quadratic integer program which is equivalent to MAX-CUT.

\begin{align*} 
p^* &= \max \frac{1}{2} \sum_{(i,j) \in E} w_{ij} (1-x_ix_j) 
     &\text{subject to~} x\in \{-1,+1\}^n.
\end{align*} 

Notice that $X=xx^T$ is a symmetric, positive semidefinite matrix with rank 1
and $X_{ii}=1$ for $i\in [n]$. 
The semidefinite relaxation relaxes the rank 1 condition  as follows

\begin{align*} 
s^* &= \max \frac{1}{2} \sum_{(i,j) \in E} w_{ij} (1-X_{ij}) 
     &\text{subject to~}  X \succeq 0,\, X \,\mbox{symmetric.}, X_{ii}=1 \,\mbox{for all i}.
\end{align*} 

The second step of the Goemans-Williamson algorithm consists of a 
randomized rounding procedure. Specifically, let $X$ be the optimal solution
of the semidefinite relaxation. As $X$ is positive semidefinite $X=U^TU$ 
where $U \in \field{R}^{d \times n}$,  $d \leq n$. Let $u_j$ be the $j$-th
column of $U$. As $X_{ii}=1$ for all $i$, $|| u_i ||_2=1$. 
Goemans and Williamson proposed generating a random unit vector $r \in \field{R}^d$ 
and letting $S$ be the set of vertices which correspond to columns
of $U$ such that $u_j^T r > 0$. They proved that this algorithm
provides a $\beta$-approximation where $\beta > 0.87856$. 
Their technique has been extended to the MAX-k-CUT problem \cite{friezejerrum}.

\section{Speeding up Dynamic Programming} 
\label{subsec:speed}

Dynamic programming is a powerful problem solving technique introduced 
by Bellman \cite{862270} with numerous applications in biology, e.g., \cite{picard,360861,9303}, 
in control theory, e.g., \cite{517430}, in operations research and many other fields. 
Due to its importance, a lot of research has focused on speeding up 
basic dynamic programming implementations. A successful example of 
speeding up a naive dynamic programming implementation is the computation 
of optimal binary search trees. Gilbert and Moore solved the problem 
efficiently using dynamic programming \cite{bb12353}. Their algorithm 
runs in $O(n^3)$  time and for several years this running time was considered to be tight. 
In 1971 Knuth \cite{DBLP:journals/acta/Knuth71} showed that the same computation can be carried out in $O(n^2)$ time. 
This remarkable result was generalized by Frances Yao in \cite{fyao,804691}.
Specifically, Yao showed that this dynamic programming speedup technique works for a large class
of recurrences. She considered the recurrence $c(i,i) =0$, $c(i,j) = \min_{i<k\leq j}{ (  c(i,k-1) + c(k,j)) }  + w(i,j)$
for $i <j$ where the weight function $w$ satisfies the quadrangle inequality (see Section~\ref{sec:Trimmermonge}) and proved
that the solution of this recurrence can be found in $O(n^2)$ time. 
Eppstein, Galil and Giancarlo have considered similar recurrences where they showed that naive $O(n^2)$ implementations
of dynamic programming can run in $O(n\log{n})$ time \cite{Eppstein88speedingup}. 
Larmore and Schieber \cite{larmore} further improved the running time, giving a linear time algorithm when the weight function 
is concave. Klawe and Kleitman give in \cite{kleitman} an algorithm which runs in $O(n\alpha(n))$ time when the weight
function is convex, where $\alpha(\cdot)$ is the inverse Ackermann function. 
Furthermore, Eppstein, Galil, Giancarlo and Italiano have also explored the effect of sparsity \cite{146650,146656}, 
another key concept in speeding up dynamic programming.  
Aggarwal, Klawe, Moran, Shor, Wilber developed an algorithm, widely known as the SMAWK algorithm,
\cite{10546} which can compute in $O(n)$ time the row maxima of a totally monotone $n \times n$ matrix. 
The connection between the Knuth-Yao technique and the SMAWK algorithm was made 
clear in \cite{1109562}, by showing that the Knuth-Yao technique is a special case of the use of totally monotone matrices. 
The basic properties which allow these speedups are the convexity or concavity of the weight function.
Such properties date back to Monge \cite{monge} and are well studied in the literature, see for example \cite{240860}. 

Close to our work lies the work on histogram construction, an important problem for database applications. 
Jagadish et al. \cite{671191} originally provided a simple dynamic programming 
algorithm which runs in $O(kn^2)$ time, where $k$ is the number of buckets and $n$ the input size
and outputs the best V-optimal histogram. 
Guha, Koudas and Shim \cite{1132873} propose a $(1+\epsilon)$  approximation algorithm which runs in linear time. %543637 was taken out
Their algorithms exploits monotonicity properties of the key quantities involved in the problem.
Our $(1+\epsilon)$ approximation algorithm in Section~\ref{sec:Trimmermonge} uses a
decomposition technique similar to theirs.

\section{Reporting Points in a Halfspace} 
\label{subsec:compgeom} 

Let $S$ be a set of points in $\field{R}^d$ and let $k$ denote the size of the output, i.e., the number of points to be reported. 
Consider the problem of preprocessing $S$ such that for any 
halfspace query $\gamma$ we can report efficiently whether the set $S \cap \gamma$ is empty or not.  
This problem is a well studied special case of the more general range searching problem. For
an extensive survey see the work by Agarwal and Erickson \cite{Agarwal99geometricrange}. 
For $d=2$, the problem has been solved optimally by Chazelle, Guibas and Lee \cite{chazelle_power_1985}.  
For $d=3$, Chazelle and Preparata in \cite{323248} gave a solution with nearly linear space and $O(\log{n}+k)$
query time, while Aggarwal, Hansen and Leighton \cite{100260} gave a solution with a more expensive preprocessing 
but $O(n\log{n})$ space. When the number of dimensions is greater than 4, i.e., $d \geq 4$, 
Clarkson and Shor \cite{82363} gave an algorithm that requires $O(n^{\Floor{d/2} +\epsilon})$ preprocessing time and space,
where $\epsilon$ is an arbitrarily small positive constant, but can subsequently answer queries in $O(\log{n}+k)$ time.  
Matou\v{s}ek in \cite{DBLP:conf/focs/Matousek91} provides improved results on the problem, which are used by Agarwal, Eppstein, Matou\v{s}ek  \cite{10.1109/SFCS.1992.267816}
in order to create dynamic data structures that trade off insertion and query times. 
We refer to Theorem 2.1(iii) of their paper \cite{10.1109/SFCS.1992.267816}:

\begin{theorem}[Agarwal, Eppstein, Matou\v{s}ek \cite{10.1109/SFCS.1992.267816}]
Given a set $S$ of $n$ points in $\field{R}^d$ where $d \geq 3$ 
and a parameter $m$ between $n$ and $n^{\Floor{\frac{d}{2}}}$
the halfspace range reporting  problem can be solved with the following performance:
$O(\frac{n}{m^{1/\Floor{d/2}}} \log{n})$ query time,
$O(m^{1+\epsilon})$ space and preprocessing time, 
$O(m^{1+\epsilon}/n)$ amortized update time.
\label{thrm:matousek}
\end{theorem}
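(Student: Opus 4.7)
The plan is to combine point--hyperplane duality with Matou\v{s}ek's simplicial partition theorem (or equivalently, shallow cuttings) to obtain a static structure that realizes the full space/query trade-off, and then dynamize it via Bentley--Saxe. The entry point is the observation that ``report $S \cap \gamma$'' for a halfspace $\gamma$ is, after standard duality, equivalent to reporting those dual hyperplanes that pass above (or below) a query point. The convexity of the region being queried is the reason the exponent $\lfloor d/2 \rfloor$ from the upper bound theorem on polytopes shows up, and shallow cuttings are precisely what let us exploit it.

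First I would fix a parameter $r$ with $1 \ll r \ll n$ and apply Matou\v{s}ek's partition theorem to split $S$ into $O(r)$ subsets $S_1,\ldots,S_{O(r)}$, each of size $\Theta(n/r)$ and each enclosed in a simplex $\Delta_i$, so that any hyperplane crosses at most $O(r^{1-1/d})$ of the $\Delta_i$. For a halfspace query $\gamma$ with bounding hyperplane $h$, I classify each simplex as (i) disjoint from $\gamma$ (ignore), (ii) fully contained in $\gamma$ (report all points, at cost $O(|S_i|)$ per reported point), or (iii) crossed by $h$ (recurse). Iterating this partition produces a tree whose leaves hold $O(1)$ points and whose query cost is controlled by the crossing bound. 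To get the claimed trade-off, I would vary $r$ at the top level of the recursion: using cuttings of size $m$ produces a shallow structure of size $O(m^{1+\varepsilon})$ (the $\varepsilon$ coming from the slight overhead of Chazelle-style cuttings), while each query visits only $O((n/m^{1/\lfloor d/2 \rfloor})\log n)$ cells, giving the stated output-sensitive bound $O((n/m^{1/\lfloor d/2 \rfloor})\log n + k)$.

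The second ingredient is the upper-bound-theorem argument: shallow cuttings of the arrangement of dual hyperplanes have complexity $O(m^{\lfloor d/2 \rfloor})$ at level $n/m^{1/\lfloor d/2 \rfloor}$, which is why the construction uses $m^{1+\varepsilon}$ space rather than $n^{\lfloor d/2 \rfloor}$, and why no better exponent is available. A query descends through the shallow cutting, and inside each conflict list of size $O(n/m^{1/\lfloor d/2 \rfloor})$ we recurse on a smaller instance whose structure has already been built. Plugging this into a standard master-style recurrence yields the query time.

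Finally, to obtain the amortized update time $O(m^{1+\varepsilon}/n)$, I would apply the Bentley--Saxe logarithmic-method dynamization to the static structure, grouping points into geometrically increasing batches and rebuilding levels on demand. The amortized cost per update is the build cost of the structure divided by the number of points it holds, i.e.\ $O(m^{1+\varepsilon}/n)$, and a standard argument shows that queries over the logarithmically many substructures incur only an extra $O(\log n)$ factor that can be absorbed into the $\log n$ already present. The main technical obstacle is calibrating the recursion depth and the cutting size so that the exponents $\lfloor d/2 \rfloor$ and the $\varepsilon$ slack line up, and ensuring that the dynamization does not degrade the query time beyond the stated bound; this is exactly where the $m^{1+\varepsilon}$ (rather than $m\, \mathrm{polylog}$) arises, and honest book-keeping here is the part I expect to be the most delicate.
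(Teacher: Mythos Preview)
The paper does not prove this theorem. It is stated in the ``Theoretical Preliminaries'' chapter as a cited result from Agarwal, Eppstein, and Matou\v{s}ek (the paper explicitly says ``We refer to Theorem~2.1(iii) of their paper''), and is used purely as a black box: the thesis only needs the $d=4$ instantiation (Corollary~\ref{cor:matousek}) as a subroutine inside the additive-approximation algorithm for the aCGH recurrence in Chapter~\ref{acghchapter}. There is therefore no ``paper's own proof'' to compare your proposal against.

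For what it is worth, your sketch is a faithful high-level account of how the original Agarwal--Eppstein--Matou\v{s}ek result is obtained: Matou\v{s}ek's partition theorem / shallow cuttings to get the $\lfloor d/2\rfloor$ exponent, a recursive partition tree for the space/query trade-off, and Bentley--Saxe for dynamization. But none of that is reproduced or needed in this thesis; the theorem is simply quoted.
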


Substituting for $d=4$, $m=n^{\tfrac{4}{3}}$ we obtain the following corollary, which will be used as a subroutine in our proposed method:

\begin{corollary}
Given a set $S$ of $n$ points in $\field{R}^4$ the halfspace range reporting problem can be solved with 
$O(n^{\tfrac{1}{3}}\log{n})$ query time, $O(n^{\tfrac{4}{3}+\delta})$ space and preprocessing time, 
and $O(n^{\tfrac{1}{3}+\delta})$  update time, where $\delta$ is an arbitrarily small positive constant. 
\label{cor:matousek}
\end{corollary}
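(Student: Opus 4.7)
The corollary is a direct specialization of Theorem~\ref{thrm:matousek}, so the plan is simply to select the parameters $d$ and $m$ that produce the claimed bounds and then verify each of the three quantities (query time, space/preprocessing, update time) by algebraic substitution. No new geometric construction is needed; the entire content of the argument lies in choosing $m$ to balance query cost against storage cost in $\field{R}^4$.

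First I would set $d=4$, so that $\lfloor d/2 \rfloor = 2$, and instantiate the trade-off parameter at $m = n^{4/3}$. This value lies in the admissible range of Theorem~\ref{thrm:matousek}, since for every $n \geq 1$ we have $n \leq n^{4/3} \leq n^{2} = n^{\lfloor d/2\rfloor}$, so the hypotheses of the theorem apply. Next I would substitute into the query time bound to obtain
\[
O\!\left(\frac{n}{m^{1/\lfloor d/2\rfloor}}\log n\right) \;=\; O\!\left(\frac{n}{n^{2/3}}\log n\right) \;=\; O(n^{1/3}\log n),
\]
matching the stated query time exactly.

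For the space and preprocessing bound, Theorem~\ref{thrm:matousek} gives $O(m^{1+\epsilon}) = O(n^{(4/3)(1+\epsilon)})$. Writing $\delta = 4\epsilon/3$ and noting that $\epsilon$ may be chosen as an arbitrarily small positive constant, this becomes $O(n^{4/3+\delta})$ for an arbitrarily small $\delta>0$, as required. The amortized update time $O(m^{1+\epsilon}/n)$ becomes $O(n^{(4/3)(1+\epsilon)-1}) = O(n^{1/3 + \delta})$ with the same choice of $\delta$, which completes the verification of all three bounds.

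There is no genuine obstacle here: the only ``choice'' in the argument is picking $m = n^{4/3}$, and this is the natural balance point because it equates the exponents appearing in query time and update time (both becoming $1/3$ up to the $\log$ and the $\delta$ slack). The proof is therefore a one-line substitution, and the purpose of stating it as a separate corollary is just to record the particular regime of parameters that will be invoked later as a subroutine.
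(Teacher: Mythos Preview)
Your proposal is correct and matches the paper's approach exactly: the paper simply states that the corollary follows by substituting $d=4$ and $m=n^{4/3}$ into Theorem~\ref{thrm:matousek}, and you have carried out precisely that substitution with the details filled in.
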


\section{Monge Functions and Dynamic Programming}
\label{sec:Trimmermonge}

Here, we refer to one of the results in \cite{larmore} which we use in Section~\ref{sec:Trimmermonge} as a subroutine for our
proposed method. A function $w$ defined on pairs of integer indices is Monge (concave) if for any 4-tuple of indices $i_1 < i_2 < i_3 < i_4$,
$w(i_1,i_4)+w(i_2,i_3) \geq w(i_1, i_3) + w(i_2, i_4)$. Furthermore,  we assume that $f$ is a function such that the values 
$f(a_j)$ for all $j$ are easily evaluated. The following results holds:

\begin{theorem}[\cite{larmore}]
Consider the one dimensional recurrence $a_i =\min_{j < i} \{ f(a_j) + w(j,i)\}$ for $i=1,\ldots,n$, where the basis
$a_0$ is given. There exists an algorithm which solves the recurrence {\it online} in $O(n)$ time\footnote{Thus, obtaining $O(n)$ speedup compared to the straight-forward dynamic programming algorithm which runs in $O(n^2)$ units of time.}.
\label{thm:larmore}
\end{theorem}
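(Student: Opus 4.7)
The plan is to exploit the concave Monge property to maintain, online, a compact data structure of ``candidate predecessors'' together with their intervals of dominance. The key structural observation is this: for any two candidate indices $j_1 < j_2$, the concave Monge inequality $w(j_1,i) + w(j_2,i') \geq w(j_1,i') + w(j_2,i)$ for $i < i'$ rearranges to the statement that the advantage $D(i) := \bigl(f(a_{j_1}) + w(j_1,i)\bigr) - \bigl(f(a_{j_2}) + w(j_2,i)\bigr)$ is weakly decreasing in $i$. Consequently, once $j_2$ (the larger index) becomes at least as good as $j_1$ at some position $i^{*}$, it remains at least as good for every $i \geq i^{*}$. This yields the crucial monotonicity of the optimal predecessor $j^{*}(i) = \arg\min_{j<i}\{f(a_j)+w(j,i)\}$ as a function of $i$.

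I would then maintain a deque $\mathcal{D}$ storing candidates $j_1 < j_2 < \cdots < j_r$ together with crossing values $c_1 < c_2 < \cdots < c_{r-1}$, under the invariant that $j_\ell$ is the minimizer among $\mathcal{D}$ precisely on the interval $(c_{\ell-1}, c_\ell]$, with $c_0 = -\infty$ and $c_r = +\infty$. Processing the online step for index $i$ consists of three actions: first, advance the front of $\mathcal{D}$ by popping while $c_1 < i$, so that $j_1$ is the current optimal predecessor; second, compute $a_i = f(a_{j_1}) + w(j_1,i)$; third, insert $i$ at the back of $\mathcal{D}$. The insertion repeatedly inspects $j_r$ and computes the crossing point $c$ at which $i$ becomes at least as good as $j_r$; by monotonicity this crossing is well-defined and unique. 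If $c \leq c_{r-1}$, then $j_r$ is dominated everywhere to the right of where it became optimal, so we pop $j_r$ and recurse on the new back; otherwise we set $c_r := c$ and stop.

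Correctness follows from the structural lemma above: after insertion the invariant holds, and a future query at position $i' > i$ will retrieve the true row-minimum restricted to the candidates currently stored, which by monotonicity of $j^*$ is the global minimum over all $j < i'$. For the complexity, each index is pushed onto $\mathcal{D}$ exactly once and popped at most once, so the total number of deque operations is $O(n)$. The remaining cost is that of evaluating the crossings. A naive binary search per crossing gives only $O(n \log n)$; to reach the claimed $O(n)$ bound one charges the work of each crossing search to the tokens deposited at the deque entries it examines, and uses the fact that each successful pop destroys an entry that will never be revisited, so the aggregate search work telescopes to $O(n)$.

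The main obstacle will be precisely this amortized accounting of the crossing-point computations, since the online constraint forbids the offline preprocessing tricks (SMAWK, totally monotone row minima) that give clean linear bounds in the batched setting. I would handle this by carefully bounding the search window for the crossing between the new candidate $i$ and the tail $j_r$ by the current interval $(c_{r-1}, n]$ and showing that each comparison made inside a binary search corresponds to either a pop (which pays for itself) or to the final placement of $c_r$ (which happens only once per inserted element), so the total work is linear, as asserted by Larmore and Schieber.
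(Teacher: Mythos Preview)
The paper does not prove this theorem. It appears in the preliminaries chapter (Section~2.8) as a result quoted from Larmore and Schieber and is used later only as a black-box subroutine; no proof or proof sketch is given in the thesis. So there is nothing in the paper to compare your argument against.

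On the merits of your sketch: the structural lemma (that $D(i)$ is monotone in $i$, hence each pair of candidates crosses at most once) and the deque-of-intervals invariant are correct and are indeed the standard skeleton for concave Monge recurrences. The gap is in your last paragraph. You correctly isolate the obstacle --- locating the crossing point between the new candidate $i$ and the surviving tail $j_r$ --- but the amortization you propose does not work. A binary search over $(c_{r-1},n]$ performs $\Theta(\log n)$ comparisons, and only one of them is ``the final placement of $c_r$''; the remaining comparisons are not charged to any pop, because by the time you start the binary search you have already finished popping. Your accounting pays for the pops (each element is pushed and popped once), not for the internal steps of the crossing search. As written, your argument yields $O(n\log n)$, not $O(n)$. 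Getting down to linear time in the online concave setting requires a genuinely different mechanism for handling the crossover (this is precisely the contribution of Larmore--Schieber over the earlier $O(n\log n)$ methods of Eppstein--Galil--Giancarlo); you have identified the right bottleneck but not resolved it.
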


\newpage 
\vspace*{\fill}
\begingroup
\centering

{\Huge  I {\em Graphs and Networks} }

\endgroup
\vspace*{\fill}

\clearpage
%% Chapter 3: Rainbow connectivity
\chapter{Rainbow Connectivity of Sparse Random Graphs}
\label{rainbowchapter}
\lhead{\emph{Rainbow Connectivity of Sparse Random Graphs}} 
\section{Introduction}
\label{sec:RainbowIntro}

Connectivity is a fundamental graph theoretic property. Recently, the  
concept of {\em rainbow connectivity} was introduced by Chartrand et al. in \cite{chartrand2008rainbow}. 
An edge colored graph $G$ is rainbow edge connected if any two vertices are connected by a 
path whose edges have distinct colors. The rainbow connectivity $rc(G)$ of a connected graph 
$G$ is the smallest number of colors that are needed in order to make $G$ rainbow edge connected. 
Notice, that by definition a rainbow edge connected graph is also connected and furthermore
any connected graph has a trivial edge coloring that makes it rainbow edge connected, since 
one may color the edges of a given spanning tree with distinct colors. 
Other basic facts established in \cite{chartrand2008rainbow} are that $rc(G)=1$ if and only if $G$ is a 
clique and $rc(G)=|V(G)|-1$ if and only if $G$ is a tree. 
Besides its theoretical interest, rainbow connectivity is also of interest in applied settings, such 
as securing sensitive information \cite{li2012rainbow}, transfer and networking \cite{chakraborty}.

The concept of  rainbow connectivity has attracted the interest of various researchers.
Chartrand et al. \cite{chartrand2008rainbow} determine the rainbow connectivity of several special classes
of graphs, including multipartite graphs. Caro et al. \cite{caro} prove that for a connected 
graph $G$ with $n$ vertices and minimum degree $\delta$, the rainbow connectivity satisfies
$rc(G)\leq \frac{\log{\delta}}{\delta}n(1+f(\delta))$, where $f(\delta)$ tends to zero 
as $\delta$ increases. The following simpler bound was also proved in \cite{caro},
$rc(G) \leq n \frac{4\log{n}+3}{\delta}$. 
Krivelevich and Yuster \cite{krivelevichyuster} removed the logarithmic factor
from the Caro et al. \cite{caro} upper bound. Specifically they proved that
$rc(G) \leq \frac{20n}{\delta}$. 
Due to a construction of a graph
with minimum degree $\delta$ and diameter $\frac{3n}{\delta+1}-\frac{\delta+7}{\delta+1}$ by 
Caro et al. \cite{caro}, the best upper bound one can hope for is $rc(G) \leq \frac{3n}{\delta}$.
Chandran, Das, Rajendraprasad and Varma \cite{Chandran} have subsequently
proved an upper bound of $\frac{3n}{\delta+1}+3$, which is therefore essentially optimal.

As Caro et al. point out, the random graph setting poses several intriguing questions. 
Specifically, let $G=G(n,p)$ denote the binomial random graph on $n$ 
vertices with edge probability $p$ \cite{erdosrenyi}. 
Caro et al. \cite{caro} proved that $p=\sqrt{\log{n}/n}$ is the sharp threshold for 
the property $rc(G(n,p))\leq 2$. 
He and Liang \cite{heliang} studied further the rainbow connectivity of random graphs. 
Specifically, they obtain the sharp threshold for the property $rc(G) \leq d$
where $d$ is constant. For further results and references we refer 
the interested reader to the recent monograph of Li and Sun \cite{li2012rainbow}. 
In this work we look at the rainbow connectivity of the binomial graph
at the connectivity threshold  $p=\frac{\log{n}+{\om}}{n}$ where ${\om}=o(\log{n})$. 
This range of values for $p$ poses problems that cannot be tackled 
with the techniques developed in the aforementioned work. 
Rainbow connectivity has not been studied in random regular graphs to the best of our knowledge. 

Let
\begin{equation}
\label{Ldef} 
L = \diam
\end{equation}

and let $A\sim B$ denote $A=(1+o(1))B$ as $n\to\infty$.

\noindent We establish the following theorems:
\begin{theorem}
\label{thrm:Rainbowmainthrm}
Let $G = G(n,p),p=\frac{\log{n}+{\om}}{n}$, $\om\to\infty,{\om}=o(\log{n})$. Also, let $Z_1$ be the number of 
vertices of 
degree 1 in $G$.
Then, with high probability({\it whp})
$$  rc(G) \sim \max\set{Z_1, L},$$
\end{theorem}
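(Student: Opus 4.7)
The plan is to prove the lower and upper bounds $rc(G)\sim\max\{Z_1,L\}$ separately. For the lower bound, two independent obstructions produce the two terms. First, if $u$ and $v$ are distinct degree-one vertices then every $u$--$v$ path necessarily contains the unique pendant edge at $u$ and the unique pendant edge at $v$, so these two edges must carry distinct colors; applied across all pairs of degree-one vertices this forces the $Z_1$ pendant edges to use $Z_1$ distinct colors, yielding $rc(G)\geq Z_1$. Second, standard BFS/branching-process estimates on $G(n,p)$ at the connectivity threshold give $\mathrm{diam}(G)\sim L$ \whp, and since a rainbow path of length $\ell$ must use $\ell$ distinct colors, $rc(G)\geq\mathrm{diam}(G)\sim L$. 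Concentration of $Z_1$ comes from Poisson approximation of the number of degree-one vertices, with mean $\sim (\log n+\omega)e^{-\omega}$.

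For the upper bound I would take $r=(1+o(1))\max\{Z_1,L\}$ colors, reserve one color for each pendant edge (so pendant edges are automatically all distinctly colored), and color every remaining edge independently and uniformly from the palette of $r$ colors. The task is to show that \whp, for every ordered pair $(u,v)$, some rainbow $u$--$v$ path exists. The probability that a prescribed path of length $\ell$ is rainbow equals $r(r-1)\cdots(r-\ell+1)/r^\ell=\exp\bigl(-\Theta(\ell^2/r)\bigr)$, which at $\ell=L$ and $r\sim L$ is only $\exp(-\Theta(L))=n^{-\Theta(1/\log\log n)}$. This is too weak to union-bound over $\binom{n}{2}$ pairs if we restrict to paths of length exactly $L$, and in any case there are typically only $O(1)$ such paths between vertices at distance $L$. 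The remedy is to search instead for paths of length up to $(1+\delta)L$ where $\delta\to 0$ slowly, so that the extra length is absorbed into the $(1+o(1))$ factor.

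Concretely, inside the $2$-core (obtained from $G$ by deleting the $Z_1$ pendant edges), I would grow BFS neighbourhoods from $u$ and from $v$ up to depth about $L/2$. At $p=(\log n+\omega)/n$ these neighbourhoods are locally tree-like, reach $n^{1-o(1)}$ vertices each, and by the expansion of $G$ at this scale the two frontiers are joined by $n^{\Omega(1)}$ edges. This produces an exponentially large family of paths of length at most $(1+\delta)L$ between $u$ and $v$, each of which is rainbow with probability $n^{-o(1)}$; with $n^{\Omega(1)}$ near-disjoint such paths, the probability that none of them is rainbow is $\exp(-n^{\Omega(1)})$, which comfortably survives a union bound over all $\binom{n}{2}$ pairs. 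When one or both endpoints are themselves pendant vertices, the rainbow path is forced to begin/end with the reserved pendant color, so I would instead apply the same argument at their neighbours while requiring the connecting path to avoid the one or two reserved endpoint colors; this restriction only costs a factor $1-O(L/r)=1-o(1)$ in the per-path rainbow probability.

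The main obstacle I anticipate is dependence: paths between a fixed pair $(u,v)$ share edges, so their rainbow indicators are not independent and the naive product of failure probabilities is not valid. I would address this either by extracting from the BFS construction a truly internally vertex-disjoint subfamily (exploiting the tree-like structure of $G(n,p)$ up to depth $\sim L/2$, so that only $o(1)$ fraction of candidate paths are lost), or by a Janson/Suen-type correlation inequality applied to the family of rainbow indicators. A secondary technical point is showing that the $2$-core is globally well connected and has minimum degree at least $2$ after the pendant-edge deletion, which follows from a standard analysis of the degree-$1$ and degree-$2$ vertices in $G(n,p)$ near the connectivity threshold.
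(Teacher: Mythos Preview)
Your lower bound and the high-level shape of the upper bound (random coloring, many candidate paths, hope one is rainbow) match the paper. The gap is in how you propose to manufacture independence among the candidate paths.

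You write that you would extract an ``internally vertex-disjoint subfamily'' of $u$--$v$ paths, or alternatively apply Janson/Suen. The first option cannot work: the number of internally vertex-disjoint (or even edge-disjoint) $u$--$v$ paths is bounded by $\min\{d(u),d(v)\}$, which at $p=(\log n+\omega)/n$ is $O(\log n)$ even for ``large'' vertices. With a palette of $r=(1+o(1))L$ colors and paths of length $\ell=(1+\delta)L$, a single path is rainbow with probability at most $\exp(-\Theta(\ell^2/r))=\exp(-\Theta(L))=n^{-\Theta(1/\log\log n)}$, so the expected number of rainbow paths among $O(\log n)$ disjoint candidates is $O(\log n)\cdot n^{-\Theta(1/\log\log n)}\to 0$. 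You therefore cannot drive the per-pair failure probability below $n^{-2}$, and the union bound over $\binom{n}{2}$ pairs fails. The Janson/Suen route is also problematic: the events ``path $P_i$ is rainbow'' are not monotone in any natural product measure on colors, so neither inequality applies as stated, and the heavy overlap of all paths near $u$ and $v$ makes the correlation term $\Delta$ enormous in any reformulation.

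The paper resolves this by \emph{not} seeking fully disjoint $u$--$v$ paths. Instead it grows small BFS trees $T_x,T_y$ of depth $k=\epsilon L$ (with $\epsilon=o(1)$ chosen so that $\epsilon\log\log n/\log(1/\epsilon)\to\infty$), obtaining $n^{\Theta(\epsilon)}$ leaves on each side. From each leaf it then grows a further tree of depth $(\tfrac12+\epsilon)L$; these secondary trees are made vertex-disjoint, so the ``middle'' paths between corresponding leaves are genuinely edge-disjoint. The remaining dependence---all root-to-leaf paths in $T_x$ share edges near $x$---is handled by a separate branching-process argument: one tracks the number of ``alive'' leaf pairs $(x_i,y_i)$ for which both root-to-leaf paths are rainbow and use disjoint color sets, and shows this count stays at $n^{\Theta(\epsilon)}$ with probability $1-o(n^{-2})$. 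Only after this conditioning does one invoke independence of the edge-disjoint middle paths. This two-stage decomposition (dependent tree part analyzed via branching, independent middle parts via direct product) is the key idea your proposal is missing; without it the numbers do not close.
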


It is known that \whp\ the diameter of $G(n,p)$ is asymptotic to $L$ for $p$ as in the above range,
see for example Theorem 10.17 of Bollob\'as \cite{bollobas}. 
Theorem \ref{thrm:Rainbowmainthrm} gives asymptotically optimal results. Our next theorem is not quite as precise.
\begin{theorem}
 \label{thrm:Rainbowregular} 
Let $G=G(n,r)$ be a random $r$-regular graph where $r \geq 3$ is a fixed integer. 
Then, {\it whp} 
$$ rc(G) =\begin{cases}
          O(\log^4n)&r=3\\
          O(\log^{2\th_r}n)&r\geq 4.
          \end{cases}
$$ 
where $\th_r=\frac{\log (r-1)}{\log (r-2)}$.
\end{theorem}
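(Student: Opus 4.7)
The plan is to work inside Bollob\'as's configuration model $G_F$ described in the excerpt, to exploit the fact that any property holding \whp\ in $G_F$ also holds \whp\ in $G(n,r)$. I will expose the random pairing $F$ only in stages, and in parallel build an explicit rainbow coloring. As a sanity check, the diameter of $G(n,r)$ is $(1+o(1))\log_{r-1}n$ \whp, so $\Omega(\log n)$ colors are certainly necessary, and the targets $\log^4 n$ and $\log^{2\th_r}n$ are only a polylog factor away from the truth.

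The first main step, carried out for all $r\geq 3$, is a two-sided BFS construction. For each ordered pair $u,v$, I expose the pairing by breadth-first search from $u$ to depth $d_1=\tfrac12\log_{r-1}n+O(\log\log n)$, obtaining a tree $T_u$ whose internal vertices have out-branching $r-1$ (one half-edge is used up by the parent) so that $|T_u|\sim \sqrt{n}$ \whp; I do the same from $v$. The exponent $\th_r=\log(r-1)/\log(r-2)$ then appears in the secondary stage, where I try to match leaves of $T_u$ to leaves of $T_v$ through short connecting paths: at each vertex of these connecting paths, one half-edge has already been consumed by the BFS tree, reducing the effective branching to $r-2$. A careful union-bound argument (controlling how often BFS collisions occur in the partially revealed pairing) yields a family of $\exp(\Omega(\log^{\th_r}n))$ essentially edge-disjoint short $u$-$v$ paths. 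I then color each edge uniformly at random from a palette of $C$ colors. For a given path of length $\ell=O(\log n)$ the probability of being rainbow is $\exp(-O(\ell^2/C))$, and with $C=\Theta(\log^{2\th_r}n)$ the probability that none of the exponentially many candidate paths is rainbow is $o(n^{-2})$; a union bound over pairs $(u,v)$ finishes the case $r\geq 4$.

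The $r=3$ case is where the argument outlined above collapses, because $r-2=1$ means the secondary BFS no longer branches, so one cannot produce many nearly-edge-disjoint paths by branching alone. Here I will replace the deterministic BFS-augmentation by a Markov Chain Monte Carlo step. I define a switch chain on the space of pairings that keeps the distribution of $G_F$ uniform on $3$-regular graphs while adding, around each vertex, an auxiliary local structure (an ``umbrella'' of short cycles) that substitutes for the missing branching. Running the chain for polynomially many steps, I argue via a coupling with a fresh configuration model sample that \whp\ every vertex acquires such a structure; then $O(\log^4 n)$ random colors suffice to produce many rainbow $u$-$v$ paths, using the umbrellas as local ``color-repair'' gadgets where a long path can be locally detoured to avoid a color collision.

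The main obstacle, as anticipated by the authors, is this MCMC step in the $r=3$ regime: one must show both that the switch chain creates the desired local structure everywhere within polynomial time and that the chain does not disturb the joint distribution enough to invalidate the preceding coloring analysis. Establishing the precise exponent $4$ (rather than any larger constant) requires a tight accounting of the three contributing log factors — the path length $\ell\sim\log n$, the square in the rainbow probability $\ell^2/C$, and one further $\log n$ for the union bound over pairs — which is where the detailed work will concentrate.
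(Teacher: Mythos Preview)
Your proposal has a genuine gap at its core: the uniform random coloring step cannot work with a palette of size $n^{o(1)}$. With $q$ colors chosen independently and uniformly, a fixed vertex $v$ has all $\Theta(r^2)$ edges in its $2$-neighborhood monochromatic with probability $q^{1-\Theta(r^2)}$, so \whp\ there are $\Omega(nq^{1-\Theta(r^2)})$ such vertices. Any path of length $\geq 2$ out of such a vertex repeats a color on its first two edges, so these vertices are rainbow-connected to nothing beyond their neighbors. The paper states this obstruction explicitly at the start of its proof, and it kills your analysis for both $r\geq 4$ and $r=3$ before the path-counting ever begins.

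The paper's fix, and the correct reading of the ``MCMC'' remark, is a \emph{random greedy proper coloring} of the graph $\Gamma=H^{2k}$, where $H$ is the line graph of $G$ and $k\approx\log_{r-2}\log n$: process the edges in some order and give each one a uniformly random color from those not yet used on its $\Gamma$-neighbors. This guarantees deterministically that every depth-$k$ BFS tree $T_x$ is rainbow (all its edges are within distance $2k$ of one another in $H$), at the cost of a palette of size $q\approx (r-1)^{2k}\approx \log^{2\theta_r}n$. The exponent $\theta_r=\log(r-1)/\log(r-2)$ therefore arises from the interplay between the palette size $(r-1)^{2k}$ and the depth $k$ chosen so that a combinatorial matching lemma yields $(r-2)^{k}\approx\log n$ leaf-pairs $(u_i,v_i)$ whose $T_x$-path and $T_y$-path together are rainbow; it is not a secondary-branching phenomenon. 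The near-uniformity you need on the long connecting paths $P_i$ then comes from bounding how much the greedy conditioning can skew a single edge's color. For $r=3$ the matching lemma is applied at depth two rather than depth one (yielding $2^{k/2}$ pairs), $k$ is roughly doubled, and some extra care with short cycles is needed; there is no switch chain on pairings.
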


\noindent All logarithms whose base is omitted are natural.
It will be clear from our proofs that the colorings in the above two theorems can be
constructed in a low order polynomial time. The second theorem, while weaker, contains
an unexpected use of a Markov Chain Monte-Carlo (MCMC) algorithm for randomly coloring a graph.

The Chapter is organized as follows: After giving a sketch of our approach in Section \ref{sec:Rainbowsketch},
in Sections~\ref{sec:Rainbowproofs},~\ref{sec:Rainbowregular} we prove 
Theorems~\ref{thrm:Rainbowmainthrm},~\ref{thrm:Rainbowregular}
respectively.

\section{Sketch of approach}
\label{sec:Rainbowsketch}

The general idea in the proofs of both theorems is as follows:
\begin{enumerate}
\item Randomly color the edges of the graph in question. For Theorem \ref{thrm:Rainbowmainthrm} we can (in the main)
use a uniformly
random coloring. The distribution for Theorem \ref{thrm:Rainbowregular} is a little more complicated. 
\item To prove that this works, we have to find, for each pair of vertices $x,y$, a large collection of edge disjoint
paths joining them. It will then be easy to argue that at least one of these paths is rainbow colored.
\item To find these paths we pick a typical vertex $x$. We grow a regular tree $T_x$ with root $x$. The depth is chosen
carefully. We argue that for a typical pair of vertices $x,y$, many of the leaves of $T_x$ and $T_y$ can be put
into 1-1 correspondence $f$ so that (i) the path $P_x$ from $x$ to leaf $v$ of $T_x$ is rainbow colored, (ii) the path 
$P_y$ from $y$ to the leaf $f(v)$ of $T_y$ is rainbow colored and (iii) $P_x,P_y$ do not share color.
\item We argue that from most of the leaves of $T_x,T_y$ we can grow a tree of depth approximately equal to half the diameter.
These latter trees themselves contain a bit more than $n^{1/2}$ leaves. These can be constructed so that they are vertex disjoint. Now
we argue that each pair of trees, one associated with $x$ and one associated with $y$, are joined by an edge. 
\item We now have, by construction, a large set of edge disjoint paths joining leaves $v$ of $T_x$ to leaves
$f(v)$ of $T_y$. A simple estimation shows that \whp\ for at least one leaf $v$ of $T_x$, the path from $v$ to $f(v)$ is 
rainbow colored and does not
use a color already used in the path from $x$ to $v$ in $T_x$ or the path from $y$ to $f(v)$ in $T_y$.
\end{enumerate}
We now fill in the details of both cases.

\section{Proof of Theorem~\ref{thrm:Rainbowmainthrm}} 
\label{sec:Rainbowproofs}

Observe first that $rc(G)\geq \max\set{Z_1,diameter(G)}$. First of all, each edge incident to a vertex of 
degree one must have a distinct color.
Just consider a path joining two such vertices. Secondly, if the shortest distance between two vertices is $\ell$ 
then we need at least $\ell$ colors.
Next observe that \whp\ the diameter $D$ is asymptotically equal to $L$, 
see for example \cite{bollobas}. 
We break the proof of Theorem~\ref{thrm:Rainbowmainthrm} into several lemmas. 

Let a vertex be {\em large} if $d(x)\geq \log{n}/100$ and {\em small} otherwise.
\begin{lemma} 
\label{Rainbowlem2}
{\em Whp}, there do not exist two small vertices within distance 
at most $3L/4$. 
\end{lemma}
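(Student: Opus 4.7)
The plan is to apply the first moment method. Let $N$ denote the number of unordered pairs $\{u,v\}$ of distinct vertices such that both $u$ and $v$ are small and $d_G(u,v)\leq 3L/4$. It suffices to show $\Mean{N}=o(1)$, since Markov's inequality then yields $N=0$ \whp.

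First I would bound the probability that a fixed vertex $u$ is small. Since $d(u)\sim\mathrm{Bin}(n-1,p)$ with mean $\mu\sim\log n$, the Chernoff lower-tail estimate $\Prob{\mathrm{Bin}(m,p)\leq k}\leq e^{-mp}(emp/k)^{k}$ with $k=\log n/100$ gives
\[
 \Prob{u\text{ is small}}\leq n^{-1+\log(100e)/100+o(1)}=n^{-\a+o(1)},
\]
where $\a:=1-\log(100e)/100>0.94$.

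Next I would estimate the joint event by summing over certificates consisting of a pair $(u,v)$ together with a specific path $P=(u=w_{0},w_{1},\ldots,w_{\ell}=v)$ with distinct vertices and $\ell\leq 3L/4$. Let $E_{P}$ be the event that all $\ell$ edges of $P$ are present, so $\Prob{E_{P}}=p^{\ell}$ and the number of such paths between a fixed pair is at most $n^{\ell-1}$. The key step is an independence argument: conditional on $E_{P}$, define $Y_{u}$ as the number of edges from $u$ to $V\setminus\{u,v,w_{1}\}$ and $Y_{v}$ as the number from $v$ to $V\setminus\{u,v,w_{\ell-1}\}$. These depend on disjoint sets of potential edges, so are independent and each is distributed as $\mathrm{Bin}(n-3,p)$ (with the obvious modification when $\ell=1$). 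Since $\{u\text{ small}\}\subseteq\{Y_{u}\leq\log n/100-1\}$, and likewise for $v$, the same Chernoff estimate combined with independence yields
\[
 \Prob{u,v\text{ both small}\mid E_{P}}\leq n^{-2\a+o(1)}.
\]
Union-bounding over paths and over pairs,
\[
 \Mean{N}\leq\binom{n}{2}\sum_{\ell=1}^{3L/4}n^{\ell-1}p^{\ell}\cdot n^{-2\a+o(1)}\leq n^{1-2\a+o(1)}\,L\,(np)^{3L/4}.
\]
Using $np\sim\log n$ and $(\log n)^{3L/4}=\exp\bigl(\tfrac{3}{4}\log n\bigr)=n^{3/4}$, this is at most $n^{-2\a+3/4+o(1)}\leq n^{-0.13+o(1)}\to 0$, as required.

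The step I expect to demand the most care is the conditional independence: one must strip off exactly the edges $uw_{1}$, $w_{\ell-1}v$, and the potential edge $uv$ (even when it is not on $P$) from the contributions to $d(u)$ and $d(v)$, so that the residual quantities $Y_{u},Y_{v}$ depend on two disjoint collections of potential edges and so are genuinely independent. The additive $O(1)$ slack this introduces into the threshold for smallness is harmlessly absorbed by the Chernoff bound.
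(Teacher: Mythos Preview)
Your proof is correct and follows essentially the same approach as the paper: a first-moment union bound over pairs and certificate paths of length $\ell\le 3L/4$, combined with a Chernoff estimate showing that each endpoint is small with probability at most $n^{-\alpha+o(1)}$, and conditional independence of the two residual degrees after stripping off the edges on the path. The only cosmetic difference is bookkeeping---the paper excludes all $\ell+1$ path vertices from the residual binomials (getting $\mathrm{Bin}(n-1-\ell,p)$) whereas you exclude only three, but both yield the same $n^{-2\alpha+3/4+o(1)}=o(1)$ bound.
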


\begin{proof}

\begin{align*}
&\Prob{\exists x,y \in [n]:\;d(x),d(y) \leq \log{n}/100\text{ and }dist(x,y) \leq \frac{3L}{4}}\\
&\leq \binom{n}{2} \sum_{k=1}^{3L/4} n^{k-1}p^k \brac{\sum_{i=0}^{\log n/100}\binom{n-1-k}{i}p^i(1-p)^{n-1-k}}^2 \\
&\leq \sum_{k=1}^{3L/4} n(2\log n)^k\brac{2\binom{n}{\log n/100}p^{\log n/100}(1-p)^{n-1-\log n/100}}^2\\
&\leq \sum_{k=1}^{3L/4} n(2\log n)^k\brac{2(100e^{1+o(1)})^{\log n/100}n^{-1+o(1)}}^2\\
&\leq \sum_{k=1}^{3L/4} n(2\log n)^k n^{-1.9}\\
&\leq 2n(2\log n)^{3L/4}n^{-1.9}\\
&\leq n^{-.1}.
\end{align*}
 
\end{proof}

\noindent 
We use the notation $e[S]$ for the number of edges induced by a given set of vertices $S$. Notice that 
if a set $S$ satisfies $e[S] \geq s+t$ where $t\geq 1$, the induced subgraph $G[S]$ has at least $t+1$ cycles.

\begin{lemma}
\label{Rainbowlem3} 
Fix $t\in \field{Z}^+$ and $0<\alpha<1$. Then, {\em whp} there does not exist a subset $S \subseteq [n]$, 
such that $|S| \leq \alpha t L$ and $e[S] \geq |S|+t$. 
\end{lemma}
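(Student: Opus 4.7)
The plan is a first moment calculation. For each integer $s$ with $1\le s\le\alpha t L$, let $X_s$ denote the number of subsets $S\subseteq[n]$ with $|S|=s$ and $e[S]\ge s+t$. Since $e[S]\sim\Bin(\binom{s}{2},p)$, a union bound over which $s+t$ of the $\binom{s}{2}$ possible edges inside $S$ are actually present yields
$$
\Mean{X_s}\le\binom{n}{s}\binom{\binom{s}{2}}{s+t}p^{s+t}.
$$
I will show that $\sum_{s=1}^{\alpha tL}\Mean{X_s}=o(1)$; Markov's inequality then delivers the lemma.

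To estimate the summand I will apply the standard bounds $\binom{n}{s}\le(ne/s)^s$ and $\binom{\binom{s}{2}}{s+t}\le(es/2)^{s+t}$ (valid for $t\ge 1$), and regroup to obtain
$$
\Mean{X_s}\le\left(\frac{ne^2p}{2}\right)^{\!s}\left(\frac{esp}{2}\right)^{\!t}.
$$
Plugging in $p=(\log n+\omega)/n\sim\log n/n$, the first factor becomes $\exp(s\log\log n\,(1+o(1)))$ and the second is at most $(es\log n/n)^t$. In the range $s\le\alpha tL=\alpha t\log n/\log\log n$ the first factor contributes at most $n^{\alpha t(1+o(1))}$, while the second factor — since $s$ is only polylogarithmic in $n$ — contributes $n^{-t(1+o(1))}$. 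Their product is therefore $n^{-(1-\alpha)t+o(1)}$, and summing over the $O(L)$ values of $s$ preserves the $o(1)$ bound because $\alpha<1$ and $t\ge 1$ is a fixed constant.

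The only subtlety is that the first factor is monotone increasing in $s$, so the values of $s$ near $\alpha tL$ dominate the sum; the small-$s$ regime is harmless, since $\binom{\binom{s}{2}}{s+t}$ vanishes whenever $\binom{s}{2}<s+t$, and for the remaining small $s$ the factor $p^{s+t}=n^{-(s+t)+o(1)}$ already suppresses the term. I do not anticipate a genuine obstacle here: the strict inequality $\alpha<1$ is exactly what turns the exponent of $n$ strictly negative, and everything else is routine. This is the same kind of sparse-subgraph exclusion used repeatedly in random graph theory, and the polylogarithmic slack in $\alpha tL$ versus $L$ is comfortably absorbed by the $n^{-t(1-\alpha)}$ saving coming from the $t$ ``excess'' edges.
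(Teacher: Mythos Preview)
Your proposal is correct and follows essentially the same approach as the paper: a first-moment/union bound over subset sizes $s\le\alpha tL$, the standard estimates $\binom{n}{s}\le(ne/s)^s$ and $\binom{\binom{s}{2}}{s+t}\le(es^2/(2(s+t)))^{s+t}$, and the observation that the dominant contribution at $s\approx\alpha tL$ gives $n^{-(1-\alpha)t+o(1)}$. The only cosmetic difference is that you simplify $es^2/(2(s+t))\le es/2$ before regrouping, and you make explicit the (harmless) small-$s$ regime where $\binom{\binom{s}{2}}{s+t}=0$, which the paper leaves implicit.
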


\begin{proof} 

For convenience, let $s=|S|$ be the cardinality of the set $S$.Then,
\begin{align*}
\Prob{ \exists S: s \leq \alpha t L \text{~~and~~} e[S] \geq s+t } 
&\leq \sum_{s \leq \alpha t L} \binom{n}{s} \binom{\binom{s}{2}}{s+t} p^{s+t}\\
&\leq \sum_{s \leq \alpha t L } \bfrac{ne}{s}^s \bfrac{es^2p}{2(s+t)}^{s+t} \\ 
&\leq \sum_{s \leq \alpha t L } (e^{2+o(1)}\log{n})^s \bfrac{es\log{n}}{n}^t \\
&\leq  \alpha tL \brac{ (e^{2+o(1)}\log{n})^{ \alpha L} \bfrac{e \alpha t\log^2{n} }{n\log{\log{n}}}}^t \\ 
&< \frac{1}{n^{(1-\a-o(1))t}}.
\end{align*}

\end{proof}

\begin{remark}\label{rem1}
Let $T$ be a rooted tree of depth at most $4L/7$ and let $v$ be a vertex not in $T$, but with $b$ neighbors in $T$.
Let $S$ consist of $v$, the neighbors of $v$ in $T$ plus the ancestors of these neighbors. Then 
$|S|\leq 4bL/7+1\le 3bL/5$ and $e[S]=|S|+b-2$. It follows from {\red the proof of} 
Lemma \ref{Rainbowlem3} with $\a=3/5$ and $t=8$, that 
we must have $b\leq 10$
with probability $1-o(n^{-3})$.
\end{remark}

\noindent Our next lemma shows the existence of the subgraph $G'_{x,y}$ described next and shown in 
Figure~\ref{fig:Rainbowfig1}
for a given pair of vertices $x,y$. 
We first deal with paths between large vertices.

\begin{figure*}
\centering
\includegraphics[width=0.7\textwidth]{./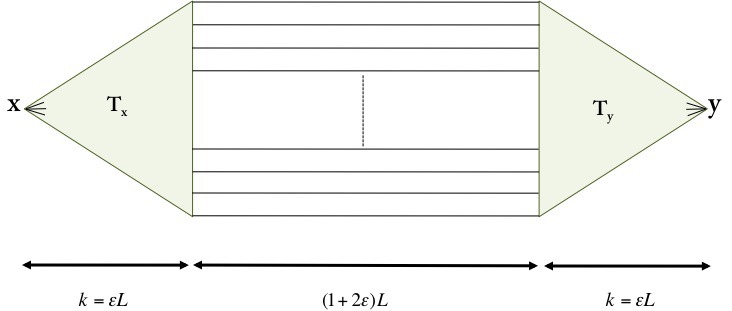}
\caption{\label{fig:Rainbowfig1}Structure of Lemma~\ref{Rainbowlem4}.}
\end{figure*}

Now let 
\beq{eps}
\text{$\e=\e(n)=o(1)$ be such that $\frac{\e\log\log n}{\log 1/\e}\to \infty$ and let $k=\e L$.}
\eeq
Here $L$ is defined in \eqref{Ldef} and we could take $\e=1/(\log\log n)^{1/2}$.
\begin{lemma}
\label{Rainbowlem4} 
{\red Whp,} for all pairs of large vertices $x,y \in [n]$
there exists a subgraph $G_{x,y}(V_{x,y},E_{x,y})$ of $G$ as shown in figure~\ref{fig:Rainbowfig1}.
The subgraph consists of two isomorphic vertex disjoint trees $T_x,T_y$ rooted at $x,y$ each of depth $k$.
$T_x$ and $T_y$ both have a branching factor of $\log n/101$. I.e. each vertex of $T_x,T_y$ has
at least  $\log n/101$ neighbors, excluding its parent in the tree.
Let the leaves of $T_x$ be $x_1,x_2,\ldots,x_\t$ where $\t\geq n^{4\e/5}$ and those of $T_y$ be
$y_1,y_2,\ldots,y_\t$. Then $y_i=f(x_i)$ where $f$ is a natural 
isomporphism that preserves the parent-child relation.
Between each pair of leaves $(x_i,y_i),i=1,2,\ldots,\t$ 
there is a path $P_i$ of length  $(1+2\epsilon) L$. The paths $P_i,i=1,2,\ldots,\t$ are edge disjoint.
\end{lemma}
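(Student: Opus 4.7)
The plan is to follow the five-step outline of Section~\ref{sec:Rainbowsketch}, bounding the failure probability by $o(n^{-2})$ for a single pair $(x,y)$ so that a union bound over all pairs of large vertices suffices. First I would grow, by breadth-first search from $x$, a tree $T_x$ of depth $k=\epsilon L$ in which each internal vertex acquires exactly $B=\lfloor \log n/101 \rfloor$ children, and simultaneously a vertex-disjoint isomorphic tree $T_y$ rooted at $y$. Since any two vertices of $T_x$ are at graph-distance at most $2k<3L/4$, Lemma~\ref{Rainbowlem2} forbids $T_x$ (resp.\ $T_y$) from containing any small vertex other than possibly the root, which is large by hypothesis; every internal vertex therefore has at least $\log n/100$ neighbors in $G$. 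Remark~\ref{rem1} then guarantees that at most $10$ of these neighbors already lie in the current tree, leaving at least $\log n/100-10>B$ legitimate children, from which I pick $B$. Exposing edges greedily via deferred decisions, each vertex-extension fails with probability $o(n^{-3})$, so whp both trees are built with $\tau=B^k\geq n^{4\epsilon/5}$ leaves each, paired by any parent-respecting tree-isomorphism $f$.

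Next, enumerating the pairs as $(x_i,y_i)$ for $i=1,\ldots,\tau$, I would sequentially grow a BFS tree $\hat T_{x_i}$ rooted at $x_i$ to depth $d=\lfloor (1+2\epsilon)L/2 \rfloor$ followed by a BFS tree $\hat T_{y_i}$ rooted at $y_i$ to depth $(1+2\epsilon)L-1-d$, using only vertices not already touched by $T_x\cup T_y\cup\bigcup_{j<i}(\hat T_{x_j}\cup\hat T_{y_j})$. The total number of vertices ever used is at most $N:=4\tau B^d=n^{1/2+2\epsilon+o(1)}$, which is $o(n)$. Consequently, the expected number of neighbors a fresh boundary vertex has inside the used set is $Np=o(1)$, and a Chernoff tail bound shows this count is at most $10$ except with probability $o(n^{-5})$. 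Combined with Lemma~\ref{Rainbowlem2} (whose reach $3L/4$ still dominates the depths involved) to rule out small vertices, the branching factor $B$ is maintained at every extension with total failure probability $o(n^{-3})$.

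Finally, for each $i$ I would expose the edges between the $\geq B^d\geq n^{1/2+\epsilon-o(1)}$ leaves of $\hat T_{x_i}$ and those of $\hat T_{y_i}$. These edges have been held in reserve throughout the construction, so by deferred decisions they are independent Bernoulli$(p)$; the expected count of cross-edges is at least $n^{2\epsilon-o(1)}\log n$, and Chernoff yields that at least one cross-edge exists except with probability $\exp(-n^{2\epsilon-o(1)}\log n)$, far smaller than $o(n^{-3})$. Concatenating the $x_i$-to-leaf path in $\hat T_{x_i}$, this cross-edge, and the leaf-to-$y_i$ path in $\hat T_{y_i}$ produces $P_i$ of length exactly $(1+2\epsilon)L$; edge-disjointness across distinct $i$ is automatic because the $\hat T_{x_i}\cup\hat T_{y_i}$ are vertex-disjoint and the cross-edges distinct. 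A union bound over all $\binom{n}{2}$ pairs $(x,y)$ completes the proof.

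The main obstacle will be to maintain the branching factor $B$ uniformly across the long cascade of sequential BFS extensions while simultaneously refusing to reuse any vertex or edge. The deferred-decisions bookkeeping must be arranged carefully, so that the edges exposed during each BFS are disjoint from the pool of edges reserved for the final cross-edge step; only under this segregation can the independence used in the last Chernoff estimate be justified, and only then can Remark~\ref{rem1} be cleanly reapplied to the freshly grown subtrees. The quantitative slack in the two key inequalities, $\log n/100-10>B$ and $Np=o(1)$, is exactly what lets the union bound over $\binom{n}{2}$ pairs and $O(N)$ boundary vertices per pair absorb every error term.
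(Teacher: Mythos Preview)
Your outline matches the paper's proof: grow $T_x,T_y$ to depth $k$, then vertex-disjoint extension trees $\hat T_{x_i},\hat T_{y_i}$ of depth $\gamma=(\tfrac12+\e)L$ from each leaf, then find a cross-edge between the leaf sets of $\hat T_{x_i}$ and $\hat T_{y_i}$. The one substantive deviation is how you control back-edges while growing the extension trees. The paper does \emph{not} use a Chernoff argument there; it re-applies Remark~\ref{rem1} to the entire $x$-side macro-tree $T_x\cup\bigcup_j\hat T_{x_j}$, which has depth $k+\gamma=(\tfrac12+2\e)L\leq 4L/7$, yielding at most $10$ neighbors of $v$ in that structure and another $10$ on the $y$-side, for $\leq 20$ back-edges in total. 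This is a one-shot structural property of $G$ established via Lemma~\ref{Rainbowlem3}, hence immune to the adaptivity of the BFS.

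Your Chernoff step, as written, has a gap: the used set $U$ is constructed adaptively from the random edges, so you cannot simply declare the edges from a boundary vertex $v$ into $U$ to be fresh Bernoulli$(p)$ variables independent of $U$. The argument is salvageable---edges from $v$ to vertices processed strictly before $v$'s parent are necessarily absent (else $v$ would have been discovered earlier), leaving only edges to the unprocessed frontier of $U$, and those truly are unexposed---but this is precisely the deferred-decisions bookkeeping you flag as ``the main obstacle'' without actually carrying out. A related slip: Lemma~\ref{Rainbowlem2} does not directly ``rule out small vertices'' throughout an extension tree, since two vertices there can be at distance up to $(1+2\e)L>3L/4$; the paper instead absorbs at most one small vertex per level into the recursion $D_{i+1}\geq(\tfrac{\log n}{100}-20)(D_i-1)$. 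If you want to avoid both subtleties, just reinvoke Remark~\ref{rem1} as the paper does.
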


\begin{proof}

Because we have to do this for all pairs $x,y$, we note without further comment that likely (resp. unlikely) 
events will
be shown to occur with probability $1-o(n^{-2})$ (resp. $o(n^{-2}$)).

To find the subgraph shown in Figure~\ref{fig:Rainbowfig1} we grow tree structures as shown 
in Figure~\ref{fig:Rainbowfig2}. 
Specifically, we first grow a tree from $x$ using BFS until it reaches  depth $k$. 
Then, we grow a tree starting from $y$ again using BFS until it reaches depth $k$. 
Finally, we grow trees from the leaves of $T_x$ and $T_y$ using BFS for depth $\gamma=(\frac{1}{2}+\epsilon)L$. 
Now we analyze these processes. Since the argument is the same we explain
it in detail for $T_x$ and we outline the differences for the other trees. 
We use the notation $D_i^{(\r)}$ for the number of vertices at depth $i$
of the BFS tree rooted at $\r$.

\begin{figure*}
\centering
\includegraphics[width=0.7\textwidth]{./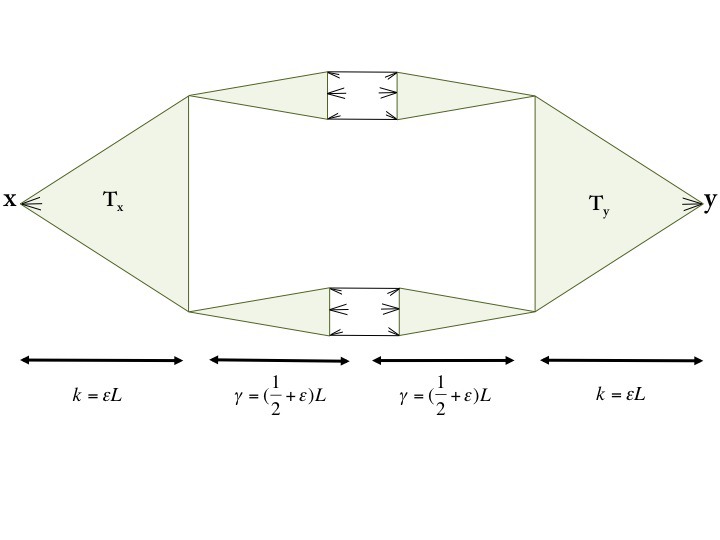}
\caption{Subgraph found in the proof of Lemma~\ref{Rainbowlem4}.}
\label{fig:Rainbowfig2}
\end{figure*}

First we grow $T_x$. As we grow the tree via BFS from a vertex $v$ at depth $i$ to vertices 
at depth $i+1$ certain {\em bad} edges from $v$ may point 
to vertices already in $T_x$. Remark \ref{rem1} shows with probability $1-o(n^{-3})$ there can be at most
10 bad edges emanating from $v$.

 Furthermore, 
Lemma \ref{Rainbowlem2} implies that there exists at most one vertex of degree less than $\frac{\log{n}}{100}$ 
at each level {\it whp}. 
Hence, we obtain the recursion  

\beq{rec1}
D_{i+1}^{(x)} \geq \brac{\frac{\log{n}}{100}-10} (D_i^{(x)}-1) \geq \frac{\log{n}}{101}D_i^{(x)}.
\eeq

\noindent Therefore the number of leaves satisfies 

\beq{rec2}
D_{k}^{(x)} \geq \bfrac{\log n}{101}^{\e L}\geq n^{4\e/5}.
\eeq
We can make the branching factor exactly $\frac{\log n}{101}$ by pruning.
We do this so that the trees $T_x$ are isomorphic to each other.

With a similar argument 
\beq{rec3}
D_{k}^{(y)} \geq n^{\frac{4}{5}\epsilon}.
\eeq
The only difference is that now we also say an edge is bad if the other endpoint is in $T_x$.
This immediately gives
$$D_{i+1}^{(y)} \geq \brac{\frac{\log{n}}{100}-20} (D_i^{(y)}-1) \geq \frac{\log{n}}{101}D_i^{(y)}$$
and the required conclusion \eqref{rec3}.

Similarly, from each leaf $x_i \in T_x$ and $y_i \in T_y$ we grow trees $\hT_{x_i},\hT_{y_i}$ of depth 
$\gamma = \big(\frac{1}{2}+\epsilon\big) L$ using the same procedure and arguments 
as above. Remark \ref{rem1} implies that there are at most 20 edges from the vertex $v$ being explored to 
vertices in any of the trees already constructed. At most 10 to $T_x$ plus any trees rooted at an $x_i$ 
and another 10 for $y$.
The numbers of leaves of each $\hT_{x_i}$ now satisfies
$$\hD_\g^{(x_i)}\geq \frac{\log n}{100}\bfrac{\log n}{101}^{\g}\geq n^{\frac12+\frac{4}{5}\epsilon}.$$
Similarly for $\hD_\g^{(y_i)}$.

\noindent Observe next that BFS does not condition the edges between the leaves $X_i,Y_i$ of the trees $\hT_{x_i}$ 
and $\hT_{y_i}$.
I.e., we do not need to look at these edges in order to carry out our construction. 
On the other hand we have conditioned on the occurence of certain events to imply a certain growth rate.
We handle this technicality as follows. We go through the above construction and halt if ever we find that we cannot
expand by the required amount. Let ${\bf A}$ be the event that we do not halt the construction
i.e. we fail the conditions of Lemmas \ref{Rainbowlem2} or \ref{Rainbowlem3}. We have
$\Prob{{\bf A}}=1-o(1)$ and so,
$$\Prob{\exists i:e(X_i,Y_i)=0\mid {\bf A}}\leq \frac{\Prob{\exists i:e(X_i,Y_i)=0}}{\Pr({\bf A})}\leq
2n^{\frac{4\e}{5}}(1-p)^{n^{1+\frac{8\e}{5}}}\leq n^{-n^\e}.$$
We conclude that {\em whp} there is always an edge between each $X_i,Y_i$ and thus a path of length at most
$(1+2\e)L$ between each $x_i,y_i$.
\end{proof}

\noindent Let $q=(1+5\e)L$ be the number of available colors. 
We color the edges of $G$ randomly. 
We show that the probability of having a rainbow path between $x,y$ in the 
subgraph $G_{x,y}$ of Figure~\ref{fig:Rainbowfig1}  is at least $1-\frac{1}{n^3}$. 

\begin{lemma}
\label{Rainbowlem5}
Color each edge of $G$ using one color at random from $q$ available. Then, the probability of having at least one 
rainbow path 
between two fixed large vertices $x,y \in [n]$ is at least $1-\frac{1}{n^3}$. 
\end{lemma}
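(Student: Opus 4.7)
The strategy is to exploit the edge-disjointness of the middle paths $P_1,\ldots,P_\tau$ established by Lemma~\ref{Rainbowlem4} in order to obtain conditional independence. Each candidate $Q_i := (x\to x_i)\cup P_i\cup(y_i\to y)$ is an $x$-$y$ path of length exactly $(1+4\epsilon)L$, which is less than the number $q = (1+5\epsilon)L$ of available colors by a slack of $\epsilon L$. Since $E(T_x)\cup E(T_y)$ is disjoint from $\bigcup_i E(P_i)$, I condition on the random coloring $\chi$ of $E(T_x)\cup E(T_y)$: the colorings of $P_1,\ldots,P_\tau$ remain independent uniform random colorings, so the events $\{Q_i\text{ is rainbow}\}$ are conditionally independent given $\chi$.

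Say an index $i$ is \emph{tree-good} (for $\chi$) if the $2k = 2\epsilon L$ tree edges along $(x\to x_i)\cup(y\to y_i)$ receive $2k$ pairwise distinct colors, and write $J(\chi)$ for the set of tree-good indices. Given $\chi$, the conditional probability that $Q_i$ is rainbow is zero for $i\notin J(\chi)$ and equals
\[
q_0 \;:=\; \prod_{j=2k}^{(1+4\epsilon)L-1}\Bigl(1-\tfrac{j}{q}\Bigr)
\]
for $i\in J(\chi)$. A short asymptotic computation (via $\int_{2\epsilon/(1+5\epsilon)}^{(1+4\epsilon)/(1+5\epsilon)}\log(1-u)\,du$) yields $q_0 = n^{-(1+o(1))/\log\log n}$, which is $n^{-o(\epsilon)}$ for the choice of $\epsilon$ in~\eqref{eps}. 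By conditional independence,
\[
\Pr(\text{no rainbow }Q_i \mid \chi) \;=\; (1-q_0)^{|J(\chi)|} \;\le\; e^{-q_0|J(\chi)|},
\]
so for any threshold $M$, $\Pr(\text{no rainbow }Q_i) \le \Pr(|J|<M) + e^{-q_0 M}$. Taking $M = n^{\epsilon/3}$ makes $e^{-q_0 M}$ doubly-exponentially small in $n$ and in particular $o(n^{-3})$, so it suffices to prove $\Pr(|J|<n^{\epsilon/3}) = o(n^{-3})$.

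The expectation is comfortably large: $\Pr(i\in J) = \prod_{j=0}^{2k-1}(1-j/q) = n^{-O(\epsilon^2/\log\log n)}$, so $\mathbb{E}[|J|] \ge \tau\cdot n^{-o(\epsilon)} \ge n^{4\epsilon/5 - o(\epsilon)}$, far above the target $M$. The main obstacle is establishing the lower-tail concentration of $|J|$, since the indicators $J_i$ are correlated through shared tree ancestors and a naive variance bound is dominated by pairs of leaves with long common prefix. My plan is to deploy an Azuma--Hoeffding edge-exposure martingale in BFS order on the edges of $T_x\cup T_y$: a tree edge at depth $d$ affects $|J|$ by at most the number $b^{k-d}$ of leaves in its subtree (with $b=\log n/101$), so $\sum_e c_e^2 = O(\tau^2/b)$ and Azuma delivers a tail of $\exp(-\Omega(b)) = n^{-\Omega(1)}$ for deviations of order $\mathbb{E}[|J|]/2$. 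If this bound is not yet tight enough to reach $n^{-3}$, I will strengthen it by restricting $|J|$ to a subfamily of $\Theta(\tau)$ leaves selected one per subtree at a carefully chosen intermediate depth, so that their tree paths are essentially disjoint and their tree-good indicators form a nearly-independent family amenable to a direct Chernoff bound.
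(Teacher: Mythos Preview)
Your two-step structure matches the paper: condition on the coloring $\chi$ of $E(T_x)\cup E(T_y)$, use the edge-disjointness of the $P_i$ for conditional independence of the events $\{Q_i\text{ rainbow}\}$, and separately show that many leaf pairs are tree-good. Your $q_0$ computation and the bound $e^{-q_0 M}$ are essentially the paper's Step~2.

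The gap is in the concentration of $|J|$. The Azuma exponent is \emph{not} $\Omega(b)$. You correctly record $\Pr(i\in J)=\prod_{j<2k}(1-j/q)=e^{-\Theta(k^2/q)}=e^{-\Theta(\epsilon^2 L)}$; with $\epsilon=(\log\log n)^{-1/2}$ as in~\eqref{eps} this equals $\exp\bigl(-\Theta(\log n/(\log\log n)^2)\bigr)\to 0$. Hence $\Mean{|J|}=\tau\,\Pr(i\in J)\ll\tau$, and with $t=\Mean{|J|}/2$ and $\sum_e c_e^2=\Theta(\tau^2/b)$ the Azuma exponent is
\[
\frac{t^2}{2\sum_e c_e^2}\;=\;\Theta\bigl(b\,\Pr(i\in J)^2\bigr)\;\longrightarrow\;0,
\]
so the bound is vacuous. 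You cannot rescue this by shrinking $\epsilon$: forcing $\Pr(i\in J)=\Theta(1)$ requires $\epsilon^2 L=O(1)$, i.e.\ $\epsilon=O((\log\log n/\log n)^{1/2})$, which violates~\eqref{eps} and simultaneously makes $q_0 M\to 0$, destroying Step~2. The fallback (``one leaf per subtree so the paths are essentially disjoint'') does not work as stated either: root-to-leaf paths in a rooted tree always share edges near the root, and the only genuinely edge-disjoint family is one leaf per child of the root --- just $b=O(\log n)$ indices, far below $n^{\epsilon/3}$.

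The paper's fix is to expose the tree colors \emph{level by level}. Let $A_j$ be the number of alive pairs at depth $j$ (your tree-good condition truncated to the first $2j$ edges). Conditioned on the colors of levels $1,\dots,j$, each alive pair contributes $b$ child pairs, each alive with probability at least $p_0=\bigl((1+3\epsilon)/(1+5\epsilon)\bigr)^2$, a constant bounded away from $0$; these $b\,A_j$ Bernoulli trials are mutually independent since the new edges are disjoint across pairs. Chernoff at each level then gives failure probability $\exp(-\Omega(b\,A_{j-1}))$, already $n^{-\Omega(\log\log n)}$ from level~$2$ onward, and a union bound over the $k=o(\log n)$ levels finishes Step~1. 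The feature a global edge-exposure martingale misses is precisely this: level-wise conditioning turns each step into a wide layer of independent trials whose width grows geometrically, whereas Azuma is forced to pay for the worst-case influence of the $O(b)$ edges near the root.
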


\begin{proof} 
We show that the subgraph $G_{x,y}$ contains such a path.
We break our proof into two steps:
 
\begin{figure*}
\centering 
\includegraphics[width=0.7\textwidth]{./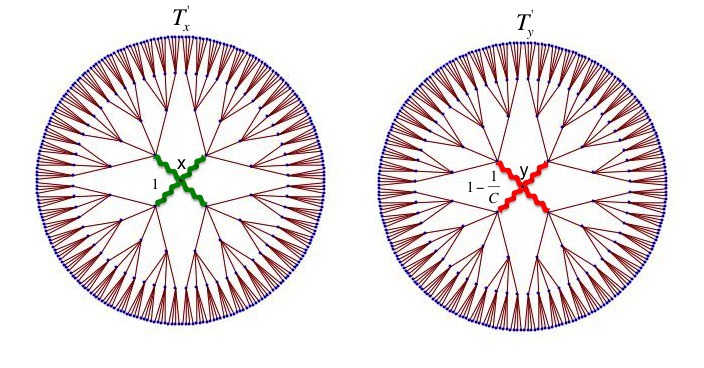}
\caption{\label{fig:Rainbowfig3}Figure shows $\frac{\log{n}}{101}$-ary trees $T_x,T_y$. The two roots are shown respectively at the 
center of the trees. 
In our thinking of the random coloring as an evolutionary process, the green edges incident to $x$ survive
with probability 1, the red edges incident to $y$ with probability $1-\frac{1}{q}$ and all the other
edges with probability $p_0= \Big( 1- \frac{2k}{q} \Big)^2$ where $k$ is the depth of both trees
and $q$ the number of available colors. 
Our analysis in Lemma~\ref{Rainbowlem4} using these probabilities gives a lower bound on the number of 
alive pairs of leaves after coloring $T_x,T_y$ from the root to the leaves respectively.}
\end{figure*}

Before we proceed, we provide certain necessary definitions. 
Think of the process of coloring $T_x,T_y$ as an evolutionary process that colors edges 
by starting from the two roots $x, f(x)=y$ until it reaches the leaves. 
In the following, we call a vertex $u$ of $T_x$ ($T_y$)
{\red {\em alive/living}} if the path $P(x,u)$ ($P(y,u)$) from $x$ ($y$) to $u$ is rainbow, i.e., the edges have received 
distinct colors. 
We  call a pair of vertices $\{u,f(u)\}$ alive, $u \in T_x, f(u) \in T_y$ if $u,f(u)$ are both {\it alive}
and the paths $P(x,u),P(y,f(u))$ share no color. 
Define $A_j=|\{(u,f(u)): (u,f(u)) \text{~is alive and depth}(u)=j  \}|$ for $j=1,..,k$.

\noindent
\underline{$\bullet$ {\sc Step 1:} Existence of at least $n^{\tfrac{4}{5}\epsilon}$ living pairs of leaves}\\

Assume the pair of vertices $\{u,f(u)\}$ is alive where $u \in T_x, f(u) \in T_y$. 
It is worth noticing that $u,f(u)$
have the same depth in their trees. We are interested in the number of pairs of children 
$\{u_i, f(u_i)\}_{i=1,..,\log{n}/101}$ 
that will be alive after coloring the edges from $\depth(u)$ to $\depth(u)+1$. 
A living pair $\{u_i,f(u_i)\}$ by definition has the following properties:
edges $(u,u_i) \in E(T_x)$ and $(f(u),f(u_i))\in E(T_y)$ receive two distinct colors, 
which are different from the set of colors used in paths $P(x,u)$ and $P(y,f(u))$.  
Notice the latter set of colors has cardinality $2 \times \text{depth}(u) \leq 2 k$. 

Let $A_j$ be the number of living pairs at depth $j$. We first bound the size of $A_1$.
\beq{A1}
{\red \Prob{A_1\leq \frac{\log n}{200}}\leq 2^{\log n/101}\bfrac{1}{q}^{\log n/300}=O(n^{-\Omega(\log\log n)}).}
\eeq
{\red Here $2^{\log n/101}$ bounds the number of choices for $A_1$. For a fixed set $A_1$ there will
be at least $\frac{\log n}{101}-\frac{\log n}{200}\geq \frac{\log n}{300}$ edges incident with 
$x$ that have the same color as their corresponding edges incident with $y$, under $f$. The factor $q^{-\log n/300}$
bounds the probability of this event}. 

For $j>1$ we see that the random variable equal to the number of living 
pairs of children of $(u,f(u))$ stochastically dominates
the random variable $X \sim \Bin\brac{\frac{\log{n}}{101}, p_0}$, where $p_0 =  
\brac{1-\frac{2k}{q}}^2 = \big(\frac{1+3\epsilon}{1+5\epsilon}\big)^2$. 
The colorings of the descendants of each live pair are independent and so
we have using the Chernoff bounds for $2\leq j\leq k$,
\begin{multline}\label{indu}
\Prob{A_j < \bfrac{\log{n}}{200}^j p_0^{j-1} \bigg| A_{j-1} \geq \bfrac{\log{n}}{200}^{j-1} p_0^{j-2}} \\
\leq \exp\set{-\frac12\cdot\bfrac{99}{200}^2\cdot\frac{\log n}{101}\cdot\bfrac{\log n}{200}^{j-1}p_0^j}=
O(n^{-\Omega(\log\log n)}).
\end{multline}

\eqref{A1} and \eqref{indu} justify assuming that $A_k \geq \bfrac{\log n}{200}^k
{\red p_0^{k-1}}\ge n^{\frac{4}{5}\epsilon}$.

\noindent \\
\underline{$\bullet$ {\sc Step 2:} Existence of rainbow paths between $x,y$ in $G_{x,y}$}\\
Assuming that there are $\geq n^{4\e/5}$ living pairs of leaves $(x_i,y_i)$ for vertices $x,y$, 
$$\Pr(x,y\text{ are not rainbow connected})\leq \brac{1-\prod_{i=0}^{2\g-1}\brac{1-\frac{2k+i}{q}}}^{n^{4\e/5}}.$$
But
$$\prod_{i=0}^{2\g-1}\brac{1-\frac{2k+i}{q}}\geq \brac{1-\frac{2k+2\g}{q}}^{2\g}=\bfrac{\e}{1+5\e}^{2\g}.$$
So
\begin{multline}\label{lem10}
\Pr(x,y\text{ are not rainbow connected})\leq \exp\set{-n^{4\e/5}\bfrac{\e}{1+5\e}^{2\g}}\\
=\exp\set{-n^{4\e/5-O(\log(1/\e)/\log\log n)}}.
\end{multline}
Using \eqref{eps} and the union bound taking \eqref{lem10} over all large $x,y$ completes the proof of 
Lemma \ref{Rainbowlem5}.
\proofend

\begin{figure}
  \centering
  \includegraphics[width=0.5\textwidth]{./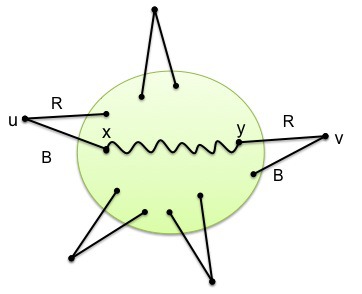}                
  \caption{\label{fig4:4a}Taking care of small vertices.}
  \label{fig:Rainbowfig4}
\end{figure}

\noindent We now finish the proof of Theorem~\ref{thrm:Rainbowmainthrm} i.e. take care of small vertices. 

We showed in Lemma~\ref{Rainbowlem5} that \whp\ for any two large vertices, a random coloring
results in a rainbow path joining them. We divide
the small vertices into two sets: vertices of degree 1, $V_1$ and the
vertices of degree at least 2, $V_2$. Suppose that our colors are $1,2,\ldots,q$
and $V_1=\set{v_1,v_2,\ldots,v_s}$. We begin by giving the edge incident with $v_i$ the color $i$.
Then we slightly modify the argument in Lemma \ref{Rainbowlem5}. If $x$ is the neighbor
of $v_i\in V_1$ then color $i$ cannot be used in Steps 1 and 2 of that procedure. In terms of {\red analysis} this 
replaces $q$ by $(q-1)$ ($(q-2)$ if $y$ is also a neighbor of $V_1$) and the argument is essentially unchanged
i.e. \whp\ there will be a rainbow path between each pair of large vertices.  
Furthermore, any path starting at $v_i$ can only use color $i$ once and so there will be rainbow paths between
$V_1$ and $V_1$ and between $V_1$ and the set of large vertices.

The set $V_2$ is treated by using only two extra colors. 
Assume that Red and Blue have not been used in our coloring. Then we use
Red and Blue to color two of the edges incident to a vertex $u\in V_2$ (the remaining edges are colored arbitrarily).
This is shown in Figure~\ref{fig4:4a}. Suppose that $V_2=\set{w_1,w_2,\ldots,w_t}$. Then if we want a rainbow path 
joining $w_i,w_j$ where $i<j$ then we use the red edge to go to its neighbor $w_i'$. Then we take the already  
constructed rainbow path to $w_j''$, the neighbor of $w_j$ via a blue edge. Then we can continue to $w_j$.
\end{proof}

\section{Proof of Theorem~\ref{thrm:Rainbowregular}} 
\label{sec:Rainbowregular} 

We first observe that simply randomly coloring the edges of $G=G(n,r)$ with $q=n^{o(1)}$ colors will 
not do. This is because there will \whp\ be {\red $\Omega(nq^{1-r^2})=\Omega(n^{1-o(1)})$} vertices $v$ where all 
edges at distance
at most two from $v$ have the same color.

We follow a similar 
strategy to the proof in Theorem \ref{thrm:Rainbowmainthrm}. We grow small trees $T_x$ from each vertex $x$. 
Then for a pair
of vertices $x,y$ we build disjoint trees on the 
leaves of $T_x,T_y$ so that \whp\ we can find edge disjoint paths between any set of leaves $S_x$ of 
$T_x$ and any set of leaves of $S_y$ of the same size. A bounded number of leaves of $T_x,T_y$
will be excluded from this statement. The main difference will
come from our procedure for coloring the edges. Because of the similarities, we will give a little less detail in
the common parts of our proofs. We are in effect
talking about building a structure like that shown in Figure \ref{fig:Rainbowfig2}. There is one difference, we will have
to take care of which leaves of $T_x$ we pair with which leaves of $T_y$, for a pair of vertices $x,y$.

Having grown the trees, we have the problem of coloring the edges.
Instead of independently and randomly coloring the edges, we use a greedy algorithm that produces
a coloring that is guaranteed to color edges differently, if they are close. This will guarantee that 
the edges of $T_x$ are rainbow, for all vertices $x$. We then argue that we can find, for each vertex pair
$x,y,$ a partial mapping $g$ from the leaves of $T_x$ to the leaves of $T_y$ such that the path from $x$ to leaf
$v$ in $T_x$ and the path from $y$ to leaf $g(v)$ in $T_y$ do not share a color. This assumes that $v$ has an image under
the partial mapping $g$. We will have to argue that $g$ is defined on enough vertices in $T_x$. Given this,
we then consider the colors on a set of edge disjoint paths that we can construct from the leaves of $T_x$ to their 
$g$-counterpart in the leaves of $T_y$.

We use the the
 configuration model of Bollob\'as \cite{bollobas2001random} in our proofs, see Chapter~\ref{sec:chapter1networkmodels}

\subsection{Tree building}
We will grow a Breadth First Search tree $T_x$ from each vertex. We will grow each tree to depth 
$$k=k_r=\begin{cases}\rdup{\log_{r-2}\log n}&r\geq 4.\\\rdup{2\log_2\log n-2\log_2\log_2\log n}&r=3.\end{cases}$$
Observe that 

\beq{neq3}
T_x\text{ has at most }r(1+(r-1)+(r-1)^2+\cdots+(r-1)^{k-1})=r\frac{(r-1)^k-1}{r-1}\text{ edges.}
\eeq
It is useful to observe that
\begin{lemma}\label{Rainbowdensity}
Whp, no set of $s\leq \ell_1=\frac{1}{10}\log_{r-1}n$ vertices contains more than $s$ edges.
\end{lemma}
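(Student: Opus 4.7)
The plan is a standard first moment computation in the configuration model introduced in Section~\ref{sec:chapter1networkmodels}. Working with the uniformly random pairing $F$ on $W=[rn]$, for $s\ge 2$ let $\cB_s$ be the event that some $S\subseteq[n]$ with $|S|=s$ spans at least $s+1$ edges in $G_F$. It suffices to prove $\sum_{s=2}^{\ell_1}\Pr[\cB_s]=o(1)$, since the cases $s\leq 1$ are vacuous. Because any event that holds \whp\ in $G_F$ also holds \whp\ in $G(n,r)$, transferring the conclusion to $G(n,r)$ is automatic.

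Fix an $s$-set $S$. The $rs$ configuration points lying in $\bigcup_{v\in S}W_v$ must participate in at least $s+1$ internal pairs of $F$. The number of ways to choose $2(s+1)$ of these points and then perfectly match them internally is at most $\binom{rs}{2(s+1)}(2s+1)!!$, and a fixed partial matching of $s+1$ pairs is contained in a uniformly random pairing of $W$ with probability at most $\prod_{i=0}^{s}(rn-2i-1)^{-1}\le (rn/2)^{-(s+1)}$, valid since $s=o(n)$. Using $\binom{rs}{2(s+1)}\le (rs)^{2(s+1)}/(2s+2)!$ together with $(2s+1)!!=(2s+2)!/(2^{s+1}(s+1)!)$, I obtain
\begin{equation*}
\Pr[e(S)\ge s+1]\;\le\;\frac{r^{s+1}s^{2s+2}}{(s+1)!\,n^{s+1}}.
\end{equation*}

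Now union-bound over the $\binom{n}{s}\le n^s/s!$ choices of $S$ and simplify using the Stirling-type lower bound $s!(s+1)!\ge s^{2s+1}e^{-2s-1}$. After the dust settles one is left, up to an absolute constant, with
\begin{equation*}
\Pr[\cB_s]\;\le\;\frac{r\,s\,(re^2)^s}{n}.
\end{equation*}
The sum $\sum_{s=2}^{\ell_1}\Pr[\cB_s]$ is therefore dominated by its last term, giving a total bound of order $\ell_1 r^2(re^2)^{\ell_1}/n$. With $\ell_1=\frac{1}{10}\log_{r-1}n$, we have $(re^2)^{\ell_1}=n^{\log(re^2)/(10\log(r-1))}$, so the bound is $o(1)$ provided $\log(re^2)<10\log(r-1)$, equivalently $re^2<(r-1)^{10}$.

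The only point actually requiring verification is this last inequality, and it is immediate for every integer $r\ge 3$: at $r=3$ we have $3e^2\approx 22.2<2^{10}=1024$, and $(r-1)^{10}/r$ grows without bound thereafter. I do not expect any real obstacle beyond bookkeeping; the constant $1/10$ in the definition of $\ell_1$ is deliberately wasteful so that a single choice of threshold works uniformly in $r$. A tighter analysis could push $\ell_1$ closer to $\log n/\log(r-1)$, but the stated value is enough for how the lemma is used later (e.g.\ in pruning bad edges during the BFS tree constructions), and so I would simply quote the convenient form above.
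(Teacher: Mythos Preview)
Your proposal is correct and takes essentially the same approach as the paper: a first-moment/union-bound computation in the configuration model, arriving at the same final estimate of order $\ell_1(e^2r)^{\ell_1}/n$ and the same verification that $re^2<(r-1)^{10}$ for all $r\ge 3$. The only cosmetic difference is that you count configuration-point pairs directly, whereas the paper selects $s+1$ vertex pairs and bounds $\Pr[e_i\in E(G_F)]$ by $r^2/(rn-rs)$; the two computations coincide up to constants.
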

\proofstart 
Indeed,
\begin{align}
\Pr(\exists S\subseteq [n],|S|\leq \ell_1,e[S]\geq |S|+1)&\leq 
\sum_{s=3}^{\ell_1}\binom{n}{s}\binom{\binom{s}{2}}{s+1}\bfrac{r^2}{rn-rs}^{s+1}\label{neq1}\\
&\leq \frac{r\ell_1}{n}
\sum_{s=3}^{\ell_1}\binom{n}{s}\binom{\binom{s}{2}}{s}\bfrac{r^2}{rn-rs}^{s}\nonumber\\
&\leq \frac{r\ell_1}{n}\sum_{s=3}^{\ell_1}\brac{\frac{ne}{s}\cdot \frac{se}{2}\cdot\frac{2r}{n}}^s\nonumber\\
&\leq \frac{r\ell_1}{n}\cdot\ell_1\cdot (e^2r)^{\ell_1}=o(1).\label{neq2}
\end{align}
{\bf Explanation of \eqref{neq1}:} The factor $\bfrac{r^2}{rn-rs}^{s+1}$ can be justified as follows. We can estimate
$$\Pr(e_1,e_2,\ldots,e_{s+1}\in E(G_F))=\\\prod_{i=0}^{s}\Pr(e_{i+1}\in E(G_F)\mid e_1,e_2,\ldots,e_i\in E(G_F))
\leq \bfrac{r^2}{rn-rs}^{s+1}$$
if we pair up the lowest index endpoint of each $e_i$ in some arbitrary order. The fraction $\frac{r^2}{rn-rs}$ 
is an upper bound
on the probability that this endpoint is paired with the other endpoint, regardless of previous pairings.
\proofend

Denote the leaves of $T_x$ by $L_x$.
\begin{corollary}\label{Rainbownlem1}
Whp, {\red $(r-1)^k\leq |L_x|\leq r(r-1)^{k-1}$ for all $x\in [n]$.}
\end{corollary}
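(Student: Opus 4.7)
The upper bound $|L_x|\leq r(r-1)^{k-1}$ is immediate: $T_x$ is a rooted tree of depth $k$ in a graph of maximum degree $r$, so it has at most $r(r-1)^{k-1}$ vertices at depth $k$. I focus on the lower bound $|L_x|\geq (r-1)^k$.

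My plan is to show that with high probability, for every vertex $x\in[n]$ the BFS process producing $T_x$ encounters at most one \emph{collision}, i.e., at most one instance in which BFS tries to add as a child a vertex already present in $T_x$. If this holds, then $T_x$ differs from the perfect branching tree (root of degree $r$, every other internal vertex with $r-1$ children) by at most one missing child-link. A missing link whose processed endpoint is at depth $d$ eliminates a subtree of $(r-1)^{k-d-1}\leq (r-1)^{k-1}$ leaves; since $r(r-1)^{k-1}-(r-1)^{k-1}=(r-1)^k$, one collision is always tolerable and yields $|L_x|\geq (r-1)^k$.

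I carry out the collision count in the configuration model, exposing pairings sequentially as the BFS proceeds. From \eqref{neq3}, $|T_x|\leq r\frac{(r-1)^k-1}{r-1}$, which equals $O((\log n)^{\theta_r})$ for $r\geq 4$ and $O((\log n)^2/(\log_2\log n)^2)$ for $r=3$; in either case $|T_x|=n^{o(1)}$. Each time a configuration point belonging to a vertex in $T_x$ is paired, the probability that its partner lies on another point in $T_x$ is at most $r|T_x|/(rn-2|T_x|)=O(|T_x|/n)$. Bounding the number of pairings involved in the BFS by $r|T_x|$ and taking a union bound over ordered pairs of such pairings, the probability that two or more collisions occur during the BFS from a fixed $x$ is at most $\binom{r|T_x|}{2}\cdot O(|T_x|/n)^2=O(|T_x|^4/n^2)=n^{-2+o(1)}$. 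A further union bound over $x\in[n]$ gives probability $n^{-1+o(1)}=o(1)$ that any $T_x$ sustains two or more collisions.

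The main obstacle is justifying carefully the ``one collision costs at most $(r-1)^{k-1}$ leaves'' claim. This requires a brief case check on whether the collision manifests as a back edge (to an ancestor of the processed vertex) or as a cross edge (to a vertex at equal or neighbouring depth in $T_x$); in each case exactly one prospective child of the processed vertex fails to materialise, removing precisely one subtree of size $(r-1)^{k-d-1}$. As an independent cross-check at small depths, Lemma \ref{Rainbowdensity} directly supplies that whp every vertex set of size at most $\ell_1$ contains at most one cycle, which is enough to control the initial portion of the BFS; the configuration-model bound above is what is needed to extend this control all the way to depth $k$, and combining it with the deterministic loss estimate completes the proof.
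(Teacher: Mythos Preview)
Your probabilistic argument is sound and gives a genuine alternative to the paper's route. The paper deduces ``at most one cycle in each $T_x$'' from Lemma~\ref{Rainbowdensity}: if there were two non-tree edges in $T_x$, the two fundamental cycles together with a tree path joining them would form a connected set of $O(k)\leq\ell_1$ vertices carrying at least $|S|+1$ edges, contradicting the lemma. You instead bound directly in the configuration model the probability that a single BFS exposes two or more colliding pairings, getting $O((r-1)^{4k}/n^2)=n^{-2+o(1)}$, and union bound over $x$. Both routes land on ``each $T_x$ has at most one non-tree edge whp''; yours is more hands-on, the paper's reuses the density lemma. (Your closing remark that Lemma~\ref{Rainbowdensity} only controls the initial portion of the BFS is a misreading: via the cycle-union trick it handles all of $T_x$, which is exactly the paper's one-line proof.)

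There is, however, a real flaw in your deterministic loss accounting. You write that one collision ``removes precisely one subtree of size $(r-1)^{k-d-1}$'' at the processed endpoint. But a single non-tree edge $\{u,v\}$ uses up one configuration point at \emph{each} endpoint: once the pairing is exposed, both $u$ and $v$ are left with only $r-2$ children (whenever they lie at depth $\le k-1$). The worst case is a same-level edge at depth $1$, which costs $2(r-1)^{k-2}$ leaves, not $(r-1)^{k-2}$. Your conclusion survives because $2(r-1)^{k-2}\leq (r-1)^{k-1}$ for all $r\geq 3$, so $r(r-1)^{k-1}-(r-1)^{k-1}=(r-1)^k$ still holds; but for $r=3$ this inequality is an equality and there is no slack to absorb the missing endpoint. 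The case analysis you flag as ``the main obstacle'' should track the depths of \emph{both} endpoints, not the back-edge/cross-edge distinction.
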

\proofstart
This follows from the fact that \whp\ the vertices spanned by each $T_x$ span at most one cycle. 
This in turn follows from Lemma \ref{Rainbowdensity}.
\proofend

\noindent Consider two vertices $x,y \in V(G)$ where $T_x\cap T_y=\emptyset$. We will show that \whp\ we can find 
a subgraph $G'(V',E'), V' \subseteq V, E'\subseteq E$ with 
similar structure to that shown in Figure~\ref{fig:Rainbowfig2}. Here $k=k_r$ and $\gamma=\brac{\frac12+\e}\log_{r-1}n$
for some small positive constant $\e$. 
\begin{remark}\label{Rainbowrem3}
In our analysis we expose the pairing $F$, only as necessary. For example
the construction of $T_x$ involves exposing all pairings involving non-leaves of $T_x$ and one pairing for each leaf.
There can be at most one exception to this statement, for the rare case where $T_x$ contains a unique cycle.
In particular, if we expose the point $q$ paired with a currently unpaired point $p$ of a leaf of $T_x$ then $q$ is chosen
randomly from the remaining unpaired points.
\end{remark}
Suppose that we have 
constructed $i=O(\log n)$ {\red vertex disjoint trees of depth $\g$ rooted at some of the leaves 
of $T_x$}. We grow the $(i+1)$st tree $\hT_z$ via BFS, without using edges that go into $y$ or previously 
constructed trees. 
Let a leaf $z\in L_x$ be {\em bad} if we have to omit a single edge as we construct the first $\ell_1/2$ levels 
of $\hT_z$.
The previously constructed
trees plus $y$ account for $O(n^{1/2+\e})$ vertices and pairings, so the probability that $z$ is bad, given all
the pairings we have exposed so far, is at most 
$O((r-1)^{\ell_1/2}n^{-1/2+\e})=O(n^{-1/3})$. Here bad edges can only join two leaves. This probability bound holds
regardless of whichever other vertices are bad. This follows from the way we build the pairing $F$, see the final
statement of Remark \ref{Rainbowrem3}. So \whp\ there will be at most 3 bad leaves on any 
$T_x$. Indeed, $\Pr(\exists x:x\text{ has }\geq 4
\text{ bad leaves})\leq n\binom{O(\log n)}{4}n^{-4/3}=o(1)$. 

If a leaf is not bad then the first $\ell_1/2$ 
levels produce $\Theta(n^{1/20})$ leaves. 
From this, we see that \whp\ the next $\g-\ell_1$ levels grow at a rate $r-1-o(n^{-1/25})$. Indeed, given that a 
level has $L$ vertices where 
$n^{1/20}\leq L\leq n^{3/4}$, the number of vertices in the next level dominates 
$Bin\brac{(r-1)L,1-O\bfrac{n^{3/4}}{n}}$,
after accounting for the configuration points used in building previous trees. 
Indeed, $(r-1)L$ configuration points associated with good leaves will be unpaired and for each of them,
the probability it is paired with a point associated with a vertex in any of the trees constructed so far is
$O(n^{1/2+2\e}/n)$. This probability bound holds regardless of the pairings of the other leaf configuration points.  
We can thus assert that \whp\ we will have that all but at most three of the leaves $L_x$ of
$T_x$ are roots of vertex disjoint trees $\hT_1,\hT_2,\ldots,$ 
each with $\Theta(n^{1/2+\e/2})$ leaves. Let $L_x^*$ denote these {\em good} leaves.
The same analysis applies when we build trees $\hT_1',\hT_2',\ldots,$ with roots at $L_y$. 

Now the probability that there is no edge joining the leaves of $\hT_i$ to the leaves of $\hT_j'$ is at most
$$\brac{1-\frac{(r-1)\Theta(n^{1/2+\e/2})}{rn}}^{(r-1)
n^{1/2+\e/2}}\leq e^{-\Omega(n^{\e})}.$$ 
To summarise,
\begin{remark}\label{Rainbowrem2}
{\em Whp} we will succeed in finding in $G_F$ and hence in $G=G(n,r)$, for all $x,y\in V(G_F)$,
for all $u\in L_x^*,v\in L_y^*$, a path $P_{u,v}$ from $u$ to $v$ of length
$O(\log n)$ such that if $u\neq u'$ and $v\neq v'$ then
$P_{u,v}$ and $P_{u',v'}$ are edge disjoint. These paths avoid $T_x,T_y$ except at their start and endpoints.
%vertices of the graph $G_F$, and hence in $G=G(n,r)$ too.}
\end{remark}
\subsection{Coloring the edges}
We now consider the problem of coloring the edges of $G$. Let $H$ denote the line graph of 
$G$ and let $\G=H^{2k}$ denote the 
graph with the same vertex set as $H$ and an edge between
vertices $e,f$ of $\G$ if there there is a path of length at most $k$ between $e$ and $f$ in $H$. We will
construct a proper coloring
of $\G$ using 
$$q=10(r-1)^{2k}\sim100\log^{2\th_r}n\text{ where }\th_r=\frac{\log (r-1)}{\log (r-2)}$$ 
colors. We do this as follows: Let $e_1,e_2,\ldots,e_m$ be an 
arbitrary ordering of the vertices of $\G$. For $i=1,2,\ldots,m$, color $e_i$ with a random color, chosen 
uniformly from the set of colors not currrently
appearing on any neighbor in $\G$. At this point only $e_1,e_2,\ldots,e_{i-1}$ will have been colored.

Suppose then that we color the edges of $G$ using the above method. Fix a pair of vertices $x,y$ of $G$.
We see immediately, that no color appears twice in $T_x$ and no color appears twice in $T_y$. 
This is because the distance between edges in $T_x$ is at most $2k$. 
This also deals with the case where $V(T_x)\cap V(T_y)\neq\emptyset$, for the same reason. So assume now that 
$T_x,T_y$ are vertex disjoint.
We can find lots of paths joining $x$ and $y$. We know that the first and last $k$ edges of each
path will be individually rainbow colored. We will first show that we have many choices of path where these $2k$ edges 
are rainbow colored
when taken together. 

\subsection{Case 1: $r\geq 4$:}
We argue now that we can  
find $\s_0=(r-2)^{k-1}$ leaves 
$u_1,u_2,\ldots,u_\t\in T_x$ and $\s_0$ leaves $v_1,v_2,\ldots,v_\t\in T_y$
such for each $i$ the $T_x$ path from $x$ to $u_i$ and the $T_y$ path from $y$ to $v_i$ do not share any colors. 
\begin{lemma}\label{Rainbowlemcol}
Let $T_1,T_2$ be two vertex disjoint copies of an edge colored complete $d$-ary tree with $\ell$ levels, where 
$d\geq 3$. Let $T_1,T_2$ be rooted at $x,y$ respectively.
Suppose that the colorings
of $T_1,T_2$ are both rainbow. Let $\k=(d-1)^{\ell}$. Then there exist leaves $u_1,u_2,\ldots,u_\k$ of $T_1$ and 
leaves 
$v_1,v_2,\ldots v_\k$ of 
$T_2$ such that the following is true: If $P_i,P_i'$ are the paths from $x$ to $u_i$ in $T_1$ and from $y$ to $v_i$ in 
$T_2$ respectively, then 
$P_i\cup P_i'$ is rainbow colored for $i=1,2,\ldots,\k$.
\end{lemma}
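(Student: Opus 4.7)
The plan is to prove the lemma by induction on the depth $\ell$.

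For the base case $\ell = 1$, each of $T_1, T_2$ is a star: $T_1$ has $d$ edges at $x$ with distinct colors $\alpha_1, \ldots, \alpha_d$ (since $T_1$ is rainbow), and similarly $T_2$ has $d$ edges at $y$ with distinct colors $\beta_1, \ldots, \beta_d$. I would build the auxiliary bipartite graph on the children of $x$ versus the children of $y$, placing an edge iff the corresponding edge colors in $T_1, T_2$ differ. Because a given color on one side coincides with at most one color on the other side, every vertex of this auxiliary graph has degree at least $d - 1$; hence it contains a perfect matching by König's theorem, from which any $d - 1 = \kappa$ pairs may be selected.

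For the inductive step, assume the claim holds for depth $\ell - 1$. Let $x_1, \ldots, x_d$ denote the children of $x$, let $\alpha_1, \ldots, \alpha_d$ be the distinct colors of the edges $(x, x_i)$, and let $S_i$ be the subtree rooted at $x_i$; define $y_j, \beta_j, S'_j$ analogously in $T_2$. First I would match $d - 1$ of the $x$-children to $d - 1$ of the $y$-children via an injection $\pi$ so that $\alpha_i \neq \beta_{\pi(i)}$ for each $i$ in the matched set, again via the base-case matching argument. For each of the $d - 1$ matched subtree pairs $(S_i, S'_{\pi(i)})$, the inductive hypothesis produces $(d - 1)^{\ell - 1}$ pairs of leaves $(u, v)$ with color-disjoint subtree paths; prepending the root edges gives a total of $(d - 1) \cdot (d - 1)^{\ell - 1} = (d - 1)^\ell = \kappa$ candidate leaf pairs in $T_1 \times T_2$.

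The hard part will be ensuring that each candidate pair is actually rainbow when the root edges are included: a priori, the top-level color $\alpha_i$ might reappear somewhere inside $S'_{\pi(i)}$ along the inductively selected path, and symmetrically for $\beta_{\pi(i)}$ inside $S_i$. To handle this I would strengthen the inductive statement to allow a short list of ``forbidden'' colors on each side, representing the ancestor-edge colors from the portion of the tree above the current subtree. Because the colorings are rainbow, each forbidden color appears on at most one edge of the corresponding subtree and therefore prunes at most one descendant at any given level; a careful accounting (together with the fact that $d \geq 3$, which provides enough room to absorb the deletions) confirms that the $(d - 1)^\ell$ count survives. The bulk of the technical work lies in this bookkeeping, namely verifying that at each level of the induction the number of viable matched subtree pairs remains at least $d - 1$ even in the presence of the propagated forbidden colors.
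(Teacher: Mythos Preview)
Your skeleton (induction on $\ell$, bipartite matching between the children of $x$ and the children of $y$) is exactly right and matches the paper. The gap is in the matching criterion. You match $(i,j)$ only when $\alpha_i\neq\beta_j$, i.e.\ the two root edges have different colors, and then propose to carry a growing list of forbidden colors down the recursion. The difficulty is that these forbidden colors accumulate: at recursion depth $t$ the forbidden list on (say) the $T_1$ side has size $t$, and since each forbidden color may land on one of the $d$ child edges of the current $T_1$-subtree, the adversary can in principle kill up to $t$ of the $d$ children there. Once $t\ge 2$ you may have only $d-2$ usable children on one side, so a matching of size $d-1$ is impossible and the multiplicative count $(d-1)^\ell$ collapses. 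The slack ``$d\ge 3$'' buys you exactly one spare child, which absorbs the single forbidden color introduced at the first level but not the additional ones introduced at deeper levels. Your proposed strengthened hypothesis therefore does not close as stated.

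The paper's fix is to strengthen the matching condition rather than the inductive hypothesis. It puts an edge $(i,j)$ in the auxiliary bipartite graph $H$ only when the color $c_i=c(x,x_i)$ does not appear anywhere in the subtree $T_{2,j}$ (nor on $(y,y_j)$), and symmetrically $c_j'=c(y,y_j)$ does not appear anywhere in $T_{1,i}$ (nor on $(x,x_i)$). With this criterion, once you recurse into a matched pair $(T_{1,i},T_{2,j})$ the root edges you just prepended cannot conflict with \emph{any} path inside the subtrees, so the \emph{unstrengthened} inductive hypothesis applies directly and no forbidden colors need to be carried. The work then shifts to showing $H$ still has a matching of size $d-1$. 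Here one uses that, because $T_2$ is rainbow, each color $c_i$ lies in at most one of the sets $Q_j$ (so the $Q_j$ are pairwise disjoint), and symmetrically for the $Q_i'$; a short Hall-defect computation then gives $|N_H(S)|\ge |S|-1$ for every $S$, hence a matching of size $d-1$. That global matching condition---looking into the whole opposite subtree rather than just the top edge---is the key idea your proposal is missing.
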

\proofstart
Let $A_\ell$ be the minimum number of rainbow path pairs that we can find in any such pair of edge colored trees.
We prove that $A_\ell\geq (d-1)^\ell$ by induction on $\ell$. This is true trivially for $\ell=0$. 
Suppose that $x$ is incident with $x_1,x_2,\ldots,x_d$ and that the sub-tree rooted at $x_i$ is $T_{1,i}$ for 
$i=1,2,\ldots,d$.
Define $y_i$ and $T_{2,i},\,i=1,2,\ldots,d$ similarly with respect to $y$. Suppose that the color of the 
edge $(x,x_i)$ is 
$c_i$ for $i=1,2,\ldots,d$
and let $Q_x=\set{c_1,c_2,\ldots,c_d}$. 
Similarly, suppose that the color of the edge $(y,y_i)$ is 
$c_i'$ for $i=1,2,\ldots,d$
and let $Q_y=\set{c_1',c_2',\ldots,c_d'}$. 
Next suppose that $Q_j$ is the set of colors in $Q_x$ that appear on the 
edges $E(T_{2,j})\cup \{(y,y_j)\}$ .
The sets $Q_1,Q_2,\ldots,Q_d$ are pair-wise disjoint. 
Similarly, suppose that $Q_i'$ is the set of colors in $Q_y$ that appear on the 
edges $E(T_{1,i})\cup \{(x,x_i)\}$.
The sets $Q_1',Q_2',\ldots,Q_d'$ are pair-wise disjoint. 

Now define a bipartite graph $H$ with vertex set $A+B=[d]+[d]$ and an edge $(i,j)$ iff $c_i\notin Q_j$
and $c_j'\notin Q_i'$. We claim that if $S\subseteq A$ then its neighbor set $N_H(S)$ satisfies the inequality
\beq{HM}
d|S|-|N_H(S)|-|S|\leq |S|\cdot |N_H(S)|.
\eeq
Here the LHS of \eqref{HM} bounds from below, the size of the set $S:N_H(S)$ of edges between $S$ and $N_H(S)$. 
This is because there are at most 
$|S|$ edges missing from $S:N_H(S)$ due to $i\in S$ and $j\in N_H(S)$ and $c_i\in Q_j$. 
At most $|N_H(S)|$ edges are missing
for similar reasons. On the other hand, $d|S|$ is the number there would be without these missing edges. The RHS
of \eqref{HM} is a trivial upper bound.

Re-arranging we get that 
$$|N_H(S)|-|S|\geq \rdup{\frac{(d-2-|S|)|S|}{|S|+1}}\geq -1.$$
(We get -1 when $|S|=d$).

Thus $H$ contains a matching $M$ of size $d-1$. Suppose without loss of generality that this matching is 
$(i,i),i=1,2,\ldots,d-1$.
We know by induction that for each $i$ we can find paths $(P_{i,j},\hP_{i,j}),\,j=1,2,\ldots,(d-1)^{\ell-1}$ 
where $P_{i,j}$ is a root to leaf path 
in $T_{1,i}$ and $\hP_{i,j}$ is a root to leaf path 
in $T_{2,i}$  and that $P_{i,j}\cup \hP_{i,j}$ is rainbow for all $i,j$. Furthermore, $(i,i)$ being an edge of $H$, 
means that the edge sets $\set{(x,x_i)}\cup E(P_{i,j})\cup E(\hP_{i,j})\cup \{(y,y_i\}$ are all rainbow. 
\proofend

Let 
$$V_1=\set{x:V(T_x)\text{ contains a cycle}}.$$ 
When $x,y\notin V_1$ we apply this Lemma to $T_x,T_y$ by deleting one of the $r$ sub-trees attached to each of $x,y$
and applying the lemma directly to the $(r-1)$-ary trees that remain. This will yield $(r-2)^k$ pairs
of paths. 
If $x\in V_1$, we delete $r-2$ sub-trees attached to $x$ leaving at least two $(r-1)$-ary trees of depth $k-1$
with roots adjacent to $x$. We can do the same at $y$. Let $c_1,c_2$ be the colors of the two edges 
from $x$ to the roots of these two trees $T_1,T_2$. Similarly, let $c_1',c_2'$ be the colors of the two analogous 
edges from $y$ to the trees $T_1',T_2'$. If color $c_1$ does not appear in $T_1'$ then we apply
the lemma to $T_1$ and $T_1'$. Otherwise, we can apply the lemma to $T_1$ and $T_2'$. In both cases we 
obtain  $(r-2)^{k-1}$ pairs
of paths.  

Accounting for bad vertices we put 
$$\s=\s_0-6=(r-2)^{k-1}-6\geq \frac{\log n}{r-2}-6$$ 
and we see {\red from Remark \ref{Rainbowrem2} 
that we can \whp\ find $\s$ paths $P_1,P_2,\ldots,P_\s$ 
of length $O(\log n)$ from $x$ to $y$. Path $P_i$ goes from $x$ to a leaf $u_i\in L_x^*$ via $T_x$ and then
traverses $Q_i=P(u_i,v_i)$ where {\rred $v_i=\f(u_i)\in L_y^*$} and then goes from $v_i$ to a $y$ via $T_y$. 
Here $\f$ is some partial map from $L_x^*$ to $L_y^*$. It is a random variable that depends on the coloring
$\cC$ of the edges of $T_x$ and $T_y$.
The paths $P_1,P_2,\ldots,P_\s$ depend 
on the choice of $\f$ and hence $\cC$ and so we should write $P_i=P_i(\cC)$.

We fix the coloring $\cC$ and hence $P_1,P_2,\ldots,P_\s$. Let $\cR$ be the event that at least one of the paths
$P_1,P_2,\ldots, P_\s$ is rainbow colored. 
We show that $\Pr(\neg\cR\mid\cC)$ is small.
%For each edge $e$ of $G$ we let $\t(e)$ be
%the index $i$ such that $e_i=e$. Thus the color of $e$ at the end of the algorithm is the color chosen at time $\t(e)$.
%Next  let $E_{Q,i}$ be the set of edges in $\bigcup_jQ_j$ that are within distance $i$ from $T_x\cup T_y$.

We let $c(e)$ denote the color of edge $e$ in a given coloring. 
We remark next that for a particular coloring $c_1,c_2,\ldots,c_m$ of the edges $e_1,e_2,\ldots,e_m$ we have
$$\Pr(c(e_i)=c_i,\,i=1,2,\ldots,m)=\prod_{i=1}^m\frac{1}{a_i}$$
where $q-\D\leq a_i\leq q$ is the number of colors available for the color of the edge $e_i$ 
given the coloring so far i.e. the number of colors
unused by the neighbors of $e_i$ in $\G$ when it is about to be colored.

Now fix an edge $e=e_i$ and the colors $c_j,\,j\neq i$. Let $C$ be the set of colors not used by the neighbors of $e_i$ in $\G$.
The choice by $e_i$ of its color under this conditioning is not quite random, but close. Indeed, we claim that for $c,c'\in C$
$$\frac{\Pr(c(e)=c\mid c(e_j)=c_j,\,j\neq i)}{\Pr(c(e)=c'\mid c(e_j)=c_j,\,j\neq i)}\leq \bfrac{q-\D}{q-\D-1}^\D.$$
This is because, changing the color of $e_i$ only affects the number of colors available to neighbors of $e_i$, and only by at most one.

Thus, for $c\in C$, we have
$$\Pr(c(e)=c\mid c(e_j)=c_j,\,j\neq i)\leq \frac{1}{q-\D}\bfrac{q-\D}{q-\D-1}^\D.$$
Now $\D\leq (r-1)^{2k}=q/10$ and we deduce that 
$$\Pr(c(e)=c\mid c(e_j)=c_j,\,j\neq i)\leq \frac{2}{q}.$$
It follows that for $i\in[\s]$,
$$\Pr(P_i\text{ is rainbow colored}\mid \cC,\text{ coloring of }\bigcup_{j\neq i}Q_j)\geq 
\brac{1-\frac{4(k+\g)}{q}}^{2\g}.$$
This is because when we consider the coloring of $Q_i$ there will always be at most $2k+2\g$ colors forbidden by 
non-neighboring edges, if it is to be rainbow colored.

It then follows that
\begin{align*}
\Pr(\neg\cR\mid \cC)&\leq \brac{1-\brac{1-\frac{4(k+\g)}{q}}^{2\g}}^{\s}\\
&\leq \bfrac{8\g(k+\g)}{q}^{\s}\\
&\leq \bfrac{(2+10\e)\log_{r-1}^2n}{10\log^{\th_r}n}^{\s}=o(n^{-2}).
\end{align*}
This completes the proof of Theorem \ref{thrm:Rainbowregular} when $r\geq 4$.

\noindent
{\bf Case 2: $r=3$:}\\
When $r=3$ we can't use $(r-2)^k$ to any effect. Also, we need to increase $q$ to $\log^4n$.
This necessary for a variety of reasons. One reason is that we will reduce $\s$ to $2^{k/2}$.
We want this to be $\Omega(\log n)$ and 
this will force $k$ to (roughly) double what it would have been if we had followed the recipe for $r\geq 4$.
This makes $\D$ close to $\log^4n$ and we need $q\gg\D$. 

And we need to modify the 
argument based on Lemma \ref{Rainbowlemcol}. Instead of inducting on the trees at depth one from the roots $x,y$, 
we now 
induct on the trees at depth two. Assume first that $x,y\notin V_1$.
After ignoring one branch for $T_x$ and $T_y$ we now consider the sub-trees $T_{x,i},T_{y,i},\,i=1,2,3,4$ of 
$T_x,T_y$ whose
roots $x_1,\ldots,x_4$ and $y_1,\ldots,y_4$ are at depth two. We cannot necessarily make this construction when 
$x\in V_1$.
Let $P_i$ be the path from $x$ to $x_i$ in $T_x$ and let $\hP_j$ be the path from $y$ to $y_j$ in $T_y$.
Next suppose that $\widehat{Q}_j$ is the set of colors in $Q$ that appear on the 
edges $E(T_{y,j})\cup E(\hP_j)$.
Similarly, suppose that $Q_i'$ is the set of colors in $Q'$ that appear on the 
edges $\{E(T_{x,i})\cup E(P_i)\}$. 

Re-define $H$ to be the bipartite graph with vertex set $A+B=[4]+[4]$. The edges of $H$ are as before:
$(i,j)$ exists iff $c_i\notin Q_j$ and $c_j'\notin \widehat{Q}_i$. This time we can only say that a color is in at 
most 
two $\widehat{Q}_i$'s 
and
similarly for the $Q_j'$'s.
The effect of this is to replace \eqref{HM} by
$$4|S|-2(|N_H(S)|+|S|)\leq |S|\cdot |N_H(S)|$$
from which we can deduce that
$$|S|-|N_H(S)|\leq \frac{|S|\cdot |N_H(S)|}{2}\leq 2|N_H(S)|.$$
It follows that $|N_H(S)|\geq \rdup{|S|/3}\geq |S|-2$ and so $H$ contains a matching of size two. 
An inductive argument then shows that we are able to find $2^{\rdown{k/2}}$ 
rainbow pairs of paths. The proof now continues as in the case $r\geq 4$, arguing about the coloring of
paths $P_1,P_2,\ldots,P_\s$ where now $\s=2^{\rdown{k/2}}$. 

We finally deal with the vertices in $V_1$. 
We classify them according to the size of the cycle $C_x$ that is contained in $V(T_x)$.
If $T_x$ contains a cycle $C_x$ then necessarily 
$|C_x|\leq 2k$ and so there are at most $2k$ types in our classification. 
It follows from Lemma \ref{Rainbowdensity} that if $x,y\in V_1$ and $T_x\cap T_y\neq\emptyset$
then $C_x=C_y$ \whp. Note next that the distance from $x$ to $C_x$ is at most $k-|C_x|/2$. If $C$ is a cycle
of length at most $2k$, let $V_C=\set{x:C=C_x}$ and let $E_C$ be the set of edges contained in $V_C$. We have
\beq{VK}
|V_C|=O( |C|2^{k-|C|/2})=O(2^k)=O(\log^2n/\log\log n).
\eeq
We introduce $2k$ new sets $\hC_i,i=3,4,\ldots,2k$ of
$O(\log^2n/\log\log n)$ colors, distinct from $Q$. Thus we introduce $O(\log^2n)$ new colors overall.
We re-color each $E_C$ with the colors from $\hC_{|C|}$. It is important to observe that if $|C|=|C'|$ then the graphs 
induced by
$V_C$ and $V_{C'}$ are isomorphic and so we can color them isomorphically. By the latter we mean that we choose some 
isomorphism $f$ from $V_C$ to $V_{C'}$ and then if $e$ is an edge
of $V_C$ then we color $e$ and $f(e)$ with the same color. After this re-coloring, we see that if $T_x$ and $T_y$ are 
not vertex disjoint,
then they are contained in the same $V_C$. The edges of $V_C$ are rainbow colored and so now we only need to concern 
ourselves with
$x,y\in V_1$ such that $T_x$ and $T_y$ are vertex disjoint. Assume now that $x,y\in V_1$.

Assume first that $x,y$ are of the same type and that they are at the same distance from $C_x,C_y$ respectively.
Our aim now is to define binary trees $T_x',T_y'$ ``contained`` 
in $T_x,T_y$ that can be used as in Lemma \ref{Rainbowlemcol}. 
If we delete an edge $e=(u,v)$ of $C_x$ then the graph that remains on $V(T_x)$
is a tree with at most two vertices $u,v$ of degree two. Now delete one of the three sub-trees of $T_x$.
If there are vertices of degree two, make sure one of them is in this sub-tree. 
If necessary, shrink the path of length two with the remaining vertex of degree two in the middle to an edge $e_x$. 
It has leaves at depth $k-1$ and leaves at depth $k-2$. 
The resulting binary tree will be our $T_x'$. The leaves at depth $k-1$ come in pairs. Delete one vertex from each 
pair
and shrink the paths of length two through the vertex at depth $k-2$ to an edge.

The edges that are obtained by shrinking paths of length two will have two colors. 
Because $x,y$ are at the same distance from their cycles, we can delete $f(e)$ from $C_y$ and do the construction so 
that
$T_x'$ and $T_y'$ will be isomorphically colored.

It is now easy to find $2^{k-2}$ pairs of paths whose unions are rainbow colored. Each leaf of $T_x,T_y$
can be labelled by a $\{0,1\}$ string of length $k-2$. We pair string $\xi_1\xi_2\cdots\xi_{k-1}\xi_{k-2}$ in $T_x$
with $(1-\xi_1)\xi_2\cdots\xi_{k-1}\xi_{k-2}$ in $T_y$. The associated paths will have a rainbow union.
The proof now continues as in the case $r\geq 4$, arguing about the coloring of
paths $P_1,P_2,\ldots,P_\s$ where now $\s=2^{k-2}$.

If $x$ is further from $C_x$ than $y$ is from $C_y$ then let $z$ be the vertex on the path from $x$ to $C_x$ at the
same distance from $C_x$ as $y$ is from $C_y$. We have a rainbow path from $z$ to $y$ and adding the $T_x$ path 
from $x$ to $z$
gives us a rainbow path from $x$ to $y$. This relies on the fact that $V_{C_x}$ and $V_{C_y}$ are isomorphically
colored.

If $x,y$ are of a different type, then $T_x$ and $T_y$ are re-colored with distinct colors and we can proceed as 
as in the case $r\geq 4$, arguing about the coloring of
paths $P_1,P_2,\ldots,P_\s$ where now $\s=2^{k}$, using Corollary \ref{Rainbownlem1}.

If $x\in V_1$ and $y\notin V_1$ then we can proceed as if 
both are not in $V_1$.
This is because of the re-coloring of the edges of $T_x$. We can proceed as 
as in the case $r\geq 4$, arguing about the coloring of
paths $P_1,P_2,\ldots,P_\s$ where now $\s=2^{k}$, using Corollary \ref{Rainbownlem1}.

This completes our proof of Theorem \ref{thrm:Rainbowregular}.

\noindent We conclude this Chapter with mentioning that if the degree 
$r$ in Theorem \ref{thrm:Rainbowregular} is allowed to grow as fast 
as $\log n$ then one can prove a result closer to that of Theorem \ref{thrm:Rainbowmainthrm}.

\newpage 
%% CHAPTER 4:Random Apollonian networks  
\clearpage
\chapter{Random Apollonian networks}
\label{ranchapter}
\lhead{\emph{Random Apollonian networks}} 
\section{Model \& Main Results}
\label{sec:RANmainresults}

As we outlined in Chapter~\ref{introchapter} planar graphs model several significant types of 
spatial real-world networks such as power grids and  road networks. 
Despite the outstanding amount of work  on modeling real-world networks with random graph models 
\cite{aiello2000random, albert, borgs2010hitchhiker,borgs2007first,lattanzi2009affiliation,
leskovec2005realistic,mahdian2007stochastic,flaxman2006geometric,flaxman2007geometric,durrett2007random,fabrikant}, 
real-world planar graph generators have received considerably less attention. 
In this Chapter we focus on Random Apollonian Networks (RANs),
a popular random graph model for generating planar graphs with power law properties \cite{maximal}. 
Before we state our main results we briefly describe the model.

{\bf Model:} An example of a RAN is shown in Figure~\ref{fig:RANfig1}. At time $t=1$ 
the RAN is shown in Figure~\ref{fig:RANfig1}(a). At each step $t \geq 2$ 
a face $F$ is chosen uniformly at random among the faces of $G_t$. Let $i,j,k$
be the vertices of $F$. We add a new vertex inside $F$ and we connect it to $i,j,k$. 
Higher dimensional RANs also exist where instead of triangles we have
$k$-simplexes $k\geq 3$, see \cite{zhang}. 
It is easy to see that the number of vertices $n_t$, edges $m_t$ and faces $F_t$ at time $t \geq 1$ in a RAN $G_t$
satisfy:
$$n_t=t+3,~~ m_t=3t+3,~~ F_t=2t+1.$$ 
\noindent Note that a RAN is a maximal planar graph since for any planar graph $m_t \leq 3n_t - 6 \leq 3t+3$.

Surprisingly, despite the popularity of the model 
various important properties have been analyzed experimentally and heuristically with lack of rigor. 
In this Chapter, we prove the following theorems. % by using existing techniques \cite{alon,flaxman,mihail}.

\begin{theorem}[Degree Sequence] 
\label{thrm:RANdegreesequence} 

\noindent Let $Z_k(t)$ denote the number of vertices of degree $k$ at time $t$, $k \geq 3$.
 For any $t \geq 1$  and any $k \geq 3$ there exists a constant $b_k$ depending on $k$ such that 

$$ |\Mean{ Z_k(t) } - b_k t| \leq K, \text{~~where~~} K=3.6.$$

\noindent Furthermore, for $t$ sufficiently large and any $\lambda > 0$ 

\begin{equation}
\label{eq:degree}
\Prob{|Z_k(t) - \Mean{Z_k(t)}| \geq \lambda } \leq e^{-\frac{\lambda^2}{72t}}.
\end{equation} 

\end{theorem}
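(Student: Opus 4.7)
My plan is to derive the expectation from a one-step recurrence combined with Lemma~\ref{lem:RANchunglubook}, and then obtain the tail bound via a Doob martingale and Azuma--Hoeffding (Lemma~\ref{lem:RANazuma}). The structural fact I will exploit is that in a RAN a vertex of degree $d$ is incident to exactly $d$ triangular faces: a newly inserted vertex has degree $3$ and lies in $3$ faces, and every subsequent degree increment of a vertex $v$ corresponds to one of $v$'s incident faces being subdivided, which raises both its degree and its face-count by $1$. Since $F_t = 2t+1$ is deterministic and the subdivided face is chosen uniformly, a vertex of degree $d$ at time $t$ has its degree incremented at step $t+1$ with probability exactly $d/(2t+1)$.

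For the expectation, the above yields, for $k \geq 4$,
\[
\Mean{Z_k(t+1)} = \Big(1 - \tfrac{k}{2t+1}\Big)\Mean{Z_k(t)} + \tfrac{k-1}{2t+1}\Mean{Z_{k-1}(t)},
\]
while $\Mean{Z_3(t+1)} = (1 - 3/(2t+1))\Mean{Z_3(t)} + 1$, where the $+1$ records that every step creates exactly one new degree-$3$ vertex. Lemma~\ref{lem:RANchunglubook} applied to the $k=3$ recurrence (with $b = 3/2$, $c = 1$) gives $b_3 = 2/5$, and induction on $k$ yields $b_k = (k-1)b_{k-1}/(k+2)$, which solves to $b_k = 24/(k(k+1)(k+2))$, the expected power law. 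To lift the asymptotic $\Mean{Z_k(t)} \sim b_k t$ to the uniform bound with an explicit constant, I will introduce $\Delta_k(t) := \Mean{Z_k(t)} - b_k t$ and verify, using the identity $(k-1)b_{k-1} = (k+2)b_k$, that
\[
\Delta_k(t+1) = \Big(1 - \tfrac{k}{2t+1}\Big)\Delta_k(t) + \tfrac{k-1}{2t+1}\Delta_{k-1}(t) - \tfrac{b_k}{2t+1}.
\]
Since the forcing term is $O(1/t)$ and the contraction $\prod_{s \leq t}(1 - k/(2s+1))$ decays at least as a power of $t$, a careful bookkeeping induction (base case $k = 3$, where the $\Delta_{k-1}$ term is absent) yields $|\Delta_k(t)| \leq K$ uniformly in $t$ and $k$, with $K = 3.6$ emerging by tracking the numerical constants. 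For $k$ so large that $b_k t < 1$ the inequality is immediate from $0 \leq Z_k(t) \leq n_t$.

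For the concentration, let $T_i \in \{1,\ldots,2i-1\}$ denote the face chosen at step $i$, set $\mathcal{F}_i = \sigma(T_1,\ldots,T_i)$, and define the Doob martingale $M_i = \Mean{Z_k(t) \mid \mathcal{F}_i}$, so that $M_0 = \Mean{Z_k(t)}$ and $M_t = Z_k(t)$. To bound $|M_i - M_{i-1}|$ I will couple two runs of the process that agree through step $i-1$ but differ in the choice of $T_i$: only one vertex is placed in a different face, and by coupling the subsequent uniform choices so that corresponding faces are matched whenever possible, the two face sets at each later step can be kept within symmetric difference $O(1)$, so that $Z_k$ differs by at most $6$ between the two coupled graphs. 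With Lipschitz constant $c = 6$, Lemma~\ref{lem:RANazuma} delivers exactly the tail $\Prob{|Z_k(t) - \Mean{Z_k(t)}| \geq \lambda} \leq \exp(-\lambda^2 / (72t))$ claimed in \eqref{eq:degree}.

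The main obstacle is justifying the Lipschitz constant rigorously. A single altered face choice $T_i$ perturbs the face set available at every subsequent step, and a naive estimate would allow the effect on $Z_k$ to grow linearly in $t - i$, which would ruin the concentration window. The coupling must be chosen so that downstream divergences are absorbed almost immediately---identifying all but $O(1)$ of the faces between the two processes at every subsequent time---thereby pinning $|M_i - M_{i-1}|$ to an absolute constant independent of $t$.
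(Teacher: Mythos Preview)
Your proposal is correct and follows essentially the same route as the paper: the one-step recurrences for $\Mean{Z_k(t)}$, the identity $(k-1)b_{k-1}=(k+2)b_k$ to control the error terms $\Delta_k(t)$ by induction on $k$, and the Doob martingale plus Azuma--Hoeffding with an isomorphic coupling of the two face-subdivision processes to get Lipschitz constant $6$. Two small sharpenings worth noting: the constant $K=3.6$ is not the outcome of loose bookkeeping but is exactly $|N_3(1)-b_3\cdot 1| = 4 - 2/5$, so the base-case inequality is tight at $t=1$; and in the coupling, the key is that the two evolving graphs are \emph{globally isomorphic} after matching the subdivision trees inside the two special faces, so the only vertices whose degree can differ are the at most six boundary vertices of those two faces---this is what pins the Lipschitz constant to exactly $6$ rather than merely $O(1)$.
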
 

\noindent For previous weaker results on the degree sequence see \cite{comment,maximal}.
An immediate corollary which proves strong concentration of $Z_k(t)$ around its expectation
is obtained from Theorem~\ref{thrm:RANdegreesequence} and a union bound by setting $\lambda=10\sqrt{t\log{t}}$. 
Specifically: 

\begin{corollary}
For all possible degrees $k$ $$\Prob{|Z_k(t) -  \Mean{Z_k(t)} | \geq 10\sqrt{t\log{t}}} = o(1).$$
\end{corollary}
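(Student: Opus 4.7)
The plan is to apply the concentration inequality~\eqref{eq:degree} from Theorem~\ref{thrm:RANdegreesequence} with the specific choice $\lambda = 10\sqrt{t \log t}$, then take a union bound over all degree values $k$ that can occur in $G_t$.

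First I would substitute $\lambda = 10\sqrt{t\log t}$ into the right-hand side of \eqref{eq:degree} to obtain, for each fixed $k\geq 3$,
\[
\Prob{|Z_k(t) - \Mean{Z_k(t)}| \geq 10\sqrt{t\log t}} \;\leq\; \exp\!\left(-\frac{100\, t \log t}{72 t}\right) \;=\; t^{-100/72} \;=\; t^{-25/18}.
\]
Next I would note that in $G_t$ the number of vertices is $n_t = t+3$, so the number of distinct values of $k$ for which $Z_k(t)$ can be nonzero is at most $t+3$ (since any degree appearing in $G_t$ lies in $\{3,4,\ldots,t+2\}$). Hence it suffices to union bound over at most $t+3$ terms.

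Applying the union bound gives
\[
\Prob{\exists k:\; |Z_k(t) - \Mean{Z_k(t)}| \geq 10\sqrt{t\log t}} \;\leq\; (t+3)\cdot t^{-25/18} \;=\; O\!\left(t^{-7/18}\right) \;=\; o(1),
\]
which is exactly the claim. There is essentially no obstacle here: the work is already done in Theorem~\ref{thrm:RANdegreesequence}, and the only care needed is to bound the number of possible degrees by $n_t$ so that the polynomially-small single-$k$ tail bound $t^{-25/18}$ beats the linear factor from the union bound. Had the concentration in \eqref{eq:degree} produced a weaker tail (e.g.\ with a constant larger than roughly $100$ in the denominator of the exponent), one would need to enlarge $\lambda$ by a constant factor; with the given constants the choice $\lambda = 10\sqrt{t\log t}$ comfortably suffices.
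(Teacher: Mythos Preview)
Your proposal is correct and follows exactly the approach the paper indicates: substitute $\lambda = 10\sqrt{t\log t}$ into the bound from Theorem~\ref{thrm:RANdegreesequence} to get a per-$k$ tail of $t^{-25/18}$, then union bound over the at most $n_t = t+3$ possible degree values. The paper states only that the corollary ``is obtained from Theorem~\ref{thrm:RANdegreesequence} and a union bound by setting $\lambda=10\sqrt{t\log{t}}$,'' so your write-up is in fact more detailed than the paper's own justification.
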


\noindent The next theorem provides insight into the asymptotic growth of the highest degrees of RANs
and is crucial in proving Theorem~\ref{thrm:RANthrm2}. 

\begin{figure*}
  \centering
   \begin{tabular}{cccc}%@{}c@{}@{\ }c@{}}
  \includegraphics[width=0.2\textwidth]{./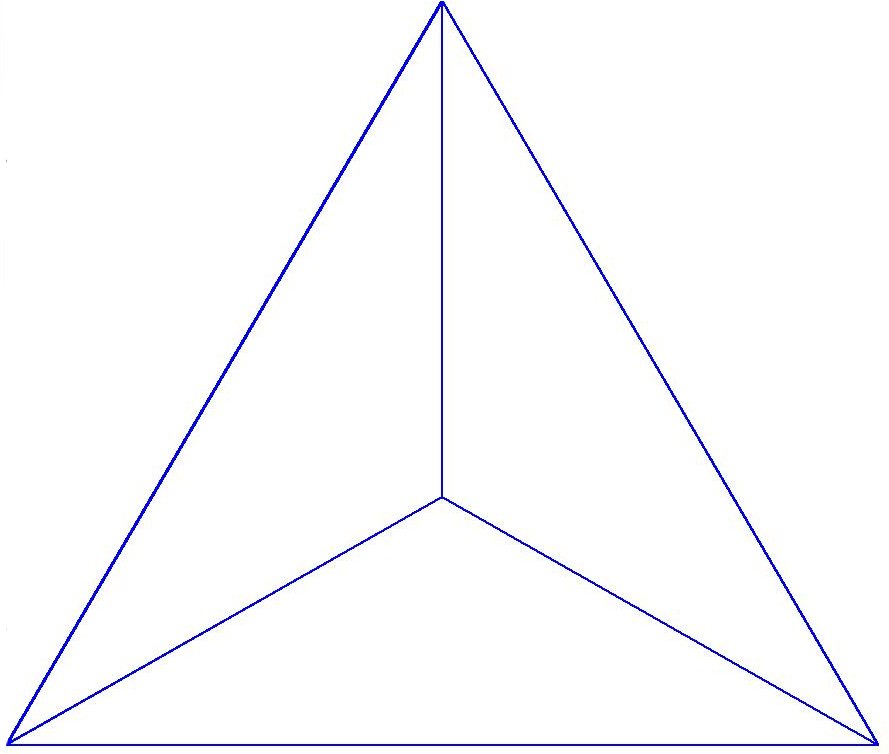} &
  \includegraphics[width=0.2\textwidth]{./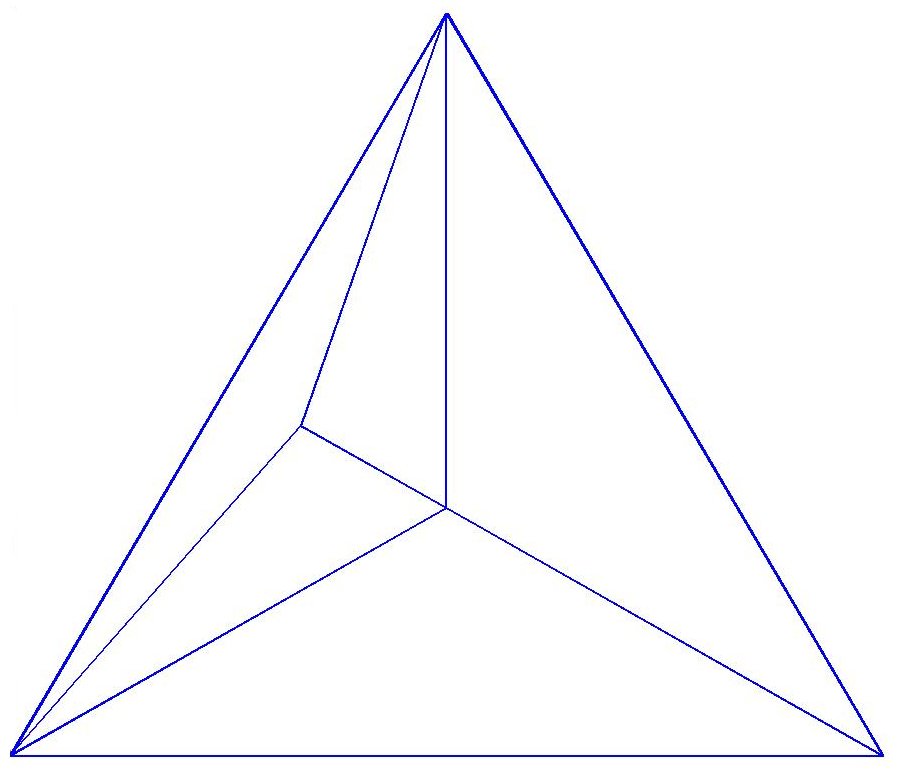} &
  \includegraphics[width=0.2\textwidth]{./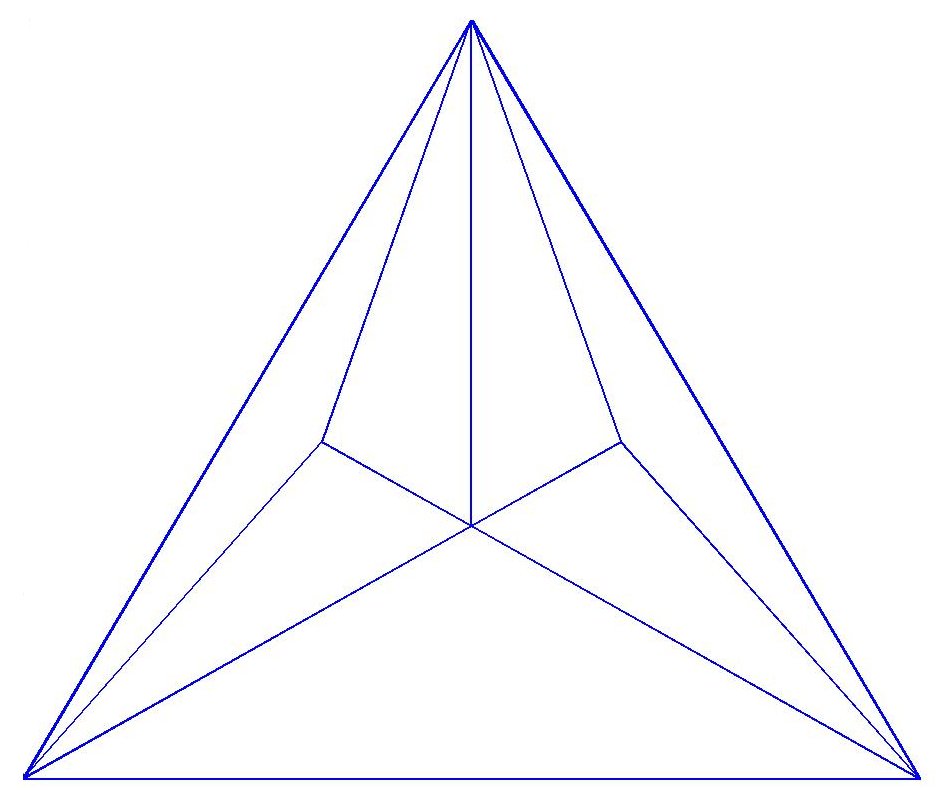} &
  \includegraphics[width=0.2\textwidth]{./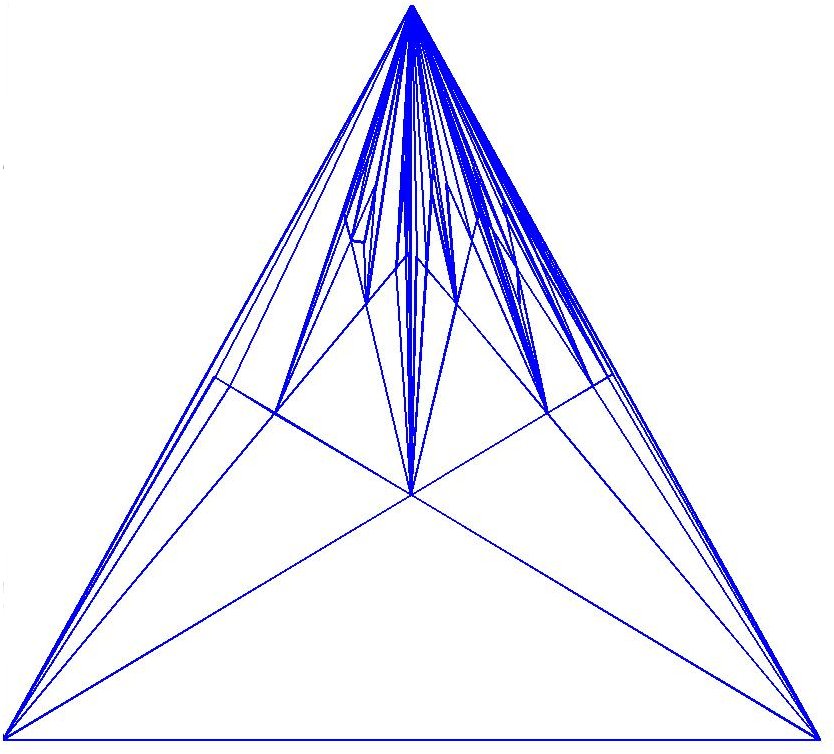} \\ 
   (a) & (b) & (c)  & (d) \\
  \end{tabular} 
  \caption{\label{fig:RANfig1}Snapshots of a Random Apollonian Network (RAN) at: (a) $t=1$ (b) $t=2$ (c) $t=3$ (d) $t=100$. }
\end{figure*}

\begin{theorem}[Highest Degrees]
Let $\Delta_1 \geq \Delta_2 \geq \ldots \geq \Delta_k$ be the $k$ highest degrees of the RAN $G_t$ 
at time $t$ where $k$ is a fixed positive integer. Also, let $f(t)$ be a function such that $f(t) \rightarrow +\infty$ as $t \rightarrow +\infty$. 
Then \whp\footnote{An event $A_t$ holds with high probability (\whp) if $\displaystyle\lim_{t \to +\infty} \Prob{A_t}=1$.}

$$ \frac{t^{1/2}}{f(t)} \leq \Delta_1 \leq t^{1/2} f(t) $$ 

\noindent and for $i=2,\ldots,k$

%$$ \frac{t^{1/2}}{f(t)} \leq \Delta_i \leq \Delta_{i-1} - \frac{t^{1/2}}{f(t)}. $$ 
$$  \Delta_{i-1} - \Delta_i \geq \frac{t^{1/2}}{f(t)}. $$ 
\label{thrm:RANthrm1} 
\end{theorem}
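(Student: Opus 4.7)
The plan is to track the degree of each individual vertex via a Doob-type martingale and factorial-moment techniques, in the spirit of the classical analysis of preferential-attachment models. A structural observation is that at time $s$ the RAN has $2s+1$ internal triangular faces, and a vertex $v$ (other than the three initial outer ones) of degree $d(v)$ lies on exactly $d(v)$ of them; hence, writing $d_i(s)$ for the degree of vertex $i$ at time $s \geq i$, the one-step evolution is $\Prob{d_i(s+1)-d_i(s)=1 \mid \mathcal{F}_s} = d_i(s)/(2s+1)$. A short recurrence for the ascending factorial moment then yields, via Lemma \ref{lem:RANchunglubook} or direct $\Gamma$-function asymptotics,
$$\Mean{\binom{d_i(t)+k-1}{k}} = \binom{k+2}{k}\prod_{s=i}^{t-1}\frac{2s+1+k}{2s+1} \sim \binom{k+2}{k}\bfrac{t}{i}^{k/2},$$
so in particular $\Mean{d_i(t)} \sim 3\sqrt{t/i}$, the heuristic one wants to confirm.

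For the upper bound $\Delta_1 \leq t^{1/2} f(t)$ I would apply Markov's inequality to a fixed high factorial moment (say $k=4$): for each vertex $i$, $\Prob{d_i(t) \geq \sqrt{t}\,f(t)} \leq C_k (t/i)^{k/2}/(\sqrt{t}f(t))^k = C_k'/(i^{k/2}f(t)^k)$, and summing over $i\in[t]$ gives $O(f(t)^{-k})=o(1)$. For the lower bound $\Delta_1 \geq t^{1/2}/f(t)$ I would use the nonnegative martingale $M_1(s) := d_1(s)\prod_{r=1}^{s-1}\frac{2r+1}{2r+2}$, show it converges almost surely to a limit $M_1(\infty)>0$ (using the $k=2$ factorial moment above to bound $\Mean{M_1(\infty)^2}$ and, together with a second-moment anti-concentration argument, to rule out atoms at zero), and translate back via $\prod_{r=1}^{s-1}\frac{2r+1}{2r+2} \sim c/\sqrt{s}$.

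For the separations $\Delta_{i-1}-\Delta_i \geq t^{1/2}/f(t)$, the tail bound above shows that \whp\ the top $k$ degrees are attained by vertices of index at most some constant $A=A(k)$, since a vertex with index $i > A$ satisfies $d_i(t) = O(\sqrt{t/A}) \ll \sqrt{t}$ whp for $A$ large. For those $O(1)$ many small-index vertices, the normalized martingale limits $M_j(\infty)/\sqrt{j}$ are almost surely distinct and free of atoms; anti-concentration then gives $\Prob{|d_i(t)-d_j(t)|\leq \sqrt{t}/f(t)} = o(1)$ for any fixed pair $i\neq j$, and a union bound over the $O(A^2)$ pairs closes the argument.

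The main obstacle I expect is the anti-concentration step: while the upper-tail control and the moment computations are routine factorial-moment calculations, establishing that the joint limiting law of $(M_j(\infty))_{j\leq A}$ is diffuse enough to rule out two of the top-$k$ degrees being within $\sqrt{t}/f(t)$ of each other is the most delicate point. The mildness of the separation scale $\sqrt{t}/f(t)$ (rather than $\sqrt{t}$) keeps this tractable, but still requires either a density/absolute-continuity argument for P\'olya-type urn limits or a direct second-moment estimate on the joint increments.
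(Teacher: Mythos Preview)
Your outline is sound and follows the martingale-limit route used for classical preferential attachment (M\'ori, Flaxman--Frieze--Fenner), which is a genuinely different strategy from the paper's. The factorial-moment computation and the upper bound via Markov plus a union bound are essentially identical to what the paper does in its Lemma~\ref{lem:RANlemma6} and Lemmas~\ref{lem:RANlemma8}--\ref{lem:RANlemma9}. The divergence is in the lower bound and the separation step.

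For the lower bound, the paper does \emph{not} argue via $M_1(\infty)>0$ for a single vertex. Instead it introduces a \emph{supernode} consisting of the first $t_0=\log\log\log f(t)$ vertices and shows, via an explicit coupling (Lemma~\ref{lem:RANlemma7}), that the total degree of this supernode is at least $t_0^{1/4}\sqrt{t}$ \whp. Combined with the individual upper bounds, pigeonhole then forces at least $k$ early vertices to have degree $\geq t_0^{-1}\sqrt{t}$. This sidesteps entirely the question of whether the single-vertex martingale limit is a.s.\ positive, which you flag as needing a density or anti-concentration argument.

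For the separation, the paper again avoids the limiting-distribution route: it fixes two early vertices $s_1,s_2\le t_1=\log\log f(t)$ and bounds $\Prob{d_t(s_1)-d_t(s_2)=l}$ directly by a lengthy but elementary calculation (Lemma~\ref{lem:RANlemma11}), splitting into the cases $(s_1,s_2)\in E$ and $(s_1,s_2)\notin E$ and summing over the $O(t_1^2\sqrt{t}/f(t))$ relevant pairs $(s_1,s_2,l)$. Your plan to invoke diffuseness of the joint law of the $M_j(\infty)$ is cleaner conceptually, but as you note, establishing that diffuseness for the RAN dynamics is the real work; the paper trades that structural fact for explicit, if tedious, tail estimates.

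One small imprecision: you write that the top $k$ degrees lie among the first $A=A(k)$ vertices for a deterministic constant $A$. What you can actually get is that for every $\varepsilon>0$ there is $A=A(k,\varepsilon)$ with this property with probability $\geq 1-\varepsilon$; the paper uses the growing cutoff $t_1=\log\log f(t)$ for the same purpose. This is easily patched and does not affect your argument.
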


\noindent The growing function $f(t)$ cannot be removed, see \cite{flaxman}.
Using Theorem~\ref{thrm:RANthrm1} and the technique of Mihail and Papadimitriou \cite{mihail} 
we show how the top eigenvalues of the adjacency matrix representation of a RAN grow
asymptotically as $t \rightarrow +\infty$ \whp. 

\begin{theorem}[Largest Eigenvalues] 
Let $k$ be a fixed positive integer. Also, let $\lambda_1 \geq \lambda_2 \geq \ldots \geq \lambda_k$ be the largest $k$ eigenvalues
of the adjacency matrix of $G_t$. Then \whp ~$ \lambda_i =(1\pm o(1))\sqrt{\Delta_i}.$
\label{thrm:RANthrm2} 
\end{theorem}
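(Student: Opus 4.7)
The plan is to follow the Mihail--Papadimitriou technique as indicated, using Theorem~\ref{thrm:RANthrm1} as the structural input. Let $v_1,\ldots,v_k$ realise $\Delta_1\geq\cdots\geq\Delta_k$; by Theorem~\ref{thrm:RANthrm1} we have \whp\ that $\Delta_i=\Theta(\sqrt t)$ and that consecutive gaps satisfy $\Delta_{i-1}-\Delta_i\geq \sqrt t/f(t)$. The overall strategy is to produce $k$ near-eigenvectors as normalised stars centred at the $v_i$'s and control the remainder by a matrix perturbation argument, so that the top of the spectrum of $A$ inherits the gap structure of the top of the degree sequence.

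For the lower bound, set $N_i=N(v_i)$ and $N'_i=N_i\setminus \bigcup_{j\neq i}N_j$. Because a RAN is planar, any two non-adjacent vertices share at most two common neighbours, so after a union bound over the $\binom{k}{2}$ pairs I expect $|N'_i|=(1-o(1))\Delta_i$ \whp. I would then define unit vectors $z^{(i)}$ with $z^{(i)}_{v_i}=1/\sqrt 2$, $z^{(i)}_u=1/\sqrt{2|N'_i|}$ for $u\in N'_i$, and zero elsewhere; disjointness of supports (after checking $v_i\notin N'_j$ for $j\neq i$) makes them orthonormal. A direct calculation yields $(z^{(i)})^{T} A z^{(i)}=\sqrt{|N'_i|}+|E(G[N'_i])|/|N'_i|$, and since $G[N'_i]$ is planar the second term is $O(1)=o(\sqrt{\Delta_i})$. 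Courant--Fischer then delivers $\lambda_i\geq (1-o(1))\sqrt{\Delta_i}$.

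For the upper bound I would enlarge the set of star centres to every vertex of degree at least $\sqrt t/f(t)^3$ -- by the tail bound $\#\{v:d(v)\geq d\}=O(t/d^2)$ implicit in Theorem~\ref{thrm:RANdegreesequence} there are only $O(f(t)^6)$ such vertices \whp\ -- and decompose $A=B+R$, where $B$ is the adjacency matrix of the disjoint union of stars at these centres on private neighbourhoods and $R$ collects the remaining edges. Then $\lambda_i(B)=(1+o(1))\sqrt{\Delta_i}$, while $R$ is planar with maximum degree at most $\sqrt t/f(t)^3$. The standard bound $\lambda_1(H)=O(\sqrt{\Delta(H)})$ for planar $H$ (a consequence of bounded arboricity) gives $\|R\|=O(t^{1/4}/f(t)^{3/2})$, which is much smaller than the eigenvalue spacing $\sqrt{\Delta_{i-1}}-\sqrt{\Delta_i}=\Omega(t^{1/4}/f(t))$. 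Weyl's inequality $|\lambda_i(A)-\lambda_i(B)|\leq \|R\|$ then closes the argument.

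The main obstacle I anticipate is the control of $\|R\|$: one must choose the degree threshold low enough to force the residual maximum degree to be $o(\sqrt t/f(t))$, yet high enough that only a negligible set of centres is promoted and the planarity-based spectral bound $\lambda_1(H)=O(\sqrt{\Delta(H)})$ remains available on the induced subgraph. A secondary subtle point is uniformly controlling $|N(v_i)\cap N(v_j)|$ over the top vertices and ruling out adjacencies among the $v_i$'s cleanly, so that the private neighbourhoods $N'_i$ are disjoint from $\{v_1,\ldots,v_k\}$ and of size $(1-o(1))\Delta_i$, ensuring the star vectors are genuinely orthonormal with Rayleigh quotient close to $\sqrt{\Delta_i}$.
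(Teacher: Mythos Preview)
Your template is the right one and matches the paper's: isolate a star forest $F$ whose $i$-th largest star has $(1-o(1))\Delta_i$ leaves, then show $\lambda_1(G-F)=o(t^{1/4})$ and conclude by Weyl. But there is a genuine gap in the co-degree step that underlies both your lower and your upper bound.

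The claim ``because a RAN is planar, any two non-adjacent vertices share at most two common neighbours'' is false. Planarity gives no such bound: $K_{2,n}$ is planar, and the bipyramid is a maximal planar graph in which the two non-adjacent poles share $n$ common neighbours. Concretely in a RAN, place vertex $5$ in the face $(1,2,4)$ of $G_1=K_4$; then $3$ and $5$ are non-adjacent with common neighbour set $\{1,2,4\}$. So you cannot read off $|N(v_i)\cap N(v_j)|\le 2$, and hence you have not established $|N'_i|=(1-o(1))\Delta_i$ for the Rayleigh-quotient lower bound, nor $\Delta(R)\le\sqrt t/f(t)^3$ at the centres for the upper bound. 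Everything downstream of these two estimates collapses.

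The paper gets co-degree control by a RAN-specific probabilistic argument rather than planarity. It partitions vertices by \emph{arrival time} into $S_1=[1,t^{1/8}]$, $S_2=(t^{1/8},t^{9/16}]$, $S_3=(t^{9/16},t]$, and proves that $|\{w\in S_3:|N(w)\cap S_1|\ge 2\}|\le t^{1/6}$ \whp. The mechanism is that a late vertex $w$ is adjacent to exactly the three vertices of the face it is placed in, so $w$ is a common neighbour of two $S_1$-vertices only if it lands in one of the $O(|S_1|)$ faces containing two $S_1$-vertices; summing $O(t_1)/(2s+1)$ over $s\ge t_2$ gives the bound. With this in hand the star forest $F$ on edges between $S_1$ and $S_3\setminus S_3'$ is a genuine disjoint union of stars of sizes $(1-o(1))\Delta_i$, and $H=G-F$ is controlled piecewise by degree bounds (Lemma~\ref{lem:RANlemma12}) across the six blocks $H[S_a]$, $H(S_a,S_b)$.

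Your idea to bound $\|R\|$ via $\lambda_1(\mathrm{planar})=O(\sqrt{\Delta})$ (arboricity $\le 3$ together with the tree bound $\lambda_1(T)\le 2\sqrt{\Delta-1}$) is a valid and rather elegant alternative to the paper's six-piece Gershgorin argument, and it would indeed give $\|R\|=O(t^{1/4}/f(t)^{3/2})=o(\sqrt{\Delta_k})$ once $\Delta(R)\le\sqrt t/f(t)^3$ is secured. But that last estimate is exactly where the missing co-degree bound is needed. If you replace the planarity claim by an arrival-time argument as above (or directly invoke the paper's Lemma~\ref{lem:RANlemma13}), your outline goes through.
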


\noindent Also, we show the following refined upper bound for the asymptotic growth of the diameter. 

\begin{theorem}[Diameter] 
The diameter $d(G_t)$ of $G_t$ satisfies in probability $ d(G_t) \leq \rho \log{t}$
where $\frac{1}{\rho}=\eta$ is the unique solution less than 1 of the equation $\eta - 1 - \log{\eta} = \log{3}$.
\label{thrm:RANternary} 
\end{theorem}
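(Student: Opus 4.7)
The plan is to exploit a bijection between a RAN and a random ternary tree grown by uniform leaf-subdivision: each face of $G_t$ corresponds to a leaf of a rooted ternary tree $\mathcal{T}_t$, initially $\mathcal{T}_1$ has three leaves (one per face), and at every subsequent step a leaf is picked uniformly at random from the $F_t=2t+1$ leaves and replaced by an internal node with three children leaves. Under this bijection each vertex $v$ inserted at step $s$ sits inside a face corresponding to a leaf of $\mathcal{T}_{s-1}$; write $d(v)$ for the depth of this leaf. A short induction on $d(v)$ shows that the graph distance from $v$ to at least one of the three original vertices is at most $d(v)+1$: a face at depth $d\geq 1$ always contains the vertex $w$ that was inserted when its parent face at depth $d-1$ was subdivided, and if by the inductive hypothesis $w$ is within graph distance $d$ of an original vertex, then $v$, being adjacent to $w$, is within graph distance $d+1$. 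Consequently $d(G_t)\leq 2\,\depth(\mathcal{T}_t)+O(1)$, and it suffices to bound $\depth(\mathcal{T}_t)$ from above.

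For this I would use a first-moment argument on $\mathcal{T}_t$. Let $Y_d(t)$ denote the number of leaves of $\mathcal{T}_t$ at depth $d$ and set $a_d(t)=\Mean{Y_d(t)}$. The subdivision dynamics give the recursion
\[
a_d(t+1)=a_d(t)\brac{1-\frac{1}{2t+1}}+\frac{3\,a_{d-1}(t)}{2t+1},
\]
which via the generating function $A(x,t)=\sum_{d\geq 1}a_d(t)\,x^d$ becomes $A(x,t+1)=A(x,t)\brac{1+\tfrac{3x-1}{2t+1}}$ with $A(x,1)=3x$. Solving the product yields
\[
A(x,t)=3x\prod_{s=1}^{t-1}\brac{1+\frac{3x-1}{2s+1}}\;\sim\;C(x)\,t^{(3x-1)/2}
\]
uniformly for $x$ in any fixed compact subset of $(0,\infty)$, where $C(x)$ depends only on $x$.

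Now combining Markov's inequality with the geometric tail bound $\sum_{k\geq d}a_k(t)\leq A(x,t)/x^d$, valid for any $x>1$, gives
\[
\Prob{\depth(\mathcal{T}_t)\geq d}\;\leq\;\frac{A(x,t)}{x^d}\;\leq\;C(x)\,t^{(3x-1)/2}\,x^{-d}.
\]
Writing $d=\alpha\log t$ and minimizing the exponent $\tfrac{3x-1}{2}-\alpha\log x$ over $x>1$, the optimum is attained at $x^{*}=2\alpha/3$ (which exceeds $1$ provided $\alpha>3/2$) with optimal value $h(\alpha):=\alpha-\tfrac12-\alpha\log(2\alpha/3)$. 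Thus $\depth(\mathcal{T}_t)\leq\alpha\log t$ \whp\ whenever $h(\alpha)<0$. Setting $\rho=2\alpha$ and $\eta=1/\rho$, the critical equation $h(\alpha)=0$ rearranges to $\eta-1-\log\eta=\log 3$, which is precisely the equation of the theorem; combining with $d(G_t)\leq 2\,\depth(\mathcal{T}_t)+O(1)$ gives $d(G_t)\leq\rho\log t$ in probability.

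The main technical obstacle is making the first-moment estimate tight enough to deliver the sharp constant $\rho$ rather than a loose upper bound: the asymptotic $A(x,t)\sim C(x)t^{(3x-1)/2}$ must be controlled uniformly for $x$ near $x^{*}$, and the variational step requires checking that the optimum lies in the interior $(1,\infty)$ of the admissible range rather than at the boundary. A secondary but non-trivial subtlety is verifying the bijection carefully enough that $\depth(\mathcal{T}_t)$ really upper-bounds graph eccentricity up to an additive constant, so that the logarithmic coefficient $\rho$ in the diameter exactly matches twice the coefficient in the height of $\mathcal{T}_t$.
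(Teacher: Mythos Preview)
Your proposal is correct and follows the same high-level strategy as the paper: establish the bijection between the RAN process and a random ternary tree grown by uniform leaf subdivision, bound the graph diameter by twice the tree height plus $O(1)$, and then bound the height. The paper's proof, however, is much shorter on the last step: it simply invokes a result of Broutin and Devroye on the height of random $m$-ary split trees, which gives $\mathrm{height}(\mathcal{T}_t)\sim\tfrac{\rho}{2}\log t$ in probability directly, with $\eta=1/\rho$ the root of $\eta-1-\log\eta=\log 3$.

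What you do differently is supply a self-contained derivation of the height upper bound via the generating-function recursion and a Chernoff-type (exponential first-moment) argument. This is genuinely more elementary and avoids the black-box citation; in fact it is a sharpening of the cruder first-moment bound the paper gives in its Claim~(2) (which only yields $O(\log t)$ without the constant). Your optimisation over $x$ correctly recovers the same critical equation. The trade-off is that Broutin--Devroye also gives the matching lower bound on the height, which your first-moment argument does not; but since the theorem only asserts an upper bound on the diameter, your approach is fully adequate and arguably preferable for being self-contained. The concerns you flag at the end (uniformity of the asymptotic in $x$ near $x^*$, interiority of the optimum) are routine: the product formula for $A(x,t)$ is exact, so one can simply plug in any fixed $x>1$ slightly above $x^*$ and let $t\to\infty$, without needing uniformity.
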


\noindent A straight-forward calculation of $\eta$, gives us the following corollary. 

\begin{corollary}[Diameter] 
The diameter $d(G_t)$ of $G_t$ satisfies asymptotically 

 $$\Prob{ d(G_t) > 7.1 \log{t} } \rightarrow 0.$$

\label{thrm:RANcorrternary} 
\end{corollary}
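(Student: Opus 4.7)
The plan is to reduce the corollary to a one-line numerical verification and then invoke Theorem~\ref{thrm:RANternary}. Concretely, it suffices to show that the unique $\eta \in (0,1)$ satisfying $\eta - 1 - \log \eta = \log 3$ obeys $\eta > 1/7.1$, because then $\rho = 1/\eta < 7.1$ and Theorem~\ref{thrm:RANternary} immediately gives $d(G_t) \le \rho \log t < 7.1 \log t$ in probability.

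First I would set $g(\eta) = \eta - 1 - \log \eta$ on $(0,1)$ and observe that $g'(\eta) = 1 - 1/\eta < 0$ there, so $g$ is strictly decreasing; combined with $g(\eta) \to +\infty$ as $\eta \to 0^+$ and $g(1) = 0$, this both confirms the uniqueness of the solution asserted in Theorem~\ref{thrm:RANternary} and reduces the desired inequality $\eta > 1/7.1$ to verifying $g(1/7.1) > \log 3$ by monotonicity.

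Next I would evaluate $g(1/7.1) = 1/7.1 - 1 + \log(7.1)$ to sufficient precision and compare with $\log 3$. A routine calculation yields $g(1/7.1) \approx 1.1009$ while $\log 3 \approx 1.0986$, so $g(1/7.1) > \log 3$, whence $\eta > 1/7.1$ and therefore $\rho < 7.1$, completing the proof.

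The only subtle point, and really the only thing worth checking carefully, is that the numerical comparison is robust: I would keep enough decimal digits (four suffices) so that the strict inequality $g(1/7.1) > \log 3$ holds with a clear margin, ensuring no hidden rounding issue undermines the application of Theorem~\ref{thrm:RANternary}. There is no probabilistic obstacle since all of the concentration and diameter analysis is already absorbed into that theorem.
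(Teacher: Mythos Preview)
Your proposal is correct and matches the paper's approach: the paper simply states that the corollary follows from ``a straight-forward calculation of $\eta$,'' and you have supplied exactly that calculation, checking via the monotonicity of $g(\eta)=\eta-1-\log\eta$ on $(0,1)$ that $\eta>1/7.1$ and hence $\rho<7.1$.
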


The outline of this Chapter is as follows:  in Section~\ref{sec:RANrelated} we present briefly related work  and technical preliminaries 
needed for our analysis. We prove Theorems~\ref{thrm:RANdegreesequence},~\ref{thrm:RANthrm1},~\ref{thrm:RANthrm2}
and \ref{thrm:RANternary} in Sections~\ref{sec:RANdegreeseq},~\ref{sec:RANdegrees},~\ref{sec:RANeigen} and ~\ref{sec:RANdiam} respectively. 

\section{Related Work}
\label{sec:RANrelated}

Apollonius of Perga was a Greek geometer and astronomer noted for his writings on conic sections.
He introduced the problem of space filling packing of spheres whose classical solution,
the so-called Apollonian packing \cite{graham}, exhibits a power law behavior. Specifically, 
the circle size distribution follows a power law with exponent around 1.3 \cite{boyd}.
Apollonian Networks (ANs)  were introduced in \cite{adrade} and independently in \cite{doye}. 
Zhou et al. \cite{maximal} introduced Random Apollonian Networks (RANs). 
Their degree sequence was analyzed inaccurately in \cite{maximal} (see comment in \cite{comment})
and subsequently using physicist's methodology in \cite{comment}. 
Eigenvalues of RANs have been studied only experimentally \cite{adrade2}. 
Concerning the diameter of RNAs it has been shown to grow logarithmically \cite{maximal}
using heuristic arguments (see for instance equation B6, Appendix B in \cite{maximal}).
RANs are planar 3-trees, a special case of random $k$-trees \cite{kloks}. 
Cooper and Uehara \cite{cooper} and Gao \cite{gao} analyzed the degree distribution of random $k$-trees, a closely 
related model to RANs. In RANs --in contrast to random $k$-trees-- the random $k$ clique 
chosen at each step has never previously been selected. For example, in the two dimensional 
RAN any chosen face is being subdivided into three new faces by connecting
the incoming vertex to the vertices of the boundary. 
Random $k$-trees due to their power law properties have been proposed as a 
model for complex networks, see, e.g., \cite{cooper,gao2} and references therein.
Recently, a variant of $k$-trees, namely ordered increasing $k$-trees has 
been proposed and analyzed in \cite{panholzer}.
Closely related to RANs but not the same are random Apollonian network structures
which have been analyzed by Darrasse, Soria et al. \cite{darase2,darase,darase3}. 

Bollob\'{a}s, Riordan,  Spencer and Tusn\'{a}dy \cite{bollobasdegrees} proved rigorously
the power law distribution of the Barab\'{a}si-Albert model \cite{albert}. 
Chung, Lu, Vu \cite{chung} Flaxman, Frieze, Fenner \cite{flaxman} and Mihail, Papadimitriou
\cite{mihail} have proved rigorous results for eigenvalue related properties
of real-world graphs using various random graph models. 

\section{Proof of Theorem~\ref{thrm:RANdegreesequence}}
\label{sec:RANdegreeseq}

We decompose our proof in a sequence of Lemmas. 
For brevity let $N_k(t)=\Mean{Z_k(t)}$, $k \geq 3$.
Also, let $d_v(t)$ be the degree of vertex $v$ at time $t$
and  $\mathbf{1}(d_v(t)=k)$ be an indicator variable which equals
1 if $d_v(t)=k$, otherwise 0. Then, for any $k\geq 3$ we
can express the expected number $N_k(t)$ of vertices of degree $k$ 
as a sum of expectations of indicator variables:

\begin{equation}
N_k(t) = \sum_{v} \Mean{\mathbf{1}(d_v(t)=k)}.
\label{eq:RANeqindicators}
\end{equation}
 
\noindent We distinguish two cases in the following. 
\newline

\noindent \underline{$\bullet$ {\sc Case 1:} $k=3$:}\\

\noindent Observe that a vertex of degree 3 is created only by an insertion of a new vertex. 
The expectation $N_3(t)$ satisfies the following recurrence\footnote{The 
three initial vertices participate in one less face than their degree. However,
this leaves our results unchanged.}

\begin{equation}
N_3(t+1) = N_3(t)+1-\frac{3N_3(t)}{2t+1}.
\label{eq:RANbasis}
\end{equation}

\noindent The basis for Recurrence~\eqref{eq:RANbasis} is $N_3(1)=4$. We prove the following lemma
which shows that $\displaystyle\lim_{t \to +\infty}{\frac{N_3(t)}{t}} = \frac{2}{5}$.

\begin{lemma} 
$N_3(t)$ satisfies the following inequality:
\begin{equation} 
|N_3(t) - \frac{2}{5} t|\leq K, \text{~~where~~} K=3.6
\label{eq:RANdk3}
\end{equation} 
\label{lem:RANdk3}
\end{lemma}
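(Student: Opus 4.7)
The plan is to introduce the centered quantity $a_t = N_3(t) - \tfrac{2}{5}t$ and show directly by induction that $|a_t|\le K = 3.6$ for all $t\geq 1$. This is the standard way to establish such linear asymptotics from an affine first-order recurrence like \eqref{eq:RANbasis}.

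First I would substitute $N_3(t) = a_t + \tfrac{2}{5}t$ into \eqref{eq:RANbasis} and simplify. A short calculation gives
\begin{equation*}
a_{t+1} \;=\; a_t\cdot\frac{2t-2}{2t+1} \;+\; \frac{3/5}{2t+1},
\end{equation*}
where the constant terms in $t$ cancel exactly because $\tfrac{2}{5}$ is precisely the fixed-point slope suggested by the limit of Lemma~\ref{lem:RANchunglubook} applied to \eqref{eq:RANbasis} (with $b_t\to 3/2$, $c_t\to 1$, yielding $c/(1+b) = 2/5$). This cancellation is the content-free part of the argument; the substance is the explicit constant bound.

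For the base case, the graph $G_1$ consists of the initial triangle with one vertex inserted into its face, so all four vertices have degree $3$, giving $N_3(1)=4$ and $a_1 = 4 - \tfrac{2}{5} = \tfrac{18}{5} = 3.6 = K$.

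For the inductive step, assume $|a_t|\le K$. Since $\tfrac{2t-2}{2t+1}\ge 0$ for $t\ge 1$, the recurrence yields
\begin{equation*}
|a_{t+1}| \;\le\; K\cdot\frac{2t-2}{2t+1} \;+\; \frac{3/5}{2t+1} \;=\; K - \frac{3K - 3/5}{2t+1}.
\end{equation*}
With $K=3.6$ we have $3K - 3/5 = 10.2 > 0$, so $|a_{t+1}|\le K$, closing the induction. I do not expect any obstacle here: the only delicate point is checking the base case with the correct value of $N_3(1)$, and verifying that the multiplier $\tfrac{2t-2}{2t+1}$ is nonnegative starting at $t=1$ (it equals $0$ at $t=1$, which incidentally makes the bound tightest at the start, consistent with $|a_1|=K$).
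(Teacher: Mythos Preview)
Your proof is correct and follows essentially the same approach as the paper: both set the error term $e_3(t)=N_3(t)-\tfrac{2}{5}t$ (your $a_t$), derive the recurrence $e_3(t+1)=e_3(t)\bigl(1-\tfrac{3}{2t+1}\bigr)+\tfrac{3}{5(2t+1)}$, and close the induction using the nonnegativity of the multiplier together with the tight base case $e_3(1)=3.6$.
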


\begin{proof} 
We use induction. Assume that $N_3(t)=\frac{2}{5}t+e_3(t)$, where $e_3(t)$ stands for the error term.
We wish to prove that for all $t$, $|e_3(t)|\leq K$. 
The result trivially holds for $t=1$. We also see that for $t=1$ inequality \eqref{eq:RANdk3} is tight. 
Assume the result holds for some $t$. We show it holds for $t+1$.

\begin{align*}
N_3(t+1)  &= N_3(t)+1-\frac{3N_3(t)}{2t+1} \Rightarrow \\
e_3(t+1)    &= e_3(t) + \frac{3}{5} - \frac{6t+15e_3(t)}{10t+5} = e_3(t)\Big(1- \frac{3}{2t+1}\Big) + \frac{3}{5(2t+1)}  \Rightarrow \\ 
|e_3(t+1)|  &\leq K(1-\frac{3}{2t+1}) + \frac{3}{5(2t+1)} \leq K \\ 
\end{align*}

\noindent Therefore inductively Inequality~\eqref{eq:RANdk3} holds for all $t \geq 1$. 
\end{proof}

\noindent \underline{$\bullet$ {\sc Case 2:} $k \geq 4$:}\\

\noindent  For $k \geq 4$ the following holds: 

\begin{equation} 
\Mean{\mathbf{1}(d_v(t+1)=k)} = \Mean{ \mathbf{1}(d_v(t)=k)} (1-\frac{k}{2t+1}) + \Mean{\mathbf{1}(d_v(t)=k-1)} \frac{k-1}{2t+1}
\label{eq:RANdegree}
\end{equation}

\noindent  Therefore, we can rewrite Equation~\eqref{eq:RANeqindicators} for $k \geq 4$ as follows:

\begin{equation}
N_k(t+1) = N_k(t)(1-\frac{k}{2t+1})+ N_{k-1}(t)\frac{k-1}{2t+1}
\label{eq:RANeqindicators2}
\end{equation}

%\noindent Now, we use induction to show that $\displaystyle \lim_{t \to +\infty} \frac{N_k(t)}{t}$ for $k \geq 4$ exists. 

\begin{lemma} 
For any $k \geq 3$, the limit $\displaystyle\lim_{t \to +\infty} \frac{N_k(t)}{t}$ exists.
Specifically, let $b_k=\displaystyle\lim_{t \to +\infty} \frac{N_k(t)}{t}$. Then,
$b_3=\frac{2}{5}, b_4=\frac{1}{5}, b_5=\frac{4}{35}$ and for $k\geq 6$
$b_k = \frac{24}{k(k+1)(k+2)}$.
Furthermore, for all $k \geq 3$ 

\begin{equation} 
|N_k(t) - b_k t| \leq K, \text{~~where~~} K=3.6.
\label{eq:RANdk4}
\end{equation}
\label{lem:RANdegreelemma}

\end{lemma}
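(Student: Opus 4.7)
The plan is to proceed by induction on $k$, with base case $k=3$ already supplied by Lemma~\ref{lem:RANdk3}. For the inductive step, I would first pin down the value of $b_k$ by demanding that the ansatz $N_k(t) \sim b_k t$ be consistent with the recurrence \eqref{eq:RANeqindicators2}: dividing by $t$ and letting $t \to \infty$ forces $b_k = b_{k-1}(k-1)/(k+2)$. Iterating from $b_3 = 2/5$ gives the claimed values $b_4 = 1/5$, $b_5 = 4/35$, and a quick check shows the closed form $b_k = 24/(k(k+1)(k+2))$ satisfies the two-term recursion for all $k \geq 6$. At this point, existence of the limit could already be extracted by invoking Lemma~\ref{lem:RANchunglubook} applied to $a_t = N_k(t)$ with $b_t = kt/(2t+1) \to k/2$ and $c_t = N_{k-1}(t)(k-1)/(2t+1) \to b_{k-1}(k-1)/2$.

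To get the stronger quantitative bound \eqref{eq:RANdk4}, define the error $e_k(t) = N_k(t) - b_k t$ and substitute into \eqref{eq:RANeqindicators2}. Using the identity $b_{k-1}(k-1) = (k+2)b_k$ derived above to cancel the linear-in-$t$ term, the computation collapses to the clean error recurrence
\begin{equation}
e_k(t+1) = e_k(t)\Bigl(1 - \frac{k}{2t+1}\Bigr) + \frac{(k-1)\,e_{k-1}(t) - b_k}{2t+1}.
\label{eq:RANerror}
\end{equation}
I would then fix $k$ and argue by induction on $t$ that $|e_k(t)| \leq K$, assuming inductively that $|e_{k-1}(t)| \leq K$ for all $t$. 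When $k \leq 2t+1$, the multiplier $1 - k/(2t+1)$ is nonnegative and the triangle inequality applied to \eqref{eq:RANerror} gives
\begin{equation*}
|e_k(t+1)| \leq K\Bigl(1 - \frac{k}{2t+1}\Bigr) + \frac{(k-1)K + b_k}{2t+1} = K + \frac{b_k - K}{2t+1} \leq K,
\end{equation*}
where the final inequality uses the monotonicity $b_k \leq b_3 = 2/5 < K = 3.6$.

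The remaining case $k > 2t+1$, where the multiplier becomes negative and the above estimate breaks down, is handled separately using a structural observation: every vertex in $G_t$ has degree at most $t+2$ (since each of the $t+3$ vertices starts with degree $3$ and can gain at most $t-1$ extra edges from subsequent insertions). Hence $N_k(t) = 0$ whenever $k > t+2$; but since $k \geq 2t+2$ and $t \geq 1$ together force $k > t+2$, we are in this degenerate regime and $|e_k(t)| = b_k t$, which is bounded by $K$ by direct inspection (for $k \geq 6$ use $b_k t \leq 24t/(t+2)^3 \leq 24/(t+2)^2 \leq K$; the finitely many remaining pairs $(k,t) \in \{(4,1),(5,1),(5,2)\}$ are checked by hand). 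Finally, once $|e_k(t)| \leq K$ is established, dividing by $t$ recovers the limit $N_k(t)/t \to b_k$, completing the induction.

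The main obstacle is the uniformity of the constant $K = 3.6$ across all $k$; a naive induction on $k$ that propagates the error bound additively would allow $K$ to blow up with $k$, and the whole point of the derivation of \eqref{eq:RANerror} is to engineer the cancellation $Kk/(2t+1) - (k-1)K/(2t+1) = K/(2t+1)$, which is just barely enough to absorb the forcing term $b_k/(2t+1)$ because of the fortunate inequality $b_k \leq b_3 \ll K$.
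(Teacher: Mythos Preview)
Your proof follows the same inductive strategy as the paper---Lemma~\ref{lem:RANchunglubook} for the limit $b_k$, then induction on $t$ for the error bound via the cancellation $b_{k-1}(k-1) = b_k(k+2)$---and is correct. You are in fact more careful than the paper in two places: your error recurrence \eqref{eq:RANerror} correctly retains the $-b_k/(2t+1)$ term that the paper silently drops (harmless, since $b_k < K$ still closes the estimate), and your treatment of the regime $k > 2t+1$ via the degree bound $\leq t+2$ patches a case the paper's bound $K(1-1/(2t+1))$ tacitly assumes away. The only slip is in your list of hand-checked pairs: the genuine base case of the $t$-induction is $t_1 = \lceil (k-1)/2 \rceil$, which for $k \geq 5$ still satisfies $t_1 < k-2$ (so $N_k(t_1)=0$ and your bound applies), while for $k=4$ one needs the direct check $|e_4(2)| = |3 - 2/5| = 13/5 < 3.6$ rather than the pair $(5,2)$.
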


\begin{proof} 
For $k=3$ the result holds by Lemma~\ref{lem:RANdk3} and specifically $b_3=\frac{2}{5}$. Assume the result holds for some $k$.
We show that it holds for $k+1$ too. Rewrite Recursion~\eqref{eq:RANeqindicators2} as: 
$ N_k(t+1) = (1- \frac{b_t}{t+t_1})N_k(t) + c_t$ where $b_t=k/2$, $t_1=1/2$,
$c_t = N_{k-1}(t)\frac{k-1}{2t+1}$. Clearly $\displaystyle\lim_{t \rightarrow +\infty}{b_t}=k/2>0$
and $\displaystyle\lim_{t \rightarrow +\infty}{c_t}= \displaystyle\lim_{t \rightarrow +\infty}{ b_{k-1}t \frac{k-1}{2t+1}} = b_{k-1}(k-1)/2$.
Hence by Lemma~\ref{lem:RANchunglubook}:

$$ \lim_{t \rightarrow +\infty} \frac{N_k(t)}{t} = \frac{ (k-1)b_{k-1}/2 }{1+k/2}= b_{k-1} \frac{k-1}{k+2}. $$ 

\noindent Since $b_3=\frac{2}{5}$ we obtain that $b_4=\frac{1}{5}$, $b_5=\frac{4}{35}$
for any $k \geq 6$, $b_k = \frac{24}{k(k+1)(k+2)}$. This shows that the degree sequence of RANs follows a power law
distribution with exponent 3. 

Now we prove Inequality~\eqref{eq:RANdk4}.
The case $k=3$ was proved in Lemma~\ref{lem:RANdk3}. Let $e_k(t)=N_k(t)-b_kt$. 
Assume the result holds for some $k \geq 3$, i.e., $|e_k(t)| \leq K$ 
where $K=3.6$. We show it holds for $k+1$ too.  Substituting in Recurrence~\eqref{eq:RANeqindicators}
and using the fact that $b_{k-1}(k-1)=b_k(k+2)$ we obtain the following: 

\begin{align*}
e_k(t+1) &= e_k(t) + \frac{k-1}{2t+1} e_{k-1}(t) - \frac{k}{2t+1} e_k(t) \Rightarrow \\
|e_k(t+1)| &\leq | (1-\frac{k}{2t+1}) e_k(t) | + | \frac{k-1}{2t+1} e_{k-1}(t) | \leq K(1-\frac{1}{2t+1}) \leq K
\end{align*}

\noindent Hence by induction, Inequality~\eqref{eq:RANdk4} holds for all $k \geq 3$. 
\end{proof}

\noindent Using integration and a first moment argument, 
it can be seen that Lemma~\ref{lem:RANdegreelemma} agrees 
with Theorem~\ref{thrm:RANthrm1} where it is shown that the maximum degree is $\approx t^{1/2}$. 
(While $b_k=O(k^{-3})$ suggests a maximum degree of order $t^{1/3}$, summing $b_k$ over $k\geq K$
suggests a maximum degree of order $t^{1/2}$).

Finally, the next Lemma proves the concentration of $Z_k(t)$ around its expected
value for $k \geq 3$. This lemma applies Lemma~\ref{lem:RANazuma} and 
completes the proof of Theorem~\ref{thrm:RANdegreesequence}.

\begin{lemma} 
Let $\lambda >0$. For $k\geq 3$

\begin{equation}
\Prob{|Z_k(t) - \Mean{Z_k(t)}| \geq \lambda } \leq e^{-\frac{\lambda^2}{72t}}.
\end{equation}

\label{lem:RANdeg} 
\end{lemma}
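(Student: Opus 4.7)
My plan is to construct a Doob martingale indexed by the sequence of face subdivisions and apply the Azuma-Hoeffding inequality (Lemma~\ref{lem:RANazuma}). Let $\omega_s$ denote the face chosen to be subdivided at step $s$, and let $\mathcal{F}_s = \sigma(\omega_1,\ldots,\omega_s)$. Setting $X_s = \Mean{Z_k(t) \mid \mathcal{F}_s}$ defines a martingale with $X_0 = \Mean{Z_k(t)}$ and $X_{t-1} = Z_k(t)$. To match the exponent $\lambda^2/(72t)$ via Lemma~\ref{lem:RANazuma}, the aim is to show $|X_{s+1} - X_s| \leq 6$ almost surely for every $s$, since $2 \cdot 6^2 \cdot t = 72t$.

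The heart of the argument is a coupling that bounds the martingale differences. Conditional on $\mathcal{F}_s$, let $F$ and $F'$ be two possible face choices at step $s+1$, producing processes $G$ and $G'$ that agree through step $s$. I would construct a bijection $\pi$ between the faces of $G_{s+1}$ and $G'_{s+1}$ as follows: every face shared by both graphs is mapped to itself; the three new faces inside $F$ are paired with the three new faces inside $F'$; and the face $F'$ (still present in $G$) is paired with $F$ (still present in $G'$). At every subsequent step, the two processes are coupled to subdivide paired faces simultaneously, and $\pi$ is extended to the newly created triples of faces in the natural way. Let $Y_F := \Mean{Z_k(t) \mid \mathcal{F}_s, \omega_{s+1} = F}$; the coupling controls $|Y_F - Y_{F'}|$, and hence $|X_{s+1} - X_s|$.

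Under this coupling the two graphs remain isomorphic outside a controlled ``differing region,'' so that all vertices outside this region have identical degrees in $G$ and $G'$ throughout the process; only the vertices associated with the initial discrepancy ($F$ versus $F'$) can contribute to $|Z_k(t,G)-Z_k(t,G')|$. Every vertex inserted at a subsequent step enters with degree $3$ in both graphs, and its later degree updates track identically under $\pi$, contributing $0$ to the discrepancy. A careful bookkeeping argument then gives $|Z_k(t,G) - Z_k(t,G')| \leq 6$ almost surely, since at most a bounded constant number of ``originally differing'' vertices can ever have their membership in the set $\{k\}$ differ between the two coupled processes.

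The main obstacle is making the bookkeeping watertight: one must verify that the bijection $\pi$ genuinely survives the stochastic evolution, and that the proliferation of paired differing faces (each subdivision producing three new pairs) does not inflate the count of vertices whose degrees disagree between the two processes. Once the uniform bound on martingale differences is established, Lemma~\ref{lem:RANazuma} applied with $c = 6$ and horizon $t$ yields
\[
\Prob{|Z_k(t) - \Mean{Z_k(t)}| \geq \lambda} \leq \exp\!\left(-\frac{\lambda^2}{2 \cdot 6^2 \cdot t}\right) = \exp\!\left(-\frac{\lambda^2}{72 t}\right),
\]
which is precisely the claim.
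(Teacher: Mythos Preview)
Your proposal is correct and follows essentially the same approach as the paper: reveal the face choices one at a time to form a Doob martingale, couple two processes differing in a single face choice via a face bijection so that only the at most six vertices of $F\cup F'$ can ever have differing degrees, and apply Azuma--Hoeffding with $c=6$. Your description of the bijection (including the explicit pairing of $F'$ in $G$ with $F$ in $G'$) is in fact slightly more careful than the paper's own sketch.
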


\begin{proof}

Let $(\Omega, \mathcal{F}, \field{P})$ be the probability space induced
by the construction of a RAN after $t$ insertions. Fix $k$, where $k \geq 3$, and let $(X_i)_{i \in \{0,1,\ldots,t\}}$
be the martingale sequence defined by $X_i = \Mean{Z_k(t)|\mathcal{F}_i}$,
where $\mathcal{F}_0=\{\emptyset,\Omega\}$ and 
$\mathcal{F}_i$ is the $\sigma$-algebra generated by the RAN process 
after $i$ steps. Notice $X_0=\Mean{Z_k(t)|\{\emptyset,\Omega\}}=N_k(t)$, $X_t = Z_k(t)$.
We show that $|X_{i+1}-X_{i}| \leq 6$ for $i=0,\ldots,t-1$. 
Let $P_j=(Y_1,\ldots,Y_{j-1},Y_j)$, $P_j'=(Y_1,\ldots,Y_{j-1},Y_j')$  be two sequences of 
face choices differing only at time $j$. Also, let $\bar{P},\bar{P'}$ continue from $P_j,P_j'$ 
until $t$. We call the faces $Y_j,Y_j'$ special with respect to $\bar{P},\bar{P'}$. 
We define a measure preserving map $\bar{P} \mapsto \bar{P'}$ in the following way:
for every choice of a non-special face in process $\bar{P}$ at time $l$ we make the same face choice in $\bar{P'}$ 
at time $l$. For every choice of a face inside the special face $Y_j$ in process $\bar{P}$ 
we make an isomorphic (w.r.t., e.g., clockwise order and depth) 
choice of a face inside the special face $Y_j'$ in process $\bar{P}'$.
Since the number of vertices of degree $k$ can change by at most 6, i.e., the (at most) 6 vertices involved
in the two faces $Y_j,Y_j'$ the following holds: 

$$ |\Mean{Z_k(t)|P} - \Mean{Z_k(t)|P'}| \leq 6 .$$

Furthermore,  this holds for any $P_j,P_j'$. 
We deduce that $X_{i-1}$ is a weighted mean of values, whose pairwise differences are all at most 6. 
Thus, the distance of the mean $X_{i-1}$ is at most 6 from each of these values.
Hence, for any one step refinement  $|X_{i+1}-X_{i}| \leq 6$ $\forall i\in \{0,\ldots,t-1\}$. 
By applying the  Azuma-Hoeffding inequality as stated in Lemma~\ref{lem:RANazuma} we obtain

\begin{equation}
\Prob{|Z_k(t) - \Mean{Z_k(t)}| \geq \lambda } \leq 2e^{-\frac{\lambda^2}{72t}}.
\end{equation}
 
\end{proof}

\section{Proof of Theorem~\ref{thrm:RANthrm1}}
\label{sec:RANdegrees}

We decompose the proof of Theorem~\ref{thrm:RANthrm1} into several lemmas
which we prove in the following. Specifically, the proof follows directly 
from Lemmas~\ref{lem:RANlemma7},~\ref{lem:RANlemma8},~\ref{lem:RANlemma9},~\ref{lem:RANlemma10},~\ref{lem:RANlemma11}.
We partition the vertices into three sets: those added before $t_0$, between $t_0$ and
$t_1$ and after $t_1$ where $t_0=\log{\log{\log{(f(t))}}}$ and $t_1=\log{\log{(f(t))}}$.
Recall that $f(t)$ is a function such that $\displaystyle\lim_{t \to +\infty}f(t)=+\infty$. 
We define a supernode to be a collection of vertices and the degree of the supernode the sum of the degrees 
of its vertices. 

\begin{lemma} 
Let $d_t(s)$ denote the degree of vertex $s$ at time $t$. and let $a^{(k)}=a(a+1)\ldots (a+k-1)$\
denote the rising factorial function. Then, for any positive integer $k$

\begin{equation}
\Mean{ d_t(s)^{(k)} } \leq \frac{ (k+2)!}{2} \big( \frac{2t}{s} \big)^{\frac{k}{2}}.
\label{eq:rising} 
\end{equation}

\label{lem:RANlemma6}
\end{lemma}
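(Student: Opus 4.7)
The plan is to set up a one-step recursion for $\Mean{d_t(s)^{(k)}}$ and then bound the resulting telescoping product. The first observation I would use is a structural fact about RANs: at any time, a vertex of degree $d$ lies on exactly $d$ faces, because every time one of its incident faces is subdivided, one of its incident edges is created and one face containing it is replaced by two. Consequently, given the history up to time $t$, the probability that vertex $s$ gains degree at step $t+1$ is precisely $d_t(s)/(2t+1)$, and otherwise $d_{t+1}(s)=d_t(s)$.

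Next I would exploit the algebraic identity
\[
(d+1)^{(k)} - d^{(k)} \;=\; d^{(k)} \cdot \frac{k}{d},
\]
which follows directly from $(d+1)^{(k)} = d^{(k)}\,(d+k)/d$. Conditioning on $\mathcal{F}_t$ and taking expectations gives the clean recursion
\[
\Mean{d_{t+1}(s)^{(k)} \mid \mathcal{F}_t} \;=\; d_t(s)^{(k)}\Bigl(1+\tfrac{k}{2t+1}\Bigr),
\]
so that
\[
\Mean{d_t(s)^{(k)}} \;=\; d_s(s)^{(k)}\,\prod_{i=s}^{t-1}\Bigl(1+\tfrac{k}{2i+1}\Bigr).
\]
Since vertex $s$ has degree exactly $3$ at the moment it is born, the initial factor is $d_s(s)^{(k)} = 3\cdot 4 \cdots (k+2) = (k+2)!/2$, which already matches the leading constant in the statement.

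It remains to show $\prod_{i=s}^{t-1}(1+k/(2i+1)) \le (2t/s)^{k/2}$. Taking logarithms, bounding $\log(1+x)\le x$, and comparing the sum $\sum_{i=s}^{t-1}\frac{1}{2i+1}$ with the integral $\int_{s-1}^{t-1}\frac{du}{2u+1} = \tfrac12\log\frac{2t-1}{2s-1}$ reduces everything to checking the elementary inequality $\frac{2t-1}{2s-1} \le \frac{2t}{s}$, i.e.\ $s(2t-1) \le 2t(2s-1)$, which holds for every $s\ge 1$. Combining the two ingredients yields the claimed bound.

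The recursion and the bookkeeping here are routine; the only non-obvious step is the rising-factorial identity that turns the pointwise update $(d+1)^{(k)}$ into a clean multiplicative factor $(1+k/(2i+1))$, and it is what makes the product telescope in a form comparable to $(2t/s)^{k/2}$. I expect no genuine obstacle beyond noticing this identity and being careful with the very mild integral comparison at the end.
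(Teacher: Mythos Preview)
Your approach is essentially identical to the paper's: the same conditional recursion via the rising-factorial identity $(d+1)^{(k)}=d^{(k)}(d+k)/d$, the same telescoping product $\prod(1+k/(2i+1))$, and the same $\log(1+x)\le x$ plus integral comparison to reach $(2t/s)^{k/2}$.

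One small caveat: your structural claim that a vertex of degree $d$ lies on exactly $d$ faces, and your assertion that $d_s(s)=3$, both fail for the three initial vertices $s\in\{1,2,3\}$; these have one fewer incident face than their degree (the outer face is never subdivided) and initial degree $2$. The paper handles this as a separate case, where the increment probability is $(d_{t-1}(s)-1)/(2t-1)$ rather than $d_{t-1}(s)/(2t-1)$; the resulting extra negative term only helps, and the same bound goes through. This is a minor bookkeeping gap rather than a substantive one.
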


\begin{proof}

As we mentioned in the proof of Theorem~\ref{thrm:RANdegreesequence} 
the three initial vertices $1,2,3$  have  one less face than their 
degree whereas all other vertices have degree equal to the number of faces
surrounding them.
In this proof we treat both cases but we omit it in all other proofs.

\noindent \underline{$\bullet$ {\sc Case 1:} $s \geq 4$}\\

\noindent  Note that $d_s(s)=3$. By conditioning successively we obtain 

\begin{align*} 
\Mean{d_t(s)^{(k)}}                &= \Mean{ \Mean{ d_t(s)^{(k)} | d_{t-1}(s)} }\\
                                   &= \Mean{ (d_{t-1}(s))^{(k)} \big( 1-\frac{d_{t-1}(s)}{2t-1} \big) +  (d_{t-1}(s)+1)^{(k)} \frac{d_{t-1}(s)}{2t-1}} \\ 
                                   &=  \Mean{  (d_{t-1}(s))^{(k)} \big( 1-\frac{d_{t-1}(s)}{2t-1} \big) + (d_{t-1}(s))^{(k)} \frac{d_{t-1}(s)+k}{d_{t-1}(s)} \frac{d_{t-1}(s)}{2t-1}}\\
                                   &=  \Mean{ (d_{t-1}(s))^{(k)}}\big( 1+\frac{k}{2t-1} \big) = ... = 3^{(k)} \prod_{t'=s+1}^t ( 1+\frac{k}{2t'-1} ) \\
                                   &\leq  3^{(k)}  \exp{\Big( \sum_{t'=s+1}^t \frac{k}{2t'-1} \Big)} \leq 3^{(k)} \exp{ \Big( k \int_{s}^{t} \! \frac{\mathrm{d}x}{2x-1}   \,   \Big) }  \\ 
                                   &\leq \frac{(k+2)!}{2} \exp{ \Big( \tfrac{k}{2} \log{ \frac{t-1/2}{s-1/2} } \Big)} \leq \frac{ (k+2)!}{2} \Big( \frac{2t}{s} \Big)^{\frac{k}{2}}. 
\end{align*}

\noindent \underline{$\bullet$ {\sc Case 2:} $s \in \{1,2,3\}$}\\

\noindent  Note that initially the degree of any such vertex is 2. For any $k\geq 0$

\begin{align*} 
\Mean{d_t(s)^{(k)}}                &= \Mean{ \Mean{ d_t(s)^{(k)} | d_{t-1}(s)} }\\
                                   &= \Mean{ (d_{t-1}(s))^{(k)} \big( 1-\frac{d_{t-1}(s)-1}{2t-1} \big) +  (d_{t-1}(s)+1)^{(k)} \frac{d_{t-1}(s)-1}{2t-1} }\\ 
                                   &=  \Mean{ (d_{t-1}(s))^{(k)} \big( 1+\frac{k}{2t-1} \big) - (d_{t-1}(s))^{(k)} \frac{k}{(2t-1)d_{t-1}(s)}  }\\
                                   &\leq \Mean{ (d_{t-1}(s))^{(k)}} \big( 1+\frac{k}{2t-1} \big) \leq \ldots \leq \frac{ (k+2)!}{2} \big( \frac{2t}{s} \big)^{\frac{k}{2}}.
\end{align*}

\end{proof}

\begin{lemma}
The degree $X_t$ of the supernode $V_{t_0}$ of vertices added before time $t_0$ is at least $t_0^{1/4}\sqrt{t}$ \whp.
\label{lem:RANlemma7}
\end{lemma}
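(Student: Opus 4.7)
My plan is to derive a first-moment recurrence for $\Mean{X_t}$, a companion second-moment inequality for $\Mean{X_t^2}$, combine them into a variance bound, and then apply Chebyshev. The supernode $V_{t_0}$ contains every vertex present at time $t_0$, so $X_{t_0} = 2m_{t_0} = 6t_0+6$ is deterministic. The target $t_0^{1/4}\sqrt{t}$ is a factor $t_0^{1/4}$ below the mean, so a weak concentration statement is all that is needed.

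When vertex $t+1$ is inserted it chooses a face $F$ uniformly among the $2t+1$ faces of $G_t$, and the increment $\Delta_t := X_{t+1}-X_t = |F\cap V_{t_0}|$. Double-counting face--vertex incidences, and using that the three initial vertices sit in one fewer face than their degree (as in Lemma~\ref{lem:RANlemma6}), gives
$$\Mean{\Delta_t \mid \mathcal{F}_t} \;=\; \frac{1}{2t+1}\sum_{v\in V_{t_0}}\#\{\text{faces incident to } v\} \;=\; \frac{X_t-3}{2t+1}.$$
Iterating $\Mean{X_{t+1}} = \Mean{X_t}(1+\tfrac{1}{2t+1}) - \tfrac{3}{2t+1}$ together with $\prod_{s=t_0}^{t-1}(1+\tfrac{1}{2s+1}) = \Theta(\sqrt{t/t_0})$ gives $\Mean{X_t} = \Theta(t_0)\cdot\Theta(\sqrt{t/t_0}) = \Theta(\sqrt{t_0\,t})$; equivalently, sum the $k=1$ case of Lemma~\ref{lem:RANlemma6} over the $t_0+3$ vertices of $V_{t_0}$.

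For the second moment, since $\Delta_t\in\{0,1,2,3\}$ we have $\Mean{\Delta_t^2 \mid \mathcal{F}_t} \leq 3\Mean{\Delta_t\mid\mathcal{F}_t} \leq 3X_t/(2t+1)$, hence
$$\Mean{X_{t+1}^2 \mid \mathcal{F}_t} \;\leq\; X_t^2\!\left(1+\tfrac{2}{2t+1}\right) - \tfrac{3X_t}{2t+1}.$$
Subtracting the square of the first-moment recurrence (and dropping a negative $\Omega(\Mean{X_t}^2/t^2)$ contribution) produces
$$\mathrm{Var}(X_{t+1}) \;\leq\; \mathrm{Var}(X_t)\!\left(1+\tfrac{2}{2t+1}\right) + \tfrac{3\,\Mean{X_t}}{2t+1} + O(1/t^2).$$
Using $\mathrm{Var}(X_{t_0})=0$, $\Mean{X_s} = \Theta(\sqrt{t_0 s})$, and $\prod_{r=s+1}^{t-1}(1+\tfrac{2}{2r+1}) = \Theta(t/s)$, iterating yields
$$\mathrm{Var}(X_t) \;\leq\; \sum_{s=t_0}^{t-1}\Theta\!\bigl(\sqrt{t_0/s}\bigr)\cdot \Theta(t/s) \;=\; \Theta\!\left(t\sqrt{t_0}\sum_{s\geq t_0} s^{-3/2}\right) \;=\; O(t).$$

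Consequently $\mathrm{Var}(X_t)/\Mean{X_t}^2 = O(t)/\Theta(t_0 t) = O(1/t_0) = o(1)$, since $t_0 = \log\log\log f(t) \to \infty$. Chebyshev's inequality (Lemma~\ref{lem:chebyshevinequality}) gives $\Prob{X_t \leq \tfrac12 \Mean{X_t}} \leq 4\mathrm{Var}(X_t)/\Mean{X_t}^2 = O(1/t_0) = o(1)$, so \whp\ $X_t \geq \tfrac12\Mean{X_t} = \Omega(\sqrt{t_0 t}) = \Omega(t_0^{1/2}\sqrt{t})$, comfortably above $t_0^{1/4}\sqrt{t}$. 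The main obstacle is the variance estimate: the naive bound $\mathrm{Var}(X_t) \leq \Mean{X_t^2} = O(t_0 t)$ is the same order as $\Mean{X_t}^2$ and useless for Chebyshev. The crucial cancellation that yields the improved $O(t)$ bound is between the $-6\Mean{X_t}/(2t+1)$ term generated when squaring the first-moment recurrence and the $+3\Mean{X_t}/(2t+1)$ upper bound on $\Mean{\Delta_t^2}$, leaving only a lower-order driving term whose sum is $O(t)$.
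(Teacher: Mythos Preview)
Your argument is correct and takes a genuinely different route from the paper's. The paper proceeds by introducing a coupled process $Y_t$ whose increments are always~$1$, then computes an explicit pointwise bound on $\Prob{Y_t = d_0 + r}$ by summing over all possible time vectors $(\tau_1,\dots,\tau_r)$ of insertion times; this is a fairly heavy calculation. You instead run a direct second-moment method on $X_t$: a one-line conditional-mean recurrence, a matching second-moment recurrence (using $\Delta_t \le 3$ so $\Mean{\Delta_t^2\mid\mathcal F_t}\le 3\,\Mean{\Delta_t\mid\mathcal F_t}$), and a variance iteration that yields $\mathrm{Var}(X_t)=O(t)$ against $\Mean{X_t}^2=\Theta(t_0 t)$, after which Chebyshev finishes. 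This is considerably more elementary and in fact delivers a stronger conclusion, namely $X_t=\Omega(t_0^{1/2}\sqrt{t})$ \whp, not merely the stated $t_0^{1/4}\sqrt{t}$. The paper's approach does give finer distributional information (explicit tail probabilities for each $r$), which could be useful elsewhere, but for the lemma as stated your method is cleaner. One cosmetic point: the residual term you write as $O(1/t^2)$ in the variance recurrence is really $O(\Mean{X_t}/t^2)=O(\sqrt{t_0}/t^{3/2})$, but when iterated with the $\Theta(t/s)$ multiplier it still contributes only $O(t/t_0)$, so the conclusion is unaffected.
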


\begin{proof} 

We consider a modified process $\mathcal{Y}$ coupled with the RAN process, see also Figure~\ref{fig:fig3}.
Specifically, let $Y_t$ be the modified degree of the supernode in the modified process $\mathcal{Y}$ which 
is defined as follows: for any type of insertion in the original RAN process --note there exist three types of insertions with respect 
to how the degree $X_t$ of the supernode (black circle) gets affected, see also Figure~\ref{fig:fig3}-- 
$Y_t$ increases by 1. We also define $X_{t_0}=Y_{t_0}$.  Note that $X_t \geq Y_t$ for all $t \geq t_0$.
Let $d_0=X_{t_0}=Y_{t_0}=6t_0+6$ and  $p^*=\Prob{Y_t=d_0+r| Y_{t_0}=d_0}$.

\begin{figure*}
\centering
\includegraphics[width=0.35\textwidth]{./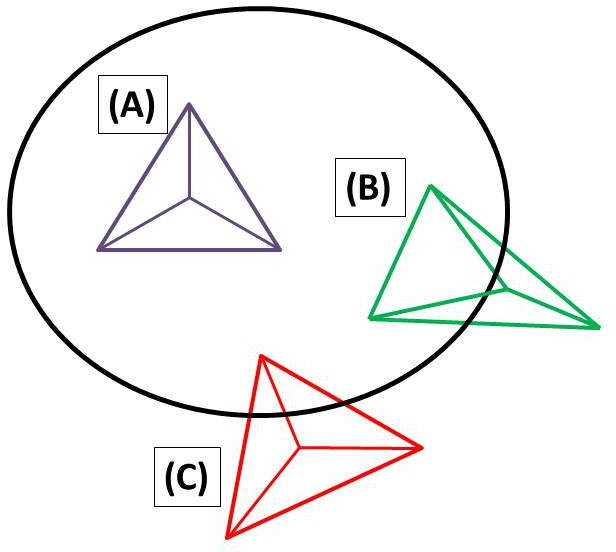}
\caption{Coupling used in Lemma~\ref{lem:RANlemma7}.}
\label{fig:fig3}
\end{figure*}

\noindent %The following technical claim is proved in an appendix.
\begin{claim}[1]
$$  p^* \leq  {d_0+r-1 \choose d_0-1} \Big(\frac{2t_0+3}{2t+1}\Big)^{d_0/2} e^{\frac{3}{2}+t_0-\frac{d_0}{2}+\frac{2r}{3\sqrt{t}}} $$
\label{claim:lemma7}
\end{claim}

\begin{proof}

Let $\tau=(t_0 \equiv \tau_0,\underbrace{\tau_1,\ldots,\tau_r}_{\text{insertion times}},\tau_{r+1} \equiv t)$ be a vector denoting that $Y_t$ 
increases by 1 at $\tau_i$ for $i=1,\ldots,r$. 
We upper bound the probability $p_{\tau}$ of this event in the following.Note that 
we consider the case where the vertices have same degree as the number of faces around them. 
As we mentioned earlier, the other case is analyzed in exactly the same way, modulo a negligible error term.

\begin{align*}
p_{\tau} &= \Bigg[ \prod_{k=1}^r \frac{d_0+k-1}{2\tau_k+1} \Bigg] \Bigg[ \prod_{k=0}^r \prod_{j=\tau_k+1}^{\tau_{k+1}-1} \Big(1-\frac{d_0+k}{2j+1}\Big) \Bigg] \\ 
         &= d_0 (d_0+1) \ldots (d_0+r-1)\Bigg[ \prod_{k=1}^r \frac{1}{2\tau_k+1} \Bigg]
         \exp{\Bigg( \sum_{k=0}^r \sum_{j=\tau_k+1}^{\tau_{k+1}-1} \log{\Big(1-\frac{d_0+k}{2j+1}\Big)}  \Bigg)} \\
         &= \frac{(d_0+r-1)!}{(d_0-1)!}\Bigg[ \prod_{k=1}^r \frac{1}{2\tau_k+1} \Bigg]
         \exp{\Bigg( \sum_{k=0}^r \sum_{j=\tau_k+1}^{\tau_{k+1}-1} \log{\Big(1-\frac{d_0+k}{2j+1}\Big)}  \Bigg)} \\
\end{align*}

\noindent Consider now the inner sum which we upper bound using an integral: 

\begin{align*} 
\sum_{j=\tau_k+1}^{\tau_{k+1}-1} \log{\Big(1-\frac{d_0+k}{2j+1}\Big)}  &\leq  \int_{\tau_k+1}^{\tau_{k+1}} \! \log{\Big(1-\frac{d_0+k}{2x+1}\Big)}  \, \mathrm{d}x  \\
&\leq -\big( \tau_{k+1}+\tfrac{1}{2} \big)  \log{(2\tau_{k+1}+1)} + \\
& \frac{2\tau_{k+1}+1-(d_0+k)}{2} \log{(2\tau_{k+1}+1-(d_0+k))} + \\
& \big( \tau_{k}+\tfrac{3}{2} \big)  \log{(2\tau_{k}+3)} - \frac{2\tau_{k}+3-(d_0+k)}{2} \log{(2\tau_{k}+3-(d_0+k))} 
\end{align*}

\noindent since 

$$ \int \log{\Big(1-\frac{d_0+k}{2x+1}\Big)} = -\big( x+\tfrac{1}{2} \big)  \log{(2x+1)} + \frac{2x+1-(d_0+k)}{2} \log{(2x+1-(d_0+k))} $$

\noindent Hence we obtain $\sum_{k=0}^r \sum_{j=\tau_k+1}^{\tau_{k+1}-1} \log{\Big(1-\frac{d_0+k}{2j+1}\Big)}  \leq A+\sum_{k=1}^r B_k$
where 

\begin{align*} 
A &= \big( \tau_{0}+\tfrac{3}{2} \big)  \log{(2\tau_{0}+3)} - \frac{2\tau_{0}+3-d_0}{2} \log{(2\tau_{0}+3-d_0)} \\
  & -\big( \tau_{r+1}+\tfrac{1}{2} \big)  \log{(2\tau_{r+1}+1)} + \frac{2\tau_{r+1}+1-(d_0+r)}{2} \log{(2\tau_{r+1}+1-(d_0+r))}
\end{align*}

\noindent and 

\begin{align*} 
B_k &= \big( \tau_{k}+\tfrac{3}{2} \big)  \log{(2\tau_{k}+3)} - \frac{2\tau_{k}+3-(d_0+k)}{2} \log{(2\tau_{k}+3-(d_0+k))} \\
    & -\big( \tau_{k}+\tfrac{1}{2} \big)  \log{(2\tau_{k}+1)} + \frac{2\tau_{k}+1-(d_0+k-1)}{2} \log{(2\tau_{k}+1-(d_0+k-1))}.
\end{align*} 

We first upper bound the quantities $B_k$ for $k=1,\ldots,r$. By rearranging terms and using the identity $\log{(1+x)} \leq x$ 
we obtain 

\begin{align*} 
B_k &= \big( \tau_{k}+\tfrac{1}{2} \big)  \log{\big(1+\frac{1}{\tau_{k}+\tfrac{1}{2}}\big)} + \log{(2\tau_k+3)} \\
    &  -\frac{1}{2} \log{\big(2\tau_k+3-(d_0+k)\big)} - \frac{2\tau_{k}+2-(d_0+k)}{2} \log{\Big( 1+\frac{1}{2\tau_{k}+2-(d_0+k)} \Big)}.\\
    &\leq \frac{1}{2} + \frac{1}{2} \log{\big(2\tau_k+3\big)} -\frac{1}{2} \log{\big( 1-\frac{d_0+k}{2\tau_{k}+3} \big)}
\end{align*} 

First we rearrange terms and then we bound the term $e^A$ by using the inequality $e^{-x-x^2/2} \geq 1-x$ which is valid for $0<x<1$:

\begin{align*} 
A   &= -\big( \tau_{0}+\tfrac{3}{2} \big)  \log{\big(1-\frac{d_0}{2\tau_{0}+3}\big)} + \big( \tau_{r+1}+\tfrac{1}{2} \big)  \log{\big(1-\frac{d_0+r}{2\tau_{r+1}+1}\big)} +\frac{d_0}{2} \log{\big( 2\tau_0 +3 -d_0\big)} \\
    &    - \frac{d_0+r}{2} \log{\Big( 2\tau_{r+1}+1-(d_0+r) \Big)}.\Rightarrow \\
e^A &= \big(1-\frac{d_0}{2\tau_{0}+3}\big)^{-(\tau_{0}+\tfrac{3}{2}) } \big(1-\frac{d_0+r}{2\tau_{r+1}+1}\big)^{\tau_{r+1}+\tfrac{1}{2}}( 2\tau_0 +3 -d_0)^{\frac{d_0}{2}}( 2\tau_{r+1}+1-(d_0+r))^{- \frac{d_0+r}{2}} \\
    &= \bigg( \frac{2\tau_0+3}{2\tau_{r+1}+1} \bigg)^{d_0/2} (2\tau_{r+1}+1)^{-r/2} \bigg( 1-\frac{d_0}{2\tau_0+3}\bigg)^{-(\tau_{0}+\tfrac{3}{2})+\tfrac{d_0}{2}} \bigg( 1-\frac{d_0+r}{2\tau_{r+1}+1} \bigg)^{\tau_{r+1}+\tfrac{1}{2} -\frac{d_0+r}{2}} \\
    &\leq \bigg( \frac{2t_0+3}{2t+1} \bigg)^{d_0/2} (2t+1)^{-r/2} \bigg( 1-\frac{d_0}{2\tau_0+3}\bigg)^{-(\tau_{0}+\tfrac{3}{2})+\tfrac{d_0}{2}} e^{ \Big( -\frac{d_0+r}{2t+1}- \big(-\frac{d_0+r}{2t+1}\big)^2/2 \Big) (t+1/2-\frac{d_0+r}{2})  } \\
    &= \bigg( \frac{2t_0+3}{2t+1} \bigg)^{d_0/2} (2t+1)^{-r/2} \bigg( 1-\frac{d_0}{2\tau_0+3}\bigg)^{-(\tau_{0}+\tfrac{3}{2})+\tfrac{d_0}{2}} e^{ -\frac{d_0+r}{2}+\frac{(d_0+r)^2}{8t+4}+\frac{(d_0+r)^3}{4(2t+1)^2} }
\end{align*} 

\noindent Now we upper bound the term $\exp{\big( A+\sum_{k=1}^r B_k \big)}$ using the above upper bounds:

\begin{align*} 
e^{A+\sum_{k=1}^r B_k} &\leq e^A e^{r/2} \prod_{i=1}^r \sqrt{\frac{2\tau_k+3}{1-\frac{d_0+k}{2\tau_k+3}}}\\
&\leq  \bigg( 1-\frac{d_0}{2\tau_0+3}\bigg)^{-(\tau_{0}+\tfrac{3}{2})+\tfrac{d_0}{2}}  e^{ -\frac{d_0}{2}+\frac{(d_0+r)^2}{8t+4}+\frac{(d_0+r)^3}{4(2t+1)^2} } \bigg( \frac{2t_0+3}{2t+1} \bigg)^{d_0/2} \times  \\
& (2t+1)^{-r/2} \prod_{i=1}^r \sqrt{\frac{2\tau_k+3}{1-\frac{d_0+k}{2\tau_k+3}}} 
\end{align*}

\noindent  Using the above upper bound we get that

\begin{align*}
p_{\tau} &\leq  C(r,d_0,t_0,t) \prod_{k=1}^r  \Big[ (2\tau_k+3-(d_0+k))^{-1/2} \big( 1+\frac{1}{\tau_k+1/2}\big)\Big]
\end{align*}

\noindent where 

$$ C(r,d_0,t_0,t)=\frac{(d_0+r-1)!}{(d_0-1)!}\bigg( 1-\frac{d_0}{2\tau_0+3}\bigg)^{-(\tau_{0}+\tfrac{3}{2})+\tfrac{d_0}{2}}  e^{ -\frac{d_0}{2}+\frac{(d_0+r)^2}{8t+4}+\frac{(d_0+r)^3}{4(2t+1)^2} } \bigg( \frac{2t_0+3}{2t+1} \bigg)^{d_0/2}(2t+1)^{-r/2}$$

We need to sum over all possible insertion times to bound the probability of interest $p^*$. We set $\tau_k' \leftarrow \tau_k -  \lceil \frac{d_0+k}{2} \rceil$ for $k=1,\ldots,r$. For $d=o(\sqrt{t})$ and $r=o(t^{2/3})$ we obtain: 

\begin{align*} 
p^* &\leq C(r,d_0,t_0,t) \sum_{t_0+1 \leq \tau_1 < .. <\tau_r \leq t}  \prod_{k=1}^r  \Big[ (2\tau_k+3-(d_0+k))^{-1/2} \big( 1+\frac{1}{\tau_k+1/2}\big)\Big] \\
  &\leq C(r,d_0,t_0,t) \sum_{t_0-\lceil \tfrac{d_0}{2} \rceil +1 \leq \tau_1' \leq .. \leq \tau_r' \leq t-\lceil \tfrac{d_0+r}{2} \rceil}  \prod_{k=1}^r  \Big[ (2\tau_k'+3)^{-1/2} \big( 1+\frac{1}{\tau_k'+\frac{d_0+k}{2} +1/2}\big)\Big] \\
  &\leq \frac{C(r,d_0,t_0,t)}{r!} \Bigg( \sum_{t_0-\lceil \tfrac{d_0}{2} \rceil}^{t-\lceil \tfrac{d_0+r}{2} \rceil} (2\tau_k'+3)^{-1/2}+\tfrac{1}{\sqrt{2}} (\tau_k'+3/2)^{-3/2} \Bigg)^r \\
  &\leq \frac{C(r,d_0,t_0,t)}{r!}  \Bigg( \int_{0}^{t-\frac{d+r}{2}} \! \Big[  (2x+3)^{-1/2} + \tfrac{1}{\sqrt{2}} (x+3/2)^{-3/2} \Big] \, \mathrm{d}x\Bigg)^r \\ 
  &\leq \frac{C(r,d_0,t_0,t)}{r!} \Big(\sqrt{2t+3-(d_0+r)}+2/3\Big)^{r}  \\ 
  &\leq \frac{C(r,d_0,t_0,t)}{r!} (2t)^{r/2} e^{-\tfrac{r}{2} \tfrac{d_0+r-3}{2t} } e^{\frac{2r}{3\sqrt{2t-(d_0+r)+3}}} \\
  &\leq {d_0+r-1 \choose d_0-1} \big(\frac{2t_0+3}{2t+1}\big)^{d_0/2} \Big[ \big( 1-\frac{d_0}{2t_0+3}\big)^{-(1-\frac{d_0}{2t_0+3})} \Big]^{t_0+3/2} \times \\ 
  & \big(\frac{2t}{2t+1}\big)^{r/2} \exp{\Big( -\frac{d_0}{2} + \frac{(d_0+r)^2}{8t+4}+ \frac{(d_0+r)^3}{4(2t+1)^2} - \frac{r(d_0+r-3)}{4t}+\frac{2r}{3\sqrt{2t+3-(d_0+r)}}  \Big)} \\ 
\end{align*}

\noindent By removing the $o(1)$ terms in the exponential and using the fact that $x^{-x}\leq e$ 
we obtain the following bound on the probability $p^*$.

$$  p^* \leq  {d_0+r-1 \choose d_0-1} \Big(\frac{2t_0+3}{2t+1}\Big)^{d_0/2} e^{\frac{3}{2}+t_0-\frac{d_0}{2}+\frac{2r}{3\sqrt{t}}}. $$ 
\end{proof}

Let $\mathcal{A}_1$ denote the event that the supernode consisting of the first $t_0$ vertices has degree $Y_t$
in the modified process $\mathcal{Y}$ less than $t_0^{1/4}\sqrt{t}$. 
Note that since $\{ X_t \leq t_0^{1/4} \sqrt{t}\} \subseteq \{ Y_t \leq t_0^{1/4}\sqrt{t}\} $ it 
suffices to prove that $\Prob{Y_t \leq t_0^{1/4}\sqrt{t}} =  o(1)$. Using Claim (1) we obtain

\begin{align*} 
\Prob{\mathcal{A}_1} &\leq \sum_{r=0}^{ t_0^{1/4}\sqrt{t} -(6t_0+6)} {r+6t_0+5 \choose 6t_0+5}  \Big(\frac{2t_0+3}{2t+1}\Big)^{3t_0+3} e^{-\frac{3}{2}-2t_0+\frac{2t_0^{1/4}}{3}} \\
&\leq t_0^{1/4}t^{1/2} \frac{\big( t_0^{1/4}t^{1/2} \big)^{6t_0+5}}{(6t_0+5)!} \Big(\frac{2t_0+3}{2t+1}\Big)^{3t_0+3}
e^{-\frac{3}{2}-2t_0+\frac{2t_0^{1/4}}{3}} \\ 
&\leq \Big( \frac{t}{2t+1} \Big)^{3t_0+3}  \frac{t_0^{3t_0/2+3/2} (2t_0+3)^{3t_0+3}}{(6t_0+5)^{6t_0+5}} e^{4t_0+7/2+2/3 t_0^{1/4}}\\
&\leq 2^{-(3t_0+3)} \frac{e^{4t_0+7/2+2/3 t_0^{1/4}}}{(6t_0+5)^{\tfrac{3}{2}t_0+\tfrac{1}{2}}}  =o(1).
\end{align*}

\end{proof}

\begin{lemma}
No vertex added after $t_1$ has degree exceeding $t_0^{-2}t^{1/2}$ \whp.
\label{lem:RANlemma8}
\end{lemma}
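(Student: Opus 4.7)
The plan is to combine the rising-factorial moment bound from Lemma~\ref{lem:RANlemma6} with Markov's inequality, and then sum the resulting tail probabilities over all $s>t_1$. Concretely, for any fixed positive integer $k$ and any vertex $s$, since $d\mapsto d^{(k)}$ is monotone we have
$$\Prob{d_t(s)\ge t_0^{-2}t^{1/2}}\;=\;\Prob{d_t(s)^{(k)}\ge (t_0^{-2}t^{1/2})^{(k)}}\;\le\;\frac{\Mean{d_t(s)^{(k)}}}{(t_0^{-2}t^{1/2})^{k}},$$
where I have used $(t_0^{-2}t^{1/2})^{(k)}\ge (t_0^{-2}t^{1/2})^{k}$, valid once $t$ is large enough that $t_0^{-2}t^{1/2}\ge 1$. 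Plugging in Lemma~\ref{lem:RANlemma6} gives
$$\Prob{d_t(s)\ge t_0^{-2}t^{1/2}}\;\le\;\frac{(k+2)!}{2}\cdot\frac{(2t/s)^{k/2}}{t_0^{-2k}t^{k/2}}\;=\;\frac{(k+2)!}{2}\cdot\frac{2^{k/2}\,t_0^{2k}}{s^{k/2}}.$$

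Next I apply the first-moment method: let $X$ denote the number of vertices $s>t_1$ with $d_t(s)\ge t_0^{-2}t^{1/2}$. Summing the pointwise bound,
$$\Mean{X}\;\le\;\frac{(k+2)!}{2}\,2^{k/2}\,t_0^{2k}\sum_{s>t_1}s^{-k/2}\;\le\;C_k\cdot t_0^{2k}\cdot t_1^{1-k/2}$$
for any fixed $k\ge 3$, where $C_k$ depends only on $k$. Choosing for instance $k=5$ yields $\Mean{X}\le C_5\,t_0^{10}\,t_1^{-3/2}$. Since $t_0=\log\log\log f(t)$ and $t_1=\log\log f(t)$, one has $t_0=\log t_1$, so $t_0^{10}/t_1^{3/2}\to 0$ as $t\to\infty$ (recall $f(t)\to\infty$, hence $t_1\to\infty$). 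Therefore $\Mean{X}=o(1)$, and Markov's inequality gives $\Prob{X\ge 1}\le\Mean{X}=o(1)$, which is exactly the claim.

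The main step requiring care is verifying that the tail sum $\sum_{s>t_1}s^{-k/2}$ is indeed $O(t_1^{1-k/2})$ for $k\ge 3$ (a routine integral comparison) and choosing $k$ so that the $s^{-k/2}$ decay beats both the union bound weight and the $t_0^{2k}$ prefactor. Since $\log(t_1/t_0^4)\to\infty$, any fixed $k\ge 4$ suffices; there is no need to let $k$ grow with $t$. The essential qualitative input beyond Lemma~\ref{lem:RANlemma6} is just the separation of scales $t_0\ll t_1$, which is guaranteed by the definitions.
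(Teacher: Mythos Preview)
Your proof is correct and follows essentially the same approach as the paper: a union bound over $s>t_1$, Markov's inequality applied to the rising factorial $d_t(s)^{(k)}$, and the moment bound of Lemma~\ref{lem:RANlemma6}, followed by summing $s^{-k/2}$. The only cosmetic difference is that the paper takes $k=3$ (yielding $t_0^{6}t_1^{-1/2}=o(1)$), which already suffices; your statement that one needs $k\ge 4$ is slightly conservative but not wrong.
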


\begin{proof} 

Let $\mathcal{A}_2$ denote the event that some vertex added after $t_1$ has
degree exceeding $t_0^{-2}t^{1/2}$. We use a union bound, a third moment argument
and Lemma~\ref{lem:RANlemma6} to prove that $\Prob{\mathcal{A}_2}=o(1)$. 
Specifically 

\begin{align*}
\Prob{\mathcal{A}_2} &\leq  \sum_{s=t_1}^t \Prob{d_t(s) \geq t_0^{-2}t^{1/2}} = \sum_{s=t_1}^t \Prob{d_t(s)^{(3)} \geq (t_0^{-2}t^{1/2})^{(3)}} \\ 
                     &\leq  t_0^{6}t^{-3/2} \sum_{s=t_1}^t \Mean{d_t(s)^{(3)}} \leq 5!\sqrt{2} t_0^6 \sum_{s=t_1}^t s^{-3/2}            \leq 5!2\sqrt{2} t_0^6 t_1^{-1/2} =o(1). 
\end{align*} 
 
\end{proof}

\begin{lemma}
No vertex added before $t_1$ has degree exceeding $t_0^{1/6}t^{1/2}$ \whp. 
\label{lem:RANlemma9}
\end{lemma}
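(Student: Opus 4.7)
The plan is to mirror the proof of Lemma~\ref{lem:RANlemma8} almost verbatim, with only the choice of threshold and index range adapted. Let $\mathcal{A}_3$ denote the event that some vertex inserted before time $t_1$ has degree exceeding $t_0^{1/6}t^{1/2}$. First I would pass from tail probabilities to rising-factorial moments via Markov's inequality,
$$\Prob{d_t(s) \geq t_0^{1/6} t^{1/2}} = \Prob{d_t(s)^{(3)} \geq (t_0^{1/6} t^{1/2})^{(3)}} \leq \frac{\Mean{d_t(s)^{(3)}}}{(t_0^{1/6}t^{1/2})^{(3)}},$$
and then invoke Lemma~\ref{lem:RANlemma6} with $k=3$ to obtain $\Mean{d_t(s)^{(3)}} \leq 5!\sqrt{2}\,(t/s)^{3/2}$, while the denominator is $(1+o(1))\,t_0^{1/2}t^{3/2}$ for $t$ large.

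Next I would take a union bound over $s \in [1,t_1]$, yielding
$$\Prob{\mathcal{A}_3} \leq \sum_{s=1}^{t_1} \frac{5!\sqrt{2}\,(t/s)^{3/2}}{(t_0^{1/6}t^{1/2})^{(3)}} = \frac{(1+o(1))\,5!\sqrt{2}}{t_0^{1/2}} \sum_{s=1}^{t_1} s^{-3/2}.$$
Since $\sum_{s\geq 1} s^{-3/2} = \zeta(3/2)$ is a finite absolute constant, the right-hand side is $O(t_0^{-1/2}) = o(1)$ because $t_0 = \log\log\log f(t) \to \infty$.

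The only delicate point —and the place where the choice of the threshold $t_0^{1/6}t^{1/2}$ matters— is the trade-off with the index range. Compared to Lemma~\ref{lem:RANlemma8}, we are now summing over small $s$, including $s=O(1)$, where $(t/s)^{3/2}$ is of order $t^{3/2}$ and would match the denominator without any slack. The extra factor $t_0^{1/2}$ that comes from raising the threshold by $t_0^{1/6+2} = t_0^{13/6}$ relative to Lemma~\ref{lem:RANlemma8} is precisely what absorbs this contribution: it turns the sum into $t_0^{-1/2}\,\zeta(3/2)$. So the main thing to verify is that this single power of $t_0^{1/2}$ suffices, which it does because the third-moment sum $\sum s^{-3/2}$ already converges without any further help. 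No higher moment is required, and the proof is complete modulo the routine asymptotic $(t_0^{1/6}t^{1/2})^{(3)} \sim t_0^{1/2} t^{3/2}$, which holds as long as $t_0^{1/6}t^{1/2} \to \infty$.
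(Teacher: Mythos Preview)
Your proposal is correct and matches the paper's proof essentially line for line: third rising-factorial moment, Markov, Lemma~\ref{lem:RANlemma6} with $k=3$, union bound over $s\le t_1$, and the convergent sum $\sum s^{-3/2}=\zeta(3/2)$ leaving a residual $O(t_0^{-1/2})=o(1)$. The only cosmetic difference is that the paper writes the bound $(t_0^{1/6}t^{1/2})^{(3)}\ge t_0^{1/2}t^{3/2}$ directly rather than as $(1+o(1))t_0^{1/2}t^{3/2}$.
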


\begin{proof} 
Let $\mathcal{A}_3$ denote the event that some vertex added before $t_1$ 
has degree exceeding $t_0^{1/6}t^{1/2}$. We use again a third moment argument
and Lemma~\ref{lem:RANlemma6} to prove that  $\Prob{\mathcal{A}_3}=o(1)$.

\begin{align*} 
\Prob{\mathcal{A}_3} &\leq \sum_{s=1}^{t_1} \Prob{ d_t(s) \geq t_0^{1/6}t^{1/2}} = \sum_{s=1}^{t_1} \Prob{ d_t(s)^{(3)} \geq (t_0^{1/6}t^{1/2})^{(3)} } \\ 
           &\leq t_0^{-1/2}t^{-3/2} \sum_{s=1}^{t_1} \Mean{ d_t(s)^{(3)} }  \leq t_0^{-1/2}t^{-3/2} \sum_{s=1}^{t_1} 5!\sqrt{2} \frac{t^{3/2}}{s^{3/2}} \\ 
           &\leq 5!\sqrt{2}\zeta(3/2)t_0^{-1/2} =o(1)
\end{align*} 

\noindent where $\zeta(3/2)=\sum_{s=1}^{+\infty} s^{-3/2} \approx 2.612$.
 
\end{proof}

\begin{lemma}
The $k$ highest degrees are added before $t_1$ and have degree $\Delta_i$ bounded by 
$ t_0^{-1} t^{1/2} \leq \Delta_i \leq t_0^{1/6}t^{1/2}$  \whp.
\label{lem:RANlemma10}
\end{lemma}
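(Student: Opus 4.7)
The plan is to combine the three preceding lemmas through a simple pigeonhole/contradiction argument. Condition on the (high-probability) intersection of the good events asserted by Lemmas~\ref{lem:RANlemma7},~\ref{lem:RANlemma8} and~\ref{lem:RANlemma9}; throughout the rest of the proof we work deterministically on this event.

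The main step is to establish the lower bound $\Delta_k \geq t_0^{-1} t^{1/2}$, i.e.\ that at least $k$ vertices of $G_t$ have degree at least $t_0^{-1} t^{1/2}$. Suppose for contradiction that fewer than $k$ vertices in the whole of $G_t$ have degree at least $t_0^{-1} t^{1/2}$. The supernode $V_{t_0}$ consists of at most $t_0+3$ vertices. Among these, at most $k-1$ have degree at least $t_0^{-1} t^{1/2}$, each of which is bounded from above by $t_0^{1/6} t^{1/2}$ by Lemma~\ref{lem:RANlemma9}; the remaining at most $t_0+3$ vertices of $V_{t_0}$ contribute less than $t_0^{-1} t^{1/2}$ each. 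Summing, the total degree $X_t$ of $V_{t_0}$ is at most
\[
(k-1)\,t_0^{1/6} t^{1/2} \;+\; (t_0+3)\,t_0^{-1} t^{1/2} \;\le\; \bigl(k\,t_0^{1/6}+2\bigr)\,t^{1/2}.
\]
On the other hand, Lemma~\ref{lem:RANlemma7} gives $X_t \ge t_0^{1/4} t^{1/2}$, and since $t_0^{1/4}/t_0^{1/6} = t_0^{1/12}\to\infty$ while $k$ is fixed, these two bounds contradict each other for $t$ sufficiently large. Hence at least $k$ vertices have degree $\ge t_0^{-1} t^{1/2}$, and in particular $\Delta_i \ge t_0^{-1} t^{1/2}$ for $i=1,\ldots,k$.

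Given the lower bound, the other two conclusions drop out immediately. By Lemma~\ref{lem:RANlemma8}, every vertex added after time $t_1$ has degree at most $t_0^{-2} t^{1/2}$, which is strictly less than $t_0^{-1} t^{1/2}$; hence no such vertex can appear among the top $k$ degrees, and the $k$ highest-degree vertices were all added before $t_1$. The upper bound $\Delta_i \le t_0^{1/6} t^{1/2}$ is then exactly the content of Lemma~\ref{lem:RANlemma9} applied to these vertices.

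There is essentially no technical obstacle here: the whole argument is a deterministic pigeonhole calculation once one conditions on the three high-probability events already proved. The only point to watch is the quantitative gap $t_0^{1/4} \gg t_0^{1/6}$, which is what makes the contradiction work and which explains the specific choice of exponents in Lemmas~\ref{lem:RANlemma7} and~\ref{lem:RANlemma9}.
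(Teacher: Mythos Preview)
Your proof is correct and follows essentially the same approach as the paper: condition on the events of Lemmas~\ref{lem:RANlemma7}--\ref{lem:RANlemma9}, use a pigeonhole contradiction against the supernode degree $t_0^{1/4}t^{1/2}$ (exploiting $t_0^{1/4}\gg t_0^{1/6}$) to obtain the lower bound, then read off the ``added before $t_1$'' claim from Lemma~\ref{lem:RANlemma8} and the upper bound from Lemma~\ref{lem:RANlemma9}. Your write-up is in fact a bit more explicit in the arithmetic than the paper's, but the argument is the same.
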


\begin{proof} 
For the upper bound it suffices to show that $\Delta_1 \leq t_0^{1/6}t^{1/2}$. 
This follows immediately by Lemmas~\ref{lem:RANlemma8} and~\ref{lem:RANlemma9}.
The lower bound follows directly from Lemmas~\ref{lem:RANlemma7},~\ref{lem:RANlemma8} and~\ref{lem:RANlemma9}.
Assume that at most $k-1$ vertices added before $t_1$ have degree exceeding the lower bound 
$t_0^{-1} t^{1/2}$. Then the total degree of the supernode formed by the first $t_0$ vertices
is $O(t_0^{1/6}\sqrt{t})$. This contradicts Lemma~\ref{lem:RANlemma7}. 
Finally, since each vertex $s\geq t_1$ has degree at most $t_0^{-2}\sqrt{t} \ll t_0^{-1} t^{1/2}$
the $k$ highest degree vertices are added before $t_1$ \whp.  
\end{proof}

\noindent The proof of Theorem~\ref{thrm:RANthrm1} is completed with the following lemma. 

\begin{lemma}
The $k$ highest degrees satisfy $ \Delta_{i} \leq \Delta_{i-1} - \frac{\sqrt{t}}{f(t)}$ \whp.
\label{lem:RANlemma11}
\end{lemma}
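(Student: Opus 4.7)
The plan is to localize the problem to vertices born early and then prove an anti-concentration estimate for the difference of two of their degrees. By Lemma~\ref{lem:RANlemma10}, \whp\ the $k$ highest degrees all correspond to vertices inserted before time $t_1 = \log\log f(t)$, and each such degree lies in $[t_0^{-1}\sqrt{t},\, t_0^{1/6}\sqrt{t}]$. Hence it suffices to fix two distinct vertices $s_1 < s_2 \leq t_1$ and prove
\[
\Prob{\,|d_t(s_1) - d_t(s_2)| < \sqrt{t}/f(t)\,} = o(t_1^{-2}),
\]
after which a union bound over the $\binom{t_1}{2}$ pairs finishes the proof. (We are free to shrink $f$ to any function tending to infinity sufficiently slowly, which allows the bound $o(t_1^{-2}) = o(1/\log^2\log f(t))$ to be absorbed.)

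For the single-pair estimate, I would run a joint version of the calculation in Claim~1 of Lemma~\ref{lem:RANlemma7}. Fix target degrees $a, b$ in the admissible window $[t_0^{-1}\sqrt{t}, t_0^{1/6}\sqrt{t}]$, and enumerate over pairs of insertion-time vectors $\tau = (\tau_1 < \cdots < \tau_{a-3})$ at $s_1$ and $\tau' = (\tau_1' < \cdots < \tau_{b-3}')$ at $s_2$. Since a RAN step that subdivides a face incident to $s_1$ cannot simultaneously be incident to $s_2$ (they live on different faces when the chosen face touches only one of them, and the rare joint case contributes negligibly), the two time vectors are disjoint and the joint probability factorizes up to the ``no-event'' factors $\prod_j (1 - (d_{j}(s_1)+d_{j}(s_2))/(2j+1))$. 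Exactly the integral bound from Claim~1, applied separately to each vertex, should then yield
\[
\Prob{d_t(s_1) = a,\, d_t(s_2) = b} \;\leq\; C\,\frac{a\,b}{t}\,\bigl(s_1/t\bigr)^{a/2}\bigl(s_2/t\bigr)^{b/2}\,\cdot\,(\text{correction})
\]
uniformly on the window, with the correction of polynomial size in $t_0$.

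Summing over pairs $(a,b)$ with $|a-b| < \sqrt{t}/f(t)$ inside the $O(\sqrt{t})$-wide admissible window, the count of such pairs is $O(t/f(t))$, while for pairs near the dominant location the probability is at most $C'/t$ (after absorbing the decay of $(s_i/t)^{a/2}$ against $a\,b$). This yields $\Prob{|d_t(s_1) - d_t(s_2)| < \sqrt{t}/f(t)} = O(1/f(t))$, comfortably beating the required $o(t_1^{-2})$ and completing the union bound.

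The main obstacle will be the joint probability calculation: Claim~1 was already delicate in one variable, and the two-variable analogue requires showing that the ``no-event'' factors $(1 - (d+d')/(2j+1))^{\cdots}$ distribute cleanly between the two insertion sequences, without producing cross-terms that erase the $1/t$ gain. A clean way to handle this is to first condition on the merged ordered sequence of insertions at $s_1 \cup s_2$, which reduces to the Lemma~\ref{lem:RANlemma7} calculation with $d_0$ replaced by $d_0 + d_0'$; only afterwards does one split the total according to which vertex each insertion belongs. Getting sharp enough constants here is the real technical work; anti-concentration in $|a-b|$ then follows because the resulting bound is essentially a smooth function of $(a,b)$ that varies by $O(1/\sqrt{t})$ per unit step.
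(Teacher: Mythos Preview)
Your overall strategy matches the paper's: localize to vertices born before $t_1$ via Lemma~\ref{lem:RANlemma10}, take a union bound over pairs $s_1 < s_2 \leq t_1$, and bound $\Prob{|d_t(s_1)-d_t(s_2)| < \sqrt{t}/f(t)}$ by a two-variable version of the Claim~1 calculation. Two points, however, do not go through as written.

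First, the displayed joint bound has the wrong exponent. In Claim~1 the factor $(s/t)^{\bullet}$ carries the \emph{starting} degree $d_0/2$ in the exponent, not the target degree. Your $(s_i/t)^{a/2}$ with $a$ of order $\sqrt{t}$ would give something like $t^{-\Theta(\sqrt t)}$, which is clearly not the right scale. The paper avoids this by first conditioning on the degrees $d_1,d_2$ at time $t_1$ (both trivially at most $3t_1+3$), then applying the Claim~1 estimate from time $t_1$ onward; the resulting factor is $\bigl(\tfrac{2t_1+3}{2t+1}\bigr)^{(d_1+d_2)/2}$ with $d_1,d_2 = O(t_1)$, and the sum over $d_1,d_2$ and over the $O(\sqrt t/f(t))$ admissible targets then gives $o(f(t)/(t_1^2\sqrt t))$ per pair, exactly what the union bound needs.

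Second, the sentence ``the rare joint case contributes negligibly'' is precisely where the paper spends effort. If $(s_1,s_2)\in E(G_t)$ they share up to two faces at every time, so a single RAN step can increment both degrees at once and the two insertion-time vectors are \emph{not} disjoint; your factorization fails. The paper treats this as a separate case via a coupling: whenever the chosen face is one of the shared faces, replace that step by two insertions, one into a non-shared face of $s_1$ and one into a non-shared face of $s_2$. This leaves the difference $d_t(s_1)-d_t(s_2)$ unchanged while making the two degree processes evolve as in the non-adjacent case, reducing Case~2 to Case~1. Without this (or an equivalent argument), the joint probability bound does not hold for adjacent early vertices.
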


\begin{proof} 
Let $\mathcal{A}_4$ denote the event that there are two vertices among the first $t_1$
with degree $t_0^{-1} t^{1/2}$ and within $\frac{\sqrt{t}}{f(t)}$ of each other. 
By the definition of conditional probability and Lemma~\ref{lem:RANlemma8}

\begin{align*}
\Prob{\mathcal{A}_4}  &= \Prob{\mathcal{A}_4|\bar{\mathcal{A}_3}} \Prob{\bar{\mathcal{A}_3}} + \Prob{\mathcal{A}_4|\mathcal{A}_3}\Prob{\mathcal{A}_3} \leq \Prob{\mathcal{A}_4|\bar{\mathcal{A}_3}} + o(1) 
\end{align*} 

\noindent it suffices to show that $\Prob{\mathcal{A}_4 | \bar{\mathcal{A}_3}}=o(1)$. Note that by a simple union bound

\begin{align*}
\Prob{\mathcal{A}_4} &\leq \sum_{1 \leq s_1 < s_2 \leq t_1} \sum_{l=-\frac{\sqrt{t}}{f(t)}}^{\frac{\sqrt{t}}{f(t)}} p_{l,s_1,s_2} =O\Big(t_1^2 \frac{\sqrt{t}}{f(t)} \max{ p_{l,s_1,s_2} }\Big)
\end{align*}

\noindent where $p_{l,s_1,s_2} = \Prob{d_t(s_1)-d_t(s_2)=l|\bar{\mathcal{A}_3}}$. 

We consider two cases and we show that in both cases $\max{ p_{l,s_1,s_2}}=o(\frac{f(t)}{t_1^2\sqrt{t}})$.

\noindent
\underline{$\bullet$ {\sc Case 1} $(s_1,s_2) \notin E(G_t)$:}\\
 
Note that at time $t_1$ there exist $m_{t_1}=3t_1+3 < 4t_1$ edges in $G_{t_1}$. 

\begin{align*} 
p_{l,s_1,s_2} &\leq \sum_{r=t_0^{-1}t^{1/2}}^{t_0^{1/6}t^{1/2}} \sum_{d_1,d_2=3}^{4t_1} \Prob{d_t(s_1)=r \wedge d_t(s_2)=r-l | d_{t_1}(s_1)=d_1, d_{t_1}(s_2)=d_2} \tag{2}\\ 
              &\leq t_0^{1/6}t^{1/2}  \sum_{d_1,d_2=3}^{4t_1} {2t_0^{1/6}t^{1/2} \choose d_1-1} {2t_0^{1/6}t^{1/2} \choose d_2-1} \Big(\frac{2t_0+3}{2t+1}\Big)^{(d_1+d_2)/2} e^{\frac{3}{2}+t_1+\frac{2t_0^{1/6}}{3}} \tag{3}\\ 
              &\leq t_0^{1/6}t^{1/2} \sum_{d_1,d_2=3}^{4t_1}  (2t_0^{1/6}t^{1/2})^{d_1+d_2-2}  \Big(\frac{2t_0+3}{2t+1}\Big)^{(d_1+d_2)/2} e^{2t_1} \\ 
              &\leq  t_0^{1/6}t^{1/2} e^{2t_1} t_1^2 (2t_0^{1/6}t^{1/2})^{8t_1-2} \Big(\frac{2t_0+3}{2t+1}\Big)^{4t_1} \\
              &= t_0^{4t_1/3+1/6}t^{-1/2} e^{2t_1} t_1^2 2^{8t_1} (2t_0+3)^{4t_1} \Big( \frac{t}{2t+1} \Big)^{4t_1} \\ 
              &= o\Big( \frac{f(t)}{t_1^2\sqrt{t}} \Big)
\end{align*} 

\noindent Note that we omitted the tedious calculation justifying the transition from (2) to (3) 
since calculating the upper bound of the joint probability distribution is very similar to the calculation of Lemma~\ref{lem:RANlemma7}.

\noindent \\
\underline{$\bullet$ {\sc Case 2} $(s_1,s_2) \in E(G_t)$ :}\\

Notice that in any case $(s_1,s_2)$ share at most two faces (which may change over time). 
Note that the two connected vertices $s_1,s_2$ share a common face only if $s_1,s_2 \in \{1,2,3\}$\footnote{We analyze the case where $s_1,s_2 \geq 4$. The other case is treated in the same manner.}.
Consider the following modified process $\mathcal{Y}'$: whenever an incoming vertex ``picks'' one of the two common faces 
we don't insert it. We choose two other faces which are not common to $s_1,s_2$ and add one vertex in each of those. 
Notice that the number of faces increases by 1 for both $s_1,s_2$ as in the original process and the difference of the degrees
remains the same. An algebraic manipulation similar to Case 1 gives the desired result. 
\end{proof}

\section{Proof of Theorem~\ref{thrm:RANthrm2}}
\label{sec:RANeigen}

Having computed the highest degrees of a RAN in Section~\ref{sec:RANdegrees}, eigenvalues are computed 
by adapting existing techniques \cite{chung,flaxman,mihail}.
We decompose the proof of Theorem~\ref{thrm:RANthrm2} in Lemmas~\ref{lem:RANlemma12},~\ref{lem:RANlemma13},~\ref{lem:RANlemma14},~\ref{lem:RANlemma15}.
Specifically, in Lemmas~\ref{lem:RANlemma12},~\ref{lem:RANlemma13} we bound the degrees and co-degrees respectively. 
Having these bounds, we decompose the graph into a star forest and show in Lemmas~\ref{lem:RANlemma14} and~\ref{lem:RANlemma15} that 
its largest eigenvalues, which are $(1\pm o(1))\sqrt{\Delta_i}$, dominate the eigenvalues of the remaining graph.  
This technique was pioneered by Mihail and Papadimitriou \cite{mihail}.

We partition the vertices into three set $S_1,S_2,S_3$. Specifically, let $S_i$ be the set of vertices 
added after time $t_{i-1}$ and at or before time $t_i$ where 

$$ t_0=0, t_1=t^{1/8}, t_2=t^{9/16}, t_3=t.$$

In the following we use the recursive variational characterization of eigenvalues \cite{chungbook}. Specifically,
let $A_G$ denote the adjacency matrix of a simple, undirected graph $G$ and let $\lambda_i(G)$ denote
the $i$-th largest eigenvalue of $A_G$. Then

$$ \lambda_i(G) = \min_{S} \max_{x \in S,x \neq 0} \frac{x^TA_Gx}{x^Tx}$$ 

\noindent where $S$ ranges over all $(n-i+1)$ dimensional subspaces of $\field{R}^n$. 

\begin{lemma}
For any $\epsilon>0$ and any $f(t)$ with $f(t) \rightarrow +\infty$ as $t \rightarrow +\infty$
the following holds \whp: for all $s$ with $f(t) \leq s \leq t$, for all vertices $r \leq s$,
then $d_s(r) \leq s^{\tfrac{1}{2}+\epsilon}r^{-\tfrac{1}{2}}$. 
\label{lem:RANlemma12}
\end{lemma}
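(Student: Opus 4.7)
The plan is to establish this uniform bound by a union bound over both the time index $s \in [f(t),t]$ and the vertex index $r \in [1,s]$, where for each fixed pair $(s,r)$ we bound $\Prob{d_s(r) \geq s^{1/2+\epsilon}r^{-1/2}}$ via Markov's inequality applied to a sufficiently high rising-factorial moment of $d_s(r)$, using Lemma \ref{lem:RANlemma6}. The key parameter is the order $k$ of the moment we use; I would take $k$ to be a constant depending only on $\epsilon$, chosen large enough that the subsequent double sum converges.

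More concretely, first I would fix a positive integer $k$ with $k\epsilon > 2$ (say $k = \Ceil{3/\epsilon}$). Write $L = L(s,r) = s^{1/2+\epsilon}r^{-1/2}$. Since $r \leq s$ and $s \geq f(t) \to \infty$, we have $L \to \infty$ uniformly, so $L^{(k)} = L(L+1)\cdots(L+k-1) \geq L^k$. Markov's inequality combined with Lemma \ref{lem:RANlemma6} then gives
\begin{equation*}
\Prob{d_s(r) \geq L} \;=\; \Prob{d_s(r)^{(k)} \geq L^{(k)}} \;\leq\; \frac{\Mean{d_s(r)^{(k)}}}{L^k} \;\leq\; \frac{(k+2)!}{2}\cdot\frac{(2s/r)^{k/2}}{s^{k(1/2+\epsilon)} r^{-k/2}} \;=\; \frac{C_k}{s^{k\epsilon}},
\end{equation*}
where $C_k = (k+2)!\,2^{k/2-1}$ depends only on $k$ (hence only on $\epsilon$). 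Note the exponents of $s$ and $r$ cancel exactly, which is what makes this particular scaling $s^{1/2+\epsilon}r^{-1/2}$ the natural threshold suggested by the moment bound.

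Next I would union bound. Summing over $r \in [1,s]$ gives $s \cdot C_k s^{-k\epsilon} = C_k s^{1-k\epsilon}$, and summing this over $s \in [f(t),t]$ yields
\begin{equation*}
\sum_{s \geq f(t)} C_k\, s^{1 - k\epsilon} \;\leq\; \frac{C_k}{(k\epsilon - 2)\, f(t)^{k\epsilon - 2}} \;=\; o(1)
\end{equation*}
as $t \to \infty$, using $k\epsilon - 1 > 1$. This delivers the required \whp\ statement.

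There is no serious obstacle; the argument is essentially a two-level union bound powered by the high-moment estimate of Lemma \ref{lem:RANlemma6}. The only points requiring mild care are (i) verifying that $L^{(k)} \geq L^k$ holds uniformly, which reduces to $L \geq 1$ and is automatic since $r \leq s$ forces $L \geq s^\epsilon \to \infty$; and (ii) choosing $k$ as a function of $\epsilon$ rather than treating $k$ as asymptotic, so that $C_k$ remains a harmless multiplicative constant throughout. The statement then follows by choosing $\epsilon$ arbitrarily small while fixing the corresponding $k$.
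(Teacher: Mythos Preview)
Your proposal is correct and essentially identical to the paper's proof: the paper also fixes $q = \lceil 4/\epsilon \rceil$, applies Markov's inequality to the $q$-th rising factorial moment via Lemma~\ref{lem:RANlemma6}, observes the same cancellation yielding a bound $C_q s^{-q\epsilon}$ per pair $(s,r)$, and finishes with the same double union bound to get $\sum_{s \geq f(t)} C_q s^{1-q\epsilon} = o(1)$. The only cosmetic difference is your choice $k = \lceil 3/\epsilon \rceil$ versus the paper's $\lceil 4/\epsilon \rceil$, both of which ensure $k\epsilon > 2$.
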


\begin{proof} 

Set $q=\Ceil{\frac{4}{\epsilon}}$. We use Lemma~\ref{lem:RANlemma6}, a union bound and Markov's inequality to obtain:

\begin{align*}
\Prob{ \bigcup_{s=f(t)}^t \bigcup_{r=1}^s \{ d_s(r) \geq s^{1/2+\epsilon} r^{-1/2} \} } &\leq \sum_{s=f(t)}^t \sum_{r=1}^s  \Prob{ d_s(r)^{(q)} \geq (s^{1/2+\epsilon} r^{-1/2})^{(q)}  }\\
&\leq \sum_{s=f(t)}^t \sum_{r=1}^s  \Prob{ d_s(r)^{(q)} \geq (s^{-(q/2+q\epsilon)} r^{q/2}) } \\ 
&\leq \sum_{s=f(t)}^t \sum_{r=1}^s  \frac{(q+2)!}{2} \Big( \frac{2s}{r} \Big)^{q/2} s^{-q/2} s^{-q\epsilon} r^{q/2}  \\
&= \frac{(q+2)!}{2} 2^{q/2} \sum_{s=f(t)}^t s^{1-q\epsilon} \\
&\leq \frac{(q+2)!}{2} 2^{q/2}   \int_{f(t)-1}^t \! x^{1-q\epsilon}  \, \mathrm{d}x    \\
&\leq \frac{(q+2)!}{2(q\epsilon-2)} 2^{q/2} (f(t)-1)^{2-q\epsilon} = o(1).
\end{align*}

\end{proof}

\begin{lemma}
Let $S_3'$ be the set of vertices in $S_3$ which are adjacent to more than one vertex of $S_1$. Then $ |S_3'| \leq t^{1/6}$ \whp.
\label{lem:RANlemma13} 
\end{lemma}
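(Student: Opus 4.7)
The plan is to exploit the fact that in a RAN every incoming vertex receives exactly three edges (to the three corners of the face it subdivides) and is never incident to new edges thereafter. Consequently, a vertex $v\in S_3$ lies in $S_3'$ if and only if the face $F$ into which $v$ is inserted contains at least two vertices of $S_1$. So it suffices to bound, for each insertion time $s\in(t_2,t]$, the number of faces of $G_{s-1}$ that are ``dangerous'' in this sense, and then sum the resulting selection probabilities.

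Next, I would make the crucial combinatorial observation that the induced subgraph $G[S_1]$ never acquires new edges after time $t_1$: every new edge created at time $s>t_1$ joins the new vertex (which lies in $S_2\cup S_3$) to an existing one. Hence $|E(G[S_1])|$ is frozen at $3t_1+3\leq 4t^{1/8}$ by planarity of $G_{t_1}$. Any face with two or three $S_1$-vertices contains an edge of $G[S_1]$ on its boundary, and in the maximal planar graph $G_{s-1}$ each edge borders exactly two faces. A simple double-counting of (face, $S_1$-edge) incidences therefore yields that, uniformly in $s$, the number of dangerous faces at time $s-1$ is at most $2|E(G[S_1])|\leq 7t^{1/8}$.

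Given this uniform bound, the probability that the vertex inserted at time $s$ lands in a dangerous face is at most $\frac{7t^{1/8}}{2s-1}$, since at time $s$ the face is chosen uniformly among $2s-1$ faces. Summing, one obtains
\[
\mathbb{E}[|S_3'|]\;\leq\; 7t^{1/8}\sum_{s=t_2+1}^{t}\frac{1}{2s-1}\;=\;O\!\left(t^{1/8}\log t\right).
\]
Since $t_2=t^{9/16}$ and we only need a polynomial gap, $\log(t/t_2)=\Theta(\log t)$ suffices here. Applying Markov's inequality gives
\[
\Pr\!\left(|S_3'|\geq t^{1/6}\right)\;\leq\;\frac{\mathbb{E}[|S_3'|]}{t^{1/6}}\;=\;O\!\left(t^{1/8-1/6}\log t\right)\;=\;o(1),
\]
which establishes the lemma. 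There is no real obstacle: the one point one must not miss is the monotonicity/frozenness of $E(G[S_1])$, which is what makes the bound on dangerous faces uniform in $s$ and therefore keeps the sum logarithmic rather than polynomial. A martingale/Chernoff refinement is available if stronger concentration is needed later, but for the stated $t^{1/6}$ threshold Markov is more than enough.
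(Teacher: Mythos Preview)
Your proposal is correct and follows essentially the same approach as the paper: both observe that $v\in S_3'$ iff $v$ is inserted into a face with at least two $S_1$-vertices, bound the number of such ``dangerous'' faces by $O(t_1)=O(t^{1/8})$ uniformly in $s$, sum the selection probabilities to get $\Mean{|S_3'|}=O(t^{1/8}\log t)$, and finish with Markov's inequality.

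The only difference is in the combinatorial step bounding the dangerous faces. The paper argues that each of the $2t_1+1$ faces present at time $t_1$ can spawn at most three dangerous faces after further subdivision, yielding $6t_1+3$. Your edge-counting argument (a dangerous triangle must contain an $S_1$--$S_1$ edge, and each edge borders at most two faces) gives $2(3t_1+3)=6t_1+6$. Both are equivalent up to constants; your version is arguably a bit cleaner since it avoids tracking how face subdivisions propagate. One small caveat: in this planar embedding the three outer edges border only one \emph{bounded} face, so ``exactly two'' should be ``at most two'', but this only strengthens your inequality.
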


\begin{proof} 
First, observe that when vertex $s$ is inserted it becomes adjacent to more than one vertex of $S_1$ 
if the face chosen by $s$ has at least two vertices in $S_1$. We call the latter property $\mathcal{A}$
and we write $s \in \mathcal{A}$ when $s$ satisfies it. 
At time $t_1$ there exist $2t_1+1$ faces total, which consist of faces whose three vertices 
are all from $S_1$. At time $s \geq t_2$ there can be at most $6t_1+3$ faces with at least two vertices
in $S_1$ since each of the original $2t_1+1$ faces can give rise to at most 3 new faces with 
at least two vertices in $s_1$.
Consider a vertex $s \in S_3$, i.e., $s \geq t_2$. By the above argument, 
$ \Prob{|N(s) \cap S_1| \geq 2 } \leq \frac{6t_1+3}{2t+1}$. Writing $|S_3'|$ as a sum
of indicator variables, i.e., $|S_3'|= \sum_{s=t_2}^t I(s \in \mathcal{A})$
and taking the expectation we obtain

\begin{align*} 
\Mean{|S_3'|} &\leq \sum_{s=t_2}^t \frac{6t_1+3}{2t+1} \leq (6t_1+3) \int_{t_2}^t \! (2x+1)^{-1} \, \mathrm{d}x  \\ 
         &\leq (3t^{\tfrac{1}{8}}+\tfrac{3}{2})   \ln{\frac{2t+1}{2t_2+1}}= o(t^{1/7})
\end{align*}

\noindent By Markov's inequality:

$$ \Prob{ |S_3'| \geq t^{1/6} } \leq \frac{ \Mean{|S_3'|} }{t^{1/6}}  = o(1).$$ 

\noindent Therefore, we conclude that $|S_3'| \leq t^{1/6}$ \whp.  
\end{proof}

\begin{lemma}
Let $F \subseteq G$ be the star forest consisting of edges between $S_1$ and $S_3 - S_3'$. 
Let $\Delta_1 \geq \Delta_2 \geq \ldots \geq \Delta_k$ denote the $k$ highest degrees of $G$. 
Then $\lambda_i(F) = (1-o(1))\sqrt{\Delta_i}$ \whp. 
\label{lem:RANlemma14} 
\end{lemma}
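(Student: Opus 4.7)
The plan is to exploit the structural fact that $F$ is a star forest and relate its eigenvalues directly to the star degrees. Since all edges of $F$ go between $S_1$ and $S_3 - S_3'$, and by definition every vertex in $S_3 - S_3'$ has at most one neighbor in $S_1$, each connected component of $F$ is a star centered at some vertex $v \in S_1$ with leaves in $S_3 - S_3'$ (together with isolated vertices). The eigenvalues of a star $K_{1,d}$ are $\pm\sqrt{d}$ and $d-1$ zeros, so the spectrum of $F$ is the union, with multiplicity, of the spectra of its star components, and in particular $\lambda_i(F) = \sqrt{d_i^{(F)}}$, where $d_1^{(F)} \geq d_2^{(F)} \geq \ldots$ denote the star degrees listed in non-increasing order.

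The proof then reduces to showing that $d_i^{(F)} = (1 - o(1))\Delta_i$ \whp\ for $i = 1, \ldots, k$. Let $v_1, \ldots, v_k$ attain the top $k$ degrees $\Delta_1 \geq \ldots \geq \Delta_k$ of $G$; by Lemma~\ref{lem:RANlemma10} they all appear before time $\log\log f(t)$ and hence belong to $S_1$. The degree of $v_i$ in $F$ equals $\Delta_i$ minus the number of its neighbors that lie in $S_1 \cup S_2 \cup S_3'$, and I will bound each of these three contributions in turn. The number of neighbors of $v_i$ inside $S_1$ is at most $|E(G_{t_1})| = 3 t^{1/8} + 3 = O(t^{1/8})$. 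Applying Lemma~\ref{lem:RANlemma12} at time $t_2 = t^{9/16}$ with any sufficiently small $\epsilon > 0$, \whp\ the degree $d_{t_2}(v_i) \leq t_2^{1/2+\epsilon} v_i^{-1/2} = O(t^{9/32 + \epsilon'})$, which upper bounds the number of neighbors of $v_i$ in $S_1 \cup S_2$. Finally, since every vertex inserted at a positive time is born with degree exactly $3$, the total number of edges between $S_1$ and $S_3'$ is at most $3|S_3'| \leq 3 t^{1/6}$ by Lemma~\ref{lem:RANlemma13}.

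Adding these three contributions and using $\Delta_i \geq t^{1/2}/f(t)$ from Theorem~\ref{thrm:RANthrm1}, one obtains $d_i^{(F)} \geq \Delta_i - O(t^{9/32 + \epsilon'}) = (1 - o(1))\Delta_i$ \whp. The reverse inequality $d_i^{(F)} \leq \Delta_i$ is trivial, and combined with the gap $\Delta_{i-1} - \Delta_i \geq t^{1/2}/f(t)$ from Theorem~\ref{thrm:RANthrm1} this guarantees that the $k$ largest star degrees of $F$ are precisely the ones realized by the stars centered at $v_1, \ldots, v_k$, in the correct order. Therefore $\lambda_i(F) = \sqrt{d_i^{(F)}} = (1 - o(1))\sqrt{\Delta_i}$ \whp, as required.

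The main obstacle in this plan is controlling the loss of degree at the top vertices when passing from $G$ to $F$, and specifically the contribution of neighbors lying in the intermediate layer $S_2$. This is precisely why the threshold $t_2 = t^{9/16}$ has been chosen so that $t_2^{1/2+\epsilon}$ remains $o(\sqrt{t})$: a larger choice of $t_2$ would make the degree bound of Lemma~\ref{lem:RANlemma12} useless at this scale, while a smaller one would shrink $S_3$ enough to undermine the companion estimate for the residual graph in Lemma~\ref{lem:RANlemma15}.
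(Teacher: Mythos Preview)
Your proof is correct and follows essentially the same approach as the paper: reduce to showing that the star degree of each top-$k$ vertex in $F$ is $(1-o(1))\Delta_i$ by bounding the neighbors lost to $S_1$, $S_2$, and $S_3'$ separately. The only cosmetic difference is that the paper invokes Theorem~\ref{thrm:RANthrm1} (applied at times $t_1$ and $t_2$) to bound the degree of $v_i$ at time $t_2$, obtaining $t^{5/16}$, whereas you use Lemma~\ref{lem:RANlemma12} and get the slightly sharper $t^{9/32+\epsilon'}$; both are $o(\sqrt{t})$ and either suffices.
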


\begin{proof}

It suffices to show that $\Delta_i(F) = (1-o(1))\Delta_i(G)$ for $i=1,\ldots,k$. 
Note that since the $k$ highest vertices are inserted before $t_1$ \whp, the edges they lose
are the edges between $S_1$ and the ones incident to $S_3'$ and $S_2$ 
and we know how to bound the cardinalities of all these sets. 
Specifically by Lemma~\ref{lem:RANlemma13} $|S_3'| \leq t^{1/6}$ \whp~ and by 
Theorem~\ref{thrm:RANthrm1} the maximum degree in $G_{t_1},G_{t_2}$ is less 
than $t_1^{1/2+\epsilon_1} = t^{1/8}$, $t_2^{1/2+\epsilon_2} = t^{5/16}$ for $\epsilon_1=1/16,\epsilon_2=1/32$
respectively \whp. Also by Theorem~\ref{thrm:RANthrm1}, $\Delta_i(G) \geq \frac{\sqrt{t}}{\log{t}}$.
Hence, we obtain

$$ \Delta_i(F) \geq \Delta_i(G) - t^{1/8} - t^{5/16} - t^{1/6} = (1-o(1))\Delta_i(G).$$
 
\end{proof}

\noindent To complete the proof of Theorem~\ref{thrm:RANthrm2} it suffices to prove that $\lambda_1(H)$ is $o(\lambda_k(F))$
where $H=G-F$. We prove this in the following lemma. The proof is based on bounding maximum
degree of appropriately defined subgraphs using Lemma~\ref{lem:RANlemma12} and standard inequalities 
from spectral graph theory \cite{chungbook}. 

\begin{lemma} 
$\lambda_1(H) = o(t^{1/4})$ \whp.
\label{lem:RANlemma15}
\end{lemma}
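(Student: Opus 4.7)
The plan is to decompose $H=G-F$ into six subgraphs according to the partition $V(G_t)=S_1\cup S_2\cup S_3$, bound the largest eigenvalue of each piece separately, and then use the fact that $\lambda_1(H)\le \sum_i \lambda_1(H_i)$ whenever $H=\bigcup_i H_i$. Since $F$ contains exactly the edges between $S_1$ and $S_3\setminus S_3'$, the subgraph $H$ consists of: (i) the three induced subgraphs $G[S_1],\,G[S_2],\,G[S_3]$, (ii) the bipartite graphs of edges between $S_1,S_2$ and between $S_2,S_3$, and (iii) the bipartite graph of edges between $S_1$ and $S_3'$. I would bound each of the six eigenvalues using either the trivial bound $\lambda_1(K)\le \Delta(K)$ for non-bipartite pieces or the bipartite bound $\lambda_1(K)\le \sqrt{\Delta_A(K)\,\Delta_B(K)}$ for a graph on $A\cup B$.

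The key inputs are Lemma~\ref{lem:RANlemma12}, which gives $d_s(r)\le s^{1/2+\epsilon}r^{-1/2}$ \whp\ uniformly, together with the structural observation that each newly inserted vertex has exactly three neighbors in $G$ at the moment of insertion. The latter implies that for the bipartite pieces $S_1\leftrightarrow S_2$ and $S_2\leftrightarrow S_3$, the maximum degree on the ``younger'' side is at most $3$, while on the ``older'' side Lemma~\ref{lem:RANlemma12} applied at time $t_2$ (respectively $t$) with $r\ge 1$ (respectively $r\ge t_1$) controls the degree. With $t_1=t^{1/8}$, $t_2=t^{9/16}$, and $\epsilon$ taken small enough, this gives $\lambda_1(S_1\leftrightarrow S_2)\le \sqrt{3\cdot t_2^{1/2+\epsilon}}=O(t^{9/64+\epsilon'})$ and $\lambda_1(S_2\leftrightarrow S_3)\le \sqrt{3\cdot t^{1/2+\epsilon}t_1^{-1/2}}=O(t^{7/32+\epsilon'})$. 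For $S_1\leftrightarrow S_3'$, Lemma~\ref{lem:RANlemma13} gives $|S_3'|\le t^{1/6}$ \whp, so the max degree on the $S_1$ side is $\le 3t^{1/6}$ while on the $S_3'$ side it is $\le 3$, yielding $\lambda_1\le 3t^{1/12}$.

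For the induced pieces, I would use that $G[S_1]\subseteq G_{t_1}$, so $\lambda_1(G[S_1])\le \Delta(G_{t_1})\le t_1^{1/2+\epsilon}=t^{1/16+\epsilon'}$; that every edge of $G[S_2]$ is present by time $t_2$ and every $s\in S_2$ satisfies $d_{t_2}(s)\le t_2^{1/2+\epsilon}t_1^{-1/2}=t^{7/32+\epsilon}$, so $\lambda_1(G[S_2])\le t^{7/32+\epsilon'}$; and that every $s\in S_3$ has $d_t(s)\le t^{1/2+\epsilon}t_2^{-1/2}=t^{7/32+\epsilon}$, so $\lambda_1(G[S_3])\le t^{7/32+\epsilon'}$. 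Summing the six bounds, the dominant contribution is $O(t^{7/32+\epsilon'})$, and since $7/32<1/4$, choosing $\epsilon'>0$ sufficiently small yields $\lambda_1(H)=o(t^{1/4})$ \whp, completing the proof.

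The arithmetic is essentially routine once the decomposition is set up; the only real subtlety is making sure one exploits the ``degree $3$ at insertion'' property on the correct side of each bipartite piece, since a naive two-sided application of Lemma~\ref{lem:RANlemma12} to $S_1\leftrightarrow S_2$ would give a bound larger than $t^{1/4}$ and the argument would fail. This bottleneck calculation is the one place where care is required.
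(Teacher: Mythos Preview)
Your decomposition into the same six pieces and your use of $\lambda_1(H)\le\sum_i\lambda_1(H_i)$ match the paper's proof. Your argument is in fact sharper: for the bipartite pieces $H_{12}$ and $H_{23}$ you invoke $\lambda_1\le\sqrt{\Delta_A\,\Delta_B}$ together with the RAN property that every vertex has exactly three neighbors among strictly earlier vertices, whereas the paper applies only the crude Gershgorin bound $\lambda_1\le\Delta$. With $\epsilon=1/64$ the paper records $\lambda_1(H_{12})\le t_2^{1/2+\epsilon}=t^{297/1024}$ and $\lambda_1(H_{23})\le t^{1/2+\epsilon}t_1^{-1/2}=t^{29/64}$, and both exponents exceed $1/4$, so the displayed bounds there do \emph{not} actually sum to $o(t^{1/4})$. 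Your ``degree $3$ on the younger side'' observation is precisely what repairs this (and, as you note, a naive two-sided application of Lemma~\ref{lem:RANlemma12} cannot succeed for any choice of $t_1,t_2$, since $H_3$ forces $t_2\gtrsim t^{1/2}$ while the crude bound on $H_{12}$ would force $t_2\lesssim t^{1/2}$). The bipartite bound $\lambda_1\le\sqrt{\Delta_A\Delta_B}$ you use is standard: writing the adjacency matrix as $\bigl(\begin{smallmatrix}0&B\\B^T&0\end{smallmatrix}\bigr)$, one has $\lambda_1^2=\lambda_1(BB^T)$ and each row sum of $BB^T$ is at most $\Delta_A\Delta_B$.

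For the three induced subgraphs and for $H_{13}$ your bounds agree with the paper's (up to the choice of $\epsilon$), and there the simple max-degree bound already lands below $t^{1/4}$. Your exponents $1/16$, $7/32$, $7/32$, $9/64$, $7/32$, $1/12$ (plus $o(1)$ slack) are all strictly below $1/4$, so the conclusion follows.
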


\begin{proof} 
From Gershgorin's theorem \cite{strang} the maximum eigenvalue of any graph is bounded by the maximum degree. 
We bound the eigenvalues of $H$ by bounding the maximum eigenvalues of six different induced
subgraphs. Specifically, let $H_i=H[S_i]$, $H_{ij}=H(S_i,S_j)$ where $H[S]$ is the subgraph
induced by the vertex set $S$ and $H(S,T)$ is the subgraph containing only edges with one vertex
is $S$ and other in $T$. We use Lemma~\ref{lem:RANlemma14} to bound 
$\lambda_1(H(S_1,S_3))$ and Lemma~\ref{lem:RANlemma13} for the other eigenvalues. 
We set $\epsilon=1/64$.

$$ \lambda_1(H_1) \leq \Delta_1(H_1)      \leq t_1^{1/2+\epsilon} = t^{33/512}. $$
$$ \lambda_1(H_2) \leq \Delta_1(H_2)    \leq t_2^{1/2+\epsilon}t_1^{-1/2} = t^{233/1024}. $$
$$ \lambda_1(H_3) \leq \Delta_1(H_3)    \leq t_3^{1/2+\epsilon}t_2^{-1/2} = t^{15/64}. $$
$$ \lambda_1(H_{12}) \leq \Delta_1(H_{12}) \leq t_2^{1/2+\epsilon} = t^{297/1024}. $$
$$ \lambda_1(H_{23}) \leq \Delta_1(H_{23}) \leq t_3^{1/2+\epsilon} t_1^{-1/2} = t^{29/64}. $$
$$ \lambda_1(H_{13}) \leq \Delta_1(H_{13}) \leq t^{1/6}. $$

Therefore \whp ~we obtain

$$ \lambda_1(H) \leq \sum_{i=1}^3 \lambda_1(H_i) + \sum_{i<j} \lambda_1(H_{i,j}) = o(t^{1/4}).$$
 
\end{proof}

\section{Proof of Theorem~\ref{thrm:RANternary}}
\label{sec:RANdiam}

Before we give the proof of Theorem~\ref{thrm:RANternary}, we give a simple proof that 
the diameter of a RAN is $O(\log{t})$ \whp. 

\begin{figure}[h]
\centering
\includegraphics[width=0.25\textwidth]{./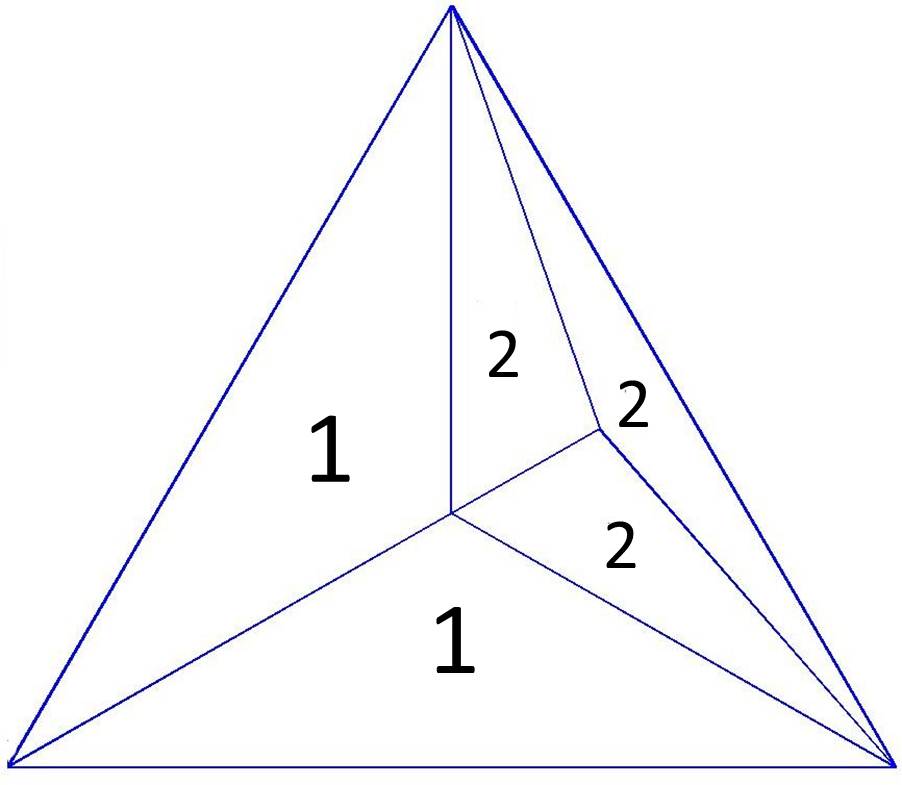}
\caption{An instance of the process for $t=2$. Each face is labeled with its depth.}
\label{fig:RANdepth}
\end{figure}

We begin with a necessary definition for the proof of Claim (2). 
We define the {\it depth of a face} recursively. Initially, we have three faces,
see Figure~\ref{fig:RANfig1}(a), whose depth equals 1. 
For each new face $\beta$ created by subdividing a face $\gamma$, we have $depth(\beta)=depth(\gamma)+1$. 
An example is shown in Figure~\ref{fig:RANdepth}, where each face is labeled with its
corresponding depth. 

\begin{claim}[2]
The diameter $d(G_t)$ satisfies  $d(G_t)=O(\log{t})$ \whp. 
\label{claim:diameter} 
\end{claim}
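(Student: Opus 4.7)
The plan is to reduce the diameter estimate to a bound on the maximum face-depth $D_t^\star := \max_{F}\text{depth}(F)$ in $G_t$, and then to bound the latter by a first-moment argument on the ternary tree that records the history of face subdivisions.

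First I would prove the geometric lemma that for every vertex $v$ of $G_t$ and every face $F$ of $G_t$ incident to $v$, one has $\text{dist}_{G_t}(v,\{1,2,3\})\leq \text{depth}(F)$. This is immediate at $t=1$ since the three depth-$1$ faces are all incident to vertex $4$, and $4$ is adjacent to each of $1,2,3$. In the inductive step, when a new vertex $v$ is inserted inside a face $F$ of depth $d$, each of the three corners $u$ of $F$ satisfies, by the inductive hypothesis applied to $F$, $\text{dist}(u,\{1,2,3\})\leq d$; hence $\text{dist}(v,\{1,2,3\})\leq d+1$, which is precisely the depth of each of the three new faces incident to $v$. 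Because $\{1,2,3\}$ forms a triangle in $G_t$, any two vertices $u,w$ satisfy $\text{dist}(u,w)\leq \text{dist}(u,\{1,2,3\})+1+\text{dist}(w,\{1,2,3\})\leq 2D_t^\star+1$, so it suffices to show $D_t^\star=O(\log t)$ \whp.

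Next I would encode the subdivision history by an infinite rooted ternary tree in which the root has the three initial faces as its children, and every internal node has three children of one higher depth representing the sub-faces that appear when that face is subdivided. Fix an address $a=(c_1,\dots,c_d)$ at depth $d$; there are exactly $3^d$ such addresses. The face at address $a$ is created at some time $\leq t$ iff all of its ancestors at depths $1,\dots,d-1$ are selected for subdivision at some times $2\leq t_1<\cdots<t_{d-1}\leq t$. Since the conditional probability of expanding any particular existing face at step $s$ is exactly $1/(2s+1)$, dropping the ``survival-until-expansion'' factors gives
\[
\Prob{a\text{ created by time }t}\;\leq\;\sum_{2\leq t_1<\cdots<t_{d-1}\leq t}\prod_{i=1}^{d-1}\frac{1}{2t_i+1}\;\leq\;\frac{(\tfrac12\log t)^{d-1}}{(d-1)!}.
\]

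A union bound over the $3^d$ addresses, together with Stirling's estimate $(d-1)!\geq ((d-1)/e)^{d-1}$, then yields
\[
\Prob{D_t^\star\geq d}\;\leq\;3\cdot\frac{(\tfrac32\log t)^{d-1}}{(d-1)!}\;\leq\;3\left(\frac{3e\log t}{2(d-1)}\right)^{d-1}.
\]
Choosing $d=\lceil \rho\log t\rceil$ for any fixed $\rho>3e/2$ makes the right-hand side $o(1)$, so $D_t^\star=O(\log t)$ \whp, completing the proof of the claim. The main technical obstacle is the bookkeeping around the specific expansion-time chains; once one observes that uniform face selection decouples across steps into the clean product $\prod 1/(2t_i+1)$, the rest is a standard factorial-versus-exponential comparison. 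The sharper constant asserted in Theorem~\ref{thrm:RANternary} would not follow from this crude union bound and would instead require the finer asymptotics for the height of random ternary increasing trees, in the spirit of Pittel's theorem for recursive trees.
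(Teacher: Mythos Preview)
Your proof is correct and follows essentially the same strategy as the paper: reduce the diameter to twice the maximum face-depth, then control that depth by a first-moment argument on the ternary subdivision history. The paper states the diameter-to-depth reduction as a ``simple but key observation'' without proof and bounds $\Mean{F_t(k)}$ directly, whereas you supply an explicit induction for the reduction and phrase the probabilistic step as a union bound over the $3^d$ addresses; the extra $3^d$ factor gives you a threshold of $\rho>3e/2$ rather than $e/2$, but this is immaterial for the $O(\log t)$ conclusion.
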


\begin{proof} 

A simple but key observation is that if $k^*$ is the maximum depth of a face then  $d(G_t)=O(k^*)$.
Hence, we need to upper bound the depth of a given face after $t$ rounds. 
Let $F_t(k)$ be the number of faces of depth $k$ at time $t$, then: 

\begin{align*}
\Mean{F_t(k)} &= \sum_{1 \leq t_1<t_2<\ldots<t_k\leq t} \prod_{j=1}^k \frac{1}{2t_j+1} \leq \frac{1}{k!} (\sum_{j=1}^t \frac{1}{2j+1} )^k \leq \frac{1}{k!} (\frac{1}{2} \log{t})^k \leq   (\frac{e\log{t}}{2k})^{k+1}
\end{align*}

\noindent By the first moment method we obtain $k^*=O(\log{t})$ \whp~ and by our observation $d(G_t)=O(\log{t})$ \whp.  
\end{proof}

\begin{figure}[h]
\centering
\includegraphics[width=0.99\textwidth]{./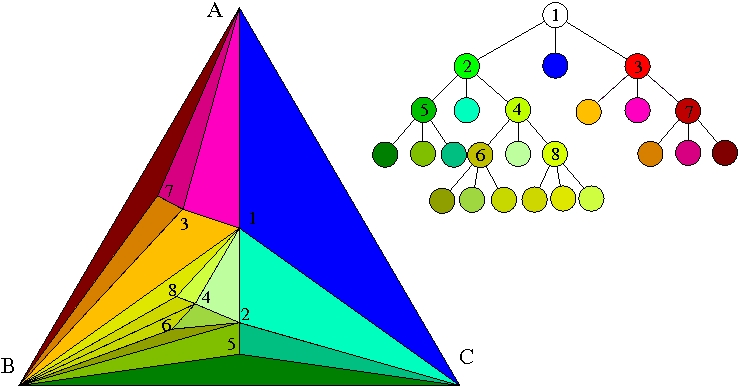}
\caption{RANs as random ternary trees.}
\label{fig:RANternary}
\end{figure}

\begin{figure}[h]
\centering
\includegraphics[width=0.5\textwidth]{./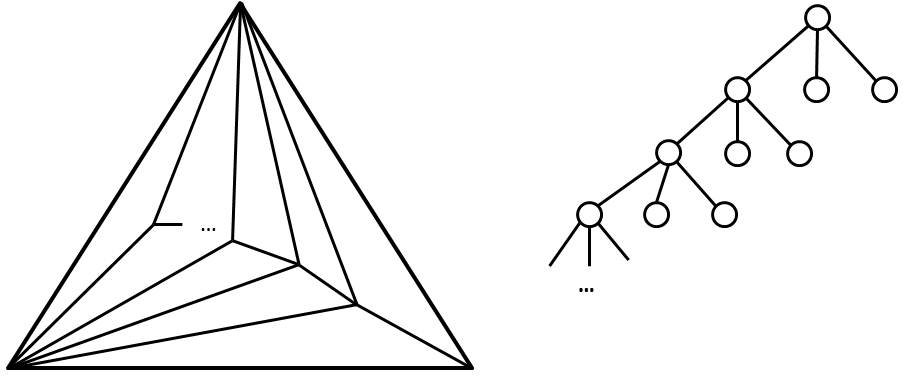}
\caption{The height of the random ternary tree cannot be used to lower bound the diameter. The height of the 
random ternary tree can be arbitrarily large but the diameter is 2.}
\label{fig:RANcounter}
\end{figure}

The depth of a face can be formalized via a bijection between random ternary trees and RANs.
Using this bijection we prove Theorem~\ref{thrm:RANternary} which gives a refined upper bound on the 
asymptotic growth of the diameter.

\begin{proof} 
Consider the random process which starts with a single vertex tree and at every
step picks a random leaf and adds three children to it. Let $T$
be the resulting tree after $t$ steps. 
There exists a natural bijection between the RAN process and this process, see \cite{darase}
and also Figure~\ref{fig:RANternary}.  
The depth of $T$ in probability 
is $\frac{\rho}{2} \log{t}$ where $\frac{1}{\rho}=\eta$ is the unique solution less than 1 
of the equation $\eta - 1 - \log{\eta} = \log{3}$, see Broutin and Devroye \cite{devroye}, pp. 284-285\footnote{ There is a typo in \cite{devroye}. Specifically it says ``$\rho$ is the unique solution greater than 1 of ...,''.
However, based on their Theorem 1, they should replace ``greater than'' with ``less than''. 
Thanks to Abbas Mehrabian for pointing this out.}.
Note that the diameter $d(G_t)$ is at most twice the height of the tree and hence the result follows. 
 
\end{proof} 	

The above observation, i.e., the bijection between RANs and random ternary trees
cannot be used to lower bound the diameter. A counterexample is shown in Figure~\ref{fig:RANcounter}
where the height of the random ternary tree can be made arbitrarily large but the diameter is 2. 
Albenque and Marckert proved in \cite{albenque} that if $v,u$ are two i.i.d. uniformly random
internal vertices, i.e., $v,u \geq 4$, then the distance $d(u,v)$ tends
to $\frac{6}{11} \log{n}$ with probability 1 as the number of vertices $n$ of the RAN grows to infinity.
Finally, it is worth mentioning that the diameter of the RAN grows faster
asymptotically than the diameter of the classic preferential attachment model \cite{albert} which \whp~
grows as $\frac{\log{t}}{\log{\log{t}}}$, see Bollob\'{a}s and Riordan \cite{bollobasriordan}.

%% Chapter 5: Subgraph Counting
\clearpage
\chapter{Triangle Counting in Large Graphs}
\label{trianglecountingchapter}
\lhead{\emph{Triangle Counting in Large Graphs}}
\section{Introduction} 
\label{sec:subgraphintro}

In this Chapter we focus on algorithmic techniques for approximate triangle counting in large graphs. 
Despite the fact that the subgraph of interest is a triangle, our techniques are extendable
to counting other types of fixed-size subgraphs. The outline of this Chapter is as follows: 
in Section~\ref{sec:trianglesparsifiers} we present {\it triangle sparsifiers}, 
a notion inspired by the seminal work of Bencz{\'u}r and Karger \cite{benczurstoc}
and Spielman and Srivastava \cite{DBLP:conf/stoc/SpielmanS08}. 
Specifically, we present a new randomized algorithm for approximately counting the number 
of triangles in a graph $G$. The algorithm proceeds as follows: keep each edge independently with probability $p$, enumerate
the triangles in the sparsified graph $G'$ and return the number of triangles found in $G'$ 
multiplied by  $p^{-3}$. We prove that under mild assumptions on $G$ and $p$ our algorithm returns a good 
approximation for the number of triangles with high probability.

We illustrate the efficiency of our algorithm on various large real-world datasets
where we achieve significant speedups. 
Furthermore, we investigate the performance of existing sparsification procedures
namely the Spielman-Srivastava spectral sparsifier \cite{DBLP:conf/stoc/SpielmanS08} and 
the the Bencz{\'u}r-Karger cut sparsifier \cite{benczurstoc,DBLP:journals/corr/cs-DS-0207078}
and show that they are not optimal/suitable with respect to triangle counting.

In Section~\ref{sec:degreepartitioning} we propose a new triangle counting method which provides a $(1 \pm \epsilon)$ multiplicative
approximation to the number of triangles in the graph and runs in $O \left( m + \frac{m^{3/2} \log{n} }{t \epsilon^2} \right)$ time.
The key idea of the method is to combine the sampling scheme introduced by Tsourakakis et al. in \cite{Tsourakakiskdd09,tsourakakis2}
with the partitioning idea of Alon, Yuster and Zwick \cite{739463} in order to obtain a more efficient sampling scheme. 
Furthermore, we show that this method can be adapted to the semistreaming model \cite{feigenbaum2005graph}
with a constant number of passes and $O\left(m^{1/2}\log{n} + \frac{m^{3/2}  \log{n} }{t \epsilon^2} \right)$ space. 
We apply our methods in various networks with several millions of edges and we obtain excellent results both with respect to the accuracy
and the running time. 
Finally, we propose a random projection based method for triangle counting and provide a sufficient condition to obtain an 
estimate with low variance. Even if such a method is unlikely to be practical it raises some interesting theoretical issues. 
 
In Section~\ref{sec:IPL} we present an (almost) optimal algorithm for triangle counting.
The proposed randomized algorithm is analyzed via the second moment method and provides tight theoretical guarantees. 
We discuss various aspects of the proposed algorithm, including 
an implementation in \mapreduce.

\section{Triangle Sparsifiers}
\label{sec:trianglesparsifiers}

\subsection{Proposed Algorithm}
\label{subsec:Sparsifieralgorithm}

\begin{algorithm}[!ht]
\caption{\label{alg:trianglesparsifier} Triangle Sparsifier} 
\begin{algorithmic}
\REQUIRE Set of edges $E \subseteq {[n]\choose 2}$ \\
\COMMENT{Unweighted graph $G([n],E)$} 
\REQUIRE Sparsification parameter $p$
\STATE Pick a random subset $E'$ of edges such that the events $\Set{e \in E'}, \mbox{for all $e \in E$}$ are independent and the probability of each is equal to $p$.
\STATE $t' \leftarrow $ count triangles on the graph $G'([n],E')$
\STATE Return $T \leftarrow \frac{t'}{p^3}$ 
\end{algorithmic}
\end{algorithm}

Our proposed algorithm {\em Triangle Sparsifier} is shown in Algorithm~\ref{alg:trianglesparsifier}.
The algorithm takes an unweighted, simple graph $G(V,E)$, where without loss of 
generality we assume $V=[n]$, and a sparsification parameter $p \in (0,1)$ as input.
The algorithm first chooses a random subset $E'$ of the set $E$ of edges.
The random subset is such that the events
$$
\Set{e \in E'}, \mbox{for all $e \in E$},
$$
are independent and the probability of each is equal to $p$.
Then, any triangle counting algorithm can be used to count triangles 
on the sparsified graph with edge set $E'$. Clearly, the expected size of $E'$ 
is $pm$. 
The output of our algorithm is the number of triangles in the sparsified
graph multiplied by $\frac{1}{p^3}$, or equivalently we are counting the 
number of weighted triangles in $G'$ where each edge has weight $\frac{1}{p}$. 
It follows immediately that the expected value $\Mean{T}$ of our estimate is 
the number of triangles  in $G$, i.e., $t$. Our main theoretical result is the 
following theorem:

\begin{theorem}
\label{thrm:ctsourak}
Suppose $G$ is an undirected graph with $n$ vertices, $m$ edges and $t$ triangles. Let also $\Delta$ denote the size of the largest collection of triangles with a common edge. Let $G'$ be the random graph that arises from $G$ if we keep every edge with probability $p$ and write $T$ for the number of triangles of $G'$. Suppose that $\gamma>0$ is a constant and 
\beql{cond}
\frac{pt}{\Delta} \ge \log^{6+\gamma} n, \ \ \ \mbox{if $p^2\Delta \ge 1$},
\eeq
and
\beql{cond-small}
p^3 t \ge \log^{6+\gamma} n, \ \ \ \mbox{if $p^2\Delta < 1$}.
\eeq
for $n \ge n_0$ sufficiently large.
Then
$$
\Prob{\Abs{T-\Mean{T}} \ge \epsilon \Mean{T}} \le n^{-K}
$$
for any constants $K, \epsilon > 0$ and all large enough $n$ (depending on $K$, $\epsilon$ and $n_0$).
\end{theorem}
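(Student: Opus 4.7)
The plan is to apply the Kim--Vu polynomial concentration inequality (Theorem~\ref{thrm:kim-vu}) to $Y := p^3 T$. Writing $X_e = \One{e \in E'}$ for the independent Bernoulli$(p)$ indicators of the sparsification, we have
\[
Y \;=\; \sum_{\{i,j,k\}\text{ triangle of }G} X_{ij}\, X_{jk}\, X_{ik},
\]
which is a totally positive, multilinear, homogeneous polynomial of degree $3$ with $\Mean{Y}=p^3 t$. Since $T=Y/p^3$, multiplicative concentration of $Y$ around $\Mean{Y}$ is equivalent to the desired concentration of $T$ around $t$, so it suffices to show $\Prob{\Abs{Y-\Mean{Y}}\ge\epsilon\Mean{Y}}\le n^{-K}$.

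The first substantive step is to bound ${\mathbb E}_d(Y)$ for $d=1,2,3$. For a fixed edge $e$, $\partial_e Y$ is a sum over triangles through $e$ of products of the two remaining edge indicators, so $\Mean{\partial_e Y}=p^2\cdot|\{\triangle:e\in\triangle\}|$ and hence ${\mathbb E}_1(Y)=p^2\Delta$. For distinct edges $e_1,e_2$, $\partial_{e_1}\partial_{e_2}Y$ is nonzero only when $e_1$ and $e_2$ share a vertex, in which case they extend to at \emph{most one} triangle (its third edge is forced); thus ${\mathbb E}_2(Y)\le p$. Finally ${\mathbb E}_3(Y)\le 1$ since all coefficients of $Y$ equal $1$. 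A short case analysis using $p\le 1$ now gives ${\mathbb E}_{\ge 1}(Y)=p^2\Delta$ when $p^2\Delta\ge 1$ (noting that $p^2\Delta\ge 1$ forces $p^2\Delta\ge p$) and ${\mathbb E}_{\ge 1}(Y)\le 1$ when $p^2\Delta<1$.

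With these estimates the Kim--Vu hypothesis \eqref{cond1} becomes $pt\ge\Delta$ in the first regime and $p^3 t\ge 1$ in the second, both far weaker than~\eqref{cond}--\eqref{cond-small}. I then apply \eqref{res1} with $\lambda=C\log n$ for a constant $C=C(K,\epsilon,\gamma)$ to be chosen, and compare the Kim--Vu deviation $c_3\lambda^3(\Mean{Y}\cdot{\mathbb E}_{\ge 1}(Y))^{1/2}$ with the target $\epsilon\Mean{Y}=\epsilon p^3 t$. In the first regime this reduces to $\lambda^3\le(\epsilon/c_3)(pt/\Delta)^{1/2}$, which is ensured by~\eqref{cond}; in the second regime to $\lambda^3\le(\epsilon/c_3)(p^3 t)^{1/2}$, ensured by~\eqref{cond-small}. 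The extra $\log^{\gamma}n$ slack in the hypotheses absorbs the constant $(\epsilon/c_3)^{1/3}$ for all sufficiently large $n$, so the choice $\lambda=C\log n$ is legitimate with $C$ depending only on $K,\epsilon,\gamma$.

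The failure probability is then at most $e^{-\lambda+(k-1)\log m}\le e^{-(C-4)\log n}\le n^{-K}$ as soon as $C\ge K+5$, using $m\le n^2$. The only delicate point in the plan is the partial-derivative bookkeeping, whose crux is the combinatorial observation that two intersecting edges determine at most one triangle: this keeps ${\mathbb E}_2(Y)$ at $p$ (rather than growing with $\Delta$) and is precisely what makes the tail depend on $pt/\Delta$ as stated. I do not anticipate a genuine obstacle; the main care is in keeping the two regimes of~\eqref{cond}--\eqref{cond-small} straight and in verifying that the polylogarithmic slack $\log^{\gamma}n$ indeed dominates all multiplicative constants for $n$ large.
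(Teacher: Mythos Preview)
Your approach is correct and essentially identical to the paper's: both apply Kim--Vu (Theorem~\ref{thrm:kim-vu}) directly to the degree-$3$ polynomial $T=\sum X_eX_fX_g$, bound ${\mathbb E}_1=p^2\Delta$, ${\mathbb E}_2\le p$, ${\mathbb E}_3\le 1$, split into the two regimes $p^2\Delta\gtrless 1$, and choose $\lambda$ of order $\log n$ so that conditions~\eqref{cond}--\eqref{cond-small} force the Kim--Vu deviation below $\epsilon\Mean{T}$. One small slip: you write $Y:=p^3T$ but then define $Y$ as the triangle count itself, which \emph{is} $T$ in the theorem's notation (so $Y=T$, not $p^3T$, and ``$T=Y/p^3$'' should be dropped); this does not affect the argument. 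The paper picks $\lambda=(\epsilon/c_3)^{1/3}(pt/\Delta)^{1/6}$ (resp.\ $(p^3t)^{1/6}$) to make the deviation exactly $\epsilon\Mean{T}$ and then checks $\lambda\ge(K+2)\log n$, whereas you fix $\lambda=C\log n$ and check the deviation inequality---these are equivalent rearrangements.
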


\begin{proof}
Write $X_e=1$ or $0$ depending on whether the edge $e$ of graph $G$ survives in $G'$. Then
$T = \sum_{\Delta(e,f,g)} X_e X_f X_g$ where
$\Delta(e,f,g) = \One{\mbox{edges $e,f,g$ form a triangle}}$.
Clearly $\Mean{T} = p^3 t$.

Refer to Theorem~\ref{thrm:kim-vu}. We use $T$ in place of $Y$, $k=3$.

We have
$$
\Mean{\frac{\partial T}{\partial X_e}} = \sum_{\Delta(e,f,g)} \Mean{X_f X_g} = p^2 \Abs{\Delta(e)}.
$$

\noindent We first estimate the quantities ${\mathbb E}_j (T), j=0,1,2,3,$ defined before Theorem \ref{thrm:kim-vu}.
We get \beql{e1}
{\mathbb E}_1 (T) = p^2 \Delta.
\eeq

We also have
$$
\Mean{\frac{\partial^2 T}{\partial X_e \partial X_f}} = p\One{\exists g: \Delta(e,f,g)},
$$
hence
\beql{e2}
{\mathbb E}_2 (T) \le p.
\eeq
Obviously, ${\mathbb E}_3(T) \le 1$.

Hence
$$
{\mathbb E}_{\ge 3} (T) \le 1,\ 
{\mathbb E}_{\ge 2} (T) \le 1,
$$
and
$$
{\mathbb E}_{\ge 1} (T) \le \max\Set{1, p^2\Delta},\ 
{\mathbb E}_{\ge 0} (T) \le \max\Set{1, p^2\Delta, p^3t}.
$$

\noindent
\underline{$\bullet$ {\sc Case 1} ($p^2\Delta < 1$):}\\
We get ${\mathbb E}_{\ge 1} (T) \le 1$, and from \eqref{cond-small}, $\Mean{T} \ge {\mathbb E}_{\ge 1}(T)$.

\noindent \\
\underline{$\bullet$ {\sc Case 2} ($p^2\Delta \ge 1$):}\\
We get ${\mathbb E}_{\ge 1} (T) \le p^2\Delta$ and, from \eqref{cond}, $\Mean{T} \ge {\mathbb E}_{\ge 1}(T)$.

We get, for some constant $c_3>0$, from Theorem \ref{thrm:kim-vu}:
\beql{dev}
\Prob{\Abs{T-\Mean{T}} \ge c_3 \lambda^3 (\Mean{T} {\mathbb E}_{\ge 1}(T))^{1/2}} \le e^{-\lambda + 2\log n}.
\eeq
Notice that since in both cases we have $\Mean{T} \ge {\mathbb E}_{\ge 1}(T)$.

We now select $\lambda$ so that the lower bound inside the probability on the left-hand side of \eqref{dev} becomes $\epsilon\Mean{T}$.
In Case 1 we pick
$$
\lambda = \frac{\epsilon^{1/3}}{c_3^{1/3}} (p^3 t)^{1/6}
$$
while in Case 2
$$
\lambda = \frac{\epsilon^{1/3}}{c_3^{1/3}} \left(\frac{pt}{\Delta}\right)^{1/6}
$$
to get
\beql{dev1}
\Prob{\Abs{T-\Mean{T}} \ge \epsilon \Mean{T}} \le \exp(-\lambda+2\log n) 
\eeq
Since $\lambda \ge (K+2) \log n$ follows from our assumptions \eqref{cond} and \eqref{cond-small} if $n$ is sufficiently large, we get
$\Prob{\Abs{T-\Mean{T}} \ge \epsilon \Mean{T}} \le n^{-K}$, in both cases. 
\end{proof}

\spara{Complexity Analysis:} The expected running time of edge sampling i
s sublinear, i.e., $O(pm)$, see Claim~\ref{claim:claim1}.
The complexity of the counting step depends on which  algorithm we use to count triangles\footnote{We assume for fairness that we use the same algorithm
in both the original graph $G$ and the sparsified graph $G'$ to count triangles.}. 
For instance, if we use \cite{739463} as our triangle counting algorithm,
the expected running time of Triangle Sparsifier is $O(pm+ (pm)^{\frac{2\omega}{\omega+1}})$,
where $\omega$ currently is   2.3727  \cite{williams2011breaking}.
If we use the node-iterator (or any other standard listing triangle algorithm) 
the expected running time is $O(pm+ p^2 \sum_i d_i^2)$.

\begin{claim}[Sparsification in sublinear expected time]
\label{claim:claim1}
The edge sampling can run in $O(pm)$ expected time. 
\end{claim}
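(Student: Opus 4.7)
The plan is to avoid the naive $\Theta(m)$ coin-flipping loop by jumping directly from one sampled edge to the next. Fix an arbitrary ordering $e_1, e_2, \ldots, e_m$ of the edges. Since the inclusion indicators $X_{e_i}$ are i.i.d.\ Bernoulli$(p)$, the gap between the index of a sampled edge and the index of the next sampled edge is a Geometric$(p)$ random variable, which can be generated in $O(1)$ time via the standard inverse-transform formula $\Floor{\log(U)/\log(1-p)}+1$ for $U$ uniform on $(0,1)$.

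Concretely, I would maintain a current pointer $i$ initialized to $0$, and iterate: draw $G \sim \text{Geometric}(p)$, advance $i \leftarrow i + G$, and if $i \le m$ add $e_i$ to $E'$; otherwise stop. Each iteration produces exactly one sampled edge (or terminates), so the number of iterations is $|E'|+1$. Since $\Mean{|E'|} = pm$ and each iteration does $O(1)$ work, the total expected running time is $O(pm)$.

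The main thing to be careful about is the cost model for generating the geometric variate: we are assuming that standard arithmetic operations (logarithm, floor, uniform draw) take $O(1)$ time, which is the usual RAM-model assumption under which the sparsification step is analyzed. No further obstacle arises, so the claim follows.
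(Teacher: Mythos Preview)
Your proposal is correct and essentially identical to the paper's own proof: both avoid the naive $\Theta(m)$ coin-flipping by generating the geometric gaps between consecutive sampled edges via inverse-transform sampling, citing Knuth for the technique. The only cosmetic difference is that the paper phrases the geometric variable as the number of failures before a success (using a ceiling formula) whereas you include the success in the gap (using a floor-plus-one formula), but the argument and conclusion are the same.
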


\begin{proof}
We do not ``toss a $p$-coin'' $m$ times in order to construct $E'$.
This would be very wasteful if $p$ is small.
Instead we construct the random set $E'$ with the following procedure which
produces the right distribution.
Observe that the number $X$ of unsuccessful events, i.e., edges which are not selected in our sample,
until a successful one follows a geometric distribution. Specifically, $\Prob{X=x}=(1-p)^{x-1}p$. 
To sample from this distribution it suffices to generate a uniformly distributed variable $U$ in $[0,1]$
and set $X \leftarrow \Ceil{ \frac{\text{ln}U}{1-p} } $. Clearly the probability that $X=x$ is 
equal to $\Prob{ (1-p)^{x-1} > U \geq (1-p)^x   } = (1-p)^{x-1} - (1-p)^{x} = (1-p)^{x-1}p $ as required. 
This provides a practical and efficient way to pick the subset $E'$ of edges in subliner expected time $O(pm)$. 
For more details see \cite{knuth}.
\end{proof}

\spara{Expected Speedup:} The expected speedup with respect to the triangle counting task depends on the triangle 
counting subroutine that we use. 
If we use \cite{739463} as our subroutine which is the fastest known 
algorithm, the expected speedup is $p^{-\frac{2\omega}{\omega+1}}$, i.e., currently $p^{-1.407}$
where $\omega$ currently is  2.3727  \cite{williams2011breaking}.
As already outlined, in practice 
$p^{-\frac{2\omega}{\omega+1}}$, i.e., currently $p^{-1.41}$, and $p^{-2}$ respectively.

\spara{Discussion:} This theorem states the important result that the estimator of the number of triangles is concentrated around its expected value, which 
is equal to the actual number of triangles $t$ in the graph under mild conditions on the triangle density of the
graph. The mildness comes from condition \eqref{cond}: picking $p=1$, given that our graph is not triangle-free, i.e., $\Delta \geq 1$,
gives that the number of triangles $t$ in the graph has to satisfy $t \geq \Delta \log^{6+\gamma} n$. 
This is a mild condition on $t$ since  $\Delta \leq n$ and thus it suffices that $t \geq n \log^{6+\gamma} n $ 
(after all, we can always add two dummy connected nodes that connect to every other node, as in Figure 1(a),
even if in empirically $\Delta$ is smaller than $n$). 
The critical quantity besides the number of triangles $t$, is $\Delta$. Intuitively, if the sparsification
procedure throws away the common edge of many triangles, the triangles in the resulting graph may differ significantly
from the original. 
A significant problem is the choice of $p$ for the sparsification. Conditions \eqref{cond} and \eqref{cond-small} tell us how small we can afford to choose $p$, but the quantities involved, namely $t$ and $\Delta$, are unknown.
We discuss a practical algorithm using a doubling procedure in Section~\ref{subsec:doubling}. 
Furthermore, our method justifies significant speedups. For a graph $G$ with 
 $t \ge n^{3/2+\epsilon}$ and $\Delta\sim n$ , we get $p = n^{-1/2}$ implying a linear  
expected speedup if we use a practical exact counting method as the node iterator. 
Finally, it is worth pointing out that {\it Triangle Sparsifier} essentially outputs a sparse graph  
$H(V,E',w)$ with $w=1/p$ for all edges $e \in E'$ which approximates $G(V,E)$ with respect to the count of triangles
(a triangle formed by the edges $(e_1,e_2,e_3)$ in a weighted graph counts for $w(e_1)w(e_2)w(e_3)$ unweighted triangles). 
As we shall see in Chapter~\ref{conclchapter}, see Figure~\ref{fig:sparsifierfig2}, 
{\it Triangle Sparsifier} is not recommended for weighted graphs. Finally, it is worth mentioning
that the sparsification scheme which has been used for speeding up the 
computation of linear algebraic decompositions
\cite{achlioptas2001fast,mach}   has also been used to count triangles 
based on spectral properties of real-world networks \cite{tsourakakis1,tsourakakis2009spectral,tsourakakis2011spectral}.

\subsection{Experimental Results}
\label{subsec:Sparsifierexperiments}

In this Section we present our experimental findings. Specifically, in Section~\ref{subsec:sparsifierdatasets} we describe the datasets we used, 
in Section~\ref{subsec:sparsifiersetup} we give details with respect to the experimental setup and in Section~\ref{subsec:sparsifierresults} 
the experimental results.

\subsubsection{Datasets}
\label{subsec:sparsifierdatasets}

The graphs we used with the exceptions of Livejournal-links and Flickr are available on the Web. 
Table~\ref{tab:sparsifierresources} summarizes the data resources. 
We preprocessed the graphs  by first making them undirected and removing all self-loops.
Furthermore, a common phenomenon was to have multiple edges in the edge file, i.e., a file whose each line
corresponds to an edge, despite the fact that the graphs were claimed to be simple.
Those multiple edges were removed. 
Table~\ref{tab:sparsifierdatasets} summarizes the datasets we used after the preprocessing.

\begin{table}
\centering
\begin{tabular}{|c|c|} \hline
Description                                            & Availability \\ \hline 
SNAP                                                   & \url{http://snap.stanford.edu/} \\ \hline 
UF Sparse Matrix Collection                            & \url{http://www.cise.ufl.edu/research/sparse} \\\hline 
Max Planck                                             & \url{http://socialnetworks.mpi-sws.org/} \\ \hline 
\end{tabular} 
\caption{\label{tab:sparsifierresources} Dataset sources.}
\end{table}

\begin{table}[ht]
%\begin{center}
\begin{tabular}{|l|r|r|r|l|} \hline
 Name (Abbr.)  & Nodes & Edges &  Triangle Count \\ \hline 
\textcolor{red}{$\odot$}   AS-Skitter (AS) & 1,696,415 & 11,095,298 & 28,769,868 \\ \hline
\textcolor{cyan}{$\star$}Flickr (FL) & 1,861,232  & 15,555,040 &  548,658,705   \\ \hline
\textcolor{cyan}{$\star$}Livejournal-links (LJ) & 5,284,457 & 48,709,772 & 310,876,909   \\ \hline
\textcolor{cyan}{$\star$}Orkut-links (OR)  & 3,072,626 & 116,586,585 & 621,963,073   \\ \hline
\textcolor{cyan}{$\star$}Soc-LiveJournal (SL) & 4,847,571 & 42,851,237 & 285,730,264   \\ \hline
\textcolor{cyan}{$\star$}Youtube (YOU) &   1,157,822  & 2,990,442&   4,945,382    \\ \hline
\textcolor{green}{$\diamond$}Web-EDU (WE) &   9,845,725  &    46,236,104   & 254,718,147  \\ \hline
\textcolor{green}{$\diamond$}Web-Google (WG) & 875,713 & 3,852,985& 11,385,529  \\ \hline
\textcolor{green}{$\diamond$}Wikipedia 2005/11 (W0511)    & 1,634,989  & 18,540,589   & 44,667,095  \\\hline
\textcolor{green}{$\diamond$}Wikipedia 2006/9  (W0609) & 2,983,494  & 35,048,115   &84,018,183  \\ \hline
\textcolor{green}{$\diamond$}Wikipedia 2006/11 (W0611)& 3,148,440  & 37,043,456 &88,823,817    \\ \hline
\textcolor{green}{$\diamond$}Wikipedia 2007/2 (W0702) & 3,566,907 & 42,375,911  & 102,434,918   \\\hline
\end{tabular}
%\end{center}
\caption{\label{tab:sparsifierdatasets}Datasets used in our experiments. Abbreviations are included. Symbol \textcolor{red}{$\odot$} stands for Autonomous Systems graphs, 
\textcolor{cyan}{$\star$} for online social networks and \textcolor{green}{$\diamond$} for Web graphs.
Notice that the networks with the highest triangle counts are online social networks (Flickr, Livejournal, Orkut), verifying the 
folklore that online social networks are abundant in triangles.}
\end{table}

\subsubsection{Experimental Setup}
\label{subsec:sparsifiersetup}

The experiments were performed on a single machine, with Intel Xeon CPU
at 2.83 GHz, 6144KB cache size and and 50GB of main memory. 
The algorithm was implemented in C++, and compiled using gcc version 4.1.2 and the -O3 optimization flag.
Time was measured by taking the user time given by the linux time command.
IO times are included in that time since the amount of memory operations performed in setting up the graph is non-trivial.
However, we use a modified IO routine that's much faster than the standard C/C++ scanf.
Furthermore, as we mentioned in Section~\ref{subsec:Sparsifieralgorithm} picking a random subset of expected
size $p|S|$ from a set $S$ can be done in expected sublinear time \cite{knuth}. 
A simple way to do this in practice is to generate the
differences between indices of entries retained.
This allows us to sample in a sequential way and also results in better cache performance.
As a competitor we use the single pass algorithm of \cite[$\S$ 2.2]{buriol}.  

\subsubsection{Experimental Results}
\label{subsec:sparsifierresults}

Table~\ref{tab:sparsifierdatasets} shows the count of triangles for each graph used in our experiments. 
Notice that Orkut, Flickr and Livejournal graphs have  $\sim$622M, 550M and 311M triangles respectively. 
This confirms the folklore that online social networks are abundant in triangles. 
Table~\ref{tab:sparsifierresults} shows the results we obtain for $p=0.1$ over 5 trials.  
All running times are reported in seconds. The first column shows the running time for the exact counting algorithm over 5 runs. 
Standard deviations are neglibible for the exact algorithm and therefore are not reported. 
The second and third column show the error and running time averaged over 5 runs for each dataset (two decimal digits of accuracy).
Standard deviations are also included (three decimal digits of accuracy). 
The last column shows the running time averaged over 5 runs for the 1-pass algorithm as stated in 
\cite[$\S$2.2]{buriol} and the standard deviations. 
For each dataset the number of samples needed by the 1-pass algorithm was set to a value that achieves {\em at most} as good accuracy as the ones  
achieved by our counting method. Specifically, for any dataset,  if $\alpha,\beta (\%)$ are the errors obtained by our algorithm 
and the Buriol et al. algorithm, we ``tune'' the number of samples in the latter algorithm in such way that 
$ \alpha \leq \beta \leq \alpha+1\%$. 
Even by favoring in this way the 1-pass algorithm of Buriol et al. \cite{buriol},
one can see that the running times achieved by our method are consistently better. 
However, it is important to outline once again that our method and other triangle counting methods can be combined. 
For example, in Section~\ref{sec:degreepartitioning} we show that Triangle Sparsifiers and other sampling methods 
can be combined to obtain a superior performance both in practice and theory by improving 
the sampling scheme of Buriol et al. \cite{buriol}. As we will see in detail, 
this is achieved by distinguishing vertices into two subsets according to their 
degree and using two sampling schemes, one for each subset \cite{tsourakakis4,DBLP:journals/im/KolountzakisMPT12}.
We also tried other competitors, but our running times  outperform  them significantly. 
For example, even the exact counting method outperforms other approximate counting methods.  
As we show in Section~\ref{subsec:doubling} smaller values of $p$ values work as well and these can be found by a simple procedure. 

\begin{table}[ht]
\begin{center}
\begin{tabular}{p{1cm}|c|c|c|c|}\cline{2-5}
                  &  \multicolumn{4}{|c|}{Results}                                        \\ \cline{2-5}
                  &  \multicolumn{1}{|c|}{Exact}          & \multicolumn{2}{|c|}{Triangle Sparsifier} & \multicolumn{1}{|c|}{Buriol et al. \cite{buriol}} \\ \cline{2-5}
                  &   Avg. time                           &  Avg. err.\% (std) & Avg. time (std) &  Avg. time (std)   \\ \cline{1-5} \hline
AS                & 4.45                                  & 2.60 (0.022)       & 0.79 (0.023)    &  2.72 (0.128)       \\ \cline{1-5}
FL                & 41.98                                 & 0.11 (0.003)       & 0.96 (0.014)    &  3.40 (0.175)        \\  \cline{1-5}
LJ                & 50.83                                 & 0.34 (0.001)       & 2.85 (0.054)    & 12.40 (0.250)         \\  \cline{1-5}
OR                & 202.01                                & 0.60 (0.004)       & 5.60 (0.159)    & 11.71 (0.300)          \\  \cline{1-5}
SL                & 38.27                                 & 8.27 (0.006)       & 2.50 (0.032)    & 8.92  (0.115)           \\  \cline{1-5}
YOU               & 1.35                                  & 1.50 (0.050)       & 0.30 (0.002)    & 10.91 (0.130)            \\ \cline{1-5}
WE                & 8.50                                  & 0.70 (0.005)       & 2.79 (0.090)    & 6.56  (0.025)             \\  \cline{1-5}
WG                & 1.60                                  & 1.58 (0.011)       & 0.40 (0.004)    & 1.85  (0.047)              \\ \cline{1-5}
W0511             & 32.47                                 & 1.53 (0.010)       & 1.19 (0.020)    & 3.71  (0.038)               \\ \cline{1-5}
W0609             & 86.62                                 & 0.40 (0.055)       & 2.07 (0.014)    & 8.10  (0.040)                \\ \cline{1-5}
W0611             & 96.11                                 & 0.62 (0.008)       & 2.16 (0.042)    & 7.90  (0.090)                 \\ \cline{1-5}
W0702             & 122.34                                & 0.80 (0.015)       & 2.48 (0.012)    & 11.00 (0.205)                  \\ \cline{1-5}
\end{tabular}
\end{center}
\caption{\label{tab:sparsifierresults}Results of experiments averaged over 5 trials using $p=0.1$. All running times are reported in seconds. 
The first column shows the running time for the exact counting algorithm averaged over 5 runs.
The second and third column show the error and running time averaged over 5 runs for each dataset (two decimal digits of accuracy).
Standard deviations are also included (three decimal digits of accuracy). The last column shows the running time averaged over 
5 runs for the 1-pass algorithm as stated in \cite[$\S$2.2]{buriol} and the corresponding standard deviations. 
The number of samples for each dataset was set to a value that achieves {\em at most} as good accuracy as the ones 
achieved by our counting method. See Section~\ref{subsec:sparsifierresults} for all the details.}
\end{table}

\subsubsection{The ``Doubling'' Algorithm}
\label{subsec:doubling}

As we saw in Section~\ref{subsec:Sparsifieralgorithm}, setting optimally the parameter $p$ requires knowledge 
of the quantity we want to estimate, i.e., the number of triangles. 
To overcome this problem we observe that when we have concentration, the squared coefficient
of variation $\frac{\text{Var}[T]}{\Mean{T}^2}$ is ``small''. Furthermore, by the Chebyshev inequality
and by the median boosting trick \cite{jerrum} it suffices to sample $\{T_1,\ldots,T_s\}$ where $ s= O( \frac{\text{Var}[T]}{\Mean{T}^2} \frac{1}{\epsilon^2}\ln{\frac{1}{\delta}} )$
in order to obtain a $(1\pm \epsilon)$ approximation $\Mean{T}=t$ with probability at least 1-$\delta$. 
Hence, one can set a desired value for the number of samples $s$ and of the failure probability $\delta$ 
and calculate the expected error  $\epsilon = O( \sqrt{ \frac{\text{Var}[T]}{\Mean{T}^2} \frac{1}{s}\ln{\frac{1}{\delta}}} )$.
If this value is significantly larger than the desired error threshold then one increases $p$ and repeats the same procedure
until the stopping criterion is satisfied. One way one can change $p$ is to use the multiplicative 
rule $p\leftarrow c p$,  where $c>1$ is a constant. For example, if $c=2$ then we have a doubling procedure. 
Notice that we've placed the word doubling in the title of this section in quotes in order to emphasize that 
one may use any $c>1$ to change $p$ from one round to the next.

For how many rounds can this procedure run? Let's consider the realistic scenario where one wishes
to be optimistic and picks as an initial guess for $p$ a value $p_0=n^{-\alpha}$ where $\alpha$ is a positive
constant, e.g., $\alpha=1/2$. Let $p*$ be the minimum value over all possible $p$ with the property 
that for $p*$ we obtain a concentrated estimate of the true number of triangles. Clearly, $p*\leq 1$ 
and hence the number of rounds performed by our procedure is less that $r$ where $ p_0 c^r = 1$. 
Hence, for any constant $c>1$ we obtain that the number of rounds performed by our algorithm is $O(\log{n})$. 
Furtermore,  note that the running time of the doubling procedure is dominated by the last iteration.
To see why, consider for simplicity the scenario where $r+1$ rounds are needed to deduce concentration,
$c=\sqrt{2}$ and the use of the node-iterator algorithm to count triangles in the triangle sparsifier.  
Then, the total running time shall be $p_0^2 \sum_{v\in V(G)} d(v)^2 \Bigg(  1+2+\ldots+2^{r-1}+2^r   \Bigg)$. 
Finally, observe that $1+2+...+2^{r-1}=O(2^{r})$.
In practice, this procedure works even for small values of $s$. An instance of this procedure 
with $s=2$, $\delta=1/100$ and error threshold equal to $3\%$ is shown in Table~\ref{tab:doublingexample}.

\begin{table}[ht]
\begin{center} 
\begin{tabular}{|c|c|c|c|} \hline
p    & $\{T_1 , T_2 \}$                 & $ \sqrt{ \frac{\text{Var}[T]}{\Mean{T}^2} \frac{1}{s}\ln{\frac{1}{\delta}}} $ &  err(\%) \\ \hline 
0.01 & $\{ 42,398,007 ~\& ~50,920,488\}$  & 0.1960    & 4.46     \\ \hline 
0.02 & $\{ 42,540,941 ~\& ~43,773,753 \}$ & 0.0307    & 3.38     \\ \hline 
0.04 & $\{ 44,573,294 ~\& ~43,398,549 \}$ & 0.0287    & 1.52    \\ \hline 
\end{tabular}
\end{center}
\caption{Doubling procedure for the Wikipedia 2005 graph with 44,667,095 triangles.} 
\label{tab:doublingexample}
\end{table}

\subsection{Theoretical Ramifications} 
\label{subsec:Sparsifierdiscussion}

In this Section  we investigate the performance of the Bencz{\'u}r-Karger  cut sparsifier 
and the Spielman-Srivastava spectral sparsifier with respect to triangle counting. 

\begin{figure}[htbp]
\begin{center}
\includegraphics[width=8cm]{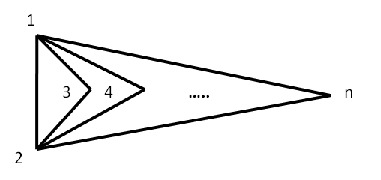}
\end{center}
\caption{Graph with linear number $O(n)$ of triangles.}
\label{fig:fig1}
\end{figure}

Consider the graph $G$ shown in Figure~\ref{fig:fig1}. The strong edge connectivity of any edge in the graph 
is 2 and therefore the Bencz{\'u}r-Karger algorithm does not distinguish the importance
of the edge $e=(1,2)$ with respect to triangle counting.  
The Spielman-Srivastava sparsifier with probability $1-o(1)$ throws away the critical edge $e=(1,2)$
as the number of vertices $n$ tends to infinity. To prove this claim, we need 
use Foster's theorem~\ref{thrm:fosterstheorem}.

\begin{claim}
The effective resistance $R(1,2)$ of the edge $(1,2)$ tends to 0 as $n$ grows to infinity, i.e., $R(1,2)=o(1)$. 
Furthermore, all other edges have constant effective resistance. 
\end{claim}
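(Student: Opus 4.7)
The plan is to first make the graph structure in Figure~\ref{fig:fig1} explicit. From the context (linearly many triangles all sharing the edge $e=(1,2)$), the natural reading is that we have vertices $\{1,2,3,\dots,n\}$, with the edge $(1,2)$ present together with edges $(1,i)$ and $(2,i)$ for every $i\geq 3$. This gives $n-2$ triangles, all of which contain the common edge $(1,2)$, and $2n-3$ edges in total. I will treat the graph as an electrical network in which every edge has unit resistance.

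For the first claim, I would compute $R(1,2)$ directly from the series/parallel rules. Between the terminals $1$ and $2$ there is one direct resistor of resistance $1$, and for each $i\in\{3,\dots,n\}$ a two-edge path $1$--$i$--$2$ of resistance $2$. All of these are in parallel, so the effective conductance is
\[
C(1,2) \;=\; 1 + \sum_{i=3}^{n} \tfrac{1}{2} \;=\; 1+\tfrac{n-2}{2},
\]
and hence $R(1,2) = 2/n = o(1)$ as $n\to\infty$. This is the easy part.

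For the second claim, I would exploit the symmetry of the graph: the automorphism swapping $1\leftrightarrow 2$ and the automorphisms permuting $\{3,\dots,n\}$ imply that all edges $(1,i)$ and $(2,i)$ with $i\geq 3$ have a common effective resistance $R_1$. Then Foster's theorem (Theorem~\ref{thrm:fosterstheorem}) gives
\[
R(1,2) + 2(n-2)\,R_1 \;=\; n-1.
\]
Combined with $R(1,2)=2/n$, this yields $R_1 = \tfrac12 + o(1)$, which is bounded below and above by positive constants, as required. Alternatively (as a sanity check), $R_1$ can be computed directly by placing a unit current between, say, $1$ and $3$ and using the symmetry among $\{4,\dots,n\}$ to collapse the network to a small circuit; both routes should agree.

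The main obstacle is really just making the graph description precise and being careful with the symmetry argument when invoking Foster; once that is set up, every computation reduces to elementary series/parallel rules. With $R(1,2)=o(1)$ and all other edges having effective resistance bounded away from $0$, the Spielman--Srivastava sampling probability for $(1,2)$ (proportional to $R(1,2)$) will be $o(1)$, so the critical edge is discarded \whp, establishing the consequence advertised in Section~\ref{subsec:Sparsifierdiscussion}.
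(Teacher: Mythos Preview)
Your proposal is correct and follows essentially the same approach as the paper: compute $R(1,2)=2/n$ via series/parallel reduction, then invoke symmetry and Foster's theorem to solve $\tfrac{2}{n}+2(n-2)R_1=n-1$, obtaining $R_1\to \tfrac12$. The only cosmetic difference is that the paper writes out the exact value $R_n=\tfrac{n^2-n-2}{2n^2-4n}$ before taking the limit, whereas you jump directly to $R_1=\tfrac12+o(1)$.
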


\begin{proof} 

Using the in-series and in-parallel network simplification rules \cite{bollobas}, 
the effective conductance of the edge $(1,2)$ is $ 1 + \sum_{i=1}^{n-2}\frac{1}{2}=\frac{n}{2}$.
Hence, the effective resistance of the edge $e=(1,2)$ is $2/n$, which also 
proves the first part of our claim. By Foster's theorem~\ref{thrm:fosterstheorem}, the sum of the effective resistances
of the edges of $G$ is $n-1$. 
Due to symmetry, we obtain that $R_{(1,3)}=R_{(2,3)}=R_{(1,4)}=R_{(2,4)}= \ldots =R_{(1,n)}=R_{(2,n)}=R_n$. 
Therefore we obtain $ \frac{2}{n} + 2(n-2) R_n = n-1 \rightarrow R_n = \frac{n^2 - n- 2}{2n^2-4n}$. 
Asymptotically as $n \rightarrow +\infty$, $R_n \rightarrow \frac{1}{2}$. 
\end{proof} 

Clearly, the Spielman-Srivastava sparsifier fails to capture the importance of the edge $(1,2)$ with respect to triangle counting.
Finding an easy-to-compute quantity which allows a sparsification that preserves triangles more efficiently is an interesting problem. 
It is worth outlining that our analysis does not exclude effective resistances which can be computed very efficiently \cite{yannis}, 
but the use of them as is typically done in the context of spectral sparsifiers.

\section{Efficient Triangle Counting in Large Graphs via Degree-based Vertex Partitioning}
\label{sec:degreepartitioning}

%%%%%%%%%%%%%%%%%%%%%%%%%%%%%%%%%%%%%%%%
\subsection{Proposed Method}
\label{subsec:DegreeBasedmethod}

Our algorithm combines two approaches that have been taken on triangle counting:
sparsify the graph by keeping a random subset of the edges, see Section~\ref{sec:trianglesparsifiers},
followed by a triple sampling using the idea of vertex partitioning due to Alon, Yuster and Zwick \cite{DBLP:journals/algorithmica/AlonYZ97}.
In the following, we shall assume that the input is in the form of an edge file, i.e., a file whose 
each line contains an edge. Notice that given this representation, computing the degrees takes linear time.

\subsubsection{Edge Sparsification}

As we saw in Section~\ref{sec:trianglesparsifiers}  the following method performs very well 
in practice: keep each edge with probability $p$ independently.
Then for each triangle, the probability of being kept is $p^3$. So the
expected number of triangles left is $p^3t$.
This is an inexpensive way to reduce the size of the graph as it can be done in
one pass over the edge list using $O(mp)$ random variables.

We also proved that from the number of triangles in the sampled graph we can o
btain a concentrated estimate  around the actual triangle count
as long as $p^3 \geq \tilde{\Omega}(\frac{\Delta}{t})$\footnote{We use the tilde notation to hide polylogarithmic factors $polylog(n)$.}.
Here, we show a similar bound using more elementary techniques.
Suppose we have a set of $k$ triangles such that no two share an edge.
For each such triangle we define a random variable $X_i$ which is $1$ if
the triangle is kept by the sampling and $0$ otherwise.
Then as the triangles do not have any edges in common,
the $X_i$s are independent and take value $0$ with probability $1-p^3$ and
$1$ with probability $p^3$.
So by Chernoff inequality~\ref{lem:chernoff} 

$$Pr \left[ |\frac{1}{k} \sum_{i=1}^k X_i - p^3| > \epsilon p^3 \right] \leq 2e^{-\epsilon^2p^3k/2}.$$

So when $p^3 k \epsilon^2 \geq 4d\log{n}$ where $d$ is a positive constant, the probability of sparsification 
returning an $\epsilon$-approximation is at least $1-n^{-d}$.
This is equivalent to $p^3 k \geq (4d\log{n}) /( \epsilon^2)$ which suggests 
that in order to sample with small $p$ and hence discard many edges we need like $k$ to be large.
To show that such a large set of independent triangles exist, we invoke the Hajnal-Szemer\'{e}di 
Theorem~\ref{lem:hajnal} on an auxiliary graph $H$ which we construct as follows. 
For each triangle $i$ ($i=1,\ldots,t$) in $G$ we create a vertex $v_i$ in $H$.
We connect two vertices $v_{i},v_{j}$ in $H$ if and only if 
they represent triangles  $i,j$ respectively which share an edge in $G$.
Notice that the maximum degree in the auxiliary graph $H$ is $O(\Delta)$.
Hence, we obtain the following Corollary. 

\begin{corollary} 
\label{cor:partition}
Given $t$ triangles such that  no edge belongs to more than $\Delta$ triangles,
we can partition the triangles into sets $S_1 \dots S_l$ such that
$|S_i| > \Omega(t / \Delta)$ and $l$ is bounded by $O(\Delta)$.
\end{corollary}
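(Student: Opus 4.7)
The plan is to apply the Hajnal--Szemer\'{e}di theorem (Theorem~\ref{lem:hajnal}) to the auxiliary graph $H$ already introduced just above the corollary. Recall $V(H)$ is the set of $t$ triangles of $G$, with an edge between two vertices of $H$ iff the corresponding triangles of $G$ share an edge. The first step is to bound $\Delta(H)$. For any triangle $T$ of $G$, each of its three edges lies in at most $\Delta$ triangles of $G$, hence is shared with at most $\Delta - 1$ triangles other than $T$. Summing over the three edges, the vertex of $H$ corresponding to $T$ has degree at most $3(\Delta - 1)$, so $\Delta(H) \leq 3(\Delta - 1) = O(\Delta)$.

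Next I would invoke Theorem~\ref{lem:hajnal} on $H$ with $k = 3(\Delta - 1)$. This yields a proper vertex-coloring of $H$ using $k+1 = 3\Delta - 2$ colors, with every color class having size either $\lfloor t/(3\Delta - 2) \rfloor$ or $\lceil t/(3\Delta - 2) \rceil$. Define $S_1, \ldots, S_l$ to be the color classes, so $l \leq 3\Delta - 2 = O(\Delta)$ and $|S_i| \geq \lfloor t/(3\Delta - 2) \rfloor = \Omega(t/\Delta)$, matching the desired bounds.

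The final step is to interpret the coloring back in $G$: since each $S_i$ is an independent set of $H$ by the properness of the coloring, no two triangles in $S_i$ share an edge of $G$. This is precisely the edge-disjointness property needed so that the indicator variables $X_i$ defined in the discussion preceding the corollary are mutually independent under the edge-sampling process, which is the purpose for which the corollary will be used.

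There is really no serious obstacle here, the whole content of the corollary is packaged into Theorem~\ref{lem:hajnal}; the only thing requiring a brief argument is the degree estimate $\Delta(H) = O(\Delta)$, and the factor of $3$ coming from the three edges per triangle is what ultimately determines the constants hidden in the $O(\cdot)$ and $\Omega(\cdot)$ notation.
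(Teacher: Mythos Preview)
Your proposal is correct and follows exactly the same approach as the paper: build the auxiliary graph $H$ on the triangles with edges for shared-edge pairs, bound $\Delta(H)=O(\Delta)$, and apply Hajnal--Szemer\'{e}di. You supply more detail than the paper (the explicit factor of $3$ in the degree bound and the observation that color classes are edge-disjoint triangle sets), but the argument is the same.
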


\noindent Combining Corollary~\ref{cor:partition} and the Chernoff bound allows 
us to prove the next theorem.

\begin{theorem}
If $p^3 \in \Omega(\frac{ \Delta\log{n}}{\epsilon^2 t})$, then with probability
$1-n^{-2}$, the sampled graph has a triangle count that $\epsilon$-approximates
$t$.
\end{theorem}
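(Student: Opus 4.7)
The strategy is to apply the Chernoff bound of Lemma~\ref{lem:chernoff} to each class $S_i$ produced by Corollary~\ref{cor:partition} separately, and then combine the $l$ per-class estimates by a union bound followed by summation. For each triangle $\tau$ in $G$, let $X_\tau \in \{0,1\}$ be the indicator that all three of its edges survive the edge-sparsification; then $\Pr[X_\tau = 1] = p^3$ and the total number of triangles surviving is $T = \sum_\tau X_\tau$. Because the triangles inside a single $S_i$ are pairwise edge-disjoint by construction, the variables $\{X_\tau : \tau \in S_i\}$ are mutually independent, which is exactly what is needed to apply Chernoff within each class. (Across different classes the variables need not be independent, but this will not matter, as the union bound is oblivious to such dependence.)

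First, I would fix the constant hidden in the hypothesis $p^3 \in \Omega(\tfrac{\Delta \log n}{\epsilon^2 t})$ large enough, say $p^3 \geq \tfrac{C \Delta \log n}{\epsilon^2 t}$. Combined with the lower bound $|S_i| \geq c\, t/\Delta$ from Corollary~\ref{cor:partition}, this gives $\epsilon^2 p^3 |S_i| \geq cC \log n$, so applying Lemma~\ref{lem:chernoff} to the independent $\{0,1\}$ variables in $S_i$ yields
\[
\Pr\!\left[\,\Bigl|\tfrac{1}{|S_i|}\sum_{\tau\in S_i} X_\tau - p^3\Bigr| > \epsilon p^3\,\right] \leq 2 \exp(-cC \log n /2) \leq n^{-4}
\]
for $C$ chosen sufficiently large. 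Equivalently, $\bigl|\sum_{\tau \in S_i} X_\tau - p^3|S_i|\bigr| \leq \epsilon p^3 |S_i|$ with probability at least $1 - n^{-4}$.

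Next, I would take a union bound over the $l = O(\Delta) \leq O(n)$ classes. Since $l \leq n$, the probability that the above estimate holds simultaneously for every $S_i$ is at least $1 - n \cdot n^{-4} \geq 1 - n^{-2}$. On this good event, summing the per-class inequalities and using $\sum_i |S_i| = t$ (the classes partition the triangle set) gives
\[
\Bigl|\,T - p^3 t\,\Bigr| \;=\; \Bigl|\sum_{i=1}^l \sum_{\tau \in S_i} X_\tau - p^3 \sum_{i=1}^l |S_i|\Bigr| \;\leq\; \sum_{i=1}^l \epsilon p^3 |S_i| \;=\; \epsilon p^3 t,
\]
so the estimator $T/p^3$ for the triangle count is within a $(1 \pm \epsilon)$ factor of $t$. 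Dividing by $p^3$ yields the claimed $\epsilon$-approximation.

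The only delicate point is the bookkeeping of the constants: one must ensure the constant $C$ in the hypothesis is chosen so that the per-class Chernoff exponent beats the $\log n$ factor that the union bound introduces ($O(\Delta) \leq O(n)$ classes, so we need failure probability per class at most $n^{-3}$ to conclude $n^{-2}$ overall). This is pure constant chasing and is the main, but genuinely minor, obstacle; the structural content is entirely in having produced the edge-disjoint partition of Corollary~\ref{cor:partition}, which decouples the dependence between triangles and lets Chernoff apply cleanly.
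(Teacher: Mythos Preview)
Your proposal is correct and follows essentially the same approach as the paper: partition the triangles via Corollary~\ref{cor:partition}, apply Chernoff within each edge-disjoint class, and finish with a union bound over the classes. The only cosmetic differences are that the paper uses the cruder bound $l \le n^3$ on the number of classes (taking $d=5$) and applies Chernoff with slack $\epsilon/2$ per class, whereas you use $l = O(\Delta) \le O(n)$ and $\epsilon$ directly---both of which are fine since the per-class errors sum to $\epsilon p^3 t$ by the triangle inequality.
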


\begin{proof}
Consider the partition of triangles given by corollary \ref{cor:partition}
and let $d=5$. 
By choice of $p$ we get that the probability that the triangle count in
each set is preserved within a factor of $\epsilon/2$ is at least $1-n^{-d}$.
Since there are at most $n^3$ such sets, an application of the union
bounds gives that their total is approximated within a factor
of $\epsilon/2$ with probability at least $1 - n^{3-d}$.
This gives that the triangle count is approximated
within a factor of $\epsilon$ with probability at least $1 - n^{3-d}$.
Substituting  $d=5$ completes the proof. 
\end{proof} 

\subsubsection{Triple Sampling}

Since each triangle corresponds to a triple of vertices, we can construct a set of
triples $U$ that include all triangles.  From this list, 
we can then sample triples uniformly at random. Let these samples be numbered from $1$ to $s$.
Also, for the $i^{th}$ triple sampled, let $X_i$ be $1$ if it is a triangle and $0$ otherwise.
Since we pick triples randomly from $U$ and $t$ of them are triangles,
we have $E(X_i) = \frac{t}{|U|}$ and $X_i$s are independent.
So by Chernoff bound we obtain:

$$Pr \left[ |\frac{1}{s} \sum_{i=1}^s X_i - \frac{t}{|U|}| > \epsilon \frac{t}{|U|}
 \right] \leq 2e^{-\epsilon^2ts/(2|U|)}$$

If $s = \Omega( \tfrac{ |U|\log{n} }{ t\epsilon^2 })$, then we have that $|U| \sum_{i=1}^s \tfrac{X_i}{s}$
approximates $t$ within a factor of $\epsilon$ with probability at least $1-n^{-d}$ for any $d$ of our choice.
As $|U| \leq n^3$, this immediately gives an algorithm with runtime $O(n^3\log{n}/(t\epsilon^2))$ that approximates $t$ within a factor
of $\epsilon$. Slightly more careful bookkeeping can also give tighter bounds on $|U|$ in sparse graphs.

A simple but crucial observation which allows us to decide whether 
we will sample a triple of vertices or an edge and a vertex is the following. 
Consider any triple containing vertex $u$, $(u,v,w)$.
Since $uv, uw \in E$, we have the number of such triples involving $u$ is at
most $d(u)^2$. From an edge-vertex sampling point of view, as $vw \in E$, another bound on the number of such triples is $m$.
When $d(u) > m^{1/2}$ , the second  bound is tighter, and the first is in the other case.

These two cases naturally suggest that low degree vertices with degree at most
$m^{1/2}$ be treated separately from high degree vertices with degree greater than $m^{1/2}$.
For the number of triangles around low degree vertices, 
the value of $\sum_u d(u)^2$ is maximized
when all edges are concentrated in as few vertices as possible \cite{ahlswede}.
Since the maximum degree of such a vertex is $m^{1/2}$, the number of such
triangles is upper bounded by $m^{1/2} \cdot (m^{1/2})^2 = m^{3/2}$.
Also, as the sum of all degrees is $2m$, there can be at most $2m^{1/2}$ high
degree vertices, which means the total number of triangles incident to these
high degree vertices is at most $2m^{1/2} \cdot m = 2m^{3/2}$.
Combining these bounds give that $|U|$ can be upper bounded by $3m^{3/2}$.
Note that this bound is asymptotically tight when $G$ is a complete graph ($n = m^{1/2}$).
However, in practice the second bound can be further reduced by summing over the degree of all $v$ adjacent to $u$, becoming 
$\sum_{uv \in E} d(v)$.
As a result, an algorithm that implicitly constructs $U$ by picking the better one among these two cases by examining the degrees of all neighbors will achieve $|U| \leq O(m^{3/2})$.

\noindent This improved bound on $U$ gives an algorithm that $\epsilon$ approximates the number of triangles in time:

$$O \left( m + \frac{m^{3/2} \log{n}}{t \epsilon^2} \right)$$

As our experimental data in Section 4.1 indicate, 
the value of $t$ is usually $\Omega(m)$ in practice.
In such cases, the second term in the above calculation becomes negligible
compared to the first one.
In fact, in most of our data, just sampling the first type of triples (aka. pretending
all vertices are of low degree) brings the second term below the first.
%\begin{comment}

\subsubsection{Hybrid algorithm}

Edge sparsification with a probability of $p$ allows us to only work on $O(mp)$
edges, therefore the total runtime of the triple sampling algorithm after
sparsification with probability $p$ becomes:
$$O\left(mp+\frac{\log{n}(mp)^{3/2} }{\epsilon^2 tp^3} \right) =
O\left(mp+\frac{\log{n} m^{3/2}}{\epsilon^2 t p^{3/2}} \right). $$

As stated above, since the first term in most practical cases are much larger,
we can set the value of $p$ to balance these two terms out:

\begin{align*}
pm &= \frac{m^{3/2}\log{n}}{p^{3/2}t\epsilon^2} \Rightarrow p^{5/2} t \epsilon^2 = m^{1/2} \log{n} \Rightarrow p = \left( \frac{m^{1/2}\log{n}}{t \epsilon^2} \right) ^{2/5}
\end{align*}

The actual value of $p$ picked would also depend heavily on constants in front
of both terms, as sampling is likely much less expensive due to factors such as
cache effect and memory efficiency.
Nevertheless, our experimental results in section 4 does seem to indicate that
this type of hybrid algorithms can perform better in certain situations.

\subsubsection{Sampling in the Semi-Streaming Model}

The previous analysis of triangle counting by Alon, Yuster and Zwick was done
in the streaming model \cite{739463}, where the assumption was constant available space.
We show that our sampling algorithm can be done in a slightly weaker model
with space usage equaling:

$$O\left(m^{1/2}\log{n} + \frac{m^{3/2} \log{n}}{t \epsilon^2} \right)$$

We assume the edges adjacent to each vertex are given in order \cite{feigenbaum2005graph}.
We first need to identify high degree vertices, specifically the ones with degree
higher than $m^{1/2}$. This can be done by sampling $O(m^{1/2}\log{n})$ edges
and recording the vertices that are endpoints of one of those edges.

\begin{lemma}
Suppose $d m^{1/2} \log{n}$ samples were taken, then the probability of all
vertices with degree at least $m^{1/2}$ being chosen is at least $1-n^{-d+1}$.
\end{lemma}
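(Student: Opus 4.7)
The plan is to combine a single-vertex survival probability bound with a union bound over the (necessarily small) set of high-degree vertices. The underlying sampling model is uniform sampling of edges from $E$ with replacement, so the key quantity is the probability that a given vertex is an endpoint of a uniformly random edge.

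First I would fix a vertex $v$ with $d(v) \geq m^{1/2}$. Since every edge has two endpoints and $|E| = m$, the probability that a single uniformly sampled edge is incident to $v$ equals $d(v)/m \geq m^{-1/2}$. Equivalently, the probability that $v$ is \emph{missed} by a single sample is at most $1 - m^{-1/2}$. By the independence of the $d m^{1/2} \log n$ samples, the probability that $v$ is missed by all of them is at most
$$\left(1 - m^{-1/2}\right)^{d m^{1/2} \log n} \leq \exp\!\left(-d \log n\right) = n^{-d},$$
using the standard inequality $1-x \leq e^{-x}$.

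Next I would apply a union bound. Here the crucial observation is that the number of vertices of degree at least $m^{1/2}$ is not $n$ but at most $2m^{1/2}$: this follows from $\sum_v d(v) = 2m$, so at most $2m \big/ m^{1/2} = 2m^{1/2}$ vertices can have degree $\geq m^{1/2}$. Since trivially $2m^{1/2} \leq n$, the union bound gives
$$\Pr[\exists v:\ d(v) \geq m^{1/2} \text{ and } v \text{ is not sampled}] \leq n \cdot n^{-d} = n^{-d+1},$$
which is the desired conclusion. (If one prefers a tighter bound, replacing the factor $n$ by $2m^{1/2}$ gives $2m^{1/2}\, n^{-d}$, but this is not needed for the statement.)

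There is essentially no obstacle: the argument is purely a Chernoff-style tail bound followed by a union bound. The one point requiring mild care is the choice of sampling model (uniform with replacement over $E$, so that the probability of hitting $v$ in one draw is exactly $d(v)/m$ and independence across draws is immediate), and the observation that bounding the number of high-degree vertices by $2m^{1/2}$ rather than by $n$ is what keeps the argument honest, even though the final bound is stated with the looser factor of $n$.
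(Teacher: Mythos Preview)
Your proposal is correct and essentially identical to the paper's proof: both bound the per-sample miss probability by $1 - m^{-1/2}$, raise to the power $d m^{1/2}\log n$, apply $1-x \le e^{-x}$ to get $n^{-d}$, and then union-bound over at most $n$ vertices. Your additional remark that only $2m^{1/2}$ vertices can have degree $\ge m^{1/2}$ is a nice aside but, as you note, not needed for the stated bound.
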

\proof
Consider some vertex $v$ with degree at least $m^{1/2}$.
The probability of it being picked in each iteration is at least
$m^{1/2} / m = m^{-1/2}$.
As a result, the probability of it not picked in $d m^{1/2} \log{n}$ iterations is:
$$(1 - m^{-1/2})^{dm^{1/2}\log{n}}
= \left [(1-m^{1/2})^{m^{1/2}} \right ] ^ {d \log{n}}
\leq \left ( \frac{1}{e} \right ) ^ {d \log{n}} = n^{-d}$$
As there are at most $n$ vertices, applying union bound gives that all vertices
with degree at least $m^{1/2}$ are sampled with probability at least $1-n^{-d + 1}$.
\qed

Our proposed method is comprised of the following three steps/passes over the stream.

\begin{enumerate} 
\item Identifying high degree vertices requires one pass of the graph. Also, note that the number of potential candidates 
can be reduced to $m^{1/2}$ using another pass over the edge list.
\item For all the low degree vertices, we can read their $O(m^{1/2})$ neighbors
and sample from them.
For the high degree vertices, we do the following: for each edge, obtain a random
variable $y$ from a binomial distribution equal to the number of
edge/vertices pairs that this edge is involved in.
Then pick $y$ vertices from the list of high degree vertices randomly.
These two sampling procedures can be done together in another pass over the data.
\item 
Finally, we need to check whether each edge in the sampled triples belong to the
edge list.
We can store all such queries into a hash table as there are at most
$O(\frac{m^{3/2} \log{n}}{t \epsilon^2})$ edges sampled w.h.p.
Then going through the graph edges in a single pass and looking them up in
table yields the desired answer.
\end{enumerate} 

%%%%%%%%%%%%%%%%%%%%%%%%%%%%%%%%%%%%%%%%
\subsection{Experiments}
\label{subsec:DegreeBasedexperiments}

\subsubsection{Data}

The graphs used in our experiments are shown in Table~\ref{tab:degreedatasets}. Multiple edges and self loops were removed (if any).  
All graphs with the exceptions of Livejournal-links and Flickr are available on the Web. 
Table~\ref{tab:sparsifierresources} summarizes the resources.

\begin{table}[ht]
\begin{center}
\begin{tabular}{|l|r|r|r|l|} \hline
 Name  & Nodes & Edges &  Triangle Count & Description\\ \hline 

AS-Skitter & 1,696,415 & 11,095,298 & 28,769,868& Autonomous Systems \\ \hline
Flickr & 1,861,232  & 15,555,040 &  548,658,705 & Person to Person \\ \hline
Livejournal-links & 5,284,457 & 48,709,772 & 310,876,909 & Person to Person \\ \hline
Orkut-links  & 3,072,626 & 116,586,585 & 621,963,073 & Person to Person \\ \hline
Soc-LiveJournal & 4,847,571 & 42,851,237 & 285,730,264 & Person to Person \\ \hline
Web-EDU&   9,845,725  &    46,236,104   & 254,718,147 &   Web Graph (page to page) \\ \hline
Web-Google & 875,713 & 3,852,985& 11,385,529 & Web Graph \\ \hline
Wikipedia 2005/11    & 1,634,989  & 18,540,589   & 44,667,095 & Web Graph    (page to page) \\\hline
Wikipedia 2006/9  & 2,983,494  & 35,048,115   &84,018,183& Web Graph  (page to page) \\ \hline
Wikipedia 2006/11 & 3,148,440  & 37,043,456 &88,823,817  & Web Graph (page to page) \\ \hline
Wikipedia 2007/2 & 3,566,907 & 42,375,911  & 102,434,918 & Web Graph (page to page) \\\hline
Youtube  &   1,157,822  & 2,990,442&   4,945,382  & Person to Person \\ \hline
\end{tabular}
\end{center}
\caption{\label{tab:degreedatasets}Datasets used in our experiments.}
\end{table}

\subsubsection{Experimental Setup and Implementation Details} 

The experiments were performed on a single machine, with Intel Xeon CPU
at 2.83 GHz, 6144KB cache size and and 50GB of main memory. 
The graphs are from real world web-graphs, some details regarding them are in Table~\ref{tab:sparsifierresources} and 
in Table~\ref{tab:degreedatasets}.
The algorithm was implemented in C++, and compiled using gcc version 4.1.2 and
the -O3 optimization flag. Time was measured by taking the user time given by the linux time command.
IO times are included in that time since the amount of memory operations
performed in setting up the graph is non-negligible. 
However, we use a modified IO routine that's much faster than the standard C/C++ scanf.

A major optimization that we used was to sort the edges in the graph and store
the input file in the format as a sequence of neighbor lists per vertex.
Each neighbor list begins with the size of the list, followed by the neighbors.
This is similar to how software like Matlab stores sparse matrices.
The preprocessing time to change the data into this format is not included.
It can significantly improve the cache property of the graph stored, and hence the overall performance.

Some implementation details are based on this graph storage format.
Specifically, since each triple that we check  by definition 
has 2 edges already in the graph, it suffices to check/query whether the 3rd edge is present in the graph.  
In order to do this efficiently, rather than querying the existence of an
edge upon sampling each triple, we store the entire set of the queries 
and answer them in one pass through the graph..  
Finally, in the next section we discuss certain details behind
efficient binomial sampling. Specifically picking a random subset of expected
size $p|S|$ from a set $S$ can be done in expected sublinear time,
as we already saw in Claim~\ref{claim:claim1}.

\subsubsection{Binomial Sampling in Expected Sublinear time}
\label{sec:sublin}

Most of our algorithms have the following routine in their core: given a list of values,
keep each of them with probability $p$ and discard with probability $1-p$.
If the list has length $n$, this can clearly be done using $n$ random variables.
As generating random variables can be expensive, it's preferrable to use
$O(np)$ random variables in expectation if possible.
One possibility is to pick $O(np)$ random elements, but this would likely
involve random accesses in the list, or maintaining a list of the indices picked
in sorted order. A simple way that we use in our code to perform this sampling is to generate the
differences between indices of entries retained \cite{knuth}.
This variable clearly belongs to an exponential distribution, and if $x$ is a uniform
random number in $(0, 1)$, taking $\lceil \log_{(1-p)}x \rceil$ as the value of the random variable, see \cite{knuth}. 
The primary advantage of doing so is that sampling can be done while accessing
the data in a sequential fashion, which results in much better cache performances.

\subsubsection{Results}

The six variants of the code involved in the experiment are first separated by
whether the graph was first sparsified by keeping each edge with probability $p = 0.1$.
In either case, an exact algorithm based on hybrid sampling with performance
bounded by $O(m^{3/2})$ was run.
Then two triple based sampling algorithms are also considered.
They differ in whether an attempt to distinguish between low and high degree vertices,
so the simple version is essentially sampling all 'V' shaped triples off each
vertex. Note that no sparsification and exact also generates the exact number of
triangles. Errors are measured by the absolute value of the difference between the value
produced and the exact number of triangles divided by the exact number.
The results on error and running time are averaged over five runs. 
The results are shown in Tables~\ref{tab:degreeresults1},~\ref{tab:degreeresults2}.

\begin{table}[ht]
\begin{center}
\begin{tabular}{|p{2.5cm}|c|c|c|c|c|c|c}\hline 
           &  \multicolumn{6}{|c|}{No Sparsification} \\ \hline
Graph      &  \multicolumn{2}{|c|}{Exact} & \multicolumn{2}{|c|}{Simple} &\multicolumn{2}{|c|}{Hybrid} \\ \hline
                  & err(\%) & time & err(\%) & time & err(\%) & time \\ \hline 
AS-Skitter        & 0.000   & 4.452 & 1.308 & 0.746 & 0.128 & 1.204  \\ \hline
Flickr            & 0.000   & 41.981 & 0.166 & 1.049 & 0.128 & 2.016  \\ \hline
Livejournal-links & 0.000   & 50.828 & 0.309 & 2.998 & 0.116 & 9.375  \\ \hline
Orkut-links       & 0.000   & 202.012 & 0.564 & 6.208 & 0.286 & 21.328  \\ \hline
Soc-LiveJournal   & 0.000   & 38.271 & 0.285 & 2.619 & 0.108 & 7.451  \\ \hline
Web-EDU           & 0.000   & 8.502 & 0.157 & 2.631 & 0.047 & 3.300   \\ \hline
Web-Google        & 0.000   & 1.599 & 0.286 & 0.379 & 0.045 & 0.740  \\ \hline
Wiki-2005         & 0.000   & 32.472 & 0.976 & 1.197 & 0.318 & 3.613  \\ \hline
Wiki-2006/9       & 0.000   & 86.623  & 0.886 & 2.250 & 0.361 & 7.483   \\ \hline
Wiki-2006/11      & 0.000   & 96.114 & 1.915 & 2.362 & 0.530 & 7.972  \\ \hline
Wiki-2007         & 0.000   & 122.395 & 0.943 & 2.728 & 0.178 & 9.268  \\ \hline
Youtube           & 0.000   & 1.347   & 1.114 & 0.333 & 0.127 & 0.500  \\ \hline 
\end{tabular}
\end{center}
\caption{\label{tab:degreeresults1}Results of experiments averaged over 5 Trials using only triple sampling.}
\end{table}

\subsubsection{Remarks}

From Table~\ref{tab:degreedatasets} it is evident that social networks are abundant 
in triangles. For example, the Flickr graph with only $\sim$1.9M vertices 
has $\sim$550M triangles and the Orkut graph with $\sim$3M vertices has $\sim$620M 
triangles. 
Furthermore, from Table~\ref{tab:degreeresults1} and Table~\ref{tab:degreeresults2} it is clear that none 
of the variants clearly outperforms the others on all the data. 
The gain/loss from sparsification is likely due to the fixed sampling rate. 
Adapting a doubling procedure for the sampling rate as in Section~\ref{subsec:doubling} 
is likely to mitigate this discrepancy. The difference between simple and hybrid sampling are due to the fact
that handling the second case of triples has a much worse cache access
pattern as it examines vertices that are two hops away.
There are alternative implementations of how to handle this situation, which
would be interesting for future implementations.
A fixed sparsification rate of $p = 10\%$ was used mostly to simplify the setups
of the experiments. In practice varying $p$ to look for a rate where the results stabilize is the preferred
option.

When compared with previous results on this problem, the error rates and running
times of our results are all significantly lower.
In fact, on the wiki graphs our exact counting algorithms have about the same order 
of speed with other appoximate triangle counting implementations. 
This is also why we did not include any competitors in the exposition of the results 
since our implementation is a highly optimized C/C++ implementation with an emphasis on performance for huge graphs.

\begin{table}[ht]
\begin{center}
\begin{tabular}{|p{2.5cm}|c|c|c|c|c|c|c}\hline 
           &  \multicolumn{6}{|c|}{Sparsified ($p = 0.1$)} \\ \hline
Graph      &  \multicolumn{2}{|c|}{Exact} & \multicolumn{2}{|c|}{Simple} &\multicolumn{2}{|c|}{Hybrid} \\ \hline
                  & err(\%) & time & err(\%) & time & err(\%) & time \\ \hline 
AS-Skitter        & 2.188 & 0.641 & 3.208 & 0.651 & 1.388 & 0.877 \\ \hline
Flickr            & 0.530 & 1.389 & 0.746 & 0.860 & 0.818 & 1.033 \\ \hline
Livejournal-links & 0.242 & 3.900 & 0.628 & 2.518 & 1.011 & 3.475 \\ \hline
Orkut-links       & 0.172 & 9.881 & 1.980 & 5.322 & 0.761 & 7.227 \\ \hline
Soc-LiveJournal   & 0.681 & 3.493 & 0.830 & 2.222 & 0.462 & 2.962 \\ \hline
Web-EDU           & 0.571 & 2.864 & 0.771 & 2.354 & 0.383 & 2.732 \\ \hline
Web-Google        & 1.112 & 0.251 & 1.262 & 0.371 & 0.264 & 0.265 \\ \hline
Wiki-2005         & 1.249 & 1.529 & 7.498 & 1.025 & 0.695 & 1.313 \\ \hline
Wiki-2006/9       & 0.402 & 3.431 & 6.209 & 1.843  & 2.091 & 2.598 \\ \hline
Wiki-2006/11      & 0.634 & 3.578 & 4.050 & 1.947 & 0.950 & 2.778 \\ \hline
Wiki-2007         & 0.819 & 4.407 & 3.099 & 2.224 & 1.448 & 3.196 \\ \hline
Youtube           & 1.358 & 0.210 & 5.511 & 0.302 & 1.836 & 0.268 \\ \hline 
\end{tabular}
\end{center}
\caption{Results of experiments averaged over 5 trials using sparsification and triple sampling.}
\label{tab:degreeresults2}
\end{table}

As we mentioned earlier in Section~\ref{sec:subgraphintro} there exists a lot of interest into signed networks. 
It is clear that our method applies to this setting as well, by considering individually each possible configuration of a signed triangle. 
However, we do not include any of our experimental findings here due to the small size of the 
signed networks available to us via the Stanford Network Analysis library (SNAP).

%%%%%%%%%%%%%%%%%%%%%%%%%%%%%%%%%%%%%%%%
\subsection{Theoretical Ramifications} 
\label{subsec:DegreeBasedramifications}

In Section~\ref{subsec:rptriangles} we discuss random projections and triangles, motivated by the simple observation
that the inner product of two rows of the adjacency matrix corresponding to two connected
vertices forming edge $e$ gives the count of triangles $\Delta(e)$. 

\subsubsection{Random Projections and Triangles}
\label{subsec:rptriangles} 

Consider any two vertices $i,j \in V$ which are connected, i.e., $(i,j) \in E$. Observe that the inner product
of the $i$-th and $j$-th column of the adjacency matrix of graph $G$ gives the number of triangles that 
edge $(i,j)$ participates in. Viewing the adjacency matrix as a collection of $n$ points in $\field{R}^n$,
a natural question to ask is whether we can use results from the theory of random projections \cite{lindenstrauss}
to reduce the dimensionality of the points while preserving the inner products which contribute to the count
of triangles. Magen and Zouzias \cite{magen} have considered a similar problem, namely random projections which 
preserve approximately the volume for all subsets of at most $k$ points.

According to Lemma~\ref{thrm:jllemma} projecton $x \to Rx$ from $\field{R}^d \to \field{R}^k$ approximately preserves all Euclidean distances.
However it does not preserve all pairwise inner products. This can easily be seen by considering the set of points
$e_1, \ldots, e_n \in \field{R}^n = \field{R}^d$ where $e_1=(1,0,\ldots,0)$ etc. 
Indeed, all inner products of the above set are zero, which cannot happen for the points $R e_j$ as they belong to a lower dimensional space 
and they cannot all be orthogonal. For the triangle counting problem we do not need 
to approximate {\em all} inner products. 
Suppose $A \in \Set{0,1}^n$ is the adjacency matrix of a simple undirected graph $G$ 
with vertex set $V(G) = \Set{1,2,\ldots,n}$ and write $A_i$ for the $i$-the column of $A$. 
The quantity we are interested in is the number of triangles in $G$ (actually six times the number of triangles)
$$
t = \sum_{u,v,w \in V(G)} A_{uv} A_{vw} A_{wu}.
$$

\noindent If we apply a random projection of the above kind to the columns of $A$
$$
A_i \to R A_i
$$
and write
$$
X = \sum_{u,v,w \in V(G)} (RA)_{uv} (RA)_{vw} (RA)_{wu}
$$

\noindent  it is easy to see that $\Mean{X}=0$ since $X$ is a linear combination of triple products $R_{ij} R_{kl} R_{rs}$ of entries of the 
random matrix $R$ and that all such products have expected value $0$, no matter what the indices. So we cannot expect this kind of random projection to work.

Therefore we consider the following approach which still has limitations as we will show in the following. Let
$$
t = \sum_{u \sim v} A_u^\top A_v,\ \ \mbox{where $u \sim v$ means $A_{uv}=1$},
$$
and look at the quantity
\begin{eqnarray*}
Y &=& \sum_{u \sim v} (R A_u)^\top (RA_v)\\
 &=& \sum_{l=1}^k \sum_{i,j=1}^n \left(\sum_{u \sim v} A_{iu} A_{jv} \right) R_{li} R_{lj}\\
 &=& \sum_{l=1}^k \sum_{i,j=1}^n \#\Set{i-*-*-j} R_{li} R_{lj}.
\end{eqnarray*}
This is a quadratic form in the gaussian $N(0,1)$ variables $R_{ij}$.
By simple calculation for the mean value and diagonalization for the variance we see that if the $X_j$ are independent $N(0,1)$ variables and
$$
Z = X^\top B X,
$$
where $X = (X_1,\ldots,X_n)^\top$ and $B \in \field{R}^{n \times n}$ is \textit{symmetric}, that

\begin{eqnarray*}
\Mean{Z} &=& \tr{B}\\
\Var{Z} &=& \tr{B^2} = \sum_{i,j=1}^n (B_{ij})^2.
\end{eqnarray*}

Hence $\Mean{Y} = \sum_{l=1}^k \sum_{i=1}^n \#\Set{i-*-*-i} = k\cdot t$ so the mean value is the quantity we want (multiplied by $k$).
For this to be useful we should have some concentration for $Y$ near $\Mean{Y}$. 
We do not need exponential tails because we have only one quantity to control. 
In particular, a statement of the following type
$$
\Prob{\Abs{Y-\Mean{Y}} > \epsilon \Mean{Y}} < 1-c_{\epsilon},
$$
where $c_{\epsilon}>0$ would be enough.
The simplest way to check this is by computing the standard deviation of $Y$. 
By Chebyshev's inequality it suffices that the standard deviation be much smaller than $\Mean{Y}$.
According to the formula above for the variance of a quadratic form we get
\begin{eqnarray*}
\Var{Y} &=& \sum_{l=1}^k \sum_{i,j=1}^n \#\Set{i-*-*-i}^2\\
 &=& C\cdot k \cdot \#\Set{x-*-*-*-*-*-x} \\
 &=& C\cdot k \cdot \mbox{(number of circuits of length 6 in $G$)}.
\end{eqnarray*}
Therefore, to have concentration it is sufficient that
\beql{condition}
\Var{Y} = o(k \cdot (\Mean{Y})^2).
\eeq

Observe that \eqref{condition} is a sufficient -and not necessary-  condition.
Furthermore,\eqref{condition} is certainly not always true as there are graphs with many 6-circuits and no triangles at all (the circuits {\em may} repeat vertices or edges).

\section{Colorful Triangle Counting}
\label{sec:IPL}

In this Section we present a new sampling approach to approximating the number of triangles in a graph  $G(V,E)$, 
that significantly improves existing sampling approaches. Furthermore, it is easily implemented in parallel. 
The key idea of our algorithm is to correlate the sampling of edges such that if two edges of a triangle are sampled, 
the third edge is always sampled. This decreases 
the degree of the multivariate polynomial that expresses the number of sampled triangles.
%We analyze our method using a powerful theorem due to Hajnal and Szemer\'{e}di \cite{HajnalSzemeredi}. 
This Section is organized as follows: in Section~\ref{subsec:colorfulalgo} we present our randomized algorithm.
In Section~\ref{subsec:colorfulanalysis} we present our main theoretical results, we analyze our algorithm and we
discuss some of its important properties. In Section~\ref{subsec:colorfulmapreduce} we present an implementation
of our algorithm in the popular \textsc{MapReduce} framework.

\subsection{Algorithm} 
\label{subsec:colorfulalgo} 

\begin{algorithm}[t]
\caption{\label{alg:colorfulcounting}Colorful Triangle Sampling} 
\begin{algorithmic}
\REQUIRE Unweighted graph $G([n],E)$
\REQUIRE Number of colors  $N=1/p$
\STATE Let $f: V \rightarrow [N]$ have uniformly random values
\STATE $E' \leftarrow \{ \{u,v\}\in E \; | \; f(u)=f(v) \}$
\STATE $T \leftarrow $ number of triangles in the graph $(V,E')$
\STATE {\bf return} $T/p^2$ 
\end{algorithmic}
\end{algorithm}

Our algorithm, summarized as Algorithm~\ref{alg:colorfulcounting}, 
samples each edge with probability~$p$, where $N=1/p$ is integer, as follows.
Let  $f: [n] \rightarrow [N]$ be a random coloring of the vertices of $G([n],E)$, 
such that for all $v \in [n]$ and $i \in [N]$, $\Prob{ f(v) = i } = p$.
We call an edge {\em monochromatic} if both its endpoints have the same color. 
Our algorithm samples exactly the set $E'$ of monochromatic edges, counts the number 
$T$ of triangles in $([n],E')$ (using any exact or approximate triangle counting algorithm), 
and multiplies this count by $p^{-2}$.

Work presented in Sections~\ref{sec:trianglesparsifiers},~\ref{sec:degreepartitioning} has used a related sampling idea, 
the difference being that edges were sampled {\em independently} with probability $p$.
Some intuition why this sampling procedure is less efficient than what we propose can 
be obtained by considering the case where a graph 
has $t$ edge-disjoint triangles. With independent edge sampling there will be no triangles 
left (with probability $1-o(1)$) if $p^3 t=o(1)$. 
Using our colorful sampling idea there will be $\omega(1)$ triangles in the sample with 
probability $1-o(1)$ as long as $p^2 t = \omega(1)$. 
This means that we can choose a smaller sample, and still get accurate estimates from it.

\subsection{Analysis} 
\label{subsec:colorfulanalysis}

We wish to pick $p$ as small as possible but at the same time have a strong concentration of the estimate around its expected value. How
small can $p$ be? In Section~\ref{subsec:colorfulsecondmoment} we present a second moment argument which gives a 
sufficient condition for picking $p$. 
Our main theoretical result, stated as Theorem~\ref{thrm:colorfulsecondmoment} in Section~\ref{subsec:colorfulconc}, provides a sufficient condition to this question. 
In Section~\ref{subsec:colorfulrunning} we analyze the complexity of our method. Finally, in Section~\ref{subsec:colorfuldiscussion} we discuss several aspects
of our work.

\subsubsection{Second Moment Method} 
\label{subsec:colorfulsecondmoment} 

Using the second moment method we are able to obtain the following strong theoretical guarantee: 

\begin{theorem} 
Let $n$, $t$, $\Delta$, $T$ denote the number of vertices in $G$, the number of triangles in $G$,
the maximum number of triangles an edge of $G$ is contained in and the number of monochromatic triangles
in the randomly colored graph respectively. Also let $N=\frac{1}{p}$ the number of colors used.
If $p \geq \max{ ( \frac{\Delta \log{n} }{t}, \sqrt{ \frac{\log{n}}{t}} )}$, then $T \sim \Mean{T}$ 
with probability $1-\frac{1}{\log{n}}$. 
\label{thrm:colorfulsecondmoment} 
\end{theorem}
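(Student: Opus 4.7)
The plan is to apply the second moment method (Chebyshev's inequality, Lemma~\ref{lem:chebyshevinequality}) to the random variable $T$. First I would write $T = \sum_{\tau} X_\tau$, where $\tau$ ranges over the triangles of $G$ and $X_\tau$ is the indicator that the three vertices of $\tau$ receive the same color under $f$. Since $\Pr[X_\tau = 1] = p^2$ (fix the color of one vertex; each of the remaining two independently matches with probability $p$), one immediately gets $\mathbb{E}[T] = t p^2$, which justifies the scaling by $p^{-2}$ in Algorithm~\ref{alg:colorfulcounting}.

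Next I would compute $\mathrm{Var}(T) = \sum_{\tau_1,\tau_2} \mathrm{Cov}(X_{\tau_1}, X_{\tau_2})$ by splitting the pairs according to the size of $V(\tau_1)\cap V(\tau_2)\in\{0,1,2,3\}$. Pairs with empty vertex intersection are independent and contribute nothing. The critical and slightly surprising observation is that pairs sharing exactly one vertex also contribute zero covariance: conditioning on the color of the shared vertex, each remaining vertex must match independently with probability $p$, so the joint probability is $p^4 = \Pr[X_{\tau_1}=1]\Pr[X_{\tau_2}=1]$. This is the place where the correlated (colorful) sampling beats the independent edge-sampling scheme of Section~\ref{sec:trianglesparsifiers}, and it is really the heart of the improvement. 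Pairs sharing an edge (two vertices) give joint probability $p^3$, hence covariance at most $p^3$, and the diagonal pairs $\tau_1=\tau_2$ contribute at most $p^2$ each.

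Bounding the number of ordered pairs of distinct triangles sharing an edge by $\sum_{e\in E}\Delta(e)(\Delta(e)-1)\leq \Delta\sum_e \Delta(e) = 3t\Delta$, I would then obtain
\begin{equation*}
\mathrm{Var}(T) \;\leq\; t p^2 + 3\, t \Delta\, p^3,
\end{equation*}
and therefore
\begin{equation*}
\frac{\mathrm{Var}(T)}{\mathbb{E}[T]^2} \;\leq\; \frac{1}{tp^2} + \frac{3\Delta}{tp}.
\end{equation*}
Under the hypothesis $p \geq \max\!\big(\Delta\log n/t,\ \sqrt{\log n/t}\big)$, the first term is at most $1/\log n$ and the second is at most $3/\log n$, so $\mathrm{Var}(T)/\mathbb{E}[T]^2 = O(1/\log n)$.

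Finally I would apply Chebyshev with a slowly vanishing tolerance $\varepsilon_n = (\log n)^{-1/3}$ (say):
\begin{equation*}
\Pr\!\big[\,|T-\mathbb{E}[T]| \geq \varepsilon_n \mathbb{E}[T]\,\big] \;\leq\; \frac{\mathrm{Var}(T)}{\varepsilon_n^2 \mathbb{E}[T]^2} \;=\; O\!\big(\varepsilon_n^{-2}/\log n\big) \;=\; o(1),
\end{equation*}
yielding $T\sim \mathbb{E}[T]$ with the claimed probability and hence $T/p^2 \sim t$. The calculation is essentially a variance bound; the only non-routine step is recognizing the vanishing of the one-vertex-overlap covariances, so the main ``obstacle'' is really conceptual rather than computational. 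Note also that the threshold $p\geq \Delta\log n/t$ cannot be avoided by this approach, as it corresponds exactly to controlling the edge-sharing contribution to the variance.
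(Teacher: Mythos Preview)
Your proposal is correct and follows essentially the same route as the paper: write $T$ as a sum of indicators, observe that triangles sharing only a single vertex have zero covariance (the paper phrases this as ``the only case where two indicator variables are dependent is when they share an edge''), bound the edge-sharing contribution by $p^3\sum_e \delta_e^2 \leq 3p^3 t\Delta$, and apply Chebyshev. Your explicit case split by $|V(\tau_1)\cap V(\tau_2)|$ makes the independence observation more transparent than the paper's one-line assertion, but the variance bound and the resulting condition on $p$ are identical.
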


\begin{proof} 

By Chebyshev's inequality~\ref{lem:chebyshevinequality}, if $\Var{T}=o(\Mean{T}^2)$ then $T \sim \Mean{T}$ with probability $1-o(1)$ \cite{alon}.
Let $X_i$ be a random variable for the $i$-th triangle, $i=1,\ldots,t$, such that $X_i=1$
if the $i$-th triangle is monochromatic. 
The number of monochromatic triangles $T$ is equal to the sum of these indicator variables, i.e., $T = \sum_{i=1}^t X_i$. 
By the linearity of expectation and by the fact that $\Prob{X_i=1} = p^2$ we obtain that $\Mean{T}=p^2t$. 
It is easy to check that the only case where two indicator variables are dependent
is when they share an edge. In this case the covariance is non-zero and for any $p>0$,
$ \Cov{X_i,X_j} = p^3-p^4 < p^3$. We write $i \sim j$ if and only if $X_i,X_j$ are dependent. 

We obtain the following upper bound on the variance of $T$, where $\delta_e$ 
is the number of triangles edge $e$ is contained in and $\Delta =\max_{e\in E(G)} \delta_e$: 

\begin{align*}
\Var{T} \leq \Mean{T} + \sum_{i \sim j} \Cov{X_i \wedge X_j} < p^2t + p^3 \sum_e \delta_e^2 \leq p^2t + 3p^3 t \Delta 
\end{align*}

\noindent We pick $p$ large enough to obtain $\Var{T}=o(\Mean{T}^2)$. It suffices:

\beql{eqsec}
p^4t^2 \gg p^2t + 3 p^3 t \Delta  \Rightarrow  p^2 t  \gg 1 + 3 p\Delta 
\eeq

\noindent We consider two cases, determined by which of the two terms of the right hand side  is larger:

\noindent
\underline{$\bullet$ {\sc Case 1} ($p\Delta < 1/3$):}\\
Since the right hand side of Inequality~\eqref{eqsec} is constant, 
it suffices that $p^2t = \omega(n)$ where $\omega(n)$ is some slowly growing function.
We pick $\omega(n)=\log{n}$ and hence $p \geq \sqrt{\frac{\log{n}}{t}}$. 

\noindent \\
\underline{$\bullet$ {\sc Case 2} ($p\Delta \ge 1/3$):}\\
In this case the right hand side of Inequality~\eqref{eqsec} is $\Theta(p\Delta)$
and therefore it suffices to pick $\frac{pt}{\Delta} = \log{n}$. 

\noindent Combining the above two cases we get that if 

$$p \geq \max{ ( \frac{\Delta\log{n} }{t}, \sqrt{ \frac{\log{n}}{t}})}$$

inequality~\eqref{eqsec} is satisfied and hence by Chebyshev's inequality $T \sim \Mean{T}$ with probability $1-\frac{1}{\log{n}}$. 

\end{proof}

\spara{Extremal Cases and Tightness of Theorem~\ref{thrm:colorfulsecondmoment}}

Given the assumptions of Theorem~\ref{thrm:colorfulsecondmoment}, is the condition on $p$ tight?  
The answer is affirmative as shown in Figure~\ref{fig:colorfultight}. Specifically, in Figure~\ref{fig:colorfultight}(a)
G consists of $t/\Delta$ ``books'' of triangles, each of size $\Delta$. This shows
that $p$ has to be at least $\omega(n)\frac{\Delta}{t}$ to hope for concentration,
where $\omega(n)$ is some growing function of $n$. Similarly, 
when $G$ consists of $t$ disjoint triangles as shown in Figure~\ref{fig:colorfultight}(b)
$p$ has to be at least $\omega(n)t^{-1/2}$.
Therefore, unless we know more about $G$, we cannot hope for milder conditions on $p$,
i.e., Theorem~\ref{thrm:colorfulsecondmoment} provides an optimal condition on $p$.

\begin{figure}
  \centering
  \begin{tabular}{cc}
   \includegraphics[width=0.35\textwidth]{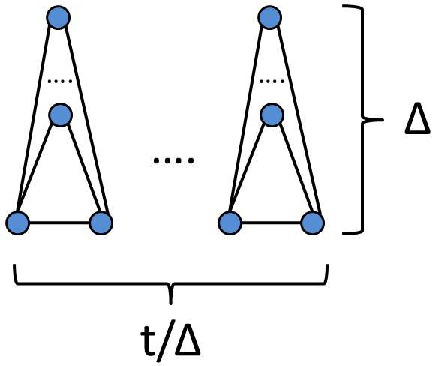}  &   \includegraphics[width=0.35\textwidth]{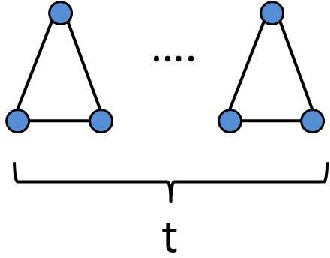} \\ 
    (a) & (b) 
  \end{tabular}
  \caption{\label{fig:colorfultight}Conditions of Theorem~\ref{thrm:colorfulsecondmoment} are tight. In order to hope for concentration $p$ has to be greater than (a)  $\frac{\Delta}{t}$ 
    and (b) $t^{-1/2}$.}
\end{figure}

\subsubsection{Concentration via the Hajnal-Szemer\'{e}di Theorem} 
\label{subsec:colorfulconc}

Here, we present a different approach to obtaining concentration, based on partitioning 
the set of triangles/indicator variables in sets containing many independent random indicator variables and then taking 
a union bound. Our theoretical result is the following theorem:

\begin{theorem} 

Let $t_{\max}$ be the  maximum number of triangles a vertex $v$ is contained in. Also, let $n,t,p,T$ be defined as above and $\epsilon$ a small positive constant.
If $p^2 \geq \frac{16 t_{\max}  \log{n}}{\epsilon^2 t} $, then $\Prob{ |T - \Mean{T}| > \epsilon \Mean{T} } \leq n^{-1}$.

\label{thrm:colorfulconcentration} 
\end{theorem}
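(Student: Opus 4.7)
The plan is to derive concentration by partitioning the $t$ indicator variables $X_1,\ldots,X_t$ (where $X_i = 1$ iff the $i$-th triangle is monochromatic) into color classes of \emph{mutually independent} variables via the Hajnal-Szemer\'edi theorem, applying a Chernoff bound within each class, and concluding by a union bound over classes.

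First I would build an auxiliary graph $H$ whose vertex set is the set of $t$ triangles of $G$ and whose edges join two triangles sharing at least one vertex of $G$. Since each triangle has three vertices and each vertex of $G$ sits in at most $t_{\max}$ triangles, the maximum degree in $H$ is at most $3(t_{\max}-1)$. The key observation is that $X_i,X_j$ are independent precisely when their triangles share no vertex of $G$ (the random coloring is on vertices, so any overlap of two vertices of the same color couples the indicators), which is exactly the non-adjacency relation in $H$. Hence any independent set of $H$ indexes a mutually independent family of Bernoulli$(p^2)$ indicators.

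Next, I invoke Theorem~\ref{lem:hajnal} on $H$ with $k=3(t_{\max}-1)$ to obtain a proper vertex coloring of $H$ with $\ell\leq 3t_{\max}-2$ color classes $S_1,\ldots,S_\ell$, each of size $\lfloor t/\ell\rfloor$ or $\lceil t/\ell\rceil$, so in particular $|S_j|\geq t/(3t_{\max})$. Setting $T_j=\sum_{i\in S_j} X_i$, the variables summed are i.i.d. Bernoulli$(p^2)$, so Lemma~\ref{lem:chernoff} gives
$$\Prob{\,|T_j - p^2|S_j|\,| > \epsilon\, p^2 |S_j|\,} \;\leq\; 2\exp\!\bigl(-\epsilon^2 p^2 |S_j|/2\bigr).$$
Using $|S_j|\geq t/(3t_{\max})$ and the hypothesis $p^2\geq 16 t_{\max}\log n/(\epsilon^2 t)$, the exponent is at least $(8/3)\log n$, so each class-wise failure probability is $O(n^{-8/3})$. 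A union bound over the $\ell\leq 3t_{\max}\leq 3\binom{n}{2}$ color classes therefore yields a total failure probability of at most $n^{-1}$ for $n$ large enough. On the complementary event, summing $|T_j-p^2|S_j||\leq \epsilon p^2 |S_j|$ over $j$ and using $\sum_j|S_j|=t$ and $\Mean{T}=p^2 t$ gives $|T-\Mean{T}|\leq \epsilon \Mean{T}$, as desired.

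The main technical obstacle is bookkeeping the constant $16$: one must verify that the $\log(3t_{\max})$ overhead coming from the union bound over color classes is absorbed — using $t_{\max}\leq\binom{n}{2}$ — into a factor of at most a constant inflation of the Chernoff exponent, so that the constant $16$ in the hypothesis suffices (a slightly larger constant would follow immediately; getting exactly $16$ needs a tight version of Chernoff or a direct MGF computation). A small additional subtlety is the degenerate regime $t_{\max}=1$, where $H$ has no edges and a single application of Chernoff to $T$ itself already gives the conclusion; this boundary case should be treated separately to avoid trivialities in Hajnal-Szemer\'edi.
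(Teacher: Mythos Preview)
Your approach is essentially identical to the paper's: build the auxiliary graph $H$ on triangles with edges for shared vertices, apply Hajnal--Szemer\'edi to get nearly-equal independent classes, Chernoff within each class, and union bound over classes. The paper is in fact looser than you are with the bookkeeping: it writes $|S_i| = \Omega(t/t_{\max})$ and then silently takes $k = t/t_{\max}$ (dropping the factor $3$ from the degree bound), and it bounds the number of classes crudely by $O(n^3)$ rather than your $3\binom{n}{2}$; with $d'=4$ this gives $n^{3-4}=n^{-1}$. So your worry about the constant $16$ is legitimate---the paper's proof does not actually justify it once the factor of $3$ is tracked honestly---but your argument matches the paper's intended one. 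One small imprecision: indicators of triangles sharing exactly one vertex are in fact pairwise independent, so ``independent precisely when they share no vertex'' is not quite right; what matters (and what you use) is that vertex-disjointness guarantees \emph{mutual} independence, which is what Chernoff needs.
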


\begin{proof} 

Let $X_i$ be defined as above, $i=1,\ldots,t$. Construct an auxiliary graph $H$ as follows: add a vertex in $H$ for every triangle in $G$ and connect two vertices representing triangles $t_1$ and $t_2$ if and only if they have a common vertex. The maximum degree of $H$ is  $3t_{\max}=O(\delta^2)$, where $\delta=O(n)$ is the maximum degree in the graph. 
Invoke the Hajnal-Szemer\'{e}di Theorem on $H$: we can partition the vertices of $H$ (triangles of $G$) into sets $S_1, \dots, S_q$ such that $|S_i| > \Omega(\tfrac{t}{t_{\max}})$ and $q=\Theta(t_{\max})$. Let $k=\tfrac{t}{t_{\max}}$. Note that the set of indicator variables $X_i$
corresponding to any set $S_j$ is independent. Applying the Chernoff bound for each set $S_i, i=1,\ldots,q$  we obtain 

$$Pr \left[ |\frac{1}{k} \sum_{i=1}^k X_i - p^2| > \epsilon p^2 \right] \leq 2e^{-\epsilon^2p^2k/2}$$.

If $p^2 k \epsilon^2 \geq 4d'\log{n}$, then $2e^{-\epsilon^2p^2k/2}$ is upper bounded by $n^{-d'}$,
where $d'>0$ is a constant. Since $q=O(n^3)$ by taking a union bound over all sets $S_i$ we see that the triangle count is approximated
within a factor of $\epsilon$ with probability at least $1 - n^{3-d'}$ Setting $d'=4$ completes the proof. 
\end{proof}

It's worth noting that for any constant $K>0$ the above proof gives that if $p^2 \geq \frac{4(K+3) t_{\max}  \log{n}}{\epsilon^2 t} $
then $\Prob{ |T - \Mean{T}| > \epsilon \Mean{T} } \leq n^{-K}$.

\subsubsection{Complexity}
\label{subsec:colorfulrunning}

The running time of our procedure of course depends on the subroutine we use on 
the second step, i.e., to count triangles in the edge set $E'$.
Let $\text{d}(i)$ denote the degree of vertex $i$.
Assuming we use node iterator, i.e., the exact method that examines each vertex independently and counts the number of edges among its neighbors,
our algorithm runs in $O(n+m+p^2 \sum_{i \in [n]} \text{d}^2(i))$ expected time
\footnote{We assume that uniform sampling of a color takes constant time. If not, then we obtain the term $O(n\log{(\frac{1}{p})}$
for the vertex coloring procedure.} by efficiently storing the graph and retrieving the neighbors
of $v$ colored with the same color as $v$ in $O(1+p\,\text{d}(v))$ expected time. 
Note that this implies that the speedup with respect to the counting task is $1/p^2$.

\subsubsection{Discussion}
\label{subsec:colorfuldiscussion}

Despite the fact that the second moment argument gave us strong conditions on $p$, 
the use of the Hajnal-Szemer\'{e}di theorem, see Theorem~\ref{lem:hajnal} and \cite{HajnalSzemeredi}, 
has the potential of improving the $\Delta$ factor. 
The condition we provide on $p$ is {\em sufficient} to obtain concentration. 
Note --see Figure~\ref{fig:colorfulfig1}-- that it was necessary to partition the triangles into 
vertex disjoint  rather than edge disjoint triangles since we need mutually 
independent variables per chromatic class in order to apply the Chernoff bound. 
If we were able to remove the dependencies in the chromatic classes defined by 
edge disjoint triangles, then the overall result could  probably be improved. 
It's worth noting that for $p=1$ we obtain that $t \geq n \omega(n)$, where $\omega(n)$ is 
any slowly growing function of $n$. 
This is --to the best of our knowledge-- the mildest condition on the 
triangle density needed for a randomized algorithm to obtain concentration.
Finally, notice that when $t \leq \frac{ \Delta^2 \log{n}}{t_{\max}}$ and $t_{\max} \geq 1$ 
Theorem \ref{thrm:colorfulconcentration} yields a better
bound than Theorem~\ref{thrm:colorfulsecondmoment}. The same holds when $t > \Delta^2 \log{n}$ 
and $t_{\max} \leq 1$. The latter scenario is far more restrictive 
and both Theorem~\ref{thrm:colorfulsecondmoment} and Theorem~\ref{thrm:colorfulconcentration} give for 
instance the same bound $p \geq \sqrt{\frac{\log{n}}{t}}$ for the graph of Figure~\ref{fig:colorfultight}(b).

Furthermore, the powerful theorem of Kim and Vu~\ref{thrm:kim-vu} that was used in Section~\ref{sec:trianglesparsifiers}
 is not immediately applicable here: let  $Y_e$ be an indicator variable 
for each edge $e$ such that $Y_e=1$ {\it if and only if} $e$ is monochromatic, i.e., 
both its endpoints receive the same color. Note that the number of triangles is a Boolean polynomial 
$T = \frac{1}{3} \sum_{\Delta(e,f,g)} \big( Y_e Y_f+ Y_f Y_g+Y_e Y_g \big)$ but the Boolean variables are not independent
as the Kim-Vu \cite{kim-vu} theorem requires. It is worth noting that the degree of the polynomial
is two. Essentially, this is the reason for which our method obtains better results than work 
in Section~\ref{sec:trianglesparsifiers}
where the degree of the multivariate polynomial is three.
Finally, it is worth noting that using a simple doubling procedure as the one outlined
in Section~\ref{sec:trianglesparsifiers},  we can pick $p$ effectively in practice despite the fact 
that it depends on the quantity $t$ which we want to estimate by introducing an extra logarithm in the running time.

\begin{figure*} 
\centering
\includegraphics[width=0.4\textwidth]{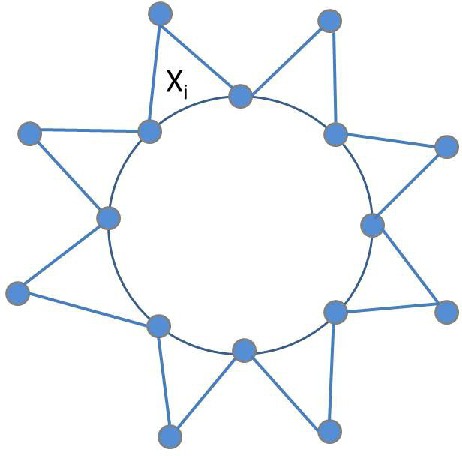}
\caption{Consider the indicator variable $X_i$ corresponding to the $i$-th triangle. 
Note that $\Prob{X_i|\text{rest are monochromatic}}=p \neq \Prob{X_i}=p^2$.
The indicator variables are {\em pairwise} but not mutually independent. }
\label{fig:colorfulfig1} 
\end{figure*}

Finally, from an experimentation point of view, it is interesting to see how well the upper bound $3\Delta t$ matches the sum $\sum_{e \in E(G)} \delta_e^2$,
where $\delta_e$  is the number of triangles edge $e$ is contained in, and the typical values for $\Delta$ and $t_{\max}$ in 
real-world graphs. The following table shows these numbers for five graphs
\footnote{AS:Autonomous Systems, Oregon: Oregon route views, 
Enron: Email communication network, ca-HepPh and AstroPh:Collaboration networks. Self-edges were removed.} taken from the SNAP library \cite{snap}.
We see that $\Delta$ and $t_{\max}$ are significantly less than their upperbounds and that typically $3\Delta t$ is significantly larger than 
$\sum_{e \in E(G)} \delta_e^2$ except for the collaboration network of Arxiv Astro Physics. The results are shown in Table~\ref{tab:colorfuldatasets}.

\begin{table}[ht]
\begin{center}
\begin{tabular}{|l|r|r|r|r|r|r|r|} \hline
Name     & Vertices($n$)     & Edges($m$)   &  Triangle Count($t$) & $\Delta$     & $t_{\max}$& $\sum_{e \in E(G)} \delta_e^2$ & 3$\Delta t$        \\ \hline 
AS       & 7,716        & 12,572     &  6,584             &   344        & 2,047     &        595,632                 & 6,794,688           \\ \hline
Oregon   & 11,492       & 23,409     &  19,894            &   537        & 3,638     &       2,347,560                & 32,049,234         \\ \hline
Enron    & 36,692       & 183,831    &  727,044           &   420        & 17,744    &      75,237,684                & 916,075,440          \\ \hline
ca-HepPh & 12,008       & 118,489    & 3,358,499          &   450        & 39,633    &    1.8839 $\times 10^9$        & 4.534$\times 10^9$  \\ \hline
AstroPh  & 18,772       & 198,050    & 1,351,441          &   350        & 11,269    &      148,765,753               & 1.419$\times 10^9$  \\ \hline
\end{tabular}
\end{center}
\caption{Values for the variables involved in our formulae for five real-world graphs. Typically, 
$\Delta$ and $t_{\max}$ are significantly less than the obvious upper bounds $n-2$
and ${n-1 \choose 2}$ respectively. Furthermore,  $3\Delta t$ 
is significantly larger than $\sum_{e \in E(G)} \delta_e^2$. }
\label{tab:colorfuldatasets}
\end{table}

\subsection{A \textsc{MapReduce} Implementation} 
\label{subsec:colorfulmapreduce} 

\textsc{MapReduce} \cite{dean} has become the {\it de facto} standard in academia and industry
for analyzing large scale networks. For a brief overview of \mapreduce see Section~\ref{sec:mapreducebasics}.
Recent work by Suri and Vassilvitskii \cite{suri} proposes two algorithms for counting triangles. 
The first is an efficient \textsc{MapReduce} implementation of the node iterator algorithm, see also \cite{schank2005finding}
and the second is based on partitioning the graph into overlapping subsets so that each triangle is present in at
least one of the subsets. 

Our method is amenable to being implemented in \textsc{MapReduce}
and the skeleton of such an implementation is shown in Algorithm 2\footnote{It's worth pointing out for completeness reasons 
that in practice one would not scale the triangles
after the first reduce. It would emit the count of monochromatic triangles
which would be summed up in a second round and scaled by $1/p^2$.}.
We implicitly assume that in a first round vertices have received 
a color uniformly at random from the $N$ available colors and that we have the coloring
information for the endpoints of each edge. 
Each mapper receives an edge together with the colors of its edgepoints. 
If the edge is monochromatic, then it's emitted with the color as the key and the edge as the value. 
Edges with the same color are shipped to the same reducer where locally a triangle counting
algorithm is applied. The total count is scaled appropriately. Trivially, the following lemma  
holds by the linearity of expectation and the fact that the endpoints of any edge 
receive a {\em given} color $c$ with probability $p^2$.

\begin{algorithm}[t]
\caption{\textsc{MapReduce} Colorful Triangle Counting $G(V,E),p=1/N$ } 
\begin{algorithmic}
\STATE {\bf Map:} Input $\langle e=(u,f(u),v,f(v));1 \rangle$
\COMMENT{Let $f$ be a uniformly at random coloring of the vertices with $N$ colors}
\STATE  {\it if} $f(u)=f(v)$ {\it then} emit $\langle f(u); (u,v)\rangle$
\STATE {\bf Reduce:} Input $\langle c; E_c=\{ (u,v) \} \subseteq E \rangle$ 
\COMMENT{ Every edge $(u,v) \in E_c$ has color $c$, i.e., $f(u)=f(v)$}
\STATE Scale each triangle by $\frac{1}{p^2}$. 
\end{algorithmic}
\end{algorithm}

\begin{lemma} 
The expected size to any reduce instance is $O(p^2m)$ and the expected total space used at the end of the map
phase is $O(pm)$. 
\label{subsec:colorfulmrc} 
\end{lemma}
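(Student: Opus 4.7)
The plan is to prove both bounds by straightforward linearity of expectation, using the fact that the coloring $f$ is uniformly random over $[N]$ with $N=1/p$ and that only monochromatic edges are emitted by the mappers.

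First I would handle the expected size of a single reduce instance. Fix a color $c \in [N]$ and an edge $e=(u,v) \in E$. Since $f(u)$ and $f(v)$ are chosen independently and uniformly from $[N]$, the probability that both endpoints receive color $c$ is exactly $(1/N)^2 = p^2$. Let $E_c$ denote the edges emitted to the reducer keyed by $c$. Writing $|E_c| = \sum_{e \in E} \mathbf{1}(f(u)=f(v)=c)$ and applying linearity of expectation gives $\Mean{|E_c|} = p^2 m$, which establishes the first claim.

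Next I would bound the total space used at the end of the map phase, which is simply the total number of monochromatic edges emitted, summed over all colors. For a fixed edge $e=(u,v)$, the probability it is monochromatic (i.e., $f(u)=f(v)$) is $\sum_{c=1}^{N} p^2 = N p^2 = p$. Letting $M = \sum_{e \in E} \mathbf{1}(f(u)=f(v))$ denote the total number of key-value pairs emitted, linearity of expectation yields $\Mean{M} = pm$, which establishes the second claim.

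There is no real obstacle here; the argument is a one-line calculation in each case. The only subtlety worth noting is that the two events "$e$ is emitted to reducer $c$" for different colors $c$ are mutually exclusive for a given edge, so summing their probabilities to get $p$ is justified without any inclusion–exclusion correction. No concentration inequality is needed since the lemma only claims expectations, although one could, if desired, also invoke a Chernoff bound to show $|E_c|$ and $M$ concentrate tightly around their means when $p^2 m$ and $pm$ are sufficiently large — but this is not required for the statement as given.
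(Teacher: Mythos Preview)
Your proposal is correct and takes essentially the same approach as the paper, which simply notes that the lemma follows trivially from linearity of expectation and the fact that the two endpoints of any edge receive a given color $c$ with probability $p^2$. Your write-up is in fact more detailed than what the paper provides.
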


%% Chapter 6: Dense subgraphs 
\clearpage
\chapter{Dense Subgraphs}
\label{densestchapter}
\lhead{\emph{Densest Subgraphs}} 
\section{Introduction} 
\label{sec:densestintro} 

Given a graph $G=(V,E)$ and a subset of vertices $S\subseteq V$,  
let $G[S]=(S,E[S])$ be the subgraph induced by $S$, and
let $e[S]$ be the size of $E[S]$. The \emph{edge density} of 
the set $S$ is defined as $\delta(S) = e[S] / {|S| \choose 2}$.
Finding a dense subgraph of $G$ would in principle require to 
find a set of vertices $S \subseteq V$ that maximizes $\delta(S)$.
However, the direct maximization of $\delta$ is not a 
meaningful problem, as even a single edge achieves maximum density.
Therefore,  effort has been devoted to define 
alternative density functions whose maximization allows for extracting
subgraphs having large $\delta$ and, at the same 
time, non-trivial size.
Different choices of the density function lead to 
different variants of the dense-subgraph problem.
Some variants can be solved in polynomial time, 
while others are \NPhard, or even inapproximable.
%We refer the reader to Section~\ref{denserelatedch1} for the related work. 

\begin{table}[t]
\centering \small
\begin{tabular}{r|llll|llll|}
\multicolumn{1}{c}{} &  \multicolumn{4}{c}{\textsf{densest subgraph}} &  \multicolumn{4}{c}{\textsf{optimal \QC}} \\
\cline{2-9}
&
\multicolumn{1}{c}{$\frac{|S|}{|V|}$} &
\multicolumn{1}{c}{$\delta$} &
\multicolumn{1}{c}{$D$} &
\multicolumn{1}{c|}{$\tau$} &
\multicolumn{1}{c}{$\frac{|S|}{|V|}$} &
\multicolumn{1}{c}{$\delta$} &
\multicolumn{1}{c}{$D$} &
\multicolumn{1}{c|}{$\tau$} \\ \cline{2-9}
\textsf{Dolphins}   &  0.32 & 0.33 & 3  & 0.04 &  0.12 & 0.68 & 2 &  0.32 \\
\textsf{Football}   &  1 & 0.09 & 4  &  0.03 & 0.10  & 0.73 & 2 &  0.34 \\
\textsf{Jazz}             &  0.50 & 0.34 & 3 &  0.08 &  0.15 & 1 &  1 & 1 \\
\textsf{Celeg.\ N.} & 0.46   & 0.13 & 3 &  0.05 &  0.07 & 0.61 & 2 &   0.26 \\
\cline{2-9}
\end{tabular}
\caption{\label{tab:densestintro} Difference between densest subgraph and optimal \QC\  on some popular graphs. $\delta= e[S] / {|S| \choose 2}$ is the edge density of the extracted subgraph, $D$ is the diameter, and $\tau = t[S] / {|S| \choose 3}$ is the triangle density.}
\end{table}

In this Chapter we introduce a new framework for finding dense subgraphs. 
We focus on a special case of the framework which we refer to as \OQCs.
Table~\ref{tab:densestintro} offers a preview of the results that will follow.
Specifically, it compares our \OQCs\ with \DSs\ 
on some popular graphs.\footnote{\scriptsize \textsf{Densest subgraphs} 
are extracted here with the exact Goldberg's
algorithm~\cite{Goldberg84}. As far as  \OQCs, 
we optimize $f_\alpha$ with $\alpha=\frac 13$ and
use our local-search method.} The results in the table 
clearly show that {\OQCs} have much larger edge density 
than \DSs, smaller diameters and larger triangle densities. 
Moreover, \DSs\ are usually quite large-sized: in the 
graphs we report in Table~\ref{tab:densestintro}, 
the \DSs\ contain always more than the 30\% of the vertices in the input graph.
For instance, in the \textsf{Football} graph, the \DS\ 
corresponds to the whole graph, with edge density $< 0.1$ and diameter~4,
while the extracted \OQC\ is a 12-vertex subgraph 
with edge density $0.73$ and diameter~2.
The \textsf{Jazz} graph contains a perfect clique 
of 30 vertices: our method finds this clique 
achieving perfect edge density, diameter, and triangle density scores.
By contrast, the \DS\ contains 100 vertices, 
and has edge density $0.34$ and triangle density~$0.08$.

\section{A General Framework}
\label{sec:densestproblem}

\enlargethispage*{\baselineskip}
Let $G=(V,E)$ be a graph, with $|V| = n$ and $|E| = m$.  For a set of vertices $S \subseteq V$, let $e[S]$ be the
number of edges in the subgraph induced by~$S$. We define the following function.
\begin{definition} [Edge-surplus]
Let $S \subseteq V$ be a subset of the vertices of a graph $G=(V,E)$, and let $\alpha > 0$ be a constant.
We define  {\em edge-surplus}
as:
\[
f_{\alpha}(S) =  g( e[S] ) - \alpha h(|S|),
\]

where functions $g,h$ are both strictly increasing.
We also define $f_{\alpha}(\emptyset)=0$. % and $f_{\alpha}(\{v\})=0$, for all $v \in V$.
\end{definition}

We note that the first term $g(e[S])$ encourages subgraphs abundant in edges whereas
the second term $-\alpha h(|S|)$ penalizes large subgraphs.
Our framework for finding dense subgraphs is based on the following optimization problem.

\begin{problem}[\OES]
Given a graph $G=(V,E)$, a positive real $\alpha$ and a pair of functions $g,h$,
find a subset of vertices $S^* \subseteq V$  such that
$f_{\alpha}(S^*) \ge f_{\alpha}(S)$, for all sets ${S \subseteq V}$.
We refer to the set $S^*$ as the {\OES} of the graph $G$.
\end{problem}

The edge-surplus definition subsumes numerous popular existing density measures
by choosing appropriately $g,h,\alpha$. Three important cases follow.

\squishlist

 \item By setting $g(x)=h(x)=\log{x}, \alpha=1$, the {\OES} problem
  becomes equivalent to maximizing $ \log{e[S]}-\log{|S|} = \log{ \frac{e[S]}{|S|} }$.
  This is equivalent to the popular {\DS\ }.
  \item By setting $g(x)=\log{x}, h(x)=\log{ \Big( \tfrac{x(x-1)}{2} \Big) }, \alpha =1$ the {\OES} problem becomes equivalent
   to maximizing  $\frac{e[S]}{ {|S| \choose 2}}$.
 \item By setting $g(x)=x, h(x)=\tfrac{x(x-1)}{2}$  and restricting $\alpha \in (0,1)$
  we obtain the following  {\OES} problem: $\max_{\emptyset \neq S \subseteq V} e[S]-\alpha {|S| \choose 2}$.
  We call this problem {\OQCP}. We notice that it turns the quasi-clique condition into
  an objective. To the best of our knowledge, this optimization problem does not appear in the existing literature.
\squishend

%In numerous real-world applications we desire relatively small, compact, dense subgraphs.
%Unfortunately, none of the above notions of density comes without any shortcomings.
 
The \DS\ problem on the one hand is polynomially time solvable but results typically in very large subgraphs.
Also, maximizing the edge density $\delta(S)$ results in trivial subgraphs
such as edges or triangles which achieve the maximum possible density value 1.
We wish to better understand the {\OQCP} problem. We start by discussing properties of the objective. 
{\bf Understanding the objective:} 
Consider the case  $\alpha=0$. Clearly, the optimal solution is the whole graph. 
When $0<\alpha<1$ the problem in general is \NPhard. We discuss the case where $\alpha>1/2$. 
It is straightforward to check that by setting $\alpha=1-\frac{1}{\Omega(n^2)}$, e.g., $\alpha=1-n^{-3}$,
one solves the maximum clique problem. 
Furthermore, 
assuming that finding a hidden clique of order $O(n^{1/2-\delta})$  where $\delta>0$ 
in a random binomial graph $G\sim G(n,1/2)$ is hard, 
then one can see that our problem is hard. 
To see why, notice that for any set of $S$ vertices 
the expected score is $\big(\frac{1}{2}-\alpha\big) {n^{1/2-\delta} \choose 2}$ 
and for the hidden clique $(1-\alpha){n^{1/2-\delta} \choose 2}$. 
Therefore, if we could optimally solve the {\OQCP} problem, 
then by setting $\alpha>1/2$,  we could solve in expectation the hidden clique problem.

\begin{theorem}
The \OQCP\ is {\NPhard}.
\end{theorem}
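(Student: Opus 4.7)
\proofstart (Proof plan.) The plan is to reduce from the classical \NPcomplete\ problem \textsc{Clique}: given $G=(V,E)$ and integer $k$, decide whether $G$ contains a clique on $k$ vertices. The reduction is the identity on $G$, together with a carefully chosen $\alpha$. The paper's discussion already hints at this, noting that setting $\alpha=1-\Theta(n^{-3})$ ``solves the maximum clique problem''; my job is to turn that heuristic observation into a clean Karp reduction.

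First I would fix $\alpha = 1 - \frac{1}{\binom{n}{2}+1}$, which is rational and has description length polynomial in $n=|V|$, so the reduction runs in polynomial time. The decision instance of \OQCP\ I produce asks whether there is $S\subseteq V$ with $f_\alpha(S) = e[S] - \alpha\binom{|S|}{2} \ge (1-\alpha)\binom{k}{2}$. The forward direction is immediate: if $K\subseteq V$ is a clique of size $k$, then $f_\alpha(K) = \binom{k}{2} - \alpha\binom{k}{2} = (1-\alpha)\binom{k}{2}$, meeting the threshold.

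The main step is the converse. Suppose $S$ achieves $f_\alpha(S) \ge (1-\alpha)\binom{k}{2}$. I would split into two cases. If $S$ is a clique of size $s$, then $f_\alpha(S) = (1-\alpha)\binom{s}{2}$, and the inequality forces $s\ge k$, so $G$ has a $k$-clique. If $S$ is \emph{not} a clique, then $e[S] \le \binom{s}{2}-1$, so
\[
f_\alpha(S) \le (1-\alpha)\binom{s}{2} - 1 \le (1-\alpha)\binom{n}{2} - 1 < 0 \le (1-\alpha)\binom{k}{2},
\]
where the strict inequality uses the choice $1-\alpha = \frac{1}{\binom{n}{2}+1}$. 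This contradicts the assumption on $S$, so the non-clique case is impossible. Hence the two decision problems are equivalent, and \NPcompleteness\ of \textsc{Clique} yields \NPhardness\ of \OQCP.

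The main obstacle, and the only subtle point, is the second case above: one must rule out that a very large ``almost-clique'' could beat a much smaller true clique under $f_\alpha$. This is exactly what the calibration of $\alpha$ achieves, by making the linear penalty $\alpha\binom{|S|}{2}$ dominate any deficit $\binom{|S|}{2}-e[S]\ge 1$ caused by a single missing edge. Any $\alpha$ with $1-\alpha < 1/\bigl(\binom{n}{2}-\binom{k}{2}\bigr)$ would work; the slightly looser choice above avoids dependence on $k$ and keeps the encoding polynomial. I would then remark that the same reduction shows \NPhardness\ for every fixed rational $\alpha$ in a suitable range depending on $n$, and in particular that \OQCP\ is \NPhard\ in the strong sense that no FPTAS can exist unless $\mathbf{P}=\NP$, by invoking the standard inapproximability of \textsc{Clique} \cite{hastad}; this last remark is optional and not needed for the bare statement of the theorem. \proofend
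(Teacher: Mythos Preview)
Your proof is correct and follows essentially the same approach as the paper: the paper's argument is the one-line observation that setting $\alpha = 1 - \frac{1}{\Omega(n^2)}$ (it suggests $\alpha = 1 - n^{-3}$) makes the optimal set a maximum clique, and you have simply turned that sketch into a clean Karp reduction with an explicit threshold and a case analysis ruling out non-cliques. The paper does not spell out the converse direction or the threshold, so your write-up is more complete, but the underlying idea is identical.
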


Proving that the problem is \NPhard for any $\alpha \in (0,1)$ remains open. 
When $\alpha=1$ all cliques receive score equal to 0, independent of their size. 
When $\alpha>1$ then the problem stops being interesting as 
the optimal solution will be any single edge. 
An important property of the edge-surplus abstraction is that it
allows us to model scenaria in numerous practical situations
where we wish to find a dense subgraph with bounds on its size. For instance,
by relaxing the monotonicity property of function $h()$ 
the $k$-\DS\ problem can be modeled as an  {\OES} problem
by setting $g(x)=x$ and
    \begin{equation*}
          h(x) = \begin{cases}
               0               & x = k\\
               +\infty & \text{otherwise.}
           \end{cases}
    \end{equation*}
{\it By choosing $h(x)$ to penalize severely undesired size, a good algorithm
will avoid outputing a subgraph of undesired size. }

Finally, we outline that we cannot make any general statement on the
{\OES} problem, since certain cases are polynomially time solvable
whereas others \NPhard.
The following theorem provides a family of {\OES} problems that
are efficiently solvable

\begin{theorem}
Let $g(x)=x$ and $h(x)$ be a concave function. Then the {\OES} problem is in \Polytime.
\end{theorem}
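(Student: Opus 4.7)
With $g(x)=x$, the objective becomes $f_\alpha(S)=e[S]-\alpha h(|S|)$, where $h$ is strictly increasing and concave. The plan is to reduce the problem to a family of parametric densest-subgraph problems, each of which is solvable by min-cut, and to use concavity of $h$ to argue that one polynomially-large sweep over this family already contains an optimal solution.

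First, I would recall Goldberg's construction~\cite{Goldberg84}: for every fixed parameter $\lambda\ge 0$, the auxiliary problem
\[
\max_{S\subseteq V}\,\bigl(e[S]-\lambda|S|\bigr)
\]
is solvable in polynomial time via a single $s$-$t$ min-cut on an appropriate flow network. Moreover, by Gallo–Grigoriadis–Tarjan parametric max-flow (or, more elementarily, by enumerating one value of $\lambda$ per possible size), one can compute in polynomial time a chain of sets $\emptyset=S_0\subsetneq S_1\subsetneq\cdots\subsetneq S_r=V$ with $r\le n$, such that for every $\lambda\ge 0$ at least one $S_i$ achieves the maximum of $e[S]-\lambda|S|$.

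The crux is to show that the true optimum $S^\ast$ of $f_\alpha$ is attained by some $S_i$ in this chain. Let $k^\ast=|S^\ast|$ and, since $h$ is concave, pick any supergradient $p^\ast$ of $h$ at $k^\ast$, so that $h(k)\le h(k^\ast)+p^\ast(k-k^\ast)$ for all $k$; monotonicity of $h$ gives $p^\ast\ge 0$. Set $\lambda^\ast=\alpha p^\ast$ and let $S'$ be any maximizer of $e[S]-\lambda^\ast|S|$ (in particular, one of the $S_i$'s). Combining the optimality inequality $e[S']-\lambda^\ast|S'|\ge e[S^\ast]-\lambda^\ast k^\ast$ with the supergradient bound yields
\[
f_\alpha(S')-f_\alpha(S^\ast)
\ge \lambda^\ast(|S'|-k^\ast)-\alpha\bigl(h(|S'|)-h(k^\ast)\bigr)
\ge \lambda^\ast(|S'|-k^\ast)-\alpha p^\ast(|S'|-k^\ast)=0,
\]
so $S'$ is itself optimal for $f_\alpha$. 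Hence it suffices to evaluate $f_\alpha$ on each $S_i$ in the chain and return the best, which runs in polynomial time.

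The main obstacle I anticipate is the argument in the previous paragraph: one must be careful that the chain produced by parametric max-flow really contains a minimizer for \emph{every} $\lambda\ge 0$, including breakpoints where multiple sizes are simultaneously optimal. This is a known guarantee of the Gallo–Grigoriadis–Tarjan construction, but to avoid invoking it one can instead enumerate the $O(n)$ ``candidate'' parameters $\lambda_k:=\alpha\bigl(h(k)-h(k-1)\bigr)$ for $k=1,\dots,n$ (each a valid supergradient of $h$ at some integer point by concavity), run Goldberg's min-cut once per $\lambda_k$, and take the best over all $n$ returned sets. This variant is manifestly polynomial and, by the same supergradient argument, certainly encounters the $\lambda^\ast$ corresponding to $k^\ast$.
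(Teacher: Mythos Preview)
Your proof is correct, but it takes a genuinely different route from the paper. The paper's argument is structural: it observes that $S\mapsto e[S]$ is supermodular, that $S\mapsto h(|S|)$ is submodular whenever $h$ is concave (so $-\alpha h(|S|)$ is supermodular), and hence that $f_\alpha$ is supermodular as a sum of two supermodular functions; polynomial-time solvability then follows from the general result that supermodular functions can be maximized in strongly polynomial time.

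Your approach is instead a Lagrangian/parametric one: you linearize the concave penalty via a supergradient, reduce each linearized subproblem to Goldberg's densest-subgraph min-cut, and use the supergradient inequality to certify that the best of these $O(n)$ solutions is globally optimal for $f_\alpha$. The supergradient computation is clean and the enumeration of the $n$ discrete slopes $\lambda_k=\alpha(h(k)-h(k-1))$ avoids any subtle reliance on parametric max-flow breakpoint structure. What your route buys is an explicit, concrete algorithm with an easy running-time bound ($n$ min-cut calls) and no need to know the machinery of supermodular maximization; what the paper's route buys is brevity and a direct structural explanation of why the problem is easy. Both are valid, and in fact your argument is essentially a specialization of the supermodular-maximization machinery to this particular objective.
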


\begin{proof}

The {\OES} problem becomes $\max_{\emptyset \neq S \subseteq V} e[S]-\alpha h(|S|)$
where $h(x)$ is a concave function. The claim follows directly from combining the following
facts.

\noindent
\underline{Fact 1} The function defined by the map $S \mapsto e[S]$ is a supermodular function. \\
\underline{Fact 2} The function $h(|S|)$ is submodular given that $h$ is concave.
Since $\alpha > 0$, the function $-\alpha h(|S|)$ is supermodular. \\
\underline{Fact 3} Combining the above facts with the fact that the sum of two supermodular
functions is supermodular, we obtain $f_\alpha(S)$ is a supermodular function.  \\
\underline{Fact 4} Maximizing supermodular functions is strongly polynomially time solvable \cite{schrijver}.
\end{proof}

However, we outline that the scenario where $g(x)=x$ and $h(x)$ is concave 
cannot be useful in real applications as the  output subgraph will be large. 

\section{Optimal Quasi-cliques}
\label{sec:densestoqc}

\spara{Scalable Algorithms.} The first efficient algorithm we propose is an adaptation of the greedy algorithm by Asashiro et al.~\cite{AHI02}, 
which has been shown to provide a $\frac 1 2$-approximation for the \DS\ problem~\cite{Char00}.
The outline of our algorithm, called \Galgo, is shown as Algorithm \ref{alg:additive}.
The algorithm iteratively removes the vertex with the smallest degree.
The output is the subgraph produced over all iterations that maximizes the objective function $f_\alpha$.
The algorithm can be implemented in $\bigO(n + m)$ time: the trick consists in keeping a list of vertices 
for each possible degree and updating the degree of any vertex $v$ during the various iterations of the 
algorithm simply by moving $v$  to the appropriate degree list.

\begin{algorithm}[t]
\caption{\label{alg:additive}\Galgo} 
 \begin{algorithmic} 
\REQUIRE Graph $G(V,E)$
\ENSURE Subset of vertices $\bar{S} \subseteq V$
\STATE $S_n \leftarrow V$ 
\FOR{$i \leftarrow n$ downto $1$} 
\STATE Let $v$ be the vertex with the smallest degree in $G[S_i]$
\STATE $S_{i-1} \leftarrow S_i  \setminus \{v\}$
\ENDFOR
\STATE $\bar{S} \leftarrow \arg\max_{i= 1, \ldots, n} f_\alpha(S_i)$
%\STATE {\bf return} the set $S_i$ that evaluates to the highest value $f_{\alpha}(S_i)$
\end{algorithmic}
\end{algorithm}

The \Galgo\ algorithm provides an additive approximation guarantee for the {\OQCP}, as shown next.
%The proof is omitted due to space constraints. 

\begin{theorem}\label{theorem:additive}
Let $\bar{S}$ be the set of vertices outputted by the \Galgo\ algorithm and let $S^*$ be the optimal vertex set.
Consider also the specific iteration of the algorithm where a vertex within $S^*$ is removed for the first time  and let $S_I$ denote the vertex set currently kept in that iteration.
It holds that:
$$
f_\alpha(\bar{S}) \geq f_\alpha(S^*) - \frac{\alpha}{2} |S_I|(|S_I|-|S^*|).
$$
%$$ 
%f_{\alpha}(\bar{S}) \geq f(S^*) - O(n^2). 
%$$
\end{theorem}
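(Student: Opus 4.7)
The plan is to exploit two simple facts about the critical iteration $I$ in which the greedy procedure first deletes a vertex that belongs to $S^{*}$. First, since $\bar S$ is selected as the best among all $S_i$ produced by the sweep, we automatically have $f_\alpha(\bar S) \geq f_\alpha(S_I)$, so it suffices to prove the claimed additive bound for $S_I$ in place of $\bar S$. Second, at iteration $I$ we still have $S^{*}\subseteq S_I$, and the vertex $v$ removed is both the minimum-degree vertex of $G[S_I]$ and an element of $S^{*}$.

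Next, I will extract a lower bound on $d_v:=\deg_{G[S_I]}(v)$ from the optimality of $S^{*}$. Since removing any $u\in S^{*}$ cannot improve $f_\alpha$, one has $e[S^{*}]-e[S^{*}\setminus\{u\}]\geq \alpha(|S^{*}|-1)$, i.e.\ $\deg_{G[S^{*}]}(u)\geq \alpha(|S^{*}|-1)$ for every $u\in S^{*}$. Because $v\in S^{*}$ and $S^{*}\subseteq S_I$, monotonicity of degree under adding vertices gives
\[
d_v \;=\; \deg_{G[S_I]}(v)\;\geq\;\deg_{G[S^{*}]}(v)\;\geq\;\alpha(|S^{*}|-1).
\]

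The main obstacle, and the key technical step, is to obtain the right lower bound on $e[S_I]$. The naive bound $e[S_I]\geq \tfrac{|S_I| d_v}{2}$ is not sharp enough, because it loses track of $e[S^{*}]$ and therefore cannot be matched against $f_\alpha(S^{*})$. Instead, I would decompose $S_I=S^{*}\sqcup T$ with $T=S_I\setminus S^{*}$ and use minimality of $d_v$ only on vertices of $T$: summing degrees over $T$ in $G[S_I]$ gives $2e[T]+e(S^{*},T)\geq |T|d_v$, and since $e[S_I]=e[S^{*}]+e(S^{*},T)+e[T]$, discarding $e(S^{*},T)\geq 0$ on the right side yields
\[
e[S_I]\;\geq\; e[S^{*}]+\tfrac{|T| d_v}{2}\;\geq\; e[S^{*}]+\tfrac{\alpha|T|(|S^{*}|-1)}{2}.
\]

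Finally I will substitute this into $f_\alpha(S_I)=e[S_I]-\alpha\binom{|S_I|}{2}$, add and subtract $\alpha\binom{|S^{*}|}{2}$ to expose $f_\alpha(S^{*})$, and simplify using $|T|=|S_I|-|S^{*}|$ and $|S^{*}|+|T|=|S_I|$. The residual algebra collapses to
\[
f_\alpha(S_I)\;\geq\; f_\alpha(S^{*})+\tfrac{\alpha|S_I|}{2}\bigl((|S^{*}|-1)-(|S_I|-1)\bigr)\;=\;f_\alpha(S^{*})-\tfrac{\alpha}{2}|S_I|(|S_I|-|S^{*}|),
\]
which combined with $f_\alpha(\bar S)\geq f_\alpha(S_I)$ yields the theorem. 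The only nontrivial step is the tightened edge count that preserves $e[S^{*}]$ exactly; everything else is local optimality of $S^{*}$ and bookkeeping.
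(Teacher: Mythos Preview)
Your proof is correct and follows essentially the same route as the paper. Both arguments hinge on the same two ingredients: the local-optimality condition $\deg_{G[S^*]}(u)\ge \alpha(|S^*|-1)$ for every $u\in S^*$, and the fact that at iteration $I$ the removed vertex $v$ has minimum degree in $G[S_I]$, which transfers this lower bound to every $u\in S_I$. The paper reaches the key inequality $e[S_I]\ge e[S^*]+\tfrac{1}{2}(|S_I|-|S^*|)\alpha(|S^*|-1)$ by writing $2e[S_I]=\sum_{u\in S^*}d_{S^*}(u)+\sum_{u\in S^*}(d_{S_I}(u)-d_{S^*}(u))+\sum_{u\in T}d_{S_I}(u)$ and discarding the middle term; you obtain the identical bound via the edge decomposition $e[S_I]=e[S^*]+e(S^*,T)+e[T]$ together with $\sum_{u\in T}d_{S_I}(u)=2e[T]+e(S^*,T)\ge |T|d_v$, dropping half of $e(S^*,T)$. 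These are two presentations of the same computation, and the final algebra is identical.
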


\begin{proof}
Given a subset of vertices $S \subseteq V$ and a vertex $u \in S$, let $d_S(u)$ denote the degree of $u$ in $G[S]$.

We start the analysis by considering the first vertex belonging to $S^*$ removed by the algorithm from the current vertex set.
Let $v$ denote such a vertex, and let also $S_I$ denote the set of vertices still present just before the removal of $v$. %with $e[S_i] = e_i$, $|S_i|=s_i$. 
By the optimality of $S^*$, we obtain:
\begin{eqnarray*}
\lefteqn{f_\alpha(S^*) \geq f_\alpha(S^* \setminus \{u\}), \ \forall u \in S^*}\\
\!\!& \Leftrightarrow &\!\!\! e[S^*] \!-\! \alpha {|S^*| \choose 2} \!\geq\! ( e[S]\! -\! d_{S^*}(u) ) \!- \! \alpha {|S^*| \! - \! 1 \choose 2}\!, \forall u \!\in\! S^* \\
\!\!& \Leftrightarrow &\!\!\! d_{S^*}(u) \geq \alpha(|S^*|-1), \ \forall u \in S^*.
\end{eqnarray*}
%\begin{align*}
%e[S^*] - \alpha {s^* \choose 2} &\geq ( e[S] - d_{S^*}(v) ) - \alpha {s^*-1 \choose 2} \Leftrightarrow \\
%d_{S^*}(v) &\geq \alpha(s^*-1).                                
%\end{align*}

As the algorithm greedily removes vertices with the smallest degree in each iteration, it is easy to see that $d_V(u) \geq d_{S_I}(u) \geq  d_{S^*}(u) \geq \alpha(|S^*|-1)$, $\forall u$. 
Therefore, noticing also that $S^* \subseteq S_I$, it holds that:
\begin{eqnarray*}
\lefteqn{f_{\alpha}(S_I)= e[S_I] - \alpha {|S_I| \choose 2}        + \alpha{n \choose 2}} \\
  &  = & \frac{1}{2} \left ( \sum_{u \in S^*} d_{S^*}(u)+  \sum_{u \in S^*} \left (  d_{S_I}(u) - d_{S^*}(u) \right ) + \right.\\ 
  & & + \left. \sum_{u \in S_I\setminus S^*} d_{S_I}(u) \right ) - \alpha{|S_I| \choose 2} + \alpha{n \choose 2}\\ 
& \geq &  \frac{1}{2} \left ( \sum_{u \in S^*}\! d_{S^*}(u) \ + \!\!\! \sum_{u \in S_I\setminus S^*} \!\!\! d_{S_I}(u) \right ) - \alpha{|S_I| \choose 2}  + \alpha{n \choose 2}\\
& =  &  e[S^*] +  \frac{1}{2} \! \sum_{u \in S_I\setminus S^*} \!\!\!d_{S_I}(u) - \alpha{|S_I| \choose 2}   + \alpha{n \choose 2} \\   
& \geq & e[S^*] + \frac{1}{2} (|S_I|-|S^*|)\alpha(|S^*|-1) - \alpha{|S_I| \choose 2}  + \alpha{n \choose 2}  \\ 
 & = & f_\alpha(S^*) - \frac{\alpha}{2} |S_I|(|S_I|-|S^*|).
\end{eqnarray*}
%\begin{align*}
%f_{\alpha}(S_I)& =            e[S_I] - \alpha {|S_I| \choose 2}         
%                            = \frac{1}{2} \big( \sum_{u \in S^*} d_{S^*}(u)+ \sum_{u \in S_i\backslash S^*} d_{S_i}(u) \big) - \alpha{s_i \choose 2} \\ 
%                            &=  e[S^*] +  \frac{1}{2} \sum_{u \in S_i\backslash S^*} d_{S_i}(u) - \alpha{s_i \choose 2}   \\   
%                            &\geq e[S^*] + \frac{1}{2} (s_i-s^*)\alpha(s^*-1) - \alpha{s_i \choose 2}   \\ 
%                            &\geq f(S^*) - \frac{\alpha}{2} s_i(s_i-s^*).
%\end{align*}

\noindent As the final output of the algorithm is the best over all iterations, we finally obtain:
$$ 
f_{\alpha}(\bar{S}) \  \geq \ f_{\alpha}(S_I) \ \geq \  f_\alpha(S^*) - \frac{\alpha}{2} |S_I|(|S_I|-|S^*|).
$$ 
%
%\noindent When $k=O(n), s^*=O(1)$ we obtain the following worst-case bound: 
%
%$$ f_{\alpha}(\bar{S})  \geq f(S^*)- O(n^2).$$
\end{proof}

The above result can be interpreted as follows.
Assuming that $|S_I|$ is $\bigO(|\bar{S}|)$, the additive approximation factor proved in Theorem \ref{theorem:additive} becomes $f_{\alpha}(\bar{S}) \geq   f_\alpha(S^*) - \frac{\alpha}{2} |\bar{S}|(|\bar{S}|-|S^*|)$.
Thus, the error achieved by the \Galgo\ algorithm is guaranteed to be bounded by an additive factor proportional to the size of the {\OQC} outputted.
As {\OQCs} are typically small graphs, this results in an approximation guarantee that is very tight in practice. 

Finally, we present a  local search heuristic for solving the {\OQCP}.
The algorithm, called \LSalgo, performs local operations and outputs a vertex set 
$S$ that is guaranteed to be locally optimal, i.e., if any single vertex is added to or removed 
from $S$, then the objective function decreases. 

The outline of \LSalgo\ is shown as Algorithm \ref{algorithm:localsearch}. 
The algorithm initially selects a random vertex and then it keeps adding vertices to the current set $S$ while the objective improves. 
When no vertex can be added, the algorithm tries to find a vertex in $S$ whose removal may improve the objective.
As soon as such a vertex is encountered, it is removed from $S$ and the algorithm re-starts from the adding phase.
The process continues until a local optimum is reached or the number of iterations exceeds \TMAX.
The time complexity of \LSalgo\ is $\bigO(\TMAX\,m)$.

\begin{algorithm}[t]
\caption{\label{algorithm:localsearch}\LSalgo}
\begin{algorithmic}
\REQUIRE Graph $G=(V,E)$;  maximum number of iterations $T_{MAX}$
\ENSURE Subset of vertices $\bar{S} \subseteq V$
\STATE $S \leftarrow \{v\}$, where $v$ is chosen uniformly at random
\STATE $b_1,b_2 \leftarrow$ TRUE, $t \leftarrow 1$.
\WHILE{$b_1$ and $t \leq T_{MAX}$}

\WHILE{$b_2$}
\STATE If there exists $u \in V \backslash S$ such that $f_{\alpha}(S \cup \{u\}) \geq f_{\alpha}(S)$ 
\STATE then let $S \leftarrow S \cup \{u\}$
\STATE otherwise set $b_2 \leftarrow$ FALSE
\ENDWHILE
\STATE If there exists $u \in S$ such that $f_{\alpha}(S \backslash \{u\}) \geq  f_{\alpha}(S)$ 
\STATE then let $S \leftarrow S \backslash \{u\}$ 
\STATE otherwise, set $b_1 \leftarrow$ FALSE
\STATE $t \leftarrow t+1$
\ENDWHILE
\STATE $\bar{S} \leftarrow \arg\max_{\hat{S} \in \{S, V \setminus S\}} f_\alpha(\hat{S})$
%\STATE Return the maximum of $f_{\alpha}(S)$ and $f_{\alpha}(V \backslash S)$
\end{algorithmic}
\end{algorithm}

In order to enhance the performance of the \LSalgo\ algorithm,
one may use the following heuristic \cite{Gleich-2012-neighborhoods}.
Let $v^*$ be the vertex that maximizes the ratio $\frac{t(v^*)}{d(v^*)}$, where
$t(v^*)$ is the number of triangles of $v^*$ and $d(v^*)$ its degree (we approximate the number of triangles in which 
each vertex participates with the technique described in~\cite{tsourakakis4}).
Given vertex $v^*$, we use as a seed the set $\{v^* \cup N(v^*)\}$, where $N(v^*)=\{u: (u,v^*) \in E \}$ is the neighborhood of $v^*$.

\spara{Parameter Selection.} Finally, a natural question that arises whenever a parameter exists, is how to choose an appropriate value.
No doubt, there exist different possible, principled ways which lead to different choices of $\alpha$.
For instance, setting $\alpha$ to be the graph edge density $\delta(G)$ results into a normalized version of our criterion.
However, since real-world networks are sparse, we do not encourage this for practical purposes.
Instead, we provide a simple criterion to pick $\alpha$.

Let us consider two disjoint sets of vertices $S_1,S_2$ in the graph $G$.
Assume that $G[S_1 \cup S_2]$ is disconnected, i.e., $G[S_1]$ and $G[S_2]$ form two separate connected components.
Also, without any loss of generality, assume that $f_\alpha(S_1) \leq f_\alpha(S_2)$.
As our goal is to favor small dense subgraphs, a natural condition to satisfy is $f_\alpha(S_1 \cup S_2) \leq f_\alpha(S_1) \leq f_\alpha(S_2)$, i.e., we require for our objective to prefer the set $S_1$ (or $S_2$) rather than the larger set $S_1 \cup S_2$.
Therefore, we obtain:
\[
\!e[S_1]+ e[S_2]- \alpha \!{|S_1|\!+\!|S_2| \choose 2} \!+  \alpha \!{n \choose 2} \!\leq\!
e[S_1] - \alpha \!{|S_1| \choose 2} + \alpha \!{n \choose 2}\!,
\]
which, considering that $e[S_2] \leq {|S_2| \choose 2}$, leads to:
$$
\alpha \geq  \frac{{|S_2| \choose 2}}{  {|S_1|+|S_2| \choose 2}-{|S_1| \choose 2}} = \frac{|S_2| -1}{2|S_1| + |S_2| -1}.
$$
Let us now assume for simplicity that $|S_1| = |S_2| = k$; then the above condition becomes:
$
\alpha \geq \frac{k-1}{3k -1}.
$
As $\frac{k-1}{3k -1} < \frac{1}{3}$, it suffices choosing $\alpha \geq \frac{1}{3}$ to have the condition satisfied.

%On the other hand, by dividing the objective function $f_{\alpha}$ by $\alpha$, we observe that the quantity $\alpha^{-1}$ can be thought as the ``reward'' for adding an edge in a solution set.
%Therefore, one expects that when $\alpha$ is close to 1, our objective will favor large subgraphs, something that we want to avoid.
On the other hand, it is easy to see that, when $\alpha$ is close to 1, $f_\alpha$ tends to be maximized even by subgraphs of trivial structure (e.g., single edges), which is clearly something that we want to avoid.
By combining the two arguments above, we conclude that a good choice for $\alpha$ is a value around $\frac{1}{3}$, which is the value we adopt in our experiments.

\spara{Multiplicative Approximation, a 0.796-approximation algorithm for a shifted objective.} 
We also design a multiplicative approximation algorithm for a shifted 
objective which works for any $\alpha>0$.
Notice that this shifting is not necessary since the optimal objective value is positive
in the interesting range of $0<\alpha<1$ as a single edge results in a positive 
score $1-\alpha$. 
Our algorithm is based on semidefinite programming and in particular on the techniques developed
by Goemans-Williamson~\cite{goemans}.
Our algorithm is a  $\beta$-approximation algorithm, where $\beta >0.796$, 
for a {\it shifted} objective. 
Specifically, we {\em shift} our objective by a constant $c=c(n)$ to make it non-negative.

\begin{observation}
Consider $f'_{\alpha}(S)= f_{\alpha}(S)+\alpha {n \choose 2}$. Then $f'_{\alpha}(S) \geq 0$ for any $S \subseteq V$
since ${n \choose 2}-{s \choose 2} \geq 0$ for any $S \subseteq V$ and $e[S] \geq 0$.
Furthermore we have $f'_{\alpha}(S_1) \geq f'_{\alpha}(S_2)$ if and only if $f_{\alpha}(S_1) \geq f_{\alpha}(S_2)$.
\end{observation}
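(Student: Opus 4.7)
The plan is to verify both claims by direct substitution, since $f'_\alpha$ differs from $f_\alpha$ only by an additive constant (independent of $S$). First I would unfold the definitions: recalling that in the OQCP setting $f_\alpha(S) = e[S] - \alpha\binom{|S|}{2}$, we have
\[
f'_\alpha(S) \;=\; e[S] \;+\; \alpha\left(\binom{n}{2} - \binom{|S|}{2}\right).
\]
This rewriting makes both claims essentially transparent.

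For the non-negativity claim, I would observe that every $S \subseteq V$ satisfies $|S| \leq n$, and since the map $x \mapsto \binom{x}{2}$ is non-decreasing on $\mathbb{Z}_{\geq 0}$, the bracketed quantity $\binom{n}{2} - \binom{|S|}{2}$ is non-negative. Combined with $\alpha > 0$ and the trivial bound $e[S] \geq 0$, the sum is non-negative, giving $f'_\alpha(S) \geq 0$ for all $S \subseteq V$ as required.

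For the order-preservation claim, I would point out that the quantity $f'_\alpha(S) - f_\alpha(S) = \alpha\binom{n}{2}$ does not depend on $S$. Hence for any two sets $S_1, S_2 \subseteq V$,
\[
f'_\alpha(S_1) - f'_\alpha(S_2) \;=\; f_\alpha(S_1) - f_\alpha(S_2),
\]
so the sign of the difference is unchanged, and the inequality $f'_\alpha(S_1) \geq f'_\alpha(S_2)$ holds if and only if $f_\alpha(S_1) \geq f_\alpha(S_2)$. There is no real obstacle in the argument; the observation is essentially a cosmetic shift that renders the objective non-negative without altering its optimizer, which is exactly what will be needed to feed $f'_\alpha$ into the semidefinite programming machinery used to obtain the $0.796$-approximation.
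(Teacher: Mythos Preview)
Your argument is correct and mirrors the paper's reasoning exactly: the justification is already embedded in the statement of the observation itself (namely, $\binom{n}{2}-\binom{|S|}{2}\ge 0$, $e[S]\ge 0$, and the shift by $\alpha\binom{n}{2}$ is a constant independent of $S$). There is nothing to add.
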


All the guarantees we obtain in this Section refer to the shifted objective $f'_{\alpha}$. 
Therefore, from now on we will abuse slightly the notation and use $f_{\alpha}$ to denote $f'_{\alpha}$.
We formulate our maximization problem as an integer program.
We introduce a variable $x_i \in \{-1,+1\}$ for each vertex $i \in V=\{1,\ldots,n\}$
and an extra variable $x_0$ which expresses whether a vertex belongs to $S$ or not:
\[
\text{It is }
i \in S 
\text{ if and only if }
x_0x_1=1.
\]
Notice that the term $\frac{1+x_0x_i+x_0x_j+x_ix_j}{4}$ equals 1 if and only if
both $i,j$ belong in $S$, otherwise it equals 0. Furthemore, the term ${n \choose 2}$
enters the objective as $\frac{1}{2} \sum_{i \neq j} 1$. Therefore, we get the following
integer program:

\begin{equation}
\label{equation:IP1}
\framebox{
\begin{minipage}{0.85\linewidth}
%% \smallskip\par
\begin{align}
\nonumber	
\mbox{\bf max} \quad   &  \sum_{e=(i,j)}\frac{1+x_0x_i+x_0x_j+x_ix_j}{4} + \\
\nonumber
& \frac{\alpha}{2} \sum_{i \neq j} \Big( 1- \frac{1+x_0x_i+x_0x_j+x_ix_j}{4} \Big) \\
\nonumber
\mbox{{subject to }}\, & x_i \in \{-1,+1\}, \,\mbox{for all }\, i \in \{0,1,..,n\}. \\
\nonumber
\end{align}
\end{minipage}}
%%\smallskip\par}
\end{equation}

\noindent We relax the integrality constraint and we allow the variables to be vectors in the unit
sphere in $\field{R}^{n+1}$. By using the variable transformation $y_{ij} = x_{i} x_j$, 
we obtain the following semidefinite programming relaxation:

\begin{equation}
\label{equation:IP2}
\framebox{
\begin{minipage}{0.85\linewidth}
%% \smallskip\par
\begin{align}
\nonumber	
\mbox{\bf max}  \qquad &
\alpha \sum_{e=(i,j)}\frac{1+y_{0i}+y_{0j}+y_{ij}}{4} + \\
\nonumber
& \frac{1}{2} \sum_{i \neq j} \Big( 1- \frac{1+y_{0i}+y_{0j}+y_{ij}}{4} \Big) \\
\nonumber
\mbox{{subject to }}\, & y_{ii}=1, \,\mbox{for all}\,  i \in \{0,1,..,n\} \\
\nonumber
\mbox{{and }}\, & Y \succeq 0,\, Y \,\mbox{symmetric.} \\
\nonumber
\end{align}
\end{minipage}}
%%\smallskip\par}
\end{equation}

\noindent The above SDP can be solved within an additive error of $\delta$ of the optimum
in polynomial time 
%% in the size of the input and $\log{(\frac{1}{\delta})}$ 
by interior point algorithms or the ellipsoid method~\cite{alizadeh}.
In what follows, we refer to the optimal value of the integer program as {\IPOPT}
and of the semidefinite program as {\SDPOPT}. 
Our algorithm, \SDPalgo, is shown as Algorithm~\ref{algorithm:sdp}.

\begin{algorithm}[t]
\caption{\label{algorithm:sdp}\SDPalgo}
\begin{algorithmic}
\REQUIRE $G=(V,E)$
\STATE
{\bf 1. Relaxation} 
\STATE\quad 
Solve the semidefinite program~(\ref{equation:IP2}) 
\STATE\quad Compute a Cholesky decomposition of the resulting $Y$
\STATE\quad Let $v_0,v_1,\ldots,v_n$ be the resulting vectors
\STATE
{\bf 2. Randomized Rounding} 
\STATE\quad Randomly choose a unit length vector $r \in \field{R}^{n+1}$ 
\STATE\quad Set $S =\{ i \in [n]: \sgn(v_ir)=\sgn(v_0r) \}$
\STATE 
{\bf 3. Boosting the success probability} 
\STATE\quad Repeat steps 1--2 for $t=1,..,T$ 
\STATE\quad Output the best solution over
	$T=c_{\epsilon,\alpha,\beta} \log{n}$ runs
\STATE\quad \%	
	Here $\epsilon>0$ and 
$c_{\epsilon,\alpha,\beta} \ge \frac{2(\alpha+1)}{3\epsilon\alpha\beta}+1.$
\end{algorithmic}
\end{algorithm}

\begin{theorem}
\label{thrm:babissdp}
Algorithm {\SDPalgo} is a $\beta$-approximation algorithm for $f_{\alpha}$ where $\beta>0.796$ with probability at least $1-\bigO(n^{-1})$.
\end{theorem}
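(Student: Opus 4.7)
The plan is to follow the Goemans--Williamson SDP-rounding paradigm, adapted to the three-variable ``triangle'' terms arising in the quadratic formulation~(\ref{equation:IP1}). I would organize the argument in four steps.

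\textit{Step 1: SDP is a valid relaxation.} First I would verify that the quadratic program~(\ref{equation:IP1}) exactly encodes $f'_\alpha(S)$ on the Boolean cube, using that $(1+x_0x_i+x_0x_j+x_ix_j)/4$ is the $\{0,1\}$-indicator of the event $\{i,j\}\subseteq S$ for $x\in\{-1,+1\}^{n+1}$. The vector program~(\ref{equation:IP2}) obtained by the substitution $y_{ij}=v_i\cdot v_j$ with unit vectors is therefore a legitimate relaxation, so $\SDPOPT \ge \IPOPT = f'_\alpha(S^*)$.

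\textit{Step 2: Per-pair rounding probabilities.} For the hyperplane rounding that outputs $S=\{i:\sgn(v_i\cdot r)=\sgn(v_0\cdot r)\}$, each (ordered or unordered) pair $(i,j)$ contributes independently to the expected objective. Using the Goemans--Williamson identity $\Prob{\sgn(v\cdot r)=\sgn(w\cdot r)}=1-\arccos(v\cdot w)/\pi$ together with an inclusion--exclusion / spherical-geometry analysis of the three coincidence events $\sgn(v_0\cdot r)=\sgn(v_i\cdot r)$, $\sgn(v_0\cdot r)=\sgn(v_j\cdot r)$, $\sgn(v_i\cdot r)=\sgn(v_j\cdot r)$, I would express $\Prob{\{i,j\}\subseteq S}$ and $\Prob{\{i,j\}\not\subseteq S}$ as explicit functions of the triple $(y_{0i},y_{0j},y_{ij})$. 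By the linearity of expectation, $\Mean{f'_\alpha(S)}$ decomposes into the corresponding sum of such probabilities, one per edge and one per pair.

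\textit{Step 3: Pointwise ratio bound.} The core technical step is to establish the uniform inequalities
\[
\frac{\Prob{\{i,j\}\subseteq S}}{(1+y_{0i}+y_{0j}+y_{ij})/4}\;\ge\;\beta,\qquad
\frac{\Prob{\{i,j\}\not\subseteq S}}{1-(1+y_{0i}+y_{0j}+y_{ij})/4}\;\ge\;\beta,
\]
with $\beta>0.796$, valid over the feasibility region where the $3\times 3$ Gram matrix with entries $y_{00}=y_{ii}=y_{jj}=1$ and off-diagonal $(y_{0i},y_{0j},y_{ij})$ is positive semidefinite. This is a nonconvex but low-dimensional optimization, in the same spirit as the one Goemans--Williamson solved numerically for MAX-CUT and Frieze--Jerrum generalized to MAX-$k$-CUT~\cite{friezejerrum}. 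Combining the pointwise bound with Step~1 gives $\Mean{f'_\alpha(S)}\ge\beta\cdot\SDPOPT\ge\beta\cdot f'_\alpha(S^*)$.

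\textit{Step 4: Boosting to high probability.} A single rounding round delivers only the expectation bound; we need a high-probability guarantee. Since $0\le f'_\alpha(S)\le (1+\alpha)\binom{n}{2}$ and $\Mean{f'_\alpha(S)}\ge\beta\cdot f'_\alpha(S^*)$, a reverse-Markov argument shows that a single round achieves $f'_\alpha(S)\ge(1-\epsilon)\beta\cdot f'_\alpha(S^*)$ with probability at least some absolute constant $p_0(\epsilon,\alpha,\beta)>0$. Running $T=c_{\epsilon,\alpha,\beta}\log n$ independent trials and returning the best reduces the failure probability to $(1-p_0)^T=O(n^{-1})$, yielding the stated guarantee. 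The constant $c_{\epsilon,\alpha,\beta}\ge \tfrac{2(\alpha+1)}{3\epsilon\alpha\beta}+1$ in the algorithm statement is chosen to make this reverse-Markov calculation go through.

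\textit{Main obstacle.} The hardest part will be Step~3: deriving the closed-form probabilities from Step~2 and then proving that their ratio to the corresponding SDP contribution is uniformly bounded below by $\beta>0.796$ over the PSD feasible region. Since the objective mixes ``both in $S$'' (positive correlation) and ``not both in $S$'' (negative correlation) terms, one cannot directly recycle the MAX-CUT analysis; the worst case is driven by a joint three-vector configuration on the sphere and is likely to require a mix of trigonometric estimates and computer-assisted case analysis, analogous to the numerical verification underlying the original $0.87856$ bound.
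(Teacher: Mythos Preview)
Your proposal is essentially correct and follows the same Goemans--Williamson template as the paper: derive $\Prob{\{i,j\}\subseteq S}$ via inclusion--exclusion on the three sign events, establish a pointwise ratio bound against the SDP contribution, and amplify by independent repetitions using a reverse-Markov argument together with the trivial bounds $0\le f'_\alpha(S)\le(\alpha+1)\binom{n}{2}$ and the lower bound $\IPOPT\ge\tfrac{3\alpha}{2}\binom{n}{2}$ coming from a random $S$.

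The one place the paper is slightly slicker than your Step~3 is the treatment of the ``not both in $S$'' terms. Rather than bounding the three-variable ratio $\Prob{\{i,j\}\not\subseteq S}\big/\bigl[1-(1+y_{0i}+y_{0j}+y_{ij})/4\bigr]$ over the PSD region, the paper uses the algebraic identity
\[
1-\frac{1+y_{0i}+y_{0j}+y_{ij}}{4}=\frac{1}{2}\Bigl[\tfrac{1-y_{0i}}{2}+\tfrac{1-y_{0j}}{2}+\tfrac{1-y_{ij}}{2}\Bigr],
\]
together with the matching decomposition $1-\Prob{A}=\frac{1}{2\pi}\sum_{\text{pairs}}\arccos(\cdot)$, to reduce these terms to three standard two-variable Goemans--Williamson terms, each with ratio $\ge 0.87856>\beta$. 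So only the edge (``both in $S$'') terms actually need the genuine three-vector inequality $1-\tfrac{1}{2\pi}\sum\arccos\ge\tfrac{\beta}{4}(1+\sum y)$, which the paper asserts with $\beta>0.796$ ``by elementary calculus'' but does not prove in detail---precisely the obstacle you correctly flagged.
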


\begin{proof}
First, notice that we can rewrite the objective as
\begin{align*}
& \sum_{e=(i,j)}\frac{1+y_{0i}+y_{0j}+y_{ij}}{4} + \\
& \frac{\alpha }{4}\Big( \sum_{i \neq j} \frac{1-y_{0i}}{2} +
\sum_{i \neq j} \frac{1-y_{0j}}{2} +\sum_{i \neq j}\frac{1-y_{ij}}{2} \Big).
\end{align*}

Let $Y = [v_0 v_1 \ldots v_n]^T [v_0 v_1 \ldots v_n]$ be the Cholesky decomposition
of matrix $Y$. We analyze the randomized rounding step which is equivalent to 
considering a random hyperplane $\mathcal{H}$ that goes through the origin
and placing in set $S$ all the vertices whose corresponding vector
for the Cholesky decomposition $v_i$ is on the same side of $\mathcal{H}$
with $v_0$. 
Our goal now is to lower bound the expectation of our objective upon this randomized rounding. 
The expectation of the terms of the form $ \frac{1-y_{ij}}{2}$ is equal to the probability 
that the two vectors $v_i,v_j$ are on different sides of the random hyperplane.
As in Goemans-Williamson \cite{goemans} this probability  is 
\[
\Prob{ \sgn(v_ir) \neq \sgn(v_jr)} = \frac{\arccos(v_iv_j)}{\pi}.
\]

Furthermore,  again as in Goemans-Williamson \cite{goemans}, for any $0 \leq \theta \leq \pi$ we have
\[
\frac{\theta}{\pi} \geq 0.87856 \frac{1-\cos \theta}{2}
> \beta \frac{1-\cos \theta}{2}.
\]

Now, we lower bound the expectation of the first term in our objective. To do so, 
we need to compute the probability that $\sgn(v_ir) = \sgn(v_jr) = \sgn(v_0r)$. 
Consider the following events:
\[
A: \qquad \sgn(v_ir) = \sgn(v_jr) = \sgn(v_0r)
\]
\[
B_i: \qquad \sgn(v_ir) \neq \sgn(v_jr) = \sgn(v_0r)
\]
\[
B_j: \qquad \sgn(v_jr) \neq \sgn(v_ir) = \sgn(v_0r)
\]
\[
B_0: \qquad \sgn(v_0r) \neq \sgn(v_jr) = \sgn(v_ir)
\]
Notice that $\Prob{B_i} = \Prob{ \sgn(v_jr)=\sgn(v_0r)} - \Prob{A}$, and  that similar equations hold for indices $(j,0)$.
Furthermore, by the pidgeonhole principle 
$\Prob{A}+\Prob{B_i}+\Prob{B_j}+\Prob{B_0}=1$.
Hence, by solving for $\Prob{A}$ and  by using elementary calculus we obtain the following lower bound:

\begin{eqnarray*}
\Prob{A} & = &
1 -\frac{1}{2\pi} \big( \arccos(v_0v_i)+\arccos(v_0v_j)+\arccos(v_iv_j) \big) \\
& \geq & \frac{\beta}{4} \big( 1+v_0v_i+v_0v_j+v_iv_j \big).
\end{eqnarray*}

Combining the above lower bounds, we obtain
\begin{eqnarray*}
\Mean{f_{\alpha}(S)} & \geq &
\beta \Big(  \sum_{e=(i,j)}\frac{1+y_{0i}+y_{0j}+y_{ij}}{4} \Big) + \\
& & \frac{\alpha \beta}{4}\Big( \sum_{i \neq j} \frac{1-y_{0i}}{2} +
\sum_{i \neq j} \frac{1-y_{0j}}{2} +\sum_{i \neq j} \frac{1-y_{ij}}{2} \Big) \\
&=& \beta\, \SDPOPT \geq \beta\, \IPOPT.
\end{eqnarray*}

Now, we boost the probability of success of our randomized algorithm. Specifically,
we analyze part 3 of our algorithm. First, we lower bound {\IPOPT}. Consider adding each
vertex with probability $\frac{1}{2}$ to $S$. The expected value of the objective
is $\frac{m}{4}+ \alpha \Big( {n \choose 2} - {n/2 \choose 2} \Big) \approx  \frac{m}{4}+\alpha \frac{3n^2}{8} \geq \frac{3\alpha}{2} {n \choose 2}.$
Hence,
\[
\IPOPT  \geq \frac{3\alpha}{2} {n \choose 2}.
\]
Also notice that the objective is upper bounded always by $(\alpha+1){n \choose 2}$.
Define a constant $\gamma$ as
\[
\gamma = \frac{ \Mean{f_{\alpha}(S)}}{(\alpha+1){n \choose 2}}.
\]
Note that $\gamma \leq 1$.
We obtain the following lower bound on~$\gamma$:
\[
(\alpha+1) {n \choose 2} \geq \Mean{f_{\alpha}(S)} = \gamma (\alpha+1) {n \choose 2} \geq \beta\, \IPOPT \geq \beta\, \frac{3\alpha}{2} {n \choose 2},
\]
and hence $1 \geq \gamma \geq \frac{3\alpha \beta}{2(\alpha+1)}$.

Let $p=\Prob{W < (1-\epsilon)\Mean{f_{\alpha}(S)}}$, where $W$ is the actual objective value 
achieved by our randomized algorithm. We obtain the following (generous) upper bound on $p$ as follows:
\[
\Mean{f_{\alpha}(S)} \leq p(1-\epsilon)\Mean{f_{\alpha}(S)}+(1-p)(\alpha+1) {n \choose 2},
\]
which by solving for $p$ gives
\begin{eqnarray*}
p & \leq & 1-\frac{\epsilon\gamma}{1-\gamma+\epsilon \gamma}  \leq   1 -\frac{\epsilon\frac{3\alpha\beta}{2(\alpha+1)}}{1-\frac{(1-\epsilon)3\alpha\beta}{2(\alpha+1)}}=1-q.
\end{eqnarray*}
Let $c_{\epsilon,\alpha,\beta}  \ge \frac{2(\alpha+1)}{3\epsilon\alpha\beta}+1.$
Running the algorithm $c_{\epsilon,\alpha,\beta} \log{n} \geq \frac{1}{q}\log{n}$ times 
times gives that the success probability is $1-o(1)$, i.e.,         

\[ \Prob{f_{\alpha}(S)\geq (1-\epsilon)\Mean{f_{\alpha}(S)}} \geq 1-(1-q)^{\frac{\log{n}}{q}} \approx 1-\bigO(n^{-1}).
\]
\end{proof}

\section{Problem variants}
\label{sec:densestvariants}

We present two variants of our basic problem, that have many practical applications: finding {\topk} {\OQCs} (Section~\ref{subsec:densesttopk}) and  finding an {\OQC} that contains a given set of {\em query vertices} (Section~\ref{subsec:densestparty}).
\subsection{Top-{\large $k$} optimal quasi-cliques}\enlargethispage*{2\baselineskip}
\label{subsec:densesttopk}
The {\topk} version of our problem is as follows:  given a graph $G=(V,E)$ and a constant $k$, find {\topk} \emph{disjoint} {\OQCs}.
This variant is particularly useful in scenarios where finding a single dense subgraph is not sufficient, rather a set of $k >1$ dense components is required.

From a formal viewpoint, the problem would require to find $k$ subgraphs for which the sum of the various objective function values computed on each subgraph is maximized.
Due to its intrinsic hardness, however, here we heuristically tackle the problem in a greedy fashion:
we find one dense subgraph at a time, we remove all the vertices of the subgraph from the graph, and we continue until we find $k$ subgraphs or until we are left with an empty graph.
Note that this iterative approach allows us to automatically fulfil a very common requirement of finding {\topk} subgraphs that are pairwise disjoint.

%The above approach is general, it can be hence employed by using whatever method for finding dense subgraphs, including those for finding {\DS}.
%The above approach is general, and it applies to both the {\DS} and the {\OQC} formulation.
%For instance, Goldberg's exact algorithm~\cite{Goldberg84}, or Charikar's linear-time $\frac{1}{2}$-approximation algorithm~\cite{Char00}
%solve the case $k=1$ when the notion of a dense component is based on the average degree.
%Similarly, in Section~\ref{sec:methods} we propose algorithms that approximate the case $k=1$
%for notion of {\OQC}.
%Thus, all those algorithms can be used in the inner loop of the iterative procedure and return {\topk} dense components.

\subsection{Constrained optimal quasi-cliques}
\label{subsec:densestparty}

The constrained \OQCs\ variant consists in finding an  {\OQC} that contains a set of pre-specified {\em query vertices}.
This variant is inspired by the {\em com\-munity-search problem}~\cite{Sozio}, which has many applications, such as finding thematic groups, organizing social events, tag suggestion.
%% where it is shown that extracting dense subgraphs that contain some pre-specified nodes is tremendously common in a variety of real-world applications, such as finding thematic groups, organizing social events, tag suggestion.
Next, we formalize the problem, prove that it is {\NPhard}, and adapt our scalable algorithms (i.e., {\Galgo} and {\LSalgo}) for this  variant.

Let $G=(V,E)$ be a graph, and $Q \subseteq V$ be a set of query vertices.
We want to find a set of vertices $S\subseteq V$, so that the induced subgraph contains the query vertices~$Q$ and maximizes our objective function $f_\alpha$.
Formally, we define the following problem.
\begin{problem}[\COQCP]
\label{problem:party}
Given a graph $G=(V,E)$ and set $Q \subseteq V$,
find $S^* \subseteq V$ such that $f_\alpha(S^*)=\max_{Q \subseteq S \subseteq V} f_\alpha(S)$.
\end{problem}
It is easy to see that, when $Q=\emptyset$,  the \COQCP\ reduces to the \OQCP.
The following hardness result is immediate from Theorem 1 in \cite{uno}.  

\begin{theorem}
The \COQCP\ is {\NPhard}.
\end{theorem}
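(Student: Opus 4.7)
The plan is to prove NP-hardness of the \COQCP\ by a polynomial-time Turing reduction from the (unconstrained) \OQCP, whose NP-hardness was already established earlier in this chapter. A Turing reduction suffices here, since NP-hardness is preserved under polynomial-time reducibility in this sense, and it lets us avoid any intricate gadget construction.

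First, I would set up the reduction as follows. Given an instance $G=(V,E)$ of \OQCP\ with parameter $\alpha$, construct, for each vertex $v\in V$, an instance of \COQCP\ on the same graph $G$, with the same $\alpha$, and with query set $Q_v=\{v\}$. Let $S^*_v$ denote an optimal solution to the $v$-th such instance, and output $\bar S \in \arg\max_{v\in V} f_\alpha(S^*_v)$. This requires $|V|$ oracle calls to \COQCP, each on an input of the same size as the original, so the reduction is polynomial.

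Next, I would verify correctness. Every $S^*_v$ is a feasible candidate for \OQCP, so $f_\alpha(\bar S)\le f_\alpha(S^\dagger)$, where $S^\dagger$ denotes an optimal \OQCP\ solution. Conversely, unless the \OQCP\ optimum is attained by $\emptyset$ (a degenerate case with value $0$ which can be checked separately), $S^\dagger$ is non-empty and contains some vertex $v^\dagger$; then $S^\dagger$ is feasible for the \COQCP\ instance with $Q_{v^\dagger}=\{v^\dagger\}$, giving $f_\alpha(S^*_{v^\dagger})\ge f_\alpha(S^\dagger)$, and hence $f_\alpha(\bar S)\ge f_\alpha(S^\dagger)$. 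Combining the two inequalities, $\bar S$ is optimal for \OQCP.

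Therefore, a polynomial-time algorithm for \COQCP\ would yield a polynomial-time algorithm for \OQCP, contradicting the NP-hardness of the latter. The main (minor) obstacle is merely to make sure the reduction goes in the correct direction -- from a known hard problem to \COQCP\ and not the reverse -- and to handle the trivial corner case $S^\dagger=\emptyset$; the rest of the argument is an immediate consequence of the fact that \OQCP\ is a special case of \COQCP\ obtained by ranging over all possible singleton query sets.
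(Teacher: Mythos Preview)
Your reduction is correct and the argument is sound, but it is more elaborate than necessary and differs from the paper's approach.

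The paper does not give a self-contained reduction at all: it simply states that the result is ``immediate from Theorem~1 in~\cite{uno},'' so the proof there is a one-line citation to an external hardness result. More to the point, the paper observes, immediately before the theorem, that when $Q=\emptyset$ the \COQCP\ \emph{is} the \OQCP. That observation already gives a trivial many-one reduction: a single instance with $Q=\emptyset$ suffices, with no need to iterate over singleton query sets, take a maximum over $|V|$ oracle calls, or handle the $S^\dagger=\emptyset$ corner case separately. Your Turing reduction over all $Q_v=\{v\}$ is valid, and would be the natural move if the problem definition required $Q\neq\emptyset$; as stated, though, the $Q=\emptyset$ instance does the job in one shot, so you have taken the harder road to the same conclusion.
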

The {\Galgo} algorithm can be adapted to solve the \COQCP\ simply by ignoring the nodes $u \in Q$ during
the execution of the algorithm, so as to never remove vertices of $Q$.

Similarly, our {\LSalgo} algorithm can solve the {\COQCP} with a couple of simple modifications:
the set $S$ is initialized to the set of query vertices~$Q$,
while, during the iterative phase of the algorithm, we never allow a vertex $u \in Q$ to leave $S$.

\section{Experimental evaluation}
\label{sec:densestexperiments}

In this section we present our empirical evaluation, first on publicly available real-world graphs 
(Section~\ref{subsec:densestreal}), whose main characteristics are shown in Table~\ref{tab:densestdatasets}, and 
then on synthetic graphs where the ground truth is known (Section~\ref{subsec:densestsynthetic}).

We compare our {\OQCs}  with {\DSs}. The latter is chosen as our baseline 
given its popularity. 
For extracting {\OQCs} we use our scalable algorithms, {\Galgo} and {\LSalgo}.
As far as the  semi\-de\-fi\-nite-pro\-gram\-ming algorithm presented in Section~\ref{sec:densestoqc}, 
we recall that it has been introduced mainly to show theoretical properties of the problem tackled in this paper.
From a practical viewpoint, we have been able to run it only on the smallest datasets.
We do not present the results we obtained with the semi\-de\-fi\-nite-pro\-gram\-ming algorithm presented in Section~\ref{sec:densestoqc},
which we implemented using {\sc sdtp3} \cite{sdpt3} in {\sc Matlab} for the following reasons:
($i$) it does not scale to large networks; ($ii$) the results are inferior to the results of the scalable algorithms
for the small graphs we tested it on. For instance, for the {\it polbooks} network, the corresponding
edge density is 0.19 compared to 0.67 and 0.61 that we obtain using  {\Galgo} and {\LSalgo} respectively;
($iii$) it serves mainly as a theoretical contribution.

%We do not perform experiments with the semi\-de\-fi\-nite-pro\-gram\-ming algorithm presented in Section~\ref{sec:densestsemidefinite}, as this is mostly a theoretical algorithm.
Following our discussion in Section~\ref{sec:densestoqc}, we run our algorithms with $\alpha=\frac{1}{3}$.
For \LSalgo, we set $T_{\max}=50$.
For finding {\DSs}, we use the Goldberg's exact algorithm~\cite{Goldberg84} for small graphs, while for graphs whose size does not allow the Goldberg's algorithm to terminate in reasonable time we use Charikar's approximation algorithm~\cite{Char00}.

All algorithms are implemented in {\sc java}, and all experiments are performed on a single machine with Intel Xeon {\sc cpu}
at 2.83GHz and 50GB {\sc ram}.
%6144KB cache size  and 50GB of main memory.

\begin{table}[t]
\begin{center}
\small	
\hspace{-6mm}\begin{tabular}{r|rrl|}
\multicolumn{1}{c}{}  &
\multicolumn{1}{c}{\sf Vertices} &
\multicolumn{1}{c}{\sf Edges} &
\multicolumn{1}{c}{\sf Description}\\ \cline{2-4}
\textsf{Dolphins} & 62 & 159  & Biological Network\\
\textsf{Polbooks} & 105 & 441 & Books Network \\
\textsf{Adjnoun} & 112 & 425 & Adj.\ and Nouns in  \\
  &   &  & `David Copperfield' \\
\textsf{Football} & 115  & 613 & Games Network\\
\textsf{Jazz}     & 198   & 2\,742 & Musicians Network \\
\textsf{Celegans N.} & 297   & 2\,148 & Biological Network \\
\textsf{Celegans M.} & 453 & 2\,025 & Biological Network\\
\textsf{Email}    & 1\,133 & 5\,451 & Email Network \\
\textsf{AS-22july06} & 22\,963  & 48\,436 & Auton.\ Systems\\
\textsf{Web-Google} & 875\,713 & 3\,852\,985  & Web Graph \\
\textsf{Youtube} &   1\,157\,822  & 2\,990\,442&  Social Network \\
\textsf{AS-Skitter} & 1\,696\,415 & 11\,095\,298 &  Auton.\ Systems \\
\textsf{Wikipedia 2005}    & 1\,634\,989  & 18\,540\,589   &     Web Graph  \\
\textsf{Wikipedia 2006/9}  & 2\,983\,494  & 35\,048\,115    & Web Graph  \\
\textsf{Wikipedia 2006/11} & 3\,148\,440  & 37\,043\,456    & Web Graph \\
 \cline{2-4}
\end{tabular}
\end{center}
\caption{\label{tab:densestdatasets}Graphs used in our experiments.}	
\end{table}

\subsection{Real-world graphs}
\label{subsec:densestreal}

We experiment with the real graphs in Table~\ref{tab:densestdatasets}. The results are shown in Table~\ref{tab:densestrealresults}.
We compare \OQCs\ out\-put\-ted by the {\Galgo} and {\LSalgo} algorithms with \DSs\ extracted with Charikar's algorithm.
Particularly, we use Charikar's method to be able to handle the largest graphs.
For consistence, Table~\ref{tab:densestrealresults} reports on results achieved by Charikar's method also for the smallest graphs.
We recall that the results in Table \ref{tab:densestintro}  refer instead to the exact Goldberg's method.
However, a comparison of the two tables on their common rows shows that Charikar's algorithm, even though it is approximate, produces almost identical results with the results produced by Goldberg's algorithm.

Table~\ref{tab:densestrealresults} clearly confirms the preliminary results reported in the Introduction: \OQCs\ have larger edge and triangle densities, and smaller diameter than \DSs.
Particularly, the edge density of \OQCs\ is evidently larger on all graphs.
For instance, on \textsf{Football} and  \textsf{Youtube}, the edge density of \OQCs\ (for both the {\Galgo} and {\LSalgo} algorithms) is about 9 times larger than the edge density of \DSs, while on \textsf{Email} the difference increases up to 20 times ({\Galgo}) and 14 times ({\LSalgo}).
Still, the triangle density of the \OQCs\ outputted by both {\Galgo} and {\LSalgo} is one order of magnitude larger than the triangle density of {\DSs} on 11 out of 15 graphs.

Comparing our two algorithms, we can see that {\LSalgo} performs generally better than {\Galgo}.
Indeed, the edge density achieved by {\LSalgo} is higher than that of {\Galgo} on 10 out of 15 graphs, while the diameter of the {\LSalgo} \OQCs\ is never larger than the diameter of the {\Galgo} \OQCs.

Concerning efficiency, all algorithms are linear in the number of edges of the graph.
Charikar's and {\Galgo} algorithm are somewhat slower than {\LSalgo}, but mainly due to bookkeeping.
\LSalgo\ algorithm's running times vary from milliseconds for the
small graphs (e.g., 0.004s for \textsf{Dolphins}, 0.002s for \textsf{Celegans \ N.}), 
few seconds for the larger graphs (e.g., 7.94s for \textsf{Web-Google} and 3.52s 
for \textsf{Youtube}) and less than one minute for the largest graphs (e.g., 59.27s for \textsf{Wikipedia 2006/11}).

\begin{table*}[t]
\small
 \begin{adjustwidth}{-3.8cm}{}
\begin{tabular}{r|rrr|lll|rrr|lll|}
\multicolumn{1}{c}{}  &  \multicolumn{3}{c}{$|S|$} &  \multicolumn{3}{c}{$\delta$} &  \multicolumn{3}{c}{$D$} &  \multicolumn{3}{c}{$\tau$}\\
\cline{2-13}

& \multicolumn{1}{|c}{\textsf{densest}}  & \multicolumn{2}{c|}{\textsf{opt. quasi-clique}}  & \multicolumn{1}{|c}{\textsf{densest}}  & \multicolumn{2}{c|}{\textsf{opt. quasi-clique}} & \multicolumn{1}{|c}{\textsf{densest}}  & \multicolumn{2}{c|}{\textsf{opt. quasi-clique}} & \multicolumn{1}{|c}{\textsf{densest}}  & \multicolumn{2}{c|}{\textsf{opt. quasi-clique}}\\
\cline{3--3} \cline{4--4} \cline{6--6} \cline{7--7} \cline{9--9} \cline{10--10} \cline{12--12} \cline{13--13}
&
\multicolumn{1}{c}{\textsf{subgraph}} &
\multicolumn{1}{c}{\sc{greedy}} &
\multicolumn{1}{c|}{\sc{ls}} &
\multicolumn{1}{c}{\textsf{subgraph}} &
\multicolumn{1}{c}{\sc{greedy}} &
\multicolumn{1}{c|}{\sc{ls}} &
\multicolumn{1}{c}{\textsf{subgraph}} &
\multicolumn{1}{c}{\sc{greedy}} &
\multicolumn{1}{c|}{\sc{ls}} &
\multicolumn{1}{c}{\textsf{subgraph}} &
\multicolumn{1}{c}{\sc{greedy}} &
\multicolumn{1}{c|}{\sc{ls}} \\
\cline{2-13}

\textsf{Dolphins}     &  19  &   13  &   8  &  0.27  &  0.47  &  0.68  &  3  &  3  &  2  &  0.05  &  0.12  &  0.32\\
\textsf{Polbooks}     &  53  &   13  &  16  &  0.18  &  0.67  &  0.61  &  6  &  2  &  2  &  0.02  &  0.28  &  0.24\\
\textsf{Adjnoun}      &  45  &   16  &  15  &  0.20  &  0.48  &  0.60  &  3  &  3  &  2  &  0.01  &  0.10  &  0.12\\
\textsf{Football}     &  115 &   10  &  12  &  0.09  &  0.89  &  0.73  &  4  &  2  &  2  &  0.03  &  0.67  &  0.34\\
\textsf{Jazz}         &  99  &   59  &  30  &  0.35  &  0.54  &  1     &  3  &  2  &  1  &  0.08  &  0.23  &  1   \\
\textsf{Celeg.\ N.}   &  126 &   27  &  21  &  0.14  &  0.55  &  0.61  &  3  &  2  &  2  &  0.07  &  0.20  &  0.26\\
\textsf{Celeg.\ M.}   &  44  &   22  &  17  &  0.35  &  0.61  &  0.67  &  3  &  2  &  2  &  0.07  &  0.26  &  0.33\\
\textsf{Email}        &  289 &   12  &   8  &  0.05  &  1     &  0.71  &  4  &  1  &  2  &  0.01  &  1     &  0.30\\
\textsf{AS-22july06}  &  204 &   73  &  12  &  0.40  &  0.53  &  0.58  &  3  &  2  &  2  &  0.09  &  0.19  &  0.20\\
\textsf{Web-Google}   &  230 &   46  &  20  &  0.22  &  1     &  0.98  &  3  &  2  &  2  &  0.03  &  0.99  &  0.95\\
\textsf{Youtube}      &  1874&  124  & 119  &  0.05  &  0.46  &  0.49  &  4  &  2  &  2  &  0.02  &  0.12  &  0.14\\
\textsf{AS-Skitter}   &  433 &  319  &  96  &  0.41  &  0.53  &  0.49  &  2  &  2  &  2  &  0.10  &  0.19  &  0.13\\
\textsf{Wiki '05}     & 24555&  451  & 321  &  0.26  &  0.43  &  0.48  &  3  &  3  &  2  &  0.02  &  0.06  &  0.10\\
\textsf{Wiki '06/9}   &  1594&  526  & 376  &  0.17  &  0.43  &  0.49  &  3  &  3  &  2  &  0.10  &  0.06  &  0.11\\
\textsf{Wiki '06/11}  &  1638&  527  &  46  &  0.17  &  0.43  &  0.56  &  3  &  3  &  2  &  0.31  &  0.06  &  0.35\\
\cline{2-13}
\end{tabular}
\begin{center} 
\caption{\label{tab:densestrealresults}
\textsf{Densest subgraphs} extracted with Charikar's method vs.\ \OQCs\ extracted with the proposed \Galgo\ algorithm ({\sc greedy}) and \LSalgo\ algorithm ({\sc ls}).
$\delta = e[S] / {|S| \choose 2}$ is the edge density of the extracted subgraph $S$, $D$ is the diameter, and
$\tau = t[S] / {|S| \choose 3}$ is the triangle density.}
\end{center}
 \end{adjustwidth}
\end{table*}

\begin{figure}[t]
\centering
\begin{tabular}{@{}c@{}@{\ }c@{}}
\includegraphics[width=0.45\textwidth]{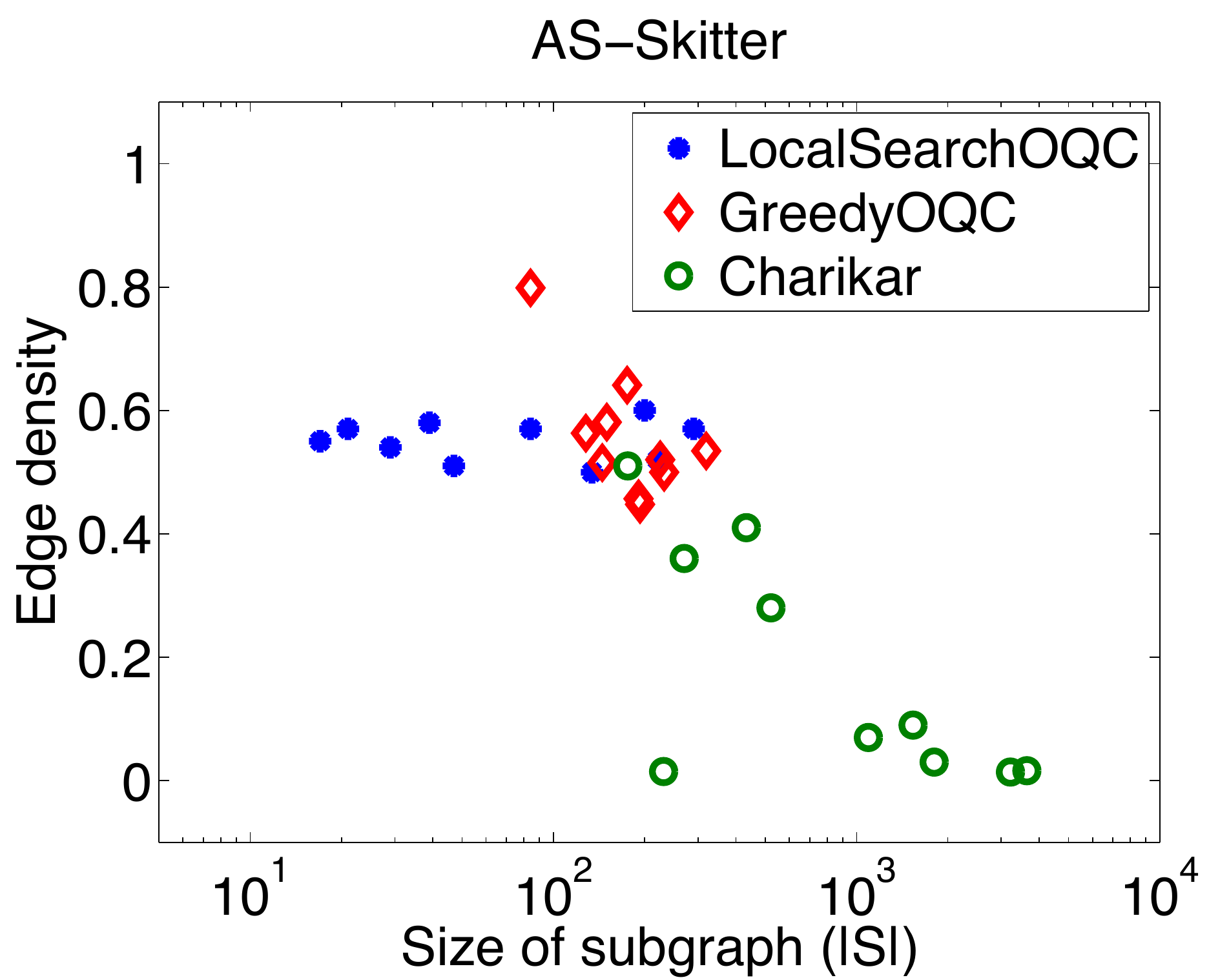} & \includegraphics[width=0.45\textwidth]{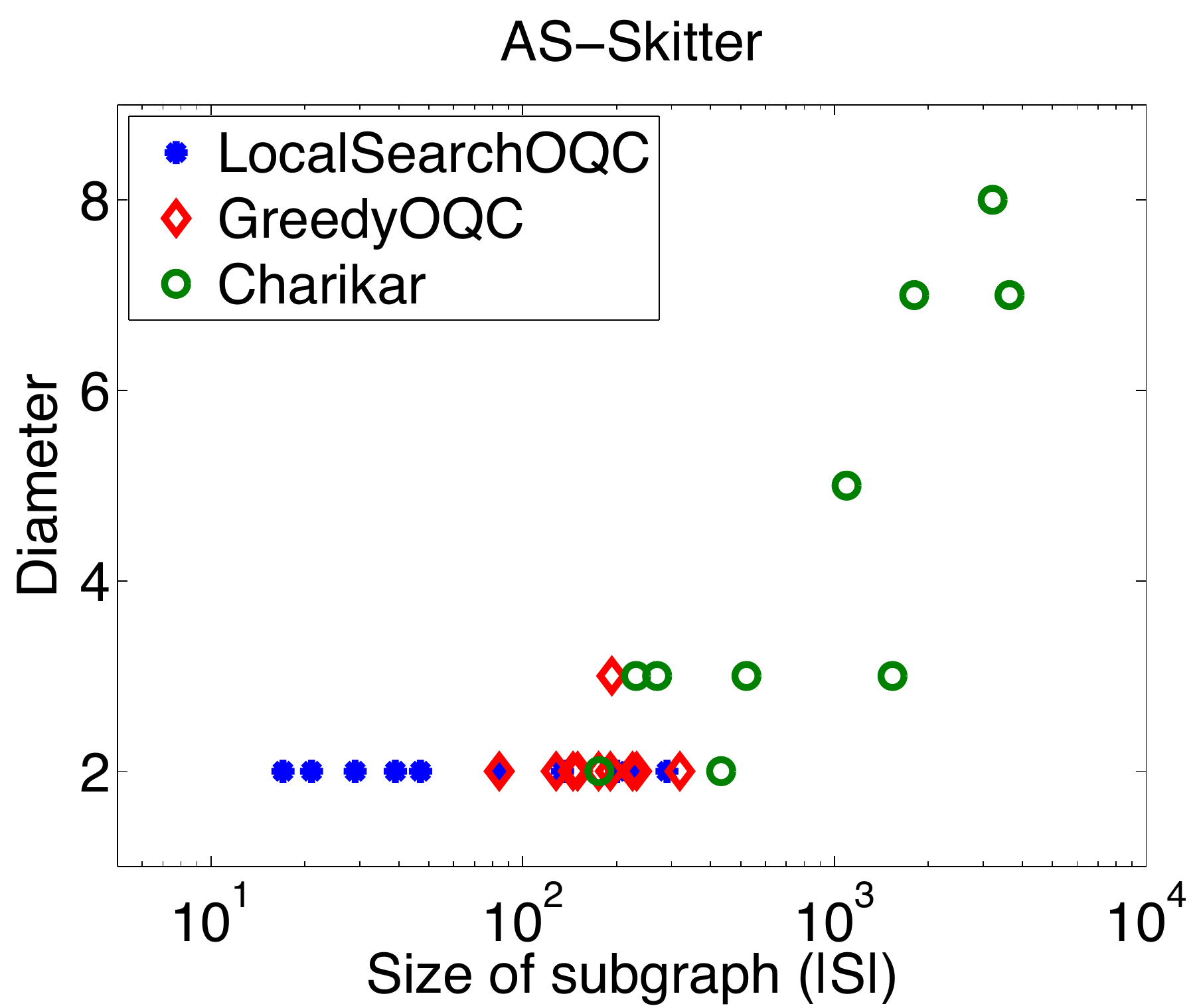}\\
\includegraphics[width=0.45\textwidth]{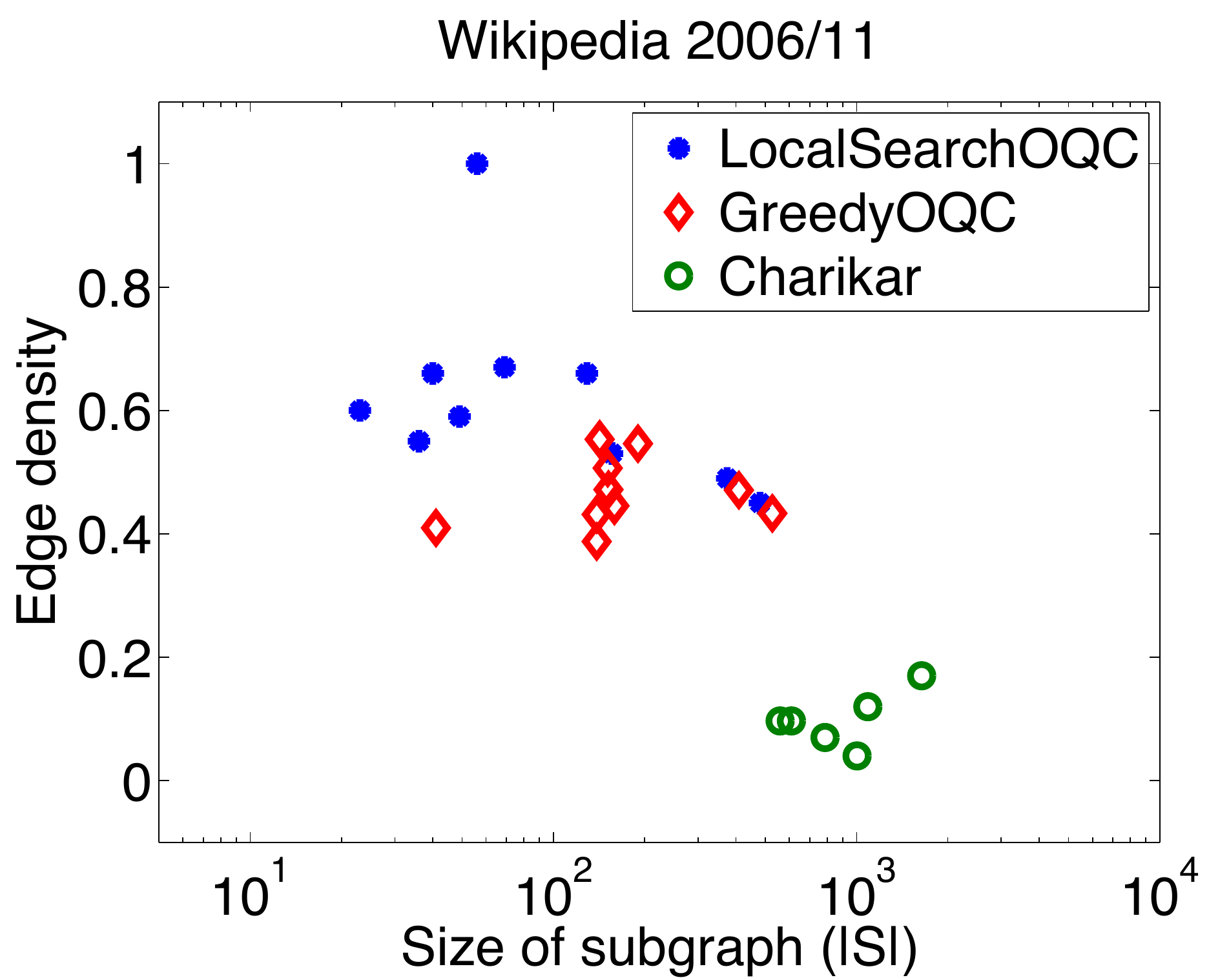} & \includegraphics[width=0.45\textwidth]{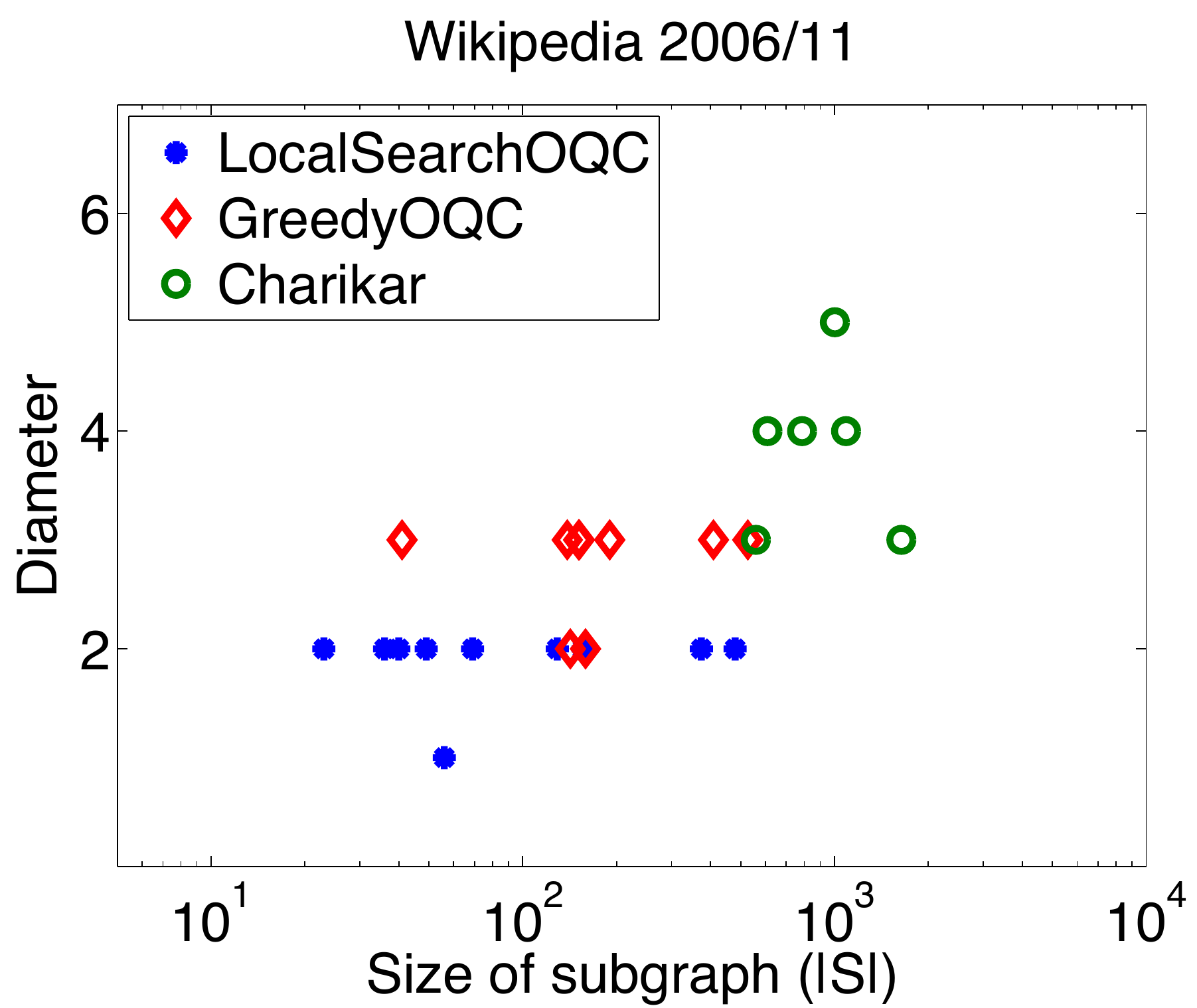}
\end{tabular}
\caption{\label{fig:topk1}
Edge density and diameter of the top-10 subgraphs found by our {\Galgo} and {\LSalgo} methods, and Charikar's algorithm, on the \textsf{AS-skitter} graph (top) and the \textsf{Wikipedia 2006/11} graph (bottom).}
\end{figure}

\spara{Top-$k$ \OQCs.}
Figure~\ref{fig:topk1} evaluates \topk\ \OQCs\ and \topk\ \DSs\ on the \textsf{AS-Skitter} and \textsf{Wikipedia 2006/11} graphs 
using the iterative method  described in Section~\ref{subsec:densesttopk}.
Similar results hold for the other graphs but are omitted due to space constraints.

For each graph we show two scatterplots.
The $x$ axis in logarithmic scale reports the size of each of the top-$k$ dense components, while the $y$ axes show the edge density and the diameter, respectively.
In all figures, \OQCs\ correspond to  blue filled circles ({\LSalgo}) or red diamonds ({\Galgo}), while \DSs\ correspond to green circles.
It is evident that \OQCs\ are significantly better in terms of both edge density and diameter also in this top-$k$ variant.
The edge density is in the range $0.4-0.7$  and the diameter is always 2 or 3, except for a 56-vertex clique in \textsf{Wikipedia 2006/11} with diameter 1.
On the contrary, the \DSs\ are large graphs, with diameter ranging typically from 3 to 5, with significantly smaller edge densities: besides few exceptions, the edge density of \DSs\ is always around $0.1$ or even less.

\subsection{Synthetic graphs}
\label{subsec:densestsynthetic}

Experiments on synthetic graphs deal with the following task:
a (small) clique is planted in two different types of random graphs, and the goal is to check if the dense subgraph algorithms are able to recover those cliques.
Two different random-graph models are used as host graphs for the cliques:
($i$) {\ER} and ($ii$) random power-law graphs.
In the former model, each edge exists with probability $p$ independently of the other edges.
To generate a random power-law graph, we follow the Chung-Lu model~\cite{chunglu}:
we first generate a degree sequence $(d_1,\ldots,d_n)$ that follows a power law with a pre-specified
slope and we connect each pair of vertices $i,j$ with probability proportional to $d_id_j$.

We evaluate our algorithms by measuring how ``close'' are the returned subgraphs to the planted clique.
In particular, we use the measures of {\em precision} $P$ and {\em recall} $R$, defined as
\begin{eqnarray*}
P & = &\frac{\#\{\text{returned vertices from hidden clique}\}}{size\{\text{subgraph returned}\}}, \mbox{ and} \\
R & = & \frac{\#\{\text{returned vertices from hidden clique}\}}{size\{\text{hidden clique}\}}.
\end{eqnarray*}
Next we discuss the results obtained.
For the {\ER} model we also provide a theoretical justification of the outcome of the two tested algorithms.
%We also use our synthetic datasets to evaluate the triangle-based heuristic, as well as the stability of our measure with respect to the parameter~$\alpha$.

\spara{{\ER} graphs.}
We plant a clique of 30 vertices on {\ER} graphs with $n=3\,000$ and edge probabilities $p\in\{0.5,0.1,0.008\}$.
Those values of $p$ are selected to represent very dense, medium-dense, and sparse graphs.

We report in Table~\ref{tab:densesterdos} the results of running our \LSalgo\ and \Galgo\ algorithms for extracting \OQCs, as well as the Goldberg's algorithm for extracting \DSs.
We observe that our two algorithms, {\LSalgo} and {\Galgo}, produce \emph{identical} results, thus we refer to both of them as {\OQCs} algorithms.
We see that the algorithms produce two kinds of results:
they either find the hidden clique, or they miss it and return the whole graph.
In the very dense setting $(p=0.5)$ all algorithms miss the clique,
while in the sparse setting $(p=0.008)$ all algorithms recover it.
However, at the middle-density setting $(p=0.1)$ only the \OQCs\ algorithms find the clique,
while the Goldberg's algorithm misses it.

To better understand the results shown on Table~\ref{tab:densesterdos}, we provide a theoretical explanation of the behavior of the algorithms depending on their objective.
%% Assume that $p \gg \frac{\log{n}}{n}$ and hence the average degree is concentrated around $np$,
Assume that $h$ is the size of the hidden clique.
If $np \geq h-1$ the \DS\ criterion always returns the whole graph.
In our experiments, this happens with $p=0.5$ and $p=0.1$.
On the other hand, if $np < h-1$ the \DS\ corresponds to the hidden clique, and therefore the Goldberg's algorithm cannot miss it.

Now consider our objective function , i.e., the edge-surplus function $f_\alpha$, and let us discard for simplicity the constant ${n \choose 2}$, because this does not affect the validity of the following reasoning.
The expected score for the hidden clique is $\Mean{f_\alpha(H)}= f_\alpha(H)=(1-\alpha){h \choose 2}$.
The expected score for the whole network is $\Mean{f_\alpha(V)} = \big( p{n \choose 2} + (1-p) {h \choose 2}\big) - \alpha {n \choose 2}$.

We obtain the following two cases:
(A) when $p> \alpha$, we have $\Mean{f_\alpha(V)} \geq f_\alpha(H)$.
(B) when $p < \alpha$, we have $f_\alpha(H) \geq \Mean{f_\alpha(V)}$.
This rough analysis explains our findings.\footnote{\scriptsize The analysis can be tightened via Chernoff bounds, but we avoid this here due to space constraints.}

\spara{Power-law graphs.}
We plant a clique of 15 vertices in random power-law graphs of again 3\,000 vertices,
with power-law exponent varying from 2.2 to 3.1.
We select these values since most real-world networks have power-law exponent in this range \cite{newman2003structure}.
For each exponent tested, we generate five random graphs, and all the figures we report are averages over these five trials.

Again, we compare our {\Galgo} and {\LSalgo} algorithms with the Goldberg's algorithm.
The {\LSalgo} algorithm is run seeded with one of the vertices of the clique.
The justification of this choice is that we can always re-run the algorithm until it finds such a vertex with high probability.\footnote{\scriptsize
If the hidden clique is of size $\bigO(n^{\epsilon})$, for some $0 \leq \epsilon < 1$,
a rough calculation
%% (omitting second order accuracy terms from the Taylor series)
shows that it suffices to run the algorithm a sub-linear number of times
(i.e., $\bigO((1-\gamma)n^{1-\epsilon})$ times) in order to obtain one of the vertices
of the clique as a seed with probability at least $1-\gamma$.
}

\begin{table}[t]
\begin{center}
\small
\begin{tabular}{rl|rrr|rrr}

\multicolumn{2}{c}{{\ER}} & \multicolumn{3}{c}{}  & \multicolumn{3}{c}{} \\
%\multicolumn{2}{c|}{graph} & \multicolumn{3}{c|}{average degree}  & \multicolumn{3}{c}{} \\
\multicolumn{2}{c}{parameters} & \multicolumn{3}{c}{\textsf{densest subgraph}}  & \multicolumn{3}{c}{\textsf{optimal \QC}} \\
\hline
\multicolumn{1}{c}{$n$} &
\multicolumn{1}{c|}{$p$} &
\multicolumn{1}{c}{$|S|$} &
\multicolumn{1}{c}{$P$} &
\multicolumn{1}{c|}{$R$} &
\multicolumn{1}{c}{$|S|$} &
\multicolumn{1}{c}{$P$} &
\multicolumn{1}{c}{$R$} \\
\hline
3\,000 & 0.5   & 3\,000 & 0.01 & 1.00 & 3\,000 & 0.01 & 1.00 \\
3\,000 & 0.1   & 3\,000 & 0.01 & 1.00 &     30 & 1.00 & 1.00 \\
3\,000 & 0.008 &     30 & 1.00 & 1.00 &     30 & 1.00 & 1.00 \\
\hline
\end{tabular}
\end{center}
\caption{\label{tab:densesterdos}
Subgraphs returned by Goldberg's max-flow algorithm and by our
two algorithms (\Galgo, \LSalgo) on {\ER} graphs with 3\,000 vertices and three values of $p$,
and with a planted clique of 30 vertices.}	
\end{table}

\begin{figure}[t]
\centering
\includegraphics[width=0.49\columnwidth]{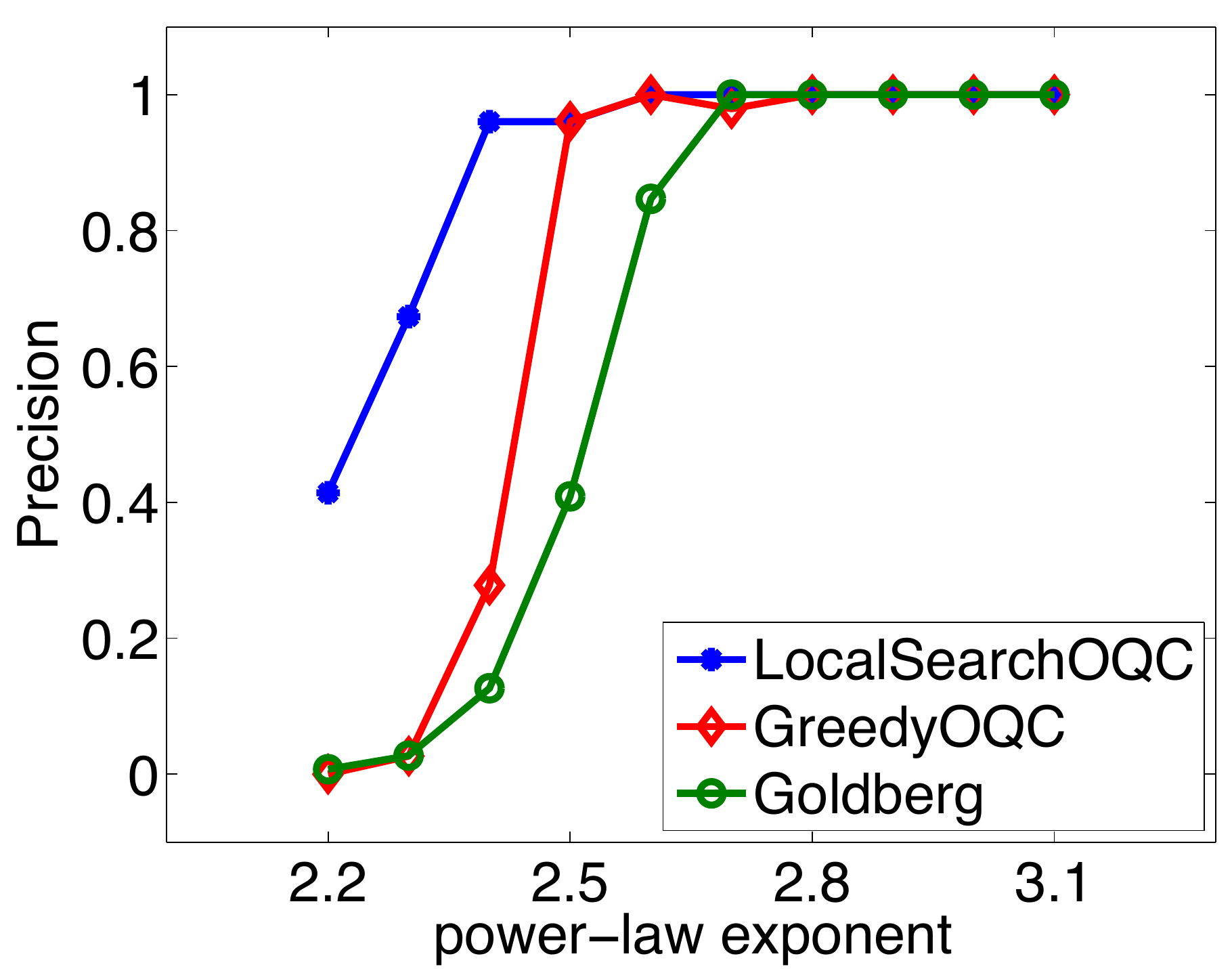}
\includegraphics[width=0.49\columnwidth]{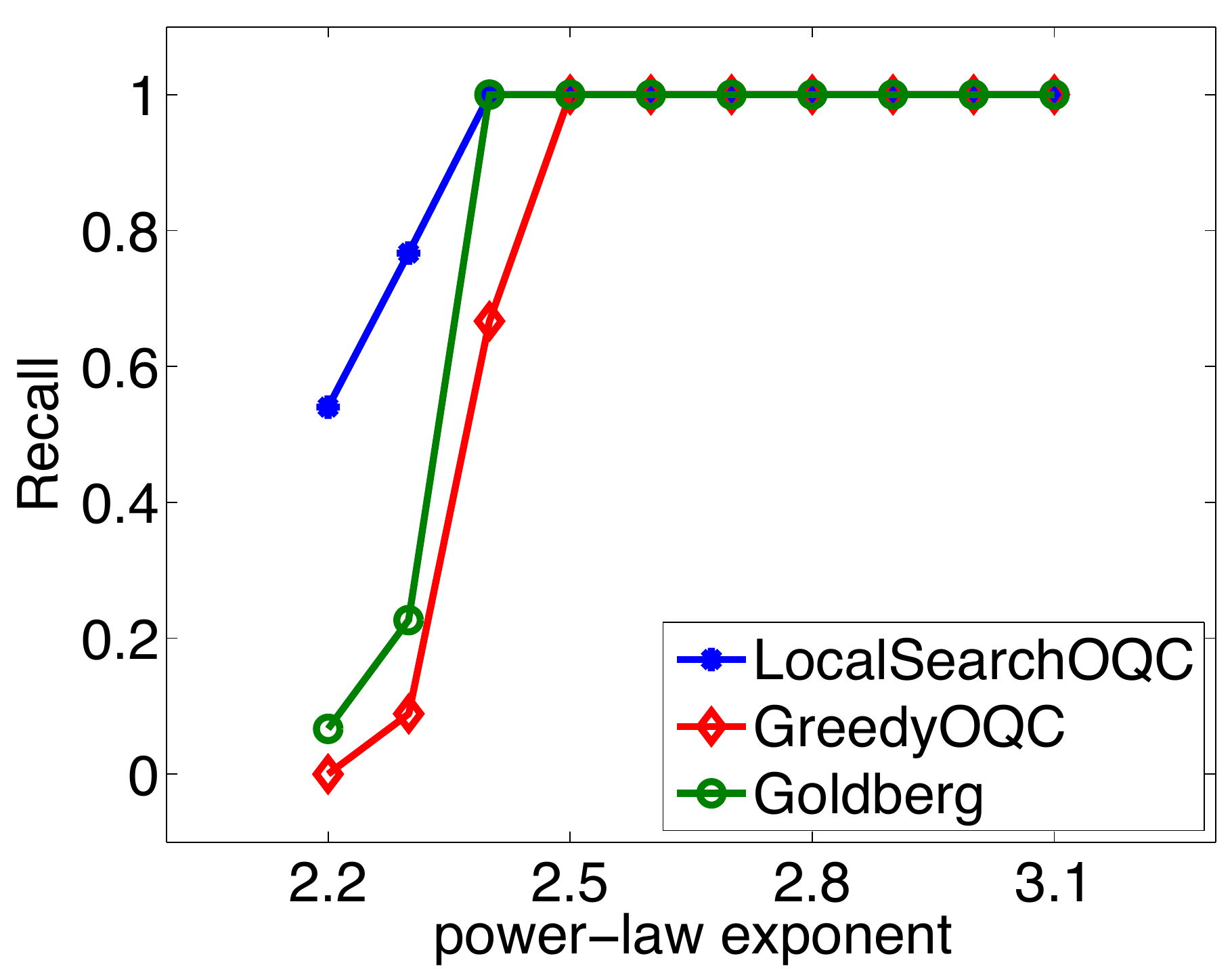}
\caption{\label{fig:PR}
Precision and recall for our method and Goldberg's algorithm vs.\ the power-law exponent of the host graph.}
\end{figure}

The precision and recall scores of the three competing algorithms as a function of the power-law exponent are shown in Figure~\ref{fig:PR}.
As the exponent increases the host graph becomes sparser and both algorithms
have no difficulty in finding the hidden clique.
However, for exponent values ranging between 2.2 and 2.6 the \OQCs\ are significantly better than
the \DSs.
Indeed, in terms of precision, the Goldberg's algorithm is outperformed by both our algorithms.
In terms of recall, our {\LSalgo} is better than Goldberg's, while our {\Galgo} performs slightly worse.
An explanation for this is that the  {\Galgo} algorithm detects other high-density subgraphs, but not 
exactly the planted clique. As an example, with power-law exponent 2.3, 
{\Galgo} finds a subgraph with  23 vertices and edge density 0.87.

\spara{Stability with respect to $\mathbf{\alpha}$.}
We also test the sensitivity of our density measure with respect to the parameter $\alpha$.
We use again the planted-clique setting, and we test the ability of our algorithms to recover the clique as we vary the parameter $\alpha$.
We omit detailed plots, due to space constraints, but we report that the behavior of both algorithms is extremely stable with respect to $\alpha$.
Essentially, the algorithms again either find the clique or miss it, depending on the graph-generation parameters, as we saw in the previous section, namely, the probability $p$ of the {\ER} graphs, or the exponent of the power-law graphs.
Moreover, in all cases, the performance of our algorithms, measured by precision and recall as in the last experiment, does not depend on $\alpha$.

\section{Applications}
\label{sec:densestapps}

In this section we show experiments concerning our constrained \OQCs\ variant introduced in Section \ref{subsec:densestparty}.
To this end, we focus on two applications that can be commonly encountered in real-world scenarios: finding thematic groups and finding highly correlated genes from a microarray dataset.
For the sake of brevity of presentation, we show next results for only one of our scalable algorithms, particularly the {\LSalgo} algorithm.

\subsection{Thematic groups}
\label{subsec:densestparties}

\spara{Motivation.}
Suppose that a set of scientists $Q$ wants to organize a workshop.
How do they invite other scientists to participate in the workshop so that the set of all the participants, including $Q$, have similar interests?

\spara{Setup.}
We use a co-authorship graph extracted from the {\sc dblp} dataset.
The dataset contains publications in all major computer-science journals.
There is an undirected edge between two authors if they have coauthored a journal article.
Taking the largest connected component gives a graph of 226K vertices and 1.4M edges.

We evaluate the results of our algorithm  qualitatively,
in a sanity check form rather than a  strict and quantitative way,
which is not even well-defined.
We perform the following two queries:
$ Q_1 = \{ \text{Papadimitriou}, \text{Abiteboul} \}$
and $ Q_2 = \{ \text{Papadimitriou},\text{Blum} \}$.

\spara{Results.}
Papadimitriou is one of the most prolific computer scientists and has worked on a wide range of areas.
With query $Q_1$ we invoke his interests in database theory given that Abiteboul is an expert in this field.
As we can observe from Figure~\ref{fig:dblpDB}, the \OQC\ outputted contains database scientists.

On the other hand, with query $Q_2$ we invoke Papadimitriou's interests in theory,
given that Blum is a Turing-award theoretical computer scientist.
As we can see in Figure~\ref{fig:dblpTCS}, the returned \OQC\ contains well-known theoretical computer scientists.

\begin{figure}[t]
\begin{minipage}[b]{\linewidth}
\centering
\begin{tabular}{c}
\hline
Abiteboul,
Bernstein,
Brodie,
Carey,
Ceri,
Crof, \\
DeWitt,
Ehrenfeucht,
Franklin,
Gawlick,
Gray,
Haas, \\
Halevy,
Hellerstein,
Ioannidis,
Jagadish,
Kanellakis, \\
Kersten,
Lesk,
Maier,
Molina,
Naughton,
Papadimitriou, \\
Pazzani,
Pirahesh,
Schek,
Sellis,
Silberschatz,
Snodgrass, \\
Stonebraker,
Ullman,
Weikum,
Widom,
Zdonik \\
\hline
\end{tabular}
\caption{\label{fig:dblpDB}
Authors returned by our \LSalgo\ algorithm when queried with \textsf{Papadimitriou} and \textsf{Abiteboul}.
The set includes well-known database scientists.
The induced subgraph has 34 vertices and 457 edges.
The edge density is 0.81, the diameter is 3, the triangle density is 0.66.}
\end{minipage}
\begin{minipage}[b]{\linewidth}
\centering

\bigskip

\begin{tabular}{c}
\hline
Alt,
Blum,
Garey,
Guibas,
Johnson, \\
Karp,
Mehlhorn,
Papadimitriou,
Preparata, \\
Tarjan,
Welzl,
Widgerson,
Yannakakis,
\\
\hline
\end{tabular}
\caption{\label{fig:dblpTCS}
Authors returned by our \LSalgo\ algorithm when queried with \textsf{Papadimitriou} and \textsf{Blum}.
%The scientists included in the set are well-known theoretical computer scientists.
The set includes well-known theoretical computer scientists.
The induced subgraph has 13 vertices and 38 edges.
The edge density is 0.49, the diameter is 3, the triangle density is 0.14.}
\end{minipage}
\end{figure}

\subsection{Correlated genes}
\label{subsec:densestcorrgenes}

\spara{Motivation.}
Detecting correlated genes has several applications.
For instance, clusters of genes with similar expression levels are typically under similar transcriptional control.
Furthermore, genes  with similar expression patterns may imply co-regulation or relationship in functional pathways.
Detecting gene correlations has played a key role in discovering unknown types of breast cancer~\cite{tib}.
Here, we wish to illustrate that {\OQCs} provide a useful graph theoretic framework for
gene co-expression network analysis~\cite{langston}, without delving deeply into biological
aspects of the results.
%Additionally, co-expressed gene patterns are critical for our understanding of neuro-developmental disorders~\cite{Pramparo11}.

\spara{Setup.}
We use the publicly available breast-cancer dataset of van de Vijner et al.~\cite{veer},
which consists of measurements across 295 patients of 24\,479 probes.
Upon running a standard probe selection algorithm based on Singular Value Decomposition (SVD),
we obtain a 295$\times$1000 matrix.
The graph $G$ in input to our \LSalgo\ algorithm is derived using the well-established approach defined in~\cite{langston}:
each gene  corresponds to a vertex in $G$, while an edge between any pair of genes
$i,j$ is drawn if and only if the the modulus of the Pearson's correlation coefficient $|\rho(i,j)|$ exceeds a given threshold
$\theta$ ($\theta = 0.99$ in our setting).
We perform the following query, along the lines of the previous section: ``find highly
correlated genes with the tumor protein 53 (p53)''.
We select p53 since it is known to play a key role in  tumorigenesis.%~\cite{weinberg}.

\begin{figure}[t]
\begin{minipage}[b]{\linewidth}
\centering
\begin{tabular}{c}
\hline
p53,
BRCA1,
ARID1A,
ARID1B,
ZNF217,
FGFR1,
KRAS, \\
NCOR1,
PIK3CA,
APC,
MAP3K13,
STK11,
AKT1,
RB1  \\
\hline
\end{tabular}
\caption{\label{fig:p53sub}
Genes returned by our \LSalgo\ algorithm when queried with \textsf{p53}.
The induced subgraph is a clique with 14 vertices.}
\end{minipage}
\begin{minipage}[b]{\linewidth}
\bigskip	
\centering
\includegraphics[width=0.72\textwidth]{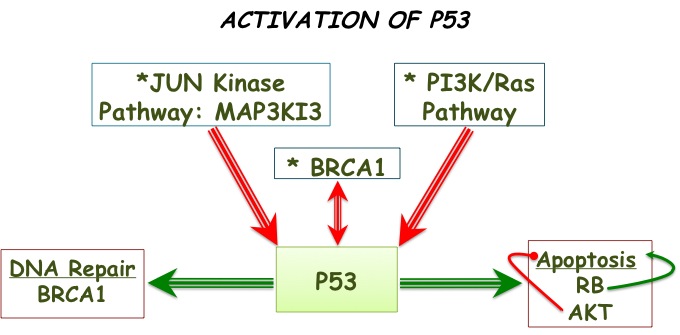}
\caption{A tumorigenesis pathway consistent with our findings: 
p53 lies in the nexus of many signaling pathways. For instance, signals from the 
JNK or PIK3/AKT pathways can activate p53.
In turn, p53 can promote apoptosis or DNA repair. RB and AKT are important for apoptosis,
 whereas BRCA1 is critical for DNA repair. Mutations in these pathways can lead to cancer.}
\label{fig:p53}
\end{minipage}
\vspace{-4mm}
\end{figure}

\spara{Results.}
The output of our algorithm is a clique consisting of 14 genes shown in Figure~\ref{fig:p53sub}.
A potential explanation of our finding is the pathway depicted in Figure~\ref{fig:p53}, which shows
that the activation of the p53 signaling can be initiated by signals coming from the PI3K/AKT pathway.
Both PI3KCA and AKT1 that are detected by our method are key players of this pathway.%~\cite{HarrisLevine}.
Furthermore, signals from the JUN kinase pathway can also trigger the p53-cascade;
MAP3K13 is a member of this pathway.%~\cite{Stephens12}.

One of the results of p53 signaling is apoptosis, a process promoted by RB.
The latter can also regulate the stability and the apoptotic function of p53.%~\cite{Beckerman,Hsieh}.
Finally, our output includes BRCA1, which is known to physically associate with
p53 and affect its actions \cite{weinberg}.%~\cite{Beckerman,zhang}.

%% Chapter 7
\clearpage
\chapter{ Structure of the Web Graph  }
\label{hadichapter}
\lhead{\emph{Stucture of the Web Graph}}
\section{Introduction} 
\label{sec:hadiintro}

In this Chapter we design scalable algorithms which allow us to 
understand better the structure of information, social and economic networks,
including the Web graph, a prominent information network. 
Our goal is to shed light on important properties of the Web graph 
and other large real-world networks such as the effective diameter, number and 
sizes of connected components and temporal patterns of evolution. 
Specifically, we perform experiments on the {\it YahooWeb} graph with 60 billion edges.
We reveal facts about the structure of the Web, 
including the well known small-world phenomenon, the multimodal shape of the radius
distribution and time-varying patters. 

Analyzing networks of this scale is a challenge. For this purpose
we design \hadi, a \mapreduce algorithm which scales to large-scale networks.
Our algorithm relies on approximate counting of distinct elements of a multiset. 
We use the algorithm of Flajolet and Martin \cite{flajolet85probabilistic} as our ``black-box'',
which is historically the first algorithm proposed for this problem. Since then, many other algorithms have been proposed,
including the Hyperloglog counters \cite{Flajolet07hyperloglog}.
It is worth mentioning that the current state-of-the-art method \cite{hyperanf} which 
outperforms \hadi is based on the latter counters \cite{Flajolet07hyperloglog},
which provide an exponential space improvement over the original 
Flajolet-Martin counters \cite{flajolet85probabilistic}.

The rest of the Chapter is organized as follows: 
Section~\ref{sec:hadirelated}  defines the notions of effective radius and diameter
and reviews briefly related work.  
Section~\ref{sec:hadimoments} provides an explanation of how 
the number of distinct elements of multisets is related to finding the diameter
of a graph. 
Section~\ref{sec:hadimeth} presents algorithmic tools 
for finding the distribution of radii and the diameter of large-scale networks. 
Section~\ref{sec:haditimings} provides wall-clock times.
Section~\ref{sec:hadifindings} presents findings on the structure of real-world 
networks. 

\section{Related Work}
\label{sec:hadirelated}

\spara{Definitions:} First, we review basic graph theoretic definitions \cite{bondy1976graph} and then introduce the 
notions of effective radius and diameter. Let $G(V,E)$ be a directed graph.
The radius/eccentricity of a vertex $v$ is the greatest shortest-path distance between $v$ and any other vertex. 
The radius $r(G)$ is the minimum radius of any vertex.
The diameter $d(G)$ is the maximum radius of any vertex.
Since the radius and the diameter are susceptible to outliers
(e.g., long chains), we follow the literature~\cite{Leskovec05Realistic}
and define the {\em effective} radius and diameter as follows.

\begin{definition} [Effective Radius] 
For a node $v$ in a graph $G$, the effective radius $r_{eff}(v)$ of $v$ 
is the 90th-percentile of all the shortest distances from $v$.
\end{definition}

\begin{definition} [Effective Diameter] 
The effective diameter $d_{eff}(G)$ of a graph $G$ is the minimum number of hops in 
which 90\% of all connected pairs of nodes can reach each other.
\end{definition}

In Section~\ref{sec:hadifindings} we use the following three radius-based Plots:

\squishlist
    \item \textbf{Static Radius Plot} (or just ``Radius plot'')
    of graph $G$ shows the distribution (count)
    of the effective radius of nodes at a specific time.
    %See Figure~\ref{fig:main_conclusion} and \ref{fig:yahoo_dc} for the example of real-world and synthetic graphs.

    \item \textbf{Temporal Radius Plot}
    shows the distributions of effective radius of nodes at several times.

    \item \textbf{Radius-Degree Plot} shows the scatter-plot
    of the effective radius $r_{eff}(v)$ versus the  degree $d_v$ for each node $v$.
\squishend

\spara{Computing Radius and Diameter:}
Typical algorithms to compute the radius and the diameter of a graph
include Breadth First Search (BFS) and Floyd's algorithm~\cite{cormen:algorithms}
when no negative cycles are present. 
Both approaches are prohibitively slow for large-scale graphs, requiring $O(n^2 + nm)$
and $O(n^3)$ time respectively. For the same reason, related BFS or all-pair shortest-path based algorithms 
like ~\cite{Ferrez1998HPCSA,Bader2008,MaJCST1993,Sinha1986} can not handle large-scale graphs.

A sampling approach starts BFS from a subset of nodes, typically chosen at random as in \cite{broder2000graph}.
Despite its practicality, this approach has no obvious solution for choosing the representative sample for BFS.
An interesting approach has been proposed by Cohen \cite{Cohen:1997:SFA:269994.270010}, 
but according to practitioner's experience \cite{Boldi:2011:HAN:1963405.1963493} it appears not to 
be as scalable as the ANF algorithm  \cite{DBLP:conf/kdd/PalmerGF02}.
The latter is closely related to our work since it is a sequential algorithm 
based on Flajolet-Martin sketches \cite{flajolet85probabilistic}.
We review its key idea in the next Section. 

\spara{Distinct Elements in Multisets:} Let $A = \{ a_1, \ldots, a_m \}$ 
be a multiset where $a_i \in [n]$ for all $i = 1,\ldots,m$. 
Let $m_i = | \{ j : a_j = i \} |$. For each $k \geq 0$ define 
$F_k = \sum_{i=1}^n m_i^k$. The numbers $F_k$ are called 
frequency moments of the multiset and provide useful statistics. 
We notice that $F_0$ is the number of distinct elements in $A$, 
$F_1=m$. 
Historically, Morris was the first to show that $F_1$ can be approximated 
with $O(\log{\log{m}} =O(\log{\log{n}}$ \cite{morris1978counting}. 
Flajolet and Martin designed an algorithm that needs $O(\log{n})$ bits of memory
to approximate $F_0$ \cite{flajolet85probabilistic}. 
Since then, many excellent researches have looked into this problem, 
see \cite{Alon:1996:SCA:237814.237823, Bar-Yossef:2002:CDE:646978.711822, Bar-Yossef:2002:RSA:545381.545464,
Beyer:2007:SDE:1247480.1247504,Cohen:1997:SFA:269994.270010,DBLP:conf/esa/DurandF03,Flajolet07hyperloglog,
DBLP:conf/focs/IndykW03,Woodruff:2004:OSL:982792.982817}.
Recently, Kane, Nelson and Woodruff provided an optimal algorithm for estimating $F_0$ \cite{Kane:2010:OAD:1807085.1807094}.

\section{Distinct Elements and the Diameter}
\label{sec:hadimoments}

In this Section, we sketch the key idea of \cite{DBLP:conf/kdd/PalmerGF02}, 
which shows how one can use a space efficient algorithm for estimating distinct elements
in a multiset to estimate the diameter and radius distribution of a graph. 
Assume that for each vertex $v$ in the graph, we maintain the number of neighbors reachable from $v$ within $h$ hops.
As $h$ increases, the number of neighbors increases until it stabilzes.
The diameter is $h$ where the number of neighbors within $h+1$ does not increase for every node. 

To generate the Radius plot, we need to calculate the effective radius of every node.
In addition, the effective diameter is useful for tracking the evolution of networks.
Assume we have a `set' data structure that supports two functions: \emph{add()} for adding an 
item, and \emph{size()} for returning the count of distinct items. With the set, radii of vertices can be computed as follows:

\begin{enumerate}
  \item For each vertex $i$, create a set $S_i$ and initialize it by adding $i$ to it.
  \item For each vertex $i$, continue updating $S_i$ by adding 1,2,3,...-step 
neighbors of $i$ to $S_i$. When the size of $S_i$ stabilizes for first time, 
then the vertex $i$ reached its radius. Iterate until all vertices reach their radii.
\end{enumerate}

Although simple and clear, the above algorithm requires  $\Omega(n^2)$ space, 
since there are $n$ vertices and each vertex requires $\Omega(n)$ space. 
This is prohibitive in practice. We describe the ANF algorithm
which serves as the basis of our investigation into the structure of real-world networks
\cite{DBLP:conf/kdd/PalmerGF02}. 
We use the  Flajolet-Martin algorithm \cite{flajolet85probabilistic,DBLP:conf/kdd/PalmerGF02}
for counting the number of distinct elements in a multiset.
The main idea of the Flajolet-Martin algorithm is as follows.
We maintain a bitstring $\textit{BITMAP}[0 \dots L-1]$ of length $L$ which encodes the set.
For each item we add, we do the following:

\begin{enumerate}
  \item Pick an $index \in [0 \dots L-1]$ with probability $1/2^{index+1}$.
  \item Set $\textit{BITMAP}[index]$ to 1.
\end{enumerate}

Let $R$ denote the index of the leftmost `0' bit in $\textit{BITMAP}$.
It is clear that $2^R$ should be a good estimate of the number of distinct elements. 
The main result of Flajolet-Martin is that the unbiased estimate of the size of the set is given by

\begin{equation}
\frac{1}{\varphi} 2^R
\end{equation}

\noindent where $\varphi=0.77351\cdots$. 
A more concentrated estimate can be obtained by using multiple bitstrings and averaging the $R$. 
If we use $K$ bitstrings $R_1$ to $R_K$, the size of the set can be estimated by

\begin{equation}
\frac{1}{\varphi} 2^{
\frac{1}{K}\sum_{l=1}^{K}
R_l
%\frac{R_1 + \dots + R_K}{K}
}
\end{equation}

\noindent 
The application of the Flajolet-Martin algorithm to radius and diameter estimation is straight-forward.
We maintain $K$ Flajolet-Martin (FM) bitstrings
$b(h,i)$ for each vertex $i$ and the current hop number $h$.
$b(h,i)$ encodes the number of vertices reachable from vertex $i$ within $h$ hops, and
can be used to estimate radii and diameter as shown below.
The bitstrings $b(h,i)$ are iteratively updated
until the bitstrings of all vertices stabilize.
At the $h$-th iteration,
each vertex receives the bitstrings of its neighboring vertices,
and updates its own bitstrings $b(h-1,i)$
handed over from the previous iteration:

\begin{equation}
b(h,i) = b(h-1,i) \text{ BIT-OR } \{b(h-1,j)|(i,j) \in E\}
\label{Nhupdate}
\end{equation}

\noindent where ``BIT-OR'' denotes bitwise-OR function.
After $h$ iterations, a vertex $i$ has $K$ bitstrings that encode
the {\em neighborhood function} $N(h, i)$, that is,
the number of vertices within $h$ hops from the vertex $i$.
$N(h,i)$ is estimated from the $K$ bitstrings by

\begin{equation}
N(h,i) = \frac{1}{0.77351} 2^{\frac{1}{K}\sum_{l=1}^{K}
b_l(i)} %\tag{7.1}
\label{Nhi}
\end{equation}

where $b_l(i)$ is the position of leftmost `0' bit
of the $l^{th}$ bitstring of vertex $i$.
The iterations continue until the bitstrings of all vertices stabilize,
which is a necessary condition that the current iteration number $h$ exceeds
the diameter $d(G)$.
After the iterations finish at $h_{max}$,
we calculate the effective radius for every node
and the diameter of the graph, as follows:

\squishlist
  \item $r_{eff}(i)$ is the smallest $h$ such that $N(h,i) \geq 0.9 \cdot N(h_{max}, i)$.
  \item $d_{eff}(G)$ is the smallest $h$ such that $N(h)=\sum_{i}N(h,i) = 0.9 \cdot N(h_{max})$. If $N(h) > 0.9 \cdot N(h_{max}) > N(h-1)$, then $d_{eff}(G)$ is linearly interpolated from $N(h)$ and $N(h-1)$. That is, $d_{eff}(G) = (h-1) + \frac{0.9\cdot N(h_{max}) - N(h-1)}{N(h)-N(h-1)}$.
\squishend

The parameter $K$ is typically set to 32 \cite{flajolet85probabilistic}, 
and $MaxIter$ is set to $256$ since real graphs 
have relatively small effective diameter.

\section{\hadoop Algorithms} 
\label{sec:hadimeth}

In this Section we present a way to implement ANF \cite{DBLP:conf/kdd/PalmerGF02} 
on the top of both a \mapreduce system and a parallel SQL DBMS.
Our algorithm is  \hadi, a parallel radius and diameter estimation algorithm. 
It is important to notice that \hadi is a disk-based algorithm.
\hadi saves two pieces of information to a distributed file system (such as HDFS (Hadoop Distributed File System) in the case of \hadoop):

\begin{itemize}
  \item \textbf{Edge} has a format of ($src id$, $dst id$).
  \item \textbf{Bitstrings} has a format of ($node id$, $bitstring_1$, ..., $bitstring_K$).
\end{itemize}

Section~\ref{subsec:hadinaive}  presents \hadi-naive
which gives the big picture and explains why this kind 
of an implementation should not be used in practice. 
Section~\ref{subsec:hadiplain}  presents \hadi-plain,
a significantly improved implementation. 
Section~\ref{subsec:hadioptimized} presents the optimized version of \hadi-optimized,
which scales almost linearly as a function of the number of machines allocated. 
Section~\ref{subsec:hadianalysis} presents the space and time complexity of \hadi. 
Finally, Section~\ref{subsec:hadisql} shows how \hadi can run on the top of a relational
database management system (RDBMS). 

\subsection{ HADI-naive in \mapreduce}
\label{subsec:hadinaive}

\spara{Data:} The edge file is saved as a sparse adjacency matrix in HDFS.
Each line of the file contains a nonzero element of the adjacency matrix
of the graph, in the format of ($src id$, $dst id$).
Also, the bitstrings of each node are saved in a file in the format of ($node id$, $flag$, $bitstring_1$, ..., $bitstring_K$).
The $flag$ variable records  whether a bitstring changed or not. 
%Notice that we   do not  know  the physical distribution of the data in HDFS.

\textbf{Main Program Flow}
The main idea of \hadi-naive is to use the bitstrings file as a logical ``cache'' to machines which contain edge files.
The bitstring update operation in Equation~\eqref{Nhupdate} of Section~\ref{sec:hadimoments} requires 
that the machine which updates the bitstrings of node $i$ should have access to (a) all edges adjacent 
from $i$, and (b) all bitstrings of the adjacent nodes.
To meet the requirement (a), it is needed to reorganize the edge file so that edges with a same source 
id are grouped together. That can be done by using an identity mapper which outputs the given input 
edges in ($src id$, $dst id$) format.
The most simple yet naive way to meet the requirement (b) is sending the bitstrings to every reducer 
which receives the reorganized edge file.

Thus, \hadi-naive iterates over two-stages of \mapreduce.
The first stage updates the bitstrings of each node
and sets the `Changed' flag
if at least one of the bitstrings of the node
is different from the previous bitstring.
The second stage counts the number of changed vertex and
stops iterations when the bitstrings stabilized, as illustrated in the swim-lane diagram of Figure~\ref{fig:naiveANF}.
%\hadi-naive is illustrated in the swim-lane diagram of Figure~\ref{fig:naiveANF}.

\begin{figure}
	\centering
        \includegraphics[width=1.0\textwidth]{./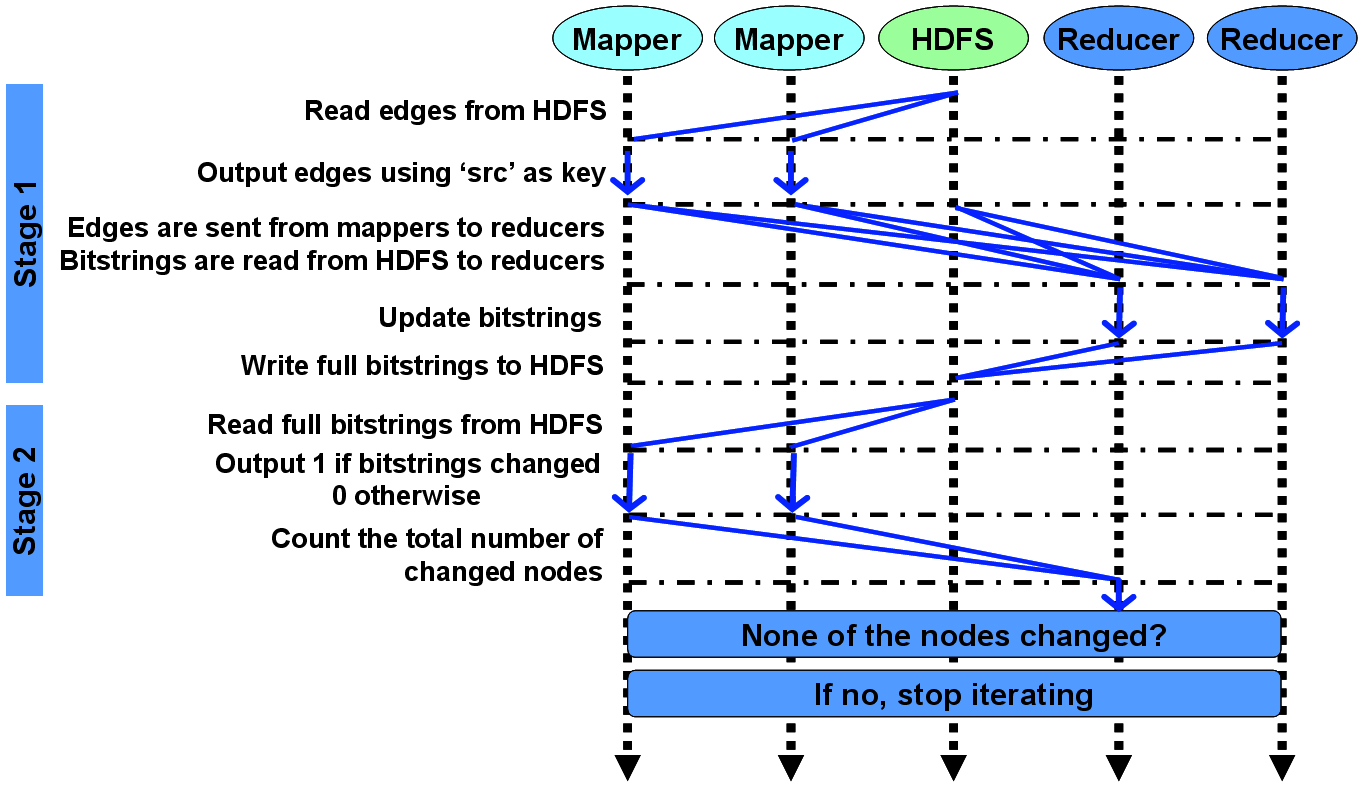}
	\caption{\label{fig:naiveANF} One iteration of \hadi-naive.
	\emph{Stage 1}. Bitstrings of all vertex are sent to every reducer.
    \emph{Stage 2}. Sums up the count of changed nodes.
}
\end{figure}

Although conceptually simple and clear, \hadi-naive is
unnecessarily expensive, because it ships
all the bitstrings to all reducers.

\subsection{ HADI-plain in \mapreduce}
\label{subsec:hadiplain} 

\hadi-plain improves \hadi-naive by
\textit{copying only the necessary bitstrings to each reducer}.
The details follow. 

\spara{Data:} As in \hadi-naive, the edges are saved in the format of
($src id$, $dst id$), and bitstrings are saved in the format of
($node id$, $flag$, $bitstring_1$, ..., $bitstring_K$) in files over HDFS.
The initial bitstrings generation  can be performed in completely parallel way.
The $flag$ of each node records the following information:
\begin{itemize}
  \item {\it Effective Radii} and {\it Hop Numbers} to calculate the effective radius.
  \item {\it Changed} flag to indicate whether at least a bitstring has been changed or not.
\end{itemize}

\spara{Main Program Flow: }
As mentioned in the beginning, \hadi-plain copies only the necessary bitstrings to each reducer.
The main idea is to replicate bitstrings of node $j$ exactly $x$ times where $x$ is the in-degree of node $j$.
The replicated bitstrings of node $j$ is called the {\em partial bitstring}
and represented by $\hat{b}(h,j)$.
The replicated $\hat{b}(h,j)$'s are used to update $b(h,i)$, the bitstring of node $i$ where $(i,j)$ is an edge in the graph.
\hadi-plain iteratively runs three-stage \mapreduce jobs
until all bitstrings of all vertex stop changing.
Algorithm~\ref{alg:HADIp1}, \ref{alg:HADIp2}, and \ref{alg:HADIp3}
shows \hadi-plain, and Figure~\ref{fig:hadi-plain} shows the swim-lane.
We use  $h$ for denoting the current iteration number which starts from $h$=1.
Output($a$,$b$) means to output a pair of data with the key $a$ and the value $b$.

\begin{figure}
	\centering
        \includegraphics[width=1.0\textwidth]{./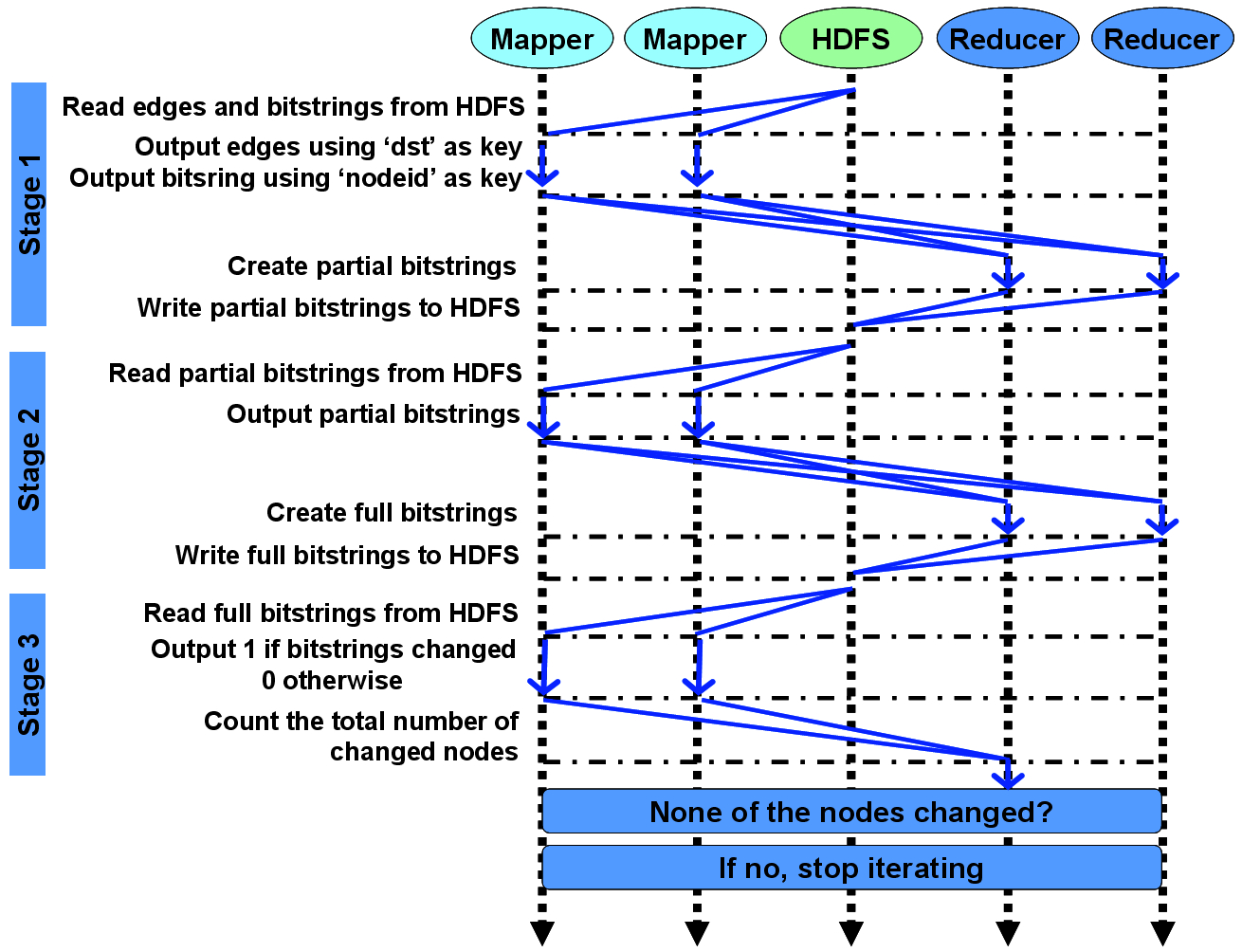}
	\caption{One iteration of \hadi-plain.
	\emph{Stage 1}. Edges and bitstrings are matched to create partial bitstrings.
    \emph{Stage 2}. Partial bitstrings are merged to create full bitstrings.
    \emph{Stage 3}. Sums up the count of changed nodes, and compute N(h), the neighborhood function. Computing N(h) is not drawn in the figure for clarity.
}
	\label{fig:hadi-plain}	
\end{figure}

\begin{algorithm}[!t]
\begin{algorithmic}[1]
\REQUIRE Edge data $E=\{(i, j)\}$ 
\REQUIRE Current bitstring $B=\{(i, b(h-1, i))\}$ or Bitstring Creation Command $BC=\{(i, cmd)\}$
\STATE Stage1-Map(key $k$, value $v$)
\IF{($k,v$) is of type B or BC}
    \STATE Output($k,v$)
\ELSIF{($k,v$) is of type E }
    \STATE Output($v,k$)
\ENDIF
\STATE Stage1-Reduce(key $k$, values $V$[])
\STATE SRC $\leftarrow$ [] 
\FOR{$v \in V$} 
\IF{($k,v$) is of type   BC}
    \STATE  $\hat{b}(h-1,k) \leftarrow $NewFMBitstring()
\ELSIF{ ($k,v$) is of type B }
    \STATE $\hat{b}(h-1,k) \leftarrow v$
\ELSIF{      ($k,v$) is of type E }
    \STATE Add $v$ to $SRC$
\ENDIF
\ENDFOR 
\FOR{$src \in SRC$} 
\STATE Output$(src, \hat{b}(h-1,k))$
\ENDFOR 
\STATE Output$(k, \hat{b}(h-1,k))$
\end{algorithmic}
\caption{\label{alg:HADIp1} \hadi Stage 1. Output is a set of partial bitstrings $B'=\{(i, b(h-1,j))\}$} 
\end{algorithm}

\begin{algorithm}[!t]
\begin{algorithmic}[1]
\REQUIRE Partial bitstring $B=\{(i, \hat{b}(h-1,j)\}$
\STATE Stage2-Map(key $k$, value $v$)   \\
\COMMENT{ Identity Mapper } \\  
\STATE Output($k,v$) 

\STATE Stage2-Reduce(key $k$, values $V$[])
\STATE $b(h,k) \leftarrow $ 0 
\FOR{$v \in V$} 
\STATE $b(h,k) \leftarrow b(h,k)$ BIT-OR $v$
\ENDFOR 
\STATE Update $flag$ of $b(h,k)$
\STATE Output$(k, b(h,k))$
\IF{($k,v$) is of type   BC}
    \STATE  $\hat{b}(h-1,k) \leftarrow $NewFMBitstring()
\ELSIF{ ($k,v$) is of type B }
    \STATE $\hat{b}(h-1,k) \leftarrow v$
\ELSIF{      ($k,v$) is of type E }
    \STATE Add $v$ to $SRC$
\ENDIF
\FOR{$src \in SRC$} 
\STATE Output$(src, \hat{b}(h-1,k))$
\ENDFOR 
\STATE Output$(k, \hat{b}(h-1,k))$
\end{algorithmic}
\caption{\label{alg:HADIp2} \hadi Stage 2. Output is a set of full bitstring $B=\{(i, b(h,i)\}$} 
\end{algorithm}

\begin{algorithm}[!t]
\begin{algorithmic}[1]
\REQUIRE Full bitstring $B=\{(i, b(h,i))\}$
\STATE Stage3-Map(key $k$, value $v$) 
\STATE Analyze $v$ to get ($changed$, $N(h,i)$)
\STATE Output($key\_for\_changed$,$changed$)
\STATE Output($key\_for\_neighborhood$, $N(h,i)$) 

\STATE Stage3-Reduce(key $k$, values $V$[])\

\STATE $Changed \leftarrow $ 0 
\STATE $N(h) \leftarrow $ 0
\FOR{$v \in V$} 
  \IF{k is key\_for\_changed }
      \STATE   $Changed \leftarrow Changed + v$ 
  \ELSIF{ k is key\_for\_neighborhood }
      \STATE $N(h) \leftarrow N(h) + v$
  \ENDIF
\ENDFOR
\STATE Output($key\_for\_changed$,$Changed$)
\STATE Output($key\_for\_neighborhood$, $N(h)$) 
\end{algorithmic}
\caption{\label{alg:HADIp3} \hadi Stage 3. Output is the number of changed nodes, Neighborhood $N(h)$} 
\end{algorithm}

\textbf{Stage 1}
We generate (key, value) pairs, where the key is the node id $i$ and
the value is the partial bitstrings $\hat{b}(h,j)$'s
where $j$ ranges over all the neighbors adjacent from node $i$.
% of vertex $j$'s which are adjacent from $i$.
To generate such pairs,
the bitstrings of node $j$ are grouped together
with edges whose $dstid$ is $j$.
Notice that at the very first iteration,
bitstrings of vertex do not exist; they have to be generated on the fly,
and we use the {\em Bitstring Creation Command } for that.
The NewFMBitstring() function generates $K$ FM bitstrings \cite{flajolet85probabilistic}.
Notice also that line 19 of Algorithm~\ref{alg:HADIp1} is used to propagate the bitstrings of one's own node.
These bitstrings are compared to the newly updated bitstrings
at Stage 2 to check convergence.

\textbf{Stage 2}
Bitstrings of node $i$ are updated by combining partial bitstrings of itself and vertex adjacent from $i$.
For the purpose, the mapper is the Identity mapper
(output the input without any modification).
The reducer combines them, generates new bitstrings, and sets $flag$ by recording (a) whether 
at least a bitstring changed or not, and (b) the current iteration number $h$ and the neighborhood value $N(h,i)$.
This $h$ and $N(h,i)$ are used to calculate the effective radius of vertex after all bitstrings converge.
Notice that only the last neighborhood $N(h_{last}, i)$ and other neighborhoods $N(h',i)$ that 
satisfy $N(h',i) \geq 0.9 \cdot N(h_{last}, i)$ need to 
be saved to calculate the effective radius.
The output of Stage 2 is fed into the input of Stage 1 at the next iteration.

\textbf{Stage 3}
We calculate the number of changed vertex and
sum up the neighborhood value of all vertex to calculate $N(h)$.
% For the purpose, only %one reducer with
We use only
two unique keys(key\_for\_changed and key\_for\_neighborhood),
which correspond to the two calculated values.
The analysis of line 2 can be done by checking the  $flag$ field
and using Equation~\eqref{Nhi} in Section~\ref{sec:hadimoments}.
The variable $changed$ is set to $1$ or $0$, based on whether the bitmask of node $k$ changed or not.

When all bitstrings of all vertex converge, 
a \mapreduce job to finalize the effective radius and 
diameter is performed and the program finishes. 
Compared to \hadi-naive, the advantage of \hadi-plain is clear: 
bitstrings and edges are evenly distributed over machines so that the 
algorithm can handle as much data as possible, given sufficiently many machines.

\subsection{HADI-optimized in \mapreduce}
\label{subsec:hadioptimized}

\hadi-optimized further improves \hadi-plain.
It uses two orthogonal ideas:
``block operation'' and ``bit shuffle encoding''.
Both try to address some subtle performance issues.
Specifically,
\hadoop\ has
the following two
major bottlenecks:
\begin{itemize}
  \item Materialization: at the end of each map/reduce stage,
  the output is written to the disk,
  and it is also read at the beginning of next reduce/map stage.
  \item Sorting: at the \textit{Shuffle} stage,
  data is sent to each reducer and sorted
  before they are handed over to the \textit{Reduce} stage.
\end{itemize}
\hadi-optimized addresses these two issues.

\spara{Block Operation:} Our first optimization is the block encoding of the edges and the bitstrings.
The main idea is to group $w$ by $w$ sub-matrix into a super-element in the adjacency matrix E, and group $w$ bitstrings into a super-bitstring.
Now, \hadi-plain is performed on these super-elements and super-bitstrings, instead of the original edges and bitstrings.
Of course, appropriate decoding and encoding is necessary at each stage.
Figure~\ref{fig:blockoperation} shows an example of converting data into block-format.

\begin{figure}[h]
\begin{center}
  \includegraphics[width=0.6\textwidth]{./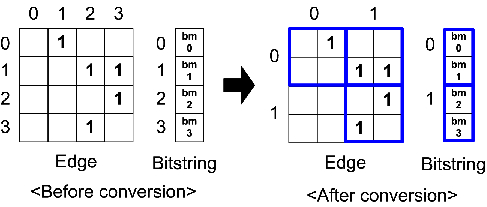}
  \caption{ \label{fig:blockoperation}
  Converting the original edge and bitstring to blocks.
  The 4-by-4 edge and length-4 bitstring are converted to 2-by-2 super-elements and length-2 super-bitstrings. Notice the lower-left super-element of the edge is not produced since there is no nonzero element inside it.
  }
\end{center}
\end{figure}

By this block operation, the performance of \hadi-plain changes as follows:
%\vspace{-2mm}
\begin{itemize}
  \item \textit{Input size} decreases in general, since we can use fewer bits to index elements inside a block.
  \item \textit{Sorting time} decreases, since the number of elements to sort decreases.
  \item \textit{Network traffic} decreases since the result of matching a super-element and a super-bitstring is a bitstring which can be at maximum $block\_width$ times smaller than that of \hadi-plain.
  \item \textit{Map and Reduce functions} take more time, since the block must be decoded to be processed, and be encoded back to block format.
\end{itemize}
For reasonable-size blocks, the performance improvement are significant.

\spara{Bit Shuffle Encoding: } In our effort to decrease the input size,
we propose an encoding scheme that can compress the bitstrings.
Recall that in \hadi-plain, we use $K$ (e.g., 32, 64) bitstrings
for each node, to increase the accuracy of our estimator.
Since \hadi requires $O(K(m+n) \log n)$ space,
the amount of data increases
when $K$ is large.
For example, the YahooWeb graph spans 120 GBytes
(with 1.4 billion nodes, 6.6 billion edges).
However the required disk space for just the bitstrings is
$32 \cdot (1.4B + 6.6B) \cdot 8$ byte = $2$ Tera bytes (assuming 8 byte for each bitstring),
which is more than 16 times larger than the input graph.

The main idea of Bit Shuffle Encoding is to carefully reorder
the bits of the bitstrings of each node, and then use Run Length Encoding.
By construction, the leftmost part of each bitstring is  almost full of
one's, and the rest is almost full of zeros.
Specifically, we make the reordered bit strings to contain
long sequences of 1's and 0's:
we get all the first bits from all $K$ bitstrings, then
get the second bits, and so on.
As a result we get a single bit-sequence of length $K \cdot |bitstring|$,
where most of the first bits are `1's, and most of the last bits are `0's.
Then we encode only the length of each bit sequence, achieving
good space savings (and, eventually, time savings, through fewer I/Os).

\subsection{Analysis} 
\label{subsec:hadianalysis} 

\hadi depends on four parameters: the number $M$ of machines, 
the number of vertices $n$ and edges $m$, and the diameter $d$. 
The time complexity is dominated by the shorting time needed for the shuffling
during Stage1. Specifically, \hadi takes $O(\frac{m+n}{M}\log\frac{m+n}{M})$ time 
to run. It requires  $O((m+n) \log n)$ space units and is required 
$O((m+n) \log n)$.

\subsection{\hadi in SQL}
\label{subsec:hadisql}

Using relational database management systems (RDBMS) for graph mining is a promising research direction,
especially given the findings of \cite{pavlo09}.
We mention that
\hadi\ can be implemented on the top of an Object-Relational DBMS
(parallel or serial):
it needs repeated joins of the edge table with the appropriate table of
bit-strings, and a user-defined function for bit-OR-ing.
We sketch a potential implementation of \hadi in a RDBMS.% where data is saved in tables.

\spara{Data:} In parallel RDBMS implementations, data is saved in tables.
The edges are saved in the table $E$ with attributes $src$ (source node id) and $dst$ (destination node id). Similarly, the bitstrings are saved in the table $B$ with 

\spara{Main Program Flow:} The main flow comprises iterative execution of SQL statements with appropriate user defined functions.
The most important and expensive operation is updating the bitstrings of nodes. 
Observe that the operation can be concisely expressed as a SQL statement:

\begin{table}[h]
        \centering
        \begin{tabular}{|l|}\hline
SELECT INTO B\_NEW E.src, BIT-OR(B.b)  \\
~ FROM E, B  \\
~ WHERE E.dst=B.id \\
~ GROUP BY E.src \\
\hline
        \end{tabular}
        \label{tab:hadi_rdbms}
\end{table}

The SQL statement requires BIT-OR(), a UDF function that implements the bit-OR-ing of the Flajolet-Martin bitstrings.
The RDBMS implementation iteratively runs the SQL until $B\_NEW$ is same as $B$. $B\_NEW$ created at an iteration is used as 
$B$ at the next iteration.

\section{Wall-clock Times} 
\label{sec:haditimings}

In this section, we perform experiments to answer the following questions:
\squishlist
  \item Q1: How fast is \hadi?
  \item Q2: How does it scale up with the graph size and the number of machines?
  \item Q3: How do the optimizations help performance?
\squishend

\subsection{Experimental Setup}

We use both real and synthetic graphs in our experiments. 
These datasets are shown in Table~\ref{tab:datasummary}.

\squishlist
  \item YahooWeb: web pages and their hypertext links
  indexed by Yahoo! Altavista search engine in 2002.
  \item Patents: U.S. patents, citing each other (from 1975 to 1999).
  \item LinkedIn: people connected to other people (from 2003 to 2006).
  \item Kronecker: Synthetic Kronecker graphs~\cite{Leskovec05Realistic} using a chain of length two as the seed graph.
\squishend

\begin{table}[h]
  \begin{center}
  {\small
  \begin{tabular}{|r||r|r|r|l| } \hline 
    \textbf{Graph} & \textbf{Nodes} & \textbf{Edges} & \textbf{File} & \textbf{Description} \\ \hline \hline
    YahooWeb   & 1.4 B & 6.6 B & 116G & page-page \\ \hline
    LinkedIn & 7.5 M & 58 M & 1G & person-person\\ \hline
    Patents & 6 M & 16 M & 264M & patent-patent\\
    \hline
    Kronecker
      & 177 K & 1,977 M & 25G & synthetic\\
      & 120 K & 1,145M & 13.9G & \\
      & 59 K & 282 M & 3.3G  &\\
    \hline
    Erd{\H o}s-R\'enyi
       & 177 K & 1,977 M & 25G  & random $G_{n,p}$\\
       & 120 K &  1,145 M & 13.9G &\\
       & 59 K & 282 M & 3.3G &\\
       \hline 
  \end{tabular}
  } % small
  \end{center}
  \caption{Datasets (B: Billion, M: Million, K: Thousand, G: Gigabytes)}
  \label{tab:datasummary}
\end{table}

In order to test the scalability of \hadi, we use synthetic graphs,
namely Kronecker and Erd{\H o}s-R\'enyi  graphs. 
\hadi runs on {\em M45}, one of the fifty most powerful supercomputers in the world.
M45 has 480 hosts (each with  2 quad-core Intel Xeon 1.86 GHz, running RHEL5),
with 3Tb aggregate RAM, and over 1.5 Peta-byte disk size.

Finally, we use the following notations to indicate different optimizations of \hadi:
\squishlist
  \item \hadi-BSE: \hadi-plain with bit shuffle encoding.
  \item \hadi-BL: \hadi-plain with block operation.
  \item \hadi-OPT: \hadi-plain with both bit shuffle encoding and block operation.
\squishend

\subsection {Running Time and Scale-up}

\begin{figure}[h]
\begin{center}
  \includegraphics[width=0.6\textwidth]{./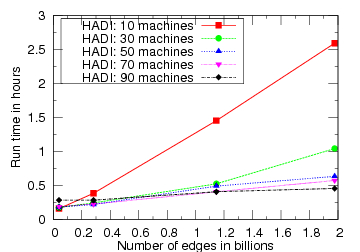}
  \caption{ \label{fig:runtime_edges}
  Running time versus number of edges with \hadi-OPT on Kronecker graphs for three iterations.
  Notice the excellent scalability: linear on the graph size (number of edges).
}
\end{center}
\end{figure}

\begin{figure}[h]
\begin{center}
  \includegraphics[width=0.6\textwidth]{./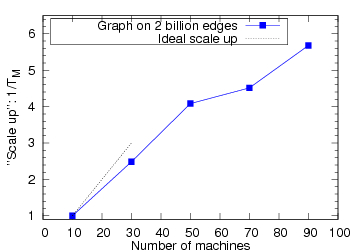}
  \caption{ \label{fig:runtime_machines}
  ``Scale-up'' (throughput $1/T_M$)
  versus number of machines $M$, for the Kronecker graph
  (2B edges).
  Notice the near-linear growth in the beginning,
  close to the ideal(dotted line).
  }
\end{center}
\end{figure}

Figure~\ref{fig:runtime_edges} gives the wall-clock time of \hadi-OPT versus
the number of edges in the graph. Each curve corresponds
to a different number of machines used (from 10 to 90).
\hadi has excellent scalability, with  its running time being
linear on the number of edges.
The rest of the  \hadi versions (\hadi-plain, \hadi-BL, and \hadi-BSE),
were slower, but had a similar linear trend.

Figure~\ref{fig:runtime_machines}
gives the throughput $1/T_M$ of \hadi-OPT.
We also tried \hadi with one machine;
however it didn't complete,
since the machine would take so long that it would often fail in the meanwhile.
For this reason, we do not report the typical scale-up score $s = T_1 / T_M$ (ratio of time with 1 machine, over time with $M$ machine),
and instead we report just the inverse of $T_M$.
\hadi scales up near-linearly with the number
of machines $M$, close to the ideal scale-up.

\begin{figure}[h]
\begin{center}
  \includegraphics[width=0.6\textwidth]{./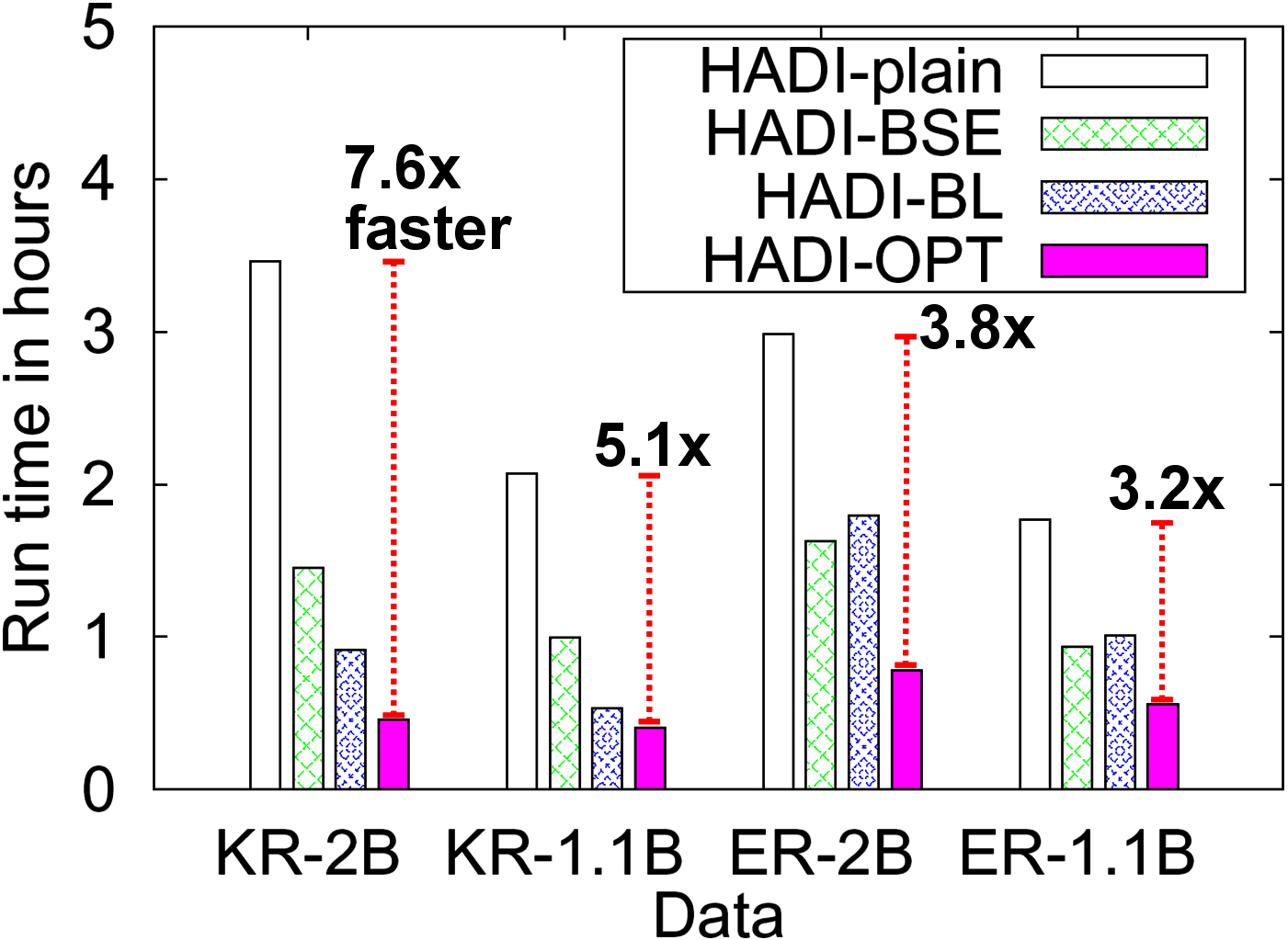}
  \caption{ \label{fig:hadiopteff}
  Run  time of \hadi with/without optimizations for Kronecker and Erd{\H o}s-R\'enyi graphs with 
  several billions of edges, on the M45 \hadoop\ cluster using 90 machines for 3 iterations.
   \hadi-OPT is up to 7.6$\times$ faster than \hadi-plain.
  }
\end{center}
\end{figure}

\subsection{Effect of Optimizations}
Among the optimizations that we mentioned earlier,
which one helps the most, and by how much?
Figure~\ref{fig:hadiopteff}
plots the running time of different graphs versus different \hadi optimizations.
For the Kronecker graphs,
we see that block operation is more efficient than bit shuffle encoding. 
Here, \hadi-OPT achieves 7.6$\times$ better performance than \hadi-plain.
For the Erd{\H o}s-R\'enyi graphs, however, we see that block operations
do not help more than bit shuffle encoding, because the adjacency matrix has no block structure,
while Kronecker graphs do.
Also notice that \hadi-BLK and \hadi-OPT run faster on Kronecker graphs
than on Erd{\H o}s-R\'enyi graphs of the same size. Again, the reason is that
Kronecker graphs have fewer nonzero blocks  (i.e., ``communities'') by their construction,
and the ``block'' operation yields more savings.

\section{Structure of Real-World Networks}
\label{sec:hadifindings} 

\begin{figure}[ht]
    \begin{tabular}{cc}
    \includegraphics[width=0.45\textwidth]{./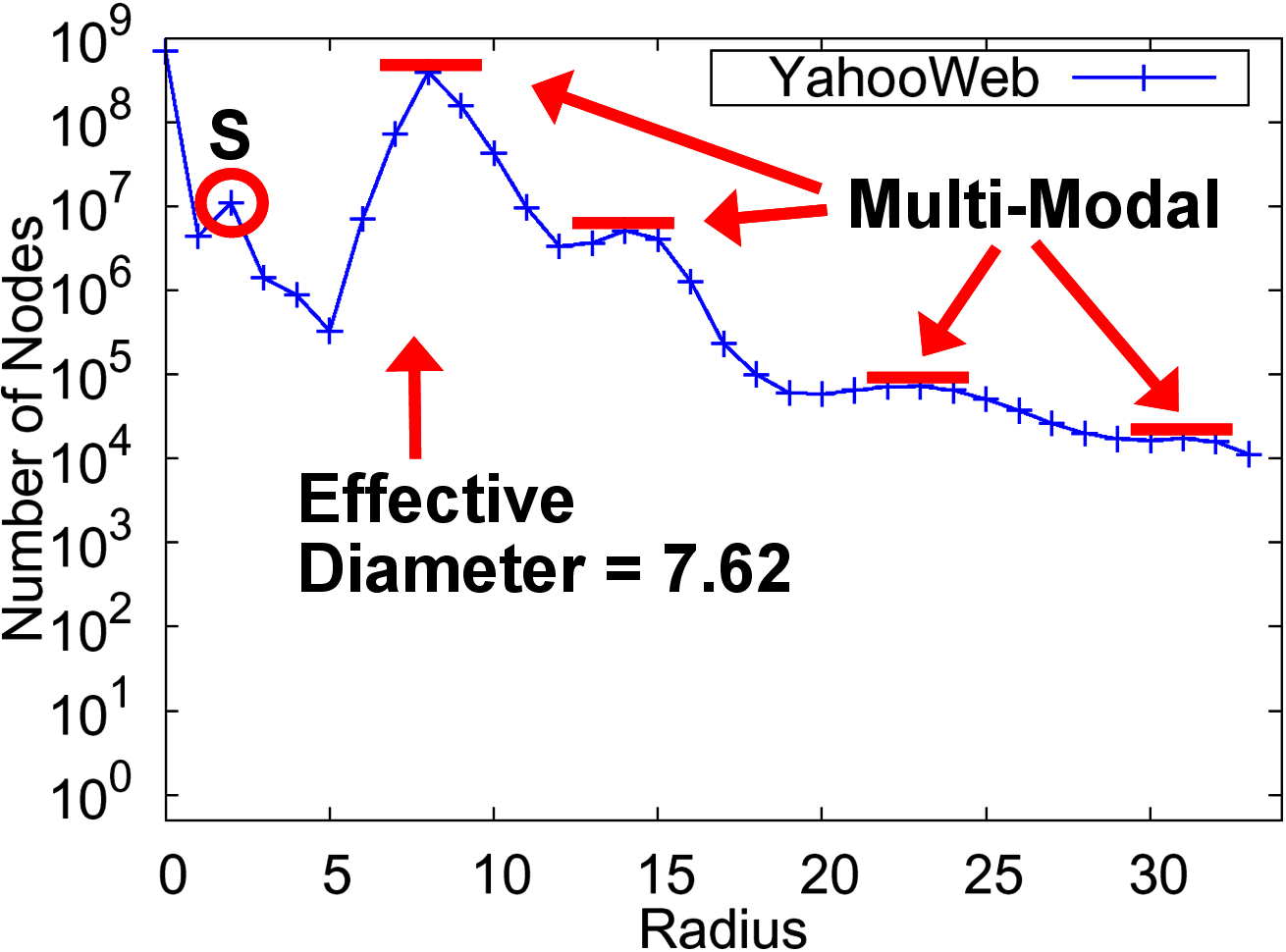}&
    \includegraphics[width=0.45\textwidth]{./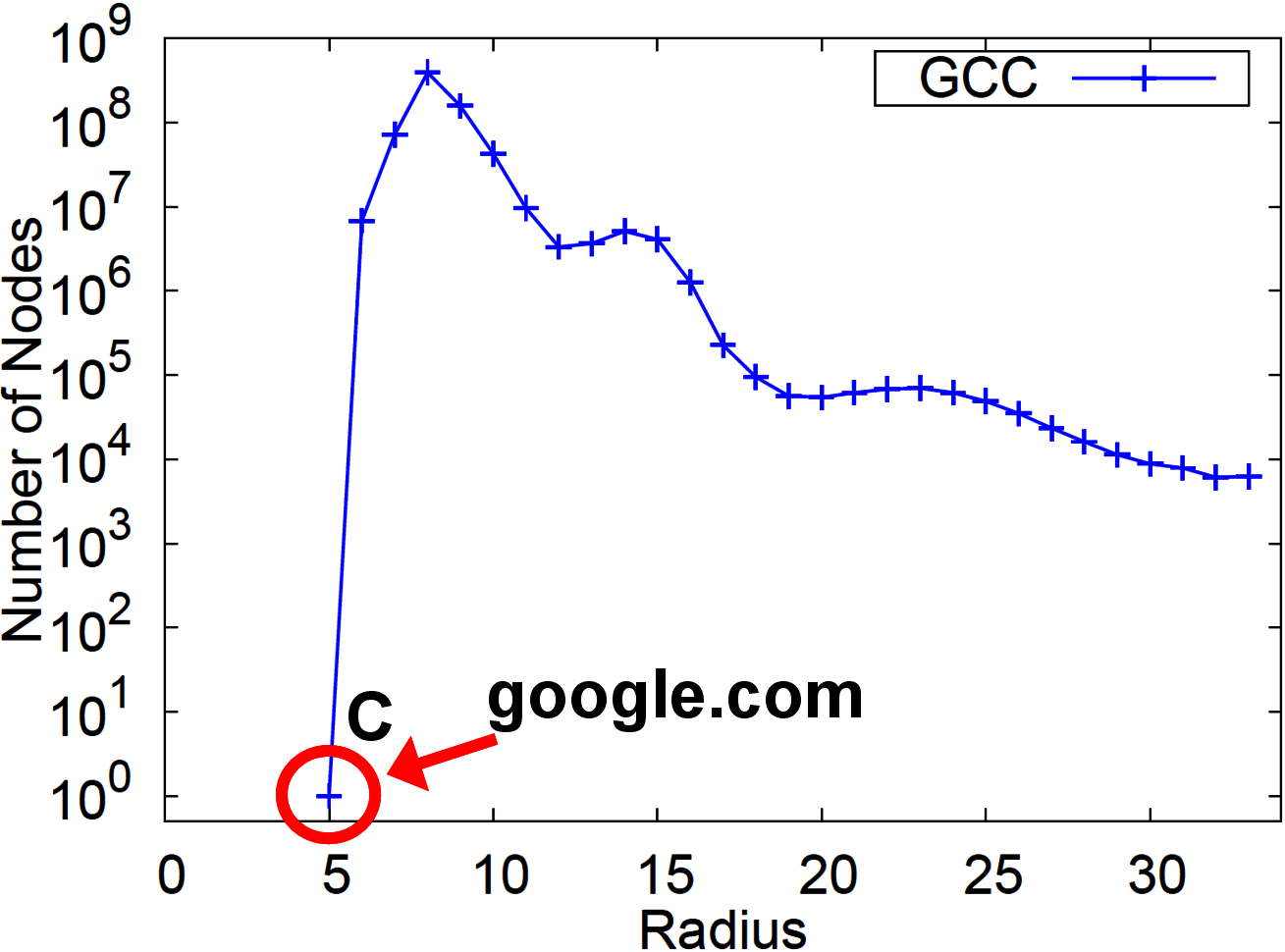}\\
         (a) Radius plot of YahooWeb &  (b) Radius plot of GCC of YahooWeb 
    \end{tabular}
    \caption{\label{fig:hadiyahoo} \textbf{ (a)} Radius plot (Count versus Radius) of the YahooWeb graph. Notice the effective 
    diameter is surprisingly small. Also notice the peak (marked `S') at radius 2, due to star-structured disconnected components. \newline
    \textbf{ (b)} Radius plot of GCC (Giant Connected Component)
    of YahooWeb graph. The {\em only} vertex with radius 5  (marked `C')
    is {\tt google.com}.}
\end{figure}

\hadi reveals new patterns in massive graphs which we present in this section.
We distinguish these new patterns into {\it static}  (Section~\ref{subsec:statichadi}) 
and {\it temporal}  (Section~\ref{subsec:temporalhadi}).

\subsection{Static Patterns}
\label{subsec:statichadi}

\subsubsection{Diameter}
What is the diameter of the Web?
Albert et al.~\cite{Albert99Diameter}
computed the diameter on a directed Web graph with approximately 0.3 million vertices and
conjectured that it should be around 19 for a 1.4 billion-vertex Web graph as shown in the upper line of Figure~\ref{fig:dia_comp}.
Broder et al.~\cite{broder2000graph} used their sampling approach from approximately 200 million-vertices and reported 
16.15 and 6.83 as the diameter for the directed and the undirected  cases, respectively.
What should the effective diameter be, for a significantly larger crawl of the Web, with billions of vertices ?
Figure~\ref{fig:hadiyahoo} gives the surprising answer:

\begin{observation}[Small Web]
\label{observation:small_web}
The effective diameter of the YahooWeb graph
 (year: 2002)
is surprisingly small, between 7 and 8. 
% that the effective diameter of the Web snapshot at 2002 is only 7.62,
\end{observation}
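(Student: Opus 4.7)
The statement is an empirical claim about a specific 1.4-billion-node, 6.6-billion-edge web graph, so the "proof" I would write is a computational verification rather than a deductive argument. The plan is to execute \hadi-OPT on the YahooWeb dataset loaded into HDFS on the M45 cluster, extract the aggregate neighborhood function $N(h)=\sum_i N(h,i)$ from the Stage~3 reducer output of every iteration, and then apply the interpolation rule for $d_{\text{eff}}(G)$ given in Section~\ref{sec:hadimoments} to check that $d_{\text{eff}}(G)\in[7,8]$.

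First I would preprocess the graph into the block-encoded edge format described in Section~\ref{subsec:hadioptimized}, using a modest block width $w$ so that both the sparsity pattern of the YahooWeb adjacency matrix and the compressibility of the FM bitstrings under bit-shuffle encoding are exploited; this keeps the 2~TB of raw bitstring state tractable. I would initialize $K=32$ Flajolet--Martin bitstrings per node via the \textsf{NewFMBitstring} routine invoked in Algorithm~\ref{alg:HADIp1}, and then iterate the three-stage \mapreduce job. After each iteration $h$ I would read off the two aggregated values that Stage~3 emits, namely the \emph{changed} count (the stopping signal) and $N(h)$, and record the sequence $N(1),N(2),\ldots,N(h_{\max})$ until the changed count is zero.

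Next I would compute $d_{\text{eff}}$ from this sequence exactly as specified in Section~\ref{sec:hadimoments}: locate the unique $h$ with $N(h-1)<0.9\,N(h_{\max})\le N(h)$ and set
\[
d_{\text{eff}}(G)\;=\;(h-1)+\frac{0.9\,N(h_{\max})-N(h-1)}{N(h)-N(h-1)}.
\]
Empirically I expect $N(h)$ to rise sharply up to about $h=6$--$8$ and then flatten, so that the interpolated value sits in the interval $[7,8]$. Reporting this value, together with the full count-vs-radius histogram shown in Figure~\ref{fig:hadiyahoo}(a), establishes Observation~\ref{observation:small_web}.

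The main obstacle is not algorithmic but statistical and operational. Operationally, even with $M=90$ machines a single iteration on 6.6\,B edges is costly, and \hadi needs on the order of $h_{\max}$ iterations (each producing and shuffling $O((n+m)K\log n)$ bits of bitstring state), so careful use of the block and bit-shuffle optimizations of Section~\ref{subsec:hadioptimized} is essential to get through all passes within the cluster's fault window. Statistically, the cutoff at $0.9\,N(h_{\max})$ must be robust to the relative error of the FM estimator~\eqref{Nhi}; since that error is $O(1/\sqrt{K})$ with $K=32$, the resulting uncertainty in the interpolated $h$ should be well below the unit gap between consecutive integer radii, but I would sanity-check this by rerunning the final pass with an independent set of $K$ bitstrings and confirming that $d_{\text{eff}}$ still lands in $[7,8]$, and also by isolating the giant SCC (as in Figure~\ref{fig:hadiyahoo}(b)) so that the long tail of small disconnected components does not distort $N(h_{\max})$.
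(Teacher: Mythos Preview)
Your proposal is correct and takes essentially the same approach as the paper: run \hadi\ (with the block and bit-shuffle optimizations) on the YahooWeb edge file, accumulate the neighborhood function $N(h)$ from the Stage~3 output, and read off $d_{\text{eff}}$ via the interpolation rule; the paper simply reports the resulting radius plot (Figure~\ref{fig:hadiyahoo}) and states the observation without spelling out the execution details you supply. Your added sanity checks (independent bitstring reruns, GCC isolation) go beyond what the paper explicitly does but are reasonable robustness measures.
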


The previous results from Albert et al. \cite{Albert99Diameter} and Broder et al. \cite{broder2000graph} also consider
the undirected version of the Web graph. We  compute the average diameter and show the comparison of 
diameters of different graphs in Figure~\ref{fig:dia_comp}.
We first observe that the average diameters of all graphs are relatively small  ($<$ 20) for both the directed and the undirected cases.
We also observe that the Albert et al.'s conjecture for the diameter of the directed graph is over-pessimistic: 
both the sampling approach and \hadi output smaller values for the diameter of the directed graph.
For the diameter of the undirected graph, we observe the constant/shrinking diameter pattern~\cite{1217301}.%~\cite{Leskovec05Realistic}.
%which is well explained by the forest fire model~\cite{Leskovec05Realistic}.

\begin{figure}[h]
\centering
\includegraphics[width=0.6\textwidth]{./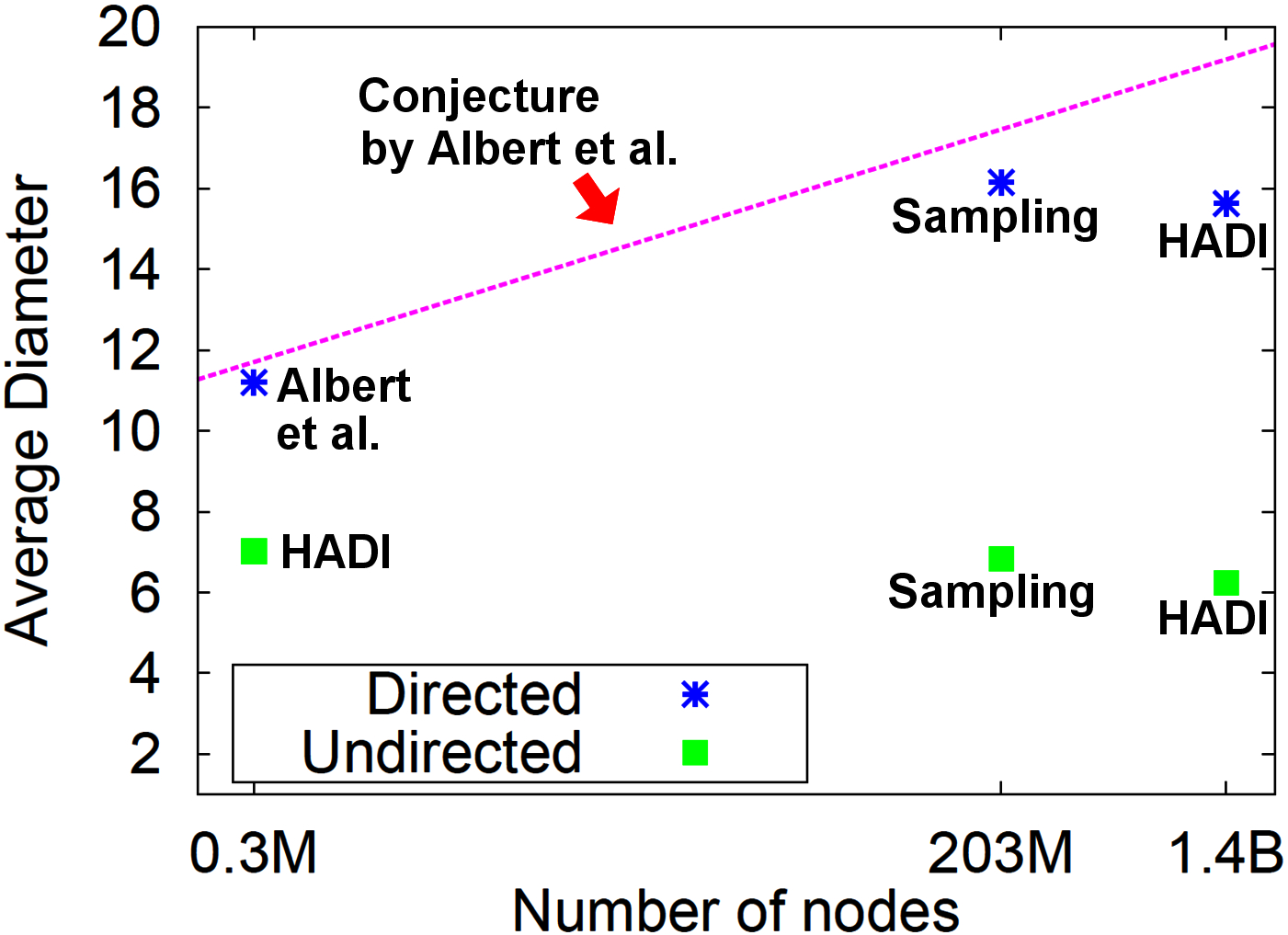}
\caption{\label{fig:dia_comp}
Average diameter vs. number of vertices in lin-log scale for the three different Web graphs, where M and B stand for 
millions and billions respectively.
 (0.3M): web pages inside nd.edu at 1999, from Albert et al.'s work.
 (203M): web pages crawled by Altavista at 1999, from Broder et al.'s work
 (1.4B): web pages crawled by Yahoo at 2002  (YahooWeb in Table~\ref{tab:datasummary}).
Notice the relatively small diameters for both the directed and the undirected cases. 
}
\end{figure}

\subsubsection{Shape of Distribution}

Figure~\ref{fig:hadiyahoo} shows that the radii distribution in the Web Graph is multimodal.
In other relatively smaller networks, we observe a bimodal structure.
As shown in the Radius plot of U.S. Patent and LinkedIn network in
Figure~\ref{fig:static_radius_others}, they have a peak at zero,
a dip at a small radius value  (9, and 4, respectively) and another peak
very close to the dip.

\begin{figure*}[htbp]
    \begin{tabular}{c c}
    \includegraphics[width=0.45\textwidth]{./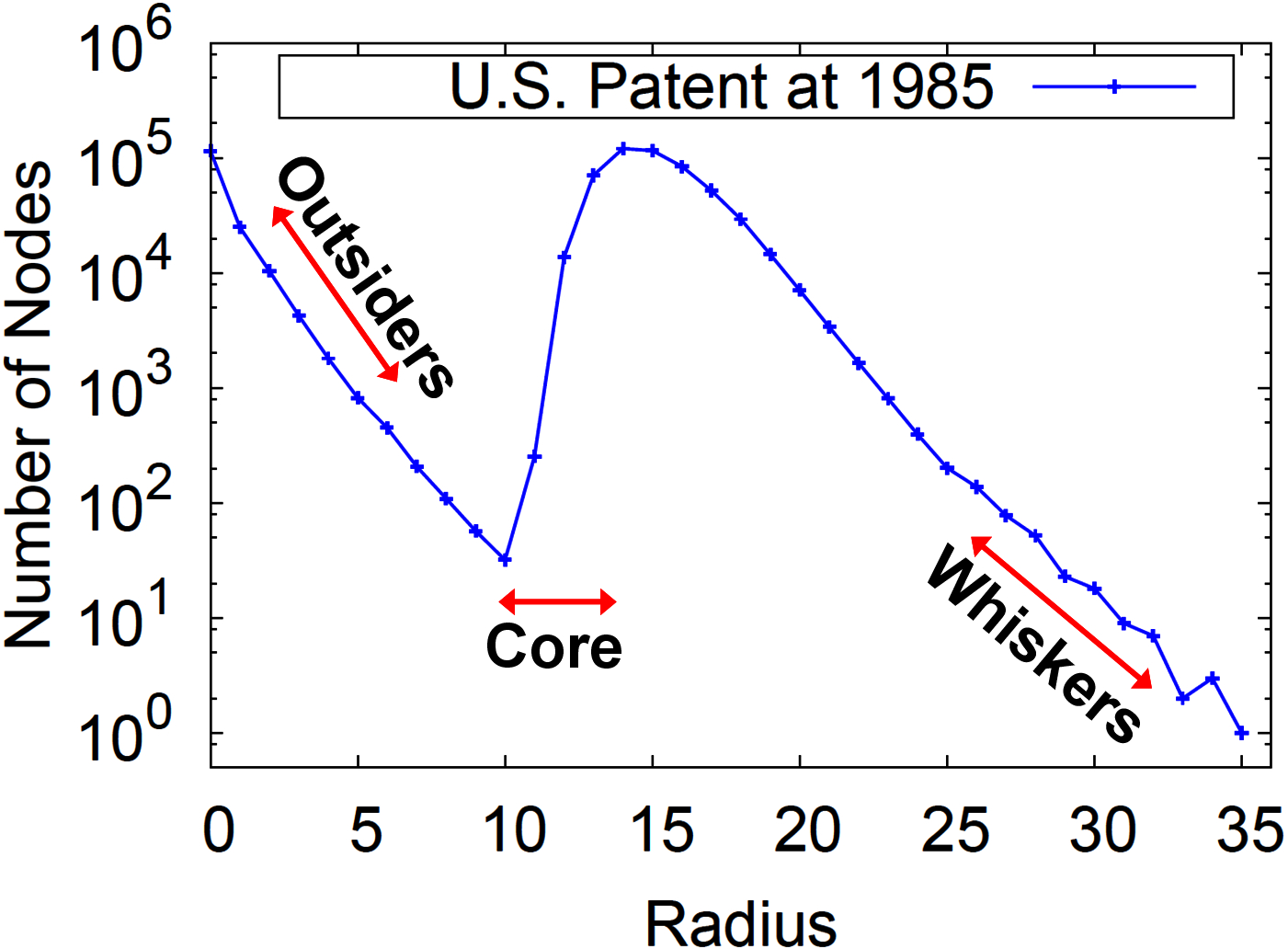} &
    \includegraphics[width=0.45\textwidth]{./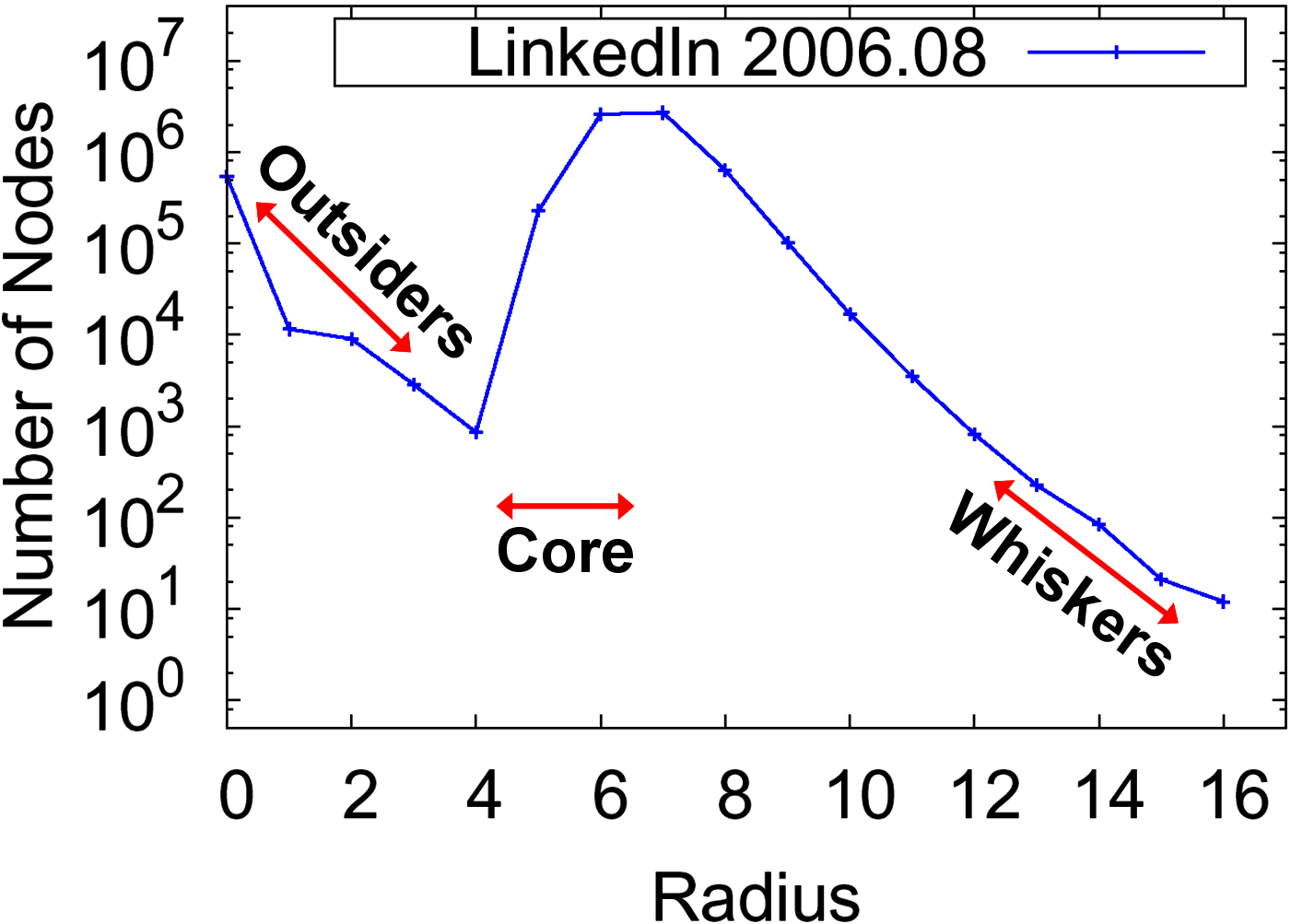}\\
       (a) U.S. Patent &  (b) LinkedIn \\
    \end{tabular}
    \caption{\label{fig:static_radius_others}
    Static Radius Plot (Count versus Radius) of U.S. Patent and LinkedIn graphs. Notice the bimodal structure 
    with `outsiders' (vertices in the DCs), `core' (central vertices in the GCC), and `whiskers'  (vertices connected to the GCC with long paths).
    }
\end{figure*}

\begin{observation}[Multi-modal and Bi-modal]
\label{observation:radiusthreearea}
The radius distribution of the Web graph has a multimodal structure.
Smaller networks have a bimodal structure.
\end{observation}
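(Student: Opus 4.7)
The plan is to establish this observation empirically by running \hadi\ on the three representative networks (YahooWeb, U.S. Patent, LinkedIn) described in Table~\ref{tab:datasummary} and inspecting the resulting Static Radius Plots. Since the statement is not a mathematical theorem but rather a structural claim about real-world data, the ``proof'' consists of producing the radius distributions at sufficient scale and then providing a structural explanation for the peaks and dips that appear. First I would compute, for every vertex $v$, the effective radius $r_{eff}(v)$ using the Flajolet--Martin-based iteration described in Section~\ref{sec:hadimeth}, iterating until the bitstrings $b(h,i)$ stabilize; then I would aggregate these into a histogram (Count versus Radius) for each graph.

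Next I would decompose each graph into its giant connected component (GCC) and the remaining disconnected components (DCs), and re-plot the radius distribution restricted to each piece, as is already done in Figure~\ref{fig:hadiyahoo}(b). The key structural step is to attribute each mode of the distribution to a topological cause: the leftmost peak at small radius should correspond to ``outsider'' vertices living in small DCs (which inflate the count at $r=2$ for YahooWeb via star-shaped components, marked `S'); the dip should correspond to the scarcity of vertices whose eccentricity sits between the DC regime and the GCC regime; and the second peak should correspond to the bulk of GCC vertices, whose radii cluster tightly around the effective radius of the GCC core. For the Web graph, the multimodality (rather than mere bimodality) should then be explained by the presence of additional substructures, e.g., the bow-tie components IN, OUT, and Tendrils reviewed in Section~\ref{subsec:introrealworldnetworks}, each contributing its own local mode to the distribution, together with ``whisker'' vertices attached to the GCC by long paths.

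The main obstacle I anticipate is not producing the plots --- that is handled by \hadi-OPT as demonstrated in Section~\ref{sec:haditimings} --- but rather giving a convincing \emph{explanation} of why exactly two modes appear in the medium-scale networks (Patent, LinkedIn) while three or more appear in YahooWeb. To address this I would cross-reference the DC size distribution and the GCC/DC split: if the DCs themselves follow a heavy-tailed size distribution, one expects a single broad peak from the DCs plus one sharp peak from the GCC, giving bimodality; if, additionally, the GCC contains a bow-tie-like decomposition into qualitatively distinct regions (IN, OUT, SCC, Tendrils), each contributing vertices with characteristically different eccentricities, one obtains additional modes. A secondary obstacle is that the approximate nature of the Flajolet--Martin estimator could in principle introduce spurious structure into the radius histogram; I would guard against this by repeating the computation with $K=32$ and $K=64$ bitstrings and verifying that the modal structure is robust to the choice of $K$, and by confirming (on the smaller Patent and LinkedIn graphs where it is feasible) that an exact BFS-based radius computation on a sampled subset of vertices reproduces the same qualitative shape.
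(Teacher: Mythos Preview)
Your approach is essentially the same as the paper's: this is an empirical observation, not a theorem, and the paper supports it exactly as you propose --- by producing the Static Radius Plots with \hadi, decomposing into GCC versus disconnected components, and attributing the first peak to outsiders in small DCs and the second peak to the bulk of the GCC (with the power-law size distribution of the DCs explaining the first peak).

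One small difference worth noting: the paper does \emph{not} actually invoke the bow-tie decomposition to explain why YahooWeb is multimodal rather than bimodal; it simply reports the multimodality and then gives the outsider/core/whisker explanation at the level of the bimodal smaller graphs. Your bow-tie hypothesis is a plausible and natural extension, but it is your own addition rather than something the paper establishes. Also, in the paper's reading of the bimodal shape, the vertices at the \emph{dip} are the ``core'' vertices of the GCC (those with the smallest radii), not a transitional regime between DCs and GCC as you phrase it; the second peak is then the vast majority of ordinary GCC vertices. This is a minor difference of interpretation rather than an error in your plan.
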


A natural question to ask with respect to the bimodal structure 
is what are the common properties of the vertices that belong to the first
peak; similarly, for the vertices in the first dip,
and the same for the vertices of the second peak.
After investigation, the former are vertices that belong to disconnected
components  (DCs); vertices in the dip are usually core vertices in the
giant connected component  (GCC), and the vertices at the second
peak are the vast majority of well connected vertices in the GCC.
Figure~\ref{fig:patent_uradius_t1_decom} exactly shows the radii distribution
for the vertices of the GCC  (in blue), and the vertices of the few largest
remaining components.

\begin{figure}[htbp]
\begin{center}
  \includegraphics[width=0.6\textwidth]{./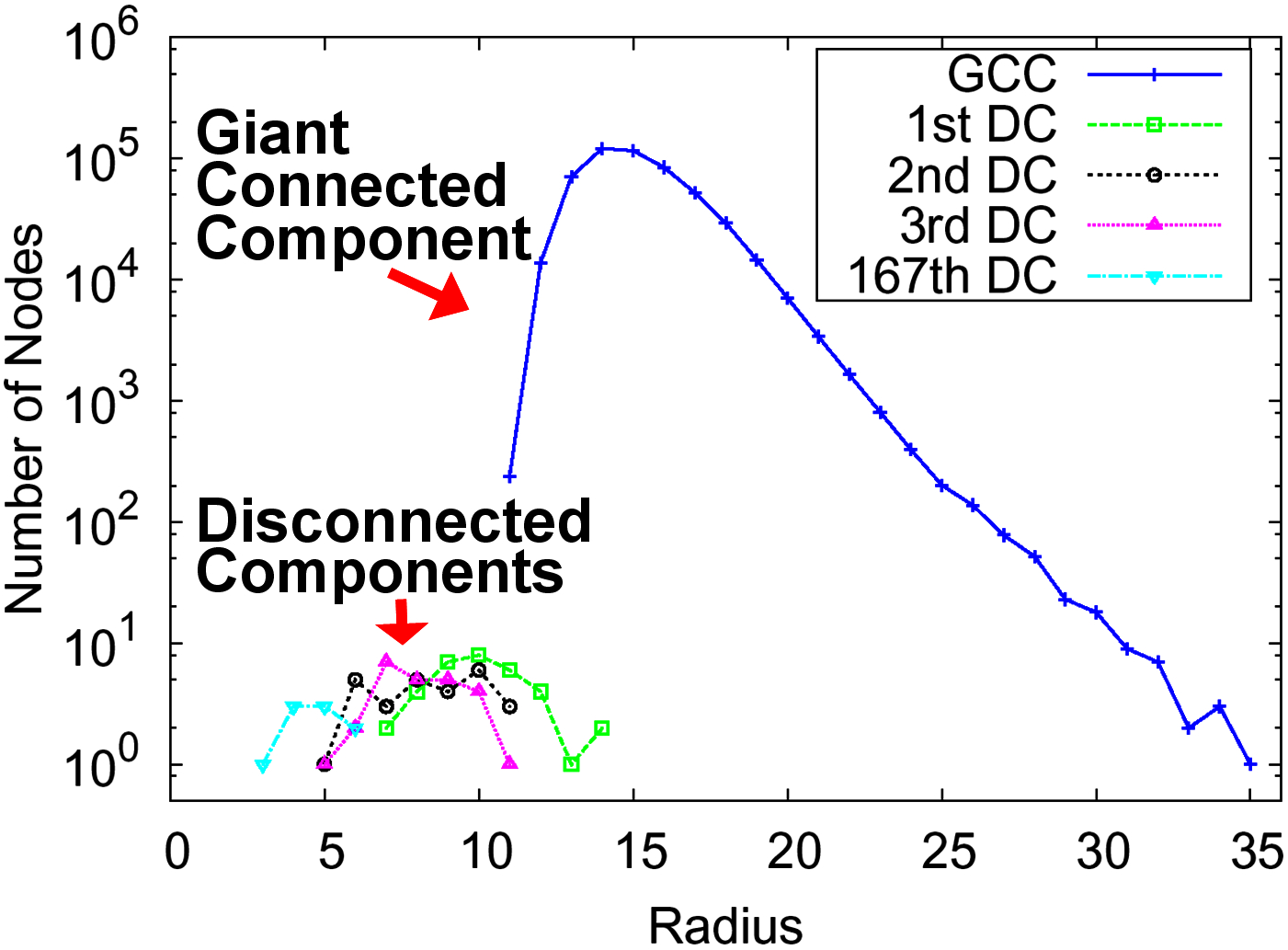}
  \caption{ \label{fig:patent_uradius_t1_decom}
  Radius plot  (Count versus radius)
  for several connected components of the U.S. Patent data in 1985.
  In blue: the distribution for the giant connected component;
  rest colors: several disconnected components.
  }
\end{center}
\end{figure}

    In Figure~\ref{fig:patent_uradius_t1_decom}, 
	we clearly see that the second peak of the bimodal structure
    came from the giant connected component.
    But, where does the first peak around radius 0 come from?
    We can get the answer from the distribution of connected component
    of the same graph in Figure~\ref{fig:static_cc}.
    Since the ranges of radius are limited by the size of connected components,
    we see the first peak of Radius plot came
    from the disconnected components whose size follows a power law.

    \begin{figure*}[htbp]
    \begin{center}
        \begin{tabular}{c c}
        \includegraphics[width=0.45\textwidth]{./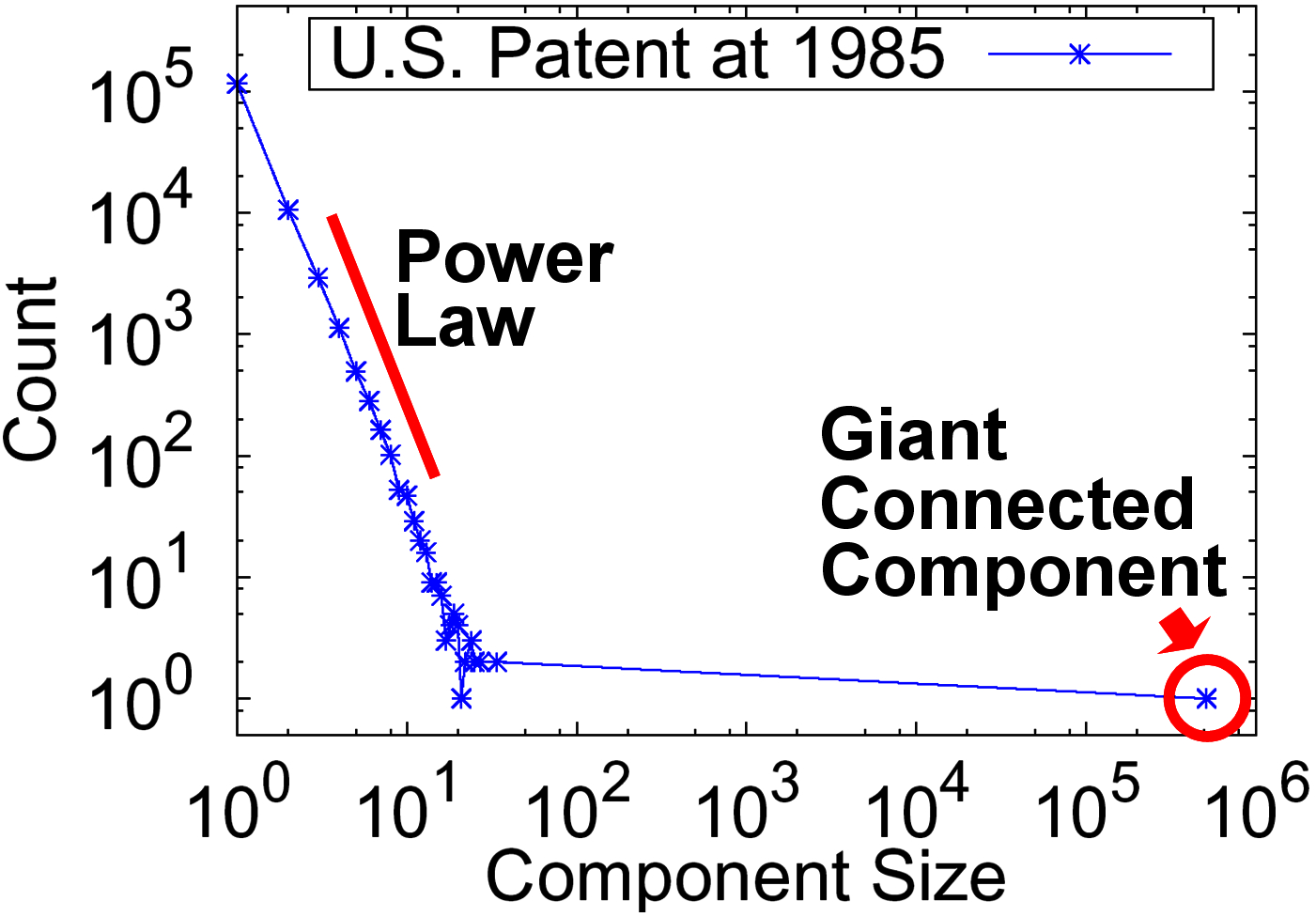}&
        \includegraphics[width=0.45\textwidth]{./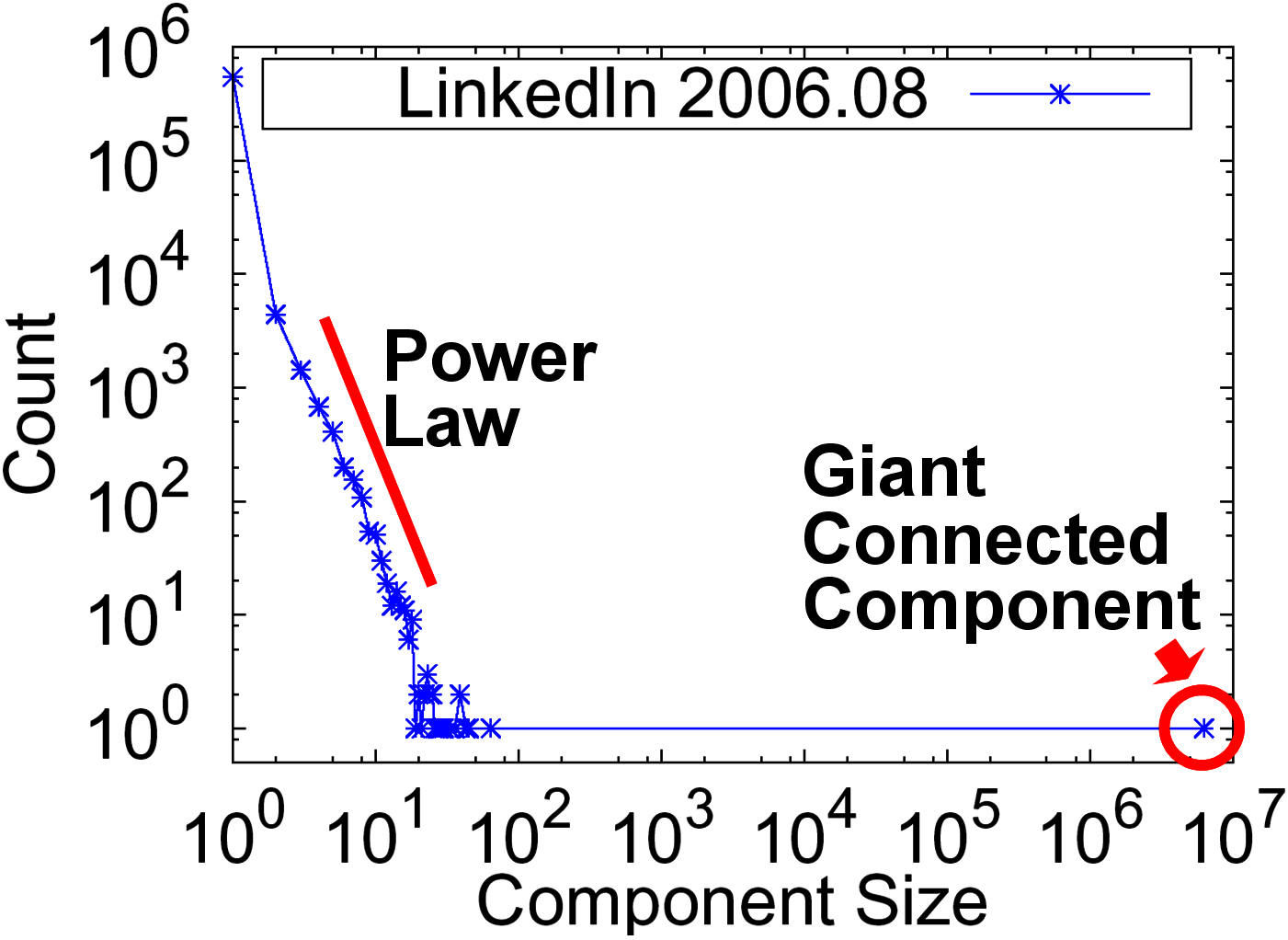}\\
           (a) Patent &  (b) LinkedIn
        \end{tabular}
        \caption{
        Size distribution of connected components. Notice the size of the disconnected components (DCs) follows a power-law which explains the first peak around radius 0 of the radius plots in Figure~\ref{fig:static_radius_others}.
        }
        \label{fig:static_cc}
        \end{center}
    \end{figure*}

\begin{figure*}[htbp]
\begin{center}
\setlength{\tabcolsep}{0cm}
    \begin{tabular}{c c}
    \includegraphics[width=0.45\textwidth]{./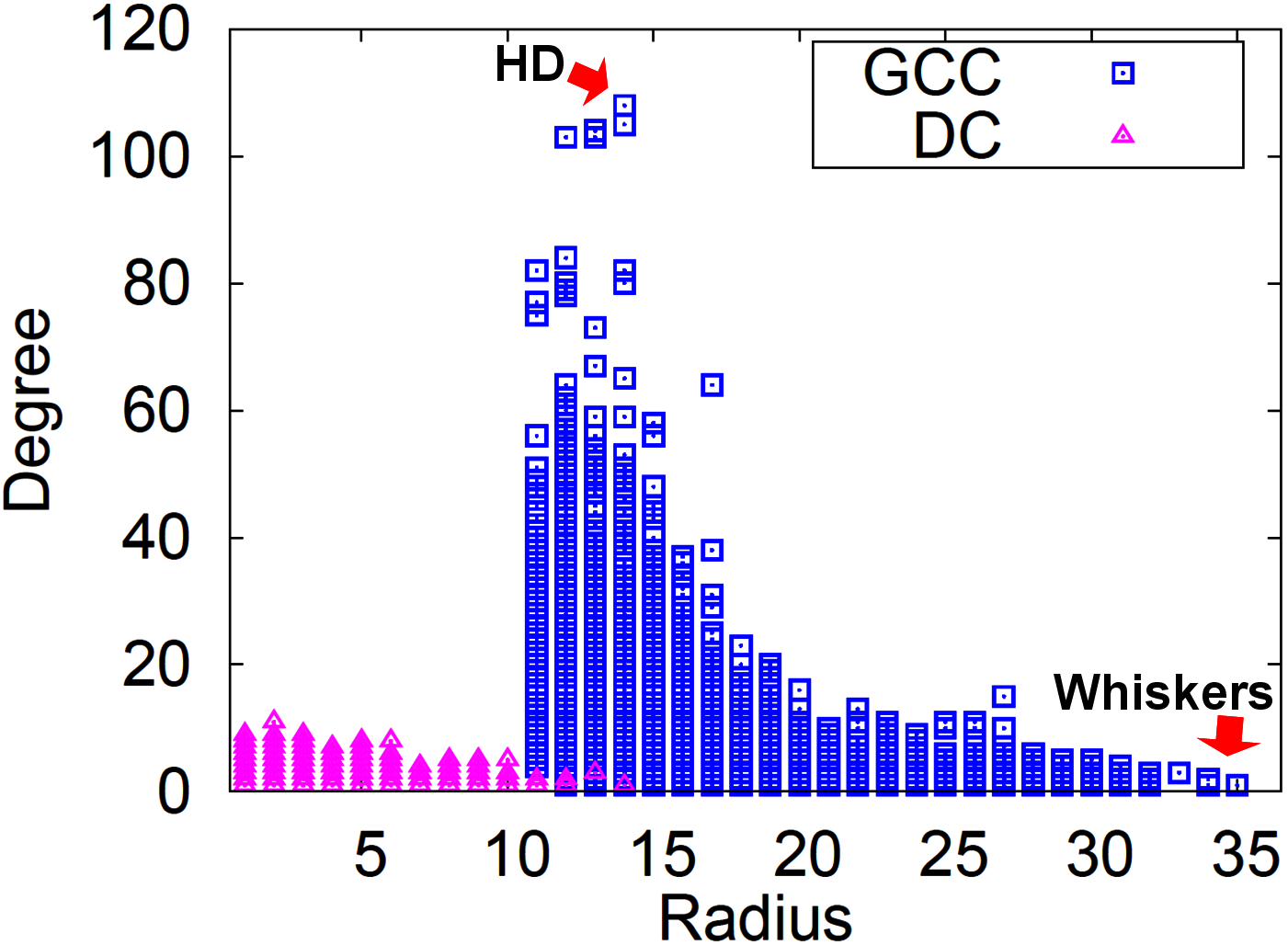}&
    \includegraphics[width=0.45\textwidth]{./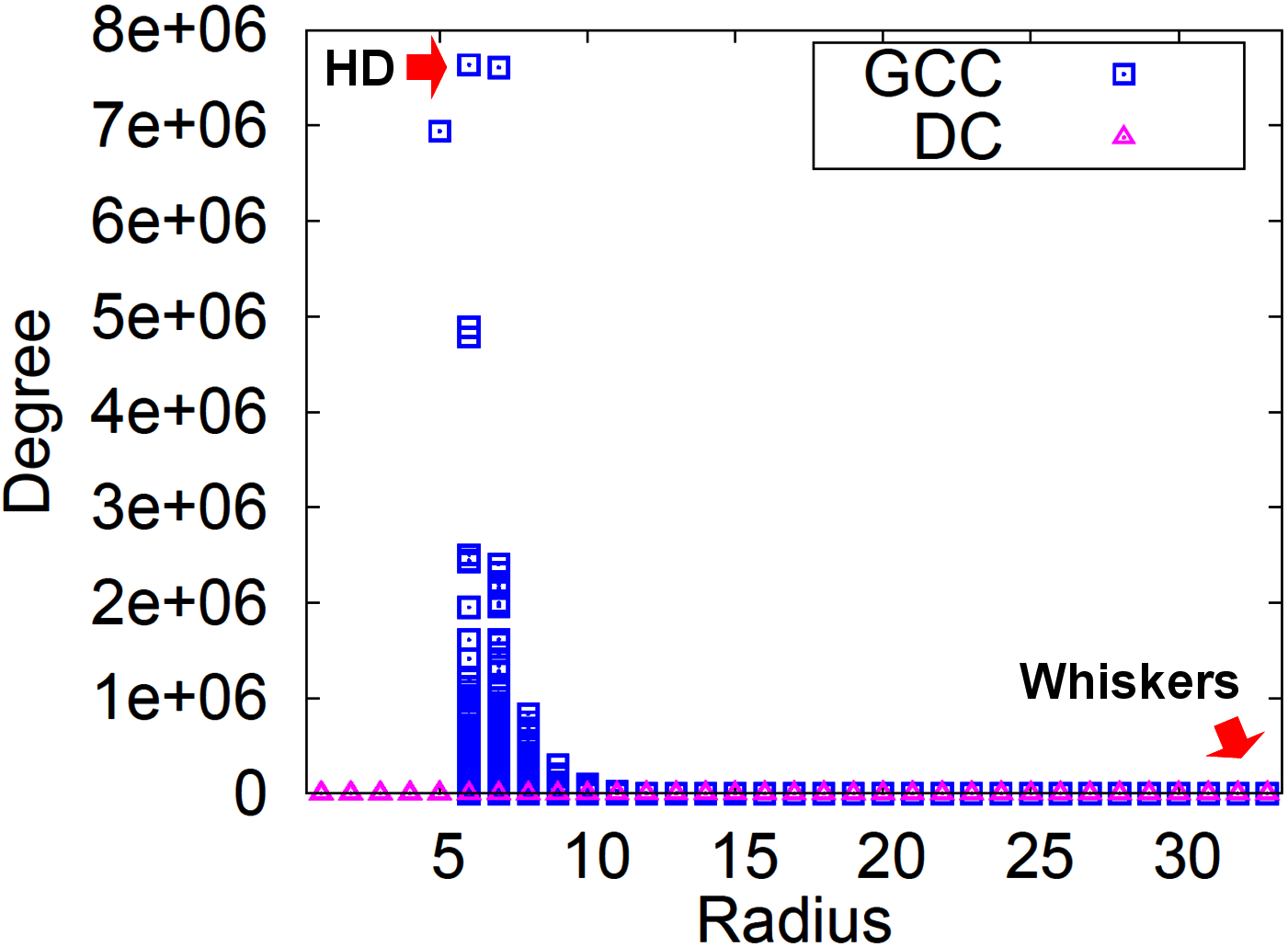} \\
       (a) Patent &  (b) YahooWeb \\
    \includegraphics[width=0.45\textwidth]{./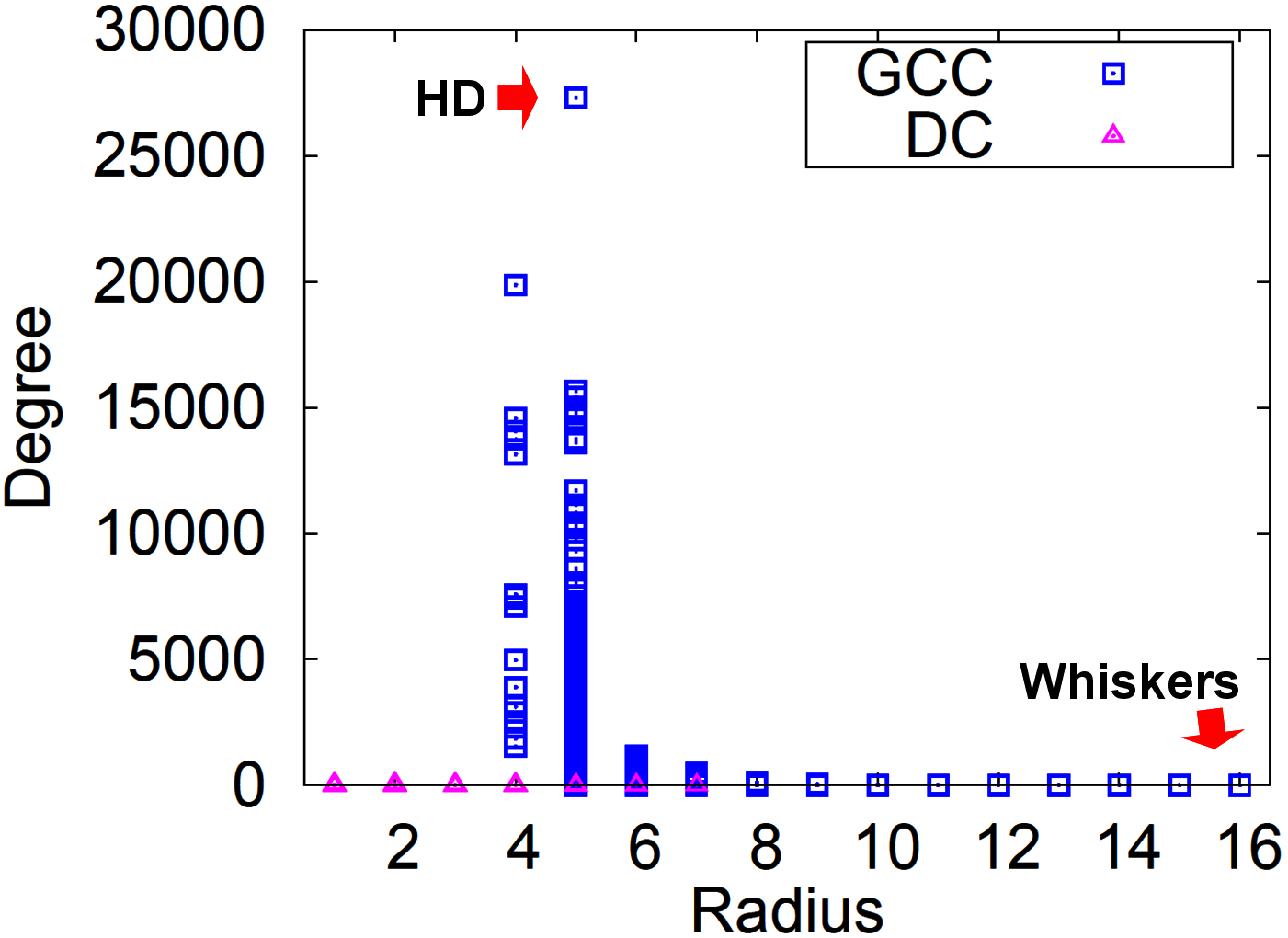} &\\
       (c) LinkedIn &\\
    \end{tabular}
    \caption{
    Radius-Degree plots of real-world graphs. HD represents the vertex with the highest degree.
    Notice that HD belongs to core vertices inside the GCC, and whiskers have small degree.
    }
    \label{fig:patent_t1_dr}
    \end{center}
\end{figure*}

Now we can explain the three important areas of
Figure~\ref{fig:static_radius_others}:
`{\em outsiders}' are the vertices in the disconnected components, and responsible for the first peak and the negative slope to the dip.
`{\em Core}' are the central vertices with the smallest radii from the giant connected component.
`{\em Whiskers}'~\cite{jure08ncp} are the vertices  connected to the GCC with long paths.

\subsubsection{Radius plot of GCC}
Figure~\ref{fig:hadiyahoo}(b) shows that all vertices of the GCC of 
the YahooWeb graph have radius 6 or more except for {\tt google.com}
that has radius one.

\subsubsection{``Core'' and ``Whisker'' vertices }

Figure~\ref{fig:patent_t1_dr} shows the Radius-Degree plot
of Patent, YahooWeb and LinkedIn graphs. 
The Radius-Degree plot is a scatterplot  with one dot for every vertex plotting the degree of the vertex versus its radius. 
The points corresponding to vertices in the GCC are colored with blue, while the rest is in magenta. 
We observe that the highest degree vertices    belong to the set of core vertices inside the GCC but are not
necessarily the ones with the smallest radius.
Finally, the whisker vertices have small degree and belong to chain subgraphs.

\subsection{Temporal Patterns}
\label{subsec:temporalhadi}

Here we study the radius distribution as a function of time. 
We know that the diameter of a graph typically grows with time, spikes at the `gelling point',
and then shrinks~\cite{Mcglohon2008,1217301}. 
Indeed, this holds for our datasets as shown in Figure~\ref{fig:evolution_diam}.

\begin{figure*}[htbp]
\begin{center}
    \begin{tabular}{c c}
    \includegraphics[width=0.45\textwidth]{./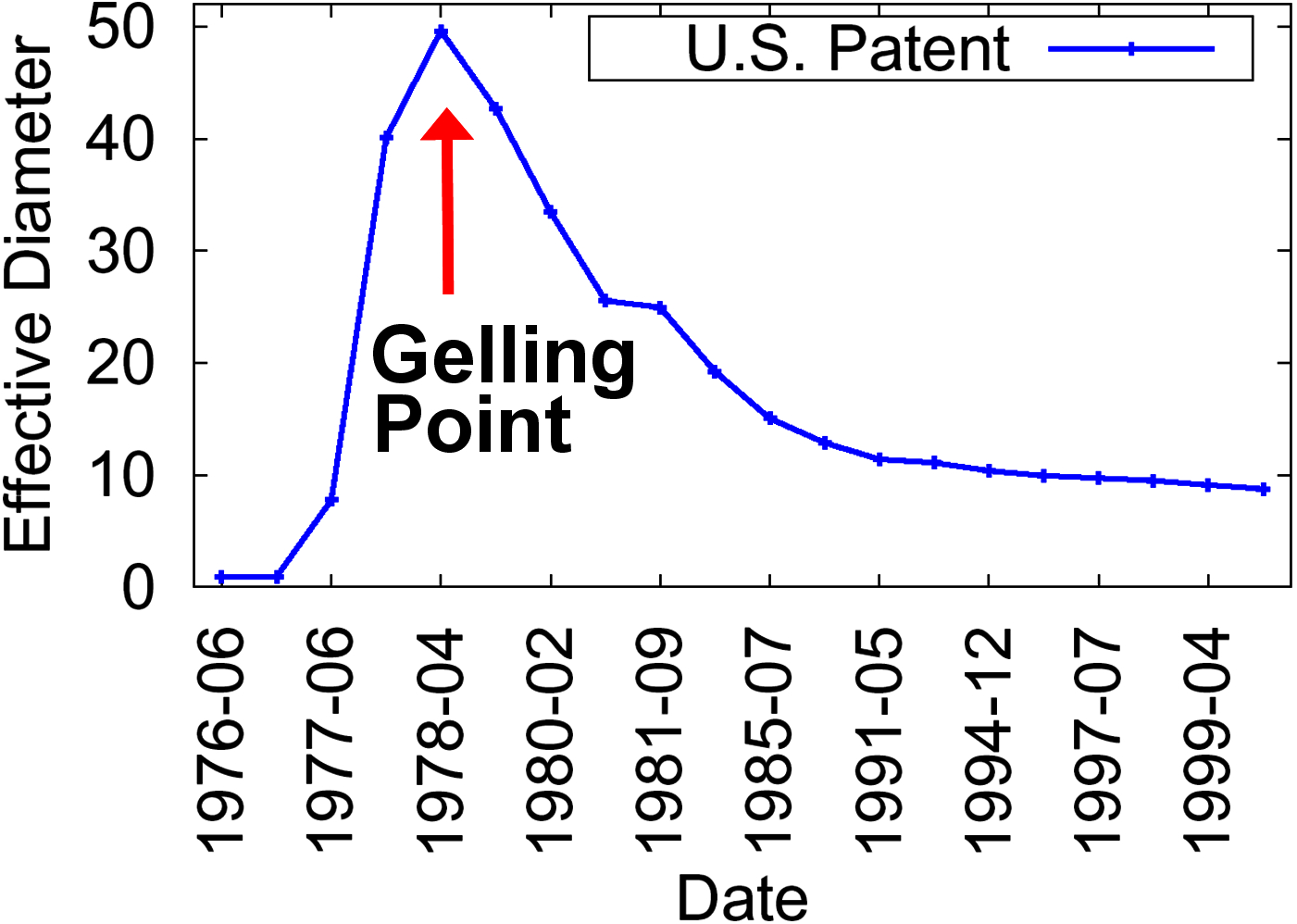}&
    \includegraphics[width=0.45\textwidth]{./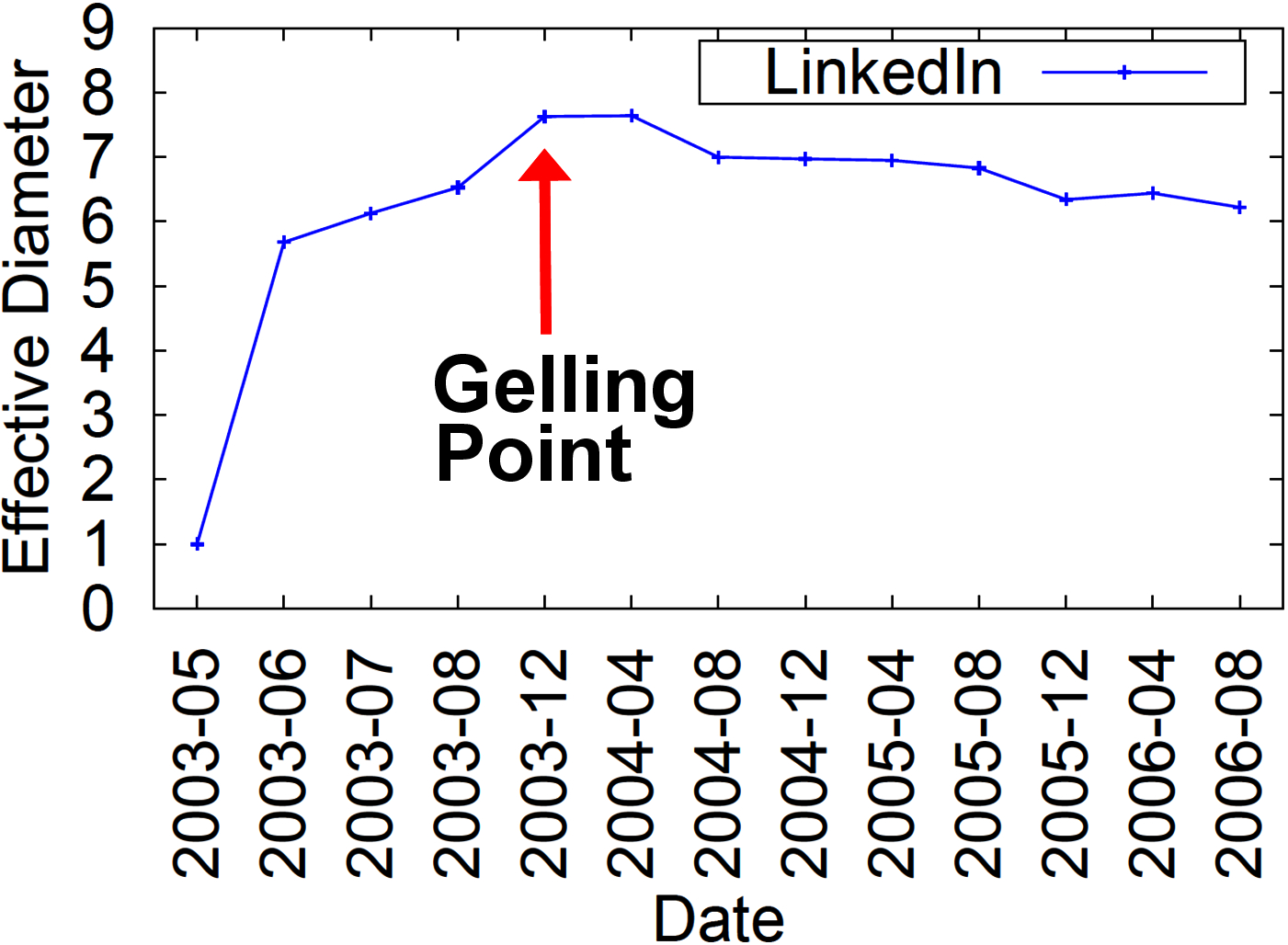}\\
       (a) Patent &  (b) LinkedIn
    \end{tabular}
    \caption{
    Evolution of the effective diameter of real graphs.
    The diameter increases until a `gelling' point, and starts to decrease after the point.
    }
    \label{fig:evolution_diam}
    \end{center}
\end{figure*}
%}% end hide

Figure~\ref{fig:patent_uradius_overtime} shows our findings. 
The radius distribution expands to the right until it reaches the gelling point.
Then, it contracts to the left.
Finally, the decreasing segments of several, real radius plots seem to decay exponentially, that is
\begin{equation}
count (r) \propto \exp{  (- c r ) }
\end{equation}
for every time tick {\em after} the gelling point. $count (r)$ is the number of vertices 
with radius $r$, and $c$ is a constant. For the Patent and LinkedIn graphs,
the absolute correlation coefficient was  at least 0.98. 

\begin{figure*}[htbp]
\begin{center}
    \begin{tabular}{c c}
    \includegraphics[width=0.45\textwidth]{./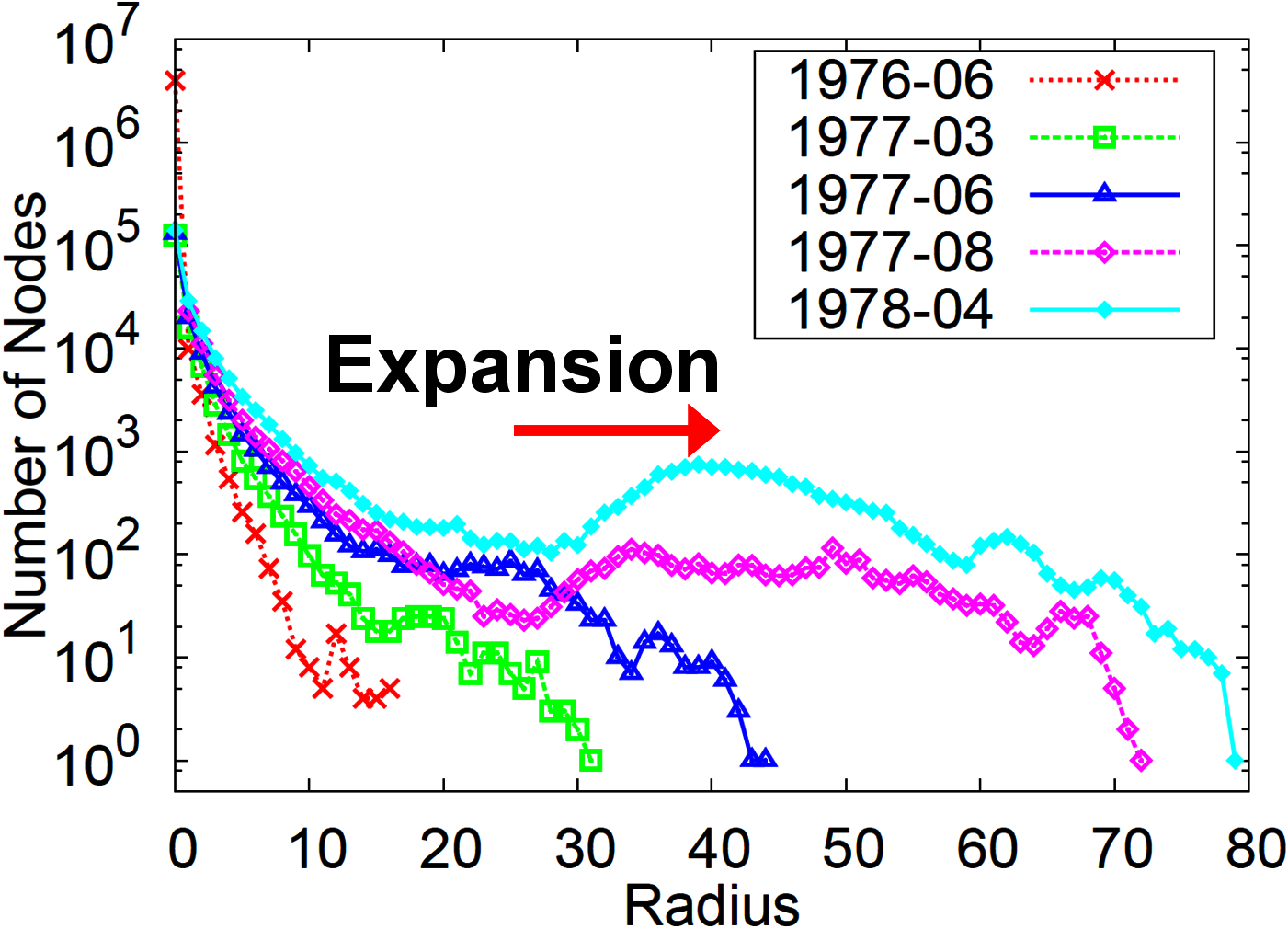}&
    \includegraphics[width=0.45\textwidth]{./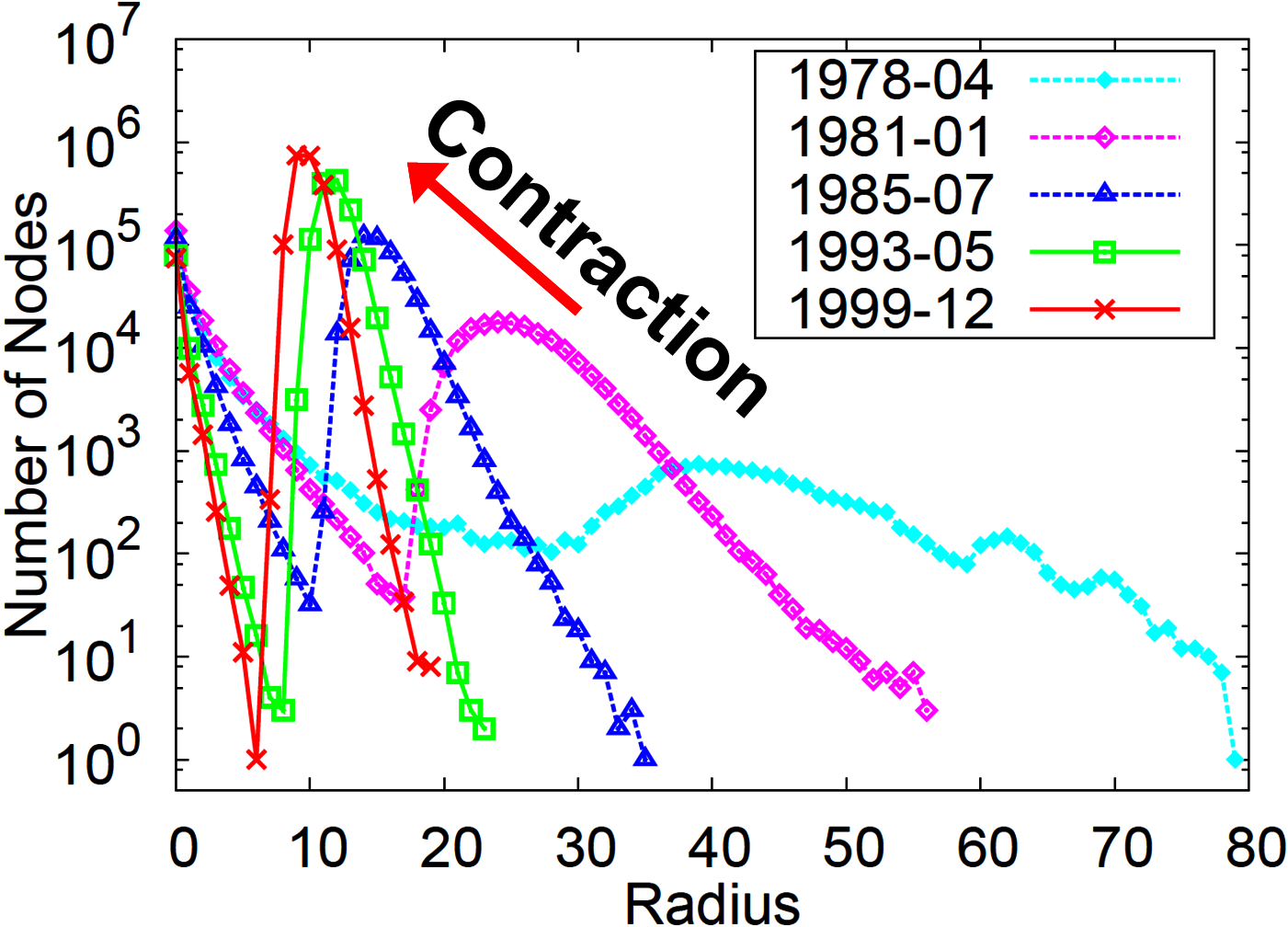}\\
       (a) Patent-Expansion &  (b) Patent-Contraction\\
    \includegraphics[width=0.45\textwidth]{./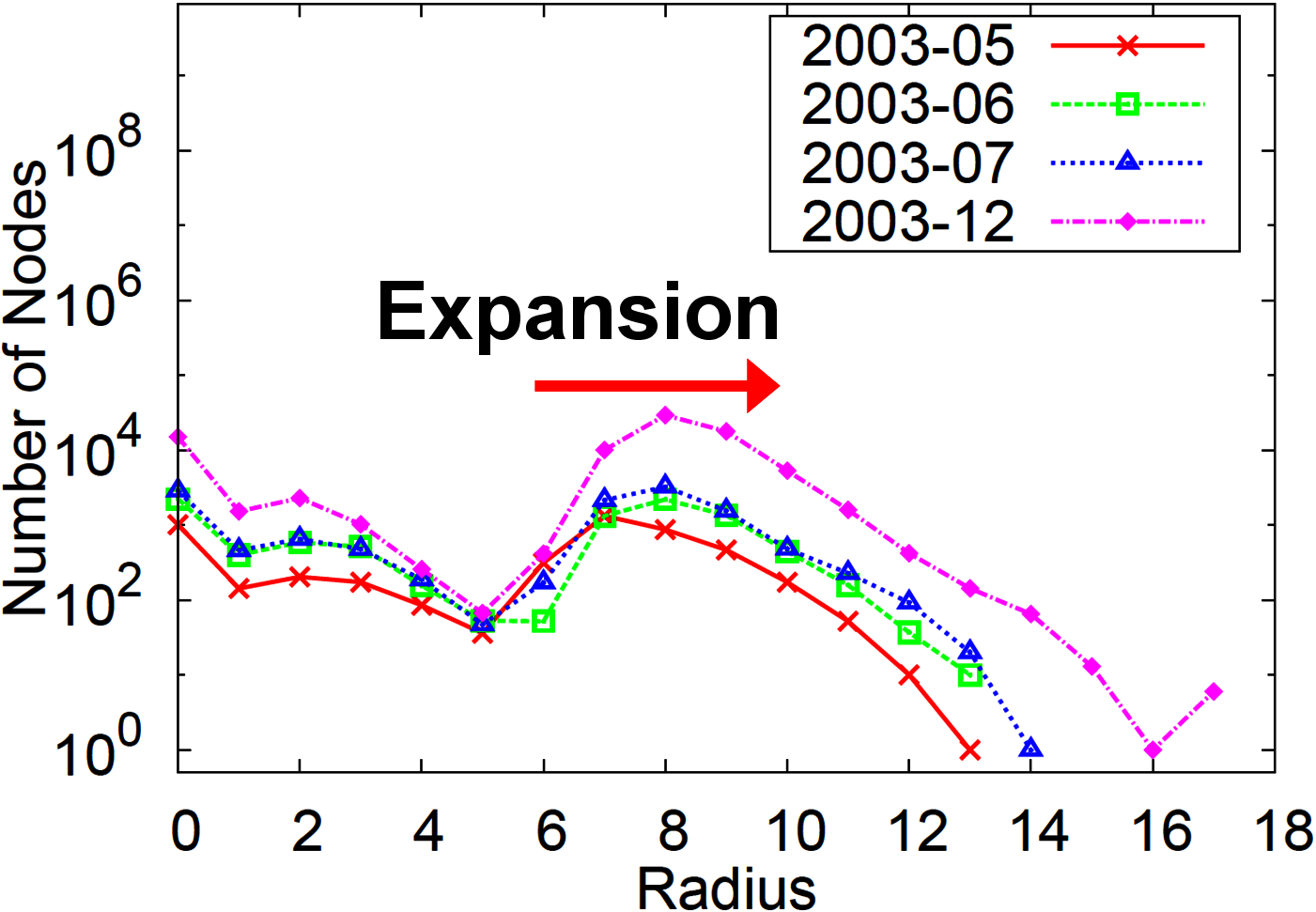}&
    \includegraphics[width=0.45\textwidth]{./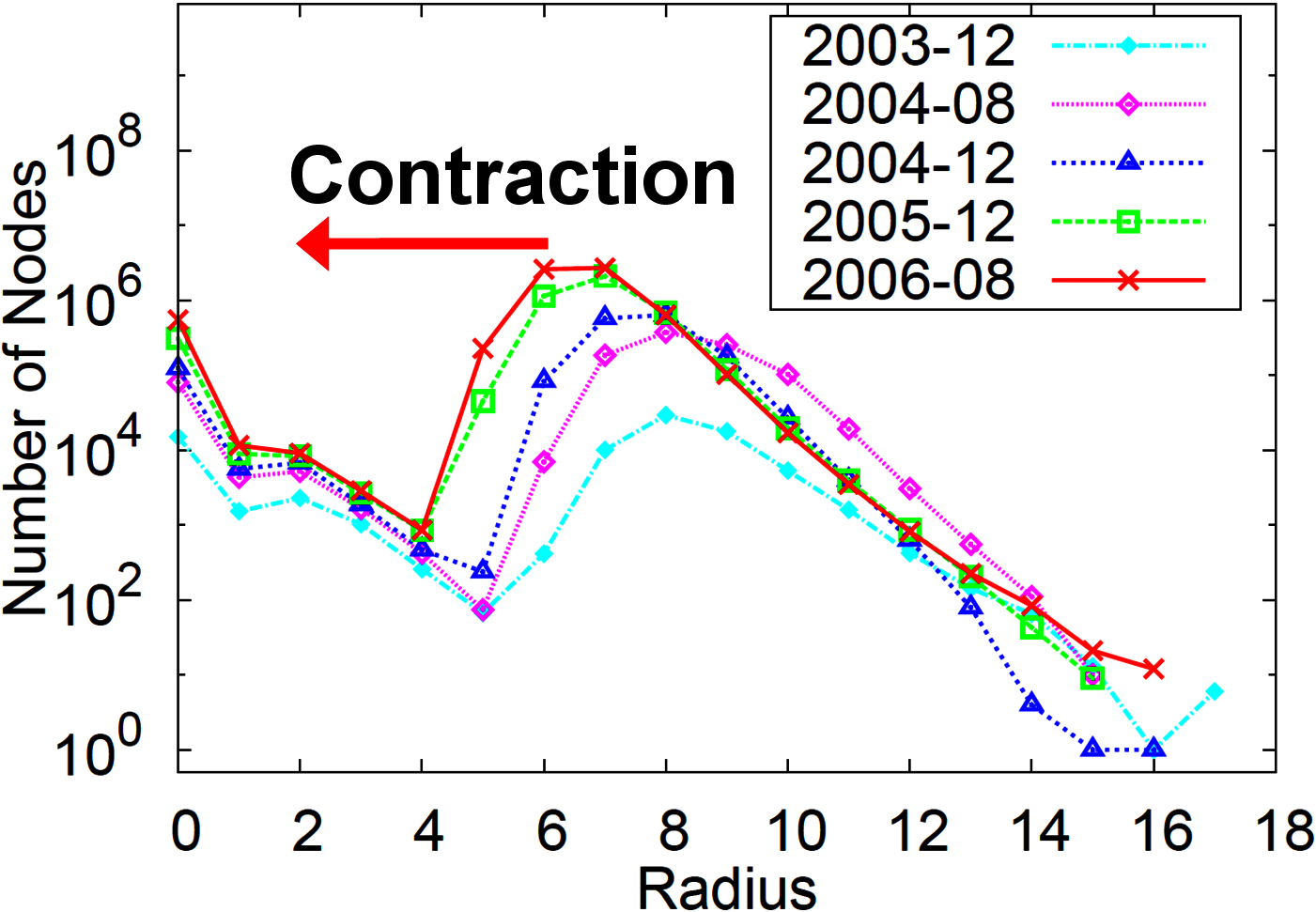}\\
       (c) LinkedIn-Expansion &  (d) LinkedIn-Contraction\\
    \end{tabular}
    \caption{
    Radius distribution over time.
    ``Expansion'': the radius distribution moves
    to the right until the gelling point.
    ``Contraction'': the radius distribution moves to the left
    after the gelling point.
    }
    \label{fig:patent_uradius_overtime}
    \end{center}
\end{figure*}

\newpage

%% Chapter 8
\clearpage
\chapter{ FENNEL: Streaming Graph Partitioning for Massive Scale Graphs}
\label{fennelchapter}
\lhead{\emph{FENNEL: Streaming Graph Partitioning  for Massive Scale Graphs}} 
\section{Introduction}
\label{sec:Fennelintro}

\begin{table*}
\centering 
\begin{tabular}{|c|c|c|c|c|c|c|c|c|}  \hline
\multicolumn{1}{|c}{} & \multicolumn{2}{|c|}{\textsf{Fennel}} &  \multicolumn{2}{|c|}{Best competitor} &  \multicolumn{2}{|c|}{\textsf{Hash Partition}}  &  \multicolumn{2}{|c|}{\textsf{METIS}}\\ 
\cline{1-9}
\multicolumn{1}{|c|}{\# Clusters ($k$)} & \multicolumn{1}{c}{$\lambda$ } & \multicolumn{1}{c|}{$\rho$ } & 
                            \multicolumn{1}{c}{$\lambda$ } & \multicolumn{1}{c|}{$\rho$ } &
						    \multicolumn{1}{c}{$\lambda$ } & \multicolumn{1}{c|}{$\rho$ } & 
							\multicolumn{1}{c}{$\lambda$ } & \multicolumn{1}{c|}{$\rho$ }  \\ 
\cline{1-9}
2 & 6.8\% & 1.1 & 34.3\% & 1.04  & 50\%   & 1 & 11.98\%             & 1.02 \\
4 & 29\%  & 1.1 & 55.0\% & 1.07  & 75\%   & 1 & 24.39\%             & 1.03\\
8 & 48\%  & 1.1 & 66.4\% & 1.10  & 87.5\% & 1 & 35.96\%      & 1.03 \\ \hline
\end{tabular} 
\caption{\label{tab:Fenneltab1}Fraction of edges cut $\lambda$ and the normalized maximum load $\rho$ for \textsf{Fennel}, the previously best-known heuristic 
(linear weighted degrees \cite{stanton}) and hash partitioning of vertices for the Twitter graph with approximately 1.5 billion edges. 
\textsf{Fennel} and best competitor require around 40 minutes, METIS more than 8$\tfrac{1}{2}$ hours. }
\end{table*} 

Big volumes of data are typically managed and analyzed on large distributed systems \cite{dean}. 
Specifically, the data is partitioned across a large number of cheap, commodity machines 
which are  typically connected by gigabit Ethernet. 
Minimizing the communication between the machines is critical, since the communication
cost often dominates computation cost. 
A key problem towards minimizing communication cost for big graph 
data is the {\em  balanced graph partitioning} (BGPA) problem:
partition the vertex set of the graph in a given number of machines
in such a way that each partition is balanced and the number  edges cut is minimized.
The {\em  balanced graph partitioning} problem is a classic \NPhard   
problem \cite{Andreev:2004:BGP:1007912.1007931} for which
several approximation algorithms have been designed.
In practice, systems aim at providing good partitions in order to enhance their performance, e.g., \cite{malewicz,pujol}. 
It is worth emphasizing that the balanced graph partitioning problem appears in various guises in numerous domains
\cite{Karypis:1998:FHQ:305219.305248}.

Another major challenge in the area of big graph data is efficient processing of dynamic graphs.
 For example, new accounts are created and deleted every day in online services such as Facebook, Skype and Twitter. 
Furthermore, graphs created upon post-processing datasets such as Twitter posts 
are also dynamic, see for instance \cite{Angel}. It is crucial to have efficient graph partitioners of 
dynamic graphs. For example, in the Skype service, each time a 
user logs in, his/her online contacts get notified. It is expensive when messages have 
to be sent across different graph partitions since this would typically involve using network infrastructure. 
The balanced graph partitioning problem in the dynamic setting is known as 
{\em streaming graph partitioning} \cite{stanton}. Vertices (or edges) arrive and the decision of 
the placement of each vertex (edge) has to be done ``on-the-fly'' in order to incur as little computational overhead as possible. 

It is worth noting that the state-of-the-art work on graph partitioning seems to roughly 
divide in two main lines of research. Rigorous mathematically work and algorithms that do not 
scale to massive graphs, e.g., \cite{krauthgamer}, and heuristics that are used 
in practice~\cite{metis,Karypis:1998:FHQ:305219.305248,Prabhakaran:2012:MLG:2342821.2342825,stanton}. 
Our work contributes towards bridging the gap between theory and practice. 

The remainder of the Chapter is organized as follows. 
In Section~\ref{sec:Fenneldefinition}, we introduce our graph partitioning framework and 
present our main theoretical result. In Section~\ref{sec:Fennelstreaming}, 
we present our scalable, streaming algorithm. In Section~\ref{sec:Fennelexperiments}, 
we evaluate our method versus the state-of-the-art work on a broad set of real-world 
and synthetic graphs, while in Section~\ref{sec:Fennelsystem} 
we provide our experimental results in the Apache Giraph. 

\section{Proposed Framework}
\label{sec:Fenneldefinition}

{\bf Notation}. Throughout this Chapter we use the following notation. Let $G(V,E)$ be a simple, undirected graph. 
Let the number of vertices and edges be denoted as $|V|=n$ and $|E|=m$. For a subset of vertices 
$S \subseteq V$, let $e(S,S)$ be the set of edges with both end vertices in the set $S$, and let 
$e(S,V\setminus S)$ be the set of edges whose one end-vertex is in the set $S$ and the other is not. 
For a given vertex $v$ let $t_S{v}$ be the number of triangles $(v,w,z)$ such that $w,z \in S$.
We define a \emph{partition} of vertices ${\mathcal P}=(S_1,\ldots,S_k)$ 
to be a family of pairwise disjoint sets vertices, i.e., $S_i \subseteq V$, $S_i \cap S_j = \emptyset$ 
for every $i \neq j$. We call $S_i$ to be a cluster of vertices. Finally, for a graph $G = (V,E)$ and 
a partition $\mathcal{P} = (S_1,S_2,\ldots,S_k)$ of the vertex set $V$, let ${\partial e(\mathcal{P})}$ 
be the set of edges that cross partition boundaries, i.e. ${\partial e(\mathcal{P})} = \cup_{i=1}^k e(S_i,V\setminus S_i)$. 
 
{\bf Graph Partitioning Framework.} We formulate a graph partitioning framework that is based on accounting for the cost of internal edges and the cost of edges cut by 
a partition of vertices in a single global objective function. 

{\it The size of individual partitions}. We denote with $\sigma(S_i)$ the size of the cluster of vertices $S_i$, where $\sigma$ is a mapping to the set of real numbers. Special instances of interest are (1) \emph{edge cardinality} where the size of the cluster $i$ is proportional to the total number of edges with at least one end-vertex in the set $S_i$, i.e. $|e(S_i,S_i)| + |e(S_i,V\setminus S_i)|$, (2) \emph{interior-edge cardinality} where the size of cluster $i$ is proportional to the number of internal edges $|e(S_i,S_i)|$, and (3) \emph{vertex cardinality} where the size of partition $i$ is proportional to the total number of vertices $|S_i|$. The edge cardinality of a cluster is an intuitive measure of cluster size. This is of interest for computational tasks over input graph data where the computational complexity within a cluster of vertices is linear in the number of edges with at least one vertex in the given cluster. For example, this is the case for iterative computations such as solving the power iteration method. The vertex cardinality is a standard measure of the size of a cluster and for some graphs may serve as a proxy for the edge cardinality, e.g. for the graphs with bounded degrees.

{\it The global objective function}. We define a global objective function that consists of two elements: (1) the inter-partition cost $c_{\mathrm{OUT}}:N^k \rightarrow \field{R}_+$ and (2) the intra-partition cost $c_{\mathrm{IN}}:N^k \rightarrow \field{R}_+$. These functions are assumed to be increasing and super-modular (or convex, if extended to the set of real numbers). For every given partition of vertices $\mathcal{P} = (S_1,S_2,\ldots,S_k)$, we define the global cost function as
\begin{eqnarray*}
f({\mathcal P}) &=& c_{\mathrm{OUT}}(|e(S_1,V\setminus S_1)|, \ldots, |e(S_k,V\setminus S_k)|)\\
&& + c_{\mathrm{INT}}(\sigma(S_1),\ldots,\sigma(S_k)).
\end{eqnarray*}

It is worth mentioning some particular cases of interest. Special instance of interest for the inter-partition cost is the linear function in the total number of cut edges $|{\partial e({\mathcal P})}|$. This case is of interest in cases where an identical cost is incurred per each edge cut, e.g. in cases where messages are exchanged along cut edges and these messages are transmitted through some common network bottleneck. For the intra-partition cost, a typical goal is to balance the cost across different partitions and this case is accomodated by defining $c_{\mathrm{INT}}(\sigma(S_1),\ldots,\sigma(S_k)) = \sum_{i=1}^k c(\sigma(S_i))$, where $c(x)$ is a convex increasing function such that $c(0) = 0$. In this case, the intra-partition cost function, being defined as a sum of convex functions of individual cluster sizes, would tend to balance the cluster sizes, since the minimum is attained when sizes are equal.  

We formulate the graph partitioning problem as follows. 

\begin{center}\fbox{
\begin{minipage}{0.8\linewidth}
\textbf{Optimal $k$-Graph Partitioning}
\medskip\par
Given a graph $G=(V,E)$, find a partition $\mathcal{P}^*=\{S_1^*,\ldots,S_k^*\}$ 
of the vertex set $V$, such that $f(\mathcal{P}^*) \ge f(\mathcal{P})$, 
for all partitions $\mathcal{P}$ such that $|\mathcal{P}|=k$. \\

We refer to the partition ${\mathcal P}^*$ as the optimal $k$ graph partition of the graph $G$. 
\end{minipage}
}
\end{center}

{\it Streaming setting}. The streaming graph partitioning 
problem can be defined as follows. Let $G=(V,E)$ be an 
input graph and let us assume that we want to partition the 
graph into $k$ disjoint subsets of vertices. The vertices 
arrive in some order, each one with the set of its 
neighbors. 
We consider three different stream orders, as in \cite{stanton}. 

\squishlist
\item Random: Vertices arrive according to a random permutation. 
\item BFS: This ordering is generated by selecting 
a vertex uniformly at random and performing 
breadth first search starting from that vertex. 
\item DFS: This ordering is identical to the BFS ordering, except 
that we perform depth first search.
\squishend

A $k$-partitioning streaming algorithm has to decide whenever a new vertex arrives to which cluster it is going to be placed. A vertex is never moved after it has been assigned to a cluster. The formal statement of the problem follows.

{\bf Classic Balanced Graph Partitioning.} We consider the traditional instance 
of a graph partitioning problem that is a special case 
of our framework by defining the inter-partition cost 
to be equal to the total number of edges cut and the 
intra-partition cost defined in terms of the vertex cardinalities.  
 
The starting point in the existing literature, e.g., \cite{Andreev:2004:BGP:1007912.1007931, krauthgamer}, is to admit hard cardinality constraints, so that $|S_i^*| \leq \nu \frac{n}{k}$ for $i=1,\ldots,k$ , where $\nu \geq 1$ is a fixed constant. This set of constaints makes the problem significantly hard. Currently, state-of-the-art work depends on the impressive ARV barrier \cite{DBLP:conf/stoc/AroraRV04} which results in a $O(\sqrt{\log{n}})$ approximation factor. 
The typical formulation is the following:

\begin{equation*} \label{typicalIP2}
\framebox{
\begin{minipage}[b]{0.85\linewidth}
\begin{equation*}
\begin{array}{rl}
\hbox{minimize}_{\mathcal{P}=(S_1,\ldots,S_k)}  & |{\partial e(\mathcal{P})}| \\
\hbox{subject to} & |S_i| \leq \nu \frac{n}{k},\ \forall i \in \{1,\ldots,k\}
\end{array}
\end{equation*}
\end{minipage}\medskip\par}
\end{equation*}

{\it Our approach: Just Relax !} The idea behind our approach is to {\em relax} the hard cardinality constraints by introducing a term in the objective $c_{\mathrm{IN}}(\mathcal{P})$ whose minimum is achieved when $|S_i| =  \frac{n}{k}$ for all $i \in \{1,\ldots,k\}$. Therefore, our framework is based on a well-defined global graph partitioning objective function, which allows for a principled design of approximation algorithms and heuristics as shall be demonstrated in 
Section~\ref{sec:Fennelstreaming}. Our graph partitioning method is based on solving the following optimization problem:

\begin{equation} \label{fennel}
\framebox{
\begin{minipage}[b]{0.85\linewidth}
\begin{equation*}
\begin{array}{lr}
\hbox{minimize}_{\mathcal{P}=(S_1,\ldots,S_k)}  & |{\partial e(\mathcal{P})}| + c_{\mathrm{IN}}(\mathcal{P}) \\
\end{array}
\end{equation*}
\end{minipage}\medskip\par}
\end{equation}

{\it Intra-partition cost:} With the goal in mind to favor balanced partitions, we may define the intra-partition cost function by $c_{\mathrm{IN}}({\mathcal P}) = \sum_{i=1}^k c(|S_i|)$ where $c(x)$ is an increasing function choosen to be \emph{super-modular}, so that the following increasing returns property 
holds $c(x + 1) - c(x) \geq c(y+1) - c(y)$, for every $0\leq y\leq x$. 

We shall focus our attention to the following family of functions $c(x) = \alpha x^\gamma$, for $\alpha > 0$ and $\gamma \geq 1$. 
By the choice of the parameter $\gamma$, this family of cost functions allows us to 
control how much the imbalance of cluster sizes is accounted for
in the objective function. In one extreme case where $\gamma = 1$, 
we observe that the objective corresponds to minimizing the number of cut-edges, 
thus entirely ignoring any possible imbalance of the cluster sizes. 
On the other hand, by taking larger values for the parameter $\gamma$, 
the more weight is put on the cost of partition imbalance, 
and this cost may be seen to approximate hard constraints on the imbalance in the limit of large $\gamma$. 
Parameter $\alpha$ is also important. We advocate a principled choice of $\alpha$ independently of whether
it is suboptimal compared to other choices. Specifically, we choose $\alpha = m\frac{k^{\gamma-1}}{n^{\gamma}}$. 
This provides us a proper scaling, since for this specific choice of $\alpha$, our optimization problem 
is equivalent to minimizing a natural normalization of the objective function $ \frac{\sum_{i=1}^k e(S_i,V\setminus S_i)}{m} + \frac{1}{k}\sum_{i=1}^k \left(\frac{|S_i|}{\tfrac{n}{k}}\right)^{\gamma}$.

{\it An equivalent maximization problem}. We note that the optimal $k$ graph partitioning problem admits an equivalent formulation as a maximization problem. It is of interest to consider this alternative formulation as it allows us to make a connection with the concept of graph modularity, which we do later in this section. For a graph $G=(V,E)$ and $S\subseteq V$, we define the function $h: 2^{V} \rightarrow \field{R}$ as: 
$$
h(S) =  |e(S,V\setminus S)| -  c(|S|)
$$
where $h(\emptyset)=h(\{v\})=0$ for every $v \in V$. Given $k\geq 1$ and a partition $\mathcal{P}=\{S_1,\ldots,S_k\}$ of the vertex set $V$, we define the function $g$ as 
$$ 
g(\mathcal{P}) =  \sum_{i=1}^{k} h(S_i).
$$
Now, we observe that maximizing the function $g({\mathcal P})$ over all possible partitions ${\mathcal P}$ of the vertex set $V$ such that $|{\mathcal P}| = k$ corresponds to the $k$ graph partitioning problem. Indeed, this follows by noting that
\begin{eqnarray*}
g({\mathcal P}) &=& \sum_{i=1}^k |e(S_i,S_i)| - c(|S_i|)\\
&=& (m - \sum_{i=1}^k |e(S_i,V\setminus S_i)|) - c(|S_i|)\\
&=& m - f({\mathcal P}). 
\end{eqnarray*}
Thus, maximizing function $g({\mathcal P})$ corresponds to minimizing function $f({\mathcal P})$, which is precisely the objective of our $k$ graph partitioning problem.

{\it Modularity:} We note that when the function $c(x)$ is taken from the family $c(x) = \alpha x^{\gamma}$, for $\alpha > 0$ and $\gamma=2$, our objective has a combinatorial interpretation. Specifically, our problem is equivalent to maximizing the function
$$
\sum_{i=1}^k  [|e(S_i,S_i)| - p {|S_i| \choose 2}]
$$
where $p = \alpha / 2$. In this case, each summation element admits the following intepretation: it corresponds to the difference between 
the realized number of edges within a cluster and the expected number of edges within the cluster under the null-hypothesis that 
the graph is an Erd\"{o}s-R\'{e}nyi random graph with parameter $p$. 
This is intimately related to the concepts of graph 
modularity \cite{girvan2002community,newman2004finding,newman} and quasi-cliques \cite{uno}.
Recall that an approximation algorithm for a maximization problem is meaningful as a notion if the optimum solution is positive. 
However, in our setting our function $g$ does not results as it can easily be seen in a non-negative optimum. 
For instance, if $G$ is the empty graph on $n$ vertices, any partition of $G$ in $k$ parts results in a negative 
objective (except when $k=n$ when the objective becomes 0). Therefore, we need to shift our objective in 
order to come up with a multiplicative approximation algorithm.
We define the following shifted objective, along the lines of Chapter~\ref{densestchapter}.

\begin{definition}
$k \geq 1$. Also let $\mathcal{P}^*=\{S_1^*,\ldots,S_k^*\}$ be a partition of the vertex set $V$.
We define the  function $g$  as 
$$
\tilde g(\mathcal{P}) =  \alpha {n \choose 2} + \sum_{i=1}^{k} f(S_i).
$$
\end{definition}

\begin{claim} 
For any partition $\mathcal{P}$,  $\tilde g(\mathcal{P})  \geq 0$. 
\end{claim}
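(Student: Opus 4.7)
The claim should fall out of a direct expansion once we identify the intended (consistent) definition of $h$. Reading the derivation $g(\mathcal{P}) = \sum_i |e(S_i,S_i)| - c(|S_i|) = m - f(\mathcal{P})$ in the excerpt, I first note that the displayed $h(S) = |e(S,V\setminus S)| - c(|S|)$ contains a typo: the consistent definition, which is what is actually used above, is $h(S) = |e(S,S)| - c(|S|)$. Similarly, the $f(S_i)$ inside the definition of $\tilde g$ should be read as $h(S_i)$ (since $f$ takes a partition, not a single set). With this reading,
\[
\tilde g(\mathcal{P}) \;=\; \alpha\binom{n}{2} + g(\mathcal{P}) \;=\; \alpha\binom{n}{2} + m - |\partial e(\mathcal{P})| - \sum_{i=1}^{k} c(|S_i|).
\]

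\textbf{Splitting into two non-negative pieces.} I would then regroup as
\[
\tilde g(\mathcal{P}) \;=\; \underbrace{\bigl(m - |\partial e(\mathcal{P})|\bigr)}_{=\,\sum_i |e(S_i,S_i)|\,\ge\,0} \;+\; \underbrace{\Bigl(\alpha\binom{n}{2} - \sum_{i=1}^{k} c(|S_i|)\Bigr)}_{\ge\,0}.
\]
The first parenthesis is exactly the total number of non-cut edges $\sum_i |e(S_i,S_i)|$, which is trivially non-negative since $|\partial e(\mathcal{P})| \le m$. For the second parenthesis I would invoke the modularity-style choice that motivated the shift, namely $c(x) = \alpha\binom{x}{2}$ (the case $\gamma = 2$, under which the preceding paragraph derived the combinatorial interpretation $\sum_i[|e(S_i,S_i)| - p\binom{|S_i|}{2}]$). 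Then the desired inequality reduces to the elementary combinatorial bound
\[
\sum_{i=1}^{k}\binom{|S_i|}{2} \;\le\; \binom{n}{2},
\]
which is immediate because the left-hand side counts unordered pairs of vertices that lie in a common cluster, while the right-hand side counts all unordered pairs in $V$ (equality holding only when $k=1$).

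\textbf{Main obstacle.} The calculation itself is a one-liner; the only genuinely delicate point is pinning down for which family of cost functions $c$ the shifted objective is actually non-negative. A uniform inequality $\alpha\binom{n}{2} \ge \sum_i c(|S_i|)$ fails for a generic super-modular $c(x)=\alpha x^{\gamma}$ (e.g., already at $\gamma=2$ a degenerate partition with one full cluster makes $\sum_i |S_i|^2 = n^2 > n(n-1) = 2\binom{n}{2}$). Hence I would explicitly frame the claim under the modularity-compatible normalization $c(x) = \alpha\binom{x}{2}$ (equivalently, absorb the linear-in-$x$ discrepancy between $x^2$ and $2\binom{x}{2}$ into the shift), under which the two-line decomposition above gives $\tilde g(\mathcal{P})\ge 0$ for every partition $\mathcal{P}$, completing the proof.
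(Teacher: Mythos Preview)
Your proposal is correct and follows essentially the same route as the paper. The paper's one-line proof simply invokes the inequality $n^2 \ge s_1^2 + \cdots + s_k^2$ for positive $s_i$ summing to $n$, which is equivalent (after subtracting $\tfrac{1}{2}\sum s_i = \tfrac{n}{2}$ from both sides) to your $\sum_i \binom{|S_i|}{2} \le \binom{n}{2}$; combined with $m - |\partial e(\mathcal P)| \ge 0$ this gives the claim. You have also correctly flagged the typos in the definitions of $h$ and $\tilde g$, and your remark that the non-negativity hinges on the modularity-style normalization $c(x)=\alpha\binom{x}{2}$ (rather than a generic $\alpha x^\gamma$) is a point the paper glosses over.
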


\begin{proof} 
The proof follows directly from the fact that for any positive-valued $s_1,s_2,\ldots,s_k$ such that $\sum_{i=1}^k s_i=n$, the following holds $n^2 \geq s_1^2 +\cdots + s_k^2$.
\end{proof} 

Due to the combinatorial interpretation of the objective function, we design 
a semidefinite programming approximation algorithm. Before that, we
see how random partitioning performs. 

\spara{Random Partitioning:}
Suppose each vertex is assigned to one of $k$ partitions uniformly at random. This simple 
graph partition is a faithful approximation of hash partitioning of vertices that is commonly used in practice. 
In expectation, each of the $k$ clusters will have $\frac{n}{k}$ vertices. Let $S_1,\ldots,S_k$ be the 
resulting $k$ clusters. How well does this simple algorithm perform with respect to our objective? 
Let $\mathcal{P}^*$ be an optimal partition for the optimal quasi-clique problem, i.e. 
$g({\mathcal P}^*) \geq g({\mathcal P})$, for every partition ${\mathcal P}$ of the vertex set 
$V$ into $k$ partitions. Notice that ${\mathcal P}^*$ is also an optimal partition for the optimal 
quasi-clique problem with shifted objective function. Now, note that an edge $e=(u,v)$ has 
probability $\frac{1}{k}$ that both its endpoints belong to the same cluster. 
By the linearity of expectation, we obtain by simple calculations:

\begin{eqnarray*} 
\Mean{\tilde g(S_1,\ldots,S_k)} &=& \frac{|E|}{k}+ \alpha \frac{k-1}{k} {n \choose 2}\\
&\geq & \frac{1}{k} \Bigg(|E|+ \alpha {n \choose 2} \Bigg)\\
&\geq & \frac{1}{k} \tilde g({\mathcal P}^*)
\end{eqnarray*} 
where last inequality comes from the simple upper bound $\tilde g(\mathcal{P}) \leq |E| + \alpha {n \choose 2}$ for any partition ${\mathcal P}$.

\spara{An SDP Rounding Algorithm}

We define a vector variable $x_i$ for each vertex $i \in V$ and we allow $x_i$ to be one of the unit vectors $e_1,\ldots,e_k$,
where $e_j$ has only the $j$-th coordinate 1. 

\medskip
\begin{equation} \label{IP}
\framebox{
\begin{minipage}[b]{0.8\linewidth}
%\medskip\par
\begin{equation*}
\begin{array}{rl}
\hbox{maximize}  &  \sum_{e=(i,j)} x_ix_j+ \alpha \sum_{i < j} \Big( 1- x_ix_j \Big)\\
\hbox{subject to} & x_i \in \{e_1,\ldots,e_k\},\ \forall i \in \{1,\ldots,n\}
\end{array}
\end{equation*}
\end{minipage}\medskip\par}
\end{equation}

\noindent We obtain the following semidefinite programming relaxation:

\begin{equation} \label{IP}
\framebox{
\begin{minipage}[b]{0.8\linewidth}
%\medskip\par
\begin{equation*}
\begin{array}{rl}
\hbox{maximize}  &  \sum_{e=(i,j)} y_{ij} +\alpha \sum_{i < j} \Big( 1 - y_{ij} \Big)\\
\hbox{subject to} & y_{ii}=1,\ \forall i \in \{1,\ldots,n\}\\
&  y_{ij} \geq 0,\ \forall i \neq j\\
& Y \succeq 0,\ Y \text{~symmetric}
\end{array}
\end{equation*}
\end{minipage}\medskip\par}
\end{equation}

\noindent The above SDP can be solved within an additive error of $\delta$ of the optimum in time polynomial in the size of the input and $\log{(\frac{1}{\delta})}$ by interior point algorthms or the ellipsoid method \cite{alizadeh}. In what follows, we refer to the optimal value of the integer program as $\mathrm{OPT}_{\mathrm{IP}}$ 
and of the semidefinite program as $\mathrm{OPT}_{\mathrm{SDP}}$. Our algorithm is the following: 

\begin{center}\fbox{
\begin{minipage}{0.8\linewidth}
\textbf{SDP-Relax}
\begin{itemize}
    \item {\em Relaxation:} Solve the semidefinite program \cite{alizadeh} and compute a Cholesky decomposition of $Y$. Let $v_0,v_1,\ldots,v_n$ 
    be the resulting vectors. 
    \item {\em Randomized Rounding:} Randomly choose $t=\lceil \log{k} \rceil$ unit length vectors $r_i \in \field{R}^{k}, i=1,\ldots, t$. 
    These $t$ random vectors define $2^t=k$ possible regions in which the vectors $v_i$ can fall: one region for each distinct possibility 
    of whether $r_jv_i \geq 0$ or $r_j v_i < 0$.  Define a cluster by adding all vertices whose vector $v_i$ fall in a given region. 
\end{itemize}
\end{minipage}
}
\end{center}

\begin{theorem} 
\label{thrm:cleansdp}
Algorithm SDP-Relax is a $\Omega(\frac{\log{k}}{k})$ approximation algorithm for the Optimal Quasi-Clique Partitioning. 
\label{thm:sdpclean}
\end{theorem}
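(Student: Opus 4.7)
\smallskip\noindent\emph{Proof plan.} The plan is to analyze the expected value of the rounded partition produced by \textsf{SDP-Relax} and compare it to $\mathrm{OPT}_{\mathrm{SDP}}$, which is an upper bound on $\mathrm{OPT}_{\mathrm{IP}} = \tilde g(\mathcal{P}^{\star})$. First I would solve the SDP and take a Cholesky factorization to obtain unit vectors $v_1,\ldots,v_n$ with $\langle v_i, v_j\rangle = y_{ij}\ge 0$. The non-negativity constraint $y_{ij}\ge 0$ is critical here: it forces the angle $\theta_{ij}=\arccos\langle v_i,v_j\rangle$ to lie in $[0,\pi/2]$, so the SDP value can be rewritten as $\mathrm{OPT}_{\mathrm{SDP}} = \sum_{(i,j)\in E}\cos\theta_{ij} + \alpha\sum_{i<j}(1-\cos\theta_{ij})$. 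A standard Goemans--Williamson hyperplane argument then shows that after drawing $t=\lceil \log_2 k\rceil$ independent random hyperplanes through the origin, the vectors $v_i$ and $v_j$ fall in the same cell (hence in the same cluster) with probability exactly $(1-\theta_{ij}/\pi)^t$, independently for each pair by independence of the hyperplanes.

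Taking expectation of the rounded objective yields
\[
\mathbb{E}[\mathrm{val}] \;=\; \sum_{(i,j)\in E}\bigl(1-\tfrac{\theta_{ij}}{\pi}\bigr)^t \;+\; \alpha\sum_{i<j}\Bigl[1-\bigl(1-\tfrac{\theta_{ij}}{\pi}\bigr)^t\Bigr].
\]
The whole analysis will reduce to verifying two one-variable inequalities valid for every $\theta\in[0,\pi/2]$: an \emph{edge inequality} of the form $(1-\theta/\pi)^t \ge c\,(\log k/k)\,\cos\theta$ for some absolute constant $c>0$, and a \emph{penalty inequality} $1-(1-\theta/\pi)^t \ge (1-1/k)(1-\cos\theta)$. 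The penalty inequality is the easier one: its ratio tends to $+\infty$ as $\theta\to 0^{+}$, and at $\theta=\pi/2$ it equals $1-2^{-t}=1-1/k$, after which routine monotonicity finishes the job.

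The hard part will be the edge inequality, which boils down to lower bounding $h(\theta)=(1-\theta/\pi)^t/\cos\theta$ on $[0,\pi/2)$. Differentiating $\log h$ gives the stationary condition $\tan\theta\,(\pi-\theta)=t$; writing $\theta^{\star}=\pi/2-\phi^{\star}$ and expanding for small $\phi^{\star}$ yields $\phi^{\star}\sim \pi/(2t)$. Plugging back, $(1-\theta^{\star}/\pi)^t=2^{-t}(1+1/t)^t\sim e/k$ while $\cos\theta^{\star}=\sin\phi^{\star}\sim \pi/(2t)$, so $h(\theta^{\star})\sim (2e/\pi)\cdot(\log_2 k)/k$. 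A convexity/monotonicity check on $\log h$ (strictly decreasing on $[0,\theta^{\star}]$ and strictly increasing on $[\theta^{\star},\pi/2)$) then certifies that $\theta^{\star}$ is the global minimum, giving the edge inequality with $c=\Theta(1)$.

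Combining the two inequalities term by term and using $1-1/k\ge \Omega(\log k/k)$ for $k\ge 2$, one obtains $\mathbb{E}[\mathrm{val}]\ge \Omega(\log k/k)\cdot \mathrm{OPT}_{\mathrm{SDP}}\ge \Omega(\log k/k)\cdot \mathrm{OPT}_{\mathrm{IP}}$. To upgrade this expectation bound to a high-probability guarantee, I would repeat the rounding $O((k/\log k)\log n)$ times and return the best partition, using the same Markov-type boosting estimate that was used in the proof of Theorem~\ref{thrm:babissdp}: since $\mathbb{E}[\mathrm{val}]$ is at least a $\Theta(\log k/k)$ fraction of a trivial upper bound on the objective, a standard two-point argument shows a single rounding succeeds with probability $\Omega(\log k/k)$, and repetition amplifies this to $1-o(1)$.
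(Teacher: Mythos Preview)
Your proposal is correct and follows essentially the same line as the paper's proof: both analyze the same SDP relaxation with the constraint $y_{ij}\ge 0$ (forcing $\theta_{ij}\in[0,\pi/2]$), compute the same coincidence probability $(1-\theta_{ij}/\pi)^t$ under $t=\lceil\log_2 k\rceil$ random hyperplanes, and reduce the analysis to lower-bounding the two one-variable ratios $f_1(\theta)=(1-\theta/\pi)^t/\cos\theta$ and $f_2(\theta)=\bigl(1-(1-\theta/\pi)^t\bigr)/(1-\cos\theta)$ on $[0,\pi/2]$, with the key step being the stationarity equation $\tan\theta\,(\pi-\theta)=t$.

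The only substantive differences are cosmetic or slightly sharper: you obtain $\rho_2\ge 1-1/k$ whereas the paper settles for $\rho_2\ge 1/2$ (either suffices since $\rho_1=\Theta(\log k/k)$ is the bottleneck), and your asymptotic expansion around $\theta^\star=\pi/2-\phi^\star$ yields the constant $2e/\pi$ rather than the paper's cruder lower bound $1/\pi$. You also append an explicit boosting step via repetition, which the paper's proof of this particular theorem omits (stopping at the expectation bound).
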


\begin{proof}
 
Let $C_k$ be the score of the partition produced by our randomized rounding. Define $A_{i,j}$ to be the event that vertices $i$ and $j$ are assigned to the same partition. Then,

\begin{align*}
\Mean{C_k} &= \sum_{e=(i,j)} \Prob{A_{i,j}} + \alpha \sum_{i < j} \Big( 1 - \Prob{A_{i,j}}  \Big) \\ 
\end{align*}

As in Goemans-Williamson \cite{goemans}, given a random hyperplane with normal vector $r$ that goes through the origin, the probability
of $\hbox{sgn}(v_i^Tr) =\hbox{sgn}(v_j^Tr)$, i.e., $i$ and $j$ fall on the same side of the hyperplane, is $1-\frac{\arccos(v_i^Tv_j)}{\pi}$. 
Since we have $t$ independent hyperplanes 
$$
\Prob{A_{i,j}}= \left(1-\frac{\arccos(v_i^Tv_j)}{\pi}\right)^t.
$$

Let us define, for  $t \geq 1$,
$$
f_1(\theta) = \frac{\left(1-\frac{\theta}{\pi}\right)^t}{\cos(\theta)},\ \theta \in [0,\frac{\pi}{2}) 
$$
and
$$
\rho_1 = \min_{0\leq \theta < \frac{\pi}{2}} f_1(\theta).
$$

Similarly, define for  $t \geq 1$,
$$
f_2(\theta) = \frac{1-\left(1-\frac{\theta}{\pi}\right)^t}{1-\cos(\theta)},\ \theta \in [0,\frac{\pi}{2}) 
$$
and
$$
\rho_2 = \min_{0\leq \theta < \frac{\pi}{2}} f_2(\theta).
$$

We wish to find a $\rho$ such that $\Mean{C_k} \geq \rho \mathrm{OPT}_{\mathrm{SDP}}$. Since, $\mathrm{OPT}_{\mathrm{SDP}} \geq \mathrm{OPT}_{\mathrm{IP}}$, this would then imply $\Mean{C_k} \geq \rho \mathrm{OPT}_{\mathrm{IP}}$. 

We note that
$$
f_1'(\theta) = \frac{-\frac{t}{\pi}\left(1-\frac{\theta}{\pi}\right)^{t-1} \cos(\theta) + \left(1-\frac{\theta}{\pi}\right)^t \sin(\theta)}{\cos^2(\theta)}.
$$
It follows that $f_1'(\theta(t)) = 0$ is equivalent to 
$$
t =(\pi-\theta(t))\tan(\theta(t)).
%\label{equ:foc}
$$
Notice that the following two hold
$$
\lim_{t\rightarrow \infty} \theta(t) = \frac{\pi}{2}
%\label{equ:lim}
$$
and
\begin{equation}
\frac{\pi}{2}\tan(\theta(t)) \leq t \leq \pi \tan(\theta(t)).
\label{equ:tbounds}
\end{equation}

We next show that $f_1(\theta(t)) \geq \frac{1}{\pi} t 2^{-t}$, by the the following series of relations
\begin{eqnarray*}
f_1(\theta(t)) & = & \frac{\left(1-\frac{\theta(t)}{\pi}\right)^t}{\cos(\theta(t))} = \sqrt{1+\tan^2(\theta(t))}\left(1-\frac{\theta(t)}{\pi}\right)^t\\
& \geq & \sqrt{1+\tan^2(\theta(t))} 2^{-t}  \geq \sqrt{1 + \left(\frac{t}{\pi}\right)^2}2^{-t}\\
& \geq & \frac{1}{\pi} t 2^{-t}
\end{eqnarray*}
where the second equality is by the fact $\cos(\theta) = \frac{1}{\sqrt{1+\tan^2(\theta)}}$, the first inequality is by the fact $\theta(t) \leq \frac{\pi}{2}$, and the second inequality is by (\ref{equ:tbounds}).

Thus, for $t = \log_2(k)$, we conclude
$$
\rho_1 \geq \frac{1}{\pi\log(2)} \frac{\log(k)}{k}.
$$

Now, we show that $\rho_2=\frac{1}{2}$.  First we show that for any $t \geq 1$, $f_2(\theta) \geq 1/2$. To this end, we note
\begin{align*}
\frac{1-\left(1-\frac{\theta}{\pi}\right)^t}{1-\cos(\theta)} &\geq \frac{1}{2}  \Leftrightarrow \frac{1}{2} \Big( 1+ \cos(\theta)\Big) \geq \left(1-\frac{\theta}{\pi}\right)^t.
\end{align*}
Notice that for all $\theta \in [0,\pi/2)$, if $t_1 \geq t_2 \geq 1$, then $ \Big( 1- \frac{\theta}{\pi} \Big)^{t_1}  \leq \Big( 1- \frac{\theta}{\pi} \Big)^{t_2}$.
Hence, it suffices $\frac{1}{2} \Big( 1+ \cos(\theta) \Big) \geq \left(1-\frac{\theta}{\pi}\right)$.
%\begin{align*}
%\frac{1}{2} \Big( 1+ \cos(\theta) \Big) \geq \left(1-\frac{\theta}{\pi}\right).
%\end{align*}
With the use of simple calculus, the latter is true and is also tight for $\theta=0$ and $\theta = \pi/2$. 
It is worth observing the opposite trend of the values of $\rho_1$ and $\rho_2$. The reason is that $\Prob{A_{i,j}}$
drops as we use more hyperplanes and, of course, $1-\Prob{A_{i,j}}$  grows. 
Now, we can establish the following lower bound on the expected score of our randomized rounding procedure. Let $\theta_{i,j} = \arccos(v_i^Tv_j)$. 

\begin{align*}
\Mean{C_k} &= \sum_{e=(i,j)} \Prob{A_{i,j}} + \alpha \sum_{i < j} \Big( 1 - \Prob{A_{i,j}}  \Big) \\ 
                  &= \sum_{e=(i,j)}  \left(1-\frac{\theta_{i,j}}{\pi}\right)^t + \alpha \sum_{i < j} \Big( 1 -\left(1-\frac{\theta_{i,j}}{\pi}\right)^t   \Big) \\ 
                  &\geq  \sum_{e=(i,j)} \rho_1 \cos(\theta_{i,j}) + \alpha \sum_{i < j} \rho_2 \Big( 1 -  \cos(\theta_{i,j})  \Big) \\  
                  &\geq \min\{\rho_1,\rho_2\} \mathrm{OPT}_{\mathrm{SDP}}  \\ 
                  &\geq \min\{\rho_1,\rho_2\} \mathrm{OPT}_{\mathrm{IP}}
\end{align*}

It suffices to set $\rho = \min\{\rho_1,\rho_2\}$. The above analysis shows that our algorithm is a $\rho$-approximation
algorithm, where $\rho = \Omega(\frac{\log(k)}{k})$. %Table~\ref{tab:tab1} shows the approximation ratio we obtain for small $k$ values. 

\end{proof}

The approximation guarantees that hold for the shifted objective function have the following meaning for the 
original objective function. Suppose that $\mathcal{P}$ is a $\rho$-approximation with respect 
to the shifted-objective quasi-clique partitioning problem, i.e. $\tilde g(\mathcal{P}) \geq \rho \tilde g(\mathcal{P}^*)$. Then, it holds
$$
g(\mathcal{P}) \geq \rho g(\mathcal{P}^*) - (1-\rho)\alpha {n \choose 2}.
$$
Notice that this condition is equivalent to 
$$
g(\mathcal{P}) - g(\mathcal{P}^*) \geq -(1-\rho)[g(\mathcal{P}^*) +\alpha {n \choose 2}].
$$
We wish to outline that 
this approximation guarantee is near to a $\rho$-approximation if the parameter $\alpha$ is small 
enough so that the term $g(\mathcal{P}^*)$ dominates the term $\alpha {n\choose 2}$. For example, for an input 
graph that consists of $k$ cliques, 
we have that $g(\mathcal{P}^*) \geq \alpha {n\choose 2}$ corresponds to $\frac{\alpha}{1-\alpha} \leq \frac{1}{k}\frac{n+k}{n+1}$, 
from which we conclude that it suffices that $\alpha \leq \frac{1}{k+1}$.

\spara{Alternative Roundings} 

One may ask whether the relaxation of Frieze and Jerrum \cite{friezejerrum}, Karger, Motwani and Sudan \cite{kargercoloring} can improve significalty 
the approximation factor. We provide negative evidence. 
Before we go into further details, we notice that the main ``bottleneck'' in our approximation is the 
probability of two vertices being in the same cluster, as $k$ grows.
The probability  that $i,j$ are in the same cluster in our rounding is $p(\theta)$ where 

$$
p(\theta) = \Big( 1- \tfrac{\theta}{\pi} \Big)^{\log{k}}.
$$

\noindent Suppose $\theta = \frac{\pi}{2} (1-\epsilon)$. Then, 
\begin{align*} 
p(\theta) &= \Big( 1- \tfrac{\theta}{\pi} \Big)^{\log{k}} = \Big( \frac{1+\epsilon}{2} \Big)^{\log{k}} \\
          & = \frac{1}{k} + \epsilon \frac{\log{k}}{k} + O(\epsilon^2). 
\end{align*}

As we see from Lemma~5 in \cite{friezejerrum}, for this $\theta$, the asymptotic expression matches ours:
\begin{eqnarray*}
N_k(\cos(\theta))& \approx & \frac{1}{k} + \frac{2\log{k}}{k} \cos{\Big( \frac{\pi}{2}(1-\epsilon)\Big)}\\
& = & \frac{1}{k} + \pi \epsilon \frac{\log{k}}{k} + O(\epsilon^2). 
\end{eqnarray*}

\section{One-Pass Streaming Algorithm}
\label{sec:Fennelstreaming}

We derive a streaming algorithm by using a greedy assignment of vertices to partitions as follows: assign each arriving vertex 
to a partition such that 
the objective function of the $k$ graph partitioning problem, defined as a maximization problem, is increased the most. 
Formally, given that current vertex partition is $\mathcal{P} = (S_1,S_2,\ldots,S_k)$, a vertex $v$ is assigned to partition $i$ such that
\begin{eqnarray*}
&& g(S_1,\ldots,S_i\cup \{v\},\ldots,S_j,\ldots,S_k)\\
&\geq & g(S_1,\ldots,S_i,\ldots,S_j\cup \{v\},\ldots,S_k), \hbox{ for all } j\in [k].
\end{eqnarray*}  
Defining $\delta g(v,S_i) = g(S_1,\ldots,S_i\cup \{v\},\ldots,S_j,\ldots,S_k) - g(S_1,\ldots,S_i,\ldots,S_j,\ldots,S_k)$, the above greedy assignment of vertices corresponds to that in the following algorithm.

\begin{center}\fbox{
\begin{minipage}{0.8\linewidth}
\textbf{Greedy vertex assignment}
\begin{itemize}
    \item Assign vertex $v$ to partition $i$ such that $\delta g(v,S_i) \geq \delta g(v,S_j)$, for all $j\in [k]$\\
\end{itemize}
\end{minipage}
}
\end{center}

{\it Special case: edge-cut and balanced vertex cardinality}. This is a special case of introduced 
that we discussed in Section~\ref{sec:Fenneldefinition}. In this case, $\delta g(v,S_l) = |N(v) \cap S_l| - \delta c(|S_l|)$, where $\delta c(x) = c(x+1)-c(x)$, for $x\in \field{R}_+$, and $N(v)$ denotes the set of neighbors of vertex $v$. 
The two summation elements in the greedy index $\delta g(v,S_l)$ account for the two underlying objectives of minimizing the number of cut edges and balancing of the partition sizes. 
Notice that the component $|N(v)\cap S_i|$ corresponds to the number of neighbours 
of vertex $v$ that are assigned to partition $S_i$. In other words, this corresponds 
to the degree of vertex $v$ in the subgraph induced by $S_i$. On the other hand, 
the component $\delta c(|S_i|)$ can be interpreted as the marginal cost of increasing 
the partition $i$ by one additional vertex. 

For our special family of cost functions $c(x) = \alpha x^\gamma$, 
we have $\delta c(x) = \alpha \gamma x^{\gamma - 1}$. 
For $\gamma = 1$, the greedy index rule corresponds to 
assigning a new vertex $v$ to partition $i$ with the largest 
number of neighbours in $S_i$, i.e $|N(v) \cap S_i|$. This is 
one of the greedy rules considered by Stanton and Kliot~\cite{stanton}, 
and is a greedy rule that may result in highly imbalanced partition sizes. 

On the other hand, in case of quadratic cost $c(x) = \frac{1}{2} x^2$, 
the greedy index is $|N(v)\cap S_i| - |S_i|$, and the greedy assignment 
corresponds to assigning a new vertex $v$ to partition $i$ that minimizes 
the number of \emph{non-neighbors} of $v$ inside $S_i$, i.e. $|S_i \setminus N(v)|$. 
Hence, this yields the following heuristic: 
{\em place a vertex to the partition with the least number of non-neighbors} \cite{Prabhakaran:2012:MLG:2342821.2342825}. 
This assignment accounts for both the cost of cut edges and the balance of partition sizes.

Finally, we outline that in many applications there exist very strict constraints 
on the load balance. Despite the fact that we investigate the effect of the parameter 
$\gamma$ on the load balance, one may apply the following algorithm, which enforces 
to consider only machines whose load is at most $\nu \times  \frac{n}{k}$.
This algorithm for $1 \leq \gamma \leq 2$ amounts to interpolating between 
the basic heuristics of \cite{stanton} and \cite{Prabhakaran:2012:MLG:2342821.2342825}.
The overall complexity of our algorithm is $O(n+m)$.

\begin{center}\fbox{
\begin{minipage}{0.85\linewidth}
\textbf{Greedy vertex assignment with threshold $\nu$}
\begin{itemize}
    \item Let $I_{\nu}=\{i: \mu_i \leq \nu \frac{n}{k} \}$. 
    Assign vertex $v$ to partition $i\in I_\nu$ such that $\delta g(v,S_i) \geq \delta g(v,S_j)$, for all $j\in I_\nu$
\end{itemize}
\end{minipage}
}
\end{center}

\section{Experimental Evaluation}
\label{sec:Fennelexperiments}

In this section we present results of our experimental evaluations of the quality of graph partitions 
created by our method and compare with alternative methods. We first describe our experimental setup 
in Sections~\ref{subsec:setup}, and then present our findings using synthetic and real-world graphs, 
in Section~\ref{subsec:synthetic} and \ref{subsec:realworld}, respectively. 

\begin{table}[t]
\begin{center}
\small	
\begin{tabular}{r|rrl|}
\multicolumn{1}{c}{}  &
\multicolumn{1}{c}{\sf Nodes} &
\multicolumn{1}{c}{\sf Edges} &
\multicolumn{1}{c}{\sf Description}\\ \cline{2-4}
\textsf{amazon0312} & 400\,727 & 2\,349\,869  & Co-purchasing  \\
\textsf{amazon0505} & 410\,236 & 2\,439\,437 & Co-purchasing  \\
\textsf{amazon0601} & 403\,364 & 2\,443\,311 & Co-purchasing   \\
\textsf{as-735} & 6\,474  & 12\,572 & Auton.\ Sys.\  \\
\textsf{as-Skitter}     & 1\,694\,616   & 11\,094\,209 & Auton.\ Sys.\  \\
\textsf{as-caida} & 26\,475   & 53\,381 & Auton.\ Sys.\ \\
\textsf{ca-AstroPh} & 17\,903 & 196\,972  & Collab.\  \\
\textsf{ca-CondMat}    & 21\,363 & 91\,286 & Collab.\  \\
\textsf{ca-GrQc} & 4\,158  & 13\,422 & Collab.\  \\
\textsf{ca-HepPh} & 11\,204 & 117\,619  & Collab.\  \\
\textsf{ca-HepTh} &   8\,638  & 24\,806 & Collab.\   \\
\textsf{cit-HepPh} & 34\,401 & 420\,784 &  Citation  \\
\textsf{cit-HepTh}    & 27\,400 & 352\,021  &     Citation    \\
\textsf{cit-Patents}  & 3\,764\,117  & 16\,511\,740    & Citation    \\
\textsf{email-Enron} & 33\,696  & 180\,811    & Email  \\
\textsf{email-EuAll} & 224\,832 & 339\,925  & Email   \\
\textsf{epinions} & 119\,070  & 701\,569   & Trust \\
\textsf{Epinions1} & 75\,877  & 405\,739    & Trust \\
\textsf{LiveJournal1}    & 4\,843\,953  & 42\,845\,684   &    Social   \\
\textsf{p2p-Gnutella04} &  10\,876  &  39\,994 &  P2P   \\
\textsf{p2p-Gnutella05} & 8\,842 & 31\,837 &  P2P  \\
\textsf{p2p-Gnutella06}    & 8\,717  & 31\,525  &    P2P   \\
\textsf{p2p-Gnutella08} & 6\,299 & 20\,776 &  P2P  \\
\textsf{p2p-Gnutella09}    & 8\,104  & 26\,008  & P2P   \\
\textsf{p2p-Gnutella25} & 22\,663 & 54\,693 &  P2P  \\
\textsf{p2p-Gnutella31}    & 62\,561  & 147\,878  &   P2P   \\  
\textsf{roadNet-CA}  & 1\,957\,027  & 2\,760\,388   & Road    \\
\textsf{roadNet-PA} & 1\,087\,562  & 1\,541\,514   &Road   \\
\textsf{roadNet-TX} & 1\,351\,137 & 1\,879\,201 &  Road    \\
\textsf{Slashdot0811}  & 77\,360  & 469\,180    & Social   \\
\textsf{Slashdot0902} & 82\,168  & 504\,230    & Social  \\
\textsf{Slashdot081106} & 77\,258 & 466\,661 &  Social  \\
\textsf{Slashdot090216}    & 81\,776  & 495\,661   & Social   \\
\textsf{Slashdot090221}  & 82\,052  & 498\,527   & Social   \\
\textsf{usroads} & 126\,146  & 161\,950   & Road  \\
\textsf{wb-cs-stanford}    & 8\,929  & 2\,6320   &     Web   \\
\textsf{web-BerkStan}  & 654\,782  & 6\,581\,871    & Web   \\
\textsf{web-Google} & 855\,802  & 4\,291\,352   & Web  \\
\textsf{web-NotreDame} & 325\,729 & 1\,090\,108 &  Web   \\
\textsf{web-Stanford}    & 255\,265  & 1\,941\,926   &     Web   \\
\textsf{wiki-Talk}  & 2\,388\,953  & 4\,656\,682  & Web   \\
\textsf{Wikipedia-20051105} & 1\,596\,970  & 18\,539\,720   & Web  \\
\textsf{Wikipedia-20060925} & 2\,935\,762  & 35\,046\,792   & Web  \\ 
\textsf{Twitter} & 41\,652\,230  & 1\,468\,365\,182   & Social \\
 \cline{2-4}
\end{tabular}
\end{center}
%\vspace{-4mm}
\caption{\label{tab:datasets}Datasets used in our experiments.}	
\end{table}

\subsection{Experimental Setup}
\label{subsec:setup}

\begin{table}
\centering 
\begin{tabular}{|c|c|c|c|c|}  \hline
\multicolumn{1}{|c}{} & \multicolumn{2}{|c|}{\textsf{BFS}} &  \multicolumn{2}{|c|}{Random} \\
\cline{1-5}
\multicolumn{1}{|c|}{Method} & \multicolumn{1}{c}{$\lambda$ } & \multicolumn{1}{c|}{$\rho$ } &                             
							\multicolumn{1}{c}{$\lambda$ } & \multicolumn{1}{c|}{$\rho$ }  \\ 
\cline{1-5}
\textsf{H}   & 96.9\% & 1.01 & 96.9\% & 1.01 \\ 
\textsf{B}  \cite{stanton} & 97.3\% & 1.00 & 96.8\% & 1.00 \\ 
\textsf{DG}  \cite{stanton} & 0\% & 32 & 43\% & 1.48 \\
\textsf{LDG} \cite{stanton} & 34\% & 1.01 & 40\% & 1.00 \\
\textsf{EDG} \cite{stanton} & 39\% & 1.04 & 48\% & 1.01 \\
\textsf{T}  \cite{stanton} & 61\% & 2.11 & 78\% & 1.01 \\
\textsf{LT} \cite{stanton}& 63\% & 1.23 & 78\% & 1.10 \\
\textsf{ET} \cite{stanton}  & 64\% & 1.05 & 79\% & 1.01 \\
\textsf{NN} \cite{Prabhakaran:2012:MLG:2342821.2342825}  & 69\% & 1.00 & 55\% & 1.03 \\ \hline
\textsf{Fennel} & 14\% & 1.10 & 14\% & 1.02 \\
\textsf{METIS} \cite{metis} & 8\% & 1.00 & 8\% & 1.02 \\ \hline
\end{tabular} 
\caption{\label{tab:amazon}
 Performance of various existing methods on \textsf{amazon0312}, $k$ is set to 32.}
\end{table}

\begin{figure*}[t]
\centering
\includegraphics[width=0.45\textwidth]{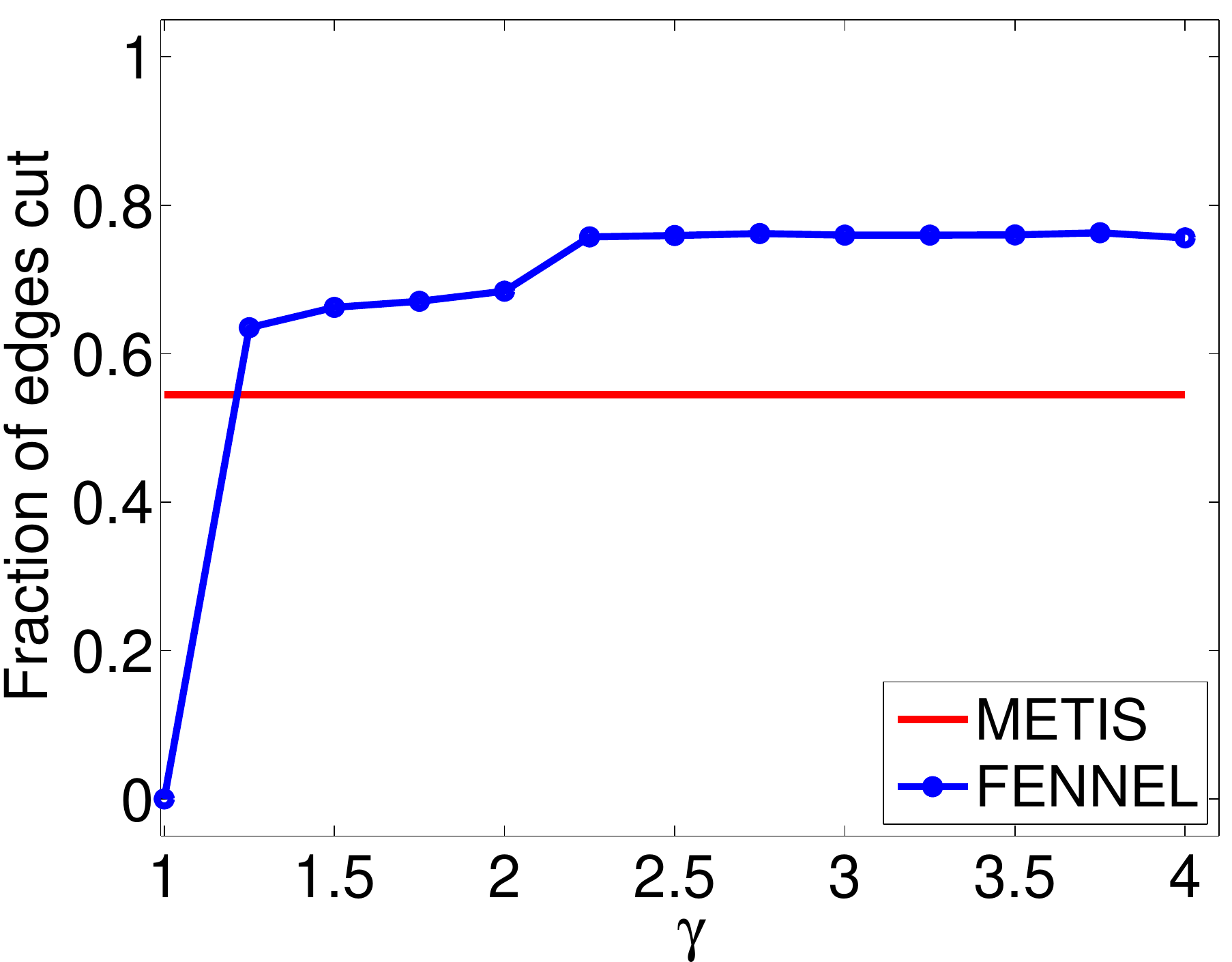}
\hspace{0.03\textwidth}
\includegraphics[width=0.45\textwidth]{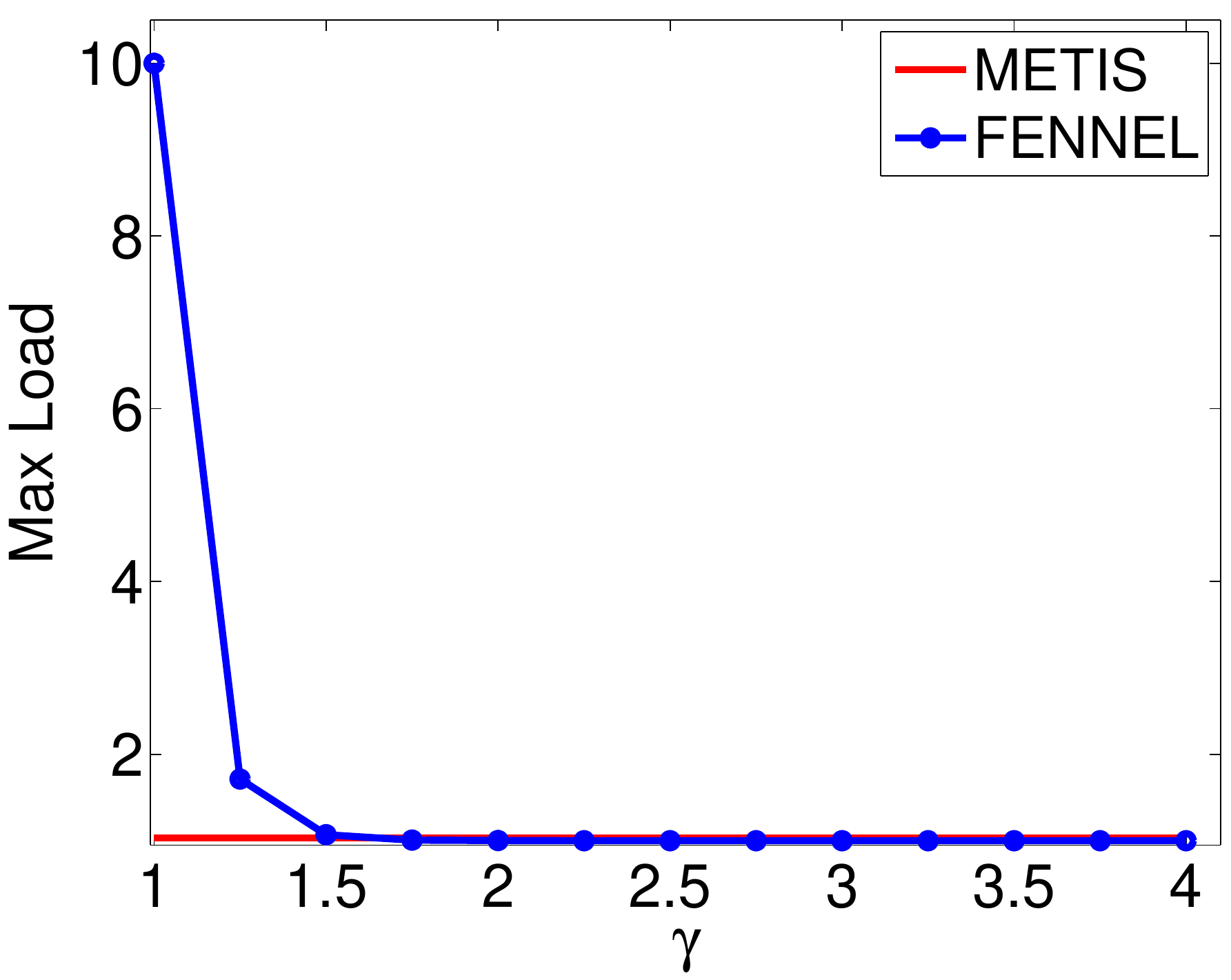}
\hspace{0.03\textwidth}
\caption{\label{fig:powerlaw}
Fraction of edges cut $\lambda$ and maximum load normalized $\rho$ as a function of $\gamma$,
ranging from 1 to 4 with a step of 0.25, over five randomly generated power law graphs with slope 2.5.
The straight lines show the performance of \textsf{METIS}. }
\end{figure*}

The real-world graphs used in our experiments are shown in Table~\ref{tab:datasets}.  
Multiple edges, self loops, signs and weights were removed, if any.
Furthermore, we considered the largest connected component from each graph
in order to ensure that there is a non-zero number of edges cut. 
All graphs are publicly available on the Web. All algorithms have been implemented in {\sc java}, and all experiments were performed on a 
single machine, with Intel Xeon {\sc cpu} at 3.6GHz, and 16GB of main memory. Wall-clock times include only the algorithm execution time, excluding the required time to load the graph into memory. 

In our synthetic experiments, we use two random graph models. 
The first model is the hidden partition model \cite{DBLP:conf/random/CondonK99}. 
It is specified by four parameters parameters: the number of vertices $n$, 
the number of clusters $k$, the intercluster and intracluster edge probabilities
$p$ and $q$, respectively. First, each vertex is assigned to one of $k$ clusters uniformly 
at random. We add an edge between two vertices of the same (different) cluster(s)
with probability $p$ ($q$) independently of the other edges. 
We denote this model as \textsf{HP}$(n,k,p,q)$. 
The second model we use is a standard model for generating random power law graphs.
Specifically, we first generate a power-law degree sequence with a given slope $\delta$ 
and use the Chung-Lu random graph model
to create an instance of a power law graph \cite{chunglu}. 
The model \textsf{CL}$(n,\delta)$ has two parameters: 
the number of vertices $n$ and the slope $\delta$ of the expected 
power law degree sequence. 

We evaluate our algorithms by measuring two quantities from the resulting 
partitions. In particular, for a fixed partition $\mathcal{P}$ 
we use the measures of the {\em fraction of edges cut} $\lambda$ and the {\em normalized maximum load} $\rho$, defined as
\begin{eqnarray*}
\lambda & = & \frac{\#\text{ edges cut by~} \mathcal{P}}{\#\text{ total edges}} = \frac{|{\partial e(\mathcal{P})}|}{m}, \mbox{ and} \\
\rho & = & \frac{\text{maximum load}}{\frac{n}{k}}.
\end{eqnarray*}

Throughout this section, we also use the notation $\lambda_{{\mathcal M}}$ and $\mu_{{\mathcal M}}$ to indicate the partitioning method ${\mathcal M}$ used in a particular context. In general, we omit indices whenever it is clear to which partition method we refer to. 
Notice that $k \geq \rho \geq 1$ since the maximum load of a cluster is at most $n$ and there always exists at least one cluster with at least $\frac{n}{k}$ vertices. 
%\reminder{ For the real-system case study we also measure the communication cost, see the results of Bozidar.}

In Section~\ref{subsec:synthetic}, we use the greedy vertex assignment without any threshold. 
Given that we are able to control ground truth, we are mainly interested in understanding 
the effect of the parameter $\gamma$ on the tradeoff between the fraction of edges cut and the normalized maximum load. 
In Section~\ref{subsec:realworld}, the setting of the parameters we use throughout our experiments is 
$\gamma=\frac{3}{2}$, $\alpha= \sqrt{k} \frac{m}{n^{3/2}}$, and $\nu=1.1$. The choice of $\gamma$ is based on our findings from Section~\ref{subsec:synthetic}
and of $\alpha$ based on Section~\ref{sec:Fenneldefinition}. Finally, $\nu=1.1$ is a reasonable load balancing factor for real-world settings.

As our competitors we use state-of-the-art heuristics. 
Specifically, in our evaluation we consider the following 
heuristics from \cite{stanton}, which we briefly describe here for completeness. 
Let $v$ be the newly arrived vertex. 

\squishlist 
\item Balanced (\textsf{B}): place $v$ to the cluster $S_i$ with minimal size. 
\item Hash partitioning (\textsf{H}): place $v$ to a cluster chosen uniformly at random. 
\item Deterministic Greedy (\textsf{DG}): place $v$ to $S_i$ that maximizes $|N(v)\cap S_i|$. 
\item Linear Weighted Deterministic Greedy (\textsf{LDG}): place $v$ to $S_i$ that maximizes $|N(v)\cap S_i| \times (1-\frac{|S_i|}{\tfrac{n}{k}})$. 
\item Exponentially Weighted Deterministic Greedy (\textsf{EDG}): place $v$ to $S_i$ that maximizes $|N(v)\cap S_i| \times  \Big(1-  \exp{ \big(|S_i|- \tfrac{n}{k}\big) } \Big)$. 
\item Triangles (\textsf{T}): place $v$ to $S_i$ that maximizes $t_{S_i}(v)$. 
\item Linear Weighted Triangles (\textsf{LT}):  place $v$ to $S_i$ that maximizes $t_{S_i}(v) \times \Big(1-\frac{|S_i|}{\tfrac{n}{k}} \Big)$. 
\item Exponentially Weighted Triangles (\textsf{ET}): place $v$ to $S_i$ that maximizes $t_{S_i}(v) \times  \Big(1- \exp{ \big(|S_i|- \tfrac{n}{k}\big) } \Big)$. 
\item Non-Neighbors (\textsf{NN}): place $v$ to $S_i$ that minimizes $|S_i\setminus N(v)|$.
\squishend

In accordance with \cite{stanton}, we observed that \textsf{LDG} is the best performing heuristic. 
Even if Stanton and Kliot do not compare with \textsf{NN}, \textsf{LDG} outperforms it also. 
Non-neighbors typically have very good load balancing properties, as \textsf{LDG} as well, 
but cut significantly more edges. 
Table~\ref{tab:amazon} shows the typical performance we observe across all datasets.
Specifically, it shows $\lambda$ and $\rho$ for both BFS and random order
for \textsf{amazon0312}. DFS order is omitted since qualitatively 
it does not differ from BFS. We observe that \textsf{LDG} is the best competitor, 
\textsf{Fennel} outperforms all existing competitors and is inferior to \textsf{METIS}, but of comparable
performance. In whatever follows, whenever we refer to the best competitor, unless 
otherwise mentioned we refer to \textsf{LDG}.

\begin{table}
%\label{tab:hiddenpartition}
\centering 
\begin{tabular}{|c|c|c|c|c|c|c|c|}  \hline
\multicolumn{1}{|c}{} & \multicolumn{1}{|c|}{} & \multicolumn{2}{|c|}{\textsf{Fennel}}  &  \multicolumn{2}{|c|}{\textsf{METIS}} \\ 
\cline{1-6}
\multicolumn{1}{|c}{$m$} & \multicolumn{1}{|c|}{$k$} &  \multicolumn{1}{c}{$\lambda$ } & \multicolumn{1}{c|}{$\rho$ } & 
																			    \multicolumn{1}{c}{$\lambda$ } & \multicolumn{1}{c|}{$\rho$ } \\ 
\cline{1-6}
7\,185\,314 & 4   & 62.5 \%   & 1.04 &  65.2\%   & 1.02 \\
6\,714\,510 & 8   & 82.2 \%   & 1.04 &  81.5\%   & 1.02 \\ 
6\,483\,201 & 16  & 92.9 \%   & 1.01 &  92.2\%   & 1.02 \\
6\,364\,819 & 32  & 96.3\%    & 1.00 &  96.2\%   & 1.02 \\
6\,308\,013 & 64  & 98.2\%    & 1.01 &  97.9\%   & 1.02 \\
6\,279\,566 & 128 & 98.4 \%   & 1.02 &  98.8\%   & 1.02 \\ \hline
\end{tabular} 
\caption{\label{tab:hiddenpartition} Fraction of edges cut $\lambda$ and normalized maximum load $\rho$
for \textsf{Fennel} and \textsf{METIS}~\cite{metis} averaged over 5 random graphs
generated according to the \textsf{HP}(5000,0.8,0.5) model. }
\end{table}

\subsection{Synthetic Datasets}
\label{subsec:synthetic}

Before we delve into our findings, it is worth summarizing the main 
findings of this section. (a) For all synthetic graphs we generated,
the value $\gamma=\frac{3}{2}$ achieves the best performance {\it pointwise},
not in average.  
(b) The effect of the stream order is minimal on the results. 
Specifically, when $\gamma \geq \tfrac{3}{2}$  
all orders result in the same qualitative results.
When $\gamma< \tfrac{3}{2}$ 
BFS and DFS orders result in the same 
results which are worse with respect to load balancing
--and hence better for the edge cuts--  compared
to the random order. 
(c) \textsf{Fennel}'s performance is comparable to \textsf{METIS}.

{\it Hidden Partition:} We report averages over five randomly generated graphs
according to the model \textsf{HP}$(5000,k,0.8,0.5)$ for each value of $k$ we use. 
We study (a) the effect of the parameter $\gamma$, which parameterizes 
the function $c(x) = \alpha x^{\gamma}$, and (b) the effect of the 
number of clusters $k$. 

We range $\gamma$ from 1 to 4 with a step of 1/4, for six different
values of $k$ shown in the second column of Table~\ref{tab:hiddenpartition}.
For all $k$, we observe, consistently, the following behavior:
for $\gamma=1$ we observe that $\lambda=0$ and $\rho=k$. This means that 
one cluster receives all vertices. 
For any $\gamma$ greater than 1, we obtain excellent load 
balancing with $\rho$ ranging from 1 to 1.05, and 
the same fraction of edges cut with \textsf{METIS} up the the first decimal digit. 
This behavior was not expected a priori, since in general we expect
$\lambda$ shifting from small to large values and see $\rho$ 
shifting from large to small values as $\gamma$ grows. 
Given the insensitivity of \textsf{Fennel} to $\gamma$ in this setting, 
we fix $\gamma=\frac{3}{2}$ and present in Table~\ref{tab:hiddenpartition} 
our findings. For each $k$ shown in the second column we generate five random 
graphs. The first column shows the average number of edges. Notice 
that despite the fact that we have only 5,000 vertices, we obtain graphs with
several millions of edges. 
The four last columns show the performance of \textsf{Fennel} and \textsf{METIS}. 
As we see, their performance is comparable and in one case (k=128)
\textsf{Fennel} clearly outperforms \textsf{METIS}.

{\it Power Law:}  It is well known that power law graphs have no good cuts \cite{gonzales}, 
but they are commonly observed in practice. We examine the effect of parameter $\gamma$ for $k$ fixed to 10. 
In contrast to the hidden partition experiment, we observe the expected
tradeoff between $\lambda$ and $\rho$ as $\gamma$ changes. 
We generate five random power law graphs CL(20\,000,2.5), 
since this value matches the slope of numerous real-world networks \cite{newman2003structure}.
Figure~\ref{fig:powerlaw} shows the tradeoff when $\gamma$ ranges from 
1 to 4 with a step of 0.25 for the random stream order.  
The straight line shows the performance of \textsf{METIS}. 
As we see, when $\gamma<1.5$, $\rho$ is unacceptably large 
for demanding real-world applications. 
When $\gamma=1.5$ we obtain essentially the same 
load balancing performance with \textsf{METIS}.
Specifically, $\rho_{\text{Fennel}} = 1.02, ~\rho_{\text{METIS}} = 1.03$. 
The corresponding cut behavior for $\gamma=1.5$  is 
$\lambda_{\text{Fennel}} =62.58\%, \lambda_{\text{METIS}} =54.46\%$. 
Furthermore, we experimented with the random, BFS and DFS 
stream orders. We observe that the only major difference between the stream orders
is obtained for $\gamma =1.25$. For all other $\gamma$ values the behavior is 
identical. For $\gamma=1.25$ we observe that BFS and DFS stream orders result
in significantly worse load balancing properties. Specifically, 
$\rho_{\text{BFS}}=3.81,~\rho_{\text{DFS}}=3.73,~\rho_{\text{Random}}=1.7130$. 
The corresponding fractions of edges cut are
$\lambda_{\text{BFS}}=37.83\%$,  $\lambda_{\text{DFS}}=38.85\%$, and $\lambda_{\text{Random}}=63.51\%$. 

\subsection{Real-World Datasets}
\label{subsec:realworld}

\begin{figure*}[bt]
\centering
\includegraphics[width=0.45\textwidth]{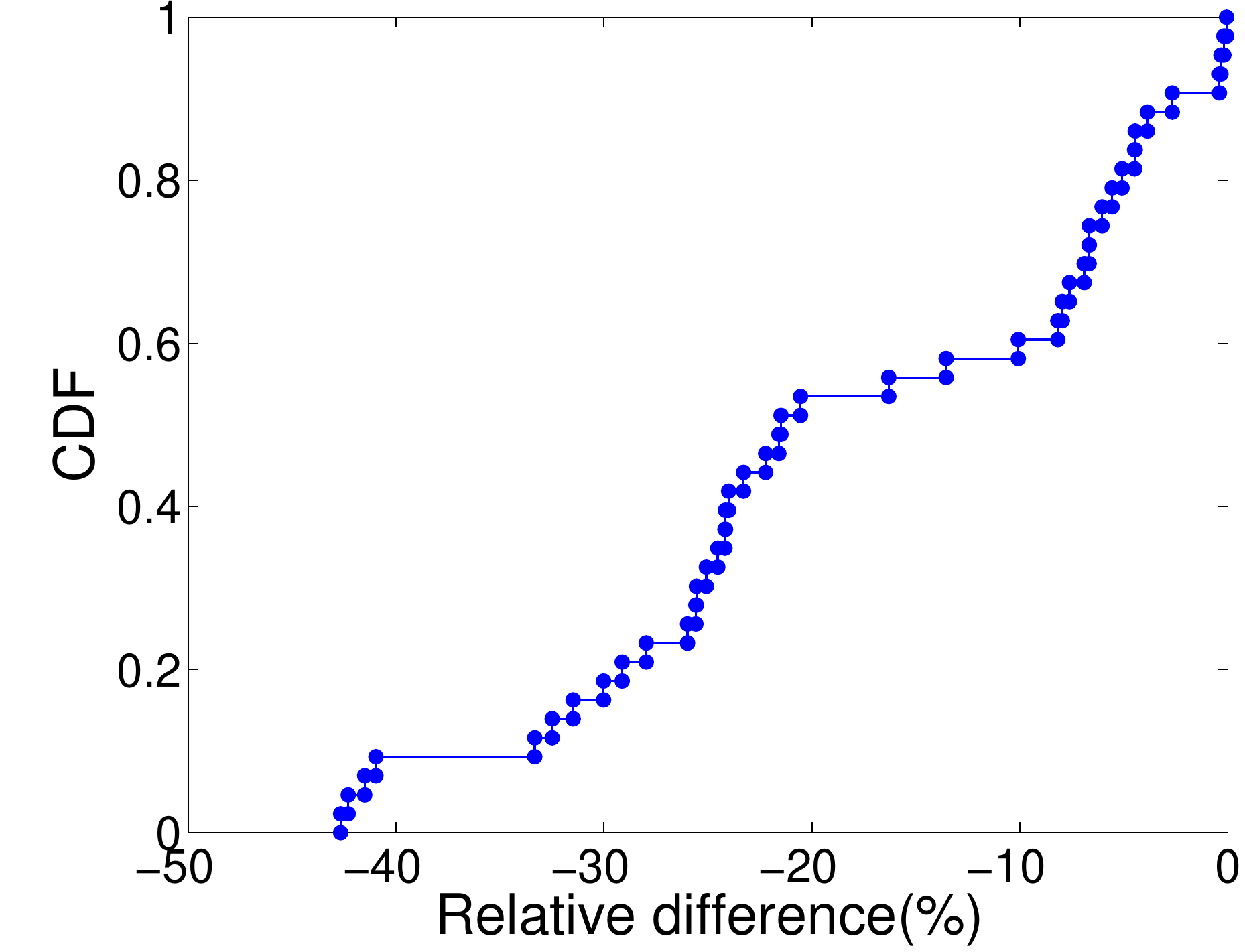}
\hspace{0.03\textwidth}
\includegraphics[width=0.45\textwidth]{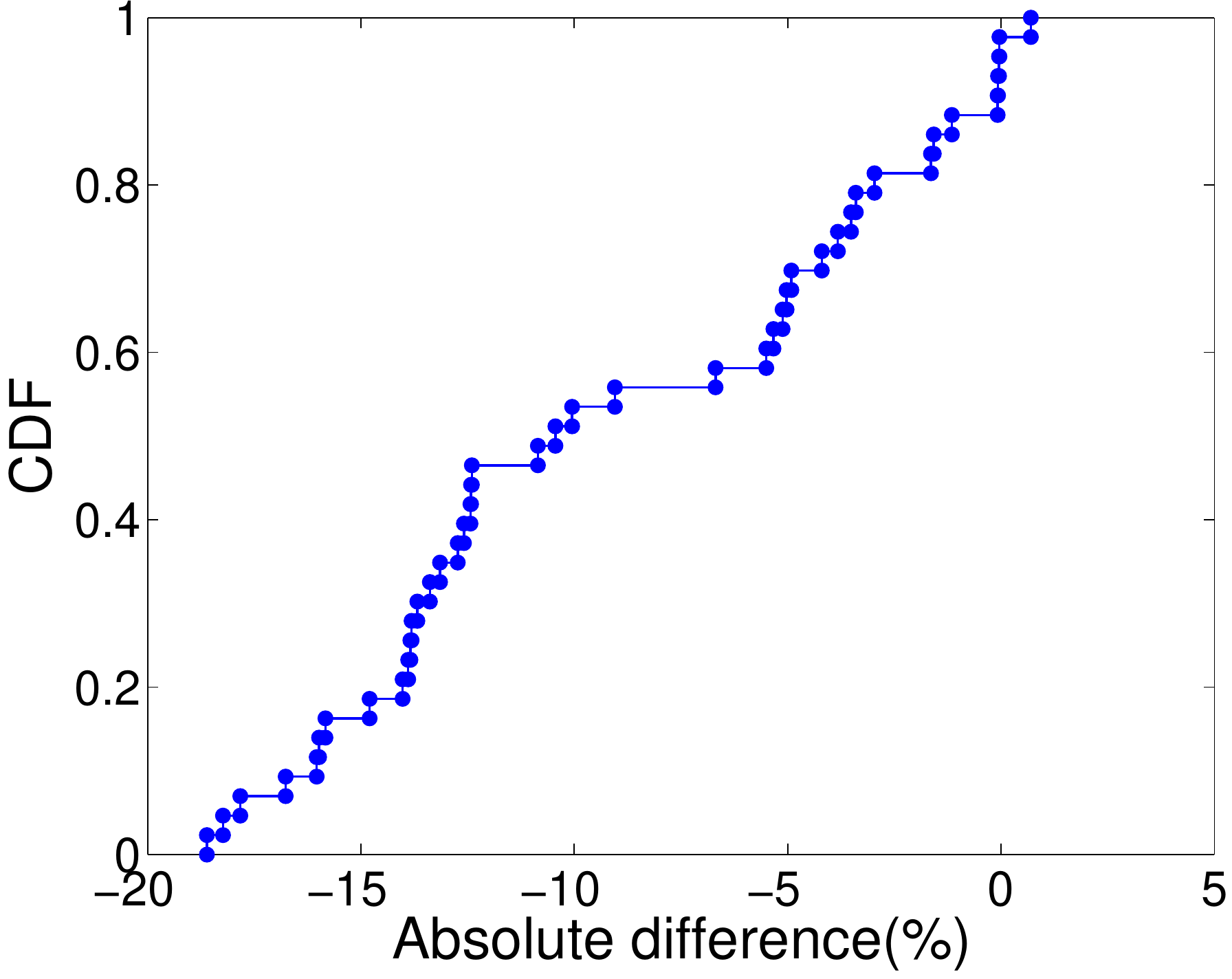}
\hspace{0.03\textwidth}
\caption{\label{fig:relative} (Left) CDF of the relative difference 
$( \frac{\lambda_{\text{Fennel}} - \lambda_{\text{c}} }{ \lambda_{\text{c}} }) \times 100\%$ 
of percentages of edges cut of our method and the best competitor. (Right) Same but for the absolute difference
$( \lambda_{\text{Fennel}} - \lambda_{\text{c}}) \times 100\%$.}
\end{figure*}

\begin{table}[t]
%\label{tab:realavg}
\centering 
\begin{tabular}{|c|c|c|}  \hline
$k$ & Relative Gain & $\rho_{\text{Fennel}}-\rho_{\text{c}}$ \\ \hline
2	 & 25.37\% & 0.47\% \\ \hline
4	 & 25.07\% & 0.36\%\\ \hline
8	 & 26.21\% & 0.18\% \\\hline
16	 & 22.07\% & -0.43\% \\\hline
32	 & 16.59\% & -0.34\%    \\\hline
64	 & 14.33\% & -0.67\%  \\\hline
128	 & 13.18\% & -0.17\%      \\\hline
256	 & 13.76\% & -0.20\%	\\\hline
512	 & 12.88\% &-0.17\%		\\\hline
1024 & 11.24\% &-0.44\%		\\\hline
\end{tabular} 
\caption{\label{tab:realavg}
The relative gain $( 1-\frac{\lambda_{\text{Fennel}}}{ \lambda_{\text{c}} }) \times 100\%$ 
and  load imbalance, where subindex $c$ stands for the best competitor,
averaged over all datasets in Table~\ref{tab:datasets} as a function of $k$. 
}
\end{table} 

Again, before we delve into the details of the experimental results, 
we summarize the main points of this Section: (1) 
{\it \textsf{Fennel} is superior to existing streaming partitioning algorithms.}
Specifically, it consistently, over a wide range of $k$ values 
and over all datasets, performs better than the 
current state-of-the-art. \textsf{Fennel} achieves 
excellent load balancing with significantly 
smaller edge cuts. (2) 
For smaller values of $k$ (less or equal than 64) the observed gain is 
more pronounced. 
(c) {\it \textsf{Fennel} is fast.} Our implementation scales well with 
the size of the graph. It takes about 40 minutes to partition
the Twitter graph which has more than 1 billion of edges.

{\it Twitter Graph.} Twitter graph is the largest graph in our collection 
of graphs, with more than 1.4 billion edges. This feature
makes it the most interesting graph from our collection, even if, 
admittedly, is a graph that can be loaded into the main memory. 
The results of \textsf{Fennel} on this graph are excellent. 
Specifically, Table~\ref{tab:Fenneltab1} shows the performance of \textsf{Fennel}, 
the best competitor \textsf{LDG}, the baseline \textsf{Hash Partition} and \textsf{METIS}
for $k=2,4$ and $8$. 
All methods achieve balanced partitions, with $\rho \leq 1.1$. 
\textsf{Fennel}, is the only method that always attains this upper bound. 
However, this reasonable performance comes with a high gain 
for $\lambda$. Specifically, we see that \textsf{Fennel} achieves better 
performance of $k=2$ than \textsf{METIS}. Furthermore, \textsf{Fennel} requires 42 minutes
whereas \textsf{METIS} requires 8$\tfrac{1}{2}$ hours. 
Most importantly, \textsf{Fennel} outperforms \textsf{LDG} consistently. 
Specifically, for $k=16,32$ and $64$, \textsf{Fennel} achieves the following results $(\lambda,\rho) = (59\%,1.1)$, $(67\%,1.1)$, and $(73\%,1.1)$, respectively. 
Linear weighted degrees (\textsf{LDG}) achieves $(76\%,1.13)$, $(80\%,1.15)$, and $(84\%,1.14)$, respectively.  
Now we turn our attention to smaller bur reasonably-sized datasets.

In Figure~\ref{fig:relative}, we show the distribution of the difference 
of the fraction of edges cut of our method and that of the best competitor, 
conditional on that the maximum observed load is at most $1.1$. 
This distribution  is derived from the values observed by partitioning each input graph from our set 
averaged over a range of values of parameter $k$ that consists of values $2,4, \ldots, 1024$. 
These results demonstrate that the fraction of edges cut by our method is smaller 
than that of the best competitor in all cases. Moreover, we observe
that the median difference (relative difference) is in the excess of 20\% (15\%), thus providing appreciable performance gains.

Furthermore, in Table~\ref{tab:realavg}, we present the average performance gains conditional 
on the number of partitions $k$. These numerical results amount to an average 
relative reduction of the fraction of edges cut in the excess of 18\%. Moreover, 
the performance gains observed are consistent across different values of parameter 
$k$, and are more pronounced for smaller values of $k$.

{\it Bicriteria.} 
In our presentation of experimental results so far, we focused on the fraction of edges cut by 
conditioning on the cases where the normalized maximum load was smaller than a fixed threshold. 
We now provide a closer look at both criteria and their relation. In Figure~\ref{fig:scat}, 
we consider the difference of the fraction of edges cut vs. the difference of normalized maximum loads 
of the best competitor and our method. We observe that in all the cases, the differences of 
normalized maximum loads are well within $10\%$ while the fraction of edges cut by our method is significantly 
smaller. These results confirm that the observed reduction of the fraction of edges cut by our method 
is not at the expense of an increased maximum load.

\begin{figure*}[t]
\centering
\includegraphics[width=0.45\textwidth]{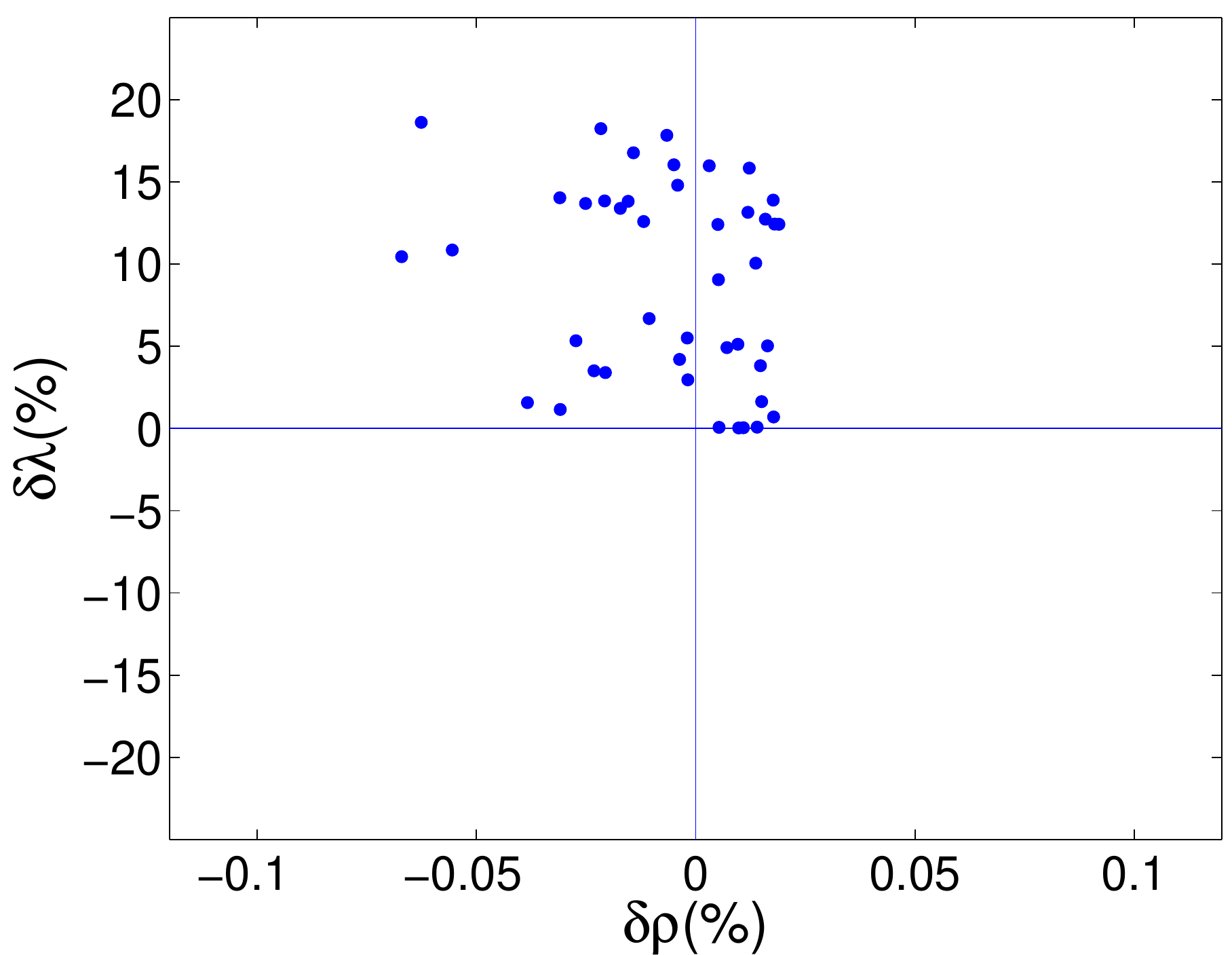}
\hspace{0.03\textwidth}
\includegraphics[width=0.45\textwidth]{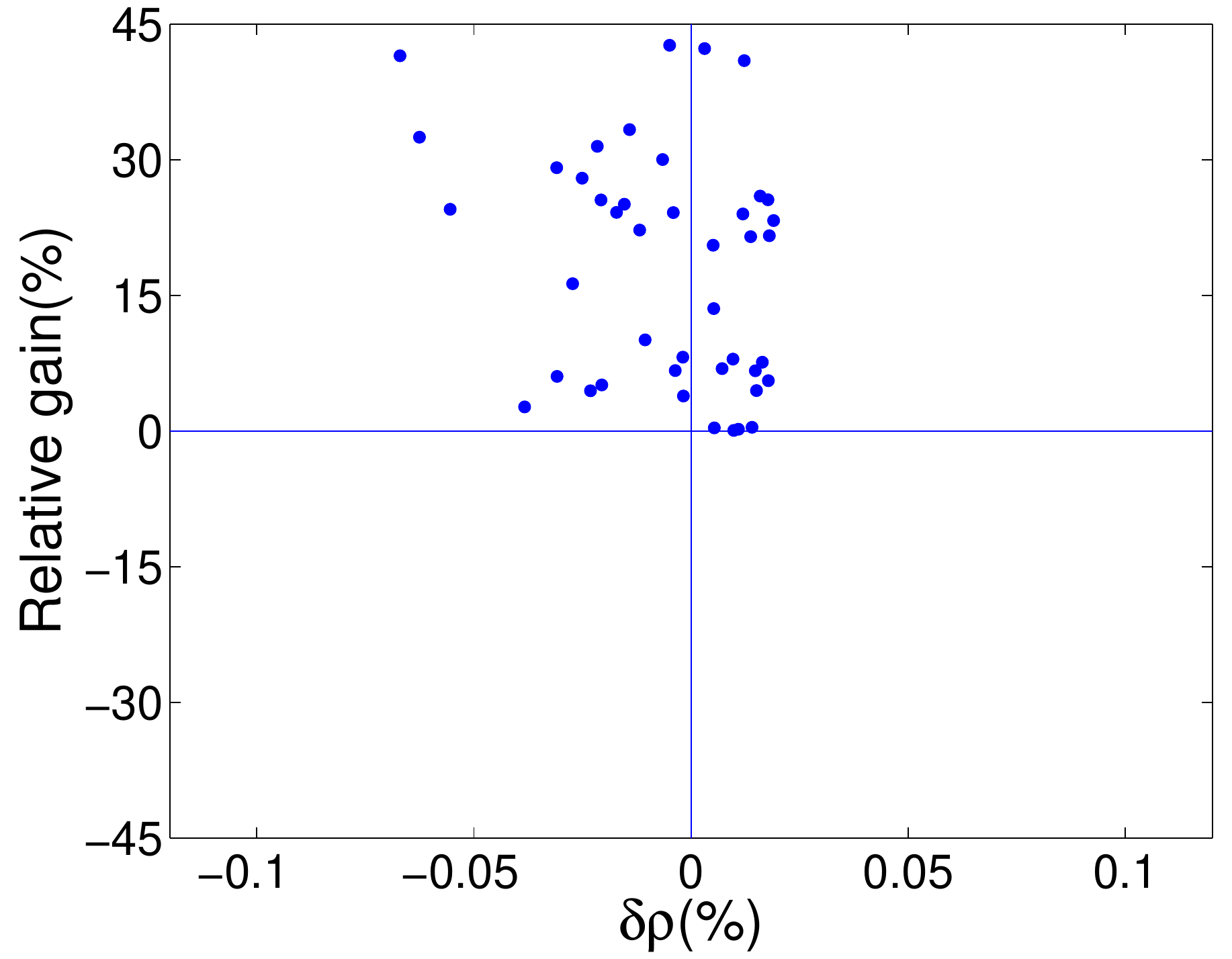}
\hspace{0.03\textwidth}
\caption{\label{fig:scat} Absolute difference $\delta\lambda$ and Relative gain versus the maximum load imbalance $\delta\rho$.}
\end{figure*}

{\it Speed of partitioning}. We now turn our attention to the efficiency 
of our method with respect to the running time to partition a graph. Our graph 
partitioning algorithm is a one-pass streaming algorithm, which allows for fast graph partitioning. 
In order to support this claim, in Figure~\ref{fig:time}, we show the run time it took to partition each graph 
from our dataset vs. the graph size in terms of the number of edges. We observe 
that it takes in the order of minutes to partition large graphs of tens of millions of edges. 
As we also mentioned before, partitioning the largest graph from our dataset collection took about 40 minutes. 

\begin{figure}[t]
\centering
\includegraphics[width=0.5\textwidth]{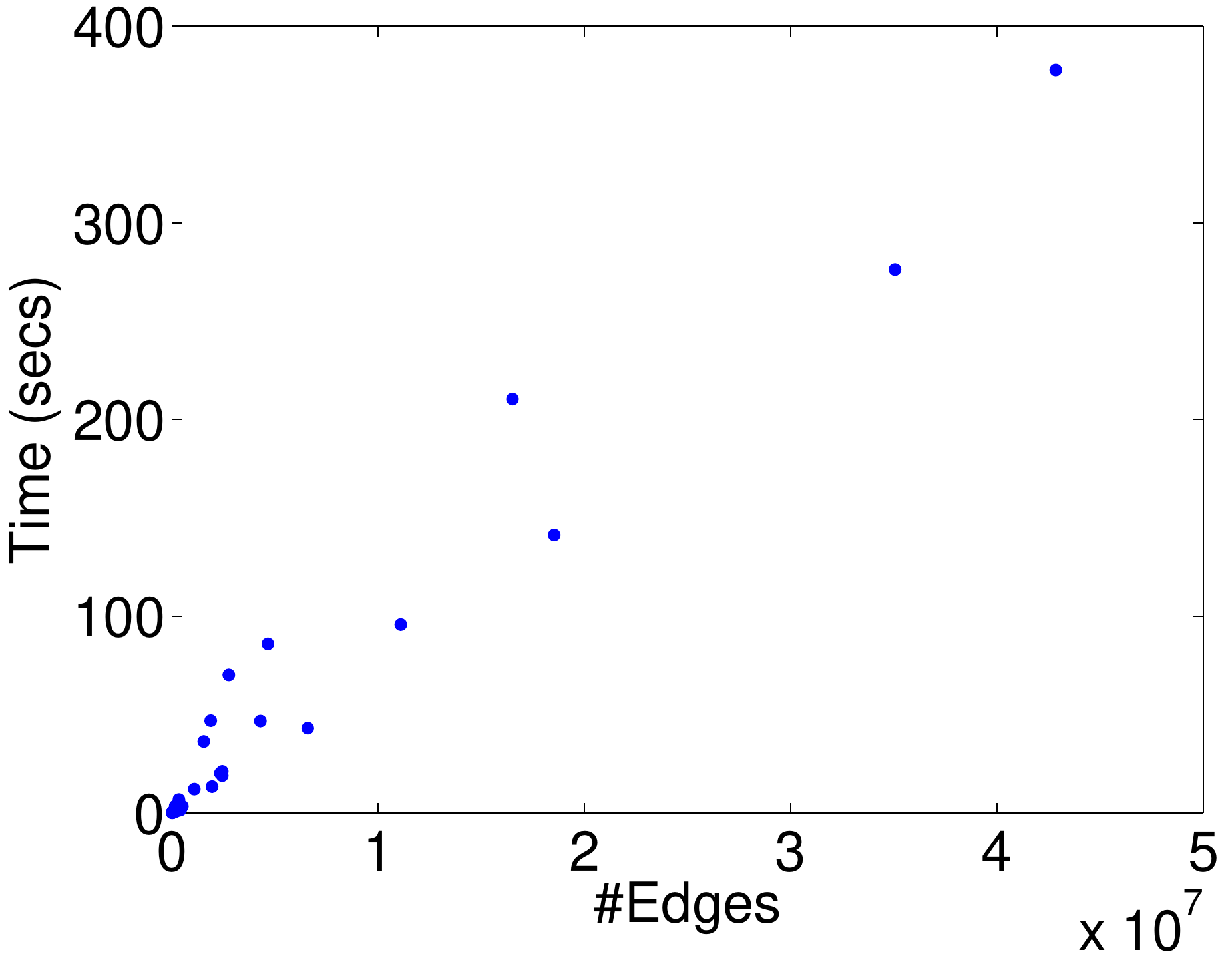}%\vspace*{-0.5cm}
\caption{\label{fig:time} \textsf{Fennel}: time vs. number of edges.}
\end{figure}

\section{System Evaluation}
\label{sec:Fennelsystem}

\begin{table*}
\centering 
\begin{tabular}{|c|c|c|c|c|}  \hline
\multicolumn{1}{|c|}{} & \multicolumn{2}{|c|}{\textsf{Run time [s]}}  &  
\multicolumn{2}{|c|}{\textsf{Communication [MB]}} \\[1pt] 
\cline{1-5}
\multicolumn{1}{|c|}{\# Clusters ($k$)} &  
\multicolumn{1}{c}{Hash} &   \multicolumn{1}{c|}{{\bf Fennel} } & 
\multicolumn{1}{c}{Hash} &   \multicolumn{1}{c|}{{\bf Fennel} } \\
\cline{1-5}
 4 & 32.27 &   {\bf 25.49} & 321.41 &  {\bf 196.9} \\ \hline
 8 & 17.26  & {\bf 15.14} & 285.35  & {\bf 180.02} \\ \hline
 16 & 10.64   & {\bf 9.05} & 222.28 &  {\bf 148.67} \\ \hline
\end{tabular} 
\caption{\label{tab:Fennelsystem} The average duration of a step and the average amount of MB exchanged per node and per step during
 the execution of PageRank on LiveJournal data set. }
\end{table*} 
Evaluating a partitioning algorithm is not an easy task from a systems perspective,
since it depends on system characteristics. 
For instance, in a large-scale production data-center it is more 
important to balance the traffic across clusters than the traffic or the amount of 
computation executed within a cluster. 
However, in a small-scale cluster consisting of few tens of nodes, 
rented by a customer from a large cloud provider such as  Amazon's Elastic Map-Reduce,
it is important to minimize the network traffic across the nodes but also to balance well 
the computational load on each node. 
Given this diversity of scenaria, a detailed evaluation is out of the scope of this work. 
Here, we perform a basic experiment to verify the superiority of our proposed method
versus the de facto standard of hash partitioning with respect to speeding up 
a large-scale computation. 

We select Pagerank as the computation of interest. 
Notice, that an advantage of Fennel is that it gives a flexibility in choosing a 
suitable objective that accomodates the needs of the specific application. 
We demonstrate the efficiency and flexibility of Fennel with the typical Elastic Map-Reduce 
scenario in mind.  We set up a cluster and we vary the number of nodes to 4, 8 and 16 nodes. 
Each node is equipped with Intel Xeon CPU at 2.27 GHz and 12 GB of main memory. 
On the cluster we run Giraph, a graph processing platform running on the top of Hadoop. 
We implemented a PageRank algorithm on Giraph and we run it on Live Journal data set\footnote{Twitter data set was too large to fit on a 16-nodes Giraph cluster.}. 

Since the complexity of PageRank depends on the number of edges and not vertices, we use a version of the Fennel 
objective (eq.~\ref{fennel}) that balances the number of edges per cluster. In particular, we choose the 
$c_{\mathrm{IN}}(\mathcal{P}) = \sum_{i=1}^k e(S_i, S_i)^\gamma$ with $\gamma = 1.5$. 

We compare with hash partitioning, the default partitioning scheme used by Giraph.
We look at two metrics. The first is the average duration of an iteration of the PageRank algorithm. 
This metric is directly proportional to the actual running time and incorporates both the 
processing and the communication time. 
The second metric is the average number of Megabytes transmitted by a cluster node 
in each iteration. This metric directly reflects the quality of the cut and is proportional to the incurred network load. 

The results are shown in Table~\ref{tab:Fennelsystem}. We see that Fennel has the best run time in all cases. 
This is because it achieves the best balance between the computation and communication load. 
Hash partitioning takes 25\% more time than Fennel and it also has a much higher traffic load.

\section{Discussion}
\label{sec:Fenneldiscussion}

In this section we discuss some of the extensions that can be accomodated by our framework and discuss some details about distributed implementation.

{\it Assymetric edge costs.} As discussed in Section~\ref{sec:Fennelsystem}, 
in some application scenarios some edges that cross partition boundaries 
may be more costly than other. For example, this is the case if individual 
graph partitions are assigned to machines in a data center and these machines 
are connected with an asymmetric network topology, so that the available network 
bandwidth varies across different pairs of machines, e.g. intra-rack vs. inter-rack
 machines in standard data center architectures~\cite{Greenberg09}. Another example
 are data center network topologies where the number of hops between different pairs
 of machines vary substantially, e.g. torus topologies~\cite{costa}. In such scenarios, 
it may be beneficial to partition a graph by accounting for the aforementioned asymmetries 
of edge-cut costs. This can be accomodated by appropriately defining the inter-partition 
cost function in our framework.

{\it Distributed implementation}. Our streaming algorithm requires computing marginal value 
indices that can be computed in a distributed fashion by maintaining local views on a global state. 
For concretness, let us consider the traditional objective where the inter-partition cost is a linear 
function of the total number of cut edges and the intra-partition cost is a sum of convex functions 
of vertex cardinalities of individual partitions. In this case, computing the marginal value indices 
requires to compute per each vertex arrival: (1) the number of neighbors of given vertex that were 
already assigned to given cluster of vertices, and (2) the number of vertices that were already 
assigned per cluster. The former corresponds to a set-intersection query and can be efficiently 
implemented by standard methods, e.g. using data structures such as minwise hashing~\cite{satuluri}. 
The latter is a simple count tracking problem. Further optimizations could be made by trading accuracy 
for reduction of communication overhead by updating of the local views at a smaller rate than the rate 
of vertex arrival. 
%An efficient implementation of the streaming graph partitioning methods in a distributed environment is out of scope of this paper is left as an interesting direction for future work.  

%% Chapter 9 
\clearpage
\chapter{PEGASUS: A System for Large-Scale Graph Processing}
\label{pegasuschapter}
\lhead{\emph{PEGASUS: A System for Large-Scale Graph Processing}} 
\section{Introduction} 
\label{sec:pegasusintro} 

In this Chapter we describe \pegasus, an open source Peta Graph Mining library which 
performs typical graph mining tasks such as computing 
the diameter of a graph, computing the radius of each node, finding the connected components, 
(see also Chapter~\ref{hadichapter}), and computing the importance score of nodes.
The main idea behind \pegasus is to capitalize on matrix-vector multiplication 
as a main primitive for the software engineer. 
Inspired by the work of \cite{Tsourakakis:2008:FCT:1510528.1511415}
which showed that triangles can be estimated by few matrix-vector multiplications, 
\pegasus introduces a set of different operators which solve a variety of graph mining tasks
together with an optimized implementation of matrix-vector multiplications in \mapreduce. 
\pegasus is a solid engineering effort which allows us to manipulate large-scale graphs. 
Since the introduction of \pegasus, other large-scale graph processing systems
have been introduced, among them Google's Pregel \cite{malewicz}, Linkedin's Giraph \cite{giraph}
and GraphLab \cite{DBLP:conf/uai/LowGKBGH10}. 
It is worth mentioning that Giraph uses several algorithms and ideas from \pegasus, including
the connected components algorithm. Also, \pegasus has been included in \hadoop for Windows Azure \cite{azur}.

\spara{Outline:} This Chapter is organized as follows: 
%Section~\ref{sec:mapreducebasics} presents a brief overview of \mapreduce.  
 Section~\ref{sec:pegasusidea} presents the proposed method.
Section~\ref{sec:pegasushadoop} presents \hadoop implementations and Section~\ref{sec:pegasusscalability} 
timings. Section~\ref{sec:pegasusatwork} shows findings of \pegasus in several real-world networks.

\section{Proposed Method} 
\label{sec:pegasusidea} 

Consider the following assignment $v' \leftarrow M \times v$ 
where $M \in \field{R}^{m \times n}, v \in \field{R}^n$. 
The $i$-th coordinate of $v'$ is $v'_i = \sum_{j=1}^n m_{i,j}v_j$,
$i=1,\ldots,m$. Typically in our applications, $M$ is the adjacency
matrix represetation of a graph and therefore we are going to assume
in the following that $m=n$, unless otherwise noticed.

There are three types of operations in the previous formula:
\begin{enumerate}
%\tight
  \item \join: multiply $m_{i,j}$ and $v_j$.
  \item \aggrnp: sum $n$ multiplication results for node $i$.
  \item \assign: overwrite the previous value of $v_i$ with the new result to make $v'_i$.
\end{enumerate}

We introduce an abstraction of the basic matrix-vector multiplication, 
called  \gmvm. The corresponding programming primitive is the \IGMV primitive
on which \pegasus is based. The `Iterative' in  \IGMV denotes
that we apply the $\gimv_{G}$ operation until
a  convergence criterion is met.
Specifically, let us define the operator $\gimv_{G}$ as follows:

\begin{quote}
$v' = M \gimv_{G} v $ \\
where $v'_i$ = \assign$(v_i, $\aggr$(\{ x_j \mid j=1..n$, and $x_j =  $\join$(m_{i,j},v_j) \}))$.
\end{quote}

The functions \join(), \aggrnp(), and \assign() have the following interpretation,
generalizing the product, sum and assignment of the traditional matrix-vector multiplication: 

\begin{enumerate}
%\tight
  \item \join$(m_{i,j}, v_j)$ : combine $m_{i,j}$ and $v_j$.
  \item \aggr$(x_1, ..., x_n)$ : combine all the results from \join() for node $i$.
  \item \assign$(v_i, v_{new})$ : decide how to update $v_i$ with $v_{new}$.
\end{enumerate}

In the following sections we show how different choices of \join(), \aggr() and \assign()
allow us to solve several important graph mining tasks. 
Before that, we want to highlight the strong connection of \IGMV\ with SQL.
When \aggr() and \assign() can be implemented by user defined functions,
the operator $\gimv_{G}$ can be expressed concisely in terms of SQL.
This viewpoint is important when we implement \IGMV
in large-scale parallel processing platforms, including \hadoop,
if they can be customized to support several SQL primitives including JOIN and GROUP BY.
Suppose we have an {\tt edge} table {\tt E(sid, did, val)}
and a {\tt vector} table {\tt V(id, val)},
corresponding to a matrix and a vector, respectively.
Then,
$\gimv_{G}$ corresponds to the 
SQL statement in Table~\ref{tab:gimvsql}.
We assume
that we have (built-in or user-defined) functions, \aggr() and \join(),
and we also assume
that the resulting table/vector will be fed into the
\assign() function (omitted, for clarity).

\begin{table}%[h]
        \centering
        \begin{tabular}{|l|}\hline
SELECT E.sid, \aggrsid(\join(E.val,V.val))\\
~ FROM E, V  \\
~ WHERE E.did=V.id \\
~ GROUP BY E.sid \\
\hline
        \end{tabular}
        \caption{GIM-V in terms of SQL.}
        \label{tab:gimvsql}
\end{table}

In the following sections we show how we can customize
\IGMV, to handle important graph mining operations including
PageRank, Random Walk with Restart, diameter estimation, and connected components.

\subsection{\IGMV\ and PageRank}
\label{sec:gimv_pagerank}

Our first warm-up application of \IGMV is PageRank, 
a famous algorithm that was used by Google to calculate relative importance of web pages \cite{Brin98anatomy}.
The PageRank vector $p$ of $n$ web pages satisfies the following
eigenvector equation:

\begin{quote}
$p = (cE^T + (1-c)U)p$
\end{quote}

where $c$ is a damping factor (usually set to 0.85),
$E$ is the row-normalized adjacency matrix (source, destination),
and $U$ is a matrix with all elements set to $1/n$.

To calculate the eigenvector $p$ we can use the power method, which multiplies an initial vector with the matrix,
several times.
We initialize the current PageRank vector $p^{cur}$ and set
all its elements to $1/n$.
Then the next PageRank $p^{next}$ is calculated by $p^{next} = (cE^T + (1-c)U)p^{cur}$.
We continue to perform the multiplication until $p$ converges.

PageRank is a direct application of \IGMV, i.e., $p^{next} = M \gimv_G p^{cur}$.
Matrix $M$ is $E^T$, i.e., the column-normalized version of the adjacency matrix. 
The three operations are defined as follows:

\begin{enumerate}
  \item \join$(m_{i,j}, v_j)$ = $c \times m_{i,j} \times v_j $
  \item \aggr$(x_1, ..., x_n)$ = $\frac{(1-c)}{n} + \sum_{j=1}^n x_{j}$
  \item \assign$(v_i, v_{new})$ = $v_{new}$
\end{enumerate}

\subsection{\IGMV\ and Random Walk with Restart}

Random Walk with Restart (RWR) is closely related to Personalized pagerank, 
a popular algorithm to measure the relative proximity of vertices with respect to a given vertex 
\cite{Pan04Automatic}.
In RWR, the proximity vector $r_k$ of vertex $k$ satisfies the equation:

\begin{quote}
$r_k = cMr_k + (1-c)e_k$
\end{quote}

\noindent where $e_k$ is the $k$-th unit vector in $\field{R}^n$, 
$c$ is a restart probability parameter which is typically set to 0.85~\cite{Pan04Automatic}
and $M$ is  as in Section \ref{sec:gimv_pagerank}.
In \IGMV, RWR is formulated by $r_k^{next} = M \gimv_G r_k^{cur}$ 
where the three operations are defined as follows:
 
\begin{enumerate}
%\tight
  \item \join$(m_{i,j}, v_j)$ = $c \times m_{i,j} \times v_j $
%  \item \aggr$(x_1, ..., x_n)$ = $(1-c)I(i \neq k) + \sum_{j=1}^n x_{j}$
  \item \aggr$(x_1, ..., x_n)$ = $(1-c)\delta_{ik} + \sum_{j=1}^n x_{j}$, where 
   $\delta_{ik}$ is the \emph{Kronecker delta}, equal to 1 if $i=k$ and 0 otherwise
  \item \assign$(v_i, v_{new})$ = $v_{new}$
\end{enumerate}

\subsection{\IGMV\ and Diameter Estimation}

In Chapter~\ref{hadichapter} we discussed \hadi, an algorithm that estimates the diameter and radius distribution
of a large-scale graph.  \hadi can be presented within the framework of \pegasus,
since the number of neighbors reachable from vertex $i$ within $h$ hops 
is encoded in a probabilistic bitstring $b^{h}_i$ which is updated as follows \cite{flajolet85probabilistic}:

\begin{quote}
$b^{h+1}_i$ = $b^{h}_i$ BITWISE-OR $\{b^{h}_k \mid (i,k) \in E\}$
\end{quote}

\noindent In \IGMV, the bitstring update of \hadi is represented by

\begin{quote}
$b^{h+1} = M \gimv_G b^{h}$
\end{quote}

\noindent where $M$ is the adjacency matrix, $b^{h+1}$ is a vector of length $n$ which is updated by\\
$b^{h+1}_i = $\assign$(b^{h}_i, $\aggr$(\{ x_j \mid j=1..n$, and $x_j =  $\join$(m_{i,j},b^{h}_j) \}))$, \\
and the three \pegasus operations are defined as follows:

\begin{enumerate}
%\tight
  \item \join$(m_{i,j}, v_j)$ = $m_{i,j} \times v_j$.
  \item \aggr$(x_1, ..., x_n)$ = BITWISE-OR$\{ x_j \mid j=1..n \}$
  \item \assign$(v_i, v_{new})$ = BITWISE-OR$(v_i,v_{new})$.
\end{enumerate}

%Note that we made new algorithms by modifying the three operations, which shows the usefulness of \IGMV.
The $\gimv_G$ operation is run iteratively until the bitstring of each vertex remains the same.

\subsection{\IGMV\ and  Connected Components}

We propose \hcc, a new algorithm for finding connected components in large graphs.
The main idea is as follows: for each vertex $i$ in the graph, 
we maintain a component identification number (id) $c^h_i$ which is the 
minimum vertex id within $h$ hops from $i$.

Initially, $c^h_i$ of vertex $i$ is set to $i$, i.e., $c^0_i = i$.
In each iteration, each vertex sends its current $c^h_i$ to its neighbors.
Then $c^{h+1}_i$ is set to the minimum value among its current component 
id and the received component ids from its neighbors.
The crucial observation is that this communication between neighbors 
can be formulated in \IGMV as follows:

\begin{quote}
$c^{h+1} = M \gimv_G c^{h}$
\end{quote}

where $M$ is the adjacency matrix, $c^{h+1}$ is a vector of length $n$ which is updated by
$c^{h+1}_i = $\assign$(c^{h}_i, $\aggr$(\{ x_j \mid j=1..n$, and $x_j =  $\join$(m_{i,j},c^{h}_j) \}))$, 
and the three \pegasus operations are defined as follows:

\begin{enumerate}
  \item \join$(m_{i,j}, v_j)$ = $m_{i,j} \times v_j$.
  \item \aggr$(x_1, ..., x_n)$ = $\min \{ x_j \mid j=1..n \}$.
  \item \assign$(v_i, v_{new})$ = $\min (v_i,v_{new})$.
\end{enumerate}

By repeating this process, component ids of nodes in a component are set to the minimum node id of the component.
We iteratively do the multiplication until component ids converge.
The upper bound of the number of iterations in \hcc is $d$, where $d$ is the diameter of the graph. 
We notice that because of the small-world phenomenon, see Section~\ref{sec:hadiintro}, 
the diameter of real graphs is small, and therefore in practice \hcc completes after a 
small number of iterations.
For a recent work with better practical performance, see \cite{seidl2012cc}.

\section{\hadoop Implementation} 
\label{sec:pegasushadoop} 

Given the main goal of the \pegasus project is to provide an efficient system to the user/programmer, 
we discuss different \hadoop implementation approaches, starting out with a naive implementation
and progressing to faster methods for \IGMV.
The proposed versions are evaluated in Section~\ref{sec:pegasusscalability}. 

\subsection{\IGMV BASE: Naive Multiplication}

\IGMV BASE is a two-stage algorithm whose pseudo code is in Algorithm~\ref{alg:igmvs1} and \ref{alg:igmvs2}.
The inputs are an edge file and a vector file. 
Each line of the edge file has the form $(id_{src}, id_{dst}, mval)$ 
which corresponds to a non-zero entry in the djacency matrix. 
Similarly, each line of the vector file has the form  $(id, vval)$ 
which corresponds to an element in vector $v$.
\PassA performs the \join operation by combining columns of 
matrix ($id_{dst}$ of $M$) with rows of the vector ($id$ of $V$).
The output of \PassA are (key, value) pairs where the key is the 
source vertex id of the matrix ($id_{src}$ of $M$) and 
the value is the partially combined result (\join($mval, vval$)).
This output of \PassA becomes the input of \PassB.
\PassB combines all partial results from \PassA and updates the vector. 
The \aggr() and \assign() operations are done in line 15 of \PassB, where
the ``self'' and ``others'' tags in line 15 and line 21 of \PassA are
needed by \PassB to distinguish cases appropriately. 
We note that in Algorithm~\ref{alg:igmvs1} and \ref{alg:igmvs2}, 
Output($k$, $v$) means to output data with the key $k$ and the value $v$.

\begin{algorithm}[!t]
\begin{algorithmic}[1]
\REQUIRE Matrix $M=\{ (id_{src}, (id_{dst}, mval)) \}$,
    Vector $V=\{(id, vval)\}$
\ENSURE Partial vector $V'=\{(id_{src}, \join(mval, vval)\}$
\STATE Stage1-Map(Key $k$, Value $v$):
\IF{$(k,v)$ is of type V}
    \STATE Output($k, v$); \hfill // ($k$: $id$, $v$: $vval$)
\ELSIF{($k,v$) is of type M}
    \STATE $(id_{dst}, mval) \leftarrow v$;
    \STATE Output($id_{dst}, (k,mval)$); \hfill // ($k$: $id_{src}$)
\ENDIF
\STATE
\STATE Stage1-Reduce(Key $k$, Value $v[1..m$]):
\STATE $saved\_kv \leftarrow $[ ];
\STATE $saved\_v \leftarrow $[ ];
\FOR{$v \in v[1..m]$}
    \IF{$(k,v)$ is of type V}
        \STATE $saved\_v \leftarrow v$;
        \STATE Output($k, (``self",saved\_v)$);
    \ELSIF{($k,v$) is of type M}
        \STATE Add $v$ to $saved\_kv$; \hfill // (v: $(id_{src}, mval)$)
    \ENDIF
\ENDFOR
\FOR{$(id_{src}', mval') \in saved\_kv$}
    \STATE Output($id_{src}',$ $(``others", $\join$(mval',saved\_v))$);
\ENDFOR
\end{algorithmic}
\caption{\IGMV BASE Stage 1.}
\label{alg:igmvs1}
\end{algorithm}

\begin{algorithm}[!t]
\begin{algorithmic}[1]
\REQUIRE Partial vector $V'=\{(id_{src}, vval')\}$
\ENSURE Result Vector $V=\{(id_{src}, vval)\}$
\STATE Stage2-Map(Key k, Value v):
\STATE Output($k, v$);
\STATE
\STATE Stage2-Reduce(Key k, Value v[1..m]):
\STATE $others\_v \leftarrow  $[ ];
\STATE $self\_v \leftarrow  $[ ];
\FOR{$v \in v[1..m]$}
    \STATE $(tag, v') \leftarrow v$;
    \IF{$tag$ = ``same"}
        \STATE $self\_v \leftarrow v'$;
    \ELSIF{$tag$ = ``others"}
        \STATE Add $v'$ to $others\_v$;
    \ENDIF
\ENDFOR
\STATE Output($k, $\assign$(self\_v, $\texttt{combineAll$_k$}$(others\_v))$);
\end{algorithmic}
\caption{\IGMV BASE Stage 2.}
\label{alg:igmvs2}
\end{algorithm}

\subsection{\IGMV BL: Block Multiplication}
\label{sec:method_bl}

\IGMV BL is a fast algorithm for \IGMV which is based on block multiplication.
The main idea is to group elements of the input matrix into blocks/submatrices of size $b$ by $b$. 
Also, we group elements of input vectors into blocks of length $b$.
In practice, grouping means we place all elements of a group into one line of input file.
Each block contains only non-zero elements of the matrix/vector.
The format of a matrix block with $k$ nonzero elements is 
($row_{block}, col_{block}, row_{elem_1},$ $col_{elem_1}, mval_{elem_1}, ...,$ $row_{elem_k}, col_{elem_k}, mval_{elem_k}$).
Similarly, the format of a vector block with $k$ nonzero elements is 
($id_{block}, id_{elem_1}, vval_{elem_1}, ..., id_{elem_k},$ $vval_{elem_k}$).
Only blocks with at least one nonzero elements are saved to disk.
This block encoding forces nearby edges in the adjacency matrix to be closely 
located; it is different from \hadoop's default behavior which does not guarantee co-locating them.
After grouping, \IGMV is performed on blocks, not on individual elements.
\IGMV BL is illustrated in Figure~\ref{fig:igmv_bl}.

\begin{figure}%[h]
\begin{center}
  \includegraphics[width=0.6\textwidth]{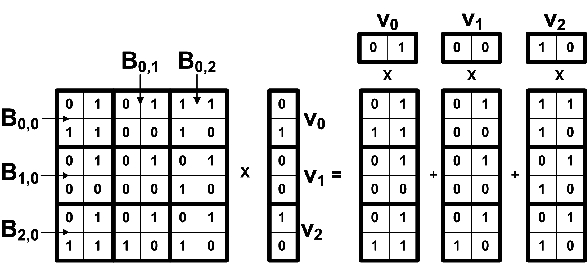}
  \caption{ \label{fig:igmv_bl}
  \IGMV BL using 2 x 2 blocks.
  B$_{i,j}$ represents a matrix block, and v$_i$ represents a vector block.
  The matrix and vector are joined block-wise, not element-wise.
  }
\end{center}
\end{figure}

%\subsection{Block Format}

In Section~\ref{sec:pegasusscalability}, we observe that \IGMV BL is 
at least 5 times faster than \IGMV BASE. 
There are two main reasons for this speed-up.

\begin{itemize}
%\tight
  \item \textbf{Sorting Time} Block encoding decreases the number of items to be sorted in the shuffling stage of \hadoop.
We observe that one of the main efficiency bottlenecks  in \hadoop is 
its shuffling stage where network transfer, sorting, and disk I/O take place. 
  \item \textbf{Compression} The size of the data decreases significantly by converting edges and vectors to block format.
The reason is that in \IGMV BASE we need $2 \times 4=8$ bytes to save each (srcid, dstid) pair. 
However in \IGMV BL we can specify each \textit{block} using a block row id and a block column id with 
two 4-byte Integers, and refer to elements inside the block using $2 \times \log{b}$ bits.
This is possible because we can use $\log{b}$ bits to refer to a row or column inside a block.
By this block method we decrease the edge file size. For instance, using 
block encoding we are able to decrease the size of the YahooWeb graph more than 50\%. 
\end{itemize}

%Finally, \IGMV BL includes an efficient way of saving blocks to disk.
%Let's say we have 50 edges in a 32 by 32 block. In which order do we save the edges to disk? Our answer is column-wise ordering which means we save the edges in the 1st column, then the edges in the 2nd column, and so on.
%This ordering is faster than row-wise ordering. The reason is if we use column-wise ordering we can use sequential scanning for block and vector.
%If we use row-wise ordering, we need a random access to block and vector which is inefficient even when all the block and vector can fit into memory.

\subsection{\IGMV CL: Clustered Edges}

We use co-clustering heuristics, see \cite{PapadimitriouICDM08} 
as a preprocessing step to obtain a better clustering of the edge set. 
Figure~\ref{fig:igmv_cl} illustrates the concept. 
The preprocessing step needs to be performed only once. 
If the number of iterations required for the execution of an algorithm 
is large, then it is beneficial to perform this preprocessing step. 
Notice that we have two variants of \IGMV: 
\IGMV CL and \IGMV BL-CL, which are \IGMV BASE and \IGMV BL with clustered edges 
respectively. 

\begin{figure}%[h]
\begin{center}
  \includegraphics[width=0.6\textwidth]{./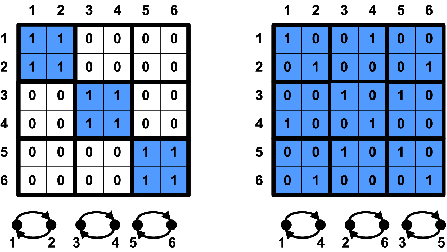}
  \caption{ \label{fig:igmv_cl}
  Clustered vs. non-clustered adjacency matrices for two isomorphic graphs. The edges are grouped into 2 by 2 blocks.
  The left graph uses only 3 blocks while the right graph uses 9 blocks.
%  Left: clustered edges which uses 3 blocks. Right: non-clustered edges which uses 9 blocks. Note that we decreased 67\% of the number of blocks by using clustered edges.
  }
\end{center}
\end{figure}

\subsection{\IGMV DI: Diagonal Block Iteration}

Reducing the number of iterations required for executing an algorithm 
in \mapreduce mitigates the computational cost a lot, 
since the main bottleneck of \IGMV is its shuffling and disk I/O steps. 
In \hcc, it is possible to decrease the number of iterations
when the graph has long chains. The main idea is to multiply diagonal matrix blocks 
and corresponding vector blocks as much as possible in one iteration.
This is illustrated in Figure~\ref{fig:cc_propagation}.

\begin{figure}%[h]
\begin{center}
  \includegraphics[width=0.6\textwidth]{./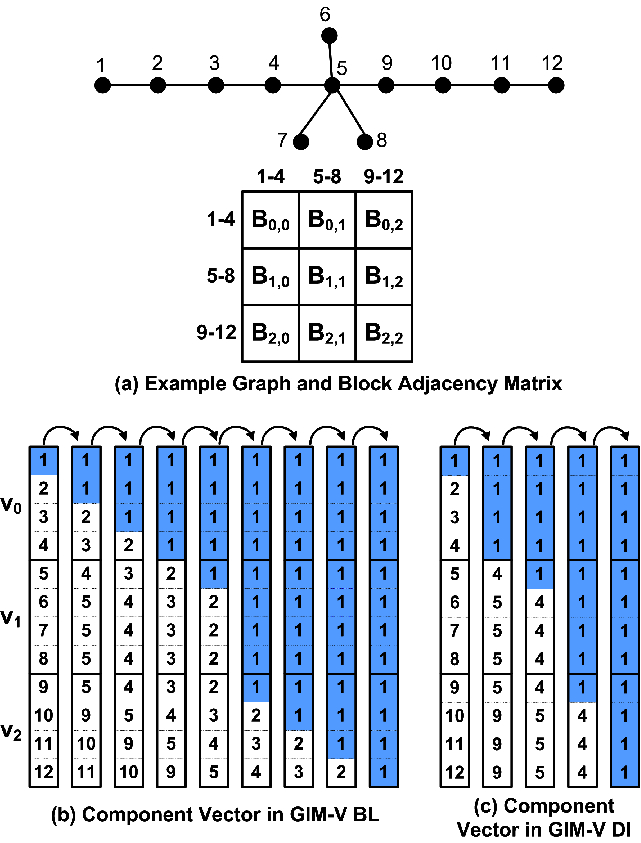}
  \caption{ \label{fig:cc_propagation}
  Propagation of component id(=1) when block width is 4.
  Each element in the adjacency matrix of (a) represents a 4 by 4 block;
  each column in (b) and (c) represents the vector after each iteration.
  \IGMV DL finishes in 4 iterations while \IGMV BL requires 8 iterations.
  }
\end{center}
\end{figure}

\subsection{\IGMV NR: Node Renumbering}
In HCC, the minimum vertex id is propagated to the other parts of 
the graph within at most $d$ steps, where $d$ is the diameter of the graph.
If the vertex with the minimum id (which we call `minimum node') 
is located at the center of the graph, then the number of iterations 
is small, close to $d$/2. However, if it is located at the boundary 
of the network, then the number of iteration can be close to $d$.
Therefore, if we preprocess the edges so that the minimum vertex 
id is swapped to the center vertex id, the number of iterations and the total running time of HCC would decrease.

Finding the center vertex with the minimum radius could be done with the \hadi algorithm. 
However, the algorithm is expensive for the pre-processing step of HCC. Therefore, we instead propose 
the following heuristic for finding the center node: we choose the center vertex by 
sampling from the high-degree vertices.
This heuristic is based on the fact that vertices with large degree have 
small radii~\cite{HadiSDM2010}. 

After finding a center node, we need to renumber the edge file to swap the 
current minimum vertex id with the center vertex id. 
The \mapreduce algorithm for this renumbering is shown 
in Algorithm~\ref{alg:renumbering}. Since the renumbering requires only 
filtering, it can be done with a Map-only job.

\begin{algorithm}[!t]
\begin{algorithmic}[1]
\REQUIRE Edge $E=\{(id_{src}, id_{dst})\}$, \\
    current minimum vertex id $minid_{cur}$, \\
    new minimum vertex id $minid_{new}$
\ENSURE Renumbered Edge $V=\{(id_{src}', id_{dst}')\}$
\STATE Renumber--Map(key $k$, value $v$):
\STATE $src \leftarrow k$;
\STATE $dst \leftarrow v$;
\IF{$src$ = $minid_{cur}$}
    \STATE $src \leftarrow minid_{new}$;
\ELSIF{$src$ = $minid_{new}$}
    \STATE $src \leftarrow minid_{cur}$;
\ENDIF
\IF{$dst$ = $minid_{cur}$}
    \STATE $dst \leftarrow minid_{new}$;
\ELSIF{$dst$ = $minid_{new}$}
    \STATE $dst \leftarrow minid_{cur}$;
\ENDIF
\STATE Output($src, dst$); %\hfill //(k: $id_{src}$, v: $vval'$)
\end{algorithmic}
\caption{\label{alg:renumbering}Renumbering the minimum node}
\end{algorithm}

\subsection{Analysis}

Finally, we analyze the time and space complexity of \IGMV.
It is not hard to observe that 
one iteration of \IGMV takes $O(\frac{n+m}{M} \log \frac{n+m}{M})$ time,
where $M$ stands for the number of machines. 
Assuming uniformity, mappers and reducers of \PassA and \PassB receive $O(\frac{n+m}{M})$ records per machine.
The running time is dominated by the sorting time for $\frac{n+m}{M}$ records.
\IGMV requires $O(V+E)$ space.

\section{Scalability} 
\label{sec:pegasusscalability} 

We perform experiments to answer the following questions:

\begin{itemize}
  \item  How does \IGMV scale up?
  \item Which of the proposed optimizations (block multiplication, clustered edges, 
  and diagonal block iteration, vertex renumbering) gives the highest performance gains?
\end{itemize}

\begin{table}%[ht]
\begin{center}
\begin{tabular}{|c|c|c|c|}
    \hline
   \textbf{Name} & \textbf{Vertices} & \textbf{Edges} & \textbf{Description}  \\ \hline \hline
   YahooWeb & 1,413 M & 6,636 M & WWW pages in 2002\\ \hline
   LinkedIn & 7.5 M & 58 M & person-person in 2006\\
            & 4.4 M & 27 M & person-person in 2005\\
            & 1.6 M & 6.8 M & person-person in 2004\\
            & 85 K & 230 K & person-person in 2003\\ \hline
   Wikipedia & 3.5 M & 42 M & doc-doc in 2007/02 \\
             & 3 M  & 35 M  & doc-doc in 2006/09 \\
             & 1.6 M  & 18.5 M  & doc-doc in 2005/11 \\  \hline
   Kronecker & 177 K & 1,977 M & synthetic \\
            & 120 K & 1,145 M &  synthetic \\
            & 59 K & 282 M &  synthetic \\
            & 19 K & 40 M &  synthetic\\ \hline
%            & 6.6 K & 5 M & synthetic\\
   WWW-Barabasi & 325 K & 1,497 K & WWW pages in nd.edu\\ \hline
   DBLP & 471 K &  112 K & document-document \\ \hline
   flickr & 404 K  & 2.1 M   & person-person \\ \hline
   Epinions &  75 K &  508 K & who trusts whom \\ \hline
%   YahooWeb & 1,413,511,390 & 6,636,600,779 & WWW pages crawled at 2002\\
%   LinkedIn & 7,550,955 & 58,323,792 & person-person in 2006\\
%            & 4,419,433 & 27,576,870 & person-person in 2005\\
%            & 1,637,128 & 6,836,430 & person-person in 2004\\
%            & 85,092 & 230,028 & person-person in 2003\\
%   Wikipedia & 3,566,907 & 42,375,912 & document-document in 2007/02 \\
%             & 2,983,494  & 35,048,116  & document-document in 2006/09 \\
%             & 1,634,989  & 18,540,603  & document-document in 2005/11 \\
%   Kronecker & 177,147 & 1,977,149,596 & synthetic\\
%            & 120,552 & 1,145,744,786 & synthetic\\
%            & 59,049 & 282,416,200 & synthetic\\
%            & 19,683 & 40,333,924 & synthetic\\
%            & 6,561 & 5,758,240 & synthetic\\
%   DBLP & 471,514 &  112,378 & document-document \\
%   flickr & 404,733  & 2,110,078   & person-person \\
%   Epinions &  75,888 &  508,960 & who trusts whom \\
 
\end{tabular}
\end{center}
\caption{Order and size of networks.
}
\label{tab:datasets}
\end{table}

\begin{figure*}%[htbp]
\begin{center}
\setlength{\tabcolsep}{0cm}
    \begin{tabular}{c c}
    \includegraphics[width=0.49\textwidth]{./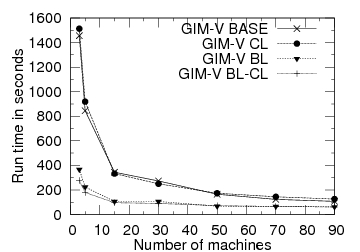}&
    \includegraphics[width=0.49\textwidth]{./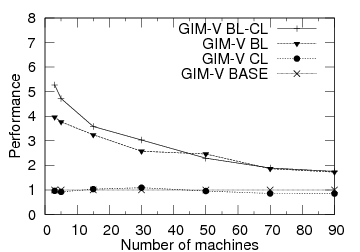} \\
      (a) Running time vs. Machines & (b) Performance vs. Machines \\
    \includegraphics[width=0.49\textwidth]{./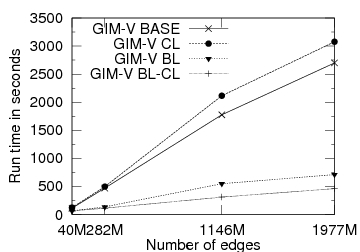} \\
     (c) Running time vs. Edges %(d) \IGMV DI vs. \IGMV BL-CL
    \end{tabular}
    \caption{
    Scalability and Performance of GIM-V. %with different number of machines and optimizations     on Kronecker 282 million edges with block size 32.
    (a) Running time decreases quickly as more machines are added.
    (b) The performance(=$1/running$ $time$) of 'BL-CL'
    wins more than 5x (for n=3 machines) over the 'BASE'.
    (c) %Running Time vs. edges (Kronecker graphs on 10 machines).
    Every version of \IGMV shows linear scalability.
    %The best version(\IGMV BL-CL) is more than 5 times faster     than \IGMV BASE when number of edges grow above 1,000 million.
%    (d) \IGMV DI finishes in 6 iterations, which is much faster than \IGMV BL-CL (18 iterations).
    }
    \label{fig:runtime_performance_machines}
    \end{center}
\end{figure*}

The graphs we use in our experiments   are shown in Table~\ref{tab:datasets}. 
We run \pegasus in M45 \hadoop cluster by Yahoo! and our own cluster composed of 9 machines.
M45 is one of the top 50 supercomputers in the world with the total 1.5 Pb storage and 3.5 Tb memory.
For the performance and scalability experiments, we used synthetic Kronecker graphs~\cite{Leskovec05Realistic} 
since we can generate them with any size, and they are one of the most realistic graphs among synthetic graphs.

\subsection{Results}

We first show how the performance of our method changes as we add more machines. 
Figure~\ref{fig:runtime_performance_machines} shows the running time and performance 
of \IGMV for PageRank with Kronecker graph of 282 million edges, and size 32 blocks if necessary.

In Figure~\ref{fig:runtime_performance_machines} (a), for all of the methods the running time decreases as we add more machines.
Note that clustered edges(\IGMV CL) didn't help performance unless it is combined with block encoding.
When it is combined, however, it showed the best performance (\IGMV BL-CL).

In Figure~\ref{fig:runtime_performance_machines} (b), we see that the relative performance of each method compared to \IGMV BASE method decreases as number of machines increases.
With 3 machines (minimum number of machines which \hadoop `distributed mode' supports), the fastest method(\IGMV BL-CL) ran 5.27 times faster than \IGMV BASE. With 90 machines, \IGMV BL-CL ran 2.93 times faster than \IGMV BASE. This is expected since there are fixed component(JVM load time, disk I/O, network communication) which can not be optimized even if we add more machines.

Next we show how the performance of our methods changes as the input size grows. Figure~\ref{fig:runtime_performance_machines} (c) shows the running time of \IGMV with different number of edges under 10 machines. As we can see, all of the methods scales linearly with the number of edges.

Next, we compare the performance of \IGMV DI and \IGMV BL-CL for \hcc in graphs with long chains.
For this experiment we made a new graph whose diameter is 17, by adding a length 15 chain to the 282 million Kronecker graph which has diameter 2.
As we see in Figure~\ref{fig:gimv_di_bl}, \IGMV DI finished in 6 iteration while \IGMV BL-CL finished in 18 iteration.
The running time of both methods for the first 6 iterations are nearly same. Therefore, the diagonal block iteration method decreases the number of iterations while not affecting the running time of each iteration much.

\begin{figure}%[h]
\begin{center}
  \includegraphics[width=0.6\textwidth]{./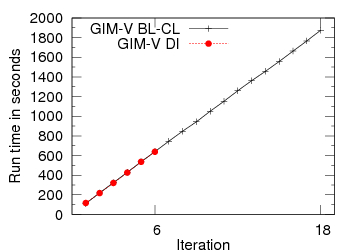}
  \caption{ \label{fig:gimv_di_bl}
Comparison of \IGMV DI and \IGMV BL-CL for \hcc. \IGMV DI finishes in 6 iterations while \IGMV BL-CL finishes in 18 iterations due to long chains.
  }
\end{center}
\end{figure}

Finally, we compare the number of iterations with/without renumbering.
%Figure~\ref{fig:dd_li} and \ref{fig:dd_wiki} show the degree distribution of LinkedIn and Wikipedia graphs.
Figure~\ref{fig:dd_li} shows the degree distribution of LinkedIn.
Without renumbering, the minimum vertex has degree 1, which is not surprising since about 46 \% of the vertices have degree 1 due to the power-law behavior of the degree distribution.
We show the number of iterations after changing the minimum vertex to each of the top 5 highest-degree vertices in Figure~\ref{fig:renum_li}.
We see that the renumbering decreased the number of iterations to 81 \% of the original.
Similar results are observed for the Wikipedia graph in Figure~\ref{fig:dd_wiki} and \ref{fig:renum_wikipedia}.
The original minimum vertex has degree 1, and the number of iterations decreased to 83 \% of the original after renumbering.

\begin{figure}%[h]
\begin{center}
  \includegraphics[width=0.6\textwidth]{./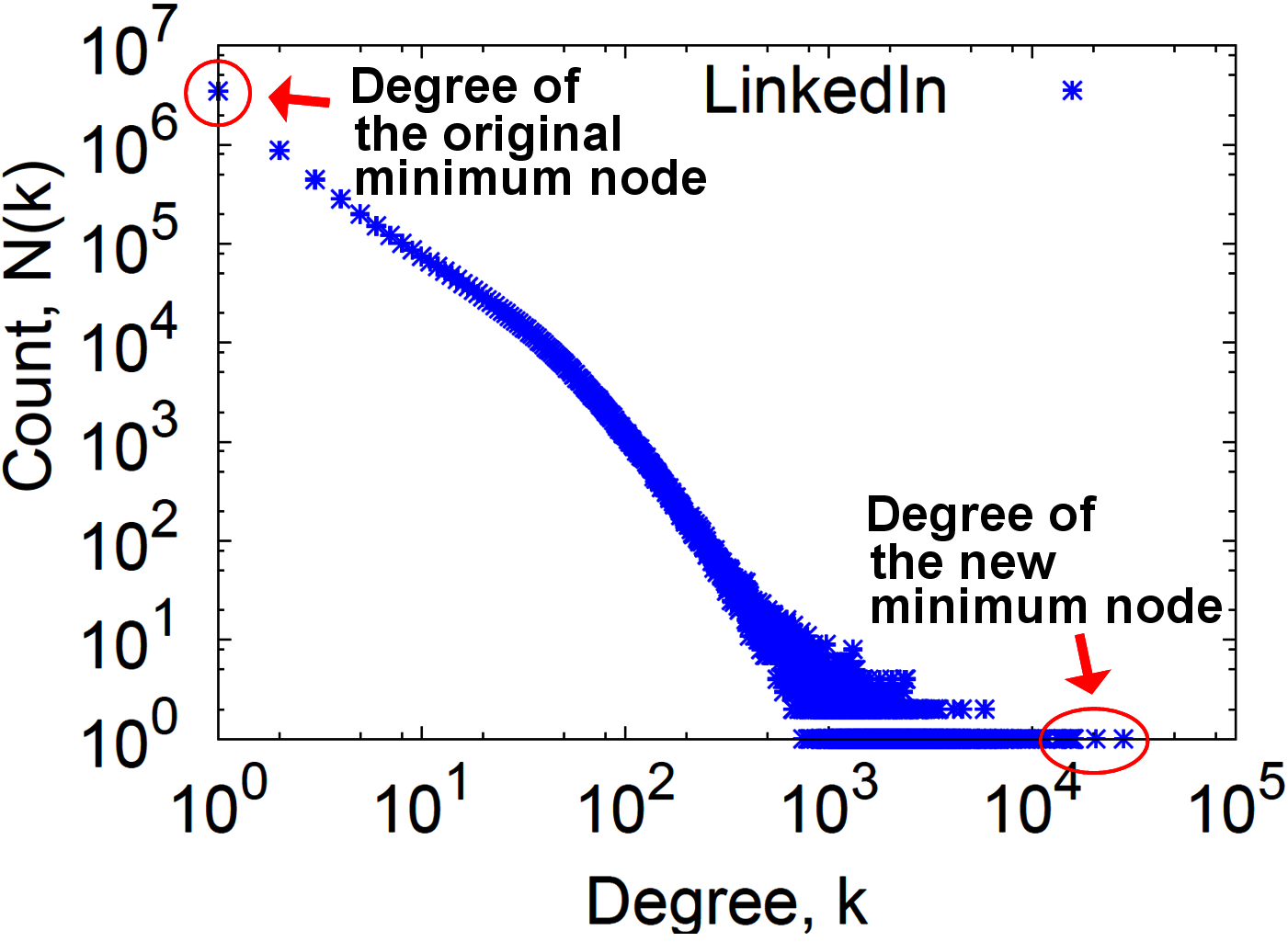}
  \caption{ \label{fig:dd_li}
Degree distribution of LinkedIn. Notice that the original minimum vertex has degree 1, which is highly probable given the power-law behavior of the degree distribution. After the renumbering, the minimum vertex is replaced with a highest-degree node.
  }
\end{center}
\end{figure}

\begin{figure}%[h]
\begin{center}
  \includegraphics[width=0.6\textwidth]{./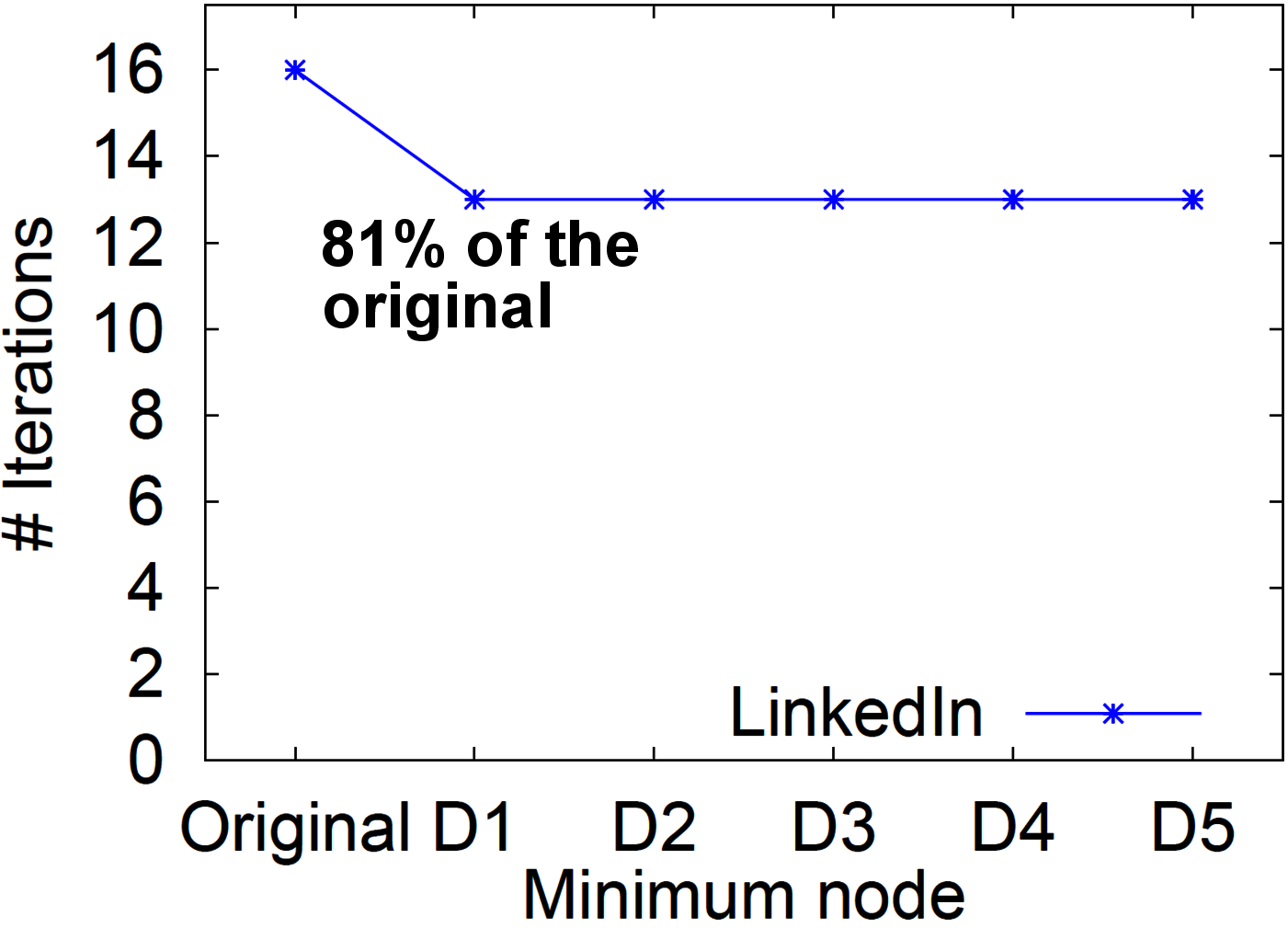}
  \caption{ \label{fig:renum_li}
Number of iterations vs. the minimum vertex of LinkedIn, for connected components.
D$i$ represents the vertex with $i$-th largest degree.
Notice that the number of iterations decreased by 19 \% after renumbering.
}
\end{center}
\end{figure}

\begin{figure}%[h]
\begin{center}
  \includegraphics[width=0.6\textwidth]{./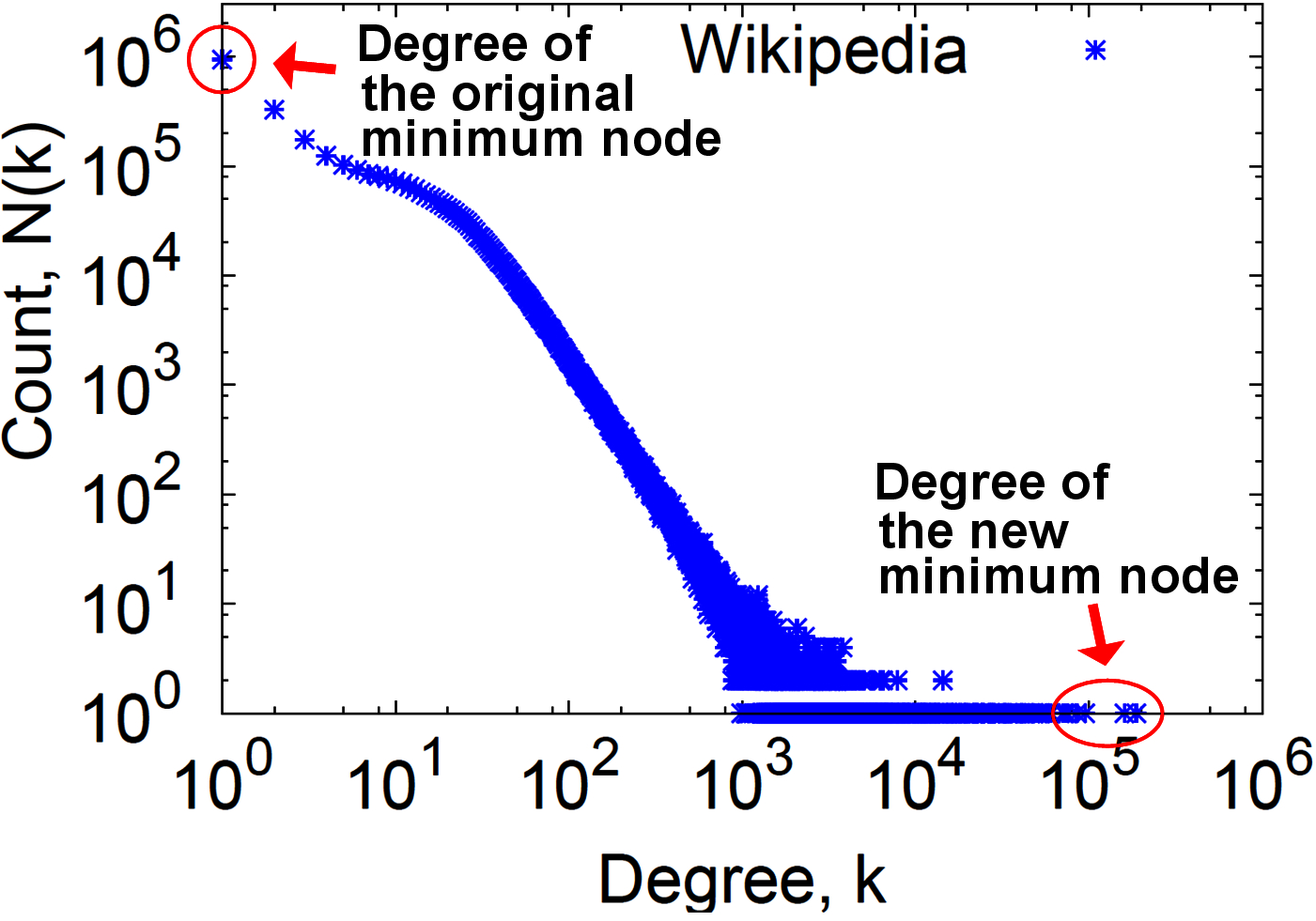}
  \caption{ \label{fig:dd_wiki}
Degree distribution of Wikipedia. Notice that the original minimum vertex has degree 1, as in LinkedIn. After the renumbering, the minimum vertex is replaced with a highest-degree node.
  }
\end{center}
\end{figure}

\begin{figure}%[h]
\begin{center}
  \includegraphics[width=0.6\textwidth]{./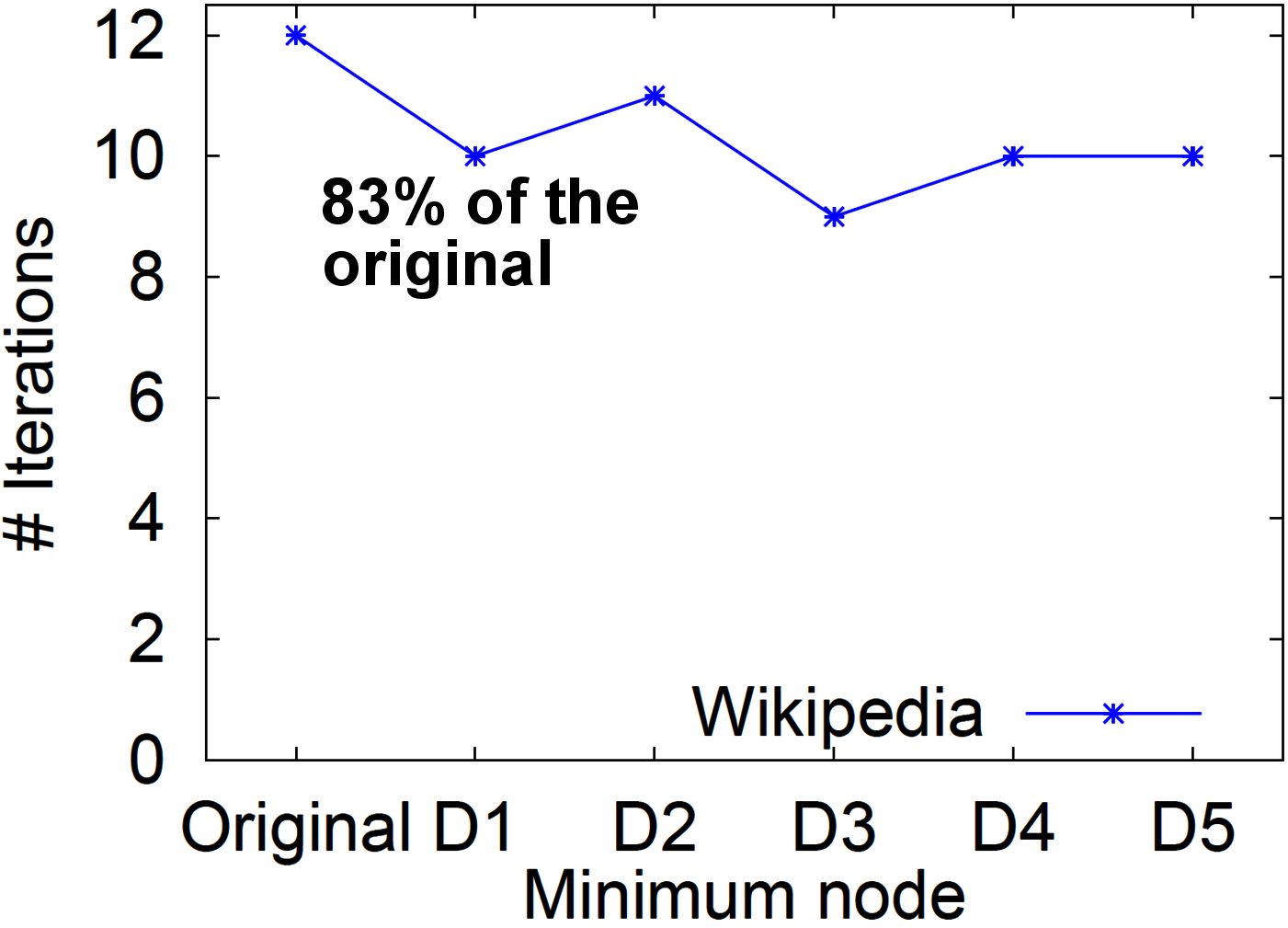}
  \caption{ \label{fig:renum_wikipedia}
Number of iterations vs. the minimum vertex of Wikipedia, for connected components.
D$i$ represents the vertex with $i$-th largest degree.
Notice that the number of iterations decreased by 17 \% after renumbering.
  }
\end{center}
\end{figure} 
 
\section{Pegasus at Work} 
\label{sec:pegasusatwork}

In this section we evaluate \pegasus on real-world networks. 

\subsection{Connected Components of Real Networks}
 
Figure~\ref{fig:evolution_cc} shows the evolution of connected components of LinkedIn and Wikipedia graphs. 
Figure~\ref{fig:yahoo_cc} shows the distribution of connected components in the YahooWeb graph. 
We make the following set of observations.

\begin{figure*}%[htbp]
\begin{center}
    \begin{tabular}{c}
    \includegraphics[width=0.7\textwidth]{./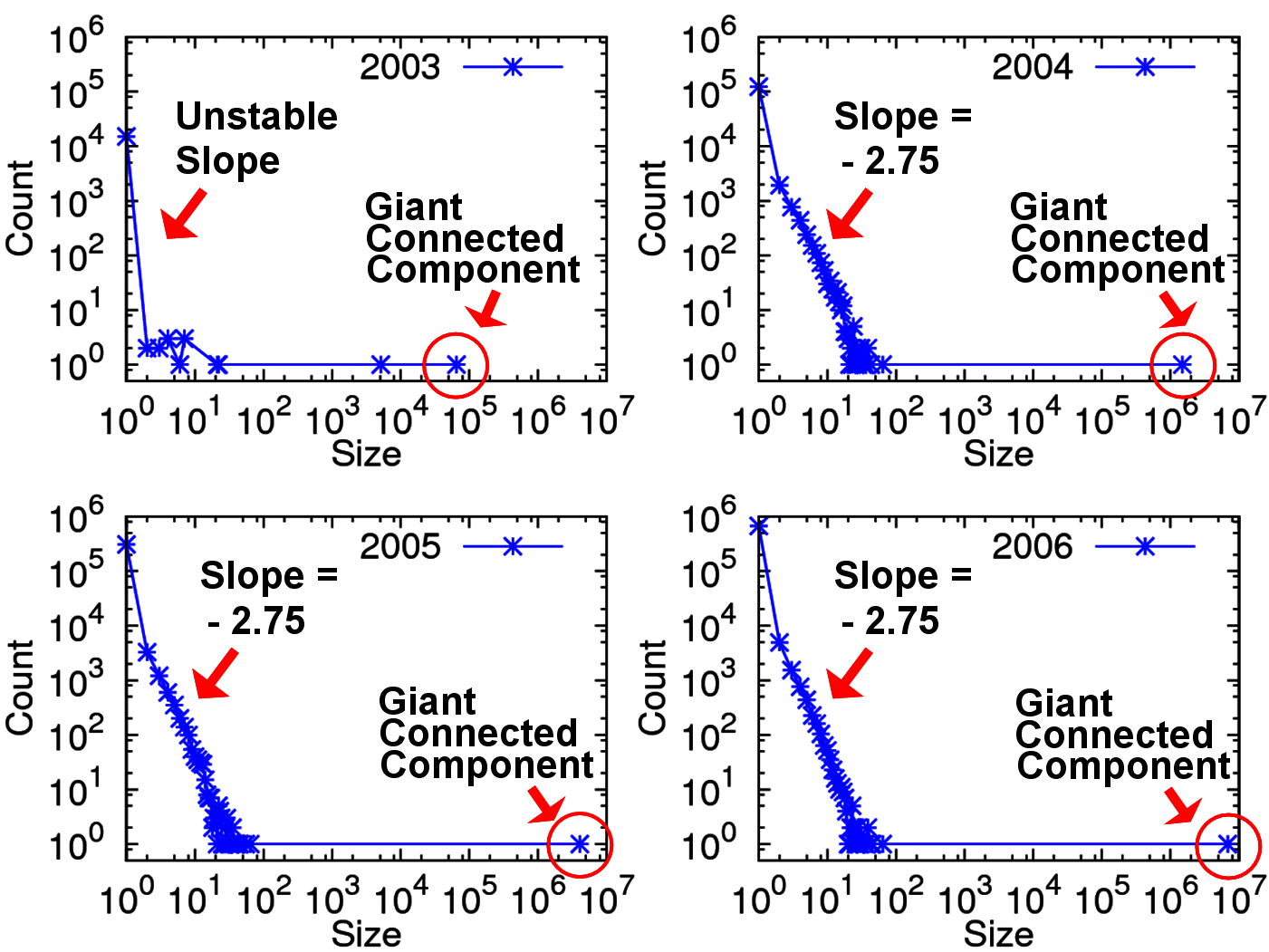}\\
     (a) Connected Components of LinkedIn \\
    \includegraphics[width=0.7\textwidth]{./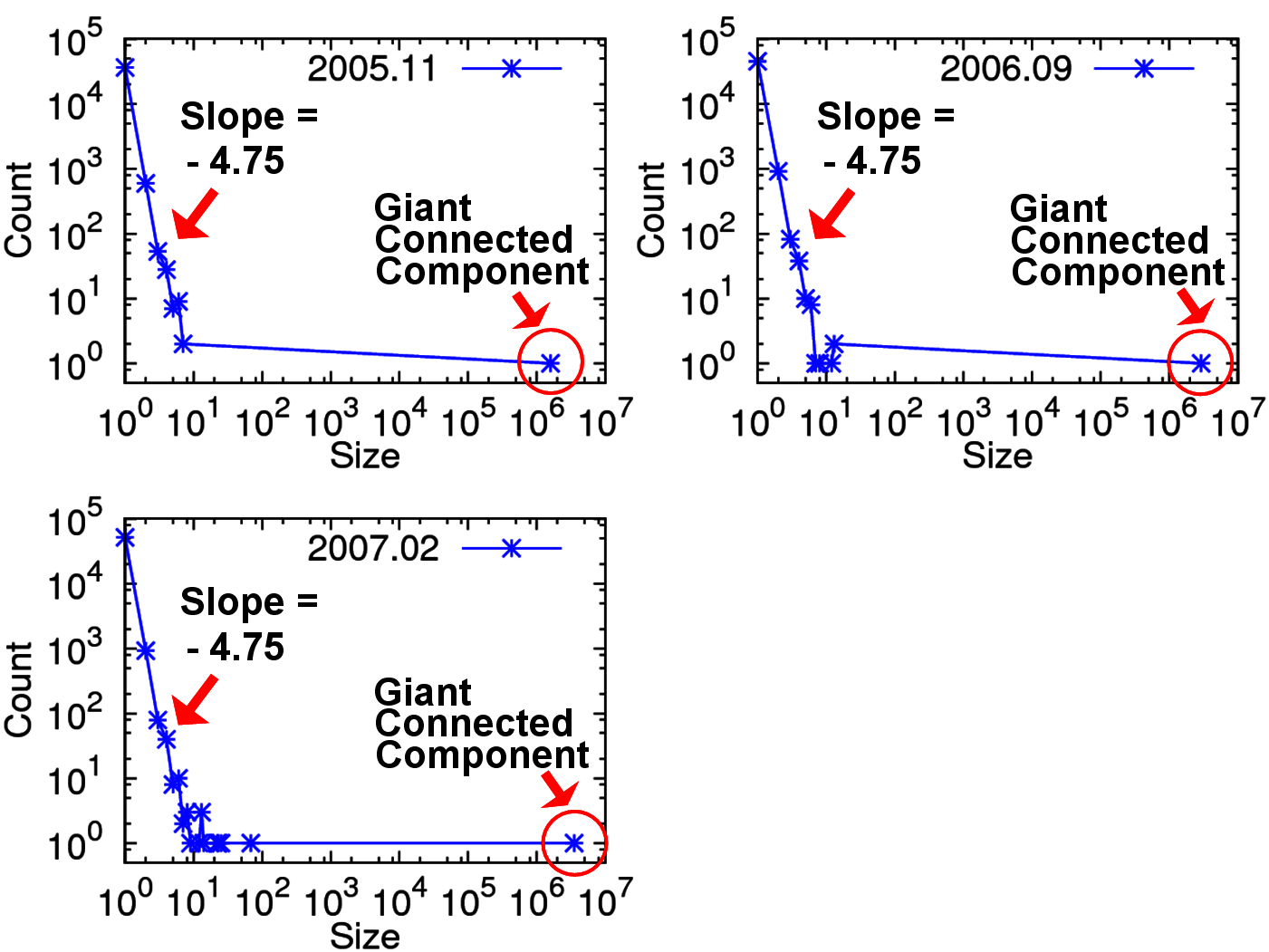} \\
     (b) Connected Components of Wikipedia 
    \end{tabular}
    \caption{
    The evolution of connected components.
    (a)  
    The giant connected component grows for each year. However, the second 
    largest connected component do not grow above Dunbar's number($\approx$ 150) and the 
   slope of the size distribution remains constant after the gelling point at year 2003.
    As in LinkedIn, notice the growth of giant connected component and the constant slope of the size distribution.
    }
    \label{fig:evolution_cc}
    \end{center}
\end{figure*}

\begin{figure}
\begin{center}
  \includegraphics[width=0.6\textwidth]{./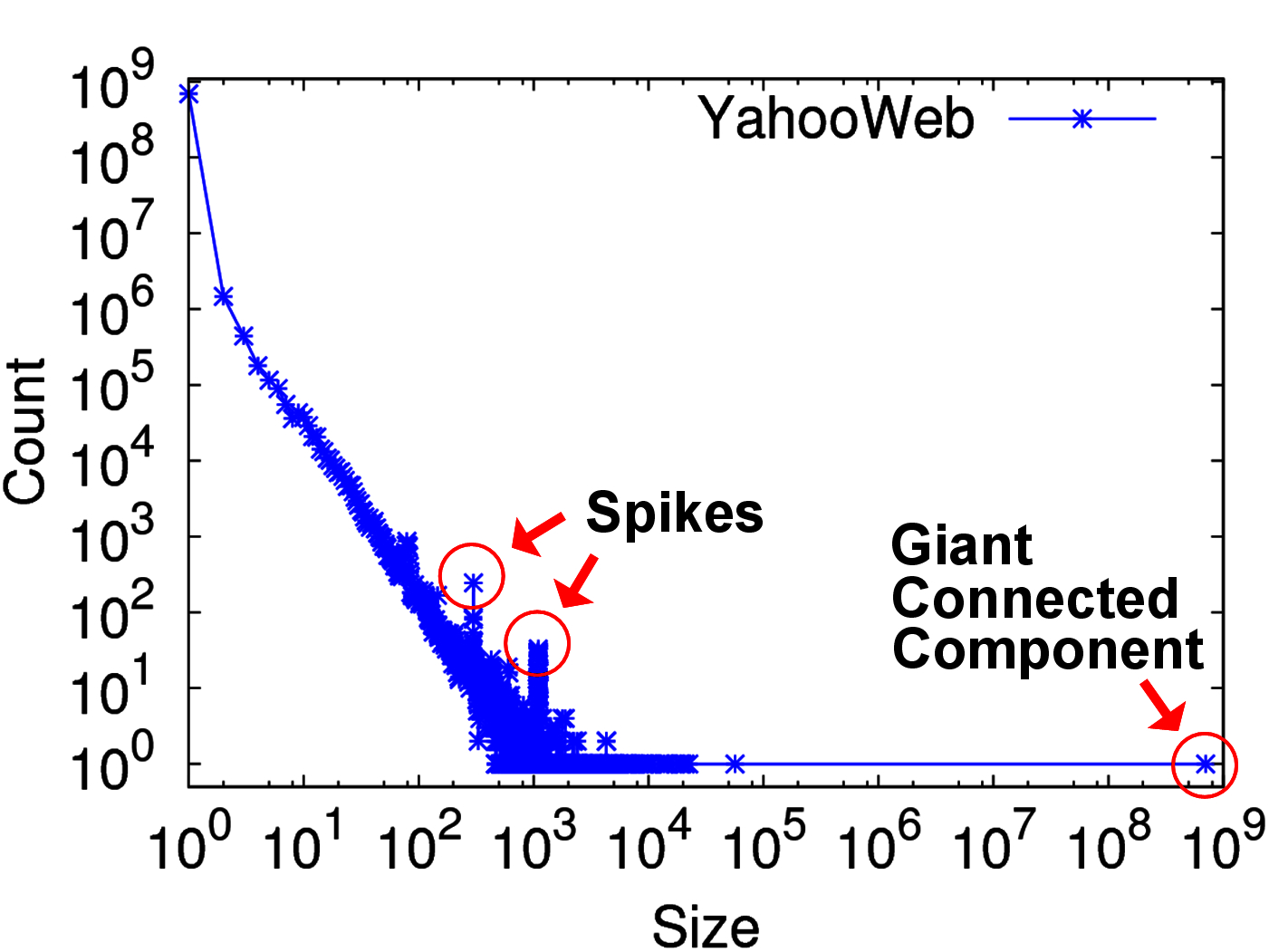}
  \caption{ \label{fig:yahoo_cc}
Connected Components of YahooWeb. Notice the two anomalous spikes which are far from the constant-slope line. Most of them are domain selling or porn sites which are replicated from templates.
  }
\end{center}
\end{figure}

\textbf{Power Laws in Connected Components Distributions}
We observe a power law relation between the count and size of small 
connected components in Figure~\ref{fig:evolution_cc}(a),(b) and Figure~\ref{fig:yahoo_cc}.
This reflects that the connected components in real networks are formed 
by preferential attachment processes.

\textbf{Absorbed Connected Components and Dunbar's number}
The size of the giant component keeps growing while the second and third largest connected components
do not grow beyond size 100, until they are absorbed from the giant component. 
This does not surprise us, since had we had two giant components it is not unlikely
that some new vertex becomes connected to both. 

\textbf{``Anomalous'' Connected Components} 
In Figure~\ref{fig:yahoo_cc}, we see two spikes.
In the first spike at size 300, more than half of the components have exactly the same structure 
and were made from a domain selling company where each component represents a domain to be sold.
The spike happened because the company replicated sites using the same template, 
and injected the disconnected components into WWW network.
In the second spike at size 1101, more than 80 \% of the components are porn sites 
disconnected from the giant connected component.
In general, by By looking carefully the distribution plot of connected components, 
we were able to detect interesting communities 
with special purposes which are disconnected from the rest of the Internet.

\subsection{PageRank scores of Real Networks}

We analyze the PageRank scores of real graphs, using \pegasus.
Figure~\ref{fig:yahoo_pr} and \ref{fig:wwwbb_pr} show the distribution of 
the PageRank scores for the Web graphs, and Figure~\ref{fig:evolution_pr}
shows the evolution of PageRank scores for the LinkedIn and Wikipedia graphs.
We observe power-law relations between the PageRank score
and the number of vertices with such PageRank. 
The top 3 highest PageRank sites for the year 2002 are
{\tt www.careerbank.com}, {\tt access.adobe.com}, and
{\tt top100.rambler.ru}.
As expected, they have huge in-degrees (from $\approx$70K to $\approx$70M).

\begin{figure}%[h]
\begin{center}
  \includegraphics[width=0.6\textwidth]{./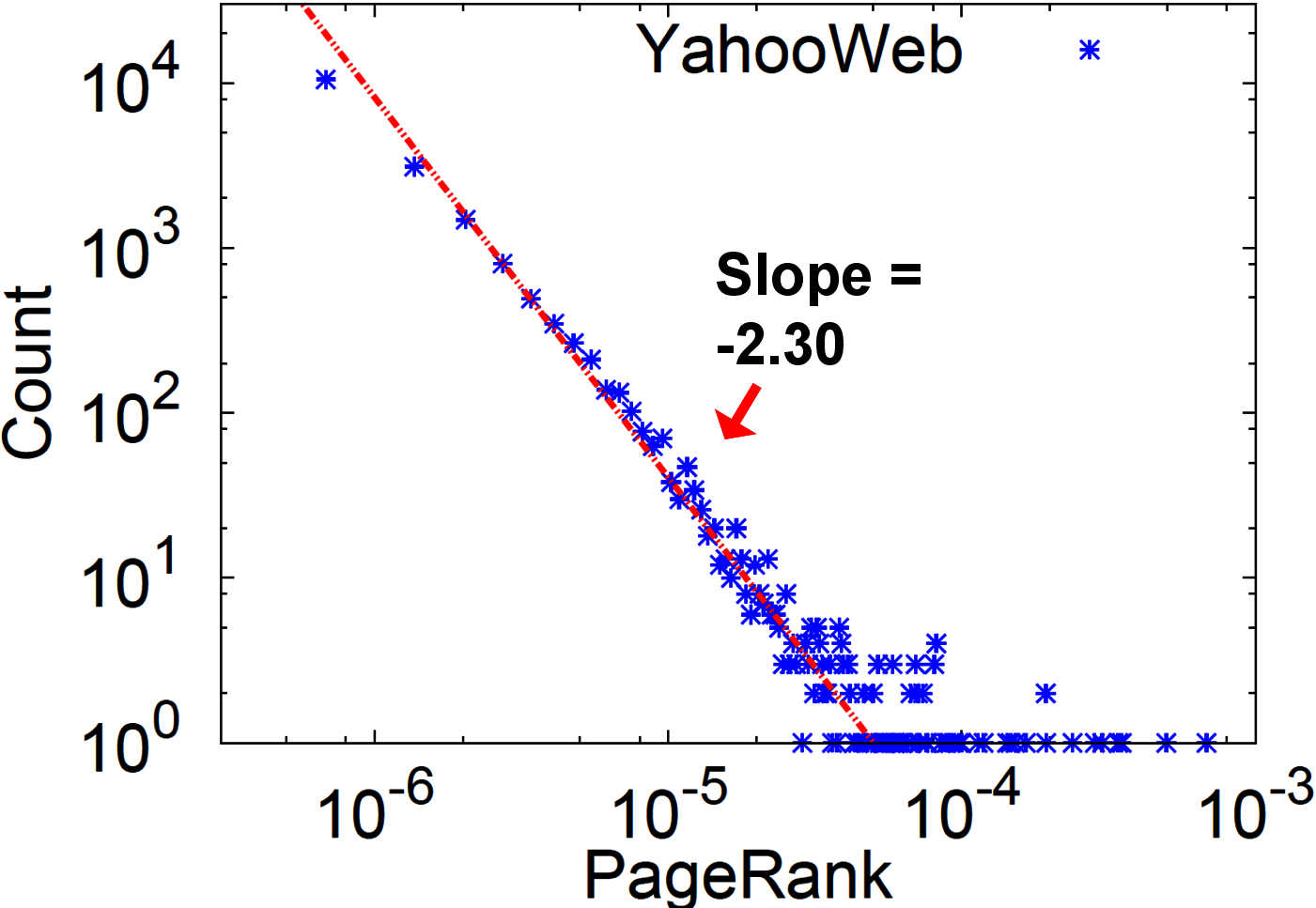}
  \caption{ \label{fig:yahoo_pr}
PageRank distribution of YahooWeb.
The distribution follows a power law with an exponent 2.30.  }
\end{center}
\end{figure}

\begin{figure}%[h]
\begin{center}
  \includegraphics[width=0.6\textwidth]{./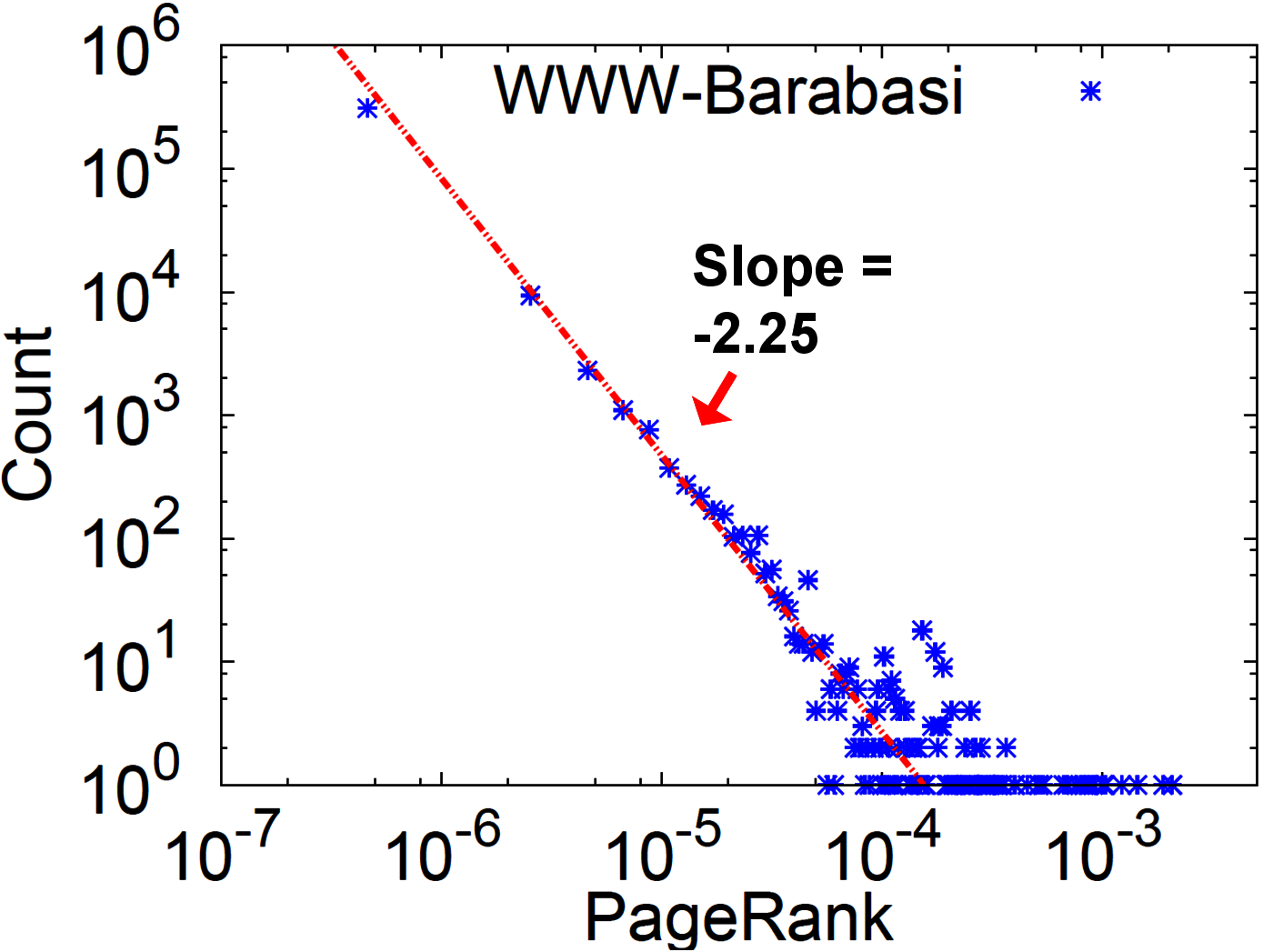}
  \caption{ \label{fig:wwwbb_pr}
PageRank distribution of WWW-Barabasi.
The distribution follows a power law with an exponent 2.25.
  }
\end{center}
\end{figure}

\begin{figure*}%[htbp]
\begin{center}
    \begin{tabular}{c}
    \includegraphics[width=0.7\textwidth]{./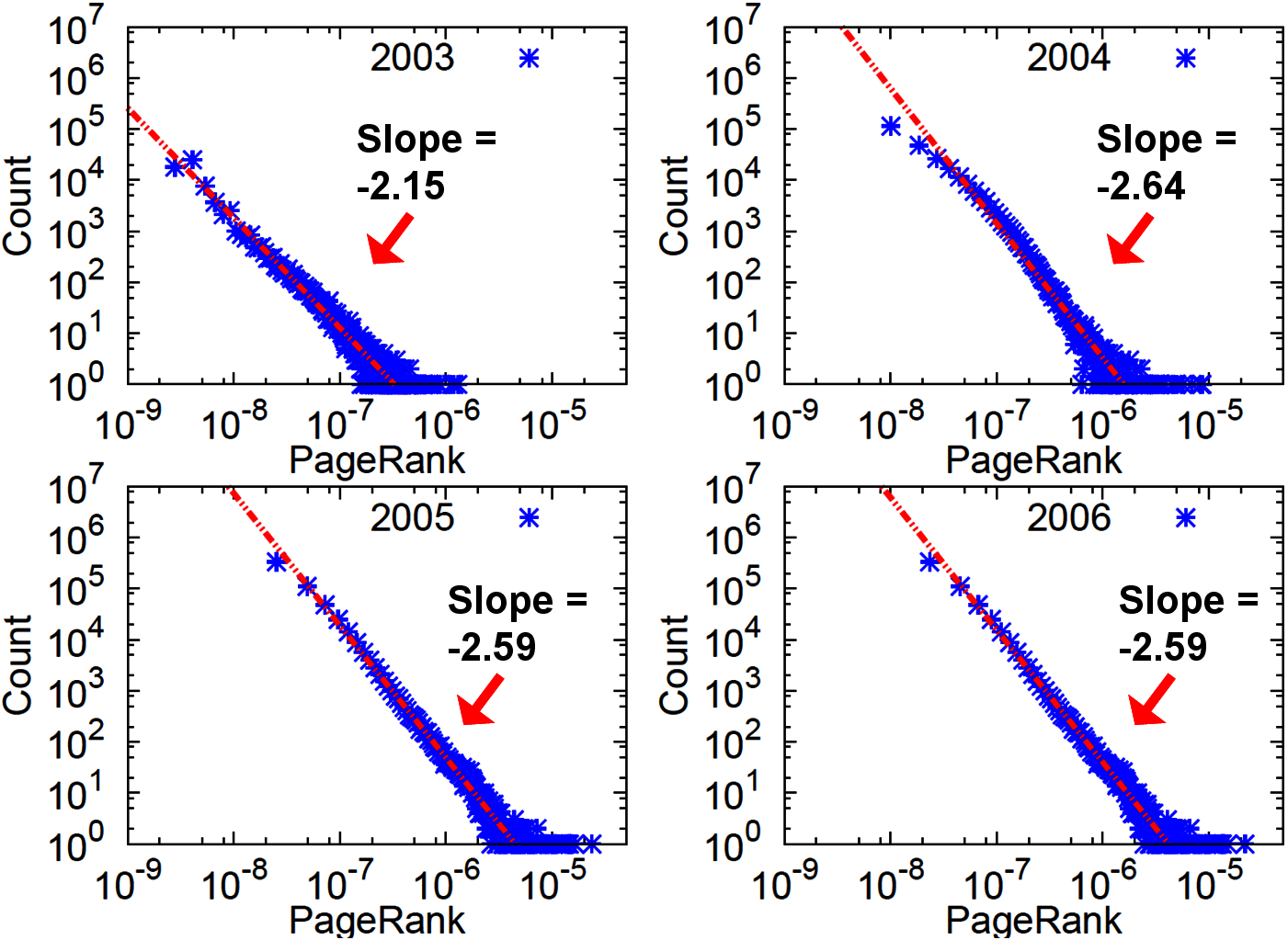} \\
     (a) PageRanks of LinkedIn \\
    \includegraphics[width=0.7\textwidth]{./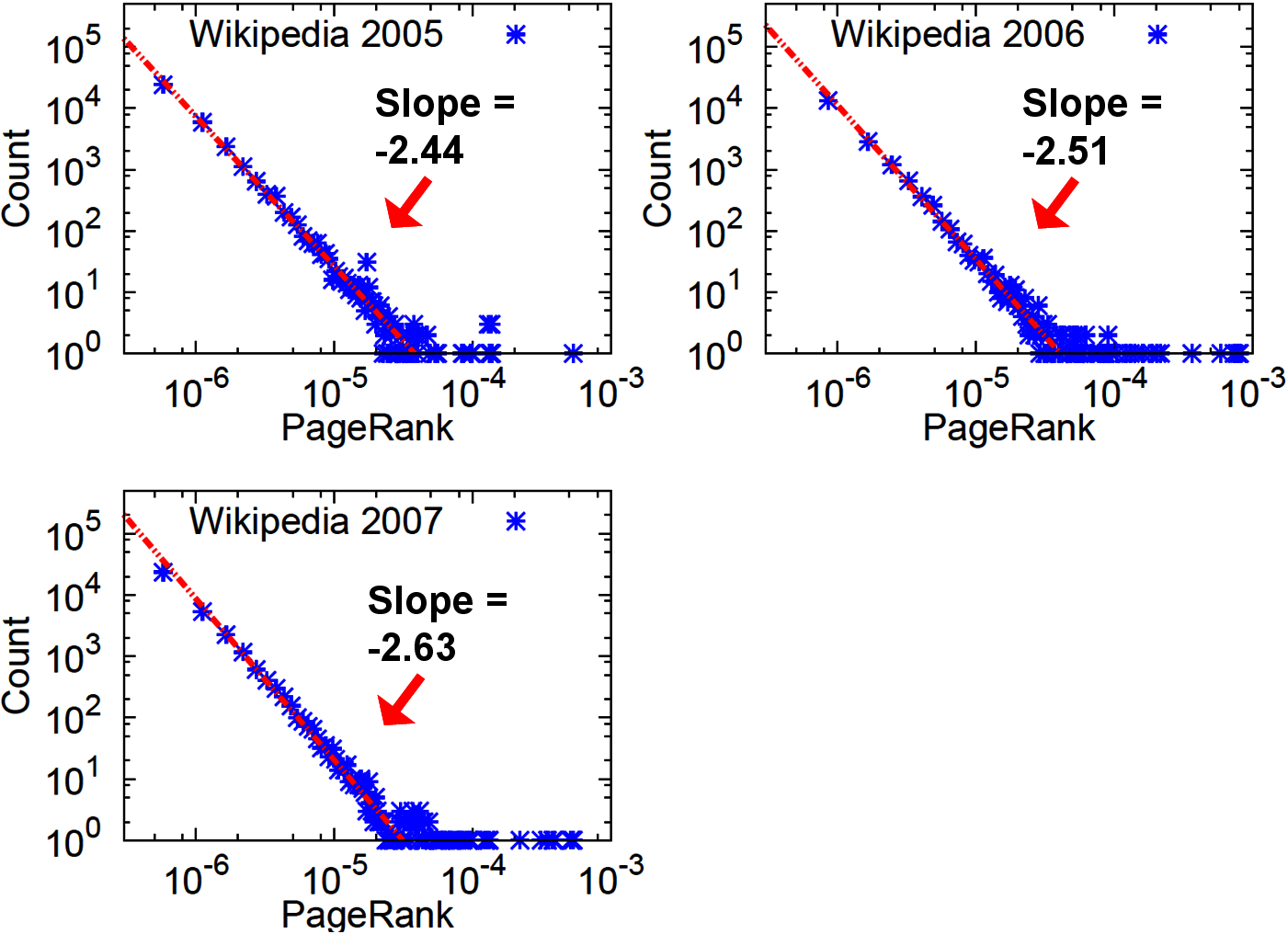} \\
     (b) PageRanks of Wikipedia %\\
    \end{tabular}
    \caption{
    The evolution of PageRanks.%\newline
    (a) The distributions of PageRanks follows a power-law.
    However, the exponent at year 2003, which is around the gelling point, is much different from year 2004, which are after the gelling point.
    The exponent increases after the gelling point and becomes stable. Also notice the maximum PageRank after the gelling point is about 10 times larger than that before the gelling point due to the emergence of the giant connected component. %\newline
    (b) Again, the distributions of PageRanks follows a power-law.
    Since the gelling point is before year 2005, the three plots show similar characteristics: the maximum PageRanks and the slopes are similar.
    }
    \label{fig:evolution_pr}
    \end{center}
\end{figure*}

\subsection{Diameter of Real Networks}

We analyze the diameter and radius of real networks with \pegasus. 
Figure~\ref{fig:dir_radius} shows the radius plot of real networks. We have following observations:

\spara{Small Diameter:} For all the graphs in Figure~\ref{fig:dir_radius}, 
the average diameter is less than 6.09, verifying the six degrees of separation theory.

\textbf{Small Changes in the Diameter over Time: } For the LinkedIn graph, the average diameter remains in the range of 5.28 and 6.09
for all snapshots. 
For the Wikipedia graph, the average diameter remains in the range of 4.76 and 4.99
for all snapshots.
Also, we do not observe a monotone pattern in the growth. 

\textbf{Bimodal Structure of Radius Plot}
For every plot, we observe a bimodal shape which reflects the structure of these real graphs. 
The graphs have one giant connected component where majority of vertices belongs to, and many 
smaller connected components whose size follows a power law.
Therefore, the first mode is at radius zero which comes from single-vertex components; 
the second mode (e.g., at radius 6 in Epinion) comes from the giant connected component.

\begin{figure*}%[htbp]
\begin{center}
  \includegraphics[width=1.0\textwidth]{./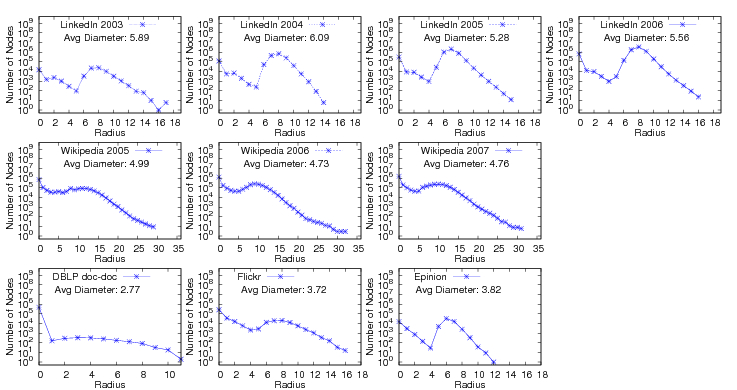}
  \caption{ \label{fig:dir_radius}
  Radius of real graphs. %\newline
  X axis: radius in linear scale.
  Y axis: number of vertices in log scale.
  (Row 1) LinkedIn from 2003 to 2006. %\newline
  (Row 2) Wikipedia from 2005 to 2007. %\newline
  (Row 3) DBLP, flickr, Epinion. %\newline
  Notice that all the radius plots have the bimodal structure due to many smaller connected components(first mode) and the giant connected component(second mode).
  }
\end{center}
\end{figure*}
 
%}

\clearpage
\newpage
\newpage
\vspace*{\fill}
\begingroup
\centering
{\Huge  II {\em Computational Cancer Biology } } 
\endgroup
\vspace*{\fill}
\newpage 

%% Biology 

\chapter{Approximation Algorithms for Speeding up Dynamic Programming and Denoising aCGH data}
\label{acghchapter}
\lhead{\emph{Dynamic Programming and Denoising aCGH data}} 
\section{Problem \& Formulation}

Our approach to the aCGH denoising problem, see Section~\ref{subsec:preprocessacgh}, is based 
on the well-established observation that near-by probes tend to have the same DNA copy
number. We formulate the problem of denoising aCGH data as the problem of approximating 
a signal $P$ with another signal $F$ consisting of a few piecewise constant segments.
Specifically, let  $P=(P_1, P_2, \ldots, P_n) \in \field{R}^n$ be the input
signal -in our setting the sequence of the noisy aCGH measurements- 
and let  $C$ be a constant. Our goal is to find a function 
$F:[n]\rightarrow \field{R}$ which optimizes the following objective function:

\begin{equation}
\min_{F} \sum_{i=1}^n (P_i-F_i)^2 + C\times (|\{i:F_i \neq F_{i+1} \} |+1).
\label{eq:optimizationobjective}
\end{equation}

The best known exact algorithm for solving the optimization problem defined by Equation~\ref{eq:optimizationobjective} runs in  $O(n^2)$ time but as our results suggest 
this running time is likely not to be tight. It is worth noting that existing techniques for speeding up dynamic programming \cite{fyao,Eppstein88speedingup,146650} do not apply to our problem. 

The main algorithmic contributions of this Chapter are 
two approximation algorithms for solving this recurrence.
The first algorithm approximates the objective within an additive error $\epsilon$ 
which we can make as small as we wish and its key idea is the reduction of the 
problem to halfspace range queries, a well studied computational geometric problem \cite{Agarwal99geometricrange}. 
The second algorithm carefully decomposes the problem into a ``small'' (logarithmic) 
number of subproblems which satisfy the quadrangle inequality (Monge property).

The remainder of this Chapter is organized as follows: Section~\ref{sec:Trimmervanilla} presents
the vanilla dynamic programming algorithm which runs in $O(n^2)$.
Section~\ref{sec:Trimmertransitionfunction} analyzes properties of the recurrence which 
will be exploited in Sections~\ref{sec:Trimmerhalfspace} and~\ref{sec:Trimmermultiscale} where we describe the
additive and multiplicative approximation algorithms respectively. 
In Section~\ref{sec:Trimmervalidation} we validate our model by performing an extensive
biological analysis of the findings of our segmentation. 

\section{$O(n^2)$ Dynamic Programming Algorithm}
\label{sec:Trimmervanilla}

In order to solve the optimization problem defined by Equation~\ref{eq:optimizationobjective}, we define the key quantity $OPT_i$ given by the following recurrence: 

\begin{align*}
&OPT_0 = 0 \\
&OPT_i = \min_{0 \leq j \leq i-1}  \left[   OPT_{j}+ w(i,j) \right] + C\text{, for~} i>0 \\
&\text{where~} w(i,j) = \sum_{k=j+1}^i \left(P_k - \frac{\sum_{m=j+1}^i P_m}{i-j} \right)^2.
\end{align*}

The above recurrence has a straightforward interpretation:
$OPT_i$ is equal to the minimum cost of fitting a set of piecewise constant segments 
from point $P_1$ to $P_i$ given that  index $j$ is a breakpoint.
The cost of fitting the segment from $j+1$ to $i$ is $C$. 
The weight function $w()$ is the minimum squared error for fitting a constant segment 
on points $\{P_{j+1},\ldots,P_i\}$, which is obtained for the constant segment with value  $\frac{\sum_{m=j+1}^i P_m}{i-j}$,
i.e., the average of the points in the segment.
This recursion directly implies a simple dynamic programming algorithm.
We call this algorithm \trimmer\ and the pseudocode is shown in Algorithm~\ref{alg:vanillatrimmer}.
The main computational bottleneck of \trimmer\ is the computation of the auxiliary matrix $M$, an upper diagonal matrix for which $m_{ij}$ is the minimum squared
error of fitting a segment from points $\{P_i,\ldots,P_j\}$.
To avoid a naive algorithm that would simply find the average of those points
and then compute the squared error, resulting in $O(n^3)$ time,
we use Claim~\ref{thrm:trivial}.

\begin{algorithm}
\caption{\label{alg:vanillatrimmer}CGHTRIMMER algorithm}
\begin{algorithmic}
\STATE Input: Signal $P= (P_1,\ldots,P_n )$, Regularization parameter $C$
\STATE Output:  Optimal Segmentation with respect to our objective (see Equation~\ref{eq:optimizationobjective}) \\ 
\COMMENT{ *Compute an $n \times n$ matrix $M$, where $M_{ji} =\sum_{k=j}^i \left(P_k - \frac{\sum_{m=j}^i P_m}{i-j+1} \right)^2$.\\
$A$ is an auxiliary matrix of averages, i.e., $A_{ji} = \tfrac{ \sum_{k=j}^i P_k }{i-j+1}$.*}
\STATE Initialize matrix $A \in \field{R}^{n \times n}$, $A_{ij}=0, i \neq j$ and $A_{ii}= P_i$. 
\FOR{$i=1$ to $n$} 
\FOR{$j=i+1$ to $n$} 
\STATE $A_{i,j} \leftarrow \frac{j-i}{j-i+1} A_{i,j-1}+ \frac{1}{j-i+1} P_j$ 
\ENDFOR
\ENDFOR
\FOR{$i=1$ to $n$} 
\FOR{$j=i+1$ to $n$} 
\STATE $M_{i,j} \leftarrow M_{i,j-1} + \frac{j-i}{j-i+1}(P_j-A_{i,j-1})^2$
\ENDFOR
\ENDFOR  \\
\COMMENT{ * Solve the Recurrence.*}
\FOR{$i=1$ to $n$} 
\STATE     $\text{OPT}_i  \leftarrow \min_{0 \leq j \leq i-1}{ OPT_{j}+M_{j+1,i}+ C }$
\STATE $\text{BREAK}_i  \leftarrow \text{arg}\min_{1 \leq j \leq i}{ OPT_{j-1}+M_{j,i} + C}$
\ENDFOR
\end{algorithmic}
\end{algorithm}

\begin{claim}
\label{thrm:trivial}
Let $\alpha_{(j)}$ and $m_{(j)}$ be the average and the minimum squared error of fitting a constant segment 
to points $\{ P_1,\ldots, P_j \}$ respectively. Then, 

\begin{equation}
\alpha_{(j)}= \frac{j-1}{j} \alpha_{(j-1)} + \frac{1}{j} P_j,
\label{eq:onlinemean}
\end{equation}

\begin{equation}
m_{(j)}= m_{(j-1)} + \frac{j-1}{j} (P_j-\alpha_{(j-1)})^2.
\label{eq:onlinevar}
\end{equation}

\end{claim}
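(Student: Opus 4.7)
The plan is to prove the two identities in turn, both via direct algebraic manipulation of the definitions of mean and sum-of-squared-errors, exploiting the fact that deviations from the empirical mean are centered.

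First I would handle the mean update \eqref{eq:onlinemean}. This is essentially textbook: starting from $\alpha_{(j)} = \tfrac{1}{j}\sum_{k=1}^{j} P_k$, I would split off the last term as $\tfrac{1}{j}\sum_{k=1}^{j-1} P_k + \tfrac{1}{j} P_j$ and recognize the first sum as $\tfrac{j-1}{j}\alpha_{(j-1)}$. No obstacle here; this is two lines.

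The interesting piece is \eqref{eq:onlinevar}. I would introduce the shorthand $\delta := P_j - \alpha_{(j-1)}$, so that \eqref{eq:onlinemean} rewrites as $\alpha_{(j)} = \alpha_{(j-1)} + \delta/j$. The goal is to express $m_{(j)} - m_{(j-1)}$ in terms of $\delta$. I would split $m_{(j)} = \sum_{k=1}^{j-1}(P_k - \alpha_{(j)})^2 + (P_j - \alpha_{(j)})^2$ and expand each summand of the first sum as $\bigl((P_k-\alpha_{(j-1)}) - \delta/j\bigr)^2$. Expanding the square produces three terms; the crucial simplification is that the cross term vanishes upon summation, because $\sum_{k=1}^{j-1}(P_k - \alpha_{(j-1)}) = 0$ by definition of $\alpha_{(j-1)}$. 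What remains from the first sum is $m_{(j-1)} + (j-1)\delta^2/j^2$. For the last isolated term, a direct substitution gives $P_j - \alpha_{(j)} = \delta(j-1)/j$, hence $(P_j - \alpha_{(j)})^2 = (j-1)^2 \delta^2/j^2$. Adding the two contributions and factoring yields $\delta^2 (j-1)/j$, which is exactly \eqref{eq:onlinevar}.

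There is no real obstacle in this proof; the only thing worth highlighting is the role of the centering identity $\sum_{k=1}^{j-1}(P_k - \alpha_{(j-1)}) = 0$, which is precisely what collapses the cross term and makes the update formula so clean. A remark could be added that \eqref{eq:onlinevar} is a special case of Welford's online variance update, which is well known in numerical statistics.
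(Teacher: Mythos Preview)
Your proof is correct. The paper does not supply its own argument for this claim but simply refers the reader to Knuth; your direct derivation via the centering identity $\sum_{k=1}^{j-1}(P_k-\alpha_{(j-1)})=0$ is the standard proof of Welford's online update, so there is nothing to compare.
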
 

The interested reader can find a proof of Claim~\ref{thrm:trivial} in \cite{knuth}. 
Equations~\ref{eq:onlinemean} and~\ref{eq:onlinevar} provide us a way to compute means 
and least squared errors online.
Algorithm~\ref{alg:vanillatrimmer} first computes matrices $A$ and $M$ using Equations~\ref{eq:onlinemean},~\ref{eq:onlinevar}
and then iterates (last {\it for} loop) 
to solve the recurrence by finding the optimal breakpoint for each index $i$. 
The total running time  is $O(n^2)$ (matrices $A$ and $M$ matrices have $O(n^2)$ 
entries and each requires $O(1)$ time to compute). 
Obviously, Algorithm~\ref{alg:vanillatrimmer} uses $O(n^2)$ units of space.

\section{Analysis of The Transition Function} 
\label{sec:Trimmertransitionfunction}

In the following, let $S_i = \sum_{j=1}^i P_j$. The transition function for the dynamic programming for $i>0$ can be rewritten as:

\begin{equation} 
OPT_i = \min_{j<i} OPT_j + \sum_{m=j+1}^i P_m^2 - \frac{(S_i- S_j)^2}{i-j} + C.
\label{eq:Trimmereqopt}
\end{equation}

The transition can be viewed as a weight function $w(j,i)$ that takes the two indices $j$ and $i$ as parameters such that:

\begin{equation}
w(j,i) =  \sum_{m=j+1}^i P_m^2 - \frac{(S_i- S_j)^2}{i-j} + C 
\end{equation}

Note that the weight function does not have the Monge property, as demonstrated by the vector
$P=(P_1,\ldots,P_{2k+1}) = (1, 2, 0, 2, 0, 2, 0, \dots 2, 0, 1)$.
When $C = 1$, the optimal choices of $j$ for $i  = 1, \dots, 2k$ are $j = i-1$, i.e., we fit one segment per point. 
However, once we add in $P_{2k+1} = 1$ the optimal solution changes to fitting all points on a single segment. 
Therefore, preferring a transition to $j_1$ over one to $j_2$ at some index $i$ does not allow us to discard $j_2$ from future considerations.
This is one of the main difficulties in applying techniques based on the increasing order of optimal choices of $j$,
such as the method of Eppstein, Galil and Giancarlo \cite{Eppstein88speedingup} or the method 
of Larmore and Schieber \cite{larmore}, to reduce the complexity of the $O(n^2)$ algorithm we described in Section~\ref{sec:Trimmervanilla}. 

We start by defining $DP_i$ for $i=0,1,..,n$, the solution to a simpler optimization problem. 

\begin{definition} 
Let $DP_i$, $i=0,1,..,n$, satisfy the following recurrence
\begin{equation}
   			DP_i = \left\{ 
							\begin{array}{l r}
  								\min_{j<i} DP_j - \frac{(S_i- S_j)^2}{i-j} + C & \text{if}~ i > 0  \\
  													0                                    & \text{if}~i=0     \\ 
		\end{array} \right. 
\label{eq:Trimmereqdp}
\end{equation}

\end{definition} 

\noindent The following observation stated as Lemma~\ref{lem:Trimmerlemma41} plays a key role in Section~\ref{sec:Trimmerhalfspace}.

\begin{lemma} 
For all $i$, $OPT_i$ can be written in terms of $DP_i$ as 
      $$OPT_i=DP_i+\sum_{m=1}^i P_m^2.$$
\label{lem:Trimmerlemma41}
\end{lemma}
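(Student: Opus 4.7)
The plan is to prove this by straightforward induction on $i$, exploiting the fact that the term $\sum_{m=j+1}^i P_m^2$ that distinguishes the $OPT$ recurrence from the $DP$ recurrence is telescoping in nature: when combined with the inductive hypothesis applied to $OPT_j$, the partial sum $\sum_{m=1}^j P_m^2$ coming from the hypothesis completes $\sum_{m=j+1}^i P_m^2$ into $\sum_{m=1}^i P_m^2$, which no longer depends on $j$ and can therefore be pulled outside the minimum.

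First I would handle the base case $i=0$, which is immediate since $OPT_0 = 0$, $DP_0 = 0$, and $\sum_{m=1}^0 P_m^2 = 0$ by the empty-sum convention. Then for the inductive step, assuming $OPT_j = DP_j + \sum_{m=1}^j P_m^2$ for every $j < i$, I would substitute this into the recurrence \eqref{eq:Trimmereqopt} to obtain
\begin{align*}
OPT_i &= \min_{j<i}\Bigl[ DP_j + \sum_{m=1}^j P_m^2 + \sum_{m=j+1}^i P_m^2 - \frac{(S_i-S_j)^2}{i-j}\Bigr] + C \\
      &= \min_{j<i}\Bigl[ DP_j + \sum_{m=1}^i P_m^2 - \frac{(S_i-S_j)^2}{i-j}\Bigr] + C.
\end{align*}
Since $\sum_{m=1}^i P_m^2$ is a constant with respect to the minimization variable $j$, it factors out of the minimum, yielding
\[
OPT_i = \sum_{m=1}^i P_m^2 + \min_{j<i}\Bigl[ DP_j - \frac{(S_i-S_j)^2}{i-j} + C\Bigr] = \sum_{m=1}^i P_m^2 + DP_i,
\]
by the defining recurrence \eqref{eq:Trimmereqdp} for $DP_i$. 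This completes the induction.

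There is no real obstacle here; the statement is essentially a bookkeeping observation that the quadratic-in-$P_m$ contribution to the objective is additive along the segmentation and does not interact with the choice of breakpoints. The only care needed is to ensure the sums are aligned correctly when splitting $\sum_{m=1}^i P_m^2$ into $\sum_{m=1}^j P_m^2 + \sum_{m=j+1}^i P_m^2$, and to invoke the inductive hypothesis for each candidate $j<i$ uniformly so that the minimum commutes with the added constant.
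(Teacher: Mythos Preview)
Your proof is correct and takes essentially the same approach as the paper: strong induction on $i$, using the telescoping of the $\sum P_m^2$ terms to convert between the two recurrences. The only cosmetic difference is that the paper starts from the $DP_i$ recurrence and substitutes $DP_j = OPT_j - \sum_{m=1}^j P_m^2$, whereas you start from the $OPT_i$ recurrence and substitute $OPT_j = DP_j + \sum_{m=1}^j P_m^2$; the algebra is otherwise identical.
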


\begin{proof} 
We use strong induction on $i$. For $i=0$ the result trivially holds. Let the result hold for all $j<i$. Then,

\begin{align*}
DP_i &= \min_{j<i} DP_j - \frac{(S_i- S_j)^2}{i-j} + C \\ 
     &= \min_{j<i} OPT_j  - \frac{(S_i- S_j)^2}{i-j} + \sum_{m=j+1}^i P_m^2 - \sum_{m=1}^i P_m^2 + C\\
     &= OPT_i - \sum_{m=1}^i P_m^2\\
\end{align*}

Hence, $OPT_i=DP_i+\sum_{m=1}^i P_m^2$ for all $i$.
\end{proof} 

Observe that the second order moments involved in the expression of $OPT_i$ are absent from $DP_i$. 
Let $\tilde{w}(j,i)$ be the shifted weight function, i.e., $\tilde{w}(j,i) = - \frac{(S_i- S_j)^2}{i-j} + C$.
Clearly, $w(j,i) = \tilde{w}(j,i) +  \sum_{m=i}^j P_m^2 $.

\section{Additive Approximation using Halfspace Queries}
\label{sec:Trimmerhalfspace}

In this Section, we present a novel algorithm which runs in $\tilde{O}(n^{\tfrac{4}{3}+\delta} \log{ (\frac{U}{ \epsilon }  ) } )$ 
time and approximates the optimal objective value within additive $\epsilon$ error.
We derive the algorithm gradually in the following and upon presenting the necessary theory we provide the pseudocode (see Algorithm 2)
at the end of this Section. 
Our proposed method uses the results of \cite{10.1109/SFCS.1992.267816} as stated in Corollary \ref{cor:matousek}
to obtain a fast algorithm for the additive approximation variant of the problem.
Specifically, the algorithm initializes a 4-dimensional halfspace query data structure. 
The algorithm then uses binary searches to compute an accurate estimate of the value $DP_i$ for $i=1,\ldots,n$.
As errors are introduced at each term, we use $\tilde{DP_i}$ to denote the approximate value of $DP_i$ calculated by the binary search,
and $\bar{DP_i}$ to be the optimum value of the transition function computed
by examining the approximate values $\tilde{DP_j}$ for all $j<i$. Formally,

$$\bar{DP}_i = \min_{j<i} \left[ \tilde{DP}_j - \underbrace{\frac{(S_i- S_j)^2}{i-j}}_{\tilde{w}(j,i)} \right] + C.$$

Since the binary search incurs a small additive error at each step, it remains to show that these errors accumulate in a controlled way.
Theorem~\ref{thrm:Trimmermainthrm} states that a small error at each step suffices to give an overall good approximation.
We show inductively that if $\tilde{DP}_i$ approximates $\bar{DP}_i$ within $\epsilon/n$, then $\tilde{DP}_i$ is within $i\epsilon /n$ additive error from the optimal value $DP_i$  for all $i$.

\begin{theorem}
\label{thrm:Trimmermainthrm}
Let $\tilde{DP}_i$ be the approximation of our algorithm to $DP_i$. Then, the following inequality holds: 
\begin{equation} 
|DP_i - \tilde{DP}_i | \leq \frac{\epsilon i}{n}
\end{equation} 
\end{theorem}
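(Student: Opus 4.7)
The plan is to prove the bound by strong induction on $i$, exploiting the triangle inequality together with a ``sandwich'' argument that uses the minimality in both the exact recurrence for $DP_i$ and the approximate recurrence for $\bar{DP}_i$. The base case $i=0$ is immediate, since $DP_0 = \tilde{DP}_0 = 0$ by definition.

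For the inductive step, assume $|DP_j - \tilde{DP}_j| \leq \epsilon j/n$ for every $j < i$. First, I would split the error via the triangle inequality,
\[
|DP_i - \tilde{DP}_i| \;\leq\; |DP_i - \bar{DP}_i| \;+\; |\bar{DP}_i - \tilde{DP}_i|,
\]
and bound the second term by $\epsilon/n$ using the fact that the binary search on the halfspace-query data structure approximates $\bar{DP}_i$ within an additive $\epsilon/n$ (this is the step where Corollary~\ref{cor:matousek} is invoked and where the $\log(U/\epsilon)$ factor in the runtime appears).

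For the first term, I would let $j^\star$ be an index achieving the minimum in the definition of $DP_i$, and $\bar{j}$ be an index achieving the minimum in the definition of $\bar{DP}_i$. Using $\bar{j}$ as a (suboptimal) witness for $DP_i$, and the inductive hypothesis applied at $\bar{j} < i$, one obtains
\[
DP_i \;\leq\; DP_{\bar{j}} + \tilde{w}(\bar{j},i) + C \;\leq\; \tilde{DP}_{\bar{j}} + \tfrac{\epsilon \bar{j}}{n} + \tilde{w}(\bar{j},i) + C \;=\; \bar{DP}_i + \tfrac{\epsilon \bar{j}}{n}.
\]
Symmetrically, using $j^\star$ as a witness for $\bar{DP}_i$,
\[
\bar{DP}_i \;\leq\; \tilde{DP}_{j^\star} + \tilde{w}(j^\star,i) + C \;\leq\; DP_{j^\star} + \tfrac{\epsilon j^\star}{n} + \tilde{w}(j^\star,i) + C \;=\; DP_i + \tfrac{\epsilon j^\star}{n}.
\]
Since both $j^\star, \bar{j} \leq i-1$, combining the two inequalities yields $|DP_i - \bar{DP}_i| \leq \epsilon(i-1)/n$, and plugging back into the triangle inequality gives $|DP_i - \tilde{DP}_i| \leq \epsilon(i-1)/n + \epsilon/n = \epsilon i/n$, which closes the induction.

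The proof itself is short; the real conceptual work lies elsewhere, namely in justifying that the per-step error $|\bar{DP}_i - \tilde{DP}_i| \leq \epsilon/n$ is in fact achievable by a binary search that translates each query $\min_{j<i}[\tilde{DP}_j - (S_i-S_j)^2/(i-j)]$ into an emptiness-style halfspace query in $\mathbb{R}^4$. The main obstacle in completing the full analysis is therefore not the inductive bookkeeping above but rather the geometric reduction: one must lift each point $j$ to a 4-dimensional vector whose coordinates encode $\tilde{DP}_j$, $S_j$, $S_j^2$, and $j$, so that the transition cost becomes a linear functional in those coordinates for each fixed $i$, enabling the use of Corollary~\ref{cor:matousek} with query time $O(n^{1/3}\log n)$ per binary-search step. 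Verifying that the additive error introduced by the binary search resolution can be forced to $\epsilon/n$ with only a $\log(U/\epsilon)$ overhead (where $U$ bounds the magnitude of the relevant quantities) is what links the inductive bound above to the advertised $\tilde O(n^{4/3+\delta}\log(U/\epsilon))$ runtime.
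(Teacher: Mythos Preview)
Your proposal is correct and follows essentially the same approach as the paper: strong induction on $i$, the same triangle-inequality split $|DP_i-\tilde{DP}_i|\le|DP_i-\bar{DP}_i|+|\bar{DP}_i-\tilde{DP}_i|$, and the same two-sided witness argument with $j^\star$ and $\bar{j}$ to bound $|DP_i-\bar{DP}_i|\le\epsilon(i-1)/n$. Your closing remarks about the geometric reduction to halfspace queries also match exactly how the paper continues after the proof.
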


\begin{figure} 
\centering
\includegraphics[width=.5\textwidth]{./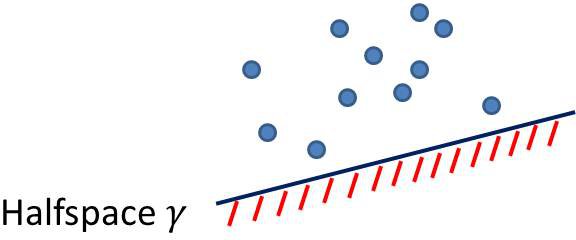}
\caption{Answering whether or not $\bar{DP}_i \leq x+C$ reduces to answering 
whether the point set $\{(j, \tilde{DP_j}, 2S_j, S_j^2+\tilde{DP_j}j) \in \field{R}^4, j<i\}$
has a non-empty intersection with the halfspace 
$\gamma=\{y \in \field{R}^4: a_i y \leq  b_i \}$ where $a_i$ and $b_i$ 
are a 4-dimensional constant vector and a constant which depend on $i$ respectively.
This type of queries can be solved efficiently, see \cite{10.1109/SFCS.1992.267816}. }
\label{fig:acghfig5}
\end{figure}

\begin{proof}

We use induction on the number of points. 
Using the same notation as above, let $\bar{DP}_i = \min_{j<i} \tilde{DP}_j - w(j,i) + C$. 
By construction the following inequality holds: 

\begin{equation}
|\bar{DP}_i - \tilde{DP}_i | \leq \frac{\epsilon}{n} ~\forall i=1,\dots,n
\label{eq:Trimmerbartilde}
\end{equation}

When $i=1$ it is clear that $|DP_1 - \tilde{DP}_1| \leq  \frac{\epsilon}{n}$.
Our inductive hypothesis is the following: 

\begin{equation}
|DP_j - \tilde{DP}_j | \leq \frac{j\epsilon}{n} ~\forall j<i
\label{eq:Trimmerinductivehypothesis}
\end{equation} 

\noindent  It suffices to show that the following inequality holds: 

\begin{equation} 
|DP_i - \bar{DP}_i | \leq \frac{(i-1)\epsilon}{n} 
\label{eq:Trimmersuffices}
\end{equation}
\noindent  since then by the triangular inequality we obtain: 
$$\frac{ i\epsilon }{n} \geq |DP_i - \bar{DP}_i |+ |\bar{DP}_i - \tilde{DP}_i | \geq |DP_i - \tilde{DP}_i |.$$ 

\noindent  Let $j^*,\bar{j}$ be the optimum breakpoints for $DP_i$ and $\bar{DP}_i$ respectively,  $j^*,\bar{j} \leq i-1$.

\begin{align*}
DP_i &= DP_{j^*}+ \tilde{w}(j^*,i)+C \\
& \leq DP_{\bar{j}}+ \tilde{w}(\bar{j},i)+C  \\ 
& \leq \tilde{DP}_{\bar{j}} + \tilde{w}(\bar{j},i)+C+ \frac{\bar{j}\epsilon}{n} \text{(by \ref{eq:Trimmerinductivehypothesis})} \\
& =\bar{DP}_i +  \frac{\bar{j}\epsilon}{n}  \\ 
& \leq \bar{DP}_i +  \frac{(i-1)\epsilon}{n}
\end{align*}

\noindent  Similarly we obtain:

\begin{align*}
\bar{DP}_i &= \tilde{DP}_{\bar{j}}+ \tilde{w}(\bar{j},i)+C \\
& \leq \tilde{DP}_{j^*}+ \tilde{w}(j^*,i)+C  \\ 
& \leq  DP_{j^*}+ \tilde{w}(j^*,i)+C+\frac{j^*\epsilon}{n} \text{(by \ref{eq:Trimmerinductivehypothesis})} \\
& = DP_i +  \frac{j^*\epsilon}{n}  \\ 
& \leq DP_i +  \frac{(i-1)\epsilon}{n}
\end{align*}

\noindent  Combining the above two inequalities, we obtain \ref{eq:Trimmersuffices}. 
\end{proof}

\noindent By substituting $i=n$ in Theorem~\ref{thrm:Trimmermainthrm}  we obtain the following corollary
which proves that $\tilde{DP}_n$ is within $\epsilon$ of $DP_n$.

\begin{corollary}
Let $\tilde{DP}_n$ be the approximation of our algorithm to $DP_n$. Then, 
\begin{equation} 
|DP_n - \tilde{DP}_n | \leq \epsilon. 
\end{equation} 
\end{corollary}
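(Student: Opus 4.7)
The plan is to obtain this corollary as a direct specialization of Theorem~\ref{thrm:Trimmermainthrm}, which has already been established inductively in the excerpt. That theorem provides the sharp per-index bound $|DP_i - \tilde{DP}_i| \leq \tfrac{\epsilon i}{n}$ for every $i \in \{1,\ldots,n\}$, so there is no additional work needed; the corollary is essentially a rewriting of the bound at the endpoint.

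More concretely, I would simply substitute $i = n$ into the inequality $|DP_i - \tilde{DP}_i| \leq \tfrac{\epsilon i}{n}$, which immediately yields $|DP_n - \tilde{DP}_n| \leq \tfrac{\epsilon n}{n} = \epsilon$. Since Theorem~\ref{thrm:Trimmermainthrm} applies uniformly over all indices and in particular to $i = n$, no further induction or case analysis is required.

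There is no substantive obstacle here: the hard work, namely controlling the accumulation of the per-step additive binary-search errors through the triangle inequality applied to both $DP_i \leq \bar{DP}_i + \tfrac{(i-1)\epsilon}{n}$ and its symmetric counterpart, was already discharged in the proof of Theorem~\ref{thrm:Trimmermainthrm}. The only conceptual point worth highlighting is that the linear-in-$i$ error growth is exactly tuned so that the final accumulated error at $i=n$ matches the user-specified tolerance $\epsilon$, which in turn justifies the choice of the per-step tolerance $\epsilon/n$ used when initializing the halfspace range-reporting data structure of Corollary~\ref{cor:matousek}. This linkage between the target global error $\epsilon$ and the binary-search precision $\epsilon/n$ is what ultimately drives the $\log(U/\epsilon)$ factor in the overall running time $\tilde{O}(n^{4/3+\delta}\log(U/\epsilon))$ claimed for the algorithm.
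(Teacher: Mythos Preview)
Your proposal is correct and matches the paper's approach exactly: the paper likewise obtains the corollary by substituting $i=n$ into Theorem~\ref{thrm:Trimmermainthrm}. Your additional remarks connecting the per-step tolerance $\epsilon/n$ to the global error and the runtime factor are accurate contextualization, though the paper does not include them in the corollary itself.
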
 

To use the analysis above in order to come up with an efficient algorithm we need to answer 
two questions: (a) How many binary search queries do we need in order to obtain the desired
approximation? (b) How can we answer each such query efficiently? 
The answer to (a) is simple: as it can easily be seen, the value of the objective function is upper bounded by $U^2n$, 
where $U = \max{ \{ \sqrt{C}, |P_1|, \dots, |P_n|\} }$. 
Therefore, $O(\log(\frac{U^2n}{\epsilon/n})) = \tilde{O}( \log{ (\frac{U}{ \epsilon } ) } )$
iterations of binary search at each index $i$ are sufficient to obtain the desired approximation. 
We reduce the answer to (b) to a well studied computational geometric problem.
Specifically, fix an index $i$, where $1 \leq i \leq n$, and consider the general form of the binary search query 
$\bar{DP}_i \leq x+C$, where $x+C$ is the value on which we query. Note that we use
the expression $x+C$ for convenience, i.e., so that the constant $C$ will be simplified from both sides
of the query. This query translates itself to the following decision problem, see also Figure~\ref{fig:acghfig5}. 
Does there exist an index $j$, such that $j<i$ and the following inequality holds:

\begin{align*}
x &\geq \tilde{DP_j} - \frac{(S_i- S_j)^2}{i-j} \Rightarrow xi + S_i^2 \geq (x, i, S_i, -1)(j, \tilde{DP_j}, 2S_j, S_j^2 + j\tilde{DP_j})^T?
\end{align*}

Hence, the binary search query has been reduced to answering a {\em halfspace query}. 
Specifically, the decision problem for any index $i$ becomes whether the intersection of the point set 
$\text{POINTS}_i = \{(j, \tilde{DP_j}, 2S_j, S_j^2+\tilde{DP_j}j) \in \field{R}^4, j<i\}$ with a hyperplane is empty.
By Corollary \ref{cor:matousek}  \cite{10.1109/SFCS.1992.267816}, for a point set of size $n$, this can be done in $\tilde{O}(n^{\tfrac{1}{3}+\delta})$ per query and $O(n^{\tfrac{1}{3}}\log{n})$ amortized time per insertion of a point.
Hence, the optimal value of $DP_i$ can be found within an additive
constant of $\epsilon/n$ using the binary search in $\tilde{O}(n^{\tfrac{1}{3}}\log{ (\frac{U}{ \epsilon } ) } )$ time.

Therefore, we can proceed from index 1 to $n$, find the approximately optimal value of $OPT_i$ and insert a point corresponding to it into the query structure. We obtain an algorithm which runs in $\tilde{O}(n^{\tfrac{4}{3} + \delta} \log{ (\frac{U}{ \epsilon } ) } )$ time, where
$\delta$ is an arbitrarily small positive constant.
The pseudocode is given in Algorithm~\ref{alg:halfspacealgo}.

\begin{algorithm}
\caption{\label{alg:halfspacealgo}Approximation within additive $\epsilon$ using 4D halfspace queries}

\begin{algorithmic}
\STATE Initialize 4D halfspace query structure $Q$ 
\FOR{$i=1$ to $n$}
\STATE $low \gets 0$ 
\STATE    $high \gets nU^2$ 
\WHILE{$high - low > \epsilon/n$} 
\STATE $m \gets (low+high)/2$  \\
\COMMENT{ *This halfspace emptiness query is efficiently supported by Q.* } \\ 
\STATE $flag \leftarrow (\exists j \text{~such that~} xi + S_i^2 \geq (x, i, S_i, -1)(j, \tilde{DP_j}, 2S_j, S_j^2 + j\tilde{DP_j})^T )$
\IF{$flag$} 
  \STATE $high \gets m$
\ELSE 
\STATE $low \gets m$
\ENDIF

\ENDWHILE
\STATE $\tilde{DP}_i \gets (low+high)/2$ \\ 
\COMMENT{*Point insertions are efficiently supported by Q.*}
\STATE Insert point  $(i, \tilde{DP}_i, 2S_i, S_i^2+\tilde{DP}_ii)$ \text{~in Q}
\ENDFOR
\end{algorithmic}
\end{algorithm}

\section{Multiscale Monge Decomposition}
\label{sec:Trimmermultiscale}

In this Section we present an algorithm which runs in $O(n \log{n} / \epsilon )$
time to approximate the optimal shifted objective value within a multiplicative
factor of $(1+\epsilon)$. Our algorithm is based on a new technique, which we consider
of indepenent interest. 
Let $w'(j,i) = \sum_{m = j+1}^i (i-j)P_m^2 - (S_i-S_j)^2$.
We can rewrite the weight function $w$ as a function of $w'$, namely as 
$w(j,i) = w'(j,i)/(i-j) + C$.
The interested reader can easily check that we can  express $w'(j,i)$ as a sum of non-negative terms, i.e.,

$$w'(j,i) = \sum_{j+1 \leq m_1 < m_2 \leq i} (P_{m_1} - P_{m_2})^2.$$

Recall from Section~\ref{sec:Trimmervanilla} that the weight function $w$ is not Monge. The next lemma shows that the weight function 
$w'$ is a Monge function.

\begin{lemma}
\label{lem:Trimmerlem3}
The weight function $w'(j,i)$ is Monge (concave), i.e.,
for any $i_1<i_2<i_3<i_4$, the following holds:

$$w'(i_1, i_4)+w'(i_2,i_3) \geq w'(i_1,i_3) + w'(i_2, i_4).$$
\end{lemma}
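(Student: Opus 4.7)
The plan is to exploit the symmetric pair-wise structure of $w'$: since $w'(j,i)$ is a sum of non-negative terms indexed by unordered pairs drawn from the index set $(j,i]$, we can check the inequality by comparing which pairs appear on each side, with which multiplicities.

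First, I would partition the range $(i_1, i_4]$ into three disjoint blocks
\[
X = \{i_1+1,\ldots,i_2\}, \quad Y = \{i_2+1,\ldots,i_3\}, \quad Z = \{i_3+1,\ldots,i_4\},
\]
so that $(i_1,i_4]= X \cup Y \cup Z$, $(i_2,i_3]= Y$, $(i_1,i_3]= X \cup Y$, and $(i_2,i_4]= Y \cup Z$. For a subset $U$ of indices write $\mathrm{P}(U) = \sum_{m_1<m_2,\; m_1,m_2 \in U}(P_{m_1}-P_{m_2})^2$ and, for disjoint $U,V$, write $\mathrm{P}(U,V) = \sum_{m_1\in U, m_2\in V}(P_{m_1}-P_{m_2})^2$. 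Then the four terms in the inequality expand as
\[
\begin{aligned}
w'(i_1,i_4) &= \mathrm{P}(X) + \mathrm{P}(Y) + \mathrm{P}(Z) + \mathrm{P}(X,Y) + \mathrm{P}(X,Z) + \mathrm{P}(Y,Z),\\
w'(i_2,i_3) &= \mathrm{P}(Y),\\
w'(i_1,i_3) &= \mathrm{P}(X) + \mathrm{P}(Y) + \mathrm{P}(X,Y),\\
w'(i_2,i_4) &= \mathrm{P}(Y) + \mathrm{P}(Z) + \mathrm{P}(Y,Z).
\end{aligned}
\]

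Next, I would subtract the right-hand side from the left-hand side term by term. All the intra-block terms $\mathrm{P}(X), \mathrm{P}(Y), \mathrm{P}(Z)$ and the cross terms $\mathrm{P}(X,Y)$ and $\mathrm{P}(Y,Z)$ cancel exactly, and what remains is
\[
w'(i_1,i_4) + w'(i_2,i_3) - w'(i_1,i_3) - w'(i_2,i_4) \;=\; \mathrm{P}(X,Z) \;=\; \sum_{m_1 \in X,\, m_2 \in Z}(P_{m_1}-P_{m_2})^2 \;\geq\; 0,
\]
which is precisely the desired inequality.

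The argument has no real obstacle: the only slightly delicate point is bookkeeping the multiplicities of each pair-type on the two sides, which is why I would introduce the blocks $X,Y,Z$ explicitly. Note also that $X$ and $Z$ are nonempty because $i_1<i_2$ and $i_3<i_4$, so the ``missed'' cross-pairs $(X,Z)$ are genuinely present; if either were empty the inequality would degenerate to equality. This decomposition proof has the conceptual advantage of making clear \emph{why} $w'$ is Monge (concave) while the original weight $w$ of Section~\ref{sec:Trimmervanilla} is not: division by the segment length $i-j$ in $w$ destroys the clean additive accounting of pair contributions that powers the above cancellation.
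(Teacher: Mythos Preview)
Your proof is correct and follows essentially the same approach as the paper: both partition the index range $(i_1,i_4]$ into three blocks (the paper calls them $S_1,S_2,S_3$, you call them $X,Y,Z$) and verify the inequality by bookkeeping which unordered pairs contribute to each side. Your version is marginally more explicit in that you identify the difference exactly as $\mathrm{P}(X,Z)$, whereas the paper argues only that each pair is counted at least as many times on the left as on the right; but the underlying idea is identical.
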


\begin{proof}
Since each term in the summation is non-negative, it suffices to show that
any pair of indices, $(m_1,m_2)$ is summed as many
times on the left hand side as on the right hand side.
If $i_2+1 \leq m_1 < m_2 \leq i_3$, each term is counted
twice on each side. Otherwise, each term is counted once on the left hand
side since $i_1+1 \leq m_1 < m_2 \leq i_4$
and at most once on the right hand side since
$[i_1+1, i_3] \cap [i_2+1, i_4] = [i_2+1, i_3]$.
\end{proof}

\begin{table}
\centering 
\caption{\label{tab:proofcounts} Summary of proof of Lemma~\ref{lem:Trimmerlem3}.}
\begin{tabular}{|c|c|c|} \hline
        ~                 &      $w'(i_1, i_4)+w'(i_2,i_3)$            & $w'(i_1,i_3) + w'(i_2, i_4)$       \\ \hline
$(S_1,S_1)$               &                 1                           &              1                    \\ \hline
$(S_2,S_2)$               &                 2                           &              2                    \\ \hline
$(S_3,S_3)$               &                 1                           &              1                    \\ \hline
$(S_1,S_2)$               &                 1                           &              1                    \\ \hline
$(S_1,S_3)$               &                 1                           &              0                    \\ \hline
$(S_2,S_3)$               &                 1                           &              1                    \\ \hline
\end{tabular}
\end{table}

The proof of Lemma~\ref{lem:Trimmerlem3} is summarized in Table~\ref{tab:proofcounts}. 
Specifically, let $S_j$ be the set of indices $\{i_j+1,\ldots,i_{j+1}\}$, $j=1,2,3$. 
Also, let $(S_j,S_k)$ denote the set of indices $(m_1,m_2)$ which appear in the summation
such that $m_1 \in S_j, m_2 \in S_k$. The two last columns of Table~\ref{tab:proofcounts}
correspond to the left- and right-hand side of the Monge inequality (as in Lemma~\ref{lem:Trimmerlem3})
and contain the counts of appearances of each term.

Our approach is based on the following observations:
\begin{enumerate} 
 \item Consider the weighted directed acyclic graph (DAG) on the vertex set $V=\{0,\ldots,n\}$
 with edge set $E=\{ (j,i): j<i \}$ and weight function $w: E \rightarrow \field{R}$, i.e., 
 edge $(j,i)$ has weight $w(j,i)$. 
 Solving the aCGH denoising problem reduces to finding a shortest path from vertex $0$ to vertex $n$. 
 If we perturb the edge weights within a factor of $(1+\epsilon)$, as long as the weight of each edge is positive,
 then the optimal shortest path distance  is also perturbed within a factor of at most $(1+\epsilon)$.
 \item By Lemma~\ref{lem:Trimmerlem3} we obtain that  the weight function is not Monge essentially because
 of the $i-j$ term in the denominator.
 \item Our goal is to approximate $w$ by a Monge function $w'$ such that $c_1 w' \leq w \leq c_2 w'$
 where $c_1,c_2$ should be known constants. 
\end{enumerate}

In the following we elaborate on the latter goal. 
Fix an index $i$ and note that the optimal breakpoint for that index is some index
$j \in \{ 1, \ldots, i-1\}$. We will ``bucketize'' the range of index $j$ into 
$m=O(\log_{1+\epsilon}{(i)})=O(\log{n} /\epsilon)$ buckets such that the $k$-th 
bucket, $k=1,\ldots,m$, is defined by the set of indices $j$ which satisfy 

$$ l_k=i-(1+\epsilon)^{k} \leq j \leq i-(1+\epsilon)^{k-1}=r_k.$$ 

This choice of bucketization is based on the first two observations 
which guide our approach. Specifically, it is easy to check that 
$(1+\epsilon)^{k-1} \leq i-j \leq (1+\epsilon)^{k}$.
This results, for any given $i$, to approximating $i-j$ by a constant for each possible bucket,
leading to $O(\log{n} /\epsilon)$ different Monge functions (one per bucket)
while incurring a multiplicative error of at most $(1+\epsilon)$.
However, there exists a subtle point, as also  Figure~\ref{fig:acghfig6} indicates.
We need to make sure that each of the Monge functions
is appropriatelly defined so that when we consider the $k$-th Monge subproblem, 
the optimal breakpoint $j_k$ should satisfy $j_k \in [l_k,r_k]$. Having achieved that (see Lemma 6.2)
we can solve efficiently the recurrence. Specifically, $OPT_i$ is computed as follows: 

\begin{align*}
OPT_i
&= \min_{j < i}  \left [OPT_j + \frac{w'(i,j)}{i-j} \right] + C\\
&= \min_k  \left [\min_{j \in [l_k, r_k]}
OPT_j + \frac{w'(i,j)}{i-j} \right] + C\\
&\approx \min_k  \left [\min_{j \in [l_k, r_k]}
OPT_j + \frac{w'(i,j)}{(1+\epsilon)^{k-1}} \right] + C\\
&= \min_k  \left [\min_{j \in [l_k, r_k]}
OPT_j + \frac{w'(i,j)}{c_k} \right] + C.\\
\end{align*}

The following is one of the possible ways to define the $m$ Monge weight functions. In what follows, $\lambda$ is a sufficiently large positive constant.

\begin{equation}
w_k(j,i) = \left\{ 
\begin{array}{l r}
  2^{n-i+j}\lambda & i - j < c_k = (1+\epsilon)^{k-1} \\
  2^{i-j}\lambda & i - j > (1+\epsilon)c_k = (1+\epsilon)^{k} \\
  w'(j,i)/c_k           & \text{otherwise}     \\ 
\end{array} \right. 
\label{eq:Trimmereqwk}
\end{equation}

\begin{lemma}
Given any vector $P$, it is possible to pick $\lambda$ such that $w_k$ is Monge for all $k \geq 1$.
That is, for any 4-tuple $i_1 < i_2 < i_3 < i_4$, $w_k(i_1, i_4)+w_k(i_2,i_3) \geq w_k(i_1,i_3) + w_k(i_2, i_4)$.
\end{lemma}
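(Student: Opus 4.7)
The plan is to proceed by case analysis on which of the three regions (small: $i-j < c_k$; medium: $c_k \leq i-j \leq (1+\epsilon)c_k$; large: $i-j > (1+\epsilon)c_k$) each of the four pairs $(i_1,i_4), (i_1,i_3), (i_2,i_4), (i_2,i_3)$ falls into. Throughout, write $d_{ab} = i_b - i_a$ for the four gaps. The structural facts I will rely on are the ordering $d_{23} \leq \min(d_{13},d_{24}) \leq \max(d_{13},d_{24}) \leq d_{14}$, the identity $d_{13} + d_{24} = d_{14} + d_{23}$, and the integrality bounds $d_{14} - d_{13} = d_{24} - d_{23} \geq 1$ and $d_{14} - d_{24} = d_{13} - d_{23} \geq 1$, which come from the indices being distinct integers.

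I would first dispose of the three ``pure'' cases in which all four pairs lie in a single region; the monotonicity of the gaps forces this whenever the extreme gaps $d_{14}$ and $d_{23}$ are in the same region. If all four pairs are small, $w_k$ depends only on the gap through $f(x) = \lambda 2^{n-x}$, which is convex. Convexity of $f$ combined with $d_{14}+d_{23} = d_{13}+d_{24}$ and $d_{23} \leq d_{13}, d_{24} \leq d_{14}$ yields the concave Monge inequality $f(d_{14})+f(d_{23}) \geq f(d_{13})+f(d_{24})$ by a standard majorization argument. The ``all large'' case is identical with $f(x) = \lambda 2^x$, and the ``all medium'' case is immediate from Lemma 6.1 after dividing by $c_k > 0$.

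The mixed cases occur exactly when either $d_{14}$ lies in the large region or $d_{23}$ lies in the small region, and it is here that the freedom to choose $\lambda$ becomes essential. In every such case the left-hand side $w_k(i_1,i_4) + w_k(i_2,i_3)$ carries at least one exponential term --- either $\lambda 2^{d_{14}}$ or $\lambda 2^{n - d_{23}}$. The strategy is to match each exponential term appearing on the right-hand side against a strictly larger exponential of the same form on the left. Concretely, whenever an RHS term of type $\lambda 2^{d_{13}}$ or $\lambda 2^{d_{24}}$ is present, the monotonicity of the gaps forces $(i_1,i_4)$ into the large region as well, and the integrality bound $d_{14} \geq \max(d_{13},d_{24})+1$ gives $2^{d_{14}} \geq 2^{d_{13}} + 2^{d_{24}}$; the symmetric statement holds for $2^{n-\cdot}$ terms, using $d_{23}+1 \leq \min(d_{13},d_{24})$. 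After subtracting these matched exponentials, what remains on the RHS is at most $2W$, where $W := \max_{0 \leq j < i \leq n} w'(j,i)/c_k$ is bounded by $n^2 (\max_m |P_m|)^2$ independently of $\lambda$ and uniformly in $k$ (since $c_k \geq 1$), while the LHS retains an excess of at least $\lambda$ from an unmatched extreme exponential term. Choosing $\lambda \geq 2W$ at the outset therefore suffices for every mixed subcase.

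The main obstacle is simply the bookkeeping: several mixed subcases must be enumerated (e.g., $d_{14}$ large with $d_{13}, d_{24}$ in medium; $d_{14}$ large, $d_{23}$ small, $d_{13}$ large, $d_{24}$ medium; etc.), but each follows the same template of pairing RHS exponentials with their strictly larger LHS counterparts and absorbing any residual medium contributions into the excess $\lambda$. No conceptual ingredient beyond convexity of the exponential, integrality of the gaps, and the magnitude of $\lambda$ is needed.
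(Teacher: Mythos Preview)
Your approach is correct and essentially matches the paper's: both case on the three regions, invoke Lemma~\ref{lem:Trimmerlem3} (the Monge property of $w'$) for the all-medium case, and use the exponential terms with $\lambda$ chosen large enough to dominate in the mixed cases. The paper's casework is slightly more economical, splitting only on the two right-hand gaps $i_3-i_1$ and $i_4-i_2$ (six cases after assuming $i_3-i_1 \le i_4-i_2$ without loss of generality) rather than on all four pairs, and it absorbs the all-small and all-large cases into the same doubling argument rather than treating them via a separate convexity step.
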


\begin{proof}
Since $w'(j,i) = \sum_{j+1 \leq m_1 < m_2 \leq i} (P_{m_1} - P_{m_2})^2
\leq (2K)^2 n^2$ where $K = \max_{1 \leq i \leq n} |P_i|$,
we can pick $\lambda$ such that $w_k(j,i) \geq w'(j,i)$.
The rest of the proof is casework based on the lengths of the intervals
$i_3-i_1$, $i_4-i_2$, and how they compare with $c_k$ and $(1+\epsilon) c_k$.
There are 12 such cases in total. We may assume $i_3-i_1 \leq i_4-i_2$ without
loss of generality, leaving thus 6 cases to be considered.

\noindent If $c_k \leq i_3-i_1, i_4-i_2 \leq (1+\epsilon)c_k$, then:
\begin{align*}
w_k(i_1,i_3) + w_k(i_2, i_4) &= (1/c_k)(w'(i_1,i_3) + w'(i_2, i_4))\\
&\leq (1/c_k)(w'(i_1,i_4) + w'(i_2, i_3)) \text{ (by Lemma \ref{lem:Trimmerlem3})}\\
&\leq w_k(i_1,i_4) + w_k(i_2, i_3).
\end{align*}

\noindent  Consider the case $i_3-i_1 < c_k$ and $i_4-i_2 \leq (1+\epsilon) c_k$.
Then as $i_2 > i_1$, $i_3-i_2 \leq i_3-i_1 - 1$ and we have:
\begin{align*}
w_k(i_2, i_3)&= 2^{n-i_3+i_2}\lambda\\
             & \geq 2 \cdot 2^{n-i_3-i_1}\lambda\\   
             & \geq w_k(i_1, i_3) + w_k(i_2, i_4)
\end{align*}

The cases of $c_k \leq i_3-i_1$ and $c_k(1+\epsilon) < i_4-i_2$ can be done similarly.
Note that the cases of $i_3-i_1, i_4 - i_2 < c_k$ and $(1+\epsilon) c_k < i_3-i_1, i_4 - i_2$
are also covered by these.

The only case that remain is $i_3-i_1 < c_k$ and $(1+\epsilon)c_k < i_4 - i_2$.
Since $i_3 - i_2 < i_3 - i_1 < c_k$, we have:
\begin{align*}
w_k(i_2, i_3) &= 2^{n-i_3+i_2}\lambda\\
              &> 2^{n-i_4+i_2}\lambda\\
              &= w_k(i_2, i_4)
\end{align*}

Similarly $w_k(i_1, i_4) = 2^{i_4-i_1}\lambda > 2^{i_3 - i_1}\lambda = w_k(i_1, i_3)$.
Adding them gives the desired result.
\end{proof}

\begin{figure} 
\centering
\includegraphics[width=.75\textwidth]{./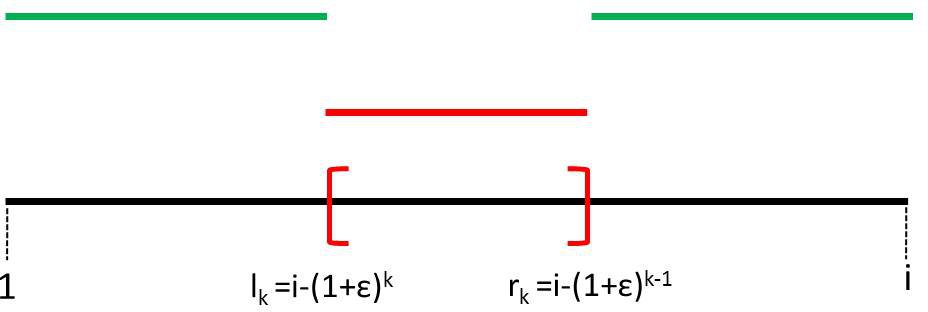}
\caption{Solving the $k$-th Monge problem, $k=1,\ldots,m=O(\log_{1+\epsilon}{(i)})$, for a fixed index $i$. The $k$-th interval 
is the set of indices $\{j: l_k \leq j \leq r_k, l_k=i-(1+\epsilon)^{k}, r_k=i-(1+\epsilon)^{k-1} \}$. Ideally, we wish to define a Monge 
function $w_k'$ whose maximum inside the $k$-th interval (red color) is smaller than the minimum outside that interval (green color).
This ensures that the optimal breakpoint for the $k$-th Monge subproblem lies inside the red interval. }
\label{fig:acghfig6}
\end{figure}

The pseudocode is shown in Algorithm~\ref{alg:mongealgo}. The algorithm computes the 
$OPT$ values online based on the above analysis. 
In order to solve each Monge sub-problem our method calls the routine of Theorem~\ref{thm:larmore}. 
For each index $i$, we compute $OPT_i$ by taking the best value over
queries to all $k$ of the Monge query structures, then we update all the
structures with this value.
Note that storing values of the form $2^k \lambda$ using only their exponent $k$ suffices for comparison,
so introducing $w_k(j,i)$ doesn't result in any change in runtime.
By Theorem~\ref{thm:larmore}, for each $Q_k$, finding $\min_{j<i} Q_k.a_j + w_k(j,i)$
over all $i$ takes $O(n)$ time. Hence, the total runtime is $O(n\log{n}/\epsilon)$.

\begin{algorithm}
\caption{\label{alg:mongealgo}Approximation within a factor of $\epsilon$ using Monge function search}
\begin{algorithmic}
\STATE Maintain $m = \log{n}/\log{(1+\epsilon)}$ Monge function search data structures $Q_1,\dots,Q_m$ where $Q_k$ corresponds to the Monge function $w_k(j,i)$. \\
\COMMENT{*The weight function $w_k(j,i)$ is Monge. Specifically,
$w_k(j,i) = C + \Big( \sum_{m=j+1}^i (i-j)P_m^2 - \frac{(S_i-S_j)^2}{(1+\epsilon)^k} \Big)$ for all $j$ 
which satisfy $(1+\epsilon)^{k-1} \leq i-j \leq (1+\epsilon)^k$. *}
\STATE $OPT_0 \leftarrow 0$ 
\COMMENT{* Recursion basis. *}
\FOR{ $k=1$ to $m$ }
\STATE $Q_k.a_0 \gets 0$ \\
\ENDFOR  \\ 
\COMMENT{*Let $a_j^{(k)}$ denote $Q_k.a_j$, i.e.,  the value $a_j$ of the $k$-th data structure  $Q_k$.*} 
\FOR{ $i=1$ to $n$ }
    \STATE $OPT_i \leftarrow \infty$

    \FOR{ $k=1$ to $m$ } 
      \STATE   $\text{localmin}_k  \leftarrow  \min_{j<i} a_j^{(k)} + w_k(j,i)$
      \STATE   $OPT_i  \leftarrow  \min\{OPT_i, \text{localmin}_k + C \}$
    \ENDFOR 
\ENDFOR 
\FOR{ $k=1$ to $m$ }
\STATE $Q_k.a_i  \leftarrow  OPT_i$
\ENDFOR 
\end{algorithmic}
\end{algorithm}

\section{Validation of Our Model} 
\label{sec:Trimmervalidation}

In this Section we validate our model using the exact algorithm, see Section~\ref{sec:Trimmervanilla}.
In Section~\ref{subsec:Trimmersetup} we describe the datasets and the experimental setup.
In Section~\ref{subsec:Trimmerresults} we show the findings of our method together
with a detailed biological analysis.

%% datasets, picking C and setup 
\subsection{Experimental Setup and Datasets}
\label{subsec:Trimmersetup} 

\begin{table}
\caption{\label{tab:descriptiondataset}Datasets, papers and the URLs where the datasets can be downloaded. 
\textcolor{red}{$\odot$} and \textcolor{cyan}{$\blacksquare$}   denote
which datasets are synthetic and real respectively. }
\begin{tabular}{|c|c|c|} \hline
        ~                 &      Dataset           &  Availability             \\\hline
\textcolor{red}{$\odot$} &  Lai et al.        & \cite{1181383}                                                           \\ 
      ~                   &                    & \url{http://compbio.med.harvard.edu/}   \\ \hline
\textcolor{red}{$\odot$} & Willenbrock et al. & \cite{citeulike:387317}                                                    \\
    ~                   &                    &  \url{http://www.cbs.dtu.dk/~hanni/aCGH/}                               \\ \hline
\textcolor{cyan}{$\blacksquare$}&  Coriell Cell lines     & \cite{coriell} \\
    ~       &           &  \url{http://www.nature.com/ng/journal/v29/n3/} \\ \hline
\textcolor{cyan}{$\blacksquare$} & Berkeley Breast Cancer & \cite{berkeley} \\             
                                 &           & \url{http://icbp.lbl.gov/breastcancer/}  \\ \hline
\end{tabular}
\end{table}

Our code is implemented in MATLAB\footnote{Code available at URL ~\url{http://www.math.cmu.edu/~ctsourak/CGHTRIMMER.zip}. Faster C code is also available, but since
the competitors were implemented in MATLAB, all the results in this Section refer to our MATLAB implementation.}.  The experiments run in a 4GB RAM,
2.4GHz Intel(R) Core(TM)2 Duo CPU, Windows Vista machine.  Our methods
were compared to existing MATLAB implementations of the CBS algorithm,
available via the Bioinformatics toolbox, and the \cghseg\ algorithm \cite{picard},
courteously provided to us by Franc Picard.  \cghseg\ was run
using heteroscedastic model under the Lavielle criterion
\cite{DBLP:journals/sigpro/Lavielle05}.  Additional tests using the
homoscedastic model showed substantially worse performance and are
omitted here.  All methods were compared using previously developed
benchmark datasets, shown in Table~\ref{tab:descriptiondataset}.
Follow-up analysis of detected regions was conducted by manually
searching for significant genes in the Genes-to-Systems Breast Cancer
Database \url{http://www.itb.cnr.it/breastcancer}~\cite{G2SBCD} and
validating their positions with the UCSC Genome Browser
\url{http://genome.ucsc.edu/}. The Atlas of Genetics and
Cytogenetics in Oncology and Haematology
\url{http://atlasgeneticsoncology.org/} was also used to validate the
significance of reported cancer-associated genes.
It is worth pointing out that since aCGH data are typically given in the log scale, we first exponentiate the points, 
then fit the constant segment by taking the average of the exponentiated values from the hypothesized segment,
and then return to the log domain by taking the logarithm of that constant value. 
Observe that one can fit a constant segment by averaging the log values using Jensen's inequality, but we favor an approach 
more consistent with the prior work, which typically models the data assuming i.i.d.~Gaussian noise
in the linear domain. 

\paragraph{How to pick $C$?} 
The performance of our algorithm depends on the value of the parameter $C$, which determines how much each segment ``costs.'' Clearly, there is a tradeoff between
larger and smaller values: excessively large $C$ will lead the algorithm to output a single segment while excessively small $C$ will result in each point being fit as its own segment.  
We pick our parameter $C$ using data published in \cite{citeulike:387317}. The data was generated by modeling real aCGH data, thus capturing their 
nature better than other simplified synthetic data and also making them a good training dataset for our model.  We used this dataset to generate a Receiver Operating Characteristic (ROC) curve 
using values for $C$ ranging from 0 to 4 with increment 0.01 using one of the four datasets in \cite{citeulike:387317} (``above 20''). 
The resulting curve is shown in Figure~\ref{fig:acghfig2}. Then, we selected $C=0.2$, which achieves high precision/specificity (0.98) and high recall/sensitivity (0.91).
All subsequent results reported were obtained by setting $C$ equal to 0.2.

\begin{figure}
\centering
\includegraphics[width=0.7\textwidth]{./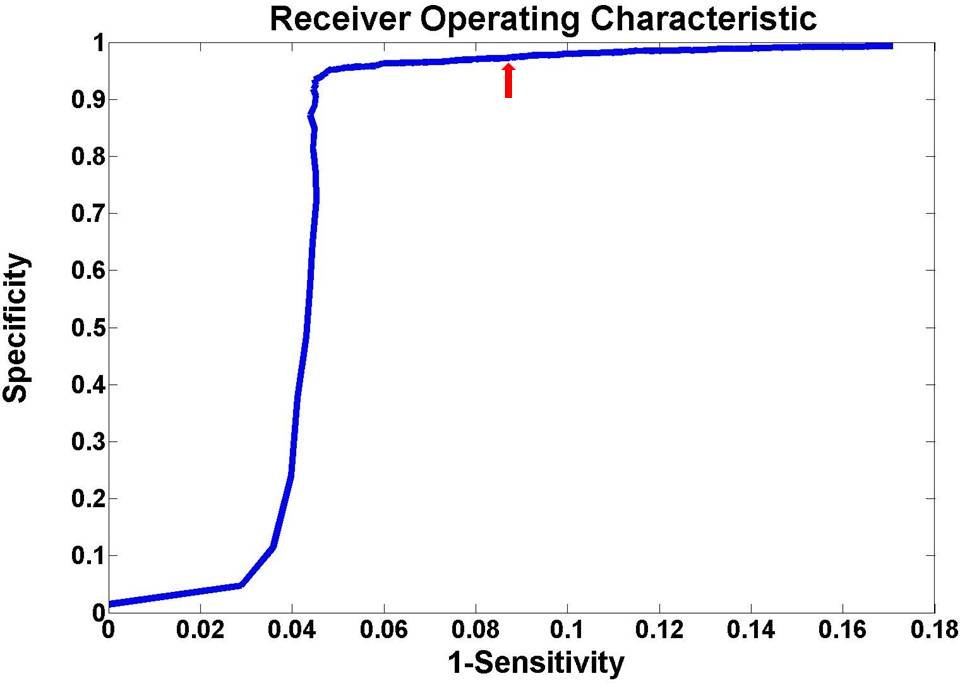}
\caption{ROC curve of \trimmer\ as a function of $C$ on data from \cite{citeulike:387317}. The red arrow indicates
the point (0.91 and 0.98 recall and precision respectively) 
corresponding to $C$=0.2, the value used in all subsequent results.}
\label{fig:acghfig2}
\end{figure}

%% all the results
\subsection{Experimental Results and Biological Analysis}
\label{subsec:Trimmerresults}

We show the results on synthetic data in Section~\ref{subsec:Trimmersynthetic},
on real data where the ground truth is available to us in Section~\ref{subsec:Trimmercoriel}
and on breast cancer cell lines with no ground truth in Section~\ref{subsec:Trimmerbreast}.

\subsubsection{Synthetic Data} 
\label{subsec:Trimmersynthetic}

\begin{figure}
		\begin{tabular}{ccc} 
\includegraphics[width=0.31\textwidth]{./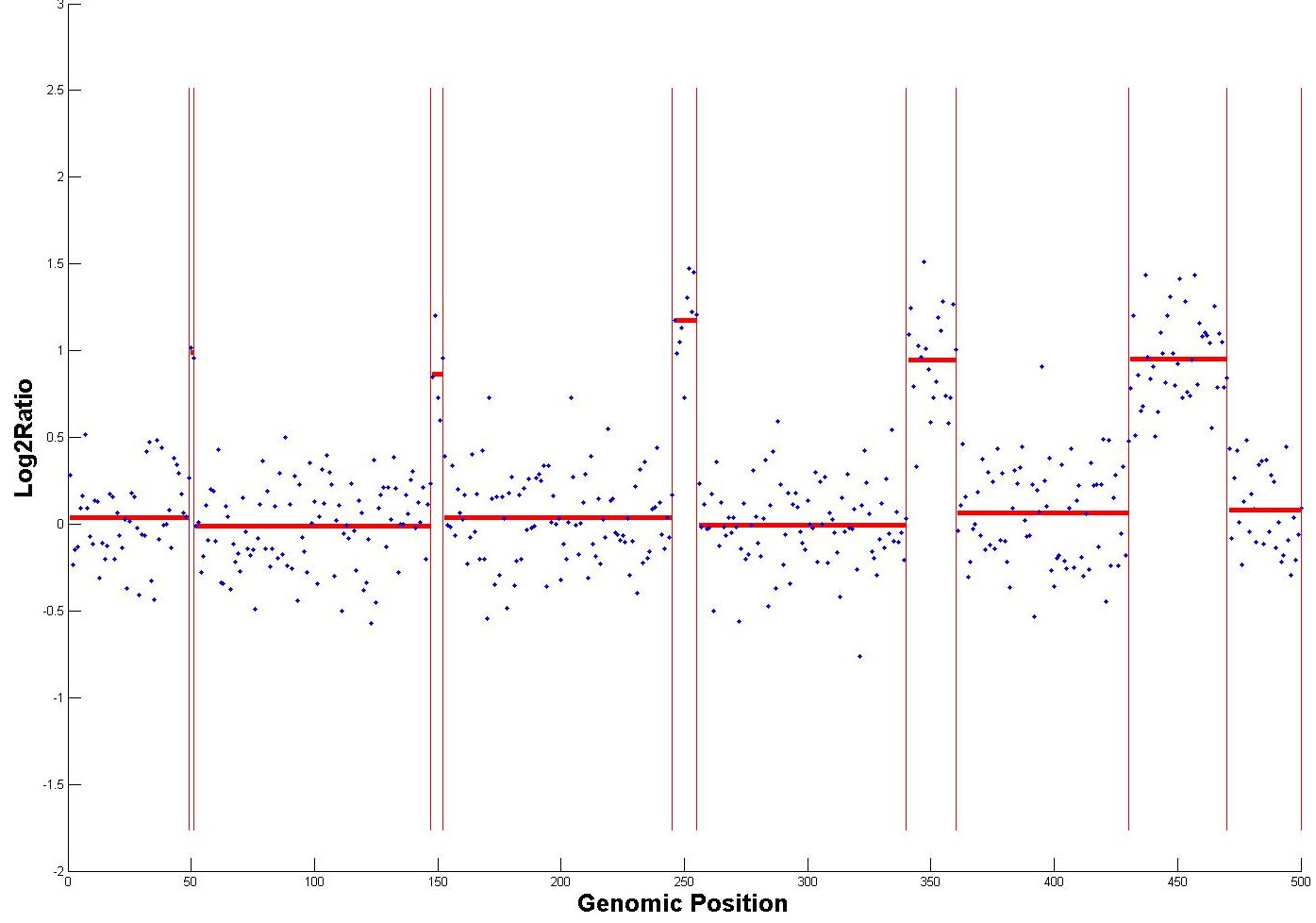} & \includegraphics[width=0.3\textwidth]{./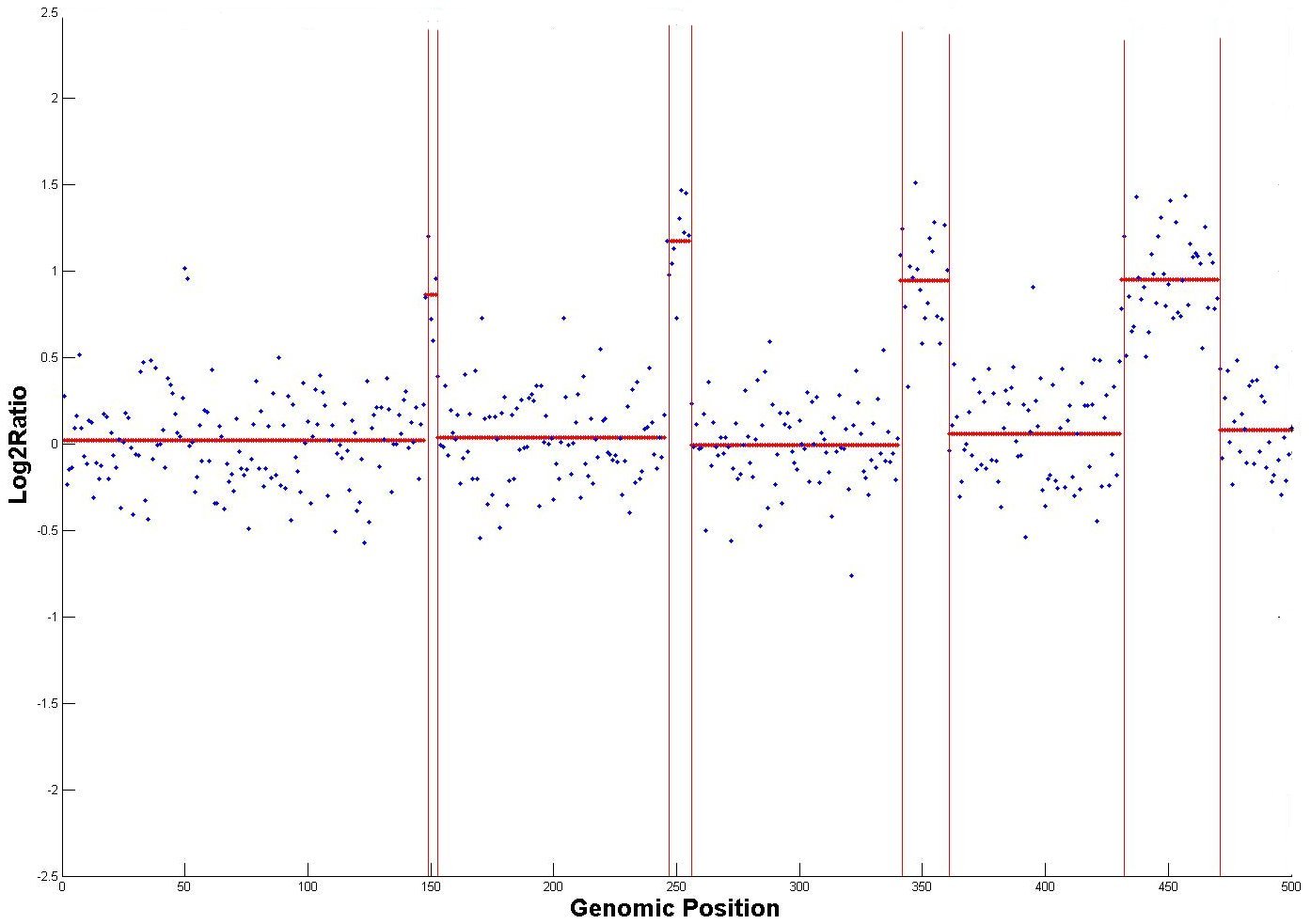} & 		\includegraphics[width=0.3\textwidth]{./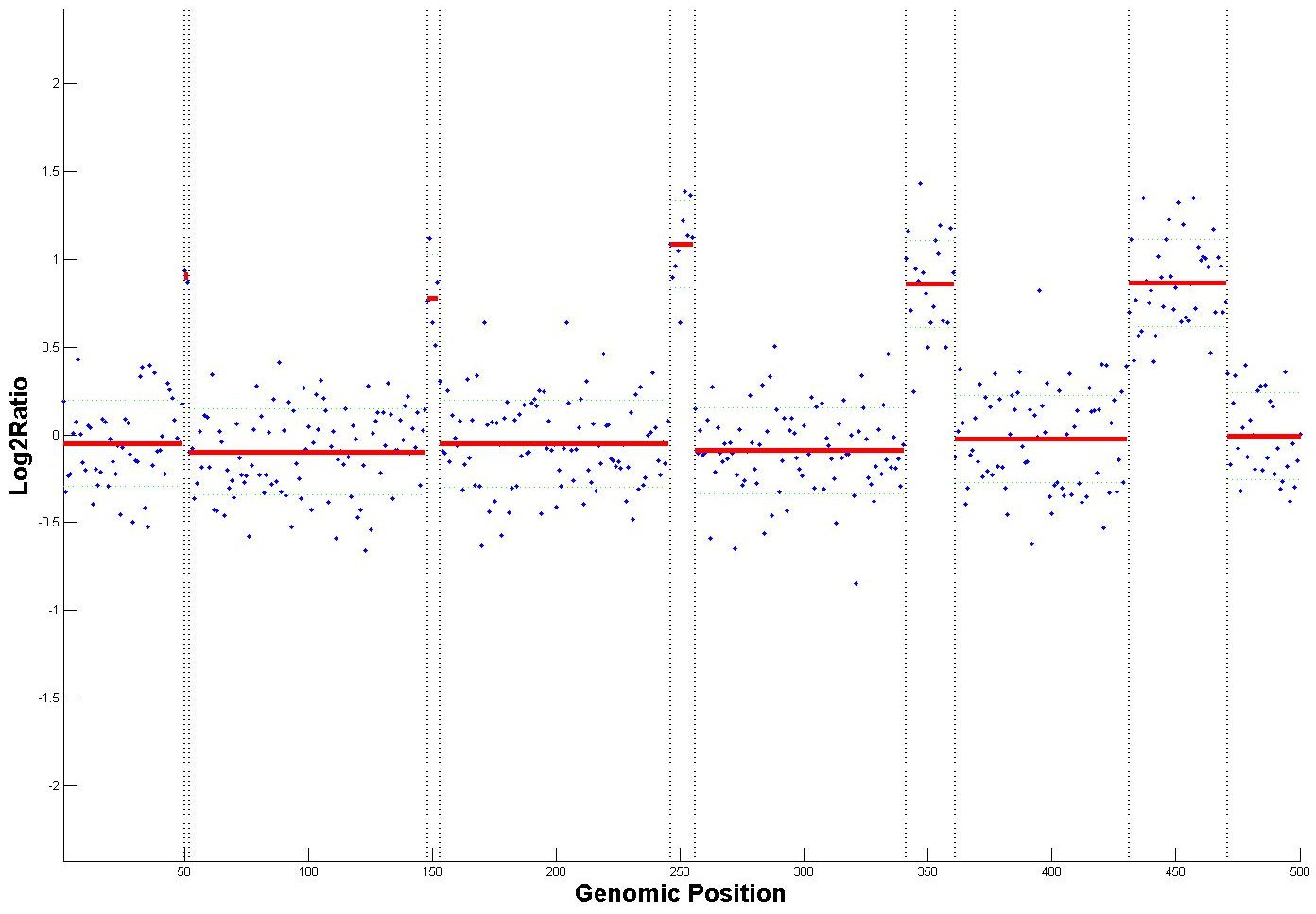}\\
		(a) &(b) &(c) \\
		\trimmer\ &  \dnacopy\  & \cghseg\  \\  
		\end{tabular}
  \caption{Performance of \trimmer, \dnacopy, and \cghseg\ on  denoising synthetic aCGH data from \cite{1181383}. \trimmer\ and \cghseg\ 
exhibit excellent precision and recall whereas \dnacopy\ misses two consecutive genomic positions with DNA copy number equal to 3.}
\label{fig:acghfigsyntheticcorriel}
\end{figure}

We use the synthetic data published in \cite{1181383}. 
The data consist of five aberrations of increasing widths
of 2, 5, 10, 20 and 40 probes, respectively, with Gaussian noise N(0,0.25$^2$).  
Figure~\ref{fig:acghfigsyntheticcorriel} shows the performance of \trimmer, \dnacopy, and \cghseg. 
Both \trimmer\ and \cghseg\ correctly detect all aberrations, while \dnacopy\ misses the first, smallest region.
The running time for \trimmer\ is 0.007 sec, compared to 1.23 sec for \cghseg\ and 60 sec for \dnacopy.

\subsubsection{Coriell Cell Lines}
\label{subsec:Trimmercoriel}

\begin{table}%
\caption{\label{tab:coriellevaluation} Results from applying \trimmer, \dnacopy, and \cghseg\ to 15 cell lines. Rows with listed chromosome numbers (e.g., GM03563/3) corresponded to known gains or losses and are annotated with a check mark if the expected gain or loss was detected or a ``No'' if it was not.  Additional rows list chromosomes on which segments not annotated in the benchmark were detected; we presume these to be false positives. }
\centering
\begin{tabular} {|c|c|c|c|} \hline

Cell Line/Chromosome & \trimmer\ & \dnacopy\ & \cghseg\ \\ \hline
          
GM03563/3     &     \checkmark         & \checkmark    &  \checkmark \\
GM03563/9     &     \checkmark         &    No         &  \checkmark  \\
GM03563/False &        -               &    -          &  - \\ \hline

GM00143/18     &     \checkmark        &   \checkmark   &\checkmark \\
GM00143/False  &      -                &       -      & - \\ \hline
          
GM05296/10 &    \checkmark  & \checkmark  &  \checkmark \\
GM05296/11 &     \checkmark    &  \checkmark  &  \checkmark  \\
GM05296/False &  -   & - & 4,8 \\ \hline
          
GM07408/20 &   \checkmark   &  \checkmark  &  \checkmark \\
GM07408/False &    -   &  - & - \\ \hline 

GM01750/9 &   \checkmark    & \checkmark  &  \checkmark \\
GM01750/14 &   \checkmark    & \checkmark  &  \checkmark  \\
GM01750/False &   -   &  - & - \\ \hline 

GM03134/8 &    \checkmark    & \checkmark  &  \checkmark  \\
GM03134/False &  -  &  - & 1 \\ \hline 
  
GM13330/1 &   \checkmark    &  \checkmark  &   \checkmark  \\
GM13330/4 &   \checkmark    &  \checkmark  &   \checkmark   \\
GM13330/False &   -         &     -        &  -    \\\hline
          
GM03576/2 &   \checkmark    & \checkmark   &  \checkmark \\
GM03576/21 &   \checkmark   & \checkmark   &  \checkmark \\
GM03576/False &   -         &      -       &  - \\\hline
          
GM01535/5 &    \checkmark   &   \checkmark  & \checkmark \\
GM01535/12 &    \checkmark  &    No         & \checkmark \\
GM01535/False & -           &   -           & 8 \\\hline
          
GM07081/7 &    \checkmark   &   \checkmark & \checkmark \\
GM07081/15 &    No   &   No & No \\
GM07081/False &  -   & - & 11 \\\hline
          
GM02948/13 &      \checkmark   &  \checkmark  &  \checkmark  \\
GM02948/False &  7    & 1 &  2 \\ \hline

GM04435/16 &    \checkmark   &  \checkmark  &  \checkmark  \\
GM04435/21 &     \checkmark  &  \checkmark  & \checkmark \\
GM04435/False &  -  & - & 8,17 \\\hline
          
GM10315/22 &     \checkmark   &  \checkmark  & \checkmark\\
GM10315/False &  -    & - & - \\ \hline 

GM13031/17 &       \checkmark  &  \checkmark  & \checkmark  \\
GM13031/False &    -   &  -  &   -  \\ \hline
         
GM01524/6 &        \checkmark   &  \checkmark  & \checkmark  \\
GM01524/False &     -  &  -  &- \\  \hline
\end{tabular}
\end{table}

The first real dataset we use to evaluate our method is the Coriell cell line BAC
array CGH data \cite{coriell}, which is widely considered a ``gold standard'' 
dataset. The dataset is derived from 15 fibroblast cell lines 
using the normalized average of 
$\log_2$ fluorescence relative to a diploid reference.
To call gains or losses of inferred segments, we assign to each segment the mean intensity of its probes and then apply a simple threshold test to determine if the mean is abnormal.  We follow \cite{thresholds} in favoring $\pm$0.3 out of the wide variety of thresholds that have been used \cite{thresholds2}.

Table~\ref{tab:coriellevaluation} summarizes  the performance of \trimmer, \dnacopy\ and \cghseg\ relative to previously annotated gains and losses in the Corielle dataset. The table shows notably better performance for \trimmer\ compared to either alternative method.  \trimmer\ finds 22 of 23 expected segments with one false positive.  \dnacopy\ finds 20 of 23 expected segments with one false positive.  \cghseg\ finds 22 of 23 expected segments with seven false positives.  \trimmer\ thus achieves the same recall as \cghseg\ while outperforming it in precision and the same precision as \dnacopy\ while outperforming it in recall.
In cell line GM03563, \dnacopy\ fails to detect a region of two points
which have undergone a loss along chromosome 9, in accordance with the results 
obtained using the Lai et al. \cite{1181383} synthetic data. 
In cell line GM03134, \cghseg\ makes a false positive along chromosome 1
which both \trimmer\ and \dnacopy\ avoid. 
In cell line GM01535, \cghseg\ makes a false positive along chromosome 8
and \dnacopy\ misses the aberration along chromosome 12.
\trimmer, however, performs ideally on this cell line. 
In cell line GM02948, \trimmer\ makes a false positive along chromosome 7, finding
a one-point segment in 7q21.3d at genomic position 97000 
whose value is equal to 0.732726. All other methods also make false positive errors
on this cell line.  In GM7081, all three methods fail to find an annotated aberration on chromosome 15.  In addition, \cghseg\ finds a false positive on chrosome 11.

\trimmer\ also substantially outperforms the comparative methods in
run time, requiring 5.78 sec for the full data set versus 8.15 min for
\cghseg\ (an 84.6-fold speedup) and 47.7 min for \dnacopy\ (a 495-fold
speedup).

\subsubsection{Breast Cancer Cell Lines}
\label{subsec:Trimmerbreast}

To illustrate further the performance of \trimmer\ and compare it to \dnacopy\ and \cghseg, we applied it to the Berkeley Breast Cancer cell line database \cite{berkeley}.
The dataset consists of 53 breast cancer cell lines that capture most of the recurrent genomic and 
transcriptional characteristics of 145 primary breast cancer cases. We do not have an accepted ``answer key'' for this data set, but it provides a more extensive basis for detailed comparison of differences in performance of the methods on common data sets, as well as an opportunity for novel discovery.  While we have applied the methods to all chromosomes in all cell lines, space limitations prevent us from presenting the full results here. The interested reader can reproduce all the results including the ones 
not presented here\footnote{\url{http://www.math.cmu.edu/~ctsourak/DPaCGHsupp.html}}. 
We therefore arbitrarily selected three of the 53 cell lines and selected three chromosomes per cell line that we believed would best illustrate the comparative performance of the methods.  The Genes-to-Systems Breast Cancer
Database \footnote{\url{http://www.itb.cnr.it/breastcancer}}~\cite{G2SBCD} was used to identify known breast cancer markers in regions predicted to be gained or lost by at least one of the methods. We used the UCSC Genome Browser
\footnote{\url{http://genome.ucsc.edu/}} to verify the placement of genes.
 
We note that \trimmer\ again had a substantial advantage in run time.  For the full data set, \trimmer\ required 22.76 sec, compared to 23.3 min for \cghseg\ (a 61.5-fold increase), and 4.95 hrs for \dnacopy\ (a 783-fold increase).

\paragraph{\textbf{Cell Line BT474:}}
Figure~\ref{fig:acghbt474} shows the performance of each method on the BT474 cell line.  The three methods report different results for chromosome 1, as shown in Figures~\ref{fig:acghbt474}(a,b,c), with all three detecting amplification in the q-arm but differing in the detail of resolution.  
\trimmer\ is the only method that detects region 1q31.2-1q31.3 as aberrant. 
This region hosts gene NEK7, a candidate oncogene \cite{nek7} and gene KIF14, 
a predictor of grade and outcome in breast cancer \cite{KIF14}.
\trimmer\ and \dnacopy\ annotate the region 1q23.3-1q24.3 as
amplified. This region hosts several genes previously implicated in breast
cancer~\cite{G2SBCD}, such 
as CREG1 (1q24), POU2F1 (1q22-23), RCSD1 (1q22-q24), and BLZF1 (1q24).
Finally, \trimmer\ alone reports independent amplification of the gene CHRM3, a marker of metastasis in breast cancer patients~\cite{G2SBCD}.

\begin{figure*}
		\begin{tabular}{ccc} 
		\includegraphics[width=0.33\textwidth]{./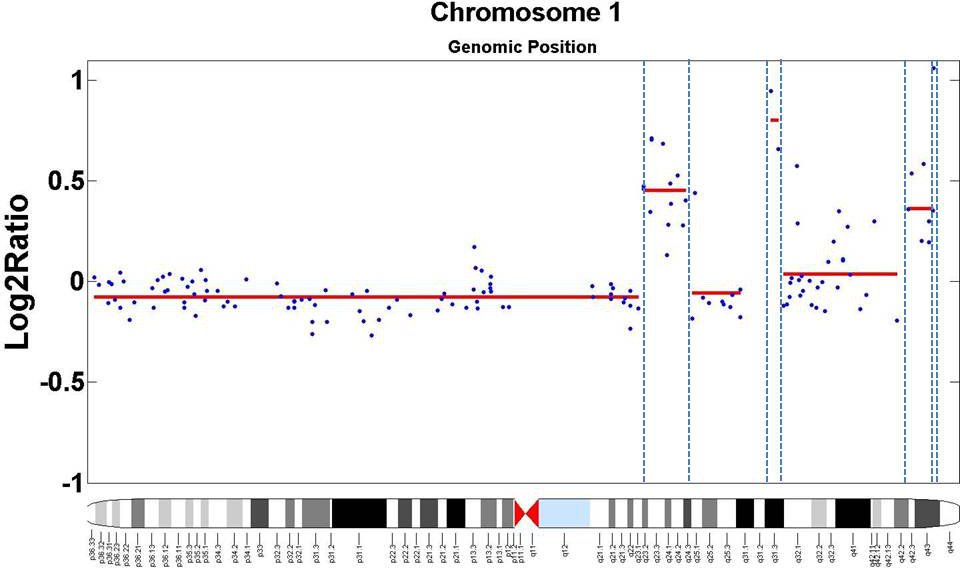} & \includegraphics[width=0.33\textwidth]{./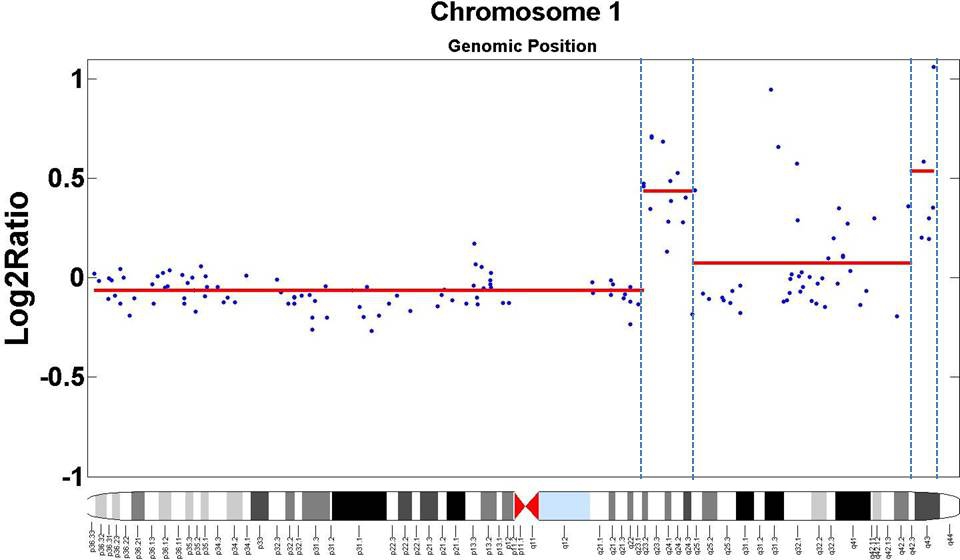} &    \includegraphics[width=0.33\textwidth]{./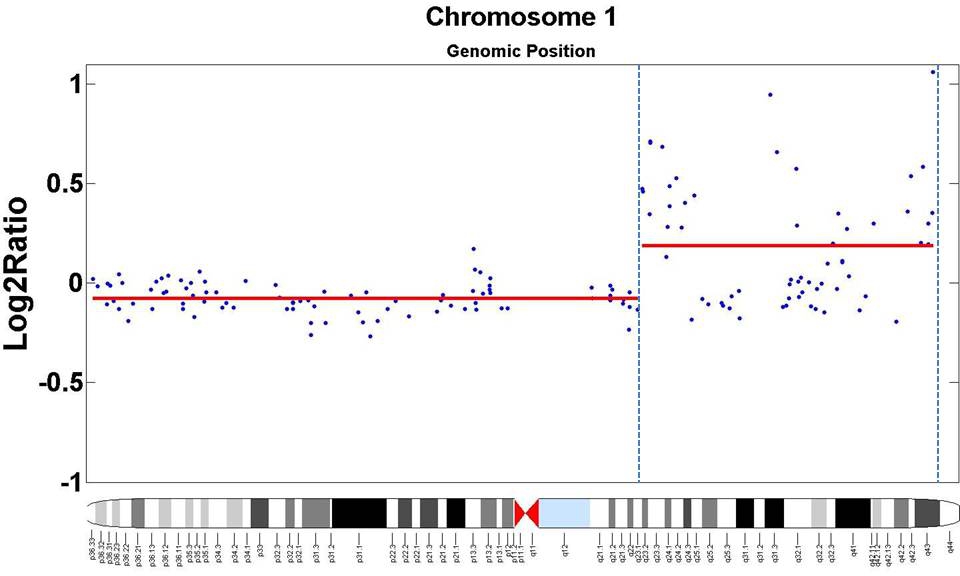}   \\
		(a) &(b) &(c) \\
		\includegraphics[width=0.33\textwidth]{./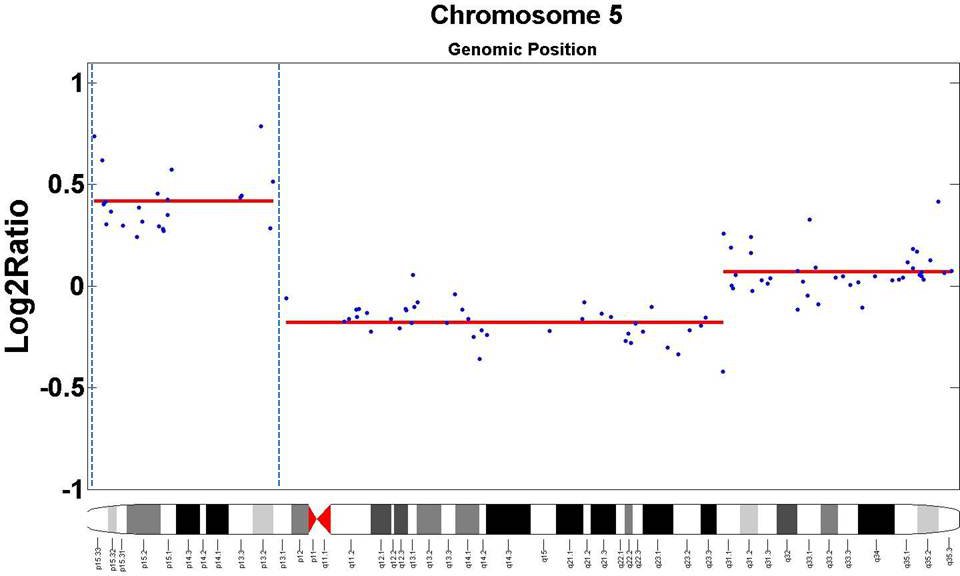} &		\includegraphics[width=0.33\textwidth]{./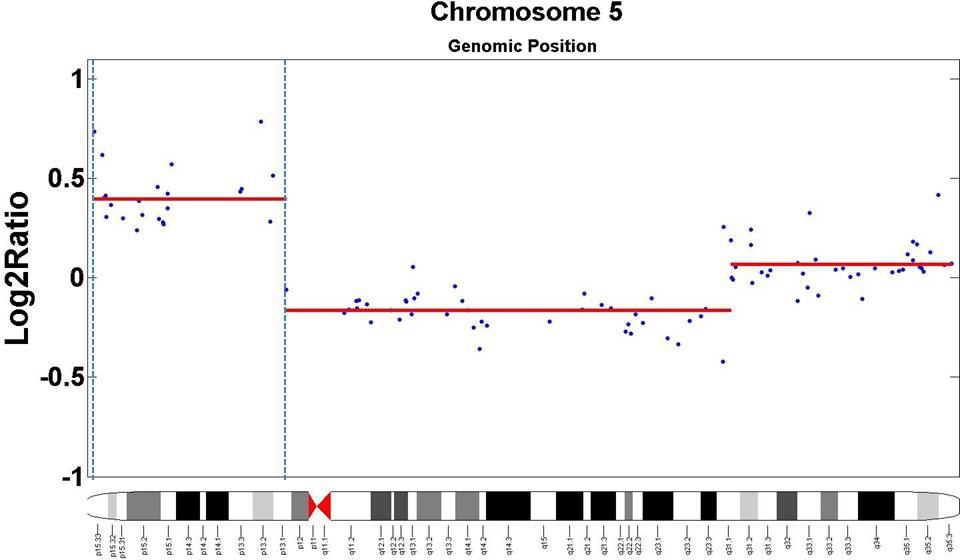} &    \includegraphics[width=0.33\textwidth]{./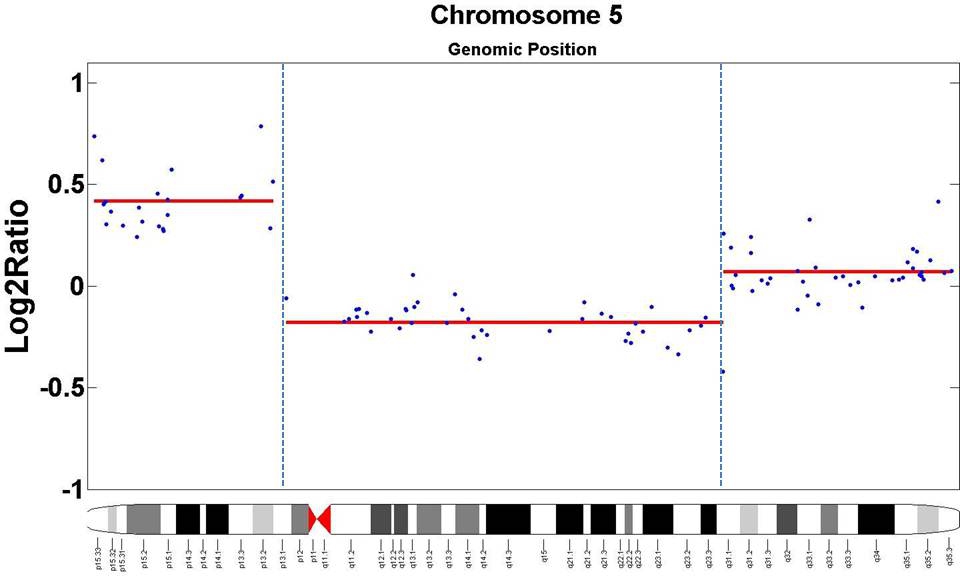}   \\
		(d) &(e) &(f) \\
		\includegraphics[width=0.33\textwidth]{./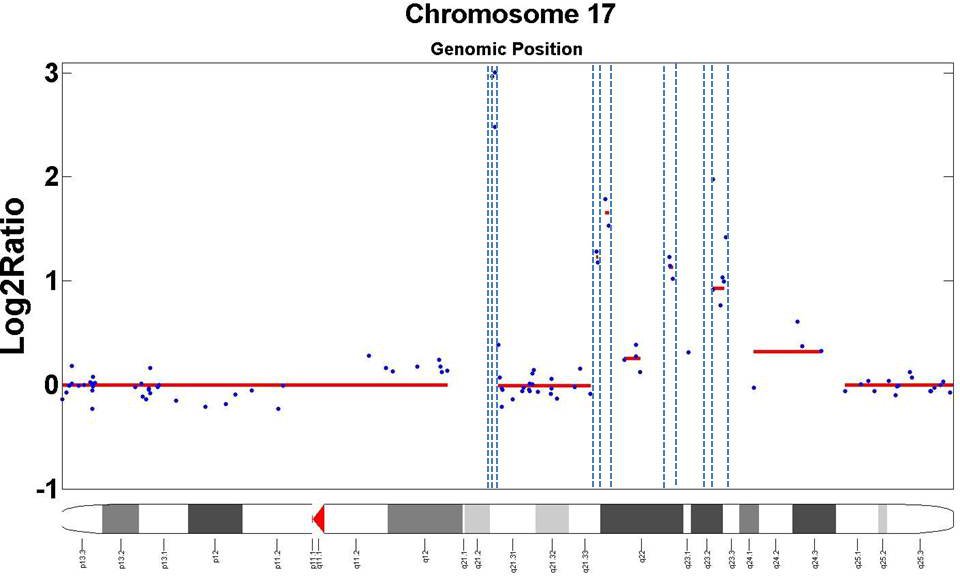} & \includegraphics[width=.33\textwidth]{./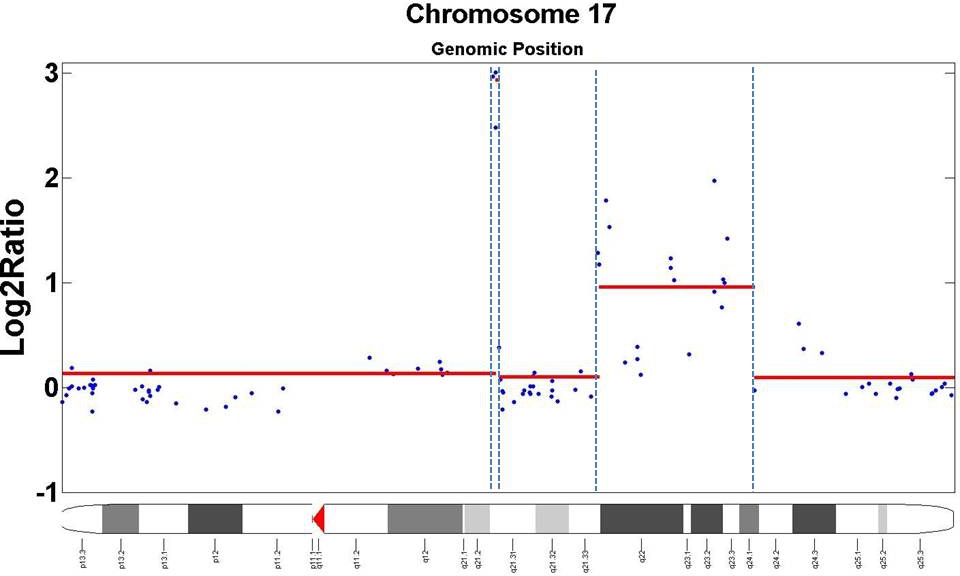} &    \includegraphics[width=0.33\textwidth]{./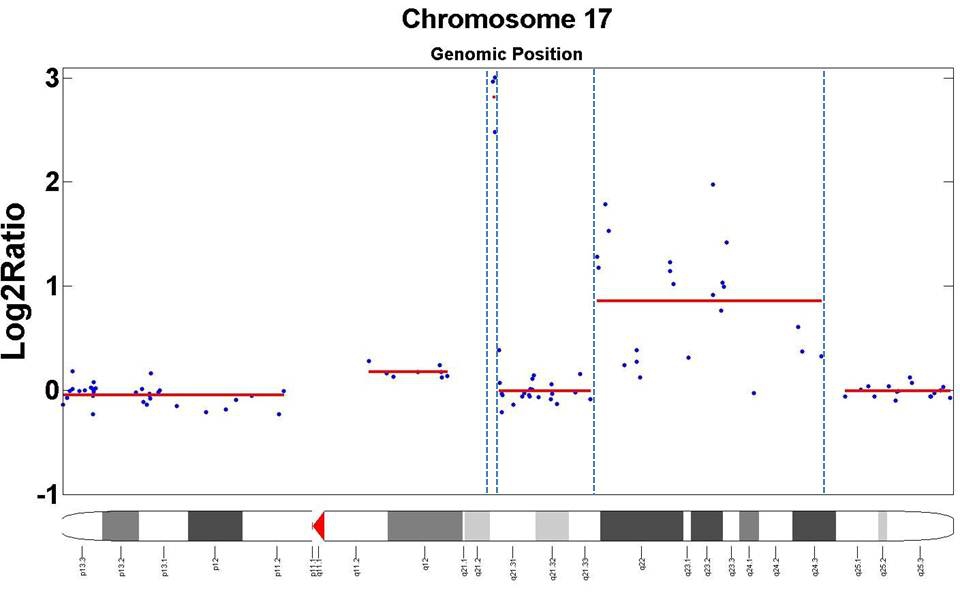}  \\
		(g) &(h) &(i) \\
		   &     & \\ 
		\trimmer\ &  \dnacopy\  & \cghseg\  \\  	
		\end{tabular}
\caption{Visualization of the segmentation output of \trimmer, \dnacopy, and \cghseg\ for the cell line BT474 on chromosomes 1 (a,b,c), 5 (d,e,f), and 17 (g,h,i).  (a,d,g) \trimmer\ output.  (b,e,h) \dnacopy\ output. (c,f,i) \cghseg\ output. Segments exceeding the $\pm$ 0.3 threshold \cite{thresholds} are highlighted.}
\label{fig:acghbt474}
\end{figure*}

For chromosome 5 (Figures~\ref{fig:acghbt474}(d,e,f)), the behavior of the
three methods is almost identical. All methods report
amplification of a region known to contain many breast cancer markers,
including MRPL36 (5p33), ADAMTS16 (5p15.32), POLS (5p15.31), ADCY2
(5p15.31), CCT5 (5p15.2), TAS2R1 (5p15.31), ROPN1L (5p15.2), DAP
(5p15.2), ANKH (5p15.2), FBXL7 (5p15.1), BASP1 (5p15.1), CDH18
(5p14.3), CDH12 (5p14.3), CDH10 (5p14.2 - 5p14.1), CDH9 (5p14.1) PDZD2
(5p13.3), GOLPH3 (5p13.3), MTMR12 (5p13.3), ADAMTS12 (5p13.3 -
5p13.2), SLC45A2 (5p13.2), TARS (5p13.3), RAD1 (5p13.2), AGXT2
(5p13.2), SKP2 (5p13.2), NIPBL (5p13.2), NUP155 (5p13.2), KRT18P31
(5p13.2), LIFR (5p13.1) and GDNF (5p13.2)~\cite{G2SBCD}.  The only
difference in the assignments is that \dnacopy\ fits one more probe to
this amplified segment.

Finally, for chromosome 17 (Figures~\ref{fig:acghbt474}(g,h,i)), like chromosome 1, all methods detect amplification but \trimmer\ predicts a finer breakdown of the amplified region into independently amplified segments.  All three methods detect amplification of a region which includes the major breast cancer biomarkers HER2 (17q21.1) and BRCA1 (17q21) as also the additional markers MSI2 (17q23.2) and  TRIM37 (17q23.2)~\cite{G2SBCD}.  While the more discontiguous picture produced by \trimmer\ may appear to be a less parsimonious explanation of the data, a complex combination of fine-scale gains and losses in 17q is in fact well supported by the literature \cite{Orsetti}.

\paragraph{ \textbf{Cell Line HS578T:}}

\begin{figure*}
		\begin{tabular}{ccc} 
		\includegraphics[width=0.33\textwidth]{./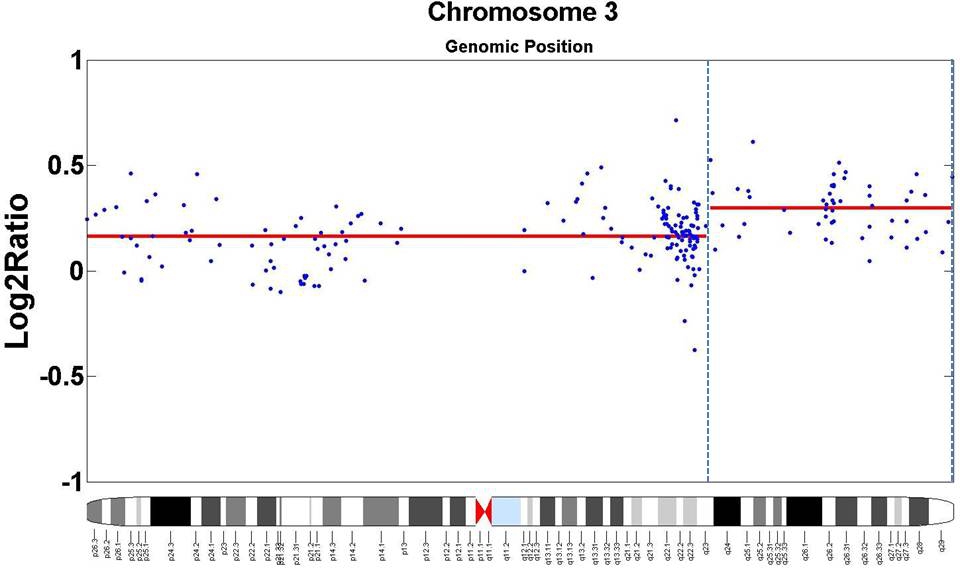} & \includegraphics[width=0.33\textwidth]{./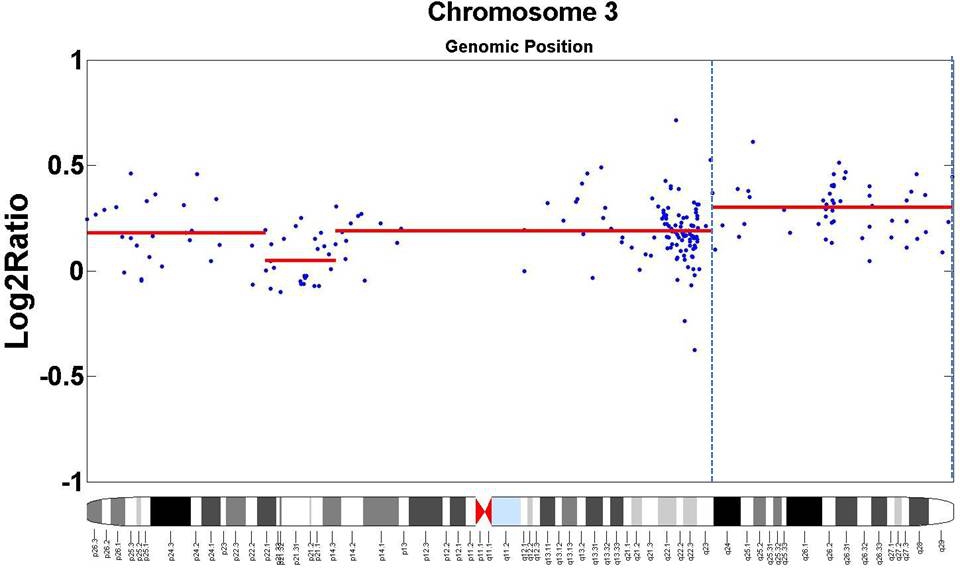} &    \includegraphics[width=0.33\textwidth]{./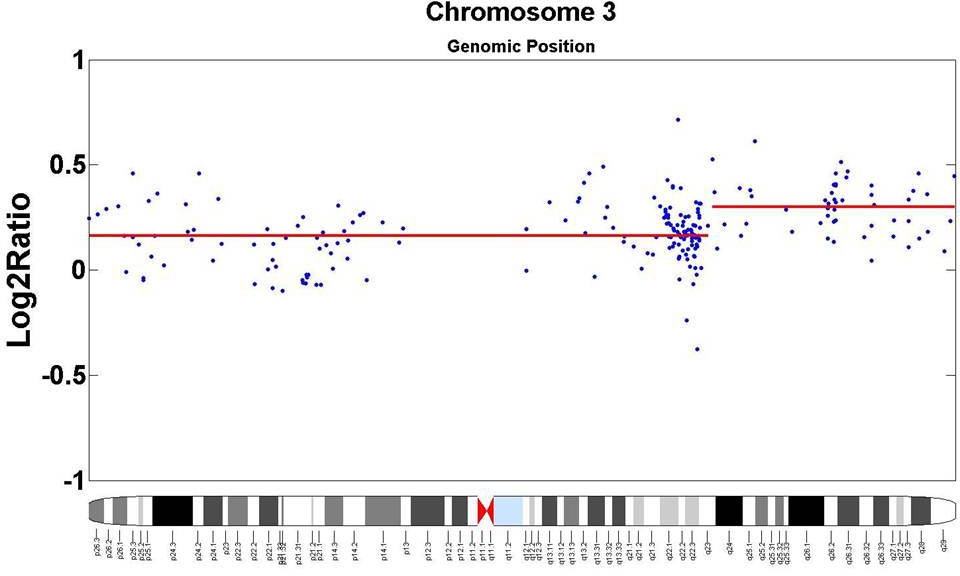} \\
		(a) &(b) &(c) \\
		 \includegraphics[width=0.31\textwidth]{./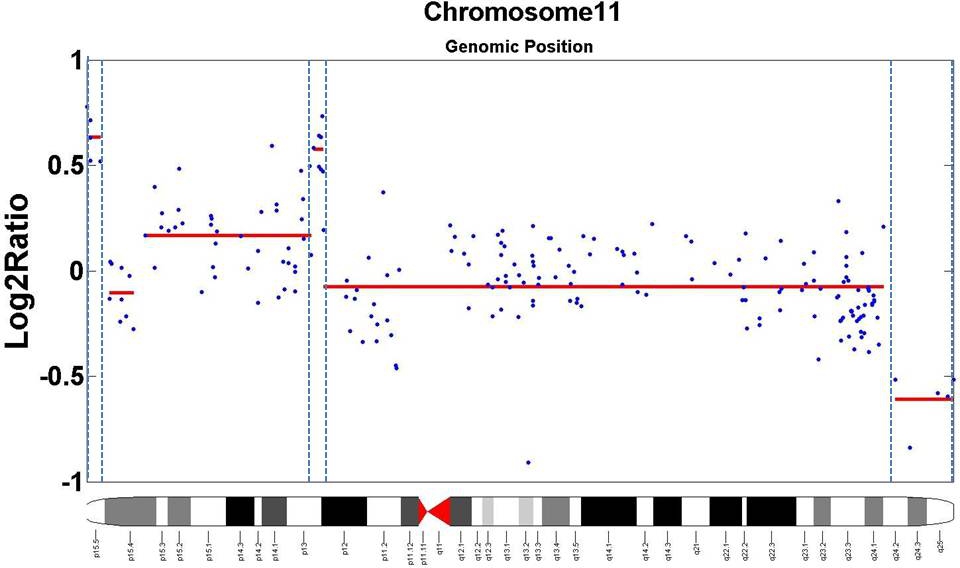} & \includegraphics[width=0.33\textwidth]{./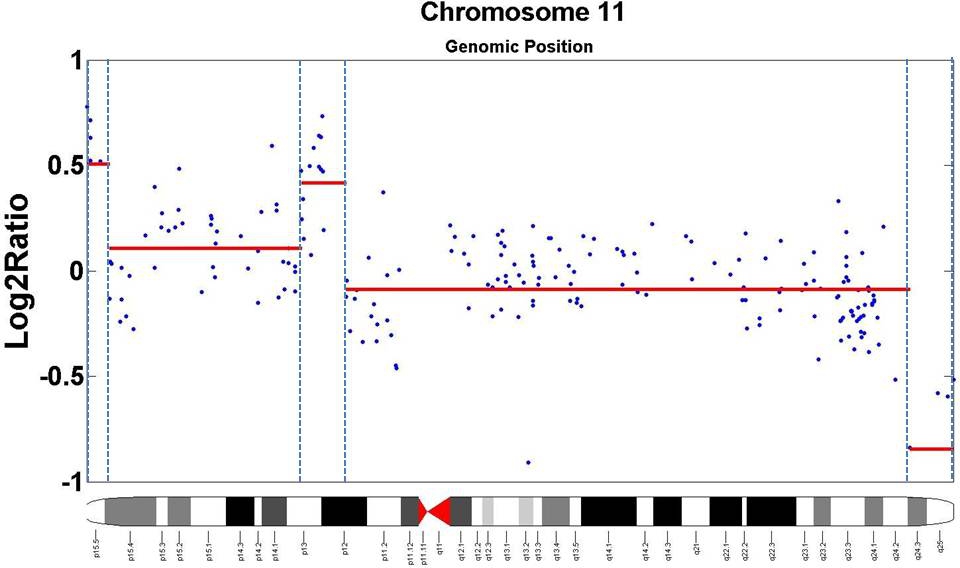} &    \includegraphics[width=0.33\textwidth]{./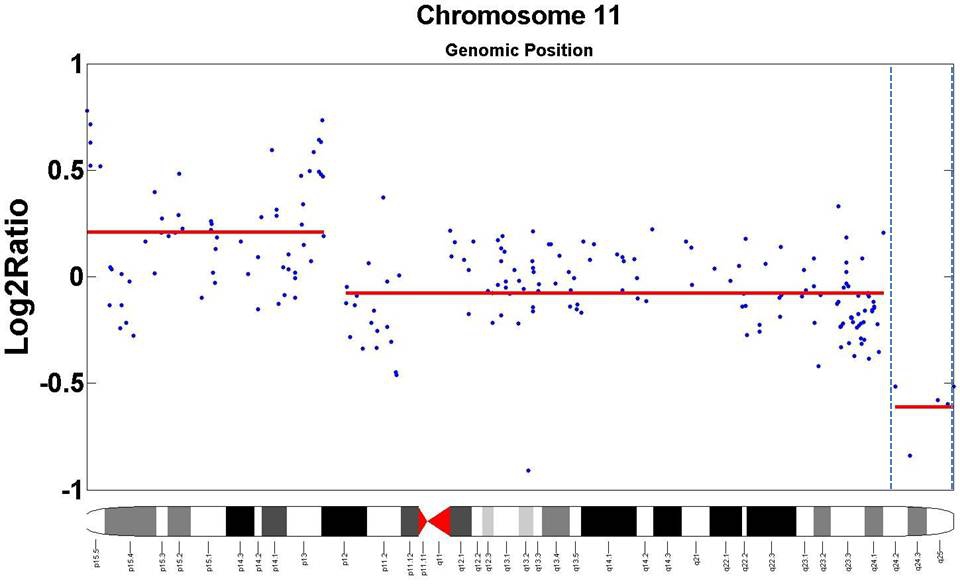} \\
		(d) &(e) &(f) \\
		\includegraphics[width=0.33\textwidth]{./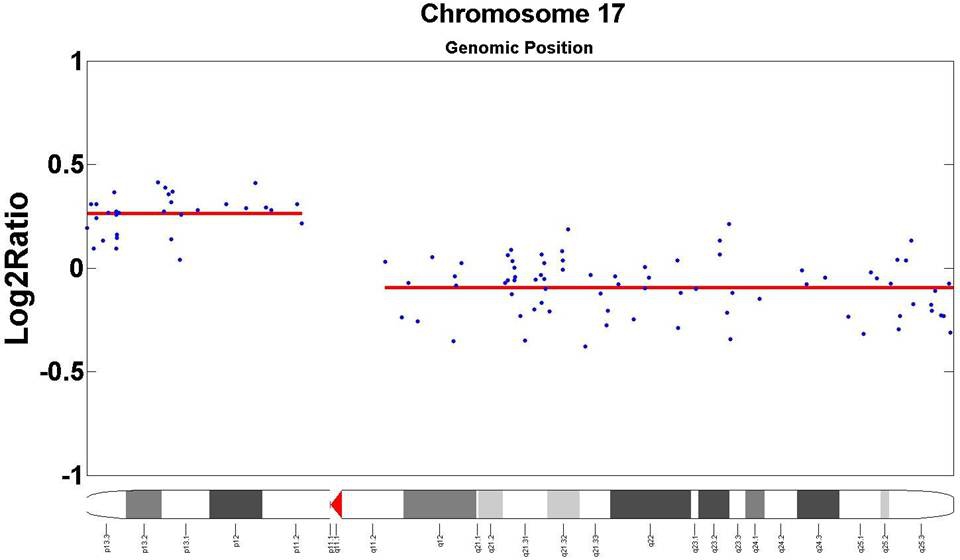} & \includegraphics[width=0.33\textwidth]{./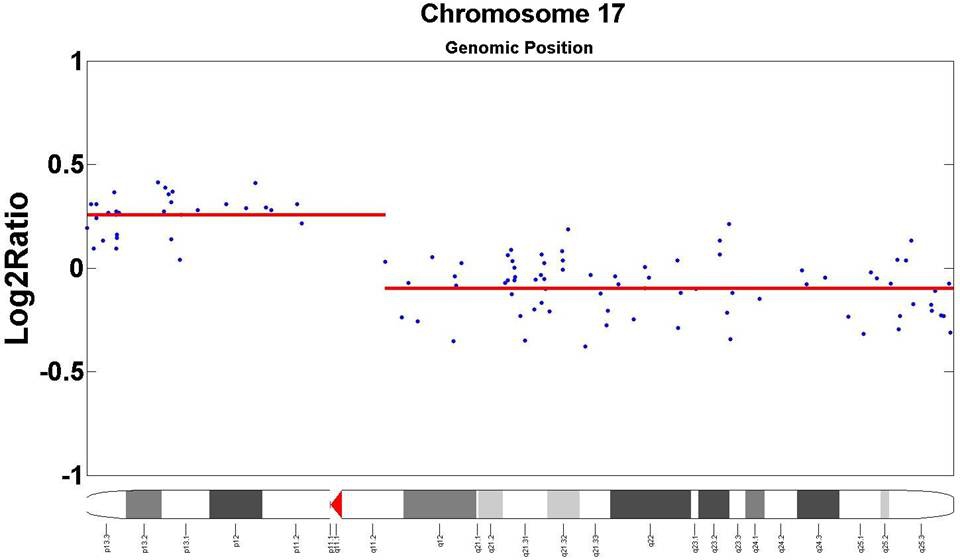} &    \includegraphics[width=0.32\textwidth]{./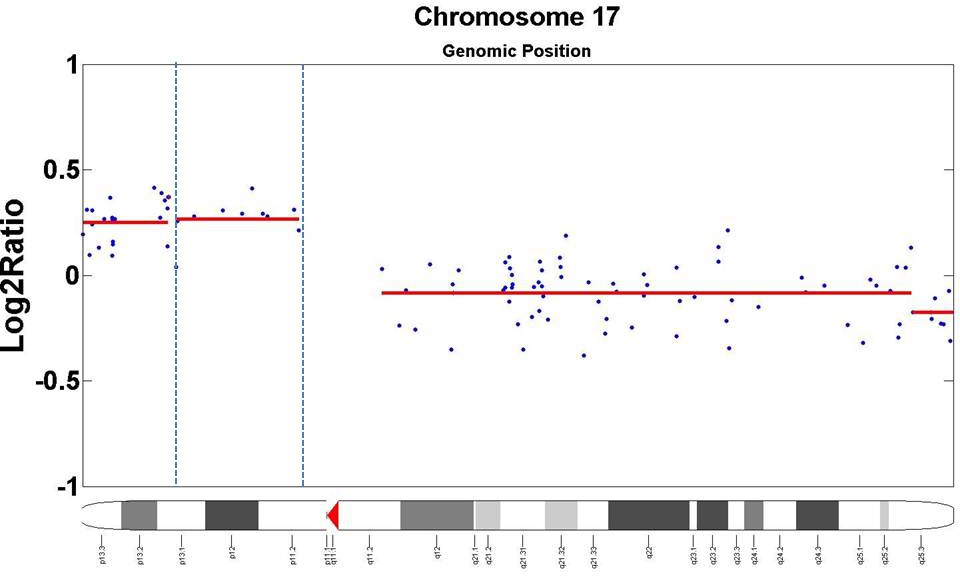} \\
		(g) &(h) &(i) \\
		    &     & \\ 
		\trimmer\ &  \dnacopy\  & \cghseg\  \\  
		\end{tabular}
\caption{Visualization of the segmentation output of \trimmer, \dnacopy, and \cghseg\ for the cell line HS578T on chromosomes 3 (a,b,c), 11 (d,e,f), and 17 (g,h,i).  (a,d,g) \trimmer\ output.  (b,e,h) \dnacopy\ output. (c,f,i) \cghseg\ output. Segments exceeding the $\pm$ 0.3 threshold \cite{thresholds} are highlighted.}
\label{fig:acghHS578T}
\end{figure*}

Figure~\ref{fig:acghHS578T} compares the methods on cell line HS578T for
chromosomes 3, 11 and 17.  Chromosome 3 (Figures~\ref{fig:acghHS578T}(a,b,c))
shows identical prediction of an amplification of 3q24-3qter for all
three methods. This region includes the key breast cancer markers PIK3CA
(3q26.32)~\cite{pik3ca}, and additional breast-cancer-associated genes
TIG1 (3q25.32), MME (3q25.2), TNFSF10 (3q26), MUC4 (3q29), TFRC
(3q29), DLG1 (3q29)~\cite{G2SBCD}. \trimmer\ and \cghseg\ also make
identical predictions of normal copy number in the p-arm, while
\dnacopy\ reports an additional loss between 3p21 and 3p14.3.  We are
unaware of any known gain or loss in this region associated with
breast cancer.

For chromosome 11 (Figures~\ref{fig:acghHS578T}(d,e,f)), the methods again
present an identical picture of loss at the q-terminus
(11q24.2-11qter) but detect amplifications of the p-arm at different
levels of resolution.  \trimmer\ and \dnacopy\ detect gain in the
region 11p15.5, which is the site of the HRAS breast cancer metastasis
marker~\cite{G2SBCD}.  In contrast to \dnacopy, \trimmer\ detects an
adjacent loss region.  While we have no direct evidence this loss is a
true finding, the region of predicted loss does contain EIF3F
(11p15.4), identified as a possible tumor suppressor whose expression
is decreased in most pancreatic cancers and melanomas~\cite{G2SBCD}.
Thus, we conjecture that EIF3F is a tumor suppressor in
breast cancer.

\begin{figure*}
		\begin{tabular}{ccc} 
		\includegraphics[width=0.33\textwidth]{./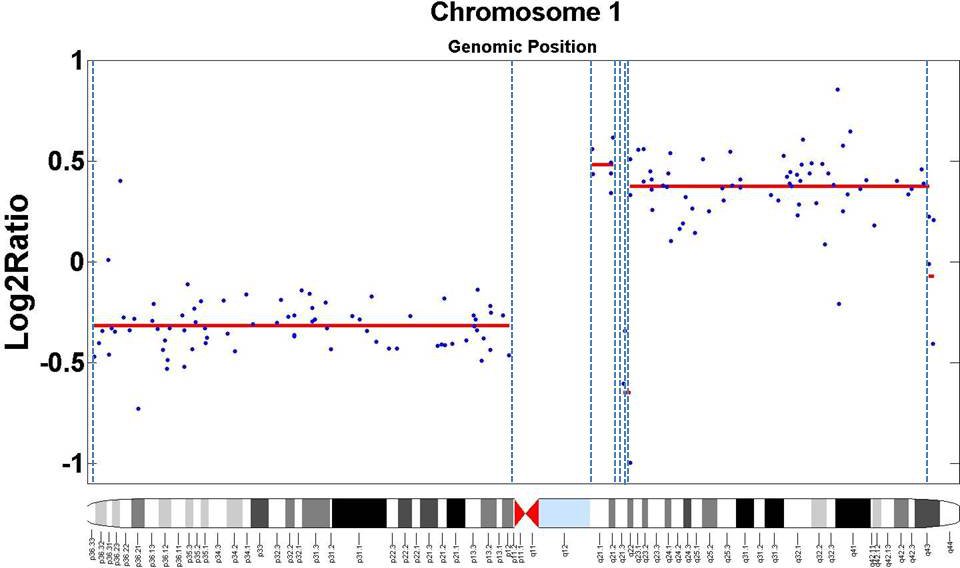} & \includegraphics[width=0.33\textwidth]{./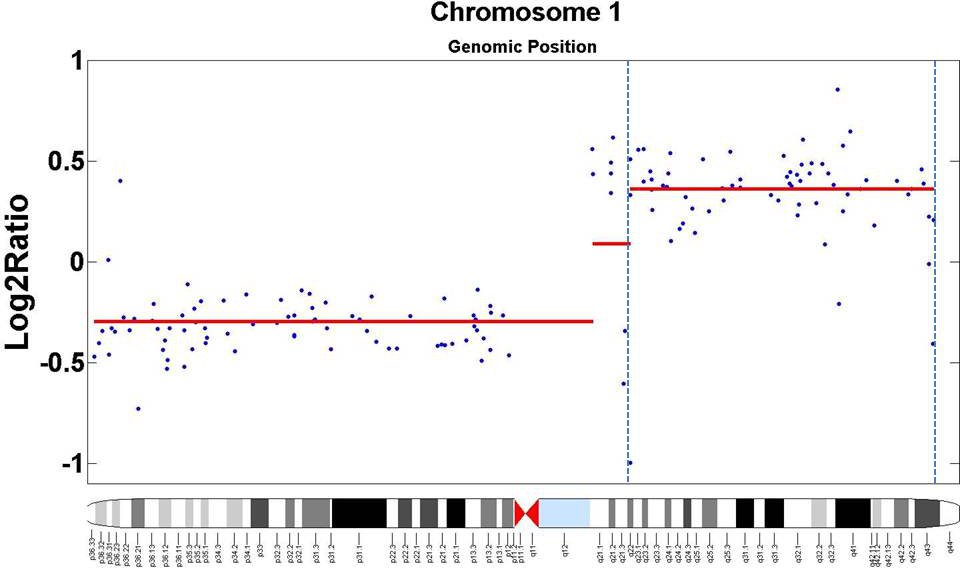} &    \includegraphics[width=0.33\textwidth]{./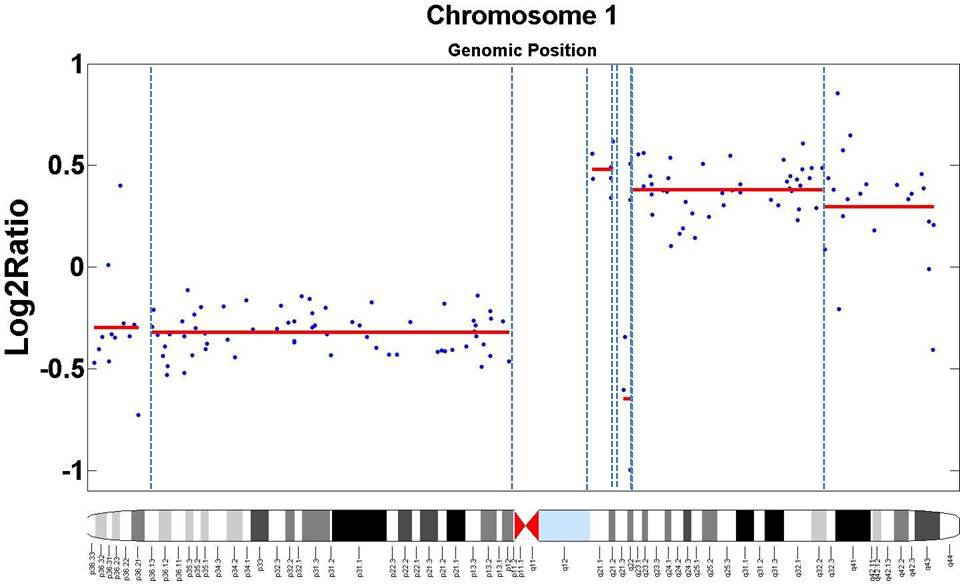}  \\
		(a) &(b) &(c) \\
		\includegraphics[width=0.33\textwidth]{./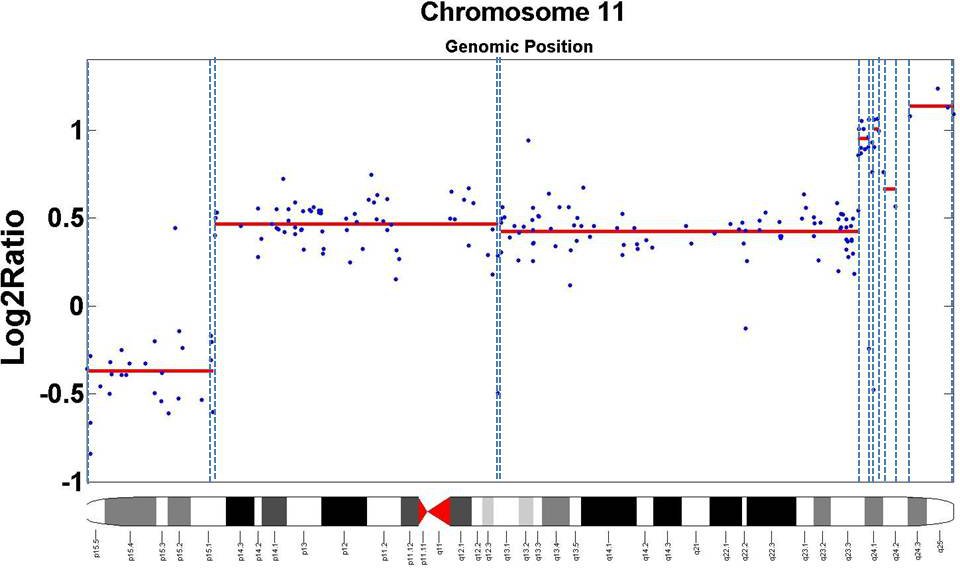} & \includegraphics[width=0.33\textwidth]{./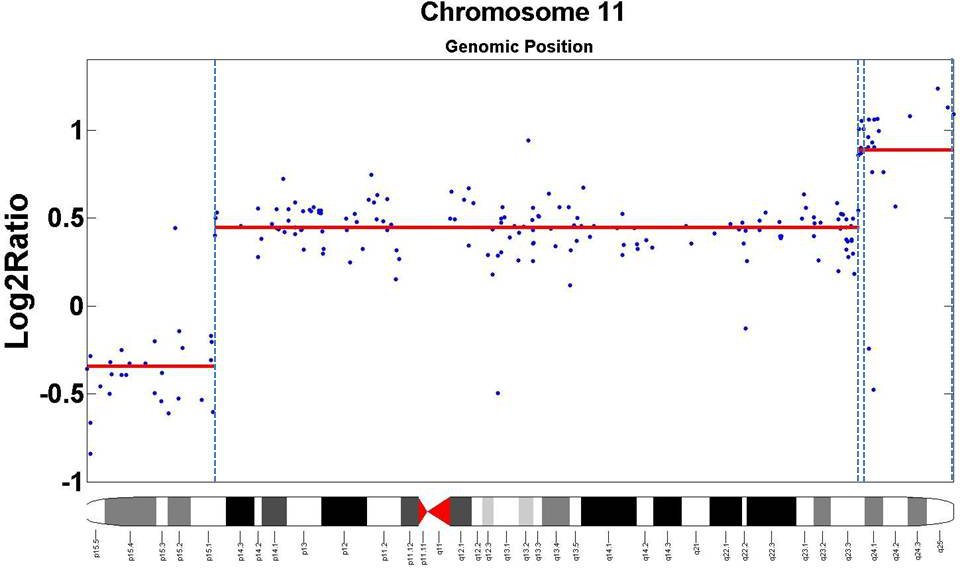} &    \includegraphics[width=0.33\textwidth]{./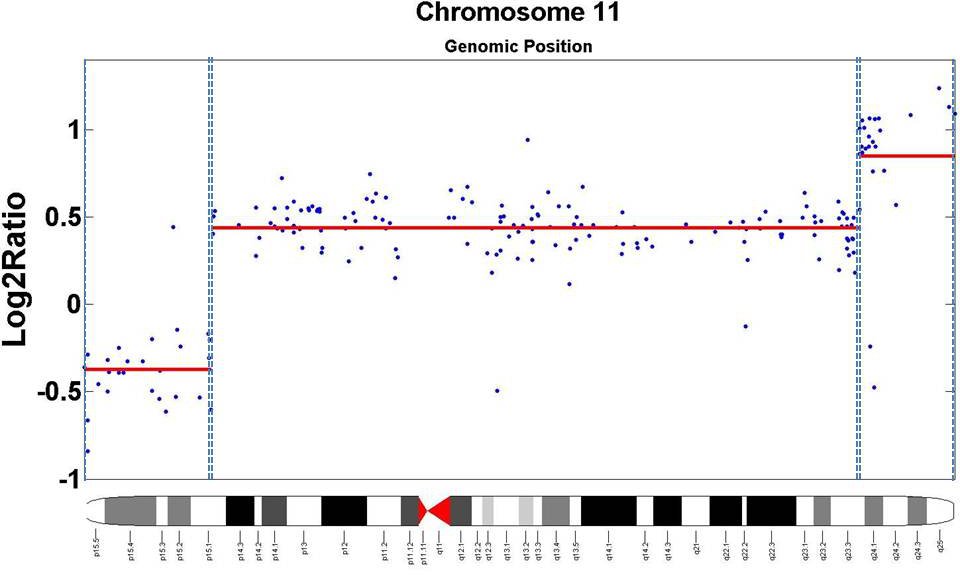} \\
		(d) &(e) &(f) \\
		\includegraphics[width=0.33\textwidth]{./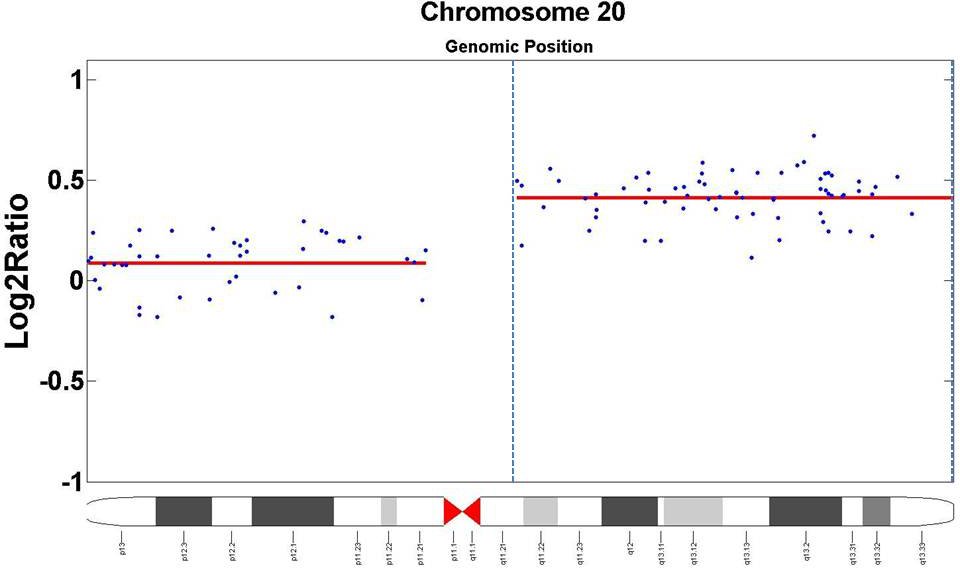} & \includegraphics[width=0.33\textwidth]{./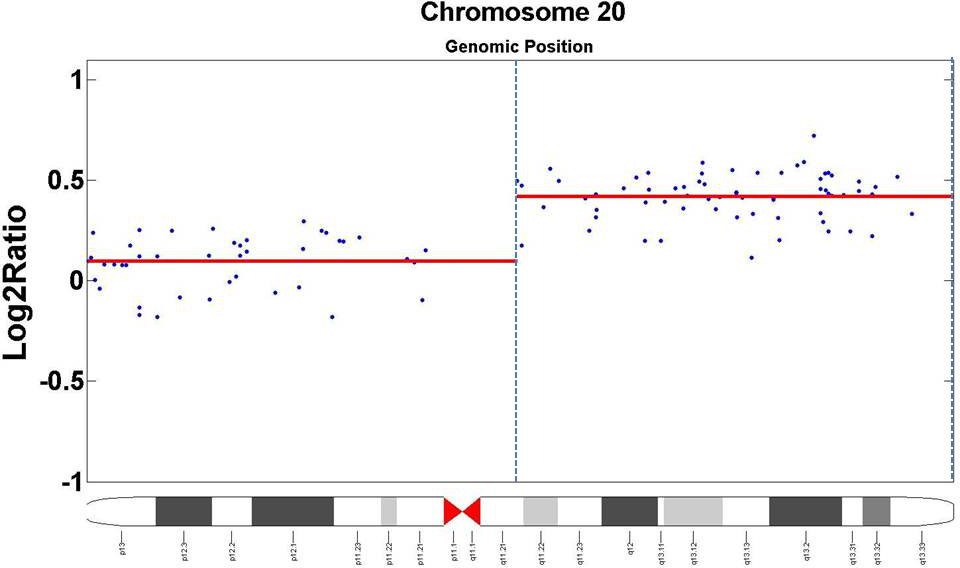} &    \includegraphics[width=0.32\textwidth]{./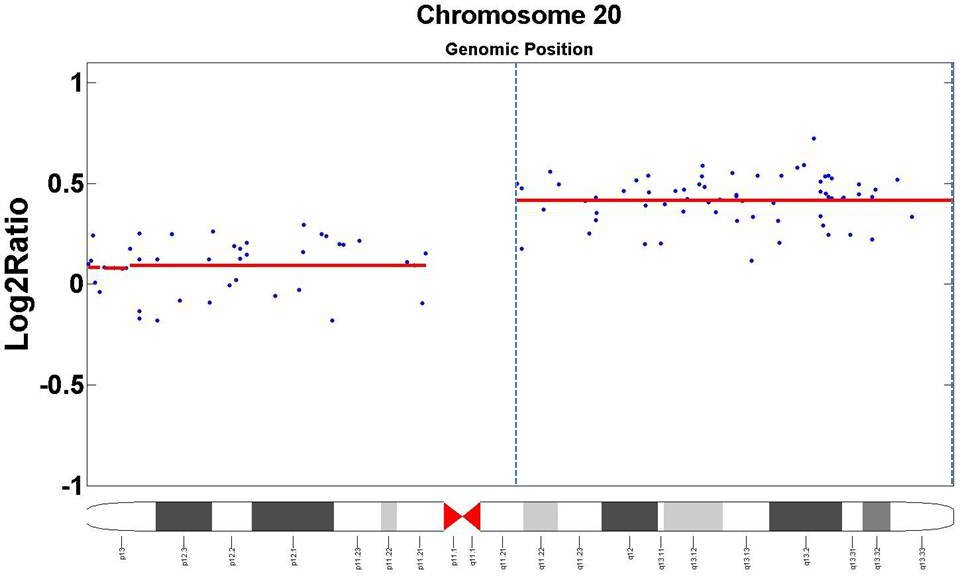} \\
		(g) &(h) &(i) \\
		&     & \\ 
		\trimmer\ &  \dnacopy\  & \cghseg\  \\  
		\end{tabular}
\caption{Visualization of the segmentation output of \trimmer, \dnacopy, and \cghseg\ for the cell line T47D on chromosomes 1 (a,b,c), 11 (d,e,f), and 20 (g,h,i).  (a,d,g) \trimmer\ output.  (b,e,h) \dnacopy\ output. (c,f,i) \cghseg\ output. Segments exceeding the $\pm$ 0.3 threshold \cite{thresholds} are highlighted.}
\label{fig:acghT47D}
\end{figure*}

On chromosome 17  (Figures~\ref{fig:acghHS578T}(g,h,i)), the three methods behave similarly, with all three predicting amplification of the p-arm.  \dnacopy\ places one more marker in the amplified region causing it to cross the centromere while \cghseg\ breaks the amplified region into three segments by predicting additional amplification at a single marker.

\paragraph{\textbf{Cell Line T47D:}}

\begin{figure}
		\begin{tabular}{ccc} 
\includegraphics[width=0.31\textwidth]{./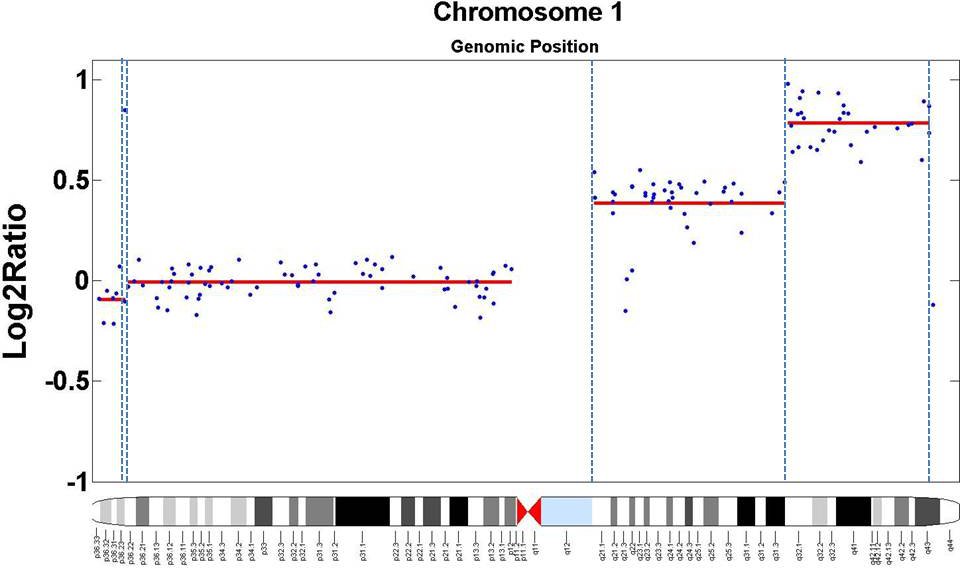} &		\includegraphics[width=0.32\textwidth]{./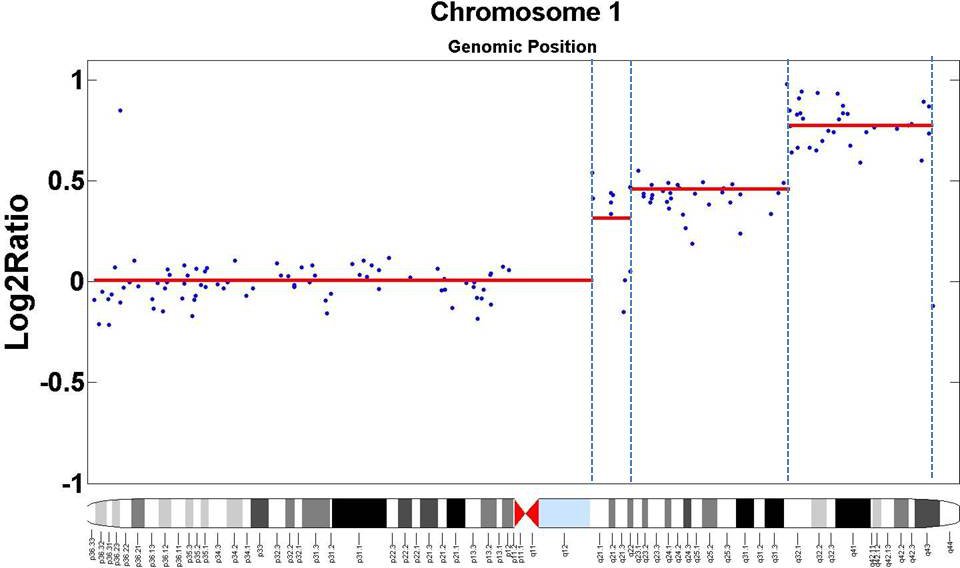} &    \includegraphics[width=0.33\textwidth]{./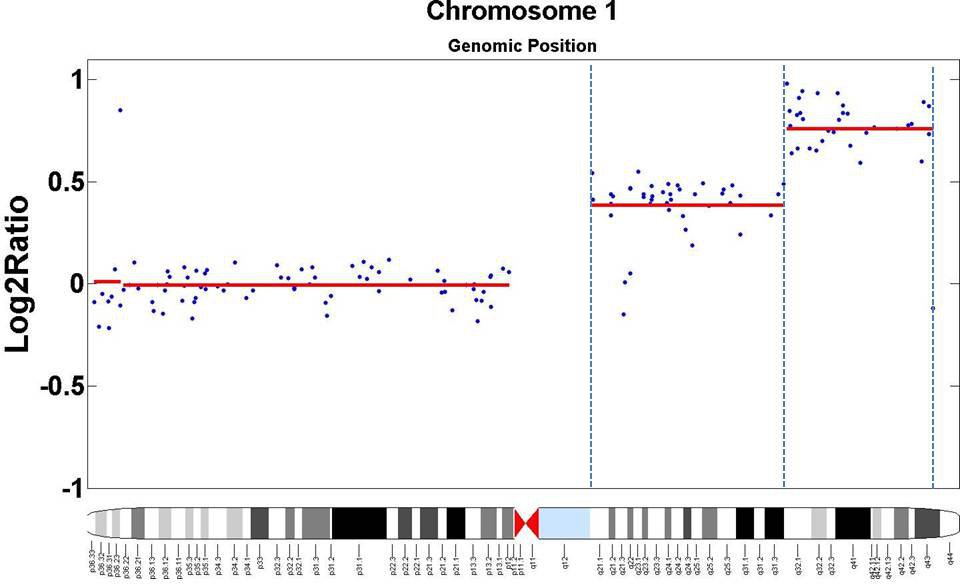} \\
		(a) &(b) &(c) \\
 \includegraphics[width=0.3\textwidth]{./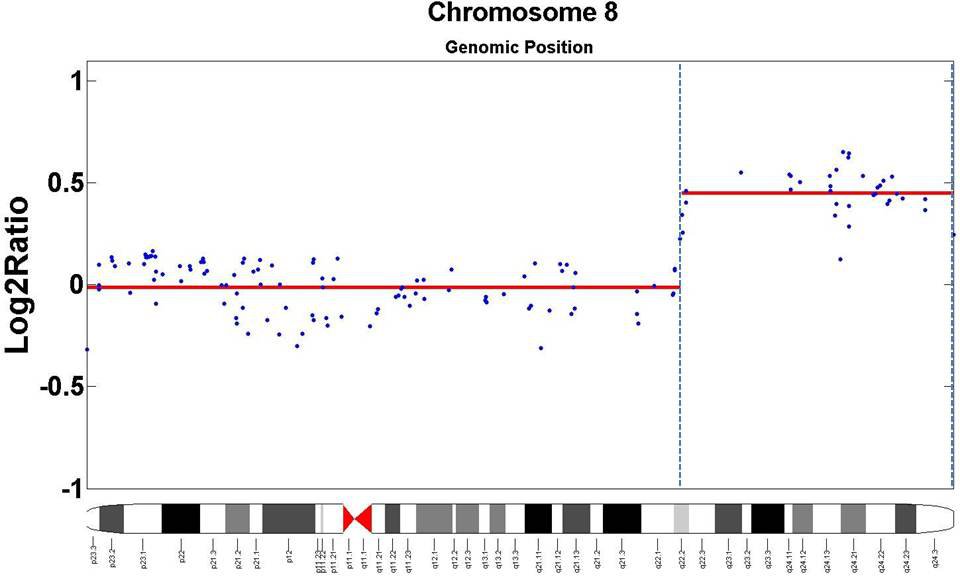} &		\includegraphics[width=0.3\textwidth]{./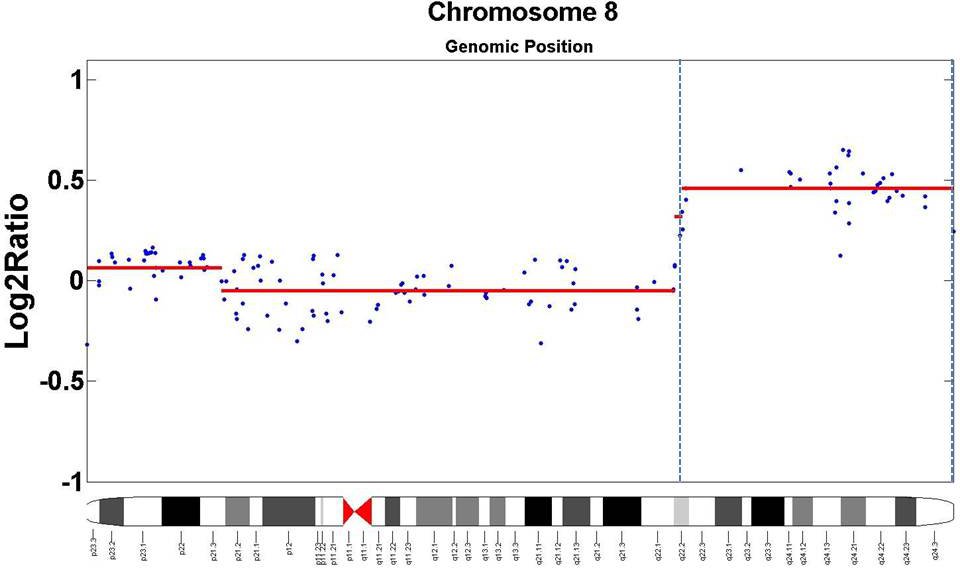} &    \includegraphics[width=0.3\textwidth]{./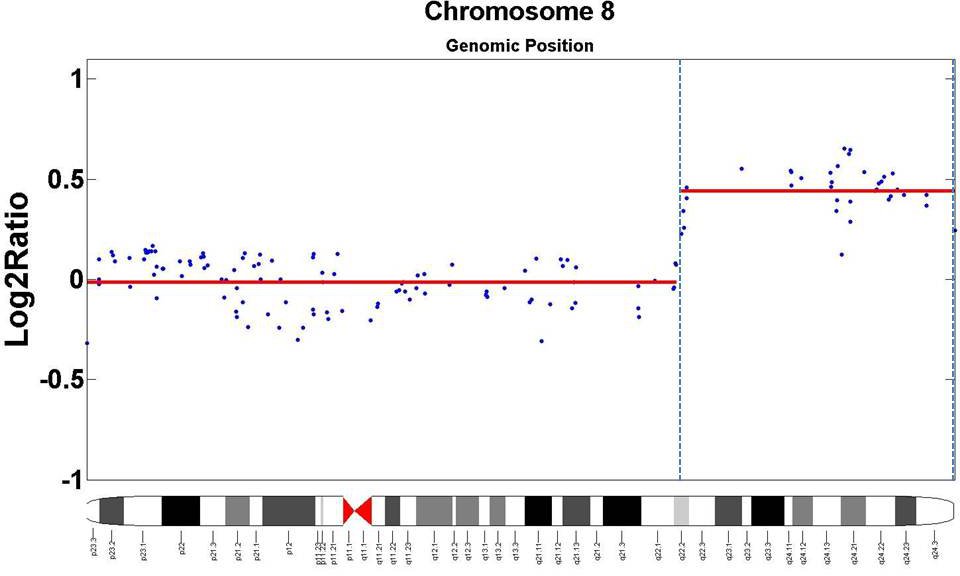} \\
		(d) &(e) &(f) \\
		\includegraphics[width=0.3\textwidth]{./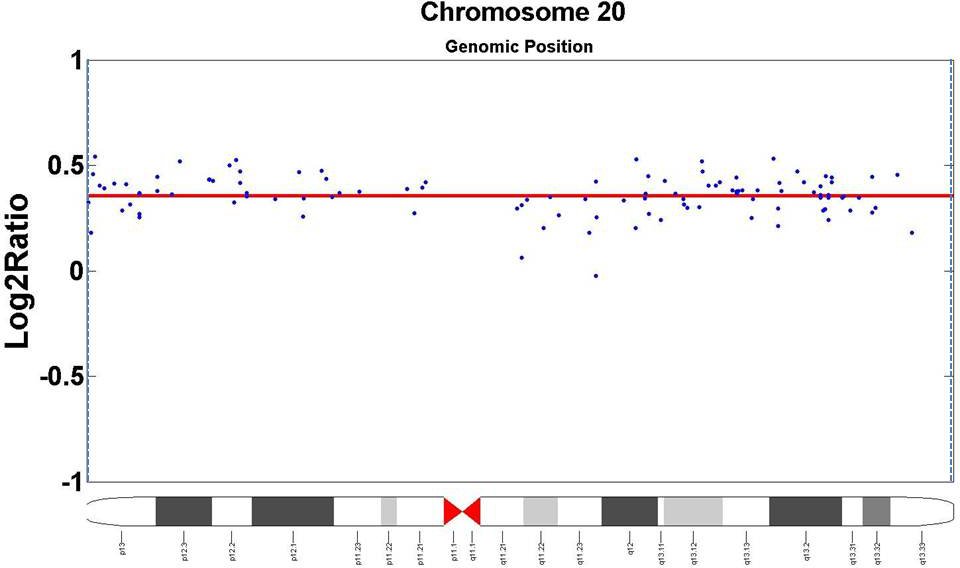} & \includegraphics[width=0.3\textwidth]{./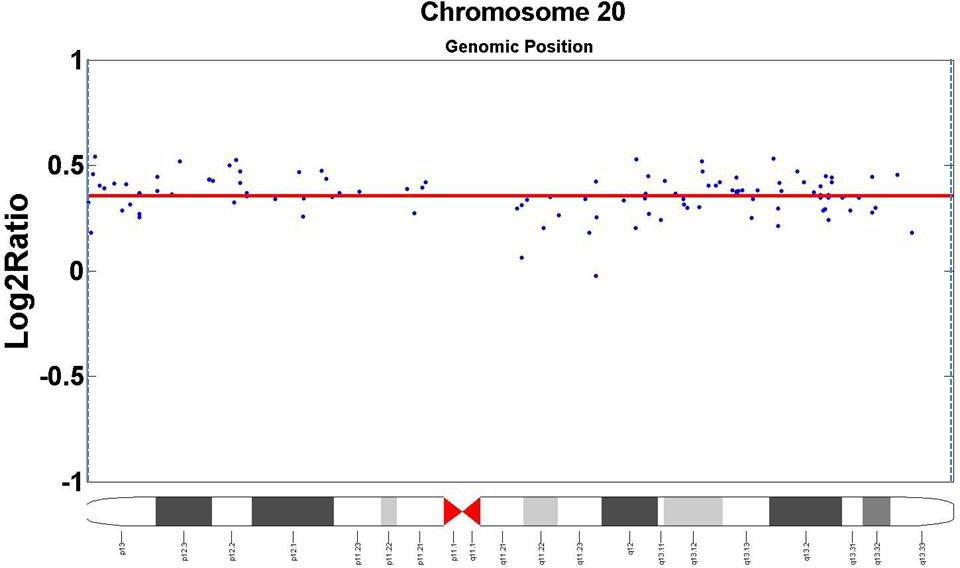} &    \includegraphics[width=0.33\textwidth]{./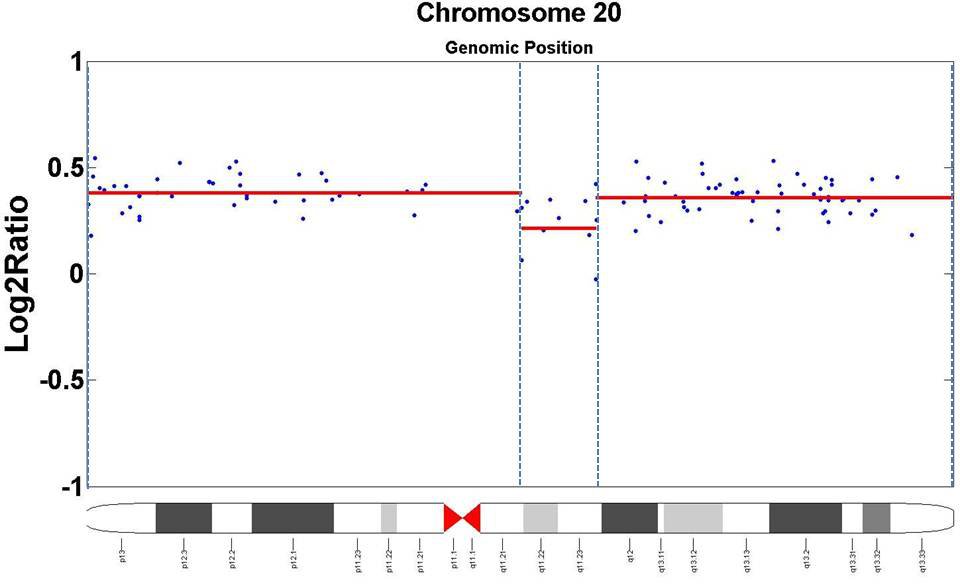} \\
		(g) &(h) &(i) \\
		&     & \\ 
		\trimmer\ &  \dnacopy\  & \cghseg\  \\  
		\end{tabular}
\caption{Visualization of the segmentation output of \trimmer, \dnacopy, and \cghseg for the cell line MCF10A on chromosomes 1 (a,b,c), 8 (d,e,f), and 20 (g,h,i).  (a,d,g) \trimmer\ output.  (b,e,h) \dnacopy\ output. (c,f,i) \cghseg\ output. Segments exceeding the $\pm$ 0.3 threshold \cite{thresholds} are highlighted.}
\label{fig:acghMCF10A}
\end{figure}

Figure~\ref{fig:acghT47D} compares the methods on chromosomes 1, 8, and 20
of  the cell line T47D.  On chromosome 1 (Figure~\ref{fig:acghT47D}(a,b,c)),
all three methods detect loss of the p-arm and a predominant
amplification of the q-arm.  \dnacopy\ infers a presumably spurious extension of the p-arm loss across
the centromere into the q-arm, while the other methods do not.  The main differences
between the three methods appear on the q-arm of chromosome 1.
\trimmer\ and \cghseg\ detect a small region of gain proximal to
the centromere at 1q21.1-1q21.2, followed by a short region of loss
spanning 1q21.3-1q22. \dnacopy\ merges these into a single longer
region of normal copy number.  The existence of a small region of loss
at this location in breast cancers is supported by prior
literature~\cite{chunder}.

The three methods provide comparable segmentations of chromosome 11
(Figure~\ref{fig:acghT47D}(d,e,f)).  All predict loss near the p-terminus,
a long segment of amplification stretching across much of the p- and
q-arms, and additional amplification near the q-terminus.  \trimmer,
however, breaks this q-terminal amplification into several
sub-segments at different levels of amplification while \dnacopy\ and
\cghseg\ both fit a single segment to that region.  We have no
empirical basis to determine which segmentation is correct here.
\trimmer\ does appear to provide a spurious break in the long amplified
segment that is not predicted by the others.

Finally, along chromosome 20 (Figure~\ref{fig:acghT47D}(g,h,i)), the
output of the methods is similar, with all three methods suggesting that
the q-arm has an aberrant copy number, an observation consistent with
prior studies \cite{breast20}.  The only exception is again that
\dnacopy\ fits one point more than the other two methods along the
first segment, causing a likely spurious extension of the p-arm's
normal copy number into the q-arm.

\paragraph{{\bf Cell Line MCF10A}}

Figure~\ref{fig:acghMCF10A} shows the output of each of the three methods on
chromosomes 1, 8, and 20 of the cell line MCF10A.  On this cell line, the
methods all yield similar predictions although from slightly different
segmentations.  All three show nearly identical behavior on chromosome
1 (Figure~\ref{fig:acghMCF10A}(a,b,c)), with normal copy number on the
p-arm and at least two regions of independent amplification of the
q-arm.  Specifically, the regions noted as gain regions host significant genes such as
PDE4DIP a gene associated with breast metastatic to bone (1q22),
ECM1 (1q21.2), ARNT (1q21), MLLT11 (1q21), S100A10 (1q21.3),
S100A13 (1q21.3), TPM3 (1q25) which also plays a role in breast cancer metastasis, 
SHC1 (1q21.3) and CKS1B (1q21.3).
\dnacopy\ provides a slightly different segmentation of the
q-arm near the centromere, suggesting that the non-amplified region
spans the centromere and that a region of lower amplification exists
near the centromere.  
On chromosome 8 (Figure~\ref{fig:acghMCF10A}(d,e,f))
the three algorithms lead to identical copy number predictions after
thresholding, although \dnacopy\ inserts an additional breakpoint at
8q21.3 and a short additional segment at 8q22.2 that do not correspond
to copy number changes.  
All three show significant amplification across chromosome 20 (Figure~\ref{fig:acghMCF10A}(g,h,i)), although in
this case \cghseg\ distinguishes an additional segment from 20q11.22-20q11.23 that is near the amplification threshold.
It is worth mentioning that chromosome 20 
hosts significant breast cancer related genes such as CYP24 and ZNF217.

\chapter{Robust Unmixing of Tumor States in Array Comparative Genomic Hybridization Data }
\label{unmixingchapter}
\lhead{\emph{Robust Unmixing of aCGH Data}} 
\section{Introduction}

In Section~\ref{subsec:unmixinghere} we discussed the importance 
of discovering tumor subtypes and we mentioned that
the recent work by \cite{SchwartzS10} showed promising results. 
They were, however, hampered  by limitations of the hard geometric approach, particularly the sensitivity to 
experimental error and outlier data points caused by the simplex fitting approach.  
An example of simplex fitting in the plane is shown in Figure \ref{fig:unmixingconrob},  
illustrating why the strict containment model used in \cite{EhrlichF87,ChanCHM09,SchwartzS10} 
is extremely sensitive to noise in the data.
In the present Chapter we introduce a soft geometric unmixing model 
for tumor mixture separation, which relaxes the requirement for strict containment using a 
fitting criterion that is robust to noisy measurements. We develop a formalization of the 
problem and derive an efficient gradient-based optimization method.  It is worth mentioning that this 
computational method has also been applied in the context of vertex similarity in social networks 
\cite{tsourakakis2011social}.

\begin{figure*}
\centering	
	\includegraphics[height=1.4in]{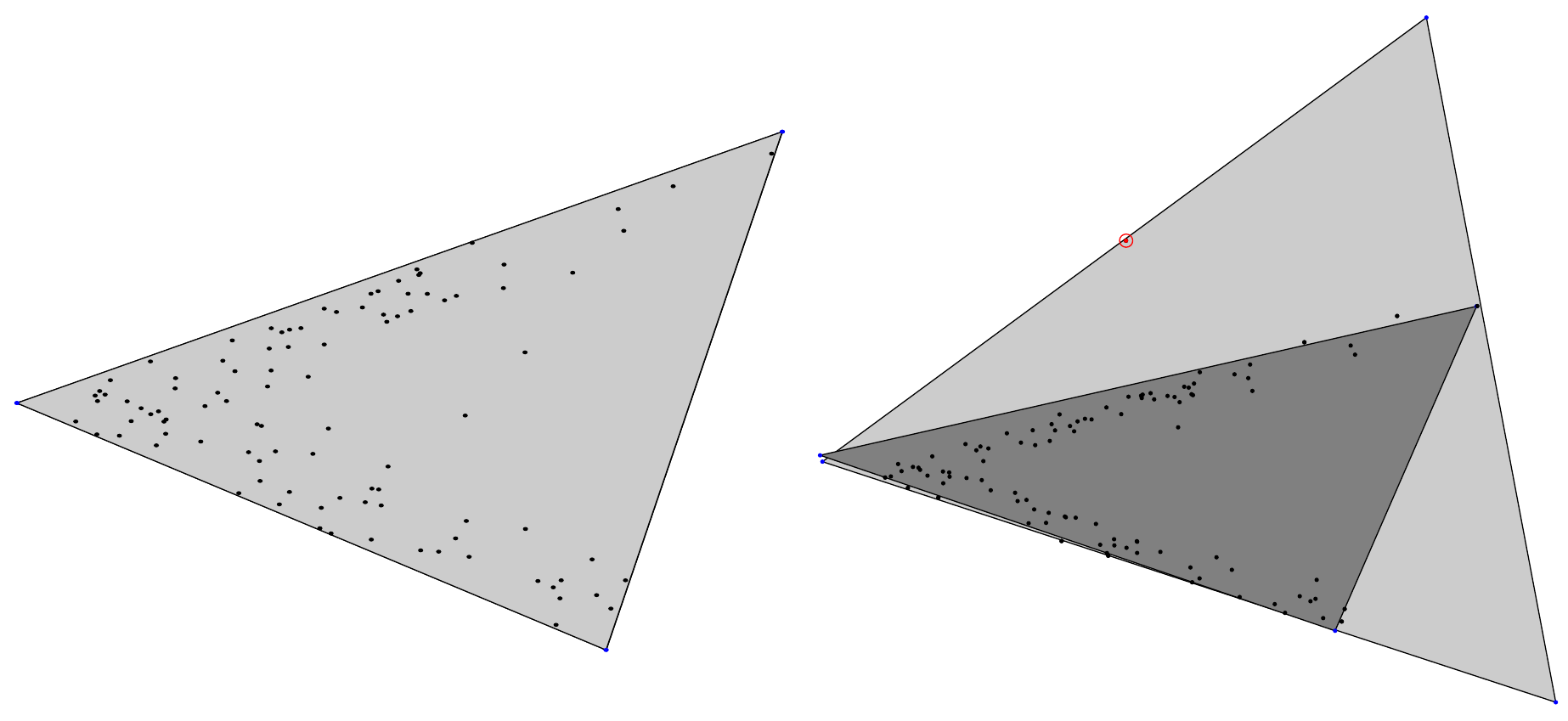}
	\caption{\label{fig:unmixingconrob} {\it Left:} The minimum area fit of a simplex containing 
         the sample points in the plane (shown in black) 
using the program in $\S$ \ref{sec:ISMBgeoun}. On noiseless data, hard geometric unmixing recovers the locations of the 
fundamental components at the vertices.  
{\it Right:} However, the  containment simplex is highly sensitive to noise and outliers in the data. 
A single outlier, circled above, radically changes the shape of the containment 
simplex fit (light gray above). In turn, this changes the estimates of basis distributions 
used to unmix the data. We mitigate this short coming by developing a soft geometric unmixing 
model (see $\S$ \ref{sec:ISMBrobun}) that is comparatively robust to noise. 
The soft fit (shown dark gray) is geometrically very close to the generating sources as seen on the left.}
\end{figure*}

The remainder of this Chapter is organized as follows: Section~\ref{sec:ISMBapproach} presents
our proposed method and Section~\ref{sec:ISMBmethods} we validate our method. 
In Section~\ref{sec:ISMBresults} we present a biological analysis of our findings on 
an aCGH data set taken from \cite{Navin09} and show that the method identifies state sets 
corresponding to known sub-types consistent with much of the analysis performed by the authors.

\section{Approach}
\label{sec:ISMBapproach}

The data are assumed to be given as $g$ genes sampled in $s$ tumors or tumor sections. 
The samples are collected in a matrix, $M \in \Re^{g \times s}$, 
in which each row corresponds to an estimate of gene copy number across the sample 
population obtained with aCGH.  The data in $M$ are processed as raw or baseline 
normalized raw input, rather than as $\log$ ratios. The ``unmixing'' model, described below, 
asserts that each sample $m_i$, a column of $M$, be well approximated by a convex combination 
of a fixed set of $C=[c_0|...|c_k]$ of $k+1$ unobserved basis distributions over the gene 
measurements. Further, the observed measurements are assumed to be perturbed by additive 
noise in the $\log$ domain, {\it i.e.}:  

\begin{equation} 
	\nonumber
	m_i = b^{\log_b \left(CF_i\right) + \eta}
\end{equation} 

where $F_i$ is the vector of coefficients for the convex combination of the $(k+1)$ basis 
distributions and $\eta$ is additive zero mode {\it i.i.d.} noise.

\subsection{Algorithms and Assumptions}
Given the data model above, the inference procedure seeks to recover 
the $k+1$ distributions over gene-copy number or expression that ``unmix'' the data. 
The procedure contains three primary stages: 
\begin{enumerate}
	\item Compute a reduced representation $x_i$ for each sample $m_i$,
	\item Estimate the basis distributions $K_{min}$ in the reduced coordinates and the mixture fractions $F$, 
	\item Map the reduced coordinates $K_{min}$ back into the ``gene space'' recovering $C$.
\end{enumerate} 
The second step in the method is performed by optimizing the objective in $\S$ \ref{sec:ISMBgeoun} or the robust problem formulation in $\S$ 
\ref{sec:ISMBrobun}.
\\
\\
\noindent 
{\it Obtaining the reduced representation} \\ 
We begin our calculations by projecting the data into a $k$  dimension vector space (i.e., the intrinsic dimensionality of a $(k+1)-$vertex simplex). We accomplish this using principal components analysis (PCA) \cite[]{Pearson01}, which decomposes the input matrix $M$ into a set of orthogonal basis vectors of maximum variance and retain only the $k$ components of highest variance.  PCA transforms the $g \times s$ measurement matrix $M$ into a linear combination $XV + A$, where $V$ is a matrix of the principal components of $M$, $X$ provides a representation of each input sample as a linear combination of the components of $V$, and $A$ is a $k \times s$ matrix in which each row contains $s$ copies of the mean value of the corresponding row of $M$. The matrix $X$ thus provides a reduced-dimension representation of $M$, and becomes the input to the sample mixture identification method in Stage 2.  $V$ and $A$ are retained to allow us to later contruct estimated aCGH vectors corresponding to the inferred mixture components in the original dimension $g$. 

Assuming the generative model of the data above, PCA typically recovers a sensible reduced representation, as low magnitude log additive noise induces ``shot-noise'' behavior in the subspace containing the simplex with small perturbations in the orthogonal complement subspace. An illustration of this stage of our algorithm can be found in Figure \ref{fig:pca_mot}.
\\
\\
\noindent 
{\it Sample mixture identification} \\
Stage 2 invokes either a hard geometric unmixing method that seeks the minimum volume simplex enclosing the input point set $X$ (Program \ref{prog:geomix1}) or a soft geometric unmixing method that fits a simplex to the points balancing the desire for a compact simplex with that for containment of the input point set (Program \ref{prog:robustmix1}). For this purpose, we place a prior over simplices, preferring those with small volume that fit or enclose the point set of $X$. This prior captures the intuition that the most plausible set of components explaining a given data set are those that can explain as much as possible of the observed data while leaving in the simplex as little empty volume, corresponding to mixtures that could be but are not observed, as possible.

Upon completion, Stage 2 obtains estimates of the vertex locations $K_{min}$, representing the inferred cell types from the aCGH data in reduced coordinates, and a set of mixture fractions describing the amount of each observed tumor sample attributed to each mixture component. The mixture fractions are encoded in a $(k+1) \times s$ matrix $F$, in which each column corresponds to the inferred mixture fractions of one observed tumor sample and each row corresponds to the amount of a single component attributed to all tumor samples. We define $F_{ij}$ to be the fraction of component $i$ assigned to tumor sample $j$ and $F_j$ to be vector of all mixture fractions assigned to a given tumor sample $j$. To ensure that that the observations are modeled as convex combinations of the basis vertices, we require that  $F {\bf 1} = 1$. 
\\
\\
\noindent 
{\it Cell type identification} \\
\noindent
The reduced coordinate components from Stage 2, $K_{min}$, are projected up to  a $g \times (k+1)$ matrix $C$ in which each column corresponds to one of the $k+1$ inferred components and each row corresponds to the approximate copy number of a single gene in a component. We perform this transformation using the matrices $V$ and $A$ produced by PCA in Stage 1 with the formula $C = V^TK_{min} + A$, augmenting the average to $k+1$ columns.

%Once we have transformed the input matrix $M$ into the reduced-dimension matrix $P$ , the core of the algorithm then proceeds to identify mixture components from $P$.  The vertices will represent the $k$ mixture components to be inferred.  

\begin{figure}
\centering
		\includegraphics[width=3.0in]{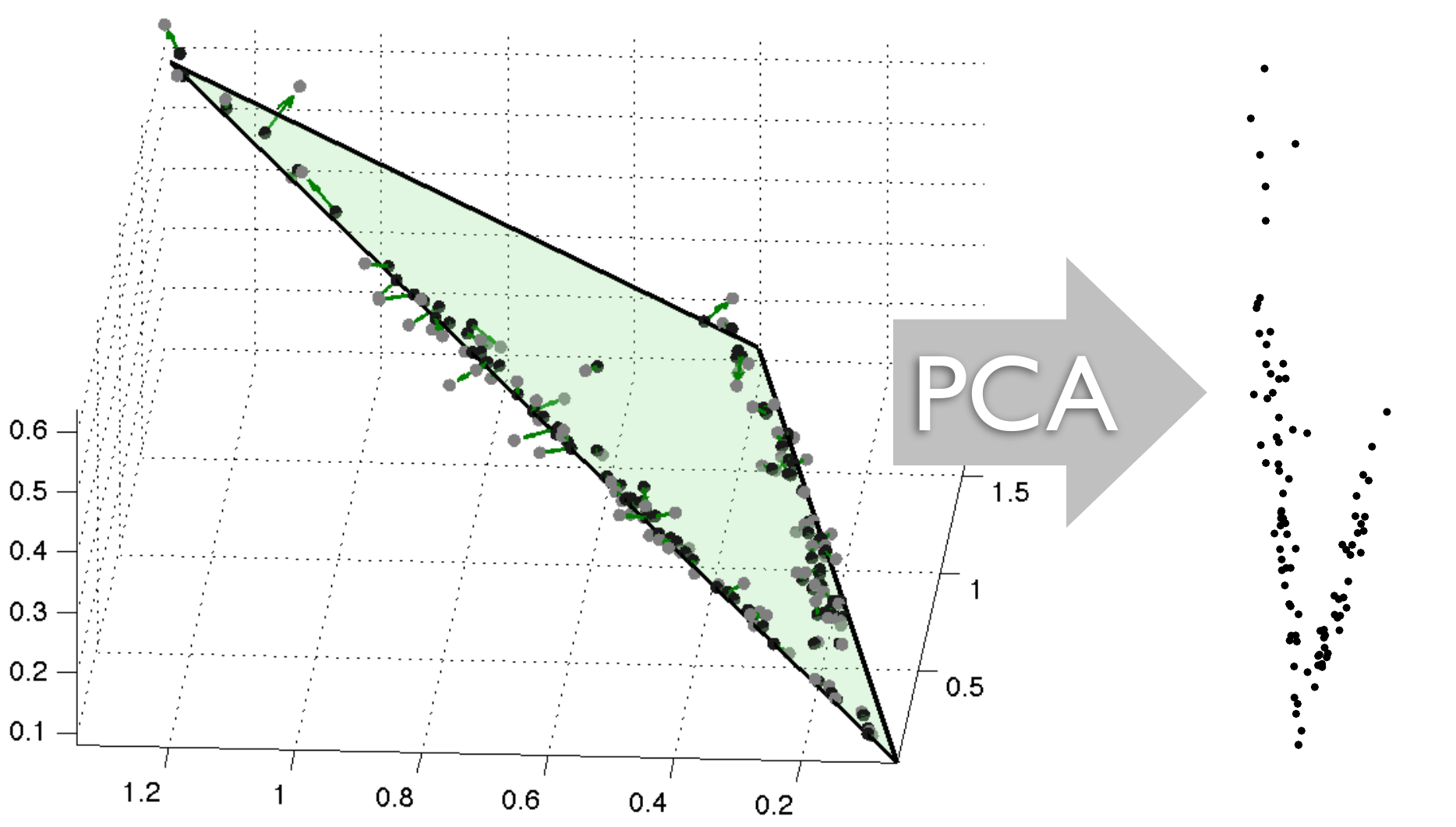}
	\caption{\label{fig:pca_mot} An illustration of the reduced coordinates under the unmixing hypothesis: 
points (show in gray) sampled from the $3-$simplex embedded are $\Re^3$ and then perturbed by log-normal noise, 
producing points shown in black with sample correspondence given the green arrows. 
Note that the dominant subspace remains in the planar variation induced by the simplex, 
and a $2$D reduced representation for simplex fitting is thus sufficient. }
\end{figure}

Finally the complete inference procedure is summarized in the following pseudocode:
\\

\noindent
{\it Given tumor sample matrix $M$, the desired number of mixture components $k$, and the strength of the volume prior $\gamma$:} 
%	\item Define $K_{min}$ to be an arbitrary simplex of infinite volume 
\begin{enumerate} 
	\item Factor the sample matrix $M$ such that $M^T = X V + A$ 
	\item Produce the reduced $k-$dimensional representation by retaining the top $k$ components in $X$ 
	\item Minimize Program \ref{prog:geomix1}, obtaining an estimate of the simplex $K_{min}^{0}$
	\item Minimize Program \ref{prog:robustmix1} starting at $K_{min}^{0}$, obtaining $K_{min}$ and $F$
	\item Obtain the centers $C$ in gene space as $C = A+V^TK_{min}$ 
\end{enumerate}

\subsubsection{Hard Geometric Unmixing}
\label{sec:ISMBgeoun}
Hard geometric unmixing is equivalent to finding a minimum volume $(k+1)-$simplex containing a set of $s$ points $\{X \}$ in $\Re^k$. A non-linear program for hard geometric unmixing can be written as follows: 
\begin{eqnarray} 
	\label{prog:geomix1}
	\underset{K}{min}&:& \log {\tt vol}(K) \\ \nonumber
	\forall i &:& x_i = K F_i \\ \nonumber 
		\forall F_i &:& F_i^T{\bf 1} = 1,~F_i \succeq 0
\end{eqnarray}
where $\log {\tt vol}$ measures the volume of simplex defined by the vertices $K\doteq \left [v_0 | ... | v_k \right ]$ and $F \succeq 0$ requires that $\forall_{ij}. ~F_{ij}\ge 0$. Collectively, the constraints ensure that each point be expressed exactly as a unique convex combination of the vertices.  Exact nonnegative matrix factorization (NNMF), see \cite[]{LeeS99}, can be seen as a relaxation of hard geometric unmixing. Exact NNMF retains the top two constraints while omitting the constraint that the columns of $F$ sum to unity -- thus admitting all positive combinations rather than the restriction to convex combinations as is the case for geometric unmixing. 

Approximate and exponential-time exact minimizers are available for Program \ref{prog:geomix1}, in our experiments we use the approach of \cite{ChanCHM09}, which sacrifices some measure of accuracy for efficiency. 

\subsubsection{Soft Geometric Unmixing} 
\label{sec:ISMBrobun}
Estimates of the target distributions, derived from the fundamental components (simplex vertices), produced by hard geometric unmixing are sensitive to the wide-spectrum noise 
and outliers characteristic of log-additive noise ({\it i.e.}, multiplicative noise in the linear domain). 
The robust formulation below tolerates noise in the sample measurements $m_i$ and subsequently in the reduced 
representations $x_i$, improving the stability of these estimates. The sensitivity of hard geometric unmixing 
is illustrated in Figure 
\ref{fig:unmixingconrob}.  The motivation for soft geometric unmixing is to provide some tolerance to 
experimental error and outliers by relaxing the constraints in Program \ref{prog:geomix1} allowing points to 
lie outside the boundary of the simplex fit to the data.  We extend Program \ref{prog:geomix1} to provide a 
robust formulation as follows:  
\begin{eqnarray} 
	\label{prog:robustmix1} 
	\underset{K}{min}&:& \sum_{i=1}^s \left |x_i - KF_i \right |_p + \gamma \log {\tt vol}(K)\\ \nonumber 
	\forall F_i &:& F_i^T{\bf 1} = 1,~F_i \succeq 0
\end{eqnarray}	
where the term $\left |x_i - KF_i \right |_p$ penalizes the imprecise fit of the simplex to the data and  $\gamma$ 
establishes the strength of the minimum-volume prior. Optimization of Program \ref{prog:robustmix1} is seeded 
with an estimate produced from Program \ref{prog:geomix1} and refined using MATLAB's {\it fminsearch} with analytical 
derivatives for the $\log {\tt vol}$ term and an $LP$-step that determines mixtures components $F_i$  and the distance 
to the boundary for each point outside the simplex.

We observe that when taken as whole, Program~\ref{prog:robustmix1} 
can be interpreted as the negative log likelihood of a Bayesian model of signal formation. In the case of array CGH data, we choose $p=1$ ({\em i.e.}, optimizing relative to an $\ell_1$ norm), as we observe that the errors may be induced by outliers and the $\ell_1$ norm would provide a relatively modest penalty for a few points far from the simplex. From the Bayesian perspective, this is equivalent to relaxing the noise model to assume {\it i.i.d.} heavy-tailed additive noise. To mitigate some of the more pernicious effects of log-normal noise, we also apply a total variation-like smoother to aCGH data in our experiments. Additionally, the method can be readily extended to weighted norms if an explicit outlier model is available. 

\subsubsection{Analysis \& Efficiency}
\label{sec:ISMBeffcomp}
The hard geometric unmixing problem in $\S$ \ref{sec:ISMBgeoun} is a non-convex objective in the present parameterization, and was shown by \cite[]{Packer02} to be NP-hard when $k+1 \ge \log(s)$. For the special case of minimum volume tetrahedra ($k=3$), \cite[]{ZhouS00} demonstrated an exact algorithm with time complexity $\Theta(s^4)$ and a $(1+\epsilon)$ approximate method with complexity $O(s + 1 / \epsilon^6)$. 
Below, we examine the present definition and show that Programs \ref{prog:geomix1} and \ref{prog:robustmix1} have structural properties that may exploited to construct efficient gradient based methods that seek local minima. Such gradient methods can be applied in lieu of or after heuristic or approximate 
combinatorial methods for minimizing Program \ref{prog:geomix1}, such as \cite[]{EhrlichF87,ChanCHM09} or the $(1+\epsilon)$ method of \cite[]{ZhouS00} for simplexes in $\Re^3$. 

We begin by studying the volume penalization term as it appears in both procedures. The volume of a convex body is well known (see \cite[]{Boyd04}) to be a log concave function. In the case of a simplex, analytic partial derivatives with respect to vertex position can used to speed the estimation of the minimum volume configuration $K_{min}$. The volume of a simplex, represented by the vertex matrix $K=\left [v_0|...|v_k \right ]$, can be calculated as: 
\begin{equation} 
	{\tt vol} (K) = c_k \cdot {\tt det}  \left( \Gamma^T K K^T \Gamma  \right)^{1/2} = c_k \cdot {\tt det}~Q
\end{equation} 
where $c_k$ is the volume of the unit simplex defined on $k+1$ points and $\Gamma$ is a fixed vertex-edge incidence matrix such that $\Gamma^T K = \left [v_1-v_0|...|v_k-v_0 \right ]$. The matrix $Q$ is an inner product matrix over the vectors from the special vertex $v_0$ to each of the remaining $k$ vertices. In the case where the simplex $K$ is non-degenerate, these vectors form a linearly independent set and $Q$ is positive definite (PD). While the determinant is log concave over PD matrices, our parameterization is linear over the matrices $K$, not $Q$.  Thus it is possible to generate a degenerate simplex when interpolating between two non-degenerate simplexes $K$ and $K'$. 
For example, let $K$ define a triangle with two vertices on the $y-$axis and produce a new simplex $K'$ by reflecting the triangle $K$ across the $y-$axis. The curve $K(\alpha) = \alpha K+ (1-\alpha)K'$ linearly interpolates between the two.  Clearly, when $\alpha=1/2$, all three vertices of $K(\alpha)$ are co-linear and thus the matrix $Q$ is not full rank and the determinant vanishes. However, in the case of small perturbations, we can expect the simplexes to remain non-degenerate. 

To derive the partial derivative, we begin by substituting the determinant formulation into our volume penalization and arrive at the following calculation: 
\begin{eqnarray}
\nonumber
\log {\tt vol}(K) &=&\log c_k + \frac{1}{2} \log {\tt det} Q \\ \nonumber 
& \propto & \log \prod_{d=1}^k \lambda_d(Q) = \sum_{d=1}^k \lambda_d(Q)
\end{eqnarray}
therefore the gradient of $\log {\tt vol}(K)$ is given by 
\begin{eqnarray} 
	\nonumber
\frac{\partial \log {\tt vol(K)}}{\partial K_{ij}} &=& \sum_{d=1}^k \frac{\partial }{\partial K_{ij}} \lambda_d = \sum_{d=1}^k z_d^T(\Gamma^T E_{ij} E_{ij}^T \Gamma) z_d 
\end{eqnarray}
where the eigenvector $z_d$ satisfies the equality $Q z_d = \lambda_d z_d$ and $E_{ij}$ is the indicator matrix for the entry $ij$. To minimize the volume, we move the vertices along the paths specified by the negative $\log$ gradient of the current simplex volume. The Hessian is derived by an analogous computation, making Newton's method for Program \ref{prog:geomix1}, with log barriers over the equality and inequality constraints, a possible optimization strategy. 

Soft geometric unmixing (Program~\ref{prog:robustmix1} ) trades the equality constraints in Program~\ref{prog:geomix1} for a convex, but  non-differentiable term, in the objective function $\sum_{i=1}^s\left | x_i - K F_i \right|_p$ for $p|1 \le p \le 2$. Intuitively -- points inside the simplex have no impact on the cost of the fit. However, over the course of the optimization, as the shape of the simplex changes points move from the interior to the exterior, at which time they incur a cost. To determine this cost, we solve the nonnegative least squares problem for each mixture fraction $F_i$, $min_F~:~(KF_i - x_i)^T(KF_i-x_i)$. This step simultaneously solves for the mixture fraction, and for exterior points, the distance to the simplex is determined. The simplex is then shifted under a standard shrinkage method based on these distances.

\section{Experimental Methods}
\label{sec:ISMBmethods}

We evaluated our methods using synthetic experiments, allowing us to assess two properties of robust unmixing 
1) the fidelity with which endmembers (sub-types) are identified and 2) the relative effect of noise on 
hard versus robust unmixing.  We then evaluate the robust method on a real world aCGH data set published 
by \cite{Navin09} in which ground truth is not available, but for which we uncover much the structure reported by the authors. 

\subsection{Methods: Synthetic Experiments}
To test the algorithms given in $\S$\ref{sec:ISMBapproach} we simulated data using a 
biologically plausible model of ad-mixtures. Simulated data provides a quantitative 
means of evaluation as ground truth is available for both the components $C$ and the 
mixture fractions $F_i$ associated with each measurement in the synthetic design matrix 
$M$. The tests evaluate and compare hard geometric unmixing $\S$\ref{sec:ISMBgeoun} and soft 
geometric unmixing $\S$\ref{sec:ISMBrobun} in the presence of varying levels of log-additive 
Gaussian noise and varying $k$. By applying additive Gaussian noise in the log domain 
we simulate the heteroscedasticity characteristic of CGH measurements ({\it i.e.} 
higher variance with larger magnitude measurements). By varying $k$, the dimensionality 
of the simplex used to fit the data, we assess the algorithmic sensitivity to this 
parameter as well as that to $\gamma$ governing the strength of the volume prior in 
Program 2. The sample generation process consists of three major steps: 1) mixture 
fraction generation (determining the ratio of sub-types present in a sample), 2) 
end-member ({\it i.e.} sub-type) generation and 3) the sample perturbation by additive noise in the log-ratio domain. 

\subsubsection{Mixture Sampler}
\label{sec:ISMBmix_sampler}
Samples over mixture fractions were generated in a manner analogous to Polya's Urn Process, 
in which previously sampled simplicial components ({\it e.g.}, line segments, triangles, 
tetrahedra) are more likely to be sampled again. This sampling mechanism produces data 
distributions that are similar to those we see in low dimensional projections of aCGH data 
when compared against purely uniform samples over mixtures. An example of a low dimensional 
sample set and the simplex that was used to generate the points is shown in Figure \ref{fig:sample_ex}.

To generate the mixture fractions $F_i$ for the $i^{th}$ sample, the individual 
components in $C^{true}$ are sampled without replacement from a dynamic tree model. 
Each node in the tree contains a dynamic distribution over the remaining components, 
each of which is initialized to the uniform distribution. We then sample $s$ mixtures 
by choosing an initial component according to the root's component distribution and proceed 
down the tree. As a tree-node is reached, its component distribution is updated to reflect 
the frequency with which its children are drawn. To generate the $i^{th}$ sample, 
the fractional values $F_i$ are initialized to zero. As sample generation proceeds, 
the currently selected component $C_j$ updates the mixture as $F{ij} \sim {\tt uniform}[(1/2) f^j_p,1]$ 
where $f_p^j$ is the frequency of $j$'s parent node. For the $i^{th}$ mixture, this process terminates 
when the condition $1 \le \sum_{j=1}^{k+1}F_{ij}$ holds. Therefore, samples generated by long paths 
in the tree will tend to be homogenous combinations of the components $C^{true}$, where as 
short paths will produce lower dimensional substructures. At the end of the process, 
the matrix of fractions $F$ is re-normalized so that the mixtures associated with each 
sample sum to unity.  This defines a mixture $F_{i}^{true}$ for each sample -- {\it i.e.} 
the convex combination over fundamental components generating the sample point. 

\subsubsection{Geometric Sampling of End-members \& Noise}
\label{sec:ISMBgsamp}

\begin{figure}
\centering
	\includegraphics[width=2in]{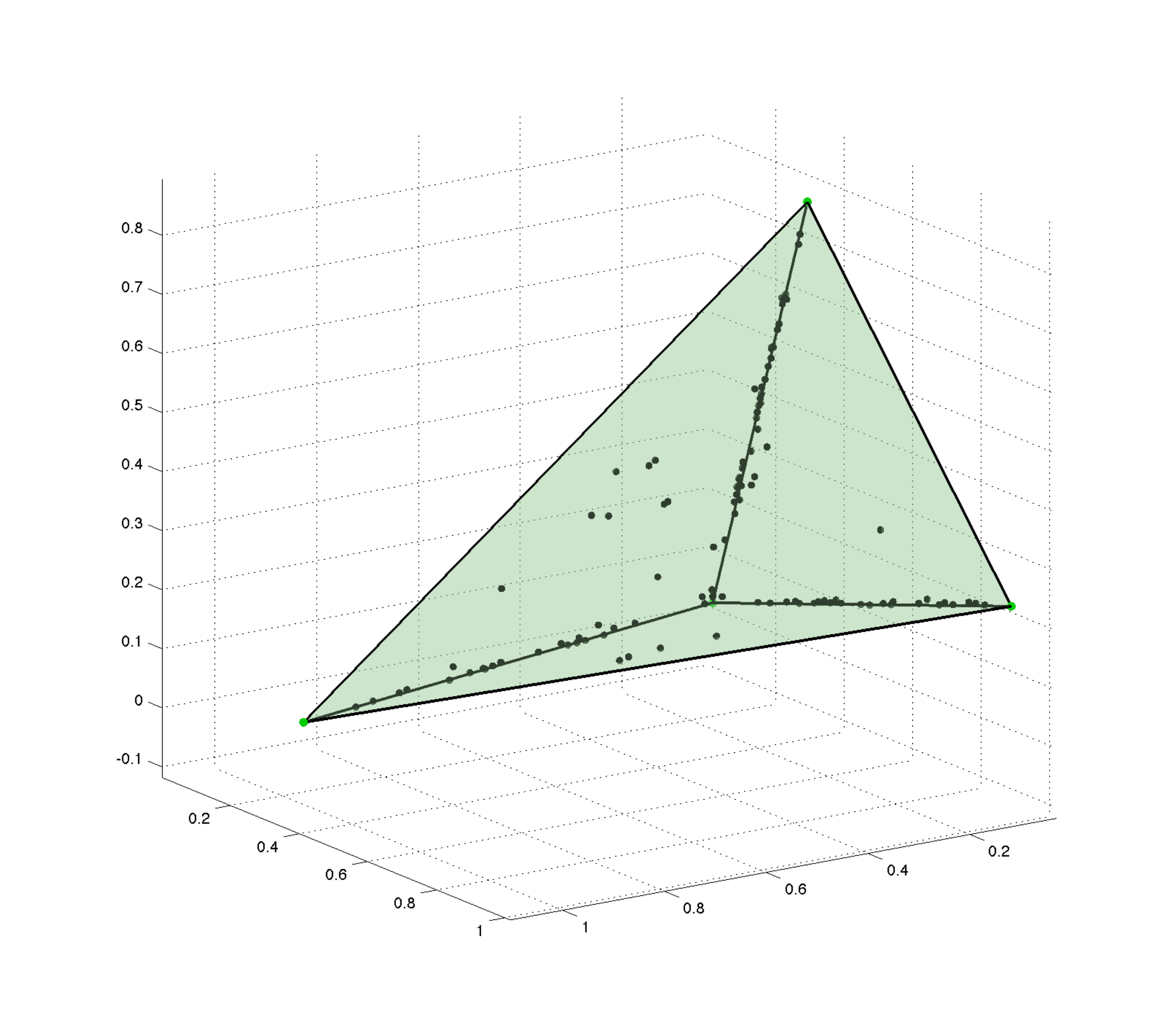}
	\caption{\label{fig:sample_ex} An example sample set generated for $\S$\ref{sec:ISMBgsamp} shown in 
         the ``intrinsic dimensions'' of the model. Note that sample points cleave to the lower 
         dimensional substructure (edges) of the simplex.}
\end{figure}

To determine the location of the end-members we specify an extrinsic dimension 
(number of genes) $g$, and an intrinsic dimension $k$ (requiring $k+1$ components). 
We then simulate $k+1$ components by constructing a $g \times (k+1)$ matrix $C^{true}$ 
of fundamental components in which each column is an end-member ({\it i.e.} sub-type) 
and each row is the copy number of one hypothetical gene, sampled from the unit Gaussian 
distribution and rounded to the nearest integer. Samples $m_i$, corresponding to the columns 
of the data matrix $M$, are then given by: 

\begin{equation}
	m_{i} = 2^{log_2\left(C^{true}F_i^{true} \right ) + \frac{1}{2}\sigma \eta}
\end{equation}
where $\eta \sim {\tt normal}(0,1)$ and the mixture fractions $F_i^{true}$ were obtained as in $\S$\ref{sec:ISMBmix_sampler}.

\begin{figure*}
\centering
	\includegraphics[height=2in, width=3in]{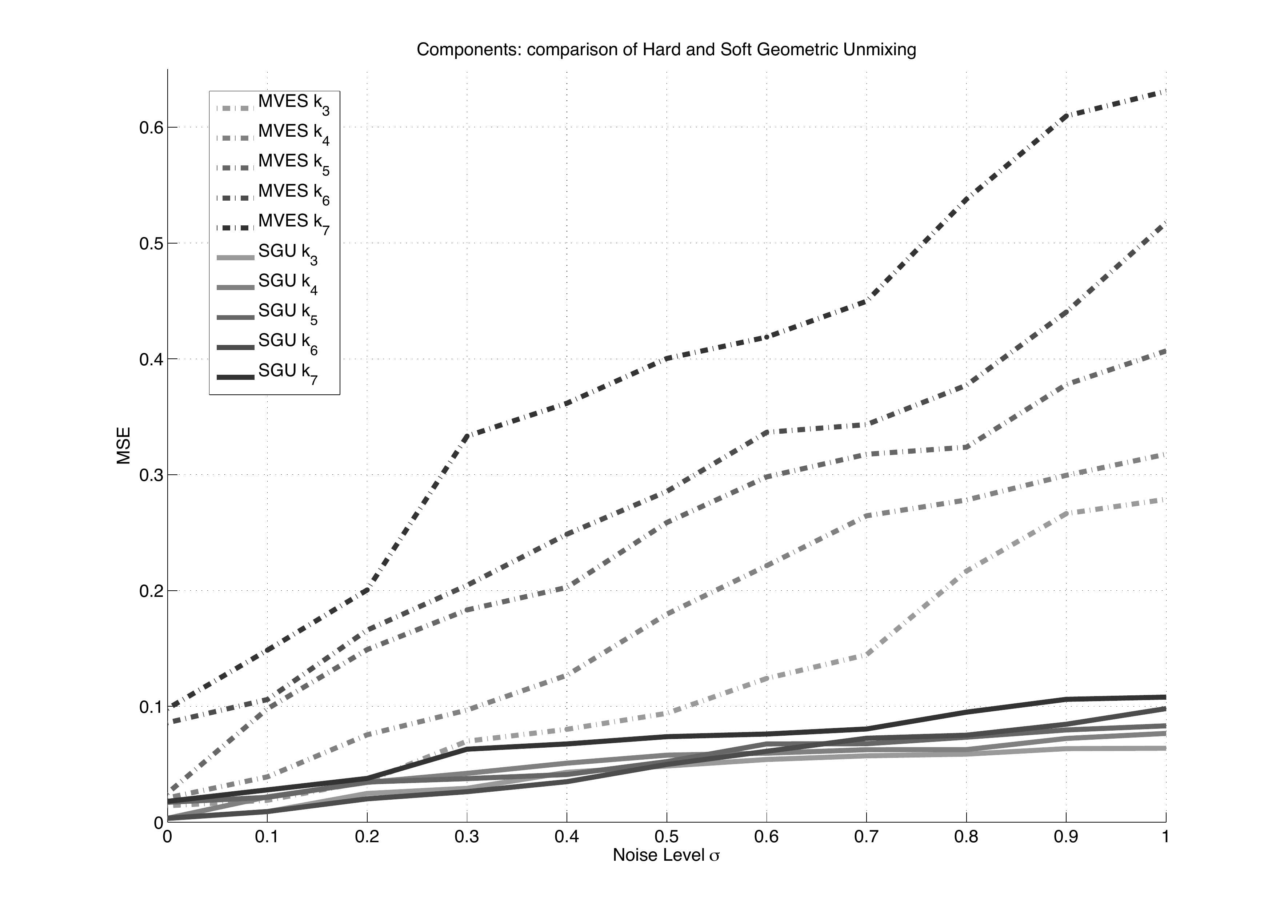}
	\includegraphics[height=2.05in,width=3in]{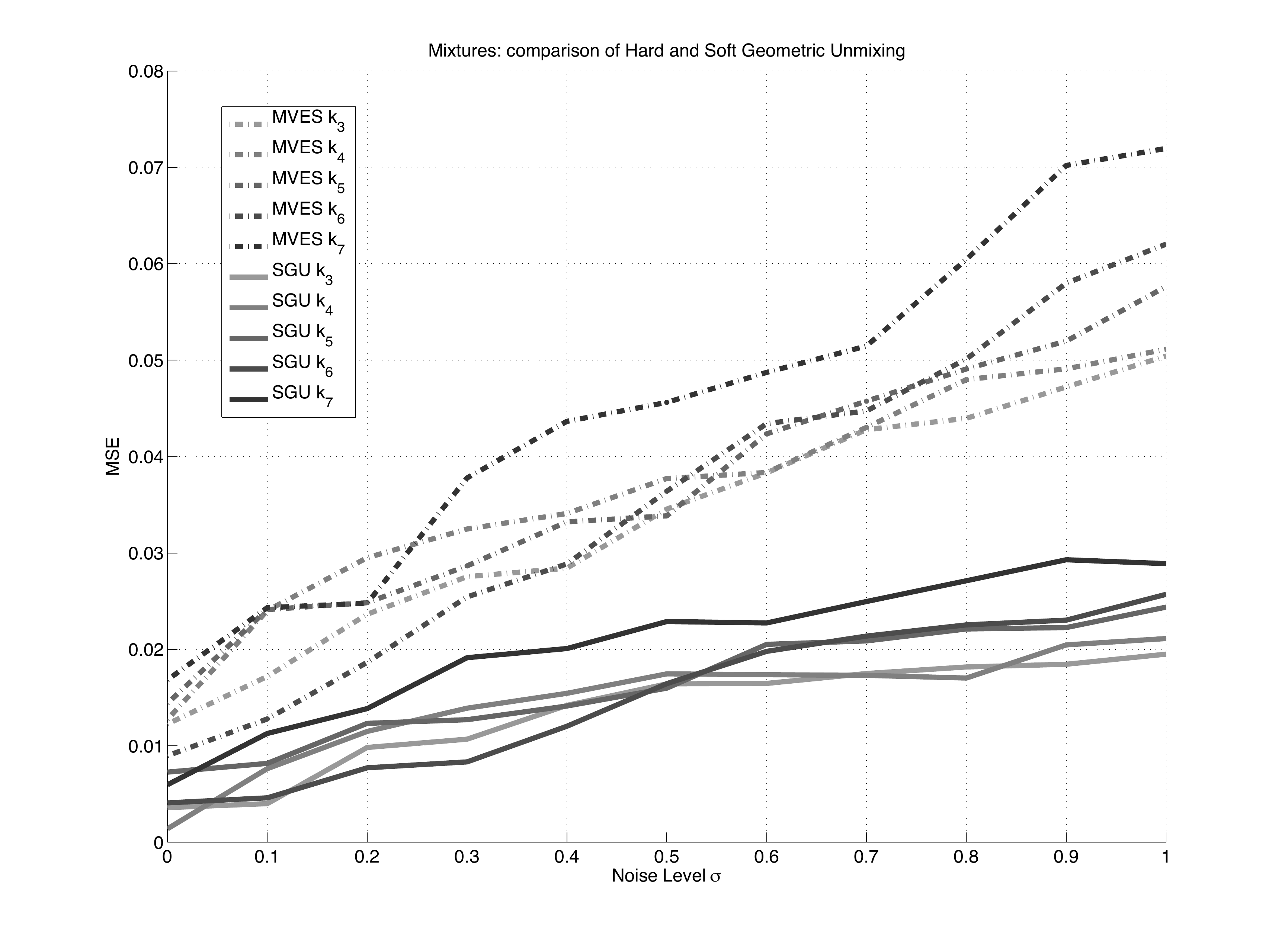}
\caption{\label{fig:comp_mse} Left: mean squared error for the component reconstruction comparing 
Hard Geometric Unmixing (MVES: \cite{ChanCHM09}) and Soft Geometric Unmixing (SGU) introduced 
in $\S$\ref{sec:ISMBrobun} for the experiment described in $\S$\ref{sec:ISMBgsamp} 
with variable $\gamma$. The plot demonstrates that robust unmixing more accurately 
reconstructs the ground truth centers relative to hard unmixing in the presence of 
noise. Right: mean squared error for mixture reconstruction comparing MVES and SGU.}
\end{figure*}

%\subsubsection{Synthetic Pure Cells} 
%\input{sepi}

\subsubsection{Evaluation} 
We follow Schwartz and Shackney~\cite{SchwartzS10} in assessing the quality of 
the unmixing methods by independently measuring the accuracy of inferring the components and the mixture fractions.  
We first match inferred mixture components to true mixture components by performing a maximum weighted bipartite 
matching of columns between $C^{true}$ and the inferred components $C^e$, weighted by negative Euclidean distance. 
We will now assume that the estimates have been permuted according to this matching and continue. We then assess 
the quality of the mixture component identification by the root mean square distance over all entries of all 
components between the matched columns of the two C matrices:
\begin{equation}
	{\tt error} = \frac{1}{g(k+1)}\left | | C^{true}-C^{e} | \right |_F^2
\end{equation}
where $|| A ||_F=\sqrt{\sum_{ij}a^2_{ij}}$ denotes the Frobenius norm of the matrix $A$. 

We similarly assess the quality of the mixture fractions by the root mean square distance 
between $F^{true}$ and the inferred fractions $F^e$ over all genes and samples:
\begin{equation}
	{\tt error} = \frac{1}{g(k+1)} \left | | F^{true}-F^e |\right|^2_F. 
\end{equation}
This process was performed for $s = 100$ and $d = 10000$ to approximate a realistic tumor 
expression data set and evaluated for $k = 3$ to $k = 7$ and for $\sigma = \{0, 0.1, 0.2, . . . , 1.0\}$, with 
ten repetitions per parameter.

\section{Results}
\label{sec:ISMBresults}
\subsection{Results: Synthetic Data}
\label{sec:ISMBsynd}

The results for the synthetic experiment are summarized in Figure \ref{fig:comp_mse}. 
The figure shows the trends in MSE for hard geometric unmixing $\S$\ref{sec:ISMBgeoun} 
and soft geometric unmixing $\S$\ref{sec:ISMBrobun} on the synthetic data described above.  
As hard geometric unmixing requires that each sample lie inside the fit simplex, 
as noise levels increase (larger $\sigma$), the fit becomes increasingly inaccurate. 
Further, the method MVES deteriorates to some degree as order $k$ of the simplex increases. 
However, soft geometric unmixing degrades more gracefully in the presence of noise 
if an estimate of the noise level is available with $\pm 0.1$ in our current model. 
The trend of soft unmixing exhibiting lower error and better scaling in $k$ than hard 
unmixing holds for both components and mixture fractions, although components exhibit 
a higher average degree of variability due to the scale of the synthetic measurements 
when compared to the mixture fractions.

\subsection{Array Comparative Gene Hybridization (aCGH) Data }
\label{sec:ISMBreald}
We further illustrate the performance of our methods on a publicly available primary 
Ductal Breast Cancer aCGH Dataset furnished with ~\cite{Navin09}. This dataset 
is of interest in that each tumor sample has been sectored multiple times during 
biopsy which is ideal for understanding the substructure of the tumor population.  
The data consists of 87 aCGH profiles from 14 tumors run on a high-density 
ROMA platform with $83055$ probes.  Profiles are derived from 4-6 sectors per 
tumor, with samples for tumors 5-14 sub-partitioned by cell sorting according 
to total DNA content, and with healthy control samples for tumors 6, 9, 12, and 13.  
For full details, the reader is referred to Navin et al.~\cite{Navin09}.  The processed 
data consists of $\log_{10}$ ratios and which were exponentiated prior to the PCA step 
(Stage 1) of the method.

\begin{figure}[t]
\centering 
		\includegraphics[width=3in]{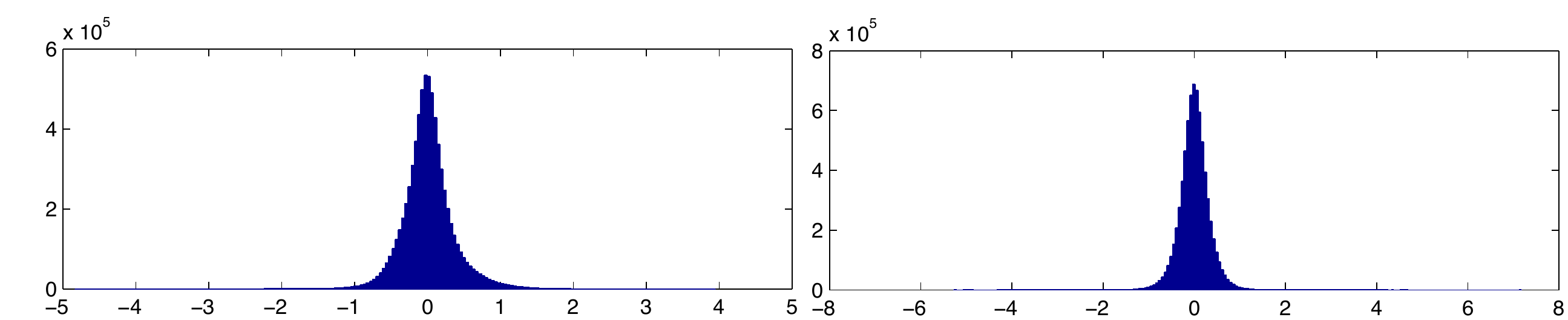}
	\caption{\label{fig:hist_cgh} Empirical motivation for the $\ell_1-\ell_1-$total variation 
       functional for smoothing CGH data. The left plot shows the histogram of values 
         found in the CGH data obtained from the \cite{Navin09} data set. The distribution is well fit 
       by the high kurtosis Laplacian distribution in lieu of a Gaussian. The right plot shows the distribution of 
       differences along the probe array values. As with the values distribution, these frequencies exhibit high kurtosis.}
\end{figure}

\begin{figure}[t]
	\begin{center}
		\includegraphics[width=1.5in]{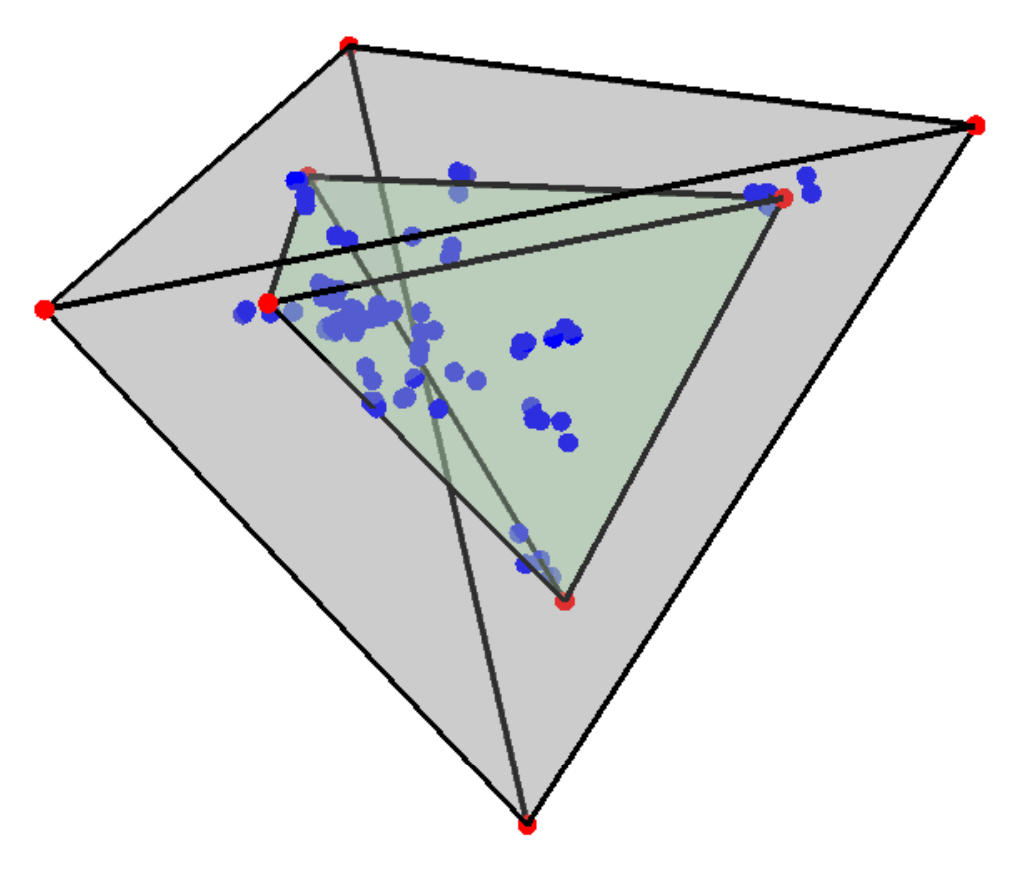}
	\end{center}
	\caption{\label{fig:S4cvr} The simplex fit to the CGH data samples from \cite{Navin09} ductal data 
        set in $\Re^3$. The gray tetrahedron was return by the optimization of Program \ref{prog:geomix1} and 
        the green tetrahedron was returned by the robust unmixing routine.}
\end{figure}

\subsubsection{Preprocessing}
To mitigate the effects of sensor noise on the geometric inference problem we apply a total variation (TV) functional to the raw log-domain data. The $\ell_1-\ell_1-$TV  minimization  is equivalent to a penalized projection onto the over-complete Harr basis preserving a larger degree of the signal variation when compared to discretization methods ({\it e.g} \cite{OlshenVLW06,GuhaLN06}) that employ aggressive priors over the data distribution. The procedure seeks a smooth instance $x$ of the observed signal $s$ by optimizing the following functional: 
\begin{equation} 
	\label{fun:tv}
	\underset{x}{min}~:~ \sum_{i=1}^g \left | x_i - s_i \right |_1 + \lambda \sum_{i=1}^{g-1} \left | x_i - x_{i+1} \right |_1
\end{equation}
The functional \ref{fun:tv} is convex and can be solved readily using Newton's method with log-barrier functions (\cite{Boyd04}). 
The solution $x$ can be taken as the maximum likelihood estimate of a Bayesian model of CGH data formation. 
That is, the above is the negative log-likelihood of a simple Bayesian model of signal formation. 
The measurements $\hat{x}_i$ are assumed to be perturbed by the {\it i.i.d.} Laplacian noise and the 
changes along the probe array are assumed to be sparse. Recall that the Laplacian distribution is 
defined by the probability density function  $\Prob{x} = \frac{1}{z} \exp \Big( \frac{-\left | x \right | }{a} \Big)$. In all experiments the 
strength of the prior $\lambda$ was set to $\lambda=10$. The data fit this model well as illustrated in Figure \ref{fig:hist_cgh}. The dimension of the reduced representation $k$, fixing the number of fundamental components, was determined using the eigengap heuristic during the PCA computation (Stage 1). This rule ceases computing additional principal components when the difference in variances jumps above threshold.

\subsubsection{Unmixing Analysis and Validation}
The raw data was preprocessed as described above and a simplex was fit to the reduced coordinate representation using the soft geometric unmixing method (see  
$\S$\ref{sec:ISMBrobun}). A three dimensional visualization of the resulting fit is shown for the \cite{Navin09} data set in Figure \ref{fig:S4cvr}. To assess the performance with increasing dimensionality, we ran experiments for polytope dimensionality $k$ ranging from $3$ to $9$.Following the eigen-gap heuristic we chose to analyze the results for $k = 6$. The $\gamma$ value was picked according to the estimated noise level in the aCGH dataset and scaled relative to the unit simplex volume (here, $\gamma = 100$). The estimated $6$ components/simplex vertices/pure cancer types are labeled $C_1,C_2,..., C_6$.

Figure \ref{fig:frac} shows mixture fraction assignments for the aCGH data for $k=6$.  While there is typically a non-zero amount of each component in each sample due to imprecision in assignments, the results nonetheless show distinct subsets of tumors favoring different mixture compositions and with tumor cells clearly differentiated from healthy control samples.  The relative consistency within versus between tumors provides a secondary validation that soft unmixing is effective at robustly assigning mixture fractions to tumor samples despite noise inherent to the assay and that produced by subsampling cell populations. It is also consistent with observations of Navin {\em et al.} 

It is not possible to know with certainty the true cell components or mixture fractions of the real data, but we can validate the biological plausibility of our results by examining known sites of amplification in the inferred components.
We selected fourteen benchmark loci frequently amplified in breast cancers through manual literature search.  Table~\ref{Tab:01} lists the
chosen benchmarks and the components exhibiting at least 2-fold amplification of
each.  Figure~\ref{fig:infcomps} visualizes the results, plotting relative amplification of each component as a function of genomic coordinate and highlighting the locations of the benchmark markers.  Thirteen of the fourteen benchmark loci exhibit amplification for a subset of the components, although often at minimal levels.
The components also show amplification of many other sites not in our benchmark set, but we cannot definitively determine which are true sites of amplification and which are false positives.  We further tested for amplification of seven loci reported as amplified by Navin et al.~\cite[]{Navin09} specifically in the tumors examined here and found that six of the seven are specifically amplified in one of our inferred components: PPP1R12A ($C_2$), KRAS ($C_2$), CDC6 ($C_2$), RARA ($C_2$), EFNA5 ($C_2$), PTPN1 ($C_3$), and LPXN (not detected).  Our method did not infer a component corresponding to normal diploid cells as one might expect due to stromal contamination.  This failure may reflect a bias introduced by the dataset, in which many samples were cell sorted to specifically select aneuploid cell fractions, or could reflect an inherent bias of the method towards more distinct components, which would tend to favor components with large amplifications.

We repeated these analyses for the hard unmixing with a higher amplification threshold due to the noise levels in the centers.  It detected amplification at 11 of the 14 loci, with spurious inferences of deletion at four of the 11.  For the seven sites reported in Navin {\em et al.}, hard unmixing identified five (failing to identify EFNA5 or LPXN) and again made spurious inferences of deletions for three of these sites, an artifact the soft unmixing eliminates. 
The results suggest that hard unmixing produces less precise fits of simplexes to the true data.

We can also provide a secondary analysis based on Navin {\em et al.}'s central result that the tumors can be partitioned into monogenomic (those appearing to show essentially a single genotype) and polygenomic (those that appear to contain multiple tumor subpopulations). We test for monogeniety in mixture fractions by finding the minimum correlation coefficient between mixture fractions of consecutive tumor sectors (ignoring normal controls) maximized over all permutations of the sectors.  Those tumors with correlations above the mean over all tumors (0.69) were considered monogenomic and the remainder polygenomic. Navin {\em et al.} assign $\{1,2,6,7,9,11\}$ as monogenomic and $\{3,4,5,8,10,12,13,14\}$ and polygenomic. Our tests classify $\{1,2,5,6,7,8,11\}$ as monogenomic and $\{3,4,10,12,13,14\}$ as polygenomic, disagreeing only in tumors 5 and 8.  Our methods are thus effective at identifying true intratumor heterogeneity in almost all cases without introducing spurious heterogeneity.  By contrast, hard unmixing identifies only tumors 5 and 8 as polygenomic, generally 
obscuring true heterogeneity in the tumors.

\begin{figure}
\begin{center}
\includegraphics[width=0.6\textwidth]{./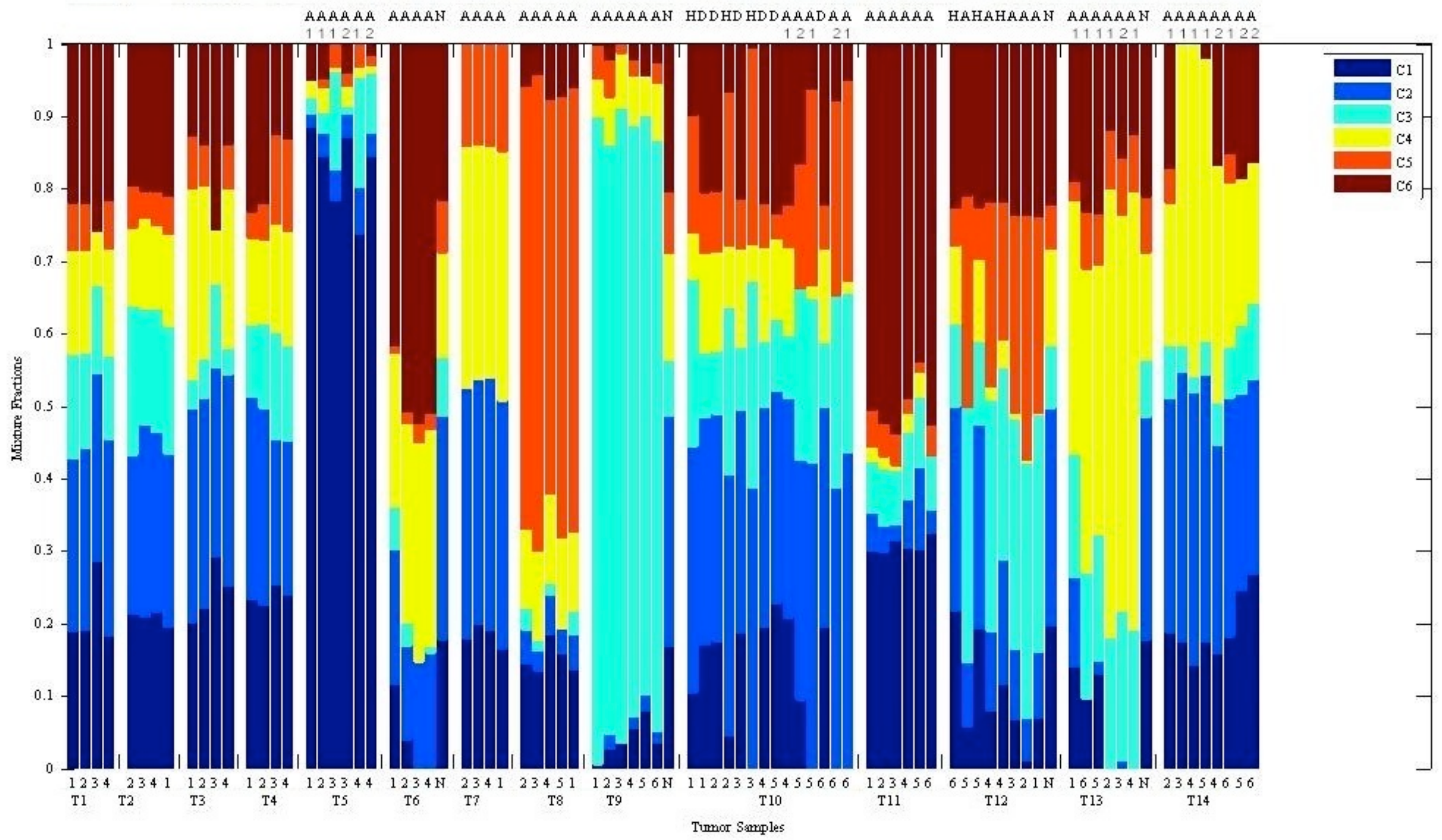}
\end{center}
	\caption{\label{fig:frac} Inferred mixture fractions for six-component soft geometric unmixing 
applied to breast cancer aCGH data.  Data is grouped by tumor, with multiple sectors per tumor placed side-by-side. 
Columns are annotated below by sector or N for normal control and above by cell sorting fraction (D for diploid, 
H for hypodiploid, A for aneuploid, and A1/A2 for subsets of aneupoloid) where cell sorting was used.}
\end{figure}

\begin{table}
\centering 
\begin{tabular}{|c|c|c|} \hline 
Marker         &  Locus         & Component        \\ \hline 
MUC1	&  1q21         & C1,C4 \\	
PIK3CA  &  3q26.3	& C3,C6 \\	
ESR1    &  6q25.1	& C4      \\
EGFR	& 7p12	& C5    \\
c-MYC&	8q24	& C1,C3,C5  \\
PTEN  &	10p23	& none   \\
PGR & 14q23.2   & C6    \\	
CCND1 &	11q13	& C4 	\\
BRCA2 &	13q12.3	& C5 \\ 
ESR2  & 14q23   & C1  \\
BRCA1, &	17q21 & C5,C6 \\ 
ERBB2  & & \\ 
 STAT5A, & 17q11.2 & C5\\
STAT5B & & \\ 
GRB7 & 17q12 & C6 \\ 
CEA & 19q13.2 & C6 \\  \hline
\end{tabular}
\caption{\label{Tab:01} Benchmark set of breast cancer markers selected for validation of real data, 
annotated by gene name, genomic locus, and the set of components exhibiting 
amplification at the given marker.}
\end{table}

\begin{figure}
\begin{center}
\includegraphics[width=3.4in, height=3.5in]{./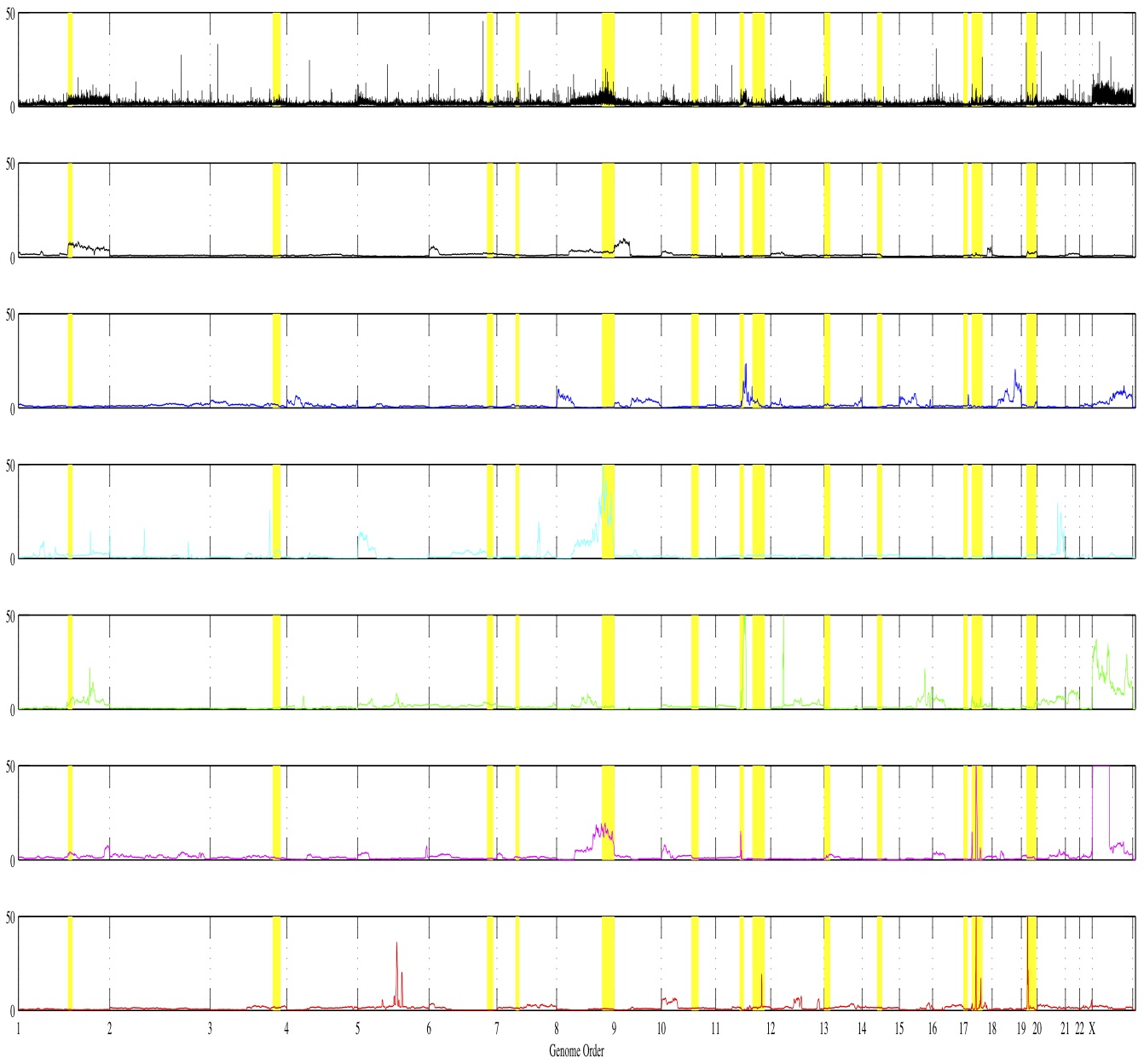}
\end{center}
	\caption{\label{fig:infcomps} Copy numbers of inferred components versus genomic position. 
The average of all input arrays (top) is shown for comparison, with the six components below. 
Benchmarks loci are indicated by yellow vertical bars.}
\end{figure}

Our long-term goal in this work is not just to identify sub-types, but
to describe the evolutionary relationships among them.  We have no
empirical basis for validating any such predictions at the moment but
nonetheless consider the problem informally here for illustrative
purposes.  To explore the question of possible ancestral relationships
among components, we manually examined the most pronounced regions of
shared gain across component.  Figure~\ref{fig:comp_phylo} shows a
condensed view of the six components highlighting several regions of
shared amplification between components.  The left half of the image
shows components 3, 5, and 1, revealing a region of shared gain across
all three components at 9p21 (labeled B).  Components 5 and 1 share an
additional amplification at 1q21 (labeled A).  Components 1 and 5 have
distinct but nearby amplifications on chromosome 17, with component 1
exhibiting amplification at 17q12 (labeled D) and component 5 at 17q21
(labeled C).  We can interpret these images to suggest a possible
evolutionary scenario: component 3 initially acquires an amplification
at 9p21 (the locus of the gene CDKN2B/p15INK4b), an unobserved
descendent of component 3 acquires secondary amplification at 1q21
(the locus of MUC1), and this descendent then diverges into components
1 and 5 through acquisition of independent abnormalities at 17q12
(site of PGAP3) or 17q21 (site of HER2).  The right side of the figure
similarly shows some sharing of sites of amplification between
components 2, 4, and 6, although the amplified regions do not lead to
so simple an evolutionary interpretation.  The figure is consistent
with the notion that component 2 is ancestral to 4, with component 2
acquiring a mutation at 5q21 (site of APC/MCC) and component 4
inheriting that mutation but adding an additional one at 17q21.  We
would then infer that the amplification at the HER2 locus arose
independently in component 6, as well as in component 5.  The figure
thus suggests the possibility that the HER2-amplifying breast cancer
sub-type may arise from multiple distinct ancestral backgrounds in
different tumors.  While we cannot evaluate the accuracy of these
evolutionary scenarios, they provide an illustration of
how the output of this method is intended to be used to make
inferences of evolutionary pathways of tumor states.

\begin{figure*}
	\begin{center}
	\includegraphics[width=6in,height=2.1in]{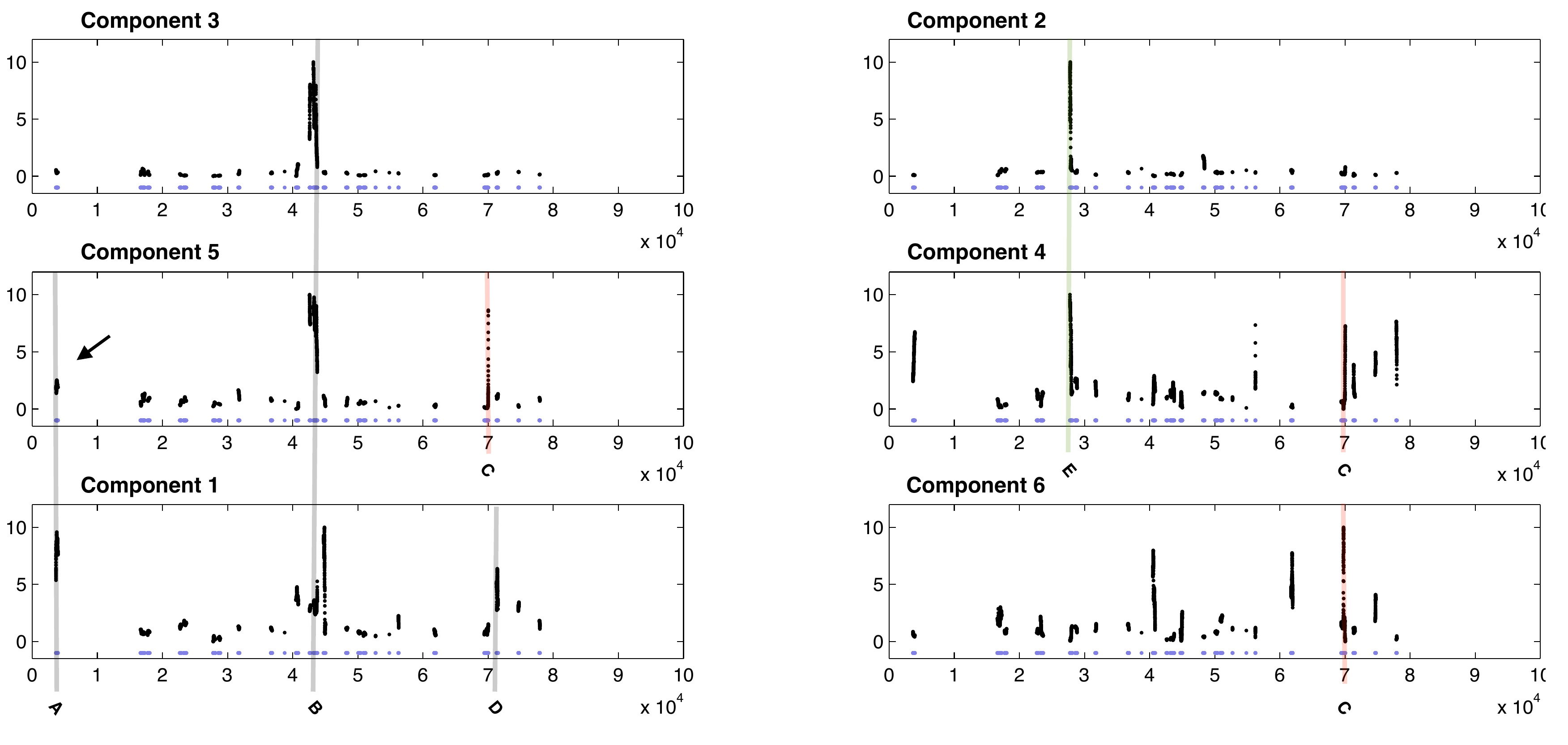}
\end{center}
	\caption{\label{fig:comp_phylo} Plot of amplification per probe highlighting regions of shared amplification across components.  
The lower (blue) dots mark the location of the collected cancer benchmarks set. Bars highlight specific markers of high shared amplification 
for discussion in the text.  {\it Above:} {\bf A}: 1q21 (site of MUC1), {\bf B}: 9p21 (site of CDKN2B), 
{\bf C}: 7q21 (site of HER2), {\bf D}: 17q12 (site of PGAP3), {\bf E}: 5q21 (site of APC/MCC).}
\end{figure*}

\chapter{Perfect Reconstruction of Oncogenetic Trees }
\label{oncotreeschapter}
\lhead{\emph{Perfect Reconstruction of Oncogenetic Trees}} 
\section{Introduction} 
\label{sec:oncotreesintro}

In this Chapter we answer a fundamental question regarding oncogenetic trees, see Section~\ref{subsec:oncotreeshere}. 
Before we state the question, we introduce some notation first. 
Let $T=(V,E,r)$ be a rooted tree on $V$, i.e., every vertex has in-degree at most one 
and there are no cycles, and let $r \in V$ be the root of $T$.
Given a finite family $\mathcal{F}=\{A_1,...A_q\}$ of sets of vertices, i.e., $A_i \subseteq V(T)$ for $i=1,\ldots,q$,
where each $A_i$ is the vertex set of a $r$-rooted sub-tree of $T$, what are the necessary and sufficient conditions, if any,  
which allow us to uniquely reconstruct  $T$? 
In this work we treat this natural combinatorial question, namely:

\begin{center}
{\em ``When can we reconstruct an oncogenetic tree $T$ from a set family $\mathcal{F}$?''}
\end{center}

Despite the fact that in practice aCGH data tend to be noisy and consistent with more than one oncogenetic trees, 
the question is nonetheless interesting and to the best of our knowledge 
remains open so far \cite{papadimitriou,schaeffer}.
Theorem~\ref{thrm:oncotreesthrm} provides the necessary and sufficient conditions to uniquely reconstruct 
an oncogenetic tree. 
We write $u \prec  v$ ($u \nprec v$) to denote that $u$ is (not) a descendant of $v$ in $T$.

\begin{theorem}
\label{thrm:oncotreesthrm}
Let $T$ be an oncogenetic tree and $\mathcal{F}=\{A_1,...A_q\}$ be a finite family of sets of vertices, 
i.e., $A_i \subseteq V(T)$ for $i=1,\ldots,q$, where each $A_i$ is the vertex set of a $r$-rooted  sub-tree of $T$ 
The necessary and sufficient conditions to uniquely reconstruct the tree $T$ from the family $\mathcal{F}$ are
the following:
\begin{enumerate}
		\item For any two distinct vertices $x,y \in V(T)$ such that $(x,y) \in E(T)$, 
                there exists a set $A_i \in \mathcal{F}$ such that $x \in A_i$ and $y \notin A_i$.
	        \item For any two distinct vertices $x,y \in V(T)$ such that $y \nprec x$ and $x \nprec y$
	        there exist sets $A_i,A_j \in \mathcal{F}$ such that $x \in A_i$, $y \notin A_i$ and $x \notin A_j$ and $y \in A_j$.
\end{enumerate} 
\end{theorem}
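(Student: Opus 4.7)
The plan is to reduce the reconstruction of $T$ to the recovery of its ancestor relation, and then to show that conditions 1 and 2 are jointly necessary and sufficient for this recovery. For any pair of vertices $u,v$ I would define the ``trace preorder'' $u \preceq_{\mathcal{F}} v$ to hold whenever every $A_i \in \mathcal{F}$ with $v \in A_i$ also contains $u$. Since every $A_i$ is upward closed towards $r$, the ancestor relation of $T$ always refines $\preceq_{\mathcal{F}}$, so the whole question is when the two relations coincide: once they do, the root is recovered as the unique element of $\bigcap_i A_i$, and the parent of each remaining vertex is read off as its immediate $\preceq_{\mathcal{F}}$-predecessor.

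For sufficiency I would proceed in two steps. First I would verify that condition 1 forces $\bigcap_i A_i = \{r\}$: any non-root $v$ has a parent $p$, and condition 1 applied to the edge $(p,v)$ yields a set in $\mathcal{F}$ containing $p$ but not $v$. Next, assuming $u \preceq_{\mathcal{F}} v$, I would argue by contrapositive that $u$ must be an ancestor of $v$. If instead $v$ were a proper ancestor of $u$, I would apply condition 1 to the edge $(v,c)$, where $c$ is the child of $v$ on the path to $u$; the resulting set contains $v$ but, by upward closure, cannot contain $u$, contradicting $u \preceq_{\mathcal{F}} v$. If $u$ and $v$ were incomparable in $T$, condition 2 directly supplies a set containing $v$ but not $u$. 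Together these two cases pin down the ancestor relation, from which $T$ is uniquely determined as described in the previous paragraph.

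For necessity I would exhibit, whenever either condition fails, a distinct tree $T' \neq T$ with which the same family $\mathcal{F}$ is compatible. If condition 1 fails at an edge $(x,y)$, then because upward closure already guarantees $y \in A_i \Rightarrow x \in A_i$, every $A_i$ in fact satisfies $x \in A_i \iff y \in A_i$; I would then relabel $T$ by the transposition $\pi=(x\;y)$ and take $T'=\pi(T)$, noting that $\pi$ fixes each $A_i$ setwise and hence each $A_i$ is still a rooted subtree of $T'$. If condition 2 fails at an incomparable pair $x,y$, then without loss of generality every $A_i$ containing $x$ also contains $y$, and I would form $T'$ by moving $x$ so that its parent becomes $y$ (rather than its original parent $p$); since $y$ is not a descendant of $x$, the result is still a valid rooted tree, and the only new upward-closure obligation imposed by the re-parenting is $x \in A_i \Rightarrow y \in A_i$, which is exactly the failed condition. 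The hard part will be the bookkeeping in the condition 1 argument: I have to do a careful case split depending on whether both of $x,y$ lie in a given $A_i$ or both lie outside it, and I have to handle the boundary case $x=r$, where the swap redesignates which vertex is labeled as the root but still leaves $\mathcal{F}$ realised by a tree distinct from $T$.
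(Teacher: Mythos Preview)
Your proposal is correct and takes essentially the same approach as the paper. Your trace preorder $u \preceq_{\mathcal{F}} v$ is exactly the statement $u \in F_v := \bigcap_{A_i \ni v} A_i$ used in the paper, and your three-case sufficiency analysis matches the paper's Cases~1--3; your necessity argument (swapping $x,y$ for a failed edge, and re-parenting $x$ under $y$ for a failed incomparable pair) spells out explicitly the two alternative trees that the paper only exhibits via a figure.
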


\noindent In Section~\ref{sec:oncoproof} we prove Theorem~\ref{thrm:oncotreesthrm}. It is worth noticing that our proof 
provides a simple procedure for the reconstruction as well. 

\section{Proof of Theorem~\ref{thrm:oncotreesthrm}} 
\label{sec:oncoproof}

\noindent 
In the following we call a tree $T$ {\it consistent} with the family set $\mathcal{F}$ if  
all sets $A_i \in \mathcal{F}$ are vertices of rooted sub-trees of $T$. 
Notice that when two or more trees are consistent with the input dataset $\mathcal{F}$, 
then we cannot uniquely reconstruct $T$. 

\begin{proof} 
First we prove the necessity of conditions 1,2 and then their sufficiency to reconstruct $T$.

\begin{figure}
		\begin{tabular}{cc} 
                    \includegraphics[width=0.4\textwidth]{./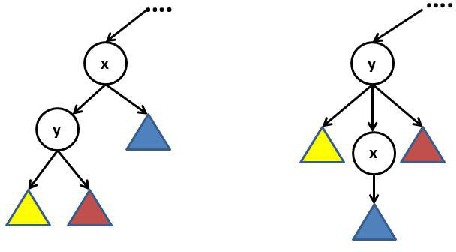} & \includegraphics[width=0.4\textwidth]{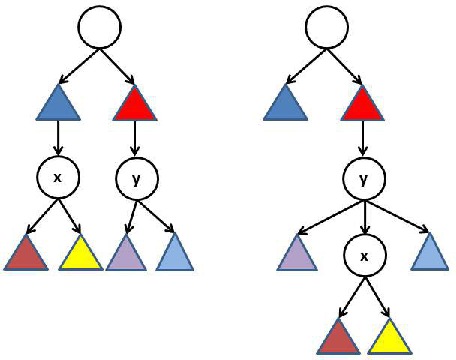}  \\ 
   (a) Necessity of Condition 1 & (b) Necessity of Condition 2  \\ 
		\end{tabular}
  \caption{\label{tab:oncotab1} Illustration of necessity conditions of Theorem~\ref{thrm:oncotreesthrm}.}
\end{figure}

\underline{Necessity:} For the sake of contradiction, assume that Condition 1 does not hold. Therefore, 
there exists two vertices $x,y \in V(T)$ such that there exists no set $A \in \mathcal{F}$ 
that contains one of them.  Then, the two trees 
shown in Figure~\ref{tab:oncotab1}(a) are both consistent with $\mathcal{F}$. Therefore we cannot reconstruct $T$. 
Similarly, assume that Condition 2 does not hold. Specifically assume that 
for all $j$ such that $x \in A_j$, then $y \in A_j$ too (for the symmetric case
the same argument holds). Then, both trees in Figure~\ref{tab:oncotab1}(b) are consistent with $\mathcal{F}$ 
and therefore $T$ is not reconstructable. The symmetric case follows by the same argument. 

\underline{Sufficiency:} Let $x \in V(T)$  and $P_x$ be the vertex set of the unique path from the root $r$ to $x$, i.e., $P_x=\{r,\ldots,x\}$. 
Also, define $F_x$ to be the intersection of all sets in the family $\mathcal{F}$ that contain vertex $x$,
i.e., $F_x =\underset{\mbox{ $A_i \ni x$ }}{ \text{~~}\bigcap A_i }$. 
We prove that $F_x = P_x$. Since by the definition of an oncogenetic tree $P_x \subseteq F_x$ it suffices
to show that $F_x \subseteq P_x$. Assume for the sake of contradiction that $F_x \nsubseteq P_x$.
Then, there exists a vertex $v \in V(T)$ such that $v \in F_x, v \notin P_x$. We consider the following three cases. 

\noindent 
\underline{$\bullet$ {\sc Case 1} ($x \prec v$):} Since by definition each set $A_i \in \mathcal{F}$ is the vertex 
set of a rooted sub-tree of $T$,  $v \in P_x$ by the definition of an oncogenetic tree.

\noindent 
\underline{$\bullet$ {\sc Case 2} ($v \prec x$):} Inductively by condition 1,
there exists $A_i \in \mathcal{F}$ such that $x \in A_i, v \notin A_i$. Therefore, $v \notin F_x$. 

\noindent 
\underline{$\bullet$ {\sc Case 3} ($x \nprec v, v \nprec x$):}
By condition 2, there exists $A_i \in \mathcal{F}$ such that  $x \in A_i$ and $v \notin A_i$.
Hence, $v \notin F_x$. 

In all three cases above, we obtain a contradiction and therefore $v \in F_x \Rightarrow v \in P_x$.
Therefore, $F_x \subseteq P_x$ and subsequently $F_x = P_x$. 
Given this fact, it is easy to reconstruct the tree $T$. We sketch the algorithm: compute for each $x$ the set 
$F_x$ which is the unordered set of vertices of the unique path from $r$ to $x$.
The ordering of the vertices which results in finding the path $P_x$, i.e.,
$(v_0=r \rightarrow v_1 \rightarrow .. v_{k-1} \rightarrow v_k=x)$ is computed using sets in $\mathcal{F}$ 
which contain $v_i$ but not $v_{i+1}$, $i=0,..,k-1$. The existence of such sets is guaranteed by condition 1.  
\end{proof}

\newpage
\vspace*{\fill}
\begingroup
\centering
{\Huge III {\em Conclusion and Future Directions} }
\endgroup
\vspace*{\fill}

\newpage

\clearpage
\chapter{Conclusion}%and Future Directions}
\label{conclchapter}
\lhead{\emph{Conclusion}}
This Chapter concludes the dissertation by posing both specific open problems 
and broader directions related to our work.

\section{Open Problems} 
\label{subsec:solidopenproblems} 

\subsection{Rainbow Connectivity (Chapter~\ref{rainbowchapter}) } 

\noindent We propose two open problems for future work.

\noindent{\bf Erd\"{o}s-R\'enyi graphs:} In Chapter~\ref{rainbowchapter} 
we give an aymptotically tight result on the rainbow 
connectivity of $G=G(n,p)$ at the connectivity
threshold. It is reasonable to conjecture that this could be tightened.
We conjecture that \whp, $rc(G) = \max\set{Z_1,diameter(G(n,p))}$.

\noindent{\bf Random regular graphs:} Our result on random regular graphs is not so tight. 
It is still reasonable to believe that the above conjecture also holds in
this case. (Of course $Z_1=0$ here).

\subsection{Random Apollonian Graphs (Chapter~\ref{ranchapter})}

\noindent We propose two open problems for future work.

\noindent{\bf Diameter:} We conjecture that $\lim_{t \rightarrow +\infty} \frac{d(G_t)}{\log{t}} \rightarrow c$, 
where $c$ is a constant. Finding $c$ is an interesting research problem. 

\noindent{\bf Conductance:} We conjecture that the conductance of a RAN is $\Theta\big(\frac{1}{\sqrt{t}}\big)$ \whp. 
Figure~\ref{fig:phi} shows that $\phi(G_t) \leq \frac{1}{\sqrt{t}}$.

\begin{figure}[h]
\begin{center} %\centering
\includegraphics[width=0.3\textwidth]{./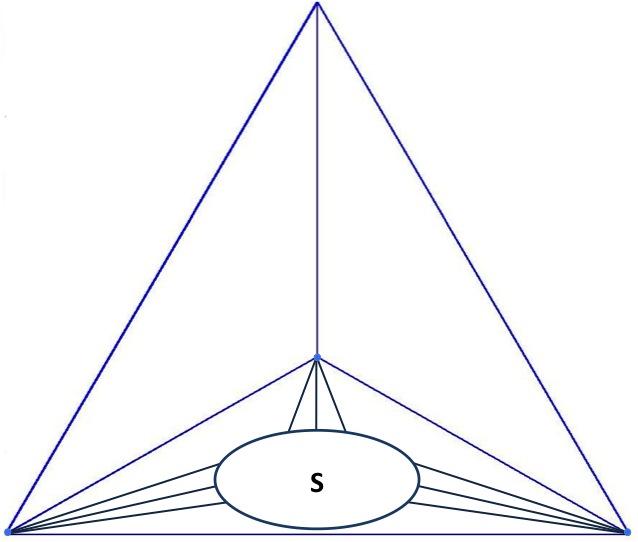}
\end{center}
\caption{By the pigeonhole principle, one of the three initial faces receives $\Theta(t)$ vertices. 
Using Theorem~\ref{thrm:RANthrm1} it is not hard to see that the encircled set of vertices $S$
has conductance $\phi(S) \approx \frac{\sqrt{t}}{t} = \frac{1}{\sqrt{t}}$ \whp.}
\label{fig:phi}
\end{figure}

\subsection{Triangle Counting (Chapter~\ref{trianglecountingchapter}) } 

\begin{figure}[htbp]
\begin{center}
\includegraphics[width=5cm]{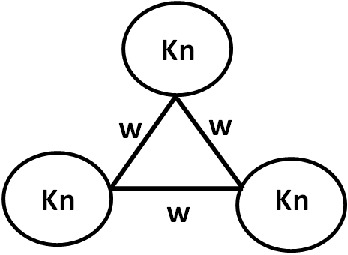}
\end{center}
\caption{\label{fig:sparsifierfig2} Weighted Graph}
\end{figure}

Certain open problems on triangle counting follow: 

\noindent{\bf  Weighted graphs:}  The sampling scheme presented in 
Section~\ref{subsec:Sparsifieralgorithm} can be adapted to weighted graphs: 
multiply the weight of sampled edge by $\frac{1}{p}$ and 
count weighted triangles. However this can be problematic as the graph shown in Figure~\ref{fig:sparsifierfig2} indicates.
Specifically, for a sufficiently large $w$, throwing out any weighted edge results in an arbitrarily 
bad estimate of the count of triangles. Finding a better sampling scheme for weighted graphs is left as a problem 
for future work.

\noindent{\bf Better sparsification:}  
Finding an easy-to-compute quantity which gives us an optimal way to sparsify the graph with respect to triangles
is an interesting research problem.  

\noindent{\bf Random projections and triangles:} 
Can we provide some reasonable condition on $G$ that would guarantee \eqref{condition}? 

\noindent{\bf \mapreduce :} 
Our proposed methods are easily parallelizable and developing a \mapreduce implementation  is a natural practical direction.

\noindent{\bf Streaming model :}
Besides \mapreduce, the streaming model is particularly well tailored for large data. 
Can we design better algorithms for triangle counting for the streaming model \cite{buriol}?
Can we estimate the average number of triangles over all vertices with a given degree $d$?

\subsection{Densest Subgraphs (Chapter~\ref{densestchapter}) }

Our work in Chapter~\ref{densestchapter} leaves several open problems. 

\noindent{\bf Algorithm Design:}  
Can we design approximation algorithms with better additive guarantees
or a multiplicative approximation algorithm without shifting the objective? 
Can we characterize the worst-case behavior of \LSalgo? 
Also designing efficient randomized algorithms, e.g., \cite{evocut} is an interesting
research direction. 

\noindent{\bf Bipartite graphs:}   A natural extension of our objective to a bipartite
graph $G(L \cup R, E)$  is the following. For a set $A \subseteq L, B \subseteq R$
and a real parameter $\alpha$, we define $f_{\alpha}(A,B) = e[A \cup B] - \alpha |A| |B|$.
Maximizing $f_{\alpha}$ is in general a hard problem. This can be easily shown using a 
reduction similar to the one we used in Chapter~\ref{densestchapter} where we used a planted
clique in an Erd\"{o}s-R\'enyi $G(n,1/2)$ graph 
and invoking the results of \cite{feldman2013statistical}
on detecting planted bipartite clique distributions. 
Understanding this objective and evaluating its performance 
is an interesting direction.

%The derivation of more densities with desirable properties for existing applications is also of interest.  
%\noindent{\bf Hardness:} Characterizing the hardness of optimal quasi-cliques for any $\alpha \in (0,1)$ is open.

\subsection{Structure of the Web Graph  (Chapter~\ref{hadichapter}) } 

\noindent Studying the Web graph is an important problem in graph mining research, given 
its prominence. Two problems related to our work follow. 

\noindent{\bf Improving \hadi:} 
How well do recent advances in moment estimation  \cite{Kane:2010:OAD:1807085.1807094} perform in practice? 
Can we use them to improve the performance of \hadi? 

\noindent{\bf Analyzing the Web graph:}   Using our algorithmic tool \hadi we provided
insights into certain fundamental properties of the Web graph. 
Can we provide further insights by studying other statistics, e.g., \cite{ugandersubgraph}? 
Can we use these structural properties to come up with good models of the Web graph?

\subsection{FENNEL: Streaming Graph Partitioning for Massive Scale Graphs (Chapter~\ref{fennelchapter}) }

%Balanced graph partitioning in the streaming setting  

\noindent{\bf How to interpolate?}   In Chapter~\ref{fennelchapter} we showed that by selecting 
an exponent $\gamma$ between 1 and 2 allows us to interpolate between the two heuristics 
most frequently used for balanced graph partitioning. 
Can we find an easy-to-compute graph statistic, e.g., power law exponent, 
that allows us to make a good selection for $\gamma$ for a given graph? 

\noindent{\bf Random graphs with hidden partition:}    Assume that we have a dense graph 
with $k$ clusters where the probability of having an edge insider and between two distinct
clusters is $p$ and $q$ respectively, $p>q=\Theta(1)$. 
Can we predict how different heuristics derived from our framework will perform? 

\noindent{\bf More cost functions:}   
It is interesting to test other choices of cost functions and tackle
the assymetric edge cost issue that occurs in standard data center architectures.

\noindent{\bf Communication cost:}   A future goal is to minimize further the 
communication cost  using minwise hashing~\cite{satuluri}.

\subsection{PEGASUS: A System for Large-Scale Graph Processing (Chapter~\ref{pegasuschapter})}

Given that \pegasus is much more an engineering  rather than a theoretical contribution,
we provide a broad research direction, related to \mapreduce, the framework on 
which \pegasus relies. 

\noindent{\bf Understanding the limits of \mapreduce:} 
Afrati, Das Sarma, Salihoglu and Ullman \cite{afrati2012vision}
provide an interesting framework for studying \mapreduce\ algorithms. 
Let the replication rate of any \mapreduce algorithm\ be the average
number of reducers each input is sent to.
Understanding trade-offs between the replication rate and 
the efficiency of \mapreduce\ algorithms is an interesting research direction.

\subsection{Approximation Algorithms for Speeding up Dynamic Programming and Denoising aCGH data (Chapter~\ref{acghchapter})}

In Chapter~\ref{acghchapter}, we present a new formulation for the problem of denoising aCGH data. 
Our formulation has already proved to be valuable in numerous settings \cite{ding2010robust}. 

\noindent{\bf Improving algorithmic performance:} 
Our results strongly indicate that the $O(n^2)$ algorithm, which is
--- to the best of our knowledge --- the fastest exact algorithm, is not tight. 
There is inherent structure in the optimization problem. Lemma~\ref{thrm:Trimmerlemmalast} 
is such an example. 

\begin{figure}
		\begin{tabular}{cc} 
                 \includegraphics[width=0.43\textwidth]{./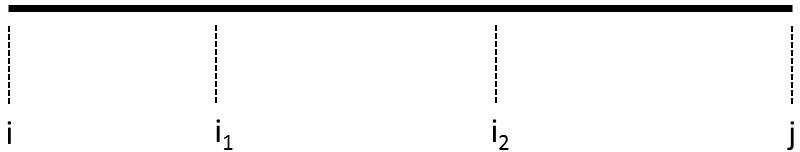} & \includegraphics[width=0.45\textwidth]{./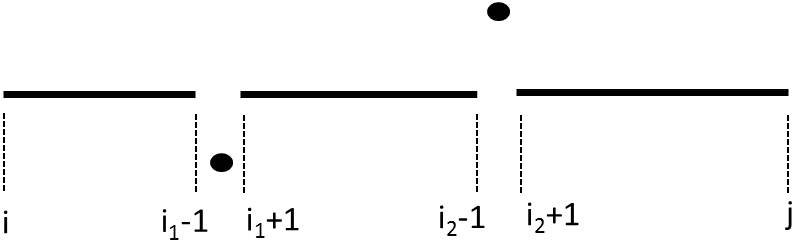} \\ 
		(a) One fitted segment & (b) Five fitted segments \\
		\end{tabular}
  \caption{\label{fig:acghfig8} Lemma~\ref{thrm:Trimmerlemmalast}: if $|P_{i_1}-P_{i_2}|>2\sqrt{2C}$ then Segmentation (b) which 
  consists of five segments (two of which are single points) is superior to Segmentation (a) which 
  consists of a single segment. }
\end{figure}

\begin{lemma}
\label{thrm:Trimmerlemmalast}
If $|P_{i_1}-P_{i_2}|>2\sqrt{2C}$, then in the optimal solution of the dynamic programming using $L_2$ norm, $i_1$ and $i_2$ are
in different segments.
\end{lemma}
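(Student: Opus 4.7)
The plan is to argue by contradiction. Suppose the optimal solution $F^*$ places $i_1$ and $i_2$ inside the same piecewise-constant segment $S$, on which $F^*$ takes the value $\mu = \frac{1}{|S|}\sum_{i \in S} P_i$ (the least-squares optimum on $S$). I will construct a perturbation of $F^*$ whose objective value is strictly smaller, contradicting optimality.

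The perturbation is the one suggested by Figure~\ref{fig:acghfig8}(b): split $S$ into (up to) five consecutive pieces
\[
T_1 = S_A,\ T_2 = \{i_1\},\ T_3 = S_B,\ T_4 = \{i_2\},\ T_5 = S_C,
\]
where $S_A$, $S_B$, $S_C$ are the indices of $S$ lying before $i_1$, strictly between $i_1$ and $i_2$, and after $i_2$, respectively (any of these three sets may be empty, and $\{i_1\},\{i_2\}$ are fitted as singletons with zero residual). On each new segment $T_k$ of size $n_k$ we use its own mean $\mu_k$ as the fitted constant. Let $k^\star\le 5$ be the number of nonempty $T_k$'s; this perturbation pays $(k^\star-1)C \le 4C$ in extra segment-creation costs, while the sum of squared errors strictly decreases.

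The key computation is the standard ANOVA-style variance decomposition, which gives
\[
\mathrm{SSE}(S) \;=\; \sum_{k=1}^{5} \mathrm{SSE}(T_k) \;+\; \sum_{k=1}^{5} n_k (\mu_k-\mu)^2.
\]
Because $\mathrm{SSE}(T_2)=\mathrm{SSE}(T_4)=0$, the reduction in squared error achieved by the five-piece fit is exactly
\[
\Delta_{\mathrm{SSE}} \;=\; \sum_{k=1}^{5} n_k (\mu_k-\mu)^2 \;\geq\; (P_{i_1}-\mu)^2 + (P_{i_2}-\mu)^2,
\]
where the displayed lower bound just keeps the two singleton contributions $k=2$ and $k=4$ (for which $n_k=1$ and $\mu_k=P_{i_k}$). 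Applying the elementary inequality $a^2+b^2 \geq \tfrac12(a-b)^2$ with $a=P_{i_1}-\mu$, $b=P_{i_2}-\mu$ yields
\[
\Delta_{\mathrm{SSE}} \;\geq\; \tfrac12 (P_{i_1}-P_{i_2})^2 \;>\; \tfrac12\cdot (2\sqrt{2C})^2 \;=\; 4C,
\]
using the hypothesis $|P_{i_1}-P_{i_2}|>2\sqrt{2C}$. Since $\Delta_{\mathrm{SSE}} > 4C \geq (k^\star-1)C$, the perturbation strictly improves the objective \eqref{eq:optimizationobjective}, contradicting the optimality of $F^*$. Hence $i_1$ and $i_2$ must lie in different segments.

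\textbf{What is hard.} The technical work is essentially already done once one has identified the right perturbation and the right inequality chain; the main ``obstacle'' is choosing a perturbation whose savings beat its cost in \emph{every} geometric configuration of $i_1,i_2$ inside $S$ (interior vs.\ boundary, adjacent vs.\ far apart). Isolating both points as singletons handles all cases uniformly because the variance decomposition gives a lower bound on $\Delta_{\mathrm{SSE}}$ that depends only on $P_{i_1},P_{i_2},\mu$, never on the sizes of the remainder pieces $S_A,S_B,S_C$; and the worst case $k^\star=5$ still only costs $4C$, which the hypothesis $|P_{i_1}-P_{i_2}|>2\sqrt{2C}$ is exactly calibrated to beat.
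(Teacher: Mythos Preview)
Your proof is correct and follows essentially the same route as the paper: argue by contradiction, split the offending segment into five pieces with $i_1$ and $i_2$ isolated as singletons, and show the squared-error drop exceeds the extra $4C$ penalty via $(P_{i_1}-\mu)^2+(P_{i_2}-\mu)^2\ge \tfrac12(P_{i_1}-P_{i_2})^2>4C$. The only cosmetic difference is that the paper keeps the \emph{original} fitted value $x^*=\mu$ on the three non-singleton pieces (so their error is unchanged and the reduction is \emph{exactly} $(P_{i_1}-\mu)^2+(P_{i_2}-\mu)^2$), whereas you refit each piece to its own mean and invoke the ANOVA decomposition to obtain the same quantity as a lower bound; the paper's version is thus slightly simpler, but the argument is the same.
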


\begin{proof}
The proof is by contradiction, see also Figure~\ref{fig:acghfig8}.
Suppose the optimal solution has a segment $[i,j]$ where $i \leq i_1 < i_2 \leq j$, and its optimal
$x$ value is $x^{*}$. Then consider splitting it into 5 intervals $[i, i_1-1]$, $[i_1, i_1]$, $[i_1+1, i_2-1]$, $[i_2, i_2]$, $[i_2+1, r]$.
We let $x=x^{*}$ be the fitted value to the three intervals not containing $i_1$ and $i_2$. 
Also, as  $|P_{i_1}-P_{i_2}|>2\sqrt{2C}$,
$ (P_{i_1}-x)^2 + (P_{i_2}-x)^2>2\sqrt{2C}^2 = 4C$. So by letting $x=P_{i_1}$ in $[i_1, i_1]$ and $x=P_{i_2}$ in $[i_2, i_2]$,
the total decreases by more than $4C$. This is more than the added penalty of having 4 more segments, a contradiction with the
optimality of the segmentation.
\end{proof}

Uncovering and taking advantage of the inherent structure in a principled way should result in a faster 
exact algorithm. This is an interesting research direction which we leave as future work.

\noindent{\bf More applications:}  Another research direction is to find more applications (e.g., histogram construction \cite{1132873}) 
to which our methods are applicable.

\noindent{\bf aCGH Denoising:} Our model does not perform well in very noisy data. Developing novel machine learning techniques 
for detecting signal breakpoints is a practical research direction.

\subsection{Robust Unmixing of Tumor States in Array Comparative Genomic Hybridization Data (Chapter~\ref{unmixingchapter})}

In Chapter~\ref{unmixingchapter} we introduce a geometric framework which provides a novel perspective
on detecting cancer subtypes. From a computational perspective, we aim to fit to a cloud of points a minimum volume
enclosing simplex. 

\noindent{\bf Complex geometric shapes:} Can we fit instead of a simplex, a simplicial complex? 

\subsection{Perfect Reconstruction of Oncogenetic Trees (Chapter~\ref{oncotreeschapter})}

We propose the following open problems on oncogenetic trees. 

\noindent{\bf Enumerative properties of oncogenetic trees:} 
How does one calculate the size of the smallest family $\mathcal{F}$ 
that is uniquely consistent with a given tree $T$? Which trees are extremal in the aforementioned sense? 
A broad research direction is to obtain a deeper understanding of  tumorigenesis 	
through the understanding of combinatorial properties of oncogenetic trees.

%----------------------------------------------------------------------------------------
%	THESIS CONTENT - APPENDICES
%----------------------------------------------------------------------------------------

%\appendix % Cue to tell LaTeX that the following 'chapters' are Appendices

\addtocontents{toc}{\vspace{2em}} % Add a gap in the Contents, for aesthetics

\backmatter

%----------------------------------------------------------------------------------------
%	BIBLIOGRAPHY
%----------------------------------------------------------------------------------------

\label{Bibliography}

\lhead{\emph{Bibliography}} % Change the page header to say "Bibliography"

\bibliographystyle{plainnat}

\bibliography{ref} % The references (bibliography) information are stored in the file named "Bibliography.bib"

\end{document}